\title{Leveraging the Power of Graph Algorithms: Efficient Algorithms for Computer-Aided Verification }
\author{Dipl.\,Ing.~Alexander Svozil, BSc}
\declaretheorem[numberwithin=section]{theorem}
\declaretheorem[numberlike=theorem]{lemma}
\declaretheorem[numberlike=theorem]{corollary}
\declaretheorem[numberlike=theorem]{proposition}
\declaretheorem[numberlike=theorem]{claim}
\declaretheorem[numberlike=theorem]{definition}
\declaretheorem[style=remark,numberlike=theorem]{remark}
\declaretheorem[numberlike=theorem]{observation}
\declaretheorem[numberlike=theorem]{datastructure}
\declaretheorem[numberlike=theorem]{invariant}
\declaretheorem[numberlike=theorem]{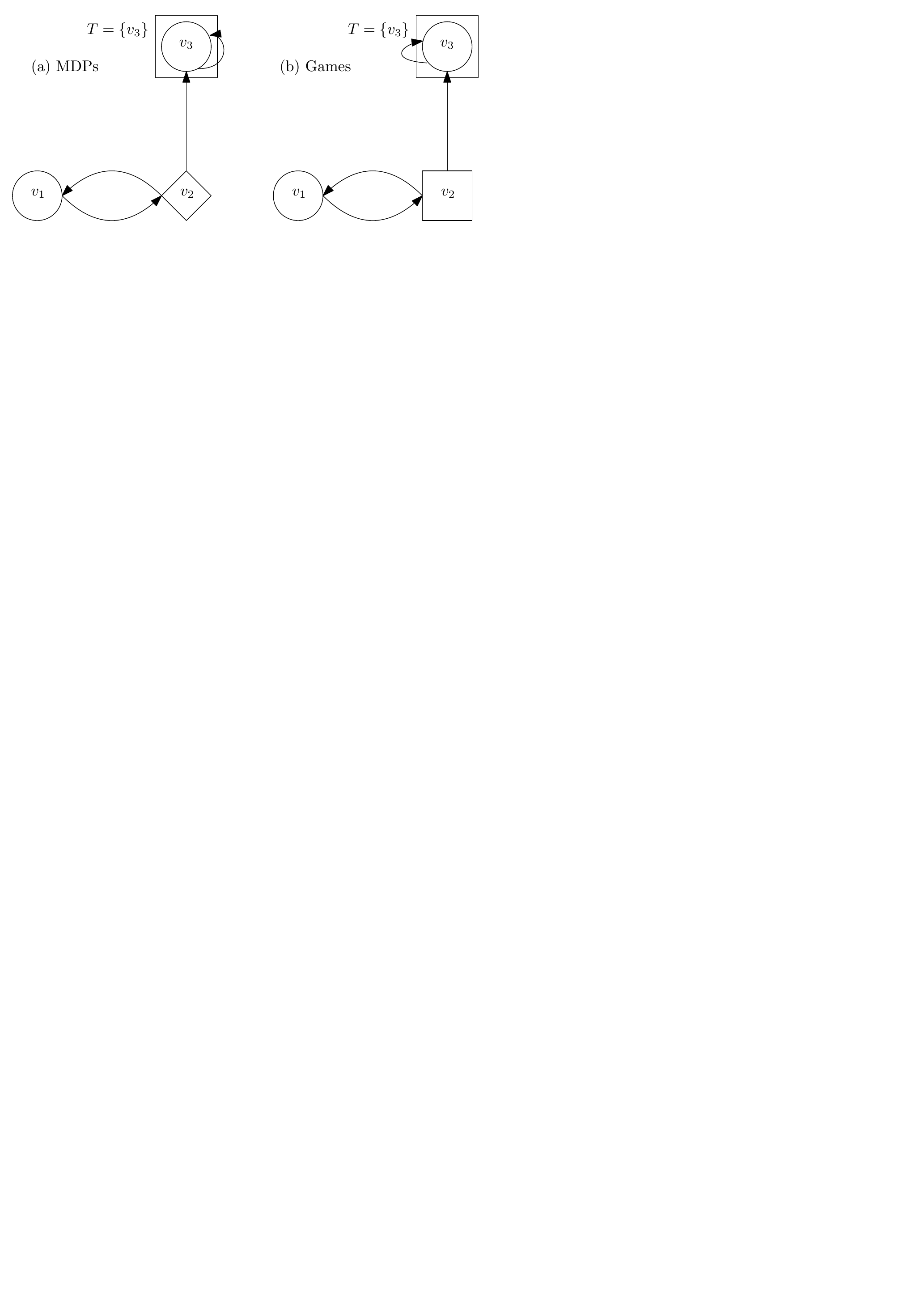}
\declaretheorem[numberlike=theorem]{conjecture}
\declaretheorem[numberlike=theorem]{reduction}
\newcommand{\GG}{\ensuremath{\Gamma}}
\newcommand{\Out}[1]{\ensuremath{\mathit{Out}(#1)}}
\newcommand{\In}[1]{\ensuremath{\mathit{In}(#1)}}
\newcommand{\SCCs}[1]{\ensuremath{\mathtt{SCCs}(#1)}}
\newcommand{\edgeset}[1]{E(#1)}
\newcommand{\bits}{\mathit{bits}}
\newcommand{\MEC}{\mathsf{MEC}}
\newcommand{\Inf}[1]{\ensuremath{\mathit{Inf}(#1)}}
\newcommand{\reach}[1]{\ensuremath{\text{Reach}(#1)}}
\newcommand{\safety}[1]{\ensuremath{\text{Safety}(#1)}}
\newcommand{\buchi}[1]{\ensuremath{\text{B\"{u}chi}(#1)}}
\newcommand{\cobuchi}[1]{\ensuremath{\text{coB\"{u}chi}(#1)}}
\newcommand{\pbuchi}[1]{\ensuremath{\text{boundedB\"{u}chi}(#1)}}
\newcommand{\pcobuchi}[1]{\ensuremath{\text{boundedcoB\"{u}chi}(#1)}}
\newcommand{\parity}[1]{\ensuremath{\text{Parity}(#1)}}
\newcommand{\streett}[1]{\ensuremath{\text{Streett}(#1)}}
\newcommand{\mpayoff}[1]{\ensuremath{\text{MeanPayoff}(#1)}}
\newcommand{\mpayoffp}[1]{\ensuremath{\text{MeanPayoffParity}(#1)}}
\newcommand{\MP}[1]{\ensuremath{\mathit{MP}(#1)}}
\newcommand{\MPP}[1]{\ensuremath{\mathit{MPP}(#1)}}
\newcommand{\W}[2]{\ensuremath{W_{#1}(#2)}}
\newcommand{\ASW}[1]{\ensuremath{\langle\!\langle\text{1}\rangle\!\rangle_{a.s.}(#1)}}
\newcommand{\attr}[3]{\ensuremath{\mathit{attr}_{#1}(#2,#3)}}
\newcommand{\GraphReach}[1]{\ensuremath{\mathit{GraphReach}(#1)}}
\newcommand{\decrSCC}{\mathit{T_d}}
\newcommand{\condense}[1]{\ensuremath{\mathsf{CONDENSE}(#1)}}
\newcommand{\True}{\mathbf{true}}
\newcommand{\False}{\mathbf{false}}
\renewcommand{\O}{\ensuremath{\widetilde{O}}}
\newcommand{\para}[1]{\smallskip\noindent\emph{#1}}
\newcommand{\ls}{\langle}
\newcommand{\rs}{\rangle}
\newcommand{\D}{\mathcal{D}}
\newcommand{\A}{\mathcal{A}}
\renewcommand{\L}{\mathcal{L}}
\newcommand{\set}[1]{\{#1\}}
\newcommand{\V}{\mathcal{\nu}}
\newcommand{\N}{\mathbb{N}}
\newcommand{\Z}{\mathbb{Z}}
\newcommand{\Q}{\mathbb{Q}}
\newcommand{\val}{\mathit{val}}
\newcommand{\restr}{\upharpoonright}
\newcommand{\numprio}{d\xspace}
\newcommand{\Lift}{\mathit{Lift}}
\DeclareMathOperator{\slift}{lift}
\newcommand{\CP}{\mathit{CPre}}
\newcommand{\mina}{\mathit{min}}
\newcommand{\best}{\mathit{best}}
\newcommand{\+}{\mathit{+}}
\newcommand{\z}{{\bar{z}}}
\renewcommand{\S}{\mathbf{S}}
\newcommand{\E}{\ensuremath{\mathcal{E}}\xspace}
\newcommand{\os}{symbolic one-step }
\newcommand{\trho}{\widetilde{\rho}}
\newcommand{\PG}{\mathcal{P}}
\newcommand{\WOrder}{\mathcal{W}}
\newcommand{\Continue}{\textbf{continue}}
\newcommand{\EC}[1]{\ensuremath{\mathsf{EC}(#1)}}
\newcommand{\Attr}[3]{\ensuremath{\mathsf{Attr}_{#1}^{#2}(#3)}}
\newcommand{\Pre}[2]{\ensuremath{\mathit{Pre}_{#1}(#2)}}
\newcommand{\Post}[2]{\ensuremath{\mathit{Post}_{#1}(#2)}}
\newcommand{\Set}[1]{\ensuremath{\{#1}\}}
\newcommand{\Diam}[1]{\ensuremath{\mathsf{diam}(#1)}}
\newcommand{\Pick}[1]{\ensuremath{\mathit{Pick}(#1)}}
\newcommand{\ROut}[1]{\ensuremath{\mathit{ROut}(#1)}}
\newcommand{\WinPEC}[2]{\ensuremath{\mathsf{WinParityEC}{(#1,#2)}}}
\newcommand{\SymASReach}[2]{\ensuremath{\mathsf{SymASReach}_{#1}(#2)}}
\newcommand{\minPriority}[1]{\ensuremath{\mathsf{MinPriority}_(#1)}}
\newcommand{\WE}{\ensuremath{\mathsf{WE}}}
\renewcommand{\P}{\ensuremath{\mathcal{P}}}
\newcommand{\aspace}{\ensuremath{\mathit{space}}}
\newcommand{\mectime}{\textsc{mec}\xspace}
\DeclareMathOperator*{\argmax}{argmax}
\newcommand{\mcount}{\textit{count}}
\newcommand{\nill}{\textit{null}}
\newcommand{\poly}{\mathit{poly}}
\newcommand{\Seq}[1]{\ensuremath{\mathit{Seq}(#1)}}
\newcommand{\Coverage}[1]{\ensuremath{\mathit{Coverage}(#1)}}
\newcommand{\ProcessVertex}[1]{\textsc{ProcessVertex}\ensuremath{(#1)}}
\newcommand{\Targets}{\mathcal{T}}
\newcommand{\copies}[1]{\ensuremath{\mathit{Copies}(#1)}}
\DeclareMathOperator{\dist}{dist}
\DeclareMathOperator{\proj}{Proj}
\DeclareMathOperator{\lift}{Lift}
\DeclareMathOperator{\move}{Shift}
\DeclareMathOperator{\nextl}{NxtLyr}
\newcommand{\event}{\mathcal{E}}
\begin{document}

	\frontmatter

	\makepagestyle{titlepage}
\makeoddhead{titlepage}{}{}{~~~~~~~~~~~~~~~~~~~~~~~~~~~~~~~~~~~~~~~~~~~~~~~~~~~~~~~~~~~~~~~~~~~~~~~~~~~~~~~~~\includegraphics{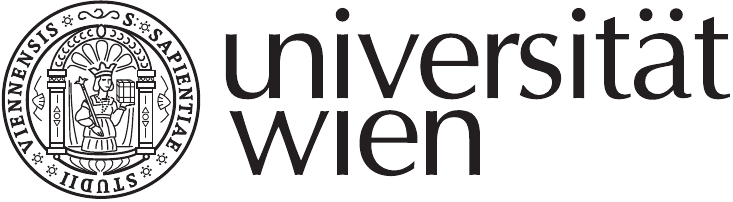}}

\calccentering{\unitlength}                         
\begin{adjustwidth*}{\unitlength}{-\unitlength}     
\begin{adjustwidth}{-1cm}{-1cm}

\thispagestyle{titlepage}
{\centering

\sffamily

~
\vfill

\vfill

\HUGE \textbf{\textsc{Dissertation / Doctoral Thesis}}\\

\vfill

\normalsize Titel der Dissertation / Title of the doctoral thesis \\

\huge \textbf{\thetitle}

\vfill

\normalsize verfasst von / submitted by\\
\Large \theauthor

\vfill

\normalsize angestrebter akademischer Grad / in partial fulfillment of the requirement for the degree of \\
\Large Doktor der Technischen Wissenschaften (Dr.\,techn.)

\vfill

\normalsize

\begin{tabbing}
Dissertationsgebiet laut Studienblatt:blablablabla \hspace{2.5em} \= Informatik \kill
\\
Wien, 2021 / Vienna, 2021\\ 
\\
Studienkennzahl lt. Studienblatt: /\\ degree programme code as it appears on the student\\ record sheet:\> A 786 880 \\
Dissertationsgebiet lt. Studienblatt: /\\ field of study as it appears on the student record sheet: \> Informatik \\
Betreuerin: / Supervisor: \> Univ.-Prof.\ Dr.\ Monika Henzinger \\
\end{tabbing}

\vspace{-7ex}

~
} 

\end{adjustwidth}
\end{adjustwidth*}

\normalfont


	\cleardoublepage
	\begin{abstract}
		\emph{Model checking} verifies whether a given \emph{model} of a system ensures a \emph{specification}.  
In \emph{reactive synthesis} the input is a specification and the goal is to construct a 
correct \emph{reactive system}.

We consider a variety of models: 
\emph{(state-transition) graphs} where vertices represent the
states of a system and edges are the transitions from one state to the next. 
\emph{Markov Decision Processes (MDPs)} extend graphs to model probabilistic systems, e.g.\, randomized
protocols. 
\emph{Game graphs} extend graphs to model interactions with the environment and are a central tool
for reactive synthesis.

\emph{Objectives} are sets of traces in the model and represent the specification.
The class of \emph{$\omega$-regular objectives} expresses all commonly used functional specifications for reactive systems in model checking and
synthesis. Examples for $\omega$-regular objectives we consider are \emph{reachability} objectives, 
\emph{B\"uchi} objectives, \emph{bounded B\"uchi} objectives, \emph{Streett} objectives and
\emph{parity} objectives. To model functional and efficient reactive systems we use the combination of 
\emph{quantitative (mean-payoff)} objectives and $\omega$-regular objectives.
For the verification of MDPs with $\omega$-regular objectives, we consider computing the
\emph{maximal end-component (MEC)} decomposition of MDPs. 
 

The goal of the thesis is to leverage fast graph algorithms
and modern algorithmic techniques for problems in model checking and synthesis on graphs, MDPs, and game graphs.
The results include \emph{symbolic algorithms}, a well-known class of algorithms in model checking
that trades limited access to the input model for an efficient representation.
In particular, we present the following results:
\begin{compactitem}
	\item Algorithms for game graphs with 
		mean-payoff B\"uchi objectives and mean-payoff coB\"uchi objectives 
		which match one of the best 
		running time bounds for mean-payoff objectives. 
	\item A near-linear time randomized algorithm for Streett objectives in graphs and MDPs.
	\item A sub-cubic time algorithm for bounded B\"uchi objectives in graphs and a cubic time algorithm for game graphs.
	\item Conditional lower bounds for queries of reachability objectives in game graphs and MDPs. Linear and near-linear time algorithms for sequential reachability objectives in graphs and MDPs respectively.
	\item The first quasi-polynomial time symbolic algorithm for parity objectives in game graphs.
	\item We break a long-standing running time bound for MEC decomposition from the '90s
		by providing a sub-quadratic time symbolic algorithm. 
\end{compactitem}

	\end{abstract}

		\begin{abstract}
			\begin{sloppypar}
	Ein \emph{Modellprüfer} kontrolliert ob ein gegebenes \emph{Modell} eines Systems 
	eine \emph{Anforderung} erfüllt. In der \emph{reaktiven Synthese} ist die Eingabe eine Anforderung und 
	das Ziel ist ein korrektes \emph{reaktives System} zu erzeugen.
	Wir betrachten folgende Modelle: \emph{(Zustand-Transition) Graphen} wo Knoten die Zustände des Systems darstellen und Kanten die Transitionen von einem Zustand zum Nächsten sind. 
	\emph{Markow-Entscheidungsprozesse} (MEPs) erweitern Graphen 
	um probabilistische Systeme modellieren zu können, z.B. randomisierte Protokolle. 
	\emph{Spielgraphen} erweitern Graphen um Interaktionen mit der Umwelt zu 
	modellieren und sind ein zentrales Werkzeug für die reaktive Synthese.
\end{sloppypar}
\emph{Zielvorgaben} sind Mengen von Abläufen in einem Modell und 
repräsentieren die Anforderungen. Die Klasse der \emph{$\omega$-regulären 
Zielvorgaben} drückt alle üblichen funktionalen 
Anforderungen für reaktive Systeme in der Modellprüfung 
und in der Synthese aus. Beispiele für $\omega$-reguläre Zielvorgaben die wir betrachten sind \emph{Erreichbarkeits}-, \emph{Büchi}-, \emph{eingeschränkte B\"uchi}-, 
\emph{Streett}- und \emph{Paritäts}-Zielvorgaben. 
Um funktionale und effiziente reaktive Systeme zu modellieren benutzen wir die 
Kombination von \emph{quantitativen (Mittelwert-)} Zielvorgaben und $\omega$-regulären Zielvorgaben.
Für die Prüfung von MEPs mit $\omega$-regulären Zielvorgaben betrachten wir die Berechnung von der \emph{maximalen Schluss-Komponenten (MSK) Dekomposition} eines MEPs.

Das Ziel der Arbeit ist es schnelle 
Graphalgorithmen und moderne algorithmische Techniken für 
Probleme in der Modellprüfung und Synthese in Graphen, MEPs und 
Spielgraphen wirksam einzusetzen. 
Unter den Resultate sind auch \emph{symbolische Algorithmen}, 
eine bekannte Klasse von Algorithmen in der Modellprüfung die einen
begrenzten Zugang zum Eingabemodell für eine effiziente 
Repräsentation eintauschen. Wir stellen folgende Ergebnisse vor:
\begin{compactitem}
\item Algorithmen für Spielgraphen mit Mittelwert-B\"uchi-Zielvorgaben und Mittelwert-coB\"uchi-Zielvorgaben 
	die einem der schnellsten Algorithmen für Mittelwert-Zielvorgaben in der Laufzeit gleichziehen. 
\item Ein randomisierter Algorithmus für Streett-Zielvorgaben in Graphen und MEPs in fast linearer Laufzeit.
\item Ein Algorithmus für eingeschränkte B\"uchi Zielvorgaben in sub-kubischer Laufzeit für Graphen und kubischer Laufzeit in Spielgraphen.
\item Konditionale untere Schranken für Anfragen von 
	Erreichbarkeits-Zielvorgaben in Spielgraphen und MEPs. 
	Algorithmen für sequenzielle Erreichbarkeits-Zielvorgaben in 
	Graphen und MEPs in jeweils linearer Zeit und fast linearer Zeit.
\item Der erste symbolische Algorithmus für 
	Paritäts-Zielvorgaben in quasi-polynomieller Zeit.
\item Wir durchbrechen eine langstehende Laufzeitschranke für die MSK Dekomposition 
	von den 90er-Jahren indem wir einen symbolischen Algorithmus in sub-quadratischer Laufzeit präsentieren.

\end{compactitem}

		\end{abstract}

	\cleardoublepage
	\renewcommand{\abstractname}{Acknowledgments}
	\begin{abstract}
		I am deeply grateful to my advisor Monika Henzinger.
The years as a Ph.D. student passed by extremely fast due to her outstanding guidance and support. 
During all these years I was very proud to have her as my advisor: 
She continuously pushed me into the right challenges which sharpened my problem-solving skills and
taught me essential research skills in computer science. I hope that a small portion of her brilliance which  
I witnessed many times during our research sessions and convinced me to do the Ph.D. under 
her guidance in the first place rubbed off on me.

I am deeply thankful to Krishnendu Chatterjee for being a great mentor and for 
all the invaluable guidance throughout this research.
I thank Wolfgang Dvo\v{r}\'{a}k for all the hours we spent discussing research problems 
and writing papers together: Collaborating with him is an awesome experience. 
I thank Gramoz Goranci for the collaboration and
for being a cheerful roommate. I thank Christian for the collaboration and for introducing me to algorithm engineering. 
I thank Sagar Kale for the collaboration and for improving my writing style. 

Special thanks go to Christel Baier and V\'{e}ronique Bruy\`{e}re who have agreed to review this
thesis and be part of my thesis committee, especially in these crazy times: I could have not wished for a better thesis committee. 

Finally I would like to thank all my colleagues over the years: Stefan Neumann,
Alexander Noe, Kathrin Hanauer, Sebastian Forster, Pan Peng, Dariusz Leniowski, Veronika Loitzenbauer, Shahbaz Khan, Sebastian Lamm, Marcelo Faraj, and 
Xiaowei Wu who contributed with questions, suggestions, clarifications and attention.

\vspace{6ex}
{\small
The research leading to these results has received funding from the European Research Council under the European Union’s
Seventh Framework Programme (FP/2007--2013) / ERC Grant Agreement
no. 340506.\ and from the Vienna Science and Technology Fund (WWTF) through
project ICT15--003.
}

	\end{abstract}

	\cleardoublepage
	\renewcommand{\abstractname}{Bibliographic Note}
	\begin{abstract}
		Several results of this thesis were already published in conference papers and thus the chapters of this thesis are based on the following papers:
\begin{sloppypar}
	\begin{itemize}
		\item \textbf{\Cref{cha:mfcs}:} \fullcite{CDHS17MFCS}. \nocite{CDHS17MFCS}
		\item \textbf{\Cref{cha:concur}:} \fullcite{CDHS19CONCUR}. \nocite{CDHS19CONCUR}
		\item \textbf{\Cref{cha:icalp}:} \fullcite{CHKS21ICALP}. \nocite{CHKS21ICALP}
		\item \textbf{\Cref{cha:icaps}:} \fullcite{CDHS18ICAPS}. \nocite{CDHS18ICAPS}
		\item \textbf{\Cref{cha:lpar}:} \fullcite{CDHS18LPAR}. \nocite{CDHS18LPAR}
		\item \textbf{\Cref{cha:lics}:} \fullcite{CDHS21LICS}. \nocite{CDHS21LICS}
	\end{itemize}
\end{sloppypar}

	\end{abstract}

	\cleardoublepage
	\tableofcontents*

	\mainmatter
	\chapter{Introduction}
	In the last decades, computer systems enriched the world in many aspects: 
	Transportation (e.g.\ self-driving cars), entertainment (e.g.\ decades of online video content) and 
	production (e.g.\ enterprise resource planning) are just a few examples.
	Because humans create most of the aforementioned systems, 
	coding mistakes and conceptual errors are frequent and hard to
	avoid. Overlooked side cases, misspelled variable names and uninitialized pointers are part 
	and parcel of the systems produced nowadays. Especially in safety-critical environments,
	bugs have detrimental outcomes. For example, in 1996 the launch of the Ariane 5
	rocket failed because its navigation system tried to convert a 64-bit number to a 16-bit number without
	checking if 16 bits are sufficient to hold the value~\cite{Ben-Ari01}.
	Other examples of similar disasters include bugs in medical equipment, electric power
	transmission, and automated trading systems which
	cost human lives and millions of dollars~\cite{softwarebugslist}.

	In practice, computer scientists usually write \emph{tests} to avoid errors.
	A test succeeds if the input matches the expected output. On the upside, 
	tests are efficient and easy to write. On the downside, tests cannot, in general, certify the
	correctness of a system. In particular, bugs in a concurrent system are notoriously hard to find
	with tests: In one instance, a test of a concurrent system may pass but, on the next instance, 
	the same test may fail due to concurrency issues. In contrast to the testing paradigm, 
	the field of computer-aided verification \emph{proves} that a system is correct. 

	Due to the undecidability of the Halting Problem~\cite{turing1937computable,church1936unsolvable} and Rice's
	Theorem~\cite{ricethm}, we know that certifying if a given arbitrary system has a 
	property is undecidable. Even when we accept these harsh limitations and shift 
	the focus to systems with finitely many states where the verification of properties 
	becomes decidable, many questions in computer-aided verification are immensely difficult to resolve 
	due to results in complexity theory.

	Despite these sobering general results, Clarke, Emerson, Sifakis and others invented a 
	practical approach to tackle the challenges in computer-aided verification 
	called ``model checking''~\cite{Pnueli77,ClarkeE81,QueilleS82,ClarkeES09}.
	Given a system and a system specification the ``model checking workflow'' is as follows:
	First, we compile the system into a \emph{finite state-transition graph} (modeling)
	and translate the system specification into \emph{properties} of the model.
	The \emph{model checker} takes the state-transition graph and the properties as
	input. The output is either (a) that the given state-transition graph fulfills the desired 
	properties, i.e., the system ensures the system specification or (b) that there is a 
	counterexample of one of the properties because, e.g., the system has a bug or the 
	model does not sufficiently represent the system. 

	In what follows, we consider \emph{reactive systems}, i.e., systems that continuously 
	interact with the environment (other parallel processes, user input, etc.). 
	A state of a reactive system consists of the variable valuations at a point in time.
	An example of a rudimentary reactive system is a semaphore, a reactive system that 
	makes sure that only one out of many parallel processes enters a critical section at 
	each point in time.
	For reactive systems \emph{terminal states} are undesirable (e.g., the semaphore is in a deadlock)
	and, thus, every state has a successor state. More sophisticated examples for reactive systems
	are, for example, an aircraft or a flight-control systems.

	\para{Models.}
	The standard model is a \emph{(state-transition) graph} where a vertex represents a state of the
	reactive system. 
	The edges of the graph represent 
	transitions between states, e.g., when a variable is incremented.
	A \emph{play} is an infinite path starting at an \emph{initial vertex} in the graph 
	which represents the initial state of the reactive system. A play represents a run of the reactive system.
	While many reactive systems can be modeled using state-transition graphs, 
	there are cases where graphs are not able to express the desired properties of a reactive
	system: For example, if the state transition depends on the action of an adversarial actor 
	or an uncertain environment. 
	When a reactive system interacts with an adversarial environment 
	(e.g., user input), \emph{game graphs} extend the notion of 
	graphs~\cite{AbadiLW89,Dill89book}.  
	Another prominent use-case for game graphs is \emph{Synthesizing}~\cite{Bruyere17,BohyBFJR12}: 
	Synthesizing is the ambitious approach to build a reactive system from scratch given a specification~\cite{Church62,Buchi62}. The process of synthesizing is a turn-based game where the environment controls possible inputs for the system and the system controls the output~\cite{ModelCheckingBook}. 	
	If a state transition depends on a given probability distribution or non-deterministic
	behavior, 
	\emph{Markov Decision Processes (MDPs)} extend graphs~\cite{Vardi85, CY95}. 
	Model Checking in MDPs is a key component in establishing the correctness of randomized distributed
	algorithms~\cite{KwiatkowskaNP12}, studying	biological processes~\cite{HeathKNPT08}, 
	analyzing security protocols~\cite{NormanS06}, optimizing power management~\cite{NormanPKSG05} 
	and many more~\cite{prismcasestudies,BaierHK19,BaierAFK18,BolligC04}.
	
	\para{Games Graphs.}
	In game graphs, there are two players. Player~1 represents the choices of the system and 
	player~2 represents the environment. \emph{Player-1 vertices} and \emph{player-2 vertices} partition the
	vertices in the graph. At the beginning of a play, a token is placed on the
	vertex which represents the initial state. When the token
	is on a player-1 vertex, player~1 moves the token along one of its outgoing edges and 
	the system goes to the next state. When the token is on a player-2 vertex, 
	player~2 (the environment) moves the token along one of the outgoing edges. 

	\para{Markov Decision Processes (MDPs).}
	When the system interacts with a non-deterministic or an uncertain component we model it with
	MDPs. 
	That is, vertices are partitioned into player-1 vertices and \emph{random vertices}. 
	Again, at the start of a play, we place a token on the initial state. 
	For player-1 vertices, the system has the choice to go to the next state along its outgoing edges. 
	When the token is on a random vertex, a probability distribution determines the 
	next vertex along one of its outgoing edges. 
	For the problems we consider in MDPs, we can assume without loss of generality that the
	probability distribution is uniform. 

	\para{Specification.}
	The system specification defines the desired properties of a model and 
	is input in both model checking and reactive synthesis. An example of such a property is the
	\emph{safety property} where a system must not reach a set of undesired states. \emph{Objectives} define 
	the desired properties of a reactive system as a set of plays.
	\emph{Qualitative objectives}, such as $\omega$-regular objectives 
	describe the functional behavior of a reactive system.
	Throughout, we consider mostly $\omega$-regular objectives which are canonical because they express
	most functional requirements for model checking and synthesis~\cite{MannaP92, PnueliR89}. 
	\emph{Quantitative objectives} describe desired behavior for the performance or 
	resource consumption of a reactive system~\cite{ChakrabartiAHS03} and are often combined with qualitative objectives~\cite{CJH05,BohyBFR13}.
	We consider the following objectives:

	\para{Reachability and Safety objectives.}
	Given a set of ``good vertices'', a
	\emph{reachability objective} is defined by the set of plays, 
	which include a good vertex. 
	Dually, given a set of ``safe vertices'', a \emph{safety objective} is the set of plays,  
	which visit only safe vertices. 
	
	\para{Sequential Reachability objectives.}
	Given $k$ sets of vertices, the \emph{sequential reachability objective} 
	is the set of plays that visit at some point in time a vertex in the first set, then, later on, a vertex
	in the second set and so on until the play visits a vertex in the $k$th set of vertices. 
	
	\para{B\"uchi and coB\"uchi objectives.}
	Given a set of ``B\"uchi vertices'', \emph{B\"uchi objectives} are the set of plays which visit
	a B\"uchi vertex infinitely often.
	Dually, given a set of ``coB\"uchi vertices'', \emph{coB\"uchi objectives}
	describe the set of plays which visits only the coB\"uchi vertices infinitely often. 

	\para{Bounded B\"uchi and bounded coB\"uchi objectives.}
	\emph{Bounded B\"uchi objectives} and \emph{bounded coB\"uchi objectives} extend the B\"uchi and coB\"uchi objectives:
	Given an integer $d$ and a set of B\"uchi vertices a play in the bounded B\"uchi objective includes 
	a B\"uchi vertex every
	at most $d$ steps after finitely many steps. Dually, the bounded coB\"uchi objectives contains a play 
	if it visits $d$ coB\"uchi vertices consecutively infinitely often. 

	\para{Parity objectives.}
	For the \emph{parity objective}, every vertex in the model has an integer priority. 
	The parity objective includes plays where the vertex with minimum priority visited 
	infinitely often is even. 
	
	\para{Streett objectives.}
	\emph{Streett objectives} have a set of requests and corresponding grants. Each request and
	corresponding grant is a set of vertices in the model. 
	Streett objectives include a play if it reaches for every request occurring infinitely often the corresponding grant infinitely often. 

	\para{Mean-Payoff objectives.}
	\emph{Mean-Payoff objectives} are quantitative objectives where every 
	edge in the model has an associated reward. 
	The \emph{payoff} of a play is the limit of the long-run average of the rewards in the play.
	Mean-payoff objectives have a threshold and include a play if the payoff is above the threshold.

	\para{Mean-Payoff Parity objectives.}
	\emph{Mean-payoff parity objectives} combine mean-payoff objectives and parity objectives:
	A play is in the mean-payoff parity objective if it is in the parity objective and the mean-payoff objective.

	\para{Algorithmic Questions.}
	We perform an algorithmic study of the following two questions:
	\begin{compactitem}
	\item Given a model, an objective, and an initial vertex in the model we
		compute whether player~1 can ``force'' a play in the objective (despite the non-determinism in an MDP or
		the adversarial choices of player~2 in a game graph). 
		We say that the play \emph{ensures} the objective and that the vertex is \emph{winning} for player~1. 
		A natural extension of this question is to compute \emph{all} vertices for which 
		player~1 can ensure the given objective in the model, i.e., the \emph{winning set}. 

	\item Given an MDP, we compute the \emph{maximal end-component (MEC) decomposition} of an MDP\@. 
		Intuitively, a MEC describes a maximal (under set inclusion) 
		set of vertices in the MDP for which player~1 can from any vertex in the MEC ``force'' 
		to reach every other vertex in the MEC despite the probabilistic elements of the model,
		i.e., it generalizes strongly connected components in graphs. 
		Computing the MECs of an MDP is a key component for computing winning sets 
		of $\omega$-regular objectives~\cite{deAlfaroThesis,ChaThesis}. 
	\end{compactitem}

	\para{Symbolic Algorithms and the state explosion problem.}
	A state of a reactive system consists of a valuation for each variable at a point in time. 
	Note that this causes huge numbers of vertices in the induced model because the number of states grow exponentially with the number variables, e.g., 
	for a bit array with 20 entries and two variables with values in $\{0,\dots,9\}$ we have at least $2^{20}\cdot10^2$ states.
	This huge induced model is a major drawback in model-checking because the model might not fit into the
	main memory. To tackle this problem, a successful 
	approach is to express the sets of states and transition relations \emph{implicitly} instead of
	\emph{explicitly} in terms of BDDs~\cite{bryant1986graph,BurchCMD90,bry92,BaierCHKR97}. We consider a theoretical model for algorithms
	that works for this implicit representation without considering specifics of the representation 
	called \emph{symbolic model of computation}~\cite{ChatterjeeHLOT18,ChatterjeeHJS13,ChatterjeeDHL18,GentiliniPP03,abs-2001-04333}. 
	A \emph{symbolic algorithm} can use the same operations as a regular RAM algorithm, 
	except for the access to model: To access the input graph a symbolic algorithm must use the following two types of \emph{symbolic operations}: 
	\begin{compactenum}
	\item \emph{One-step operations $\Pre{}{\cdot}$ and $\Post{}{\cdot}$}. 
			Given a set of vertices $X$, the predecessor operation $\Pre{}{X}$ returns the set of vertices
			with an edge to a vertex in $X$. Similarly, the successor operation $\Post{}{X}$ returns the set of vertices with an edge from some vertex in $X$.
		\item  \emph{Basic set operations.} Basic set operations have as input one or two sets of
			vertices or edges and perform, for example, the union, intersection, and complement on
			the set(s).
	\end{compactenum}
	In the symbolic model, running time is defined as the \emph{number of symbolic operations}.
	The symbolic model defines one unit of space as one set (not the size of the set) due to the implicit representation as BDD. \emph{Symbolic space} is the number of sets stored simultaneously at any point of the algorithm.

	\para{Modern Graph Algorithms.}
	We leverage the power of modern graph algorithms to
	obtain faster algorithms for the algorithmic questions. We also provide negative results called ``conditional lower bounds'', i.e., 
	we show that an improvement in running time for the algorithmic questions implies an algorithm
	that breaks a long-standing running time barrier for well-studied problems like SAT\@.
	We describe three algorithmic concepts which contributed to the results over the algorithmic questions:

	\para{Dynamic Graph Algorithms.}
	A \emph{dynamic graph algorithm}, in contrast to a \emph{static} algorithm, maintains a property of a graph, e.g.\ strongly connected
	components (SCCs), while edges of a graph are removed and added. The most na\"ive dynamic graph algorithm
	computes the property from scratch every time an edge is removed or added.
	Finding
	fast dynamic algorithms is an exciting mathematical challenge because they are relevant in
	practice~\cite{hanauer2021recent} and can be used as subroutines in static algorithms, for
	example, when vertices are repeatedly deleted in a static algorithm. 

	\para{Hierarchical Graph Decomposition.}
	The \emph{hierarchical graph decomposition} is a technique originally introduced for dynamic graph
	algorithms~\cite{HenzingerKW99} but a breakthrough result showed that a similar technique can be used to
	obtain faster algorithms for computing winning sets in game graphs with B\"uchi objectives~\cite{CH14} and other problems~\cite{CHL17,HenzingerKL15} where we repeatedly remove sets of vertices. 
	Given a model with $n$ vertices,
	the technique consists of iteratively constructing $\log n$ subgraphs 
	with only $O(2^i)$ $(1 \leq i \leq \log n)$ edges for each vertex of the original graph (the vertices remain
	unmodified in the subgraphs). The subgraph $G_i$ is defined by the edges in iteration $i$.
	The	goal is to find a set of vertices that fulfills a given desired 
	property and to show that the size
	of this set is proportional to the number of edges in $G_i$. 
	We save running time by repeatedly looking at 
	subgraphs of the original graph with a smaller amount of edges to find vertices with the desired
	property and making sufficient progress by removing at least $O(2^i)$ vertices from
	the graph. 

	\para{Conditional Lower Bounds.}
	Similar to reductions from an NP-hard problem, \emph{conditional lower bounds} give running time
	guarantees conditioned on the fact that there is no significant improvement in running time over decades for
	certain well-studied problems~\cite{VW2018survey}. Examples for the ``well-studied'' problems 
	include \emph{all-pair shortest paths in graphs}, \emph{3-SUM} and \emph{CNF-SAT}\@. In particular, 
	if we have a conditional lower bound for 
	an algorithmic question, we obtain
	better algorithms for a long-standing well-studied problem if there is a substantial improvement in running 
	time for the algorithmic question which is highly unlikely. 

	\section{Related Work}
	In this section, we present an overview of the results for the considered algorithmic questions. 
	We describe the running time of the algorithm for a model with $n$ vertices and $m$ edges. 
	For parity objectives, we consider models with $d$ priorities and for mean-payoff objectives 
	$W$ denotes the maximum weight of an edge. For Streett objectives, we denote with $b$ the size of
	the input sets consisting of the requests and grants.

	For explicit algorithms, in \emph{graphs}, the following results are known: 
		\begin{compactitem}
			\item The winning set of reachability, safety, B\"uchi and
				coB\"uchi objectives can be computed in linear time using
				depth first search and strongly connected components~\cite{Tarjan72}. 

			\item For Streett objectives, the best algorithms compute the winning set in time $O(m^{1.5} \sqrt{\log
					n})$ and $O(n^2)$~\cite{CDHL16,HT96}.
		\end{compactitem}
		\begin{sloppypar}
			In \emph{MDPs}: 
			\begin{compactitem}
			\item The MEC-decomposition can be determined in $O(min(m^{1.5}, n^2))$~\cite{CH14}.
			\item Computing the winning set of reachability objectives can be done in time $O(m
				+ \MEC)$ where $\MEC$ denotes the running time of the best algorithm for
				MEC-decomposition~\cite{CDHL16}.
			\item Computing the winning set of Streett objectives can be done in time
				$O(m^{1.5} \sqrt{\log n})$, $O(n^2)$~\cite{CDHL16}.
			\end{compactitem}
		\end{sloppypar}
		In \emph{game graphs}: 
		\begin{compactitem}
		\item Computing the winning set of reachability and safety objectives can be done in
			linear time~\cite{I81,B80}.
		\item For B\"uchi and coB\"uchi objectives, the winning set can be computed in
			$O(n^2)$ time~\cite{CH14}.
		\item For computing the winning set of parity objectives 
			there is a long line of work which improves the running
			time~\cite{Jurdzinski00,JurdzinskiPZ08,Schewe17}. 
			The most important recent development is an algorithm in 
			quasi-polynomial running time ($O(n^{\log d + 6})$)~\cite{calude2017stoc}. 
			Since this breakthrough, many other quasi-polynomial
			algorithms were
			discovered~\cite{FearnleyJS0W17,lazic2017qp,Parys19,LehtinenB20,DaviaudJT20}. 
			The long-standing open question 
			is if there is a polynomial-time algorithm for computing
			the winning set of parity objectives in game graphs.
		\item Similarly, for computing the winning set of mean-payoff objectives
			it is not known whether there exists a polynomial time algorithm. 
			The two fastest algorithms are $O(mn2^{n/2})$~\cite{DorfmanKZ19,K21}
			and $O(mnW)$~\cite{brim2011}. 

		\item For the intersection of mean-payoff objectives and parity objectives, i.e.,
			mean-payoff parity objectives the best algorithm for computing the winning set
			is in $O(dmn^{\lg(d/\lg n)+2.45}W)$~\cite{DaviaudJL18}. 
		\end{compactitem}

		For \emph{symbolic algorithms} the following results are known:
		\begin{compactitem}
		\item In MDPs, there are two results for computing the MEC-decomposition:
			First, the classical symbolic algorithm which 
			performs $O(n)$ symbolic SCC decompositions which can be done in $O(n)$
			symbolic operations and $O(\log n)$ symbolic space~\cite{GentiliniPP03}.
			Second, an algorithm which uses $O(n\sqrt{m})$ symbolic operations and $O(\sqrt{m})$ symbolic
			space~\cite{ChatterjeeHJS13}. 
		\item In game graphs, the best algorithm to compute parity objectives uses $O(\lg d)$ 
			symbolic space and $n^{2\lg(d/\lg n) + O(1)}$ symbolic operations~\cite{abs-2001-04333}.
			Another, older algorithm for parity games before the first explicit 
			quasi-polynomial algorithm~\cite{calude2017stoc} 
			uses $\min\{n^{O(\sqrt{n})},O(n^{d/3+1})\}$ symbolic operations and $O(n)$ symbolic space~\cite{chatterjee2017symbolic}.
		\end{compactitem}

	\section{Results and Outline.}
	In this section, we give an overview of the chapters in the thesis. The $\O(\cdot)$ notation hides
	poly-logarithmic factors.
	\begin{compactitem}
		\item In Chapter~\ref{cha:prelim}, we introduce the necessary definitions to describe the results.
			
		\item In Chapter~\ref{cha:mfcs}, we present algorithms which improve the running time for
			computing the winning sets of mean-payoff B\"uchi objectives and
			mean-payoff co-B\"uchi objectives from $O(n^3mW)$ to $O(nmW)$. 
			Furthermore, we present an $O(n^{d-1}mW)$ time algorithm for the winning set of mean-payoff parity
			objectives. 
		\item In Chapter~\ref{cha:concur}, we present near-linear algorithms, i.e., a $\O(m + b)$
			time algorithm for computing the winning set of a 
			Streett objective in graphs and MDPs. Towards that goal, we present $\O(m)$ time
			algorithms for computing the MEC decomposition of an MDP, computing the MEC decomposition
			under edge deletions, and computing the winning set of a reachability objective in MDPs. 

		\item We present the first sub-cubic time algorithm  
			for computing the winning set of bounded B\"uchi objectives in graphs in Chapter~\ref{cha:icalp}.
			For the same problem in game graphs, we present a new $O(n^2d)$ time algorithm using
			hierarchical graph decomposition.

		\item We study reachability problems with $k$ sets of vertices in graph games and MDPs 
			in Chapter~\ref{cha:icaps}: 
			In particular, we present algorithms for computing the winning set of
			sequential reachability objectives and the \emph{coverage problem} where, instead of
			one reachability objective we are given $k$ reachability objectives which must be
			satisfied at the same time. 
			For computing the winning set of sequential reachability objectives
			we present conditional lower bounds for game graphs. 
			For MDPs, we present a subcubic time algorithm which rules out conditional lower bounds.
			For the coverage problem we provide novel conditional lower bounds for game graphs and
			MDPs and argue why the problem can be solved in linear time in graphs. 
		
		\item In Chapter~\ref{cha:lpar}, we shift the focus to symbolic algorithms. 
			We present the first \emph{symbolic} algorithm for computing the winning set of parity
			objectives in graph games with a quasi-polynomial
			number of symbolic operations and $O(d \log n)$ symbolic space. 
			
		\item In Chapter~\ref{cha:lics}, we present two symbolic algorithms for MDPs. 
			The first algorithm computes the MEC decomposition with a symbolic space
			and symbolic operation trade-off: It requires $\O(n^{2-\epsilon})$
			symbolic operations and $\O(n^{\epsilon})$ symbolic space for $0 < \epsilon \leq 1/2$.
			The second algorithm computes the winning set of parity objectives with $\log d$ 
			computations of the MEC decomposition and improves the time-space product 
			from $\O(n^2d)$ to $\O(n^2)$.
	\end{compactitem}

	\chapter{Preliminaries}\label{cha:prelim}
	In this chapter, we introduce necessary definitions for the following chapters.
Since the notation and definitions are standard, we base them on similar definition
sections~\cite{Loitzenbauer16, CH14, ChatterjeeDHL18,chatterjee2016separation}.
\section{Models}

\para{Game Graphs.}
Game graphs $\GG = ((V,E), \ls V_1, V_2 \rs)$ consist of a finite set of
vertices $V$, a finite set of edges $E$ and partitions $V$ into
player-1 vertices $V_1$ and the adversarial player-2 vertices $V_2$.

\para{Markov decision process (MDP).} 
An MDP $P\! =\! ((V,E), \ls V_1, V_R \rs,\delta)$ 
has a finite set of vertices $V$ which we partition into 
player-1 vertices $V_1$ and random vertices $V_R$, 
a finite set of edges $E \subseteq (V \times V)$, 
and a probabilistic transition function $\delta$.  
The probabilistic transition function maps $V_R$ to $\D(V)$, i.e., the set of probability distributions 
over the set of vertices $V$.  There is an edge from a random vertex $v$ to a vertex $w$, i.e.\
$(v,w) \in E$ if and only if $\delta(v)[w] > 0$. 
We call an edge $e = (u,v)$ \emph{random edge} if $u \in V_R$. Otherwise it is a \emph{player-1	edge}.
For simplicity, we let $\delta(v)$ be the uniform distribution over vertices $u$ with $(v,u) \in E$:
this common technical assumption is without loss of generality for the qualitative analysis of MDPs.

\begin{remark}\label{icaps:rem:mdps}
	A standard way to define MDPs, e.g.~\cite{GuestrinKPV03}, is to 
	consider vertices with actions to define a probabilistic transition function 
	for every vertex and action. 
	In our model, the choice of actions is represented as the choice of edges at
	player-1 vertices and the probabilistic transition function is represented 
	by the random vertices.
	This allows us to treat MDPs and game graphs uniformly, 
	and graphs can be described easily as a special case of MDPs.
\end{remark}

We also follow the common technical assumption that vertices in game graphs and MDPs do not have self-loops and
that every vertex has an outgoing edge.

\para{Graphs.} 
\emph{Graphs} are the special case of MDPs with $V_R = \emptyset$ and game graphs with $V_2 =
\emptyset$. The set $\Out{v} = \{ u \in V \mid (v,u) \in E\}$ describes the set of successors
of $v$ and the set $\In{v} = \{u \in V \mid (u,v) \in E \}$ describes the set of predecessors
of $v$.
When $U$ is a set of vertices, we define $\edgeset{U}$
to be the set of all edges incident to the vertices in $U$, i.e., 
$\edgeset{U} = \{(u,v) \in E \mid u \in U \text { or } v \in U\}$. 
With $G[S]$ we denote the subgraph of $G = (V,E)$ induced by the set of vertices $S \subseteq V$,
i.e., $G[S] = (S,E_S)$ where $E_S = \{(u,v) \mid u,v \in S\}$. A graph is \emph{strongly connected} 
if there is a path between every pair of vertices.
We denote with $n = |V|$ the number of vertices and with $m = |E|$ the number of edges.

\section{Plays and Strategies.}

\para{Plays.}
An infinite sequence $\omega = \ls v_0, v_1, v_2, \dots \rs$ of
vertices such that each $(v_{i-1},v_i) \in E$ for all $i \geq 1$ is called \emph{play}. 
The set of all plays is denoted with $\Omega$. A \emph{finite play} $V^*$ is a finite prefix of a play.

\para{Strategies.}
We call the recipes that extend finite plays \emph{strategies} and there are player-1 strategies and player-2
strategies.
A player-1 \emph{strategy} is a function $\sigma: V^* \cdot V_1 \mapsto V$ 
and maps every finite play $\omega \in V^* \cdot V_1$ that ends in a 
player-1 vertex $v$ to a successor vertex $\sigma(\omega)$, i.e., 
$(v, \sigma(\omega)) \in E$. We define player-2 strategies analogously.
A player-1 strategy is \emph{memoryless} if $\sigma_1(\omega) = \sigma_1(\omega')$ 
for all $\omega, \omega' \in V^* \cdot V_1$ that end in the same vertex $v \in V_1$, that is, 
the strategy does not depend on the entire finite play, but only on the last vertex. We define memoryless player-2
strategies analogously.
We denote with $\Sigma$ and $\Pi$ the sets of all strategies for player~1 and player~2 respectively.

\para{Outcome of Strategies.}
The \emph{outcome of strategies} is a unique play starting at an initial vertex which we define for all models:
In graphs, given an initial vertex, a player-1 strategy induces a unique play in the graph. 
In MDPs, given an initial vertex $v$ and a player-1 strategy $\sigma$, we obtain a set of possible
plays $\omega(v,\sigma)$ if player~1 follows $\sigma$ since random vertices choose their successor according to a
probability distribution. An event is a measurable subset of $\omega(v,\sigma)$ and the
probabilities of events are uniquely defined~\cite{Vardi85}. 
For a vertex $v$, strategy $\sigma$ and an event $\A \subseteq \Omega$, we
denote by $\Pr^\sigma_v(\A)$ the probability that a play belongs to
$\A$ if the game starts at $v$ and player~1 follows $\sigma$.
In game graphs, given a starting vertex $v$ and strategies player-1 strategy $\sigma$ and player-2
strategy $\pi$, the unique play $\omega(v,\sigma,\pi) = \ls v_0, v_1,v_2,
\dots, \rs$ is defined as $v_0 = v$ and for all $i>0$ if $v_i \in V_1$ then
$\sigma(\ls v_0,\dots,v_i \rs) = v_{i+1}$ and if $v_i \in V_2$, then $\pi(\ls v_0, \dots, v_i \rs) =
v_{i+1}$.

\section{Objectives.}
An \emph{objective} $\Phi \subseteq \Omega$ is the set of ``winning plays''.
The play $\omega \in \Omega$ \emph{satisfies} the objective if $\omega \in \Phi$. 
For a play $\omega = \ls v_0, v_1, v_2, \dots \rs$ define $\Inf{\omega} = \{ v \in V \mid v_i = v \text{ for infinitely many } i\geq 0\}$ 
to be the set of vertices that occur infinitely often in $\omega$.
For the definitions of the following objectives let $\Gamma$ be an MDP or a game graph.

\begin{enumerate}
	\item \emph{Reachability and Safety objectives.} A \emph{Reachability objective}
		$\reach{T,\Gamma}$ requires, given a set $T$ of vertices, 
		that a play visits \emph{at least one} vertex in $T$. Dually, for a set of vertices $C$, 
		a play in the \emph{safety objective} $\safety{C,\Gamma}$ visits \emph{only} vertices in $C$. 
		Formally, $\reach{T,\Gamma} = \{\langle v_0, v_1, v_2, \dots \rangle \in \Omega \mid \exists
			k \geq 0: v_k \in T\}$ and $\safety{C, \Gamma} = \{\langle v_0, v_1, v_2, \dots \rangle \in \Omega \mid \forall k \geq 0: v_k \in C\}$. 
		The two objectives are dual, i.e., $\reach{T, \Gamma} = \Omega \setminus \safety{V\setminus T, \Gamma}$.

\item \emph{Sequential Reachability.} For a tuple of vertex sets 
	$\Targets = (T_1,T_2,\dots,T_k)$ in $\Gamma$ the \emph{sequential reachability objective} is the 
	set of infinite plays that contain a vertex of $T_1$ followed by a vertex of
	$T_2$ and so on up to a vertex of $T_k$, i.e.,
	$\Seq{\Targets, \Gamma} = \{ \ls v_0, v_1, v_2, \dots \rs \in \Omega  \mid \exists j_1, j_2,
		\dots j_k : v_{j_1} \in T_1, v_{j_2} \in T_2, \dots, v_{j_k} \in T_k \text{ and }
		j_1 \leq j_2 \leq \cdots \leq j_k \}$.
	 
	\item \emph{B\"uchi and coB\"uchi objectives.} Given a set $B$ of vertices called ``B\"uchi
		vertices'', the \emph{B\"uchi objective} $\buchi{B,\Gamma}$ contains all plays which visit a
		vertex in $B$ infinitely often. Dually, the \emph{coB\"uchi objective} $\cobuchi{C,\Gamma}$
		contains a play if it visits only vertices in $C$ infinitely often. Formally, 
		$\buchi{B,\Gamma} = \{ \omega \in \Omega \mid \Inf{\omega} \cap B \neq \emptyset \}$ and 
		$\cobuchi{C,\Gamma} = \{ \omega \in \Omega \mid \Inf{\omega} \subseteq C\}$.
		The two objectives are dual, i.e., $\buchi{B,\Gamma}  = \Omega \setminus \cobuchi{V	\setminus B, \Gamma}$.

	\item \begin{sloppypar}
			\emph{Bounded B\"uchi and bounded coB\"uchi objectives.} Given a set of ``B\"uchi
			vertices'' $B$ and an integer $d \geq 0$, the \emph{bounded B\"uchi objective} $\pbuchi{B,d,\Gamma}$ 
			includes a play if, after visiting finitely
			many arbitrary vertices, the distance between any two consecutive B\"uchi vertices in
			the play is at most $d$. 
			Dually, the \emph{bounded coB\"uchi objective} $\pcobuchi{C,d, \Gamma}$ includes a play if it
			visits at least $d$ consecutive vertices in $C$ infinitely often. 
		\end{sloppypar}
		Formally, the sets of ``winning plays'' are
		{\small
		\begin{align*}
			\pbuchi{B,d, \Gamma} &= \{ \omega \in \Omega \mid \exists i\geq 0 
				\forall j\geq i: \{v_j,v_{j+1}, \dots, v_{j+d-1}\} \cap B \neq \emptyset  \}\\ 
			\pcobuchi{C,d, \Gamma} &= \{ \omega \in \Omega \mid \forall i \geq 0 \exists j \geq i: \{v_j, v_{j+1}, \dots, v_{j+d-1}\}
				\subseteq C \}.
		\end{align*}
	}%

\item \emph{Parity objectives.} Given a \emph{priority function} $p$ that maps every vertex 
		to a non-negative integer priority, a play satisfies the \emph{parity objective} if 
		the minimum priority vertex that appears infinitely often is even. 
		Formally, the parity objective is the set $\parity{p,\Gamma} = \{ \omega \in \Omega \mid \min \{ p(v) \mid 
				v \in \Inf{\omega}\} \text{ is even} \}$.
		The Büchi and coBüchi objectives are special cases of parity objectives with two priorities:
		For Büchi objectives the image of $p$ is $\{0,1\}$ and for coBüchi objectives it is $\{1,2\}$. 
		Given a game graph $\Gamma$ with parity function $p$, we call $(\Gamma,p)$ a \emph{parity game}.

	\item  \emph{Threshold mean-payoff objectives.} 
		A \emph{weight function} $w: E \mapsto \Z$ maps edges to integers. 

		\para{Mean-Payoff function.}
		Given $(\Gamma,w)$, the \emph{mean-payoff function} $\MP{\omega,w,\Gamma}$ maps a play $\omega$ 
		and a weight function $w$ to the long-run average weight of the play:
		$\MP{\omega,w,\Gamma} = \liminf\limits_{n\mapsto \infty} \frac{1}{n} \cdot \sum_{i=0}^{n-1} w(v_i,v_{i+1})$.

		\para{Mean-payoff objectives.}
		Given a threshold $\V \in \Q$ and a weight function $w$,
		the \emph{threshold mean-payoff objective} includes plays with a mean-payoff value of at
		least $\V$, i.e., $\mpayoff{\V,w,\Gamma} = \{ \omega \in \Omega \mid \MP{\omega ,w,\Gamma} \geq \V \}$. 
		When $\Gamma$ is a game graph, $(\Gamma,w)$ is a \emph{mean-payoff game}.

	\item \emph{Threshold mean-payoff parity objectives.}
		Given a priority function $p$ and a weight function $w$ in a game graph $\Gamma$ we call the triple 
		$(\Gamma,p,w)$ a \emph{mean-payoff parity game}.

		\para{Mean-Payoff Parity function.}
		Given $(\Gamma,p,w)$ the \emph{mean-payoff parity 
			function} maps every play in $\Gamma$ to a real-number or $-\infty$ as follows:
		If the play satisfies the parity objective, then the value of the play is the mean-payoff value, 
		else it is $-\infty$.
		Formally, for a play $\omega$ we have
		\[
			\MPP{\omega ,p,w,\Gamma}=\begin{cases}
				\MP{\omega,w,\Gamma} & \text{ if } \omega \in \parity{p,\Gamma} \\
				- \infty & \text{ if }  \omega \not\in \parity{p,\Gamma}.
			\end{cases}
		\]
		\para{Mean-Payoff Parity objective.}
		Given a priority function $p$ for $\Gamma$, a weight function $w$ for $\Gamma$ and a
		threshold $\V$,
		the \emph{threshold mean-payoff parity objective} combines parity and mean-payoff
		objectives, i.e., 
		$\mpayoffp{\V,p,w,\Gamma} = \{\omega \in \Omega \mid \MPP{\omega,p,w,\Gamma} \geq \V\}$.

	\item \emph{$k$-pair Streett objective.} 
		In the \emph{Streett objective}	we are given a set of $k$ pairs of vertex sets, 
		i.e., $\{(L_1, U_1), \dots,	(L_k,U_k)\}$ 
		such that $L_i, U_i \subseteq V$ for $1 \leq i \leq k$. 
		The objective includes a play if, whenever some vertex in $L_i$ is visited
		infinitely often, then also some vertex of $U_i$ is visited
		infinitely often (for all $1\leq i \leq k$). More formally,
		$\streett{\{(L_i,U_i) \mid 1 \leq i \leq k\}, \Gamma} = \{\omega \in \Omega \mid L_i \cap \Inf{\omega} = \emptyset \text{ or }
			U_i(\omega)\neq \emptyset \text{ for all } 1 \leq i \leq k\}$.
\end{enumerate}

We sometimes omit the MDP or game graph $\Gamma$ from the objective when it is obvious on which game graph
the objective is defined on.

\section{Winning Strategies, Winning Sets, Queries, and Value Functions }
\para{Winning Strategies and Winning Sets.}
In MDPs, a player-1 strategy $\sigma$ is \emph{almost-sure (a.s.) winning} 
from a starting vertex $v$ for an objective $\phi$ iff
$\Pr_v^\sigma(\phi) = 1$. The \emph{winning set} $\ASW{\phi}$
for player~1 is the set of vertices from which player~1 has an almost-sure
winning strategy. 

In game graphs, given an objective $\Phi \subseteq \Omega$ for player~1, a strategy $\sigma \in \Sigma$ is a
\emph{winning strategy for player 1} from vertex $v$ if for all player-2 strategies $\pi \in \Pi$
the play $\omega(v,\sigma,\pi)$ satisfies $\Phi$. 
We define the winning strategies of player~2 analogously. 
A \emph{vertex is winning} for player~1 for $\Phi$ if 
player~1 has a winning strategy from $v$. 
Formally, the \emph{set of winning vertices for player~1} for the objective $\Phi$ is 
$\W{1}{\Phi}= \{v \in V \mid
	\exists \sigma \in \Sigma \text{ s.t. } \forall \pi \in \Pi:  \omega(v,\sigma, \pi) \in
	\Phi \}$.
We define the set of winning vertices for player~2 with respect to the objective $\Phi$ with 
$\W{2}{\Phi}= \{v \in V \mid
	\exists \pi \in \Pi \text{ s.t. } \forall \sigma \in \Sigma:  \omega(v,\sigma, \pi)
	\in \Phi
\}$. Due to a seminal result by Martin~\cite{Martin75} every vertex in $V$ belongs to the winning
set of player~1 or the winning set of player~2 and, thus, the two sets form a partition of $V$.
We say that a vertex is either \emph{winning for player~1} if it is in $\W{1}{\cdot}$ or
\emph{winning for player~2} if it is in $\W{2}{\cdot}$.

\para{Coverage and AllCoverage.}
In the \emph{coverage problem} the input consists of the vertex sets $T_1, \dots,T_k$ and a start vertex $s$. 
The coverage problem is not a single objective but is a query involving several objectives.
That is, for $k$ different vertex sets, namely $T_1, T_2, \dots, T_k$, the coverage 
query $\Coverage{T_1, \dots, T_k}$ asks whether $s$ is a winning vertex for player~1 for all reachability
objective $\reach{T_i}$ $(1 \leq i \leq k)$.
If $s$ is winning for player~1 for all the reachability objectives we say that $s$ winning for player~1 regarding
the query $\Coverage{T_1, \dots, T_k}$.
In the \emph{AllCoverage} problem the input are target sets $T_1, T_2, \dots,T_k$ and we determine 
the player-1 winning set of the coverage problem, i.e., all vertices with a player-1 winning strategy for $\Coverage{T_1, \dots, T_k}$.

\para{Value Functions.}
Given a payoff function $f$ and a vertex $v$ (such as the mean-payoff function, or the mean-payoff parity function),
the value for player~1 at $v$ is the maximal payoff that she can guarantee against all strategies of 
player~2.
Formally, \[
\val(f)(v)= \sup_{\sigma \in \Sigma} \inf_{\pi \in \Pi} f(\omega(v,\sigma,\pi)). 
\]

\section{Basic Algorithmic Results}\label{sec:basic:algorithms}
In this section, we introduce special subsets of vertices for the models and basic algorithmic concepts 
which we need to prove the theoretical results in the following chapters.

\subsection{Graphs: SCCs, decremental algorithms and Graph Reachability}
For the following definitions we are given a graph $G = (V,E)$.
\para{Strongly connected components (SCCs), bottom SCCs and the condensation of a Graph}
A set of vertices $X \subseteq V$ forms a \emph{strongly connected subgraph} (SCS) if 
the induced subgraph $G[X]$ is strongly connected. An SCS is \emph{trivial} if it contains a single
vertex only and all other SCSs are \emph{non-trivial}.
A \emph{strongly connected component} (SCCS) is a set of vertices $C$
such that $G[C]$ is an SCS and $C$ is a maximal set (under set inclusion) in $V$ such that $G[C]$ is
an SCS\@. The SCC $C$ is a \emph{bottom SCC} if no vertex $v \in C$ has an edge to a
vertex in $V \setminus C$. 
The \emph{SCC decomposition} partitions the vertices of $V$ into the corresponding
SCCs and can be determined in $O(m)$ time~\cite{Tarjan72}.
The condensation of $G$, denoted by $\condense{G}$ is the graph where all
vertices in the same SCC in $G$ are contracted: We call the vertices of $\condense{G}$ 
\emph{nodes} to distinguish them from the vertices in $G$.

\para{Graph Reachability.}
Let $\GraphReach{S,G}$ be the set of vertices in $G$ that can reach a vertex of $S \subseteq V$. 
Depth-first search determines the set $\GraphReach{S,G}$ in linear time~\cite{Tarjan72}.

\para{Decremental Graph Algorithm.}
A \emph{decremental graph algorithm} is a data structure which supports the deletion of \emph{player-1 edges} 
while it maintains the solution to a graph problem. A decremental graph algorithm usually allows three kinds of
operations: (1) \emph{preprocessing}, which is computed when it receives the initial
input graph,
(2)~\emph{delete}, which deletes a player-1 edge and updates the data structure, and
(3)~\emph{query}, which computes the answer to the problem. 
The \emph{query time} is the time that the decremental graph algorithm
needs to compute the answer to the query.
The \emph{update time} of a decremental algorithm is the running time for a single \emph{delete} operation.
We sometimes refer to the delete operations as \emph{update operation}.
The running time of a decremental algorithm is characterized by the \emph{total update time}, i.e., 
the sum of the update time over the worst-case sequence of deletions. 
Sometimes a decremental algorithm is randomized and the provided running time guarantees 
hold for an \emph{oblivious adversary} who fixes the sequence of updates in advance. 
When we use a randomized decremental algorithm assuming an oblivious adversary as a subprocedure, 
the sequence of deleted edges must not depend on the random choices of the decremental algorithm.

\subsection{MDPs: MECs and Random Attractors}

For the following definitions let $P = (V,E, \langle V_1, V_R \rangle,\delta)$ be an MDP\@.

\begin{sloppypar}
\para{Maximal End-Components.}
An \emph{end-component} is a set of vertices $X \subseteq V$
such that (1) $P[X]$ strongly connected and (2)~all \emph{random vertices} have their outgoing
edges in $X$, that is, for all $v \in X \cap V_R$ and all $(v,u) \in E$ we have $u \in X$.
An end-component is \emph{trivial} if it has size one. All other end-components are \emph{non-trivial}.
An end-component maximal under set inclusion is a \emph{maximal end-component (MEC)}. 
MECs generalize SCCs in graphs and, in a MEC $X$, player-1 can \emph{almost-surely reach} (reach
with probability~1) all vertices $u
\in X$ from every vertex $v \in X$ because random vertices do not leave $X$. 
The MEC-decomposition of an MDP is the partition of $V$ into MECs and the set of vertices which do
not belong to any MEC\@. 
Every bottom SCC $C$ in $(V,E)$ is a MEC because no vertex (and thus no random vertex) has an outgoing edge. 
\end{sloppypar}

\begin{figure}
	\centering
	\includegraphics{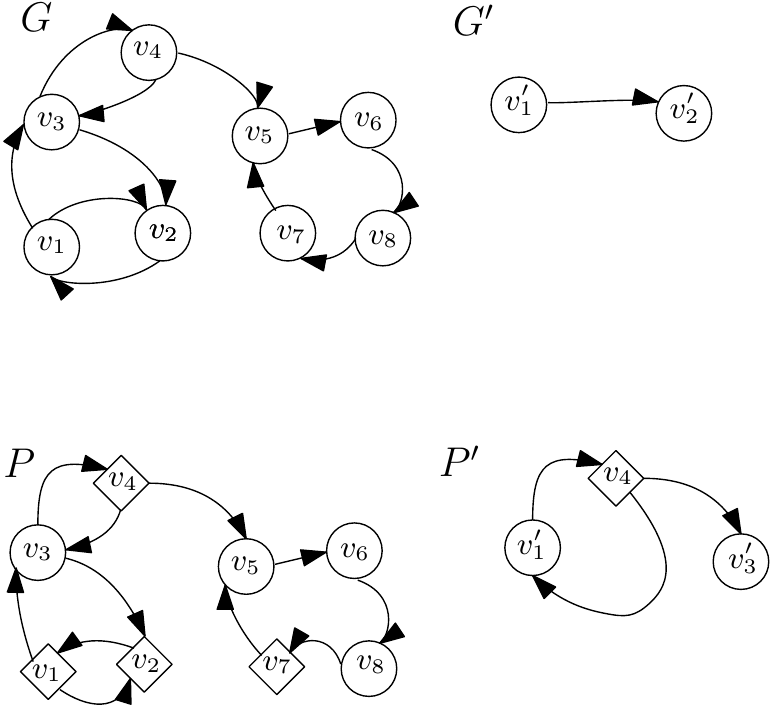}
	\caption{A graph $G$ and an MDP $P$ where we contract all SCCs and MECs respectively into player-1 vertices (removing self-loops) to obtain $G'$ and $P'$. The graph $G'$ is acyclic whereas the MDP $P'$ contains a cycle.}\label{icaps:fig:diff_SCCS_MECS}
\end{figure}

\begin{example}[Key difference of SCCs and MECs]\label{icaps:ex:diff_SCCS_MECS}
In the SCC decomposition each vertex belongs to exactly one SCC 
(which might be a trivial SCC just containing that vertex) but for the MEC decompositions 
a non-empty set of random vertices which do not belong to any MEC can exist (still each vertex belongs to at most one MEC). 
Consequently, if we contract each MECs into a player-1 vertex, 
the resulting MDP is not necessarily acyclic which is in contrast to 
the acyclic condensation graph that we obtain from contracting the SCCs.
In Figure~\ref{icaps:fig:diff_SCCS_MECS} we demonstrate this key difference: 
When we contract all SCCS of the graph $G$ into player-1 vertices yields the DAG $G'$.
Consider the MDP $P$ where the vertices $\{v_1,v_2,v_4,v_7\}$ of $G$ 
are random vertices and all edges remain unchanged. 
If we contract the MECS, i.e. $\{\{v_1,v_2,v_3\}, \{v_5,v_6,v_7,v_8\}\}$ 
into player-1 vertices $\{v'_1, v'_3\}$ we obtain the MDP $P'$ which has a cycle. 
Note that this is because $v_4$ does not belong to a MEC and is strongly connected with $v'_1$ or $v_3$ respectively in $P'$ and $P$.\@
\end{example}

\para{Random attractor.}
A \emph{random attractor} $\attr{R}{T}{P}$ of a vertex set
$T$ in $P$ is a vertex set. It includes all vertices (1) in $T$, (2) random vertices with an edge to the random
attractor and player-1 vertices	with all outgoing edges in the random attractor. 
We define the random attractor $A = \attr{R}{T}{P}$ inductively as
follows: $A_0 = T$ and $A_{i+1} = A_i \cup \{v \in V_R \mid \Out{v} \cap
		A_i \neq \emptyset \} \cup \{v \in V_{1} \mid \Out{v} \subseteq A_i 
	\}$ for all $i>0$.
	Due to~\cite{I81,B80} we can compute the random attractor
	$A = \attr{R}{T}{P}$ of a set $T$ in time $O(\sum_{v \in A} \In{v})$.

\para{Reachability in MDPs.}
Given a vertex set $T$, the set of vertices
from which $T$ can be \emph{reached almost-surely} (i.e., with probability~1) can be computed 
in $O(m)$ time given the MEC-decomposition of $P$~\cite[Theorem 4.1]{chatterjee2016separation}. 

\subsection{Game Graphs: Closed Sets, Attractors}
For the following definitions, let $\Gamma = (V,E,\langle V_1, V_2 \rangle)$ be a game graph.

\para{Closed Sets.}
A set $U \subseteq V$ of vertices is a \emph{closed set for player~1} if the following two conditions hold.
\begin{enumerate}
	\item For all vertices $u \in (U \cap V_1)$ we have $\Out{u} \subseteq U$, that is, all successors of
		player-1 vertices are again in $U$ and 
	\item For all $u \in (U \cap V_2)$ we have that $\Out{u} \cap V_2 \neq \emptyset$, that is, every
		player-2 vertex in $U$ has a successor in $U$.
\end{enumerate}
Player-2 closed sets are defined analogously by exchanging roles of player~1 and player~2. Every
closed set $U$ for player $x \in \{1,2\}$ induces a subgame graph which we denote $\Gamma \upharpoonright U$. 

The following proposition establishes the connection between closed sets, winning for safety,
reachability, and coB\"uchi objectives. The proof of the proposition is straightforward and can be found
in~\cite{CH14}.

\begin{proposition}[\cite{CH14}]\label{prop:closedsets}
	Consider a game graph $\Gamma$, and a closed set $U$ for player~1. Then, the following
	assertions hold:
	\begin{enumerate}
		\item Player~2 has a winning strategy for the objective $\safety{U}$ for all vertices in
			$U$, that is, player~2 can ensure that if the play starts in $U$, then the play never
			leaves the set $U$.
		\item For all $T \subseteq V \setminus U$, we have $\W{1}{\reach{T}} \cap U = \emptyset$,
			that is, for any set $T$ of vertices outside $U$, player~1 does not have a strategy from
			vertices in $U$ to ensure to reach $T$.
		\item If $U \cap B = \emptyset$ (i.e., there is no B\"uchi vertex in $U$), then every vertex
			in $U$ is winning for player~2 for the coB\"uchi objective.
	\end{enumerate}
\end{proposition}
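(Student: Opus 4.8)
The plan is to prove assertion~(1) directly by exhibiting a player-2 strategy that traps the play in $U$, and then obtain~(2) and~(3) as easy consequences. For~(1), I would define a memoryless player-2 strategy $\pi$ that, at every vertex $u \in U \cap V_2$, selects a successor lying in $U$; such a successor exists by the second closedness condition (which, as the accompanying text makes explicit, asserts $\Out{u} \cap U \neq \emptyset$). The key claim is that, starting from any $v \in U$ and with player~2 following $\pi$, the play $\omega(v,\sigma,\pi) = \langle v_0, v_1, v_2, \dots\rangle$ remains inside $U$ for every player-1 strategy $\sigma$. I would prove this by induction on the position index: $v_0 = v \in U$, and if $v_i \in U$, then either $v_i \in V_1$, whence $v_{i+1} \in \Out{v_i} \subseteq U$ by the first closedness condition, or $v_i \in V_2$, whence $v_{i+1} = \pi(\langle v_0, \dots, v_i\rangle) \in U$ by the choice of $\pi$. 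Thus $\omega \in \safety{U}$, so every $v \in U$ is winning for player~2 for $\safety{U}$.

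For~(2), I would combine~(1) with the duality of reachability and safety and with determinacy. Fix $T \subseteq V \setminus U$ and an arbitrary $v \in U$. By~(1), player~2 wins $\safety{U}$ from $v$; since $U \subseteq V \setminus T$, the very same strategy $\pi$ witnesses that player~2 wins $\safety{V \setminus T}$ from $v$, i.e.\ $v \in \W{2}{\safety{V \setminus T}}$. Because $\reach{T} = \Omega \setminus \safety{V \setminus T}$ and, by Martin's determinacy theorem, the player-1 and player-2 winning sets partition $V$, it follows that $v \notin \W{1}{\reach{T}}$. As $v$ was arbitrary, $\W{1}{\reach{T}} \cap U = \emptyset$.

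For~(3), I would use~(1) together with the hypothesis $U \cap B = \emptyset$. Here the relevant coB\"uchi objective is player~2's objective complementary to player~1's B\"uchi objective on the set $B$, namely the set of plays $\omega$ with $\Inf{\omega} \cap B = \emptyset$. Following $\pi$ from any $v \in U$, every resulting play stays in $U$, so $\Inf{\omega} \subseteq U$; since $U \cap B = \emptyset$, we get $\Inf{\omega} \cap B = \emptyset$, so the play lies in the coB\"uchi objective. Hence every vertex of $U$ is winning for player~2 for coB\"uchi.

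The argument is entirely elementary and I anticipate no genuine obstacle beyond bookkeeping. The two points that warrant care are: reconciling the printed second closedness condition with its intended meaning $\Out{u} \cap U \neq \emptyset$ (the induction in~(1) depends precisely on this, so I would state it explicitly at the outset), and being careful in~(2) to invoke determinacy in order to convert ``player~2 wins safety'' into ``player~1 does not win reachability,'' rather than asserting the latter directly.
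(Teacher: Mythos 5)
Your proposal is correct, and it is exactly the straightforward argument the paper itself omits (deferring to~\cite{CH14}): a memoryless player-2 trapping strategy established by induction on the play, with assertions~(2) and~(3) obtained as immediate corollaries of~(1). One minor simplification: in~(2) you do not actually need Martin's determinacy theorem, since the disjointness of $\W{1}{\reach{T}}$ and $\W{2}{\safety{V \setminus T}}$ is immediate --- if both players had winning strategies from the same vertex $v$, the play they jointly induce would have to lie both in $\reach{T}$ and in its complement.
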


\para{Attractors and Reachability in Game Graphs.}
Given a set of vertices $T \subseteq V$, the set of vertices from which player~$x$ can reach $T$
against all strategies of the other player, is the \emph{player-$x$ attractor} of $T$, i.e., 
$\attr{x}{T}{\Gamma} = \W{x}{\reach{T, \Gamma}}$. 
Formally, the player-$x$ attractor ($x \in \{1,2\}$) $\attr{x}{T}{\Gamma}$ of a given vertex set $T$ is
the limit of the sequence $A_0 = T; A_{i+1} = A_i \cup \{v \in V_x
	\mid \Out{v} \cap A_i \neq \emptyset \} \cup \{v \in V_{\bar{x}} \mid
	\Out{v} \subseteq A_i \}$ for all $i\geq 0$.  
The running time for computing an attractor $A = \attr{x}{T}{\Gamma}$ is $O(m)$~\cite{I81,B80}.

The following observation connects closed sets and attractors.
\begin{observation}[\cite{CH14}]\label{obs:attractorclosedset}
	For all game graphs $\Gamma$, all players $\ell \in \{1,2\}$, and all sets $U \subseteq V$ we
	have the following:
	The set $V \setminus \attr{\ell}{U}{\Gamma}$ is a closed set for player $\ell$, i.e., no player
	$\ell$ vertex in $V \setminus \attr{\ell}{U}{\Gamma}$ has an edge to $\attr{\ell}{U}{\Gamma}$
	and every vertex of the other player in $V \setminus \attr{\ell}{U}{\Gamma}$ has an edge in $V
	\setminus \attr{\ell}{U}{\Gamma}$.
\end{observation}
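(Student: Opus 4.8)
The plan is to set $A := \attr{\ell}{U}{\Gamma}$ and $Z := V \setminus A$, and to verify the two defining conditions of a closed set for player~$\ell$ directly from the inductive definition of the attractor. Recall that $A$ is the limit of the increasing sequence $A_0 = U$, $A_{i+1} = A_i \cup \{v \in V_\ell \mid \Out{v} \cap A_i \neq \emptyset\} \cup \{v \in V_{\bar\ell} \mid \Out{v} \subseteq A_i\}$. Since $V$ is finite, the sequence stabilizes, so $A = A_N$ for some $N$, and every vertex of $A$ enters at some finite stage.

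For the first condition I would take any player-$\ell$ vertex $u \in Z \cap V_\ell$ and argue by contradiction: if $u$ had a successor $w \in \Out{u} \cap A$, then $w \in A_i$ for some $i$, and the middle term of the recurrence would force $u \in A_{i+1} \subseteq A$, contradicting $u \in Z$. Hence $\Out{u} \subseteq Z$, which is exactly the requirement that all successors of a player-$\ell$ vertex stay outside the attractor.

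For the second condition I would take any vertex $u \in Z \cap V_{\bar\ell}$ of the opposing player and again argue by contradiction: if every successor of $u$ lay in $A$, i.e.\ $\Out{u} \subseteq A$, then because $\Out{u}$ is finite (and nonempty, as every vertex has an outgoing edge by our standing assumption) all of its elements appear in a common stage $A_i$, so the last term of the recurrence would place $u \in A_{i+1} \subseteq A$, again contradicting $u \in Z$. Therefore $u$ retains a successor in $Z$, as required.

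The argument is essentially a restatement of the fixed-point property of the attractor, so I expect no genuine obstacle. The only point requiring a little care is the second condition, where one must pass from ``each successor lies in some stage'' to ``all successors lie in a single stage''; this is handled by the finiteness of $\Out{u}$ (taking the maximum of the finitely many indices), or equivalently by working directly with the stabilized set $A = A_N$.
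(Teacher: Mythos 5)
Your proof is correct. The paper gives no proof of this observation at all — it is simply cited from~\cite{CH14} — and your argument (unfolding the inductive definition of $\attr{\ell}{U}{\Gamma}$, deriving a contradiction from the two membership rules of the recurrence, and using monotonicity of the stages $A_i$ to place the finitely many successors of a player-$\bar\ell$ vertex in a common stage) is exactly the standard fixed-point argument the citation stands for, so there is nothing to compare against or add.
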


\section{Symbolic Model of Computation}
In the \emph{(set-based) symbolic model of computation}, we store the model 
with implicitly represented sets of vertices and edges.
An \emph{symbolic algorithm} accesses vertices and edges of the MDP not explicitly but with
\emph{(set-based) symbolic operations}. We characterize the resources in the
symbolic model of computation by the \emph{number of (set-based) symbolic operations} and
\emph{(set-based) symbolic space}. 

\para{Set-Based Symbolic Operations.}
A symbolic algorithm can use the same mathematical, memory, access, and logical
operations as a regular RAM algorithm, except for the access to the graph.

An input model with vertices $V$ and edges $E$ can be accessed only by the following types of operations:
\begin{enumerate}
	\item The algorithm can combine two sets of vertices or edges with \emph{basic set operations}:
		$\cup,\cap,\subseteq,\setminus$, $\times$ and $=$.
	\item To obtain the predecessors/successors of a vertex set $S$ with regard to\@ the edge set
		$E$, the algorithm uses the \emph{one-step operation}. The predecessor and successor
		operation of a vertex set $S$ over a specific edge set $E$ are:
		\begin{align*}
			\Pre{E}{S} &= \{ v \in V \mid \Out{v} \cap S \neq \emptyset \} \text{ and }\\
			\Post{E}{S} &= \{ v \in V \mid \In{v} \cap S \neq \emptyset \}
		\end{align*}

	\item For a vertex set $S$, the $\Pick{S}$ operation which returns an arbitrary vertex and the
		cardinality operation $|S|$ which returns cardinality of $S$.
	\item When the model is a game graph, an algorithm can use the 
		\emph{controllable predecessor operation} of a vertex set $S$ ($z$ and $\bar{z}$ denote the two players) 
		\begin{equation*}
			\CP_z(S) = \{ v \in V_z \mid \Out{v} \cap S \neq \emptyset \} \cup \{ v \in
				V_{\bar{z}} \mid \Out{v} \subseteq S \}.
		\end{equation*}
		We can express the set $\CP_z(S)$ using only $\Pre{E}{\cdot}$ and
		basic set operations.
		
\end{enumerate}

Sometimes we omit the subscript $E$ of the one-step operations when the edge set is clear from the context.

\begin{figure}
	\begin{center}
		\includegraphics[scale=1.7]{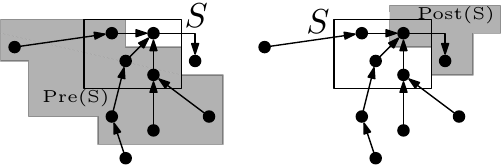}
	\end{center}
	\caption{Illustration of the one-step operations.}
\end{figure}

\para{Set-based Symbolic Space.}
The basic unit of space for a symbolic algorithm is a set~\cite{BrowneCJLM97,chatterjee2017symbolic}.  
For example, a BDD can represent a set symbolically~\cite{bryant1986graph,bry92,BurchCMDH90,ClarkeMCH96,Somenzi99,ClarkeBook,ClarkeGJLV03,GentiliniPP08,ChatterjeeHJS13}. 
It is hard to correlate the size of the set with the size of the BDD\@: Consider for example a model
whose state-space consists of valuations of $N$-boolean variables. The
set of all vertices is simply represented as a true BDD\@. Also, the set of all
vertices where the $k$th bit is false is represented by a BDD which depending
on the value of the $k$th bit chooses true or false. Again, a constant-size BDD represents this set\@. 
Therefore, even large sets of vertices can sometimes be represented as constant-size BDDs.
In general, the size of the smallest BDD representing a set is computationally hard to
determine and depends on the variable reordering~\cite{ClarkeBook}.
We represent each set, thus, as a unit data structure to obtain a clean theoretical model for the
algorithmic analysis. It follows that the \emph{symbolic space requirements} of a symbolic
algorithm, is the \emph{maximal number of sets} the algorithm stores
simultaneously.

\section{Conjectured Lower Bounds}\label{icaps:ss:lowerbounds}
Many results from classical complexity are based on standard complexity-theoretic
assumptions, e.g., P $\neq$ NP.\@ Likewise, we derive
polynomial lower bounds which are based on widely believed, conjectured lower
bounds on well-studied algorithmic problems. The lower bounds we
derive depend on the popular conjectures below: 

We first consider the conjectures on Boolean Matrix
Multiplication~\cite[Theorem 6.1]{WW18} and
\emph{triangle detection} in graphs~\cite[Conjecture 2]{abboud2014popular}. 
A \emph{triangle} in a graph is a triple $x,y,z$ of vertices such that
$(x,y),(y,z),(z,x) \in E$. 
In \emph{triangle detection} the input is a graph and the question is 
if a triangle exists in the graph. 
We remove, with linear-time preprocessing, all self-loops in instances of \emph{triangle detection}. 

\begin{remark}[Combinatorial algorithm]\label{icaps:remark:combinatorial}
    The notion of combinatorial algorithm is often used in the field of
    fine-grained complexity community~\cite{AbboudBW18,LincolnWW18, BringmannFK19},
    despite the lack of a formal definition.
    Intuitively, combinatorial algorithms 
	do not use fast matrix multiplication~\cite{williams2012multiplying,le2014powers}. 
	While non-combinatorial algorithms have
	the matrix multiplication exponent $\omega$ in the running time.
	To the best of our knowledge, 
	all algorithms for deciding (almost-sure) winning 
	conditions in game graphs and MDPs are combinatorial.
	Therefore, lower bounds for combinatorial algorithms are of 
	particular interest in our setting. 
	For more details on the notion of a combinatorial algorithm, we direct the reader to~\cite{BallardDHS12,henzinger2015hardness}.
\end{remark}

\begin{conjecture}[Comb. Boolean Matrix Multiplication Conjecture
	(BMM)]\label{icaps:conj:bmm}
	A $O(n^{3-\epsilon})$ time combinatorial algorithm for computing the
	boolean product of two $n \times n$ matrices for any $\epsilon > 0$ does not exist.
\end{conjecture}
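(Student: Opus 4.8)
This final statement is a \emph{conjecture}, not a theorem, so---unlike the other environments in this excerpt---no proof follows it and none is intended: \Cref{icaps:conj:bmm} is posited as a hardness hypothesis from which the thesis will later derive \emph{conditional} lower bounds, exactly in the spirit of the P $\neq$ NP assumption invoked at the start of this section. Accordingly, an honest ``proof proposal'' must explain why the statement is assumed rather than established. A genuine proof would require an \emph{unconditional} super-quadratic time lower bound, $\Omega(n^{3-o(1)})$, for an explicit problem against an entire class of algorithms; producing such a bound is far beyond the reach of current complexity theory, which lacks even super-linear time lower bounds for explicit problems in general models of computation. This is precisely why the statement is stated as a conjecture.

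A further, more delicate point is that the statement quantifies over \emph{combinatorial} algorithms, a class that---as the preceding \Cref{icaps:remark:combinatorial} concedes---has no formal definition. Any would-be proof therefore faces an obstacle before it can even begin: one must first fix a rigorous model of computation in which ``combinatorial'' is meaningful, and it is unclear that any natural such model both excludes fast matrix multiplication and admits a provable cubic barrier. Thus the conjecture is best read as a statement about the current algorithmic state of the art rather than as a crisp mathematical assertion awaiting a direct argument.

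What one \emph{can} do, and what the fine-grained complexity programme actually does, is tie the plausibility of this conjecture to that of other well-studied hypotheses rather than prove it outright. The natural plan is to invoke subcubic equivalences: combinatorial Boolean matrix multiplication, triangle detection, and several related problems are mutually subcubic-reducible, so a truly subcubic combinatorial algorithm for any one of them yields one for all. In particular, \Cref{icaps:conj:bmm} and the triangle-detection conjecture mentioned just above it stand or fall together, which is why the thesis is content to assume whichever is most convenient for a given reduction. The ``main obstacle'' is therefore not a gap in a proof but a fundamental one: proving the conjecture unconditionally would separate combinatorial algorithms from the cubic bound, a breakthrough no known technique approaches.
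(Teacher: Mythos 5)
Your proposal is correct: this statement is a conjecture used as a hardness hypothesis, so the paper (rightly) offers no proof, and your explanation matches the paper's own treatment, including the observation—via \cite[Theorem~6.1]{WW18}—that BMM is equivalent to the combinatorial part of the Strong Triangle Conjecture, so the two hypotheses stand or fall together.
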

\begin{sloppypar}
\begin{conjecture}[Strong Triangle Conjecture (STC)]\label{icaps:conj:stc}
	Neither a $O(\min \{ n^{\omega - \epsilon}, m^{2\omega/(\omega+1)-\epsilon} \})$
	expected time nor a $O(n^{3-\epsilon})$ ($\epsilon > 0$) time combinatorial algorithm
	that can detect whether a graph contains a triangle exist.
	($\omega < 2.373$ is the matrix multiplication exponent.)
\end{conjecture}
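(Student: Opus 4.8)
The statement is a \emph{conjecture}---a fine-grained hardness assumption---rather than a theorem, so the honest plan is not to prove it outright but to lay out the evidence that makes it credible and to pinpoint why an unconditional proof is unattainable with current techniques. First I would explain the two regimes the bound encodes. The dense bound $O(n^{\omega-\epsilon})$ reflects that the fastest known triangle detector computes the square of the adjacency matrix (a Boolean matrix product) and checks whether some edge closes a path of length two, at cost $O(n^\omega)$; a combinatorial algorithm, which by the convention of \Cref{icaps:remark:combinatorial} avoids fast matrix multiplication, is instead stuck at $O(n^3)$ up to polylogarithmic factors. The sparse bound $O(m^{2\omega/(\omega+1)})$ is the best known running time in terms of the number of edges, obtained by the high/low-degree balancing of Alon--Yuster--Zwick.

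The core of the justification is the \emph{subcubic equivalence} between triangle detection and Boolean matrix multiplication: a truly subcubic (respectively, truly sub-$n^\omega$) algorithm for either problem yields one for the other. I would present this reduction in both directions---partitioning the vertex set into blocks and reading block products off triangle queries for one direction, and answering all triangle queries with a single matrix product for the other---so that \Cref{icaps:conj:stc} is seen to be essentially the triangle-detection face of \Cref{icaps:conj:bmm}. This is why the two conjectures are invoked together: a combinatorial $O(n^{3-\epsilon})$ triangle detector would refute the combinatorial $\mathrm{BMM}$ conjecture, and a general $O(n^{\omega-\epsilon})$ detector would improve the matrix-multiplication exponent itself.

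The main obstacle---and the reason this remains a conjecture---is that converting either equivalence into an \emph{unconditional} lower bound would require super-$n^\omega$ lower bounds against general models of computation, which is far beyond what is currently provable: we have no method to rule out faster algorithms for matrix multiplication, and a lower bound against the informally delimited class of ``combinatorial algorithms'' cannot even be stated precisely enough to be proved. Consequently the best one can do is exactly what the surrounding development does, namely adopt the bound as a hypothesis and derive conditional lower bounds from it, in direct analogy with the role of $\mathrm{P} \neq \mathrm{NP}$ in classical complexity.
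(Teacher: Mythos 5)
Your proposal is correct and matches the paper's treatment: the paper does not (and cannot) prove this statement, but simply adopts it as a fine-grained hardness assumption, justifying it via the result of Vassilevska Williams and Williams~\cite{WW18} that BMM is equivalent to the combinatorial part of STC---exactly the equivalence you describe. Your discussion of the dense/sparse bounds, the role of Conjecture~\ref{icaps:conj:bmm}, and the impossibility of an unconditional proof is consistent with how the paper positions the conjecture in Section~\ref{icaps:ss:lowerbounds}.
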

\citeauthor{WW18}~\cite[Theorem 6.1]{WW18} showed that the BMM is equivalent to the
combinatorial part of STC.\@ Also, if we do not restrict ourselves to
combinatorial algorithms, STC, still gives a super-linear lower bound.
\end{sloppypar}

We also consider the Strong Exponential Time Hypothesis
(SETH) used in~\cite[Conjecture 1]{abboud2014popular} and introduced by~\cite{impagliazzo1999complexity,impagliazzo1998problems} for the satisfiability problem of propositional logic and the Orthogonal Vector Conjecture. 

\begin{conjecture}[Strong Exponential Time Hypothesis (SETH)] For
	$\epsilon>0$ there is a $k$ such that $k$-CNF-SAT on $n$ variables and $m$
	clauses cannot be solved in $O(2^{(1-\epsilon)n} \poly(m))$ time.
\end{conjecture}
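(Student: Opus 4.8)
The plan here is atypical: the final statement is the \emph{Strong Exponential Time Hypothesis}, which is a \emph{conjecture}---a hardness hypothesis---rather than a theorem. It is \emph{assumed}, not derived, and so the author states it without proof. Rather than sketch a derivation, I would explain why no proof is on offer and what any proof would have to accomplish.

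First I would observe that a proof of SETH must establish an \emph{unconditional} time lower bound: for every constant $\epsilon>0$ one must produce a clause width $k=k(\epsilon)$ such that no algorithm decides $k$-CNF-SAT on $n$ variables and $m$ clauses in time $O(2^{(1-\epsilon)n}\poly(m))$. Letting $\epsilon\to 1$, this already forbids any polynomial-time algorithm for CNF-SAT, so a proof of SETH would in particular entail $\mathrm{P}\neq\mathrm{NP}$. The statement is therefore at least as hard to settle as the central open question of the field, and the honest ``proof proposal'' is that none is currently possible.

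The truly hard part---indeed the part that is out of reach with present techniques---is proving \emph{any} super-polynomial time lower bound for an explicit problem in NP. The familiar barriers (relativization, the natural-proofs barrier, and algebrization) obstruct essentially every known route to such unconditional lower bounds; moreover SETH pins down the \emph{exact} exponential base $2^{(1-\epsilon)n}$, which is strictly stronger than mere super-polynomial hardness, so even a breakthrough separating P from NP would not obviously suffice.

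Consequently SETH is adopted as an axiom, and the real work in the sequel is not to prove it but to \emph{reduce from} it. Its role in the later chapters is to serve as the hypothesis of conditional lower bounds: the technical content lies in the fine-grained reductions---typically routed through the Orthogonal Vectors problem that SETH is known to make hard---from $k$-CNF-SAT to the reachability and coverage queries studied on game graphs and MDPs. Those reductions, and not this conjecture, are where the effort will go.
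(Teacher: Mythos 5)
Your proposal is correct: SETH is a conjecture in the paper, stated without proof and used purely as a hardness assumption, with the actual technical work residing in the reductions (via the Orthogonal Vectors problem, whose hardness the paper notes follows from SETH) that yield the conditional lower bounds. This matches the paper's treatment exactly, so there is nothing to compare beyond agreement.
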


\para{The Orthogonal Vectors Problem (OV).} Given two sets $S_1,S_2$ of
$d$-bit vectors with $|S_1| =|S_2| = N$ and $d = \omega(\log N)$, are there $u \in
S_1$ and $v \in S_2$ such that $\sum_{i=1}^d u_i \cdot v_i = 0$? 

\begin{conjecture}[Orthogonal Vectors Conjecture (OVC)]\label{icaps:conj:ovc}
	A $O(N^{2-\epsilon})$ time algorithm for the Orthogonal Vectors
	Problem for any $\epsilon > 0$ does not exist.
\end{conjecture}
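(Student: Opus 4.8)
The final statement is a \emph{conjecture}, not a theorem: it asserts the \textbf{unconditional} non-existence of an $O(N^{2-\epsilon})$-time algorithm for Orthogonal Vectors. I want to be upfront that I do not expect to prove it as worded. An unconditional proof would exhibit a genuinely super-linear time lower bound for a problem that is itself solvable in polynomial time --- OV has the trivial $O(N^2 d)$ algorithm --- and proving \emph{any} such explicit super-linear time lower bound in a general model of computation (the word-RAM, or multitape Turing machines) is a central, long-open problem in complexity theory for which no technique is currently known. So the plan cannot be a direct combinatorial argument; the honest task is to say what an unconditional proof would have to accomplish and why the natural avenues are blocked.

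First I would try the most elementary route: an adversary or information-theoretic argument forcing any algorithm to ``look at'' $\Omega(N^{2-o(1)})$ pairs $(u,v)$. This collapses immediately, because OV is not a decision-tree or communication problem where such counting has teeth: a word-RAM algorithm may compute arbitrary functions of many input coordinates in a single step, so bounding memory accesses yields nothing beyond a linear bound. Next I would look for a reduction \emph{from} a problem already known to require quadratic time unconditionally; but no natural problem in $\mathrm{P}$ is known to admit an unconditional super-linear time lower bound in these models, so there is simply no source to reduce from. This absence is exactly what forces the community to treat the statement as a conjecture rather than a theorem.

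The only route with any traction is to relax ``unconditional'' and instead derive the bound from a more fundamental hypothesis. Williams' reduction shows that an $O(N^{2-\epsilon})$-time OV algorithm would solve $k$-CNF-SAT on $n$ variables in $O(2^{(1-\delta)n}\poly(m))$ time for some $\delta > 0$, contradicting SETH; hence OVC holds \emph{conditioned on} SETH. I regard this conditional statement as the real content behind the conjecture, and it is precisely the form in which the conditional lower bounds derived later in the thesis are meaningful.

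The main obstacle, stated plainly, is that the claim as worded lies beyond current mathematics: an unconditional proof would be the first super-linear unconditional time lower bound for a polynomial-time-solvable problem in a general model, a goal for which we possess no techniques and which is shielded by the same broad barriers (relativization, and the natural-proofs/algebrization phenomena) that obstruct unconditional lower bounds elsewhere. Consequently my ``proof proposal'' is, honestly, to document this impossibility and to record the conditional derivation from SETH as the strongest attainable substitute.
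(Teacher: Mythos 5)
You are correct: the paper does not (and cannot) prove this statement --- it is stated as a conjecture and used only as a hardness hypothesis, with the paper itself noting exactly the relationship you identify, namely that SETH implies OVC via Williams' reduction (cited as \cite{williams2005satisfaction} and \cite{VW2018survey} in Section~\ref{icaps:ss:lowerbounds}). Your assessment that the conditional derivation from SETH is the only attainable content, and that an unconditional proof is beyond current techniques, matches the paper's treatment precisely.
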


The SETH implies the OVC~\cite[Theorem 5]{williams2005satisfaction}.
An explicit reduction is given in the survey article~\cite[Theorem 3.1]{VW2018survey}.
Whenever a problem is provably hard assuming OVC, is also hard when assuming
SETH.\@ 

\begin{remark}
	The conjectures make sure that no \emph{polynomial} improvements over the best-known running
	times are possible but do not exclude improvements by sub-polynomial factors such as
	poly-logarithmic factors or factors like $2^{\sqrt{\log n}}$.
\end{remark}

	\chapter{Faster Algorithms for Mean-Payoff Parity Games}\label{cha:mfcs}
	In this chapter, we consider computing the winning region for threshold mean-payoff parity objectives,
and the value function for mean-payoff parity objectives.

\section{Introduction}
\para{Graph games in Reactive Synthesis.}
There has been a long history of using graph games for modeling and 
synthesizing reactive processes~\cite{BuchiL69,PnueliR89,RamadgeW87}:
a reactive system and its environment represent the two players, whose states 
and transitions are specified by the vertices and edges of a game graph.
Consequently, graph games provide the theoretical foundation
for modeling and synthesizing reactive processes.

\para{Qualitative and quantitative objectives.}
For reactive systems, the objective is given as a set of desired paths 
(such as $\omega$-regular specifications), or 
as a quantitative optimization objective with a payoff function on the paths.
The class of $\omega$-regular specifications provides a robust framework
to express all commonly used specifications for reactive systems in 
verification and synthesis. 
Parity objectives are a canonical way to express $\omega$-regular objectives~\cite{Thomas97},
where an integer priority is assigned to every vertex, and a path satisfies the 
parity objective for player~1 if the minimum priority visited infinitely often is even.
One of the classical and most well-studied quantitative objectives is the mean-payoff 
objective, where a reward is associated with every edge, and the payoff of a path is 
the long-run average of the rewards of the path.

\para{Mean-payoff parity objectives.}
Traditionally the verification and the synthesis problems were considered
with qualitative objectives.
However, recently combinations of qualitative and quantitative objectives have 
received a lot of attention.
Qualitative objectives such as $\omega$-regular objectives specify
the functional requirements of reactive systems, whereas the quantitative 
objectives specify resource consumption requirements (such as for embedded 
systems or power-limited systems).
Combining quantitative and qualitative objectives is crucial in the 
design of reactive systems with both resource constraints and functional 
requirements~\cite{CAHS03,CJH05,BFLMS08,BCHJ09}.
For example, mean-payoff parity objectives are relevant in the synthesis of 
optimal performance lock-synchronization for programs~\cite{CCHRS11}, 
where one player is the synchronizer, the opponent is 
the environment; the performance criteria are specified as mean-payoff objective; 
and the functional requirement (e.g., data-race freedom or liveness) 
as an $\omega$-regular objective.
Mean-payoff parity objectives have been used in several other applications,
e.g., defining permissivity for parity games~\cite{BMOU11} and 
for the robustness in synthesis~\cite{BCGHHJKK14}.

\para{Threshold and value problems.}
For graph games with mean-payoff and parity objectives, there are two variants 
of the problem.
First, the \emph{threshold} problem, where a threshold $\V$ is given for 
the mean-payoff objective and player~1 must ensure the parity objective and 
that the mean-payoff is at least $\V$.
Second, the \emph{value} problem, where player~1 maximizes the mean-payoff 
value while ensuring the parity objective.
In the sequel of this section, we will refer to graph games with mean-payoff and
parity objectives as mean-payoff parity games.

\para{Previous results.}
Mean-payoff parity games were first studied in~\cite{CJH05},
and algorithms for the value problem were presented.
It was shown in~\cite{CD10a} that the decision problem for mean-payoff parity games
lies in NP $\cap$ coNP (similar to the status of mean-payoff games and parity games).
For game graphs with $n$ vertices, $m$ edges, parity objectives with $d$ priorities, 
and maximal absolute reward value $W$ for the mean-payoff objective, the previous 
known algorithmic bounds for mean-payoff parity games are as follows:
For the threshold problem, the results of~\cite{CD10a} give an 
$O(n^{d+4}m dW)$-time algorithm.
This algorithmic bound was improved in~\cite{BMOU11} where an $O(n^{d+2} m W)$-time 
algorithm was presented for the value problem. 
The result of~\cite{BMOU11} does not explicitly present any other better bound for the 
threshold problem. However, the recursive algorithm of~\cite{BMOU11} uses value mean-payoff games
as a sub-routine, and replacing value mean-payoff games with threshold mean-payoff games gives 
an $O(n)$-factor saving, and yields an $O(n^{d+1}mW)$-time algorithm for the 
threshold problem for mean-payoff parity games.

\para{Contributions.} 
In this chapter, our main contributions are faster algorithms to solve mean-payoff parity games.
Previous and our results are summarized in Table~\ref{mfcs:tab:complexity}.

\begin{enumerate}
	\item \emph{Threshold problem.} We present an $O(n^{d-1}mW)$-time algorithm
for the threshold problem for mean-payoff parity games, improving the previous 
$O(n^{d+1}mW)$ bound.
The important special case of parity objectives with two priorities correspond to B\"uchi 
and coB\"uchi objectives. 
Our bound for mean-payoff B\"uchi games and mean-payoff coB\"uchi games is 
$O(nmW)$, which matches the best-known bound to solve the threshold
problem for mean-payoff objectives~\cite{brim2011}, and improves the previous known $O(n^3mW)$
bound~\cite{BMOU11}.

\item \emph{Value problem.} We present an $O(n^{d}mW\log(nW))$-time 
algorithm for the value problem for mean-payoff parity games, improving the previous 
$O(n^{d+2}mW)$ bound.
Our bound for mean-payoff B\"uchi games and mean-payoff coB\"uchi games is 
$O(n^2mW\log(nW))$, which matches the bound of~\cite{brim2011} 
to solve the value problem for mean-payoff objectives, and improves the previous 
known $O(n^4mW)$ bound.
\end{enumerate}

\para{Technical contributions.}
Our main technical contributions are as follows:
\begin{enumerate}
\item First, for the threshold problem, we present a decremental algorithm for 
mean-payoff games that supports a sequence of vertex-set deletions along with 
their player-2 reachability set. 
We show that the total running time is $O(n  m  W)$, which matches the 
best-known bound for the static algorithm to solve mean-payoff games.
We show that using our decremental algorithm we can solve the threshold problem for 
mean-payoff B\"uchi games in time $O(n  m  W)$.

\item Second, for mean-payoff coB\"uchi games, the decremental approach does not work.
We present a new static algorithm for threshold mean-payoff games that identifies subsets $X$ 
of the winning set for player~1, where the time complexity is $O(|X|  m  W)$, i.e., 
it replaces $n$ with the size of the set identified. We show that with our new static algorithm 
we can solve the threshold problem for mean-payoff coB\"uchi games in time $O(n  m  W)$. 

\item Finally, we show for all mean-payoff parity objectives, given an algorithm for the 
threshold problem, that the value problem can be solved in time $n  \log (n W)$
times the complexity of the threshold problem. 
\end{enumerate}

\para{Related works.}
The problem of graph games with mean-payoff parity objectives was first 
studied in~\cite{CJH05}. The NP $\cap$ coNP complexity bound was established 
in~\cite{CD10a}, and an improved algorithm for the problem was given 
in~\cite{BMOU11}.
The mean-payoff parity objectives have also been considered in other
stochastic setting such as Markov decision processes~\cite{CD11,CD11b} 
and stochastic games~\cite{CDGO14}.
The algorithmic approaches for stochastic games build on the results for
non-stochastic games. In this chapter, we present faster algorithms for 
mean-payoff parity games.

\begin{table}[t]
	\small
	\centering
	\begin{tabular}{l c l l c l l}
		\toprule
		& &\multicolumn{2}{c}{threshold problem} & & \multicolumn{2}{c}{value problem}\\
		\cmidrule{3-4}  \cmidrule{6-7}
		&&	Previous & New & & Previous & New \\
		\midrule
		MP-B\"uchi   &&  $O(n^3  m  W)$ & $O(n  m  W)$ & & $O(n^4  m  W)$ & $O(n^2  m  W  \log(nW))$\\
		MP-coB\"uchi &&   $O(n^3  m  W)$ & $O(n  m  W)$ & & $O(n^4  m  W)$ & $O(n^2  m  W  \log(nW))$\\
		MP-parity && $O(n^{d+1}  m  W)$ & $O(n^{d-1}  m  W)$& & $O(n^{d+2}  m  W)$ & $O(n^{d}  m  W  \log(nW))$\\
		\bottomrule

	\end{tabular}
	\caption{Algorithmic bounds for mean-payoff (MP) and parity objectives:
		In a game graph $\Gamma$ with weight function $w$, $n$ denotes the number of vertices, $m$ denotes the number of edges, $d$ denotes the number of 
		priorities of the parity function $p$, and $W$ is the maximum absolute value of the weight function $w$.
	}\label{mfcs:tab:complexity}

\end{table}

\section{Decremental Algorithm for Threshold Mean-Payoff Games}\label{mfcs:sec:decr}
In this section, we present a decremental algorithm for threshold mean-payoff games
that supports deleting a sequence of sets of vertices along with their player-2
attractors.
The overall running time of the algorithm is $O(nmW)$.

\para{Key idea.} 
A static algorithm based on the notion of progress measure for mean-payoff games 
was presented in~\cite{brim2011}.
We show that the progress measure is monotonic with regard to the deletion of vertices and their
player-2 attractors.
We use an amortized analysis to obtain the running time of our algorithm.

\para{Mean-payoff progress measure.} 
Let $(\Gamma,w)$ be a mean-payoff game with threshold $\V$.
Progress measure is a function $f$ which maps every vertex in $\Gamma$ to an element of the set 
$C_\Gamma = \{i \in \N \mid i \leq nW \} \cup \{\top\}$, i.e., $f: V \mapsto C_\Gamma$. 
Let $(\preceq,C_\Gamma)$ be a total order, where $x \preceq y$ for $x,y \in C_\Gamma$
holds iff $x \leq y \leq nW$ or $y = \top$. We define the operation $\ominus:
C_\Gamma \times \Z \mapsto C_\Gamma$ for all $a \in C_\Gamma$ and $b \in \Z$ as
follows:
\[ 
a \ominus b = \begin{cases} 
\max(0,a-b) & \text{if $a \neq \top$ and $a-b \leq nW$,}\\
\top        & \text{otherwise.}
\end{cases}
\]
A player-1 vertex $v$ is \emph{consistent} if $f(v) \succeq f(v') \ominus
w(v,v')$ for \emph{any} $v' \in Out(v)$. A player-2 vertex $v$ is \emph{consistent} 
if $f(v) \succeq f(v') \ominus w(v,v')$ for \emph{all} $v' \in Out(v)$.
Let $v \in V$, we define $\lift(f,v) : [V \mapsto C_\Gamma \times V] \mapsto [V \mapsto C_\Gamma]$ as 
$\lift(f,v) = g$ where: 
\[ g(u) = 
\begin{cases}
f(u) & \text{ if $u \neq v$},\\
\min\{f(v') \ominus w(v,v') \mid (v,v') \in E \} & \text{ if $u = v$ and $v \in V_1$},\\
\max\{f(v') \ominus w(v,v') \mid (v,v') \in E \} & \text{ if $u = v$ and $v \in V_2$}.
\end{cases}
\]

\smallskip\noindent\emph{Static Algorithm.}
The static algorithm by Brim et al.~\cite{brim2011} is an iterative algorithm that 
maintains and returns a progress measure $f$ and a list $L$ of vertices that are not consistent. 
The initial progress measure of every vertex is set to zero.
Also, all the weights of all edges are subtracted by the value $\V$, i.e., $w(e) \gets w(e) - \V$ for all edges $e$ in $E$.
The list $L$ is initialized with the vertices which are not consistent considering the 
initial progress measure. 
Then the following steps are executed in a while-loop:
\begin{enumerate}
\item If $L$ is empty, return $f$.
\item Take out a vertex $v$ of $L$.
\item Perform the $\lift$-operation on the vertex, i.e., $f \leftarrow \lift(f,v)$. 
\item If a vertex $v'$ in $\In{v}$ is not consistent, put $v'$ into $L$.
\end{enumerate}
If every vertex is consistent, i.e., the list $L$ is empty, Brim et al.\ show that
the winning region of player~1 is the set of vertices which are not set
to $\top$ in $f$, i.e., $\W{1}{\mpayoff{\V,w,\Gamma}} = \{ v \in V \mid f(v) \neq \top \}$.

\smallskip\noindent\emph{Decremental input/output.} 
Let $(\Gamma,w)$ be a mean-payoff game with threshold $\V$. 
The \emph{input} to the decremental algorithm is a sequence of sets $A_1, A_2, \ldots,A_k$,
such that each $A_i$ is a player-2 attractor of a set $X_i$ in the game 
$\Gamma_i=\Gamma \restr (V \setminus \bigcup_{j<i} A_j)$.
The \emph{output requirement} is the player-1 winning set after the deletion of
$\bigcup_{j<i}A_j$ for $i=1,\dots,k$,
i.e., the output requirement is the sequence $Z_1,Z_2, \ldots,Z_k$, where 
$Z_i=\W{1}{\mpayoff{\V,w,\Gamma_i}}$ in $\Gamma_i=\Gamma \restr (V \setminus \bigcup_{j<i} A_j)$.
In other words, we repeatedly delete a vertex set $X_i$ along with its player-2 attractor $A_i$ 
from the current game graph $\Gamma_i$, and require the winning set of player~1 
for the mean-payoff objective as an output after each deletion.

\para{Decremental algorithm.} 
We maintain a progress measure $f_i$, $1 \leq i \leq k$, during the whole sequence of deletions.
The initial progress measure $f_1$ for the mean-payoff game $(\Gamma,w)$ with threshold 
mean-payoff objective $\mpayoff{\V,w,\Gamma}$ is computed with the static algorithm of Brim et
al~\cite{brim2011}.

For all edges $e$ in $E$, we set $w(e) \gets w(e) - \V$.
In iteration $i$ with 
input $A_i$, in the game $\Gamma_i$ with its corresponding vertex set $V_i$ the following steps are executed:
\begin{enumerate}
    \item If a vertex in the set $\{v \in V_i\setminus A_i \mid \exists v': \  v' \in \Out{v} \text{
				and } v' \in A_i \}$
    is not consistent in $f_i$ without the set $A_i$, put it in the list $L_i$.
    \item Delete the set $A_i$ from $\Gamma_i$ to receive $\Gamma_{i+1}$ (and thus $V_{i+1}$).
    \item Execute the while-loop with steps (1) --- (4) of the above described iterative 
	algorithm by Brim et al.~\cite{brim2011} initialized with
	$\Gamma_{i+1}$, $L_i$ and $f_{i}$ restricted to the vertices in $V_{i+1}$ to obtain $f_{i+1}$.
    \item Output the set $\{ v \in V_{i+1} \mid f(v) \neq \top \}$ from $f_{i+1}$.
\end{enumerate}

\smallskip\noindent\emph{Correctness.}
Let $(\Gamma, w)$ be a mean-payoff game, $\mpayoff{\V,w\Gamma}$ be a threshold objective 
and $A_1, A_2 ,\dots, A_k$ a sequence of sets, 
such that each $A_i$ is a player-2 attractor in the game 
$\Gamma_i = \Gamma \restr (V \setminus \bigcup_{j<i} A_j)$. To show the correctness of the decremental
algorithm we show that the condition that the list $L$ contains all vertices which are not consistent
is an invariant of the decremental algorithm at line~3. This property was proved for the static algorithm in~\cite{brim2011}.
\begin{lemma}
The condition that $L_i$ contains all vertices which are not consistent with the
progress measure $f_i$ restricted to $V_{i+1}$ in $\Gamma_{i+1}$ is an invariant of the
static algorithm called in step~3 of the decremental algorithm for $1 \leq i
\leq k-1$. 
\end{lemma}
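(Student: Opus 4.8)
The plan is to treat the claim as a loop invariant and prove it in the standard two parts: that it holds when the static algorithm of~\cite{brim2011} is entered in step~3, and that each iteration of that algorithm preserves it. Only the first part is new here, since the preservation part --- one $\lift$ makes the lifted vertex consistent and can spoil consistency only for its in-neighbours, which step~4 reinserts into the list --- is exactly the invariant already established for the static algorithm in~\cite{brim2011}. I would run an outer induction on $i$ whose hypothesis is that at the start of iteration $i$ every vertex of $V_i$ is consistent with $f_i$ in $\Gamma_i$. This holds for $i=1$ because $f_1$ is returned by the static algorithm run on $\Gamma$, and it propagates to $i+1$ because step~3 of iteration $i$ runs the static algorithm until $L$ is empty, and its termination condition is precisely that no vertex remains inconsistent.

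The key structural input is that $A_i$ is a \emph{player-2} attractor. By \Cref{obs:attractorclosedset} the surviving set $V_{i+1}=V_i\setminus A_i$ is then closed for player~2: no player-2 vertex of $V_{i+1}$ has an edge into $A_i$, and every player-1 vertex of $V_{i+1}$ keeps at least one outgoing edge inside $V_{i+1}$. I would also record the trivial but essential observation that restricting $f_i$ to $V_{i+1}$ does not change the value $f_i(v)$ of any surviving vertex $v$; hence for $v\in V_{i+1}$ the only way the deletion can affect consistency is through the loss of outgoing edges that pointed into $A_i$.

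The heart of the base case is a split by vertex owner that exploits the asymmetry in the definition of consistency. For a player-2 vertex $v\in V_{i+1}$, consistency is a condition over \emph{all} of $\Out{v}$; by closedness $\Out{v}$ loses no edge and the relevant $f$-values are unchanged, so $v$ stays consistent and need not be placed into $L_i$. For a player-1 vertex $v\in V_{i+1}$, consistency is an \emph{existential} condition over $\Out{v}$; if $v$ has no edge into $A_i$ its out-neighbourhood is untouched and it stays consistent, whereas if $v$ has an edge into $A_i$ it belongs to the set $\{ v \in V_i \setminus A_i \mid \Out{v} \cap A_i \neq \emptyset \}$ scanned in step~1. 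Consequently every vertex that could have turned inconsistent lies in that scanned set, and step~1 places into $L_i$ exactly those of its members that are actually inconsistent in $\Gamma_{i+1}$; this gives the base case, namely that on entry to step~3 the list $L_i$ holds all inconsistent vertices.

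The main obstacle I anticipate is getting this case analysis airtight, and in particular justifying that \emph{no surviving player-2 vertex} can become inconsistent. This is the place where the player-2 closedness of $V_{i+1}$ and the universal-versus-existential shape of the consistency condition must be combined: it is precisely because only player-1 vertices --- and among those only the ones with an edge into the deleted attractor --- can lose consistency that the small, locally computable candidate set inspected in step~1 is guaranteed to be complete. Once the base case is in place, combining it with the preservation property from~\cite{brim2011} yields the invariant for all $1\le i\le k-1$.
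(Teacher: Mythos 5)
Your proof is correct and takes essentially the same route as the paper's: an outer induction on $i$ in which consistency of all of $V_i$ at the start of iteration $i$ follows from the static algorithm's termination condition, preservation inside the static algorithm is delegated to~\cite{brim2011}, and the entry condition rests on the fact that deleting a player-2 attractor can only remove outgoing edges of player-1 vertices in $V_{i+1}$, so the set scanned in step~1 contains every vertex whose consistency could have changed. The only difference is presentational: you invoke Observation~\ref{obs:attractorclosedset} and the universal-versus-existential shape of consistency explicitly, whereas the paper simply remarks that the source of a deleted edge cannot be a player-2 vertex (and once no edge of a vertex is deleted, its consistency status is unchanged regardless of which quantifier applies).
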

\begin{proof}
The fact that the static algorithm correctly returns a progress 
measure with only 
consistent vertices when the invariant holds was shown in~\cite{brim2011}. It
was also shown in~\cite{brim2011} that the invariant is maintained in the loop.
It remains to show that the condition holds when we call the static algorithm at
step~3.
For the base case, let $i = 1$. In the initial progress measure $f_1$ and the initial game graph $\Gamma_1$, 
every vertex is consistent. By the definition of a player-2 attractor, deleting the set $A_1$ potentially
removes edges $(v,v')$ where $v$ is a player-1 vertex in $V\setminus A_1$
and $v'$ is in $A_1$. (Note that $v$ cannot be a player-2 vertex.) 
All of the vertices not consistent anymore are added to $L_i$ in step~1 of the decremental algorithm. For the inductive step let $i = j$. By the induction hypothesis,
all vertices which were not consistent with the progress measures $f_{h-1}$ restricted to $V_{h}$ for 
$2\leq h \leq j$ were added to the corresponding lists. Thus by the correctness of the static algorithm, 
it correctly computes 
the new progress measure $f_{h}$ for the game graph $\Gamma_{h}$ where every vertex is consistent. 
Thus also every vertex in the progress measure $f_j$ restricted to $V_{j}$ is consistent. 
Again the player-2 attractor is removed and vertices that are not consistent with progress measure $f_j$ 
restricted to $V_{j+1}$ are put into $L_j$ by step~1 of the algorithm. 
\end{proof}
Thus we proved that the static algorithm always correctly updates to the new
progress measure in each iteration. 
The winning region of player-1 is obtained by the returned progress measure (step~4).
The decremental algorithm thus correctly computes the sequence 
$Z_1,Z_2, \dots Z_k$, where $Z_i = W_1(\mpayoff{\V,w,\Gamma_i})$.

\smallskip\noindent\emph{Running Time.}
The calculation of the initial progress measure for the mean-payoff game $\Gamma$ with
threshold $\V$ is in time $O(nmW)$.
The vertices which are not consistent anymore after the deletion of $A_i$ 
can be found in time $O(m)$ (step~1). As at most $n$ such sets $A_i$ exist, the running time is
$O(mn)$. In step~3 the static algorithm is executed with
the current progress measure $f_i$:
Every time a vertex $v$ is picked from 
the list $L_i$ it costs $O(|\Out{v} + \In{v}|)$ time to
use $\lift$ on it and to look for vertices in $\In{v}$ which are not consistent anymore
(steps~1--3 in the static algorithm).
We charge the cost for each vertex to its incident edges. Note that deleting a set of vertices and 
their corresponding player-2 attractor 
will only potentially \emph{increase} the progress measure of some player-1 vertices.
As we can increase the progress measure of every vertex only $nW$ times 
before it is set to $\top$ where it is always consistent, 
we get the desired time bound of $O(mnW)$.

\noindent Thus our decremental algorithm for threshold mean-payoff games works as desired and we obtain 
the following result:

\begin{theorem}~\label{mfcs:thm:decr}
	Given a mean-payoff game $(\Gamma,w)$, a threshold mean-payoff objective $\phi$ and a
sequence of sets $A_1, A_2, \dots, A_k$ such that each $A_i$ is a player-2 
attractor of a set $X_i$ in the game $\Gamma_i=\Gamma \restr (V \setminus \bigcup_{j<i} A_j)$, 
the sequence $Z_1,Z_2,\dots, Z_k$, where $Z_i = W_1(\phi)$ in $\Gamma_i$ can be computed in
$O(nmW)$ time.
\end{theorem}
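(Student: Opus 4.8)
The plan is to split the argument into correctness and running time, building directly on the static progress-measure algorithm of Brim et al.\ and on the invariant established in the preceding lemma. For correctness, I would argue inductively over the iterations $i = 1,\dots,k$. The lemma guarantees that at the point where the static algorithm is invoked in step~3 of iteration $i$, the list $L_i$ contains \emph{all} vertices that are inconsistent with $f_i$ restricted to $V_{i+1}$ in $\Gamma_{i+1}$. Feeding a progress measure together with the full list of its inconsistent vertices is exactly the precondition under which Brim et al.\ prove that their while-loop terminates with a progress measure $f_{i+1}$ in which every vertex is consistent; and for a consistent progress measure they show $\{v \mid f_{i+1}(v)\neq\top\} = \W{1}{\phi}$ in $\Gamma_{i+1}$. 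Step~4 outputs precisely this set, so iterating yields the required sequence $Z_1,\dots,Z_k$.

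The heart of the proof, and the step I expect to be the main obstacle, is showing that reusing $f_i$ rather than recomputing from scratch does not waste work. The key claim I would establish is a \emph{monotonicity} property: across the whole sequence of deletions the value $f(v)$ of every surviving vertex $v$ is nondecreasing. The justification is structural and uses crucially that each $A_i$ is a \emph{player-2} attractor: by the definition of the attractor, any player-2 vertex with an edge into $A_i$ already lies in $A_i$, so no player-2 vertex outside $A_i$ loses an edge when $A_i$ is deleted. Hence the only edges removed from the surviving subgame are outgoing edges of player-1 vertices. Since the value at a player-1 vertex is the \emph{minimum} of $f(v')\ominus w(v,v')$ over its outgoing edges, deleting some of these edges can only \emph{raise} that minimum, while player-2 vertices are unaffected. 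As $\lift$ is monotone, re-running it from this pointwise-larger starting configuration stabilizes at a progress measure that dominates $f_i$ everywhere, which is the claim.

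With monotonicity in hand, the amortized running time follows the static bound. Every value of $C_\Gamma$ lies in $\{0,1,\dots,nW\}\cup\{\top\}$, so the value of any single vertex can strictly increase at most $nW$ times before reaching $\top$, after which it stays consistent forever. Charging each $\lift$ performed at $v$ to the edges incident to $v$ (cost $O(|\Out{v}|+|\In{v}|)$), the total cost of all lift operations over the entire sequence telescopes to $O(nmW)$. The per-deletion scan in step~1 that detects newly inconsistent neighbours of $A_i$ costs $O(m)$, and since the sets $A_i$ are pairwise disjoint there are at most $n$ of them, contributing $O(nm)$. Adding the $O(nmW)$ cost of the initial static computation of $f_1$ gives an overall bound of $O(nmW)$, which matches the best static bound for a single mean-payoff game and establishes the theorem.
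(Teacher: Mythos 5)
Your proposal is correct and follows essentially the same route as the paper's proof: correctness via the invariant that $L_i$ holds all inconsistent vertices, the monotonicity of the progress measure under player-2 attractor deletions (since only player-1 outgoing edges can be lost), and an amortized analysis charging each lift to incident edges with at most $nW$ increases per vertex. Your explicit justification of why no player-2 vertex loses an edge matches the paper's parenthetical remark in its invariant lemma.
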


\begin{remark}\label{mfcs:rem:nodeladd}
Note that the running time analysis of our decremental algorithm crucially
depends on the monotonicity property of the progress measure.
If edges are both added and deleted, then the monotonicity property does not hold.
Hence obtaining a fully dynamic algorithm that supports both addition/deletion of 
vertices/edges with running time $O(nmW)$ is an interesting open problem.
However, we will show that for solving mean-payoff parity games, the decremental 
algorithm plays a crucial part.
\end{remark}

\section{Threshold Mean-Payoff Parity Games}
In this section, we present algorithms for threshold mean-payoff parity games.
Our most interesting contributions are for the base case of 
mean-payoff Büchi objectives and mean-payoff coBüchi objectives, and the general case follows
a standard recursive argument. 

\subsection{Threshold Mean-Payoff Büchi Games}
In this section, we consider threshold mean-payoff B\"uchi games.

\para{Algorithm for threshold mean-payoff B\"uchi games.}
The basic algorithm is an iterative algorithm that deletes player-2 attractors.
The algorithm proceeds in iterations. In iteration $i$, let $D_i$ be the set of 
vertices already deleted. Consider the subgame $\Gamma_i=\Gamma \restr (V\setminus D_i)$.
Then the following steps are executed:
\begin{enumerate}
\item Let $V^i= V \setminus D_i$ and $B_i$ denote the set of B\"uchi vertices (or vertices with 
priority~0) in $\Gamma_i$.
Compute $Y_i=\attr{1}{B_i}{\Gamma_i}$ the player-1 attractor to $B_i$ in $\Gamma_i$.

\item Let $X_i=V^{i} \setminus Y_i$. If $X_i$ is non-empty, remove $A_i=\attr{2}{X_i}{\Gamma_i}$
	from $\Gamma_i$,
	and proceed to the next iteration in Step~1 otherwise go to Step~3.

\item Else $V^{i}=Y_i$. Compute $U_i=W_1(\mpayoff{\V,w,\Gamma_i})$, i.e., the winning 
region for the threshold mean-payoff objective in $\Gamma_i$. 
Let $X_i=V^{i} \setminus U_i$. 
If $X_i$ is non-empty, remove $A_i=\attr{2}{X_i}{\Gamma_i}$  from the game graph $\Gamma_i$,
and proceed to the next iteration.
If $X_i$ is empty, then the algorithm stops and all the remaining vertices are winning for player~1
for the threshold mean-payoff B\"uchi objective. 
\end{enumerate}

\para{Correctness.}
Since the correctness argument has been used before~\cite{CJH05}, we only present a brief sketch:
The basic correctness argument shows for $i > 0$ that all vertices removed from $\Gamma_i$ 
do not belong to the winning set for player~1. 
In the end, for the remaining vertices, player~1 can ensure to reach the B\"uchi vertices,
and ensures the threshold mean-payoff objectives.
A strategy that plays for the threshold mean-payoff objectives longer and longer, and in between
visits the B\"uchi vertices, ensures that the threshold mean-payoff B\"uchi objective is satisfied.

\para{Running time analysis.}
We observe that the total running time to compute all attractors is at most $O(nm)$,
since the algorithm runs for $O(n)$ iterations and each attractor computation is linear time.
In step~3, the algorithm needs to compute the winning region for threshold mean-payoff 
objective.
The algorithm always removes a set $X_i$ and its player-2 attractor $A_i$, and requires
the winning set for player~1. 
Thus we can use the decremental algorithm from Section~\ref{mfcs:sec:decr}, which precisely supports
these operations.  
Hence using Theorem~\ref{mfcs:thm:decr} in the algorithm for threshold mean-payoff B\"uchi games, 
we obtain the following result.

\begin{theorem}\label{mfcs:thm:buchi}
	Given a mean-payoff game graph $(\Gamma,w)$ and a threshold mean-payoff B\"uchi objective
	$\mpayoff{\V,w,\Gamma}$, the winning set $\W{1}{\mpayoff{\V,w,\Gamma}}$ can be computed in $O(mnW)$ time.
\end{theorem}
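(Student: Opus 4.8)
The plan is to prove correctness and the running-time bound separately, reusing the iterative algorithm and the decremental data structure already described. Correctness amounts to an induction over the iterations establishing two facts: every vertex the algorithm removes is losing for player~1 (for the conjunction of the B\"uchi and the threshold mean-payoff objective), and every vertex that survives to the halting condition is winning. Both the B\"uchi objective and the threshold mean-payoff objective are prefix-independent, as is their conjunction, so removing a player-2 attractor $A_i$ on which player~2 wins leaves the winning set of the remaining subgame $\Gamma_{i+1}=\Gamma_i\restr(V^i\setminus A_i)$ unchanged on the surviving vertices; I would quote this standard attractor-removal fact rather than reprove it.

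For soundness of the removals I would split on which step deletes. In Step~2 we remove $A_i=\attr{2}{X_i}{\Gamma_i}$ with $X_i=V^i\setminus Y_i$ and $Y_i=\attr{1}{B_i}{\Gamma_i}$; by \Cref{obs:attractorclosedset} the set $V^i\setminus Y_i$ is closed for player~1 and contains no B\"uchi vertex, so \Cref{prop:closedsets}(3) gives that player~2 wins the coB\"uchi objective there, whence player~1 cannot see $B_i$ infinitely often and already loses the conjunction. In Step~3 we remove the player-2 attractor of $X_i=V^i\setminus U_i$ with $U_i=\W{1}{\mpayoff{\V,w,\Gamma_i}}$; outside the mean-payoff winning region player~2 forces the long-run average strictly below $\V$, so the conjunction again fails. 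In both cases player~2 wins from $X_i$, and since it can first attract into $X_i$ it also wins from all of $A_i$.

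For completeness, at termination we have $V^i=Y_i=U_i$, so from every remaining vertex player~1 can both attract to a B\"uchi vertex and guarantee mean-payoff at least $\V$ inside $\Gamma_i$. I would then invoke the classical combined strategy of \cite{CJH05}: play the mean-payoff strategy for longer and longer stretches, inserting between them finite attractor phases that reach $B_i$. Since each attractor phase has length at most $n$ and the stretches grow without bound, the detours occur with vanishing frequency and weight contribution bounded below, so the $\liminf$ of the averages stays at least $\V$ while $B_i$ is visited infinitely often; thus the conjunction holds and every surviving vertex is winning. This is the conceptually delicate part, but it is standard and I would only sketch it.

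The running time is where \Cref{mfcs:thm:decr} does the real work. Each non-terminal iteration deletes a nonempty set and hence at least one vertex, so there are $k\le n$ iterations, and the attractor computations in Steps~1--3 cost $O(m)$ each for a total of $O(nm)$. The crucial observation is that every deleted set is a player-2 attractor in the current subgame and that the mean-payoff winning sets $U_i$ are needed exactly on these post-deletion subgames; therefore the whole execution is precisely the online interface of \Cref{mfcs:thm:decr}, which returns all the $U_i$ across the entire sequence of deletions in total time $O(nmW)$. Adding the two contributions gives $O(nmW)$. The main obstacle is this last identification: one must verify that the deletions triggered by the B\"uchi layer are exactly the player-2-attractor deletions the decremental algorithm supports, so that its amortized bound is charged once over all iterations instead of paying $\Theta(nmW)$ per mean-payoff recomputation.
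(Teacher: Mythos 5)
Your proposal is correct and follows essentially the same route as the paper: the same iterative attractor-removal algorithm, the same correctness sketch via the combined strategy of~\cite{CJH05}, and the same running-time argument that charges all mean-payoff winning-set computations across the deletion sequence to the decremental algorithm of Theorem~\ref{mfcs:thm:decr}. The extra detail you supply (invoking Proposition~\ref{prop:closedsets} and Observation~\ref{obs:attractorclosedset} for the Step-2 removals, and verifying the interface match with the decremental data structure) merely elaborates what the paper leaves as a sketch.
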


\subsection{Threshold Mean-Payoff coBüchi Games}
In this section, we will present an $O(nmW)$-time algorithm for threshold
mean-payoff coB\"uchi games. 
We start with the description of the basic algorithm for threshold mean-payoff 
coB\"uchi games.

\para{Algorithm for threshold mean-payoff coB\"uchi games.}
The basic algorithm is an iterative algorithm that deletes player-1 attractors.
The algorithm proceeds in iteration. In iteration $i$, let $D_i$ be the set of 
vertices already deleted. Consider the subgame $\Gamma_i=\Gamma \restr (V\setminus D_i)$.
Then the following steps are executed:
\begin{enumerate}
\item Let $V^i= V \setminus D_i$ and $C_i$ denote the set of non coB\"uchi vertices (the set of
	vertices player~1 must avoid to visit infinitely often, i.e., priority-1
	vertices) in $\Gamma_i$.
Compute $Y_i=\attr{2}{C_i}{\Gamma_i}$ the player-2 attractor to $C_i$ in $\Gamma_i$.
\begin{sloppypar}
\item Let $X_i=V^{i} \setminus Y_i$. Consider the subgame $\widehat{\Gamma}_i=\Gamma_i \restr X_i$
	and compute the winning region of the threshhold mean-payoff objective
	$Z_i=\W{1}{\mpayoff{\V,w,\widehat{\Gamma}}}$ for player~1.
\end{sloppypar}

\item If $Z_i$ is non-empty, remove $\attr{1}{Z_i}{\Gamma_i}$ from $\Gamma_i$, and proceed to the next
iteration. Else if $Z_i$ is empty, then all remaining vertices are winning for player~2.
\end{enumerate}

\para{Correctness argument.}
Consider the subgame $\Gamma_i$. In each subgame $\widehat{\Gamma}_i$ of $\Gamma_i$ 
all edges of player~2 are intact since it is obtained after removing a player-2 attractor $Y_i$.
Moreover, there is no priority-1 vertex in $\widehat{\Gamma}_i$. 
Hence, ensuring the threshold mean-payoff objective in $\widehat{\Gamma}_i$ for player~1 implies 
satisfying the threshold mean-payoff coB\"uchi objective. The set $Z_i$ and its player-1 
attractor belongs to the winning set of player~1 and can be removed.
Thus, all vertices removed are part of the winning region for player~1.
Upon termination, in $\widehat{\Gamma}_i$, player~1 cannot satisfy the threshold mean-payoff 
condition from any vertex.
Consider a player-2 strategy, where in $\widehat{\Gamma}_i$ player~2 falsifies the threshold 
mean-payoff condition, and in $Y_i$ plays an attractor strategy to reach $C_i$ (the non coB\"uchi
vertices, i.e., priority-1 vertices).
Given such a strategy, either 
(a)~$Y_i$ is visited infinitely often, and then the coB\"uchi objective is violated; or
(b)~from some point on the play stays in $\widehat{\Gamma}_i$ forever, and then the threshold
mean-payoff objective is violated.
This shows the correctness of the algorithm. 

However, the running time of this algorithm is not $O(nmW)$. 
We now present the key ideas to obtain an $O(nmW)$-time algorithm.

\para{First intuition.}
Our first intuition is as follows. 
In step~2 of the above algorithm, instead of obtaining the whole winning region
$\W{1}{\mpayoff{\V,w,\widehat{\Gamma}}}$ it suffices to identify a subset $X_i \subseteq Z_i$ of the winning region
(if it is non-empty) and remove its player-1 attractor.
We call this the modified algorithm for threshold mean-payoff coB\"uchi games.
We first describe why we cannot use the decremental approach in the following remark.

\begin{remark}
Consider the subgames for which the threshold mean-payoff objective must be solved.
Consider Figure~\ref{mfcs:fig:nodecr}. 
The first player-2 attractor removal induces subgame $\widehat{\Gamma}_1$.
After identifying a winning region $X_1$ of $\widehat{\Gamma}_1$ we remove its player-1 attractor $A_1$.
After removal of $A_1$, we consider the second player-2 attractor to the
priority-1 vertices.
The removal of this attractor induces $\widehat{\Gamma}_2$.
We observe comparing $\widehat{\Gamma}_1$ and $\widehat{\Gamma}_2$ that certain 
vertices are removed, whereas other vertices are added.
Thus the subgames to be solved for threshold mean-payoff objectives do not satisfy
the condition of decremental or incremental algorithms (see Remark~\ref{mfcs:rem:nodeladd}).

\begin{figure}[ht]
\centering
\includegraphics{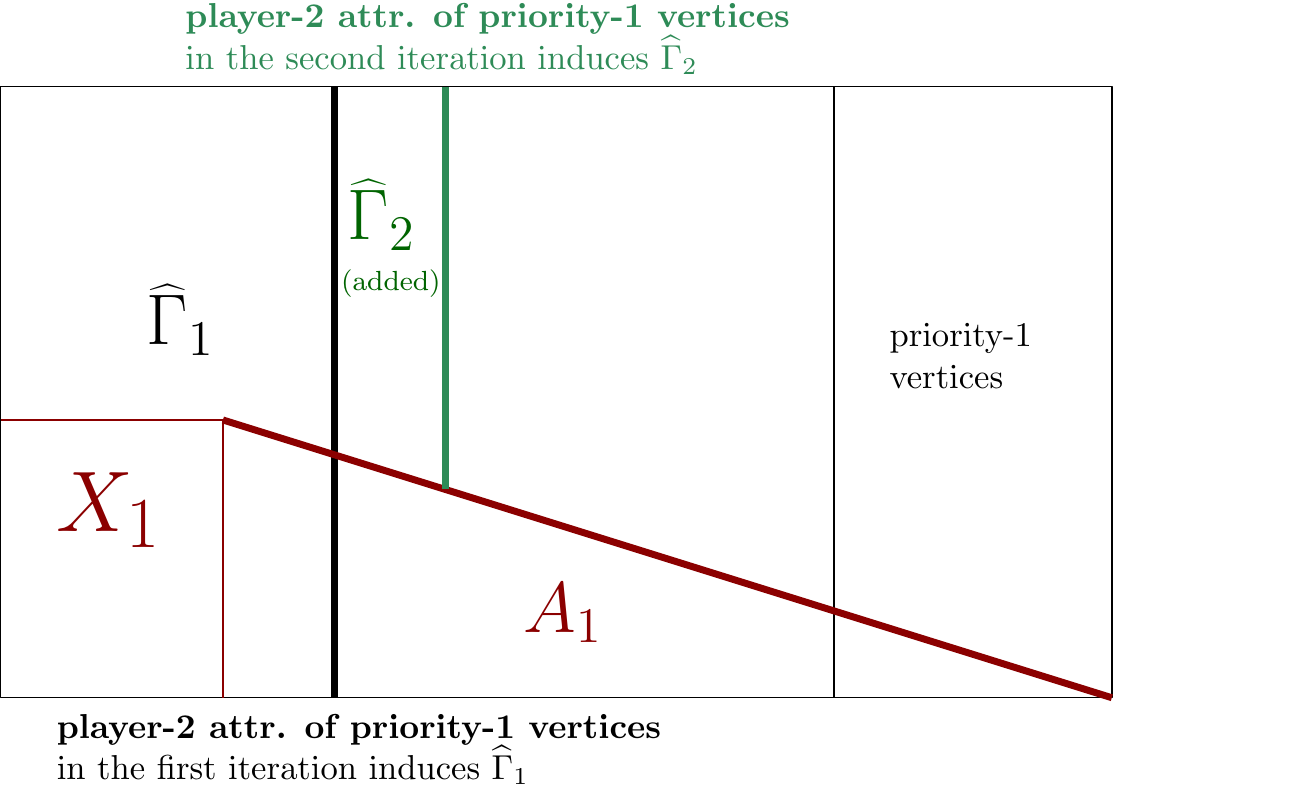}
%
%
%
\caption{Pictorial illustration of threshold mean-payoff coBüchi games. The
    subgames $\widehat{\Gamma}_1$ and $\widehat{\Gamma}_2$ are shown. We observe that $\widehat{\Gamma}_2$ is
        obtained both by addition and deletion of game parts to $\widehat{\Gamma}_1$.}\label{mfcs:fig:nodecr}
\end{figure}
\end{remark}

\para{Second intuition.} 
While we cannot use the decremental algorithm, we can solve the problem in 
$O(n m W)$ time, if we have a modified static algorithm for threshold mean-payoff 
games, with the following property:
(a)~it identifies a subset of the winning region $X$ for player~1, if the winning region
is non-empty, in time $O(|X| m W)$;
(b)~if the winning region is empty, it returns the empty set, and then it takes
time $O(n m W)$.
With such an algorithm we analyze the running time of the above modified algorithm for threshold 
mean-payoff coB\"uchi games.
The total time required for all attractor computations is again $O(n m)$. 
Otherwise, we use the modified static algorithm to remove vertices of player-1 and to remove a set of size 
$|X|$ we take $O(|X|  m W)$ time, and thus we can charge each vertex
$O(m W)$ time. 
Hence the total time required is $O(n  m W)$.
In the rest of the section, we present this modified static algorithm for threshold mean-payoff games.

\para{Problem Statement.}
\begin{framed}
\begin{tabular}{ l l}
	\textbf{Input:}    & Mean-payoff game $(\Gamma,w)$ with threshold $\V$.\\
	\textbf{Question:} & If $\W{1}{\mpayoff{\V}}$ is non-empty, return a nonempty set\\
	&$X \subseteq \W{1}{\mpayoff{\V}}$ in time $O(|X|mW)$,\\
&else return $\emptyset$ in time $O(n  m W)$.
\end{tabular}
\end{framed}

\para{Modified static algorithm for threshold mean-payoff games.}
The basic algorithm for threshold mean-payoff games computes a progress measure, with a 
defined top element value $\top$. 
If the progress measure has the value $\top$ for a vertex, then the vertex is 
declared as winning for player~2. 
With value $\top=n\cdot W$, the correct winning region for both players can be 
identified. 
Moreover, for a given value $\alpha$ for $\top$, the progress measure algorithm 
requires $O(\alpha \cdot m)$ time.
Our modified static algorithm is based on the following idea:
\begin{enumerate}
\item Consider a value $\alpha \leq n \cdot W$ for the top element. 
With this reduced value for the top element, if a winning region is identified
for player~1, then it is a subset of the whole winning region for player~1.
\item We will iteratively double the value for the top element. 
\end{enumerate}
Given the above ideas our algorithm is an iterative algorithm defined as follows:
Initialize top value $\top_0=W$. The $i$-th iteration is as follows:
\begin{enumerate}
\item Run the progress measure algorithm with top value $\top_i$.

\item If a winning region $X$ for player~1 is identified, return $X$.

\item Else $\top_{i+1}=2 \top_i$ (i.e., the top value is doubled).

\item If $\top_{i+1} \geq 2  n  W$, stop the algorithm and return
$\emptyset$, else proceed to the next iteration.
\end{enumerate}

\para{Correctness and running time analysis.}
The key steps of the correctness argument and the running time analysis are as follows:

\begin{enumerate}

\item The above algorithm is correct, since if it returns a set $X$ then it is a subset of 
the winning set for player~1. 

\item If the algorithm returns a winning set with top value $\alpha$, then the total 
running time till this iteration is $m \cdot (\alpha + \alpha/2 + \alpha/4 + \cdots)$, because the 
progress with top value $\alpha$ requires time $O(\alpha m)$. 
Hence the total running time if a set $X$ is returned with top value $\alpha$ is
$O(\alpha \cdot m)$.

\item Let $Z$ be a set of vertices such that no player-2 vertex in $Z$ has an edge out of $Z$,
and the whole subgame $\Gamma \restr Z$ is winning for player~1. Then a winning strategy in 
$Z$ ensures that a progress measure with top value $|Z| W$ would identify the set 
$Z$ as a winning set.

\item From above it follows that if the winning set $X$ is identified at top value $\alpha$, 
but no winning set was identified with top value $\alpha/2$, then the size of the winning set 
is at least $\alpha/(2W)$.

\item It follows from above that if a set $X$ is identified, then the total running time 
to obtain set $X$ is $O(|X|  m  W)$.

\item Moreover, the total running time of the algorithm when no set $X$ is identified is in 
$O(nmW)$, and in this case, the winning region is empty.

\end{enumerate}
Thus we solved the modified static algorithm for threshold mean-payoff games as desired
and obtain the following result.

\begin{theorem}\label{mfcs:thm:winningset}
	Let
	$Z=\W{1}{\mpayoff{(\Gamma,w,\V)}}$. 
If $Z\neq \emptyset$, then a non-empty set $X \subseteq Z$ can be computed in
time $O(|X|  m W)$, 
else an empty set is returned if $Z=\emptyset$, which takes time $O(n m W)$.
\end{theorem}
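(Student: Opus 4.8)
The plan is to assemble the three ingredients sketched in the six numbered points — soundness of a capped progress-measure run, the running time of a single run, and a completeness lemma tying the cap to the size of the region identified — and then to combine them through the doubling schedule. First I would establish \textbf{soundness}: running the progress-measure algorithm of Brim et al.\ with an arbitrary top value $\alpha \le nW$ terminates with a consistent function $f$, and the set $X = \{v \mid f(v) \neq \top\}$ of vertices that escaped the cap is always contained in the true winning set $Z = \W{1}{\mpayoff{\V,w,\Gamma}}$. The point is that consistency with a \emph{finite} value certifies a player-1 strategy keeping an energy level non-negative, exactly as in the uncapped case; lowering the top value can only declare more vertices losing, never fewer, so no spurious vertex enters $X$. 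I would also record the monotonicity that the escaped set is non-decreasing as $\alpha$ grows, which is used below.

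Second, the \textbf{running time of a single run} at top value $\alpha$ is $O(\alpha m)$: each vertex is lifted at most $\alpha$ times before reaching $\top$, and each lift is charged to incident edges. Since the top value doubles, summing the schedule $\alpha + \alpha/2 + \alpha/4 + \cdots$ shows the cumulative cost through the iteration that uses top value $\alpha$ is still $O(\alpha m)$.

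The \textbf{main obstacle} is the completeness lemma: if $Z'$ is closed for player~2 (no player-2 vertex of $Z'$ has an edge leaving $Z'$) and all of $\Gamma \restr Z'$ is winning for player~1, then the progress measure computed with top value $|Z'|\,W$ assigns a finite value to every vertex of $Z'$. Here I would invoke the standard energy bound: because player~1 has a strategy that stays inside $Z'$ and enforces non-negative energy, the least progress measure on $Z'$ is bounded by the weight of a simple path inside $Z'$, namely at most $(|Z'|-1)W \le |Z'|\,W$; hence capping at $|Z'|\,W$ truncates no value on $Z'$. I would apply this in two ways. (i)~The set $X$ returned by a completed run is itself closed for player~2 — a consistent player-2 vertex with a finite value cannot have a successor at $\top$, since $\top \ominus w = \top$ would force its own value to $\top$ — and $\Gamma \restr X$ is winning for player~1; so by the lemma a run at top value $|X|\,W$ already identifies a non-empty set. (ii)~The full winning region $Z$ is closed for player~2, so a run at top value $|Z|\,W \le nW$ identifies all of $Z$.

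Finally I would \textbf{combine}. Suppose the doubling loop first returns a non-empty $X$ at top value $\alpha$, having returned $\emptyset$ at $\alpha/2$ (the base case $\alpha = W$ is immediate, as $|X| \ge 1$ gives $O(\alpha m) = O(|X|\,m\,W)$). If $|X|\,W \le \alpha/2$, then by part~(i) and monotonicity the run at $\alpha/2$ would have found a non-empty set, a contradiction; hence $\alpha < 2|X|\,W$, and the total time $O(\alpha m)$ is $O(|X|\,m\,W)$, which is the first claim. If instead $Z = \emptyset$, every run returns $\emptyset$, and the loop halts once the top value would exceed $2nW$, so the largest cap used is $O(nW)$ and the geometric sum gives total time $O(nmW)$; by soundness the output $\emptyset$ is correct. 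Part~(ii) guarantees that in the non-empty case a set is found before the cap passes $2nW$, so the loop never falls through to the empty branch when $Z \neq \emptyset$. The single place demanding genuine care is the energy bound behind the completeness lemma, since the whole argument hinges on $|Z'|\,W$ being large enough never to truncate the progress measure inside a self-contained winning region.
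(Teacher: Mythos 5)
Your proposal is correct and follows essentially the same route as the paper's own argument: doubling the top value, soundness of capped progress-measure runs, the completeness lemma for player-2-closed winning regions (applied both to the identified set $X$ and to the full winning set $Z$), and the geometric-sum charging argument that bounds the total cost by $O(|X|mW)$ in the non-empty case and $O(nmW)$ otherwise. In fact you supply details the paper leaves implicit, such as why the identified set $X$ is player-2 closed (via $\top \ominus w = \top$) and the monotonicity of the escaped set in the cap, which are exactly the steps needed to make the paper's six-point sketch rigorous.
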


Using the above algorithm to compute the winning set for player~1 in the
subgames, we obtain an algorithm for threshold mean-payoff coBüchi
games in time $O(n m W)$. 

\begin{theorem}\label{mfcs:thm:cobuchi}
	Given a mean-payoff cob\"uchi game the winning set of a threshold mean-payoff coB\"uchi objective,
can be computed in $O(nmW)$ time.
\end{theorem}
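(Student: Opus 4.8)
The plan is to prove \Cref{mfcs:thm:cobuchi} by instantiating the \emph{modified} iterative algorithm for threshold mean-payoff coBüchi games with the subset-identifying routine guaranteed by \Cref{mfcs:thm:winningset}. Concretely, in iteration~$i$ I keep the first step unchanged — compute the player-2 attractor $Y_i=\attr{2}{C_i}{\Gamma_i}$ to the priority-1 vertices $C_i$ and form the subgame $\widehat{\Gamma}_i=\Gamma_i \restr (V^i \setminus Y_i)$ — but in the second step I replace the computation of the full winning region $\W{1}{\mpayoff{\V,w,\widehat{\Gamma}_i}}$ by a single call to the modified static algorithm of \Cref{mfcs:thm:winningset} on $\widehat{\Gamma}_i$. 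If that call returns a non-empty set $X_i$, I remove $\attr{1}{X_i}{\Gamma_i}$ and proceed; if it returns $\emptyset$, I stop and declare the remaining vertices winning for player~2.

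For correctness I would argue soundness and completeness separately, both reducing to the correctness argument already given for the basic algorithm. Soundness: since $\widehat{\Gamma}_i$ contains no priority-1 vertex and is obtained by removing a player-2 attractor (so all of player~2's edges are intact in $\widehat{\Gamma}_i$), any vertex from which player~1 wins the threshold mean-payoff objective in $\widehat{\Gamma}_i$ is winning for the threshold mean-payoff coBüchi objective; as $X_i \subseteq \W{1}{\mpayoff{\V,w,\widehat{\Gamma}_i}}$ by \Cref{mfcs:thm:winningset}, the removed set $\attr{1}{X_i}{\Gamma_i}$ lies in player~1's winning region. Completeness: the routine returns $\emptyset$ exactly when $\W{1}{\mpayoff{\V,w,\widehat{\Gamma}_i}}=\emptyset$, which is precisely the stopping condition of the basic algorithm; at that point the player-2 strategy that falsifies the mean-payoff objective inside $\widehat{\Gamma}_i$ and attracts to $C_i$ inside $Y_i$ (while confining the play to the current subgame, which is closed for player~2 because only player-1 attractors were deleted) witnesses that every remaining vertex is winning for player~2. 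Hence the removed vertices are exactly player~1's winning region; the only new point relative to the basic algorithm is that each iteration peels off a subset rather than the whole winning region, which is harmless since every removal is sound and the termination test is unchanged.

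The running-time analysis is where the argument earns the bound. There are at most $O(n)$ iterations, since every non-terminating iteration removes the non-empty set $\attr{1}{X_i}{\Gamma_i}$, and each attractor computation ($Y_i$ and $\attr{1}{X_i}{\Gamma_i}$) costs $O(m)$, so all attractor work totals $O(nm)$. For the calls to the modified static algorithm, the crucial point is that the returned sets $X_1, X_2, \dots$ are pairwise disjoint, because each $X_i$ is deleted from the graph before the next iteration; therefore $\sum_i |X_i| \leq n$. By \Cref{mfcs:thm:winningset} a successful call costs $O(|X_i| m W)$, so summing over all successful calls yields $O(nmW)$, and the single final call returning $\emptyset$ adds another $O(nmW)$. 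Altogether the running time is $O(nm) + O(nmW) = O(nmW)$.

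The main obstacle is conceptual rather than computational, and it has essentially been cleared already: a naive use of the full winning-region computation spends $O(nmW)$ per iteration for up to $n$ iterations, giving $O(n^2 mW)$, and — as explained in the remark around \Cref{mfcs:fig:nodecr} — the decremental algorithm of \Cref{mfcs:thm:decr} cannot be reused here because the subgames $\widehat{\Gamma}_i$ change by both additions and deletions. The subset-identifying routine of \Cref{mfcs:thm:winningset} circumvents exactly this difficulty by letting me charge $O(mW)$ to each vertex at the single moment it is removed, so the only remaining work is the disjointness-based amortization above.
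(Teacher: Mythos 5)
Your proposal is correct and is essentially the paper's own proof: the paper's ``second intuition'' is exactly your construction — plug the subset-identifying routine of \Cref{mfcs:thm:winningset} into the modified iterative algorithm, charge $O(mW)$ to each removed vertex, and add $O(nm)$ for attractors plus one terminal $O(nmW)$ call. Your write-up even makes explicit the disjointness of the sets $X_i$ and the single failing call, details the paper leaves implicit, but the decomposition, correctness argument, and amortization are the same.
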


\subsection{Threshold Mean-Payoff Parity Games}
The algorithm for threshold mean-payoff parity games is the standard recursive
algorithm~\cite{CJH05}
(classical parity game-style algorithm) that generalizes the B\"uchi and coB\"uchi cases
(which are the base cases).
The running time recurrence is as follows: 
$T(n,d,m,w) = n (T(n,d-1,m) + O(m)) + O(nmW)$.
Using our approach we obtain the following result.

\begin{theorem}~\label{mfcs:thm:tmpp}
	Given a mean-payoff parity game the winning set of a threshold mean-payoff parity objective can be computed in 
	$O(n^{d-1}mW )$ time.
\end{theorem}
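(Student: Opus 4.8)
The plan is to follow the classical recursive parity-game algorithm of~\cite{CJH05} (a McNaughton/Zielonka-style procedure), using the mean-payoff B\"uchi and mean-payoff coB\"uchi solvers of \Cref{mfcs:thm:buchi} and \Cref{mfcs:thm:cobuchi} as the two base cases for $d \le 2$. For $d > 2$ priorities I would peel off the minimum priority: if it is even it behaves like a generalized B\"uchi layer, and if it is odd like a generalized coB\"uchi layer. In each iteration I compute the appropriate player attractor to the minimum-priority vertices, remove it, and recurse on the induced subgame, which has one fewer priority. By \Cref{obs:attractorclosedset} the complement of an attractor is a closed set, so each subgame is well-defined, and the attractor/closed-set machinery of \Cref{prop:closedsets} is what glues the winning regions of the individual layers back together.

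For correctness I would lean on the established argument of~\cite{CJH05}: the recursive decomposition into attractors and subgames is exactly the standard parity construction, and the only new ingredient is that the innermost mean-payoff objective is resolved correctly and efficiently, which is guaranteed by the base-case theorems. Intuitively, a player-1 winning strategy alternates between making progress on the mean-payoff objective for longer and longer stretches and revisiting the lowest even priority, while a player-2 winning strategy either forces the minimum odd priority infinitely often or falsifies the mean-payoff condition; this is precisely the reasoning sketched for the B\"uchi and coB\"uchi correctness arguments above, lifted through the priority hierarchy.

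The heart of the theorem is the running-time bound, which follows from solving the stated recurrence $T(n,d,m,w) = n\,(T(n,d-1,m) + O(m)) + O(nmW)$. The outer loop at each priority level runs $O(n)$ times because every iteration removes at least one vertex through an attractor; each iteration costs $O(m)$ for the attractor plus one recursive call on $d-1$ priorities, and the additive $O(nmW)$ accounts for the mean-payoff work performed at that level. The base case is $T(n,2,m,w) = O(nmW)$ by \Cref{mfcs:thm:buchi} and \Cref{mfcs:thm:cobuchi}. Unrolling the recurrence introduces one factor of $n$ per priority level above the base, so $T(n,d,m,w) = O(n^{d-2}) \cdot O(nmW) = O(n^{d-1}mW)$, with the $O(nm)$ attractor overhead at each level and the lower-order $O(nmW)$ terms absorbed into this bound by the geometric sum.

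The step I expect to be the main obstacle is ensuring that the $W$ factor enters the running time only once per recursion level rather than being compounded. The naive concern is that re-solving a mean-payoff objective inside every one of the $O(n)$ iterations at every level would introduce an extra factor of $W$ (and of $n$) at each level; the decremental and modified-static mean-payoff solvers of \Cref{mfcs:thm:decr} and \Cref{mfcs:thm:winningset} are exactly what keep the per-level mean-payoff cost amortized to $O(nmW)$, so that the final bound carries a single $W$. Verifying that this amortization survives the recursive composition --- that is, that the subgames handed down to the base cases still admit the $O(nmW)$ amortized analysis --- is the delicate part of the argument.
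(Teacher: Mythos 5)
Your proposal matches the paper's approach exactly: the paper also invokes the standard recursive algorithm of~\cite{CJH05} with the mean-payoff B\"uchi and coB\"uchi results (\Cref{mfcs:thm:buchi}, \Cref{mfcs:thm:cobuchi}) as base cases, states the same recurrence $T(n,d,m,w) = n\,(T(n,d-1,m) + O(m)) + O(nmW)$, and obtains $O(n^{d-1}mW)$ by unrolling it. Your write-up is in fact more detailed than the paper's (which gives only the recurrence), and your identification of the amortized decremental/modified-static solvers as the reason the $W$ factor does not compound is precisely the point the paper's base-case theorems are designed to guarantee.
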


\section{Optimal Values for Mean-payoff Parity Games}
In this section, we present an algorithm that computes the value function for 
mean-payoff parity games. For mean-payoff games a dichotomic search approach was 
presented in~\cite{brim2011}.
We show that such an approach can be generalized to mean-payoff parity games.

\smallskip\noindent\emph{Range of Values for the Dichotomic Search.}
To describe the algorithm we recall a lemma about the possible range of optimal 
values of a mean-payoff parity game.
The lemma is an easy consequence of the characterization of~\cite{CJH05} 
that the mean-payoff parity value coincides with the mean-payoff value,
and the possible range of value for mean-payoff games.

\begin{lemma}[\cite{CJH05,Ehrenfeucht1979,Lifshits2007}]\label{mfcs:lem:optval}
	Let $(\Gamma,p,w)$ be a mean-payoff parity game. For each vertex $v$, the optimal value  
	$\val(\MPP{\cdot,p,w,\Gamma})(v)$ is a rational number $\frac{y}{z}$ such that $1 \leq z \leq n$ 
and $|y| \leq z \cdot W$.
\end{lemma}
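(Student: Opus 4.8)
The plan is to leverage the structural coincidence result of~\cite{CJH05} to reduce the claim about mean-payoff parity values to the corresponding classical claim about plain mean-payoff values, and then to read off the bound from the cycle structure of optimal memoryless strategies.

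First I would restrict attention to the vertices where the value is finite: on any vertex from which player~1 cannot ensure the parity objective the value $\val(\MPP{\cdot,p,w,\Gamma})(v)$ equals $-\infty$, and the claimed rational form concerns only the finite values. For the finite case I would invoke the characterization of~\cite{CJH05}, which states that on this region the optimal mean-payoff parity value at $v$ equals the optimal value of the associated mean-payoff game with weight function $w$. Intuitively this holds because a player-1 strategy can interleave longer and longer stretches of optimal mean-payoff play with occasional visits to the lowest even priority (exactly the ``play for mean-payoff, and in between visit the good vertices'' idea used in the correctness sketch for the B\"uchi case), so that the parity objective is met without changing the long-run average. Hence it suffices to bound the optimal value of a mean-payoff game.

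Next I would appeal to the classical memoryless determinacy of mean-payoff games~\cite{Ehrenfeucht1979,Lifshits2007}: both players possess memoryless optimal strategies. Fixing such strategies for both players leaves exactly one outgoing edge at every vertex, so every play from $v$ is ultimately periodic and eventually settles on a simple cycle $\mathcal{C}$ of the game graph, and the optimal value at $v$ is the average edge weight along $\mathcal{C}$. Since $\mathcal{C}$ is simple it uses at most $n$ vertices and hence has length $z$ with $1 \leq z \leq n$; writing $y$ for the sum of the $z$ edge weights on $\mathcal{C}$ and using $|w(e)| \leq W$ for every edge, we obtain $|y| \leq z \cdot W$ and $\val(\MPP{\cdot,p,w,\Gamma})(v) = y/z$, which is exactly the claimed form.

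The main obstacle I expect is not the cycle bound (which is routine once memoryless optimal strategies are in hand) but the precise application of the value-coincidence result of~\cite{CJH05}: in particular, making sure it is applied on the right subgame (the region winning for the parity objective) and that the memoryless optimal strategy witnessing the mean-payoff value can be combined with a parity-ensuring strategy without perturbing the limit-average. Since both ingredients are quoted results, I would keep the argument short and cite them directly, as the lemma statement already does.
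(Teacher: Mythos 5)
Your proposal is correct and follows essentially the same route as the paper, which justifies the lemma exactly by combining the value-coincidence characterization of~\cite{CJH05} with the classical cycle-mean bound for mean-payoff games from~\cite{Ehrenfeucht1979,Lifshits2007}; your treatment of the $-\infty$ region and the memoryless-strategy cycle argument merely fills in the details the paper leaves to the citations.
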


By Lemma~\ref{mfcs:lem:optval} the value of each vertex $v \in V$, is contained in the following set of rationals 
\begin{align*}
	S^{(\Gamma,p,w)} = \bigg\{ \frac{y}{z}\  \bigg | \ y,z \in \Z, 1 \leq z \leq n \land -z \cdot W \leq y \leq z \cdot W \bigg\}.
\end{align*}

\begin{definition}
	Let $(\Gamma,p,w)$ be a mean-payoff parity game.
We denote the set of vertices $v \in V$ such that 
$\val(\MPP{\cdot,p,w,\Gamma})(v) \circ \mu$ where $\circ \in \{<,\leq,=,\geq,>\}$
with $V^{\circ \mu}_\Gamma$.
\end{definition}

\smallskip\noindent\emph{Key Observation.}
\begin{sloppypar}
Let $((\Gamma = (V,E,\langle V_1, V_2 \rangle)),p,w)$ be a mean-payoff parity game.
Let $\mu \in [-W,W]$. The sets $V^{>\mu}_\Gamma,V^{=\mu}_\Gamma$ and $V^{<\mu}_\Gamma$ 
can be computed using any algorithm for threshold mean-payoff parity games 
twice (for example using Theorem~\ref{mfcs:thm:tmpp}). 
To calculate $V^{\geq \mu}_\Gamma$ and $V^{< \mu}_\Gamma$ use the algorithm on $(\Gamma,p,w)$ with the
mean-payoff parity objective $\phi = \parity{p,\Gamma} \cap \mpayoff{\mu,w,\Gamma}$.
Consider $((\Gamma' = (V,E,\langle V_2, V_1 \rangle)),p,w')$, where $w'(e) = -w(e)$ for all edges $e \in E$ 
and player-1 and player-2 vertices are swapped.
To calculate $V^{\leq \mu}_\Gamma$ and $V^{> \mu}_\Gamma$ use the algorithm on $(\Gamma',p,w)$ with 
mean-payoff parity objective $\phi = \parity{p,\Gamma'} \cap \mpayoff{-\mu,w',\Gamma'}$. 
Given the sets $V^{\leq \mu}_\Gamma$, $V^{> \mu}_\Gamma, V^{\geq \mu}_\Gamma$ and $V^{< \mu}_\Gamma$
we can extract the sets $V^{>\mu}_\Gamma,V^{=\mu}_\Gamma$ and $V^{<\mu}_\Gamma$.\\
All values $\mu'$ in  $S^{(\Gamma,p,w)}$ are of the form $\frac{y}{z}$. For those values
we can determine whether $v \in V_{\Gamma}^{\geq \mu'}$ by applying 
the algorithm for threshold mean-payoff parity games on $((\Gamma' = (V,E,\langle V_2, V_1 \rangle)),p,w')$ where 
$w'(e) = w(e) \cdot z$ for all $e \in E$ with the mean-payoff parity objectives $\phi =
\parity{p,\Gamma} \cap
\mpayoff{y,w',\Gamma}$. Note that in the worst case, the weight function $w'$ of
$\Gamma'$ is in $O(nW)$.
\end{sloppypar}

\smallskip\noindent\emph{Dichotomic Search.}
Let $(\Gamma,p,w)$ be a mean-payoff parity game. The dichotomic search algorithm is recursive algorithm
initialized with $\Gamma_0 = \Gamma$ and $S_0 = S^{(\Gamma,p,w)}$. In recursive call $i$ the following steps are
executed:
\begin{enumerate}
\item Let $r_i =  \min(S_i)$ and $s_i = \max(S_i)$.
\item Determine $a_1$, the largest element in $S_i$ less than or equal to $\frac{r_i + s_i}{2}$ and  
$a_2$, the smallest element in $S_i$ greater than or equal to $\frac{r_i + s_i}{2}$.
\item Determine the partitions $V_{\Gamma_i}^{<a_1}$, $V_{\Gamma_i}^{=a_1}$, 
$V_{\Gamma_i}^{=a_2}$, $V_{\Gamma_i}^{>a_2}$ using the key observation.
\item For all $v \in V_{\Gamma_i}^{=a_1}$ set the value to $a_1$,
for all $v \in V_{\Gamma_i}^{=a_2}$ set the value to $a_2$ and set the value to  $-\infty$
for all vertices $v$ which are not in any set calculated in step 3.
\item Recurse upon $\Gamma_i \restr V_{\Gamma_i}^{<a_1}$ and $\Gamma_i \restr V_{\Gamma_i}^{>a_2}$.
\end{enumerate}

\smallskip\noindent\emph{Correctness.}
Let $(\Gamma,p,w)$ be a mean-payoff parity game. We prove that the dichotomic search algorithm correctly calculates
$\val(\MPP{\cdot,p,w,\Gamma})(v)$ for all $v \in V$.
The algorithm is initialized with $\Gamma$ and $S^{(\Gamma,p,w)}$. 
By Lemma~\ref{mfcs:lem:optval} the values of the vertices 
$v \in V$ are in the set $S^{(\Gamma,p,w)}$. Because we perform a binary search over the set $S^{(\Gamma,p,w)}$ 
we can guarantee the
termination of the algorithm. Notice that we need to show that the values calculated in the 
subgames constructed in step~4 are identical to the values in the original game. Then correctness follows 
immediately by our key observation and because we perform a binary search over the set $S^{(\Gamma,p,w)}$.

\begin{lemma}\label{mfcs:lem:subgameval}
	Given a mean-payoff parity game $(\Gamma,p,w)$ and $\mu \in \Q$, 
let $\Gamma' = \Gamma \restr V^{>\mu}_\Gamma$ and 
$\Gamma'' = \Gamma \restr V^{<\mu}_\Gamma$. 
For all $v \in V^{> \mu}_\Gamma$, we have $\val(\MPP{\cdot, p,w,\Gamma'})(v) = \val(\MPP{\cdot,p,w,\Gamma})(v)$ and 
for all $v \in V^{< \mu}_\Gamma$, we have $\val(\MPP{\cdot,p,w,\Gamma''})(v) = \val(\MPP{\cdot,p,w,\Gamma})(v)$.
\end{lemma}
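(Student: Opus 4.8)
The plan is to reduce the statement to two structural facts about the value function of a mean-payoff parity game and then transfer strategies between $\Gamma$ and its subgames. Throughout I write $\val_\Gamma(\cdot)$ for $\val(\MPP{\cdot,p,w,\Gamma})(\cdot)$. First I would record that the payoff $\MPP{\cdot,p,w,\Gamma}$ is prefix-independent: both $\Inf{\omega}$ and the $\liminf$-average are unchanged by altering a finite prefix. From this alone, partitioning player-1 strategies by their first move and using the bijection between continuations from $v$ after an edge $(v,u)$ and strategies from $u$, one obtains the optimality equations $\val_\Gamma(v) = \max_{u \in \Out{v}}\val_\Gamma(u)$ for $v \in V_1$ and $\val_\Gamma(v) = \min_{u \in \Out{v}}\val_\Gamma(u)$ for $v \in V_2$. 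These give the closedness I need: if $v \in V^{>\mu}_\Gamma \cap V_1$ then some successor lies in $V^{>\mu}_\Gamma$, while if $v \in V^{>\mu}_\Gamma \cap V_2$ then \emph{all} successors lie in $V^{>\mu}_\Gamma$; hence $V^{>\mu}_\Gamma$ is a closed set for player~2 and $\Gamma'=\Gamma\restr V^{>\mu}_\Gamma$ is a well-defined subgame (dually, $V^{<\mu}_\Gamma$ is closed for player~1). Note also that $\MPP{\omega,p,w,\Gamma'}=\MPP{\omega,p,w,\Gamma}$ for every play $\omega$ staying inside $V^{>\mu}_\Gamma$.

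For the easy inequality $\val_{\Gamma'}(v)\le\val_\Gamma(v)$ I would argue that every player-1 strategy in $\Gamma'$ is a player-1 strategy in $\Gamma$ that keeps the play inside $V^{>\mu}_\Gamma$: at player-1 vertices it stays inside by construction, and at player-2 vertices \emph{every} edge already leads back into $V^{>\mu}_\Gamma$ by closedness, so player~2 cannot escape and gains nothing from its additional edges in $\Gamma$. Hence for a fixed $\Gamma'$-strategy the infimum over player-2 responses is the same in both games, and taking the supremum over the (smaller) family of $\Gamma'$-strategies yields $\val_{\Gamma'}(v)\le\val_\Gamma(v)$.

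The main obstacle is the reverse inequality $\val_{\Gamma'}(v)\ge\val_\Gamma(v)$, since a priori an optimal player-1 strategy in $\Gamma$ might want to leave $V^{>\mu}_\Gamma$. The key claim I would prove is that it never needs to: fix $\eps$ with $0<\eps<\val_\Gamma(v)-\mu$ and let $\sigma_\eps$ be an $\eps$-optimal player-1 strategy from $v$, so $\inf_\pi\MPP{\omega(v,\sigma_\eps,\pi),p,w,\Gamma}\ge\val_\Gamma(v)-\eps$. Using prefix-independence I would show that after \emph{any} history consistent with $\sigma_\eps$ the remaining strategy still guarantees $\ge\val_\Gamma(v)-\eps$ (let player~2 force that history and then play adversarially), so every vertex $u$ reachable under $\sigma_\eps$ satisfies $\val_\Gamma(u)\ge\val_\Gamma(v)-\eps>\mu$, i.e.\ $u\in V^{>\mu}_\Gamma$. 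Consequently all $\sigma_\eps$-consistent plays from $v$ stay inside $V^{>\mu}_\Gamma$, so $\sigma_\eps$ restricts to a strategy of $\Gamma'$ realising the same guarantee; since the plays and payoffs coincide in $\Gamma$ and $\Gamma'$, this gives $\val_{\Gamma'}(v)\ge\val_\Gamma(v)-\eps$, and letting $\eps\to0$ closes this direction. Combining the two inequalities proves the statement for $\Gamma'$ and $V^{>\mu}_\Gamma$.

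Finally, the statement for $\Gamma''$ and $V^{<\mu}_\Gamma$ I would obtain by the symmetric argument with the roles of the players exchanged, invoking determinacy of mean-payoff parity games (the value exists and $\sup_\sigma\inf_\pi=\inf_\pi\sup_\sigma$, by~\cite{CD10a}) so that player~2 also has $\eps$-optimal strategies with respect to the common value; alternatively one passes to the dual mean-payoff parity game (swap $V_1,V_2$, negate $w$, raise every priority by one), in which $V^{<\mu}_\Gamma$ becomes a set of the form $V^{>-\mu}$, and quotes the case already proved. I expect the value-propagation step in the reverse inequality to be the delicate point, as it is exactly where prefix-independence and the closedness of $V^{>\mu}_\Gamma$ must be combined.
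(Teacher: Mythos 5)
Your proof is correct and takes essentially the same route as the paper's: the inequality $\val(\MPP{\cdot,p,w,\Gamma'})(v)\leq\val(\MPP{\cdot,p,w,\Gamma})(v)$ because only player-1 edges are cut (player-2 vertices in $V^{>\mu}_\Gamma$ cannot leave it), and the reverse inequality by showing that (near-)optimal player-1 play from $v$ never enters $V^{\leq\mu}_\Gamma$ and hence transfers to the subgame, with the second claim of the lemma obtained symmetrically. The only cosmetic differences are that you work with $\eps$-optimal strategies and pass to the limit where the paper invokes the existence of exactly optimal strategies from~\cite{CJH05} and argues by contradiction, and that the closedness fact you derive from the optimality equations is the same fact the paper asserts directly.
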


\begin{proof}
	\begin{sloppypar}
Let  $v \in V^{> \mu}_\Gamma$ be arbitrary.
We will prove $\val(\MPP{\cdot, p,w,\Gamma'})(v) = \val(\MPP{\cdot,p,w,\Gamma})(v)$ by showing the following two cases: 
\begin{itemize}
	\item $\val(\MPP{\cdot, p,w,\Gamma'})(v) \leq \val(\MPP{\cdot,p,w,\Gamma})(v)$: 
	Note that there can be no player-2 vertex in $V^{>\mu}_\Gamma$ with an edge to $V^{\leq \mu}_\Gamma$.
	Thus we cut away only edges of player-1 vertices in $\Gamma'$. 
	Consequently player-1 has less choices in $\Gamma'$ than in $\Gamma$ at each of her vertices.
	Thus $\val(\MPP{\cdot, p,w,\Gamma'})(v) \leq \val(\MPP{\cdot,p,w,\Gamma})(v)$ holds.

	\item $\val(\MPP{\cdot, p,w,\Gamma'})(v) \geq \val(\MPP{\cdot,p,w,\Gamma})(v)$:
	Let $\sigma$ be an optimal strategy for player~1 and let
	$\pi$ be an optimal strategy for player~2 which both exist
	by~\cite{CJH05}.
	We will show that $\sigma$ produces plays with vertices in $V^{> \mu}_\Gamma$ only, 
	if it starts from $v$.
	For the sake of contradiction assume that a play $\rho = \omega(v,\sigma,\pi)$ 
	contains a vertex $v^* \in  V^{\leq \mu}_\Gamma$.
	Notice that there are no player-2 vertices in $V^{>\mu}_\Gamma$ with edges to $V^{\leq \mu}_\Gamma$. 
	Thus $\sigma$ chose a successor vertex in $V^{\leq \mu}_\Gamma$. 
	But when $\rho$ ends up in $V^{\leq \mu}_\Gamma$ 
	the optimal player-2 strategy $\pi$ can guarantee that $\MPP{\rho,p,w,\Gamma} \leq \mu$
	by the definition of $V^{\leq \mu}_\Gamma$. 
	There is a strategy to keep the value of the play starting at
	$v$ greater than $\mu$ by the definition of $V^{>\mu}_\Gamma$. Thus any play $\rho$ leading 
	to $V^{\leq \mu}_\Gamma$ using $\sigma$ is not optimal which is a contradiction to our assumption. 
	Consequently $\val(\MPP{\cdot, p,w,\Gamma'})(v) \geq \val(\MPP{\cdot,p,w,\Gamma})(v)$ follows.
\end{itemize}
The fact that for all $v \in V^{< \mu}_\Gamma$, we have
$\val(\MPP{\cdot, p,w,\Gamma''})(v) = \val(\MPP{\cdot,p,w,\Gamma})(v)$ follows by a symmetric argument.
\end{sloppypar}
%
%
%

\end{proof}

\smallskip\noindent\emph{Running Time.}
The running time of the dichotomic search is $O(n\log(nW)\mathsf{TH})$
where $\mathsf{TH}$ is the running time of an algorithm for the threshold mean-payoff parity problem.
The additional factor $n$ comes from rescaling the weights of the mean-payoff parity game $\Gamma$ which is
described in the key observation.
The factor $O(\log(nW))$ is from using binary search on $S$ as $|S| = O(n^2W)$.

\begin{theorem}
	\begin{sloppypar}
		Given a mean-payoff parity game $(\Gamma,p,w)$ and an algorithm
		that solves the threshold mean-payoff parity problem in $O(\mathsf{TH})$, the
		value function of $(\Gamma,p,w)$ can be computed in time
		$O(n\log(nW)\mathsf{TH} )$.
	\end{sloppypar}
\end{theorem}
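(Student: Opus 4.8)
The plan is to assemble the dichotomic-search algorithm described above into a statement of correctness together with a running-time bound. For correctness, I would first invoke Lemma~\ref{mfcs:lem:optval} to guarantee that every finite optimal value lies in $S^{(\Gamma,p,w)}$, so that a binary search confined to $S^{(\Gamma,p,w)}$ is exhaustive; the vertices with value $-\infty$ (those from which player~1 cannot enforce the parity objective) are exactly the ones never captured by any $V^{=\mu}_\Gamma$, and step~4 assigns them $-\infty$ once the surviving value set is exhausted.

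Next I would argue correctness by induction on the recursion depth. At a node with game $\Gamma_i$ and value set $S_i$, the Key Observation computes the partitions $V^{<a_1}_{\Gamma_i}$, $V^{=a_1}_{\Gamma_i}$, $V^{=a_2}_{\Gamma_i}$, $V^{>a_2}_{\Gamma_i}$ with a constant number of threshold queries (clearing the denominator of each $\mu=y/z$ by rescaling all weights by $z$). Since $a_1$ and $a_2$ are consecutive elements of $S_i$, no value lies strictly between them, so every vertex whose value falls in $[a_1,a_2]$ is labelled correctly in step~4, while the remaining vertices split into the disjoint subgames $\Gamma_i\restr V^{<a_1}_{\Gamma_i}$ and $\Gamma_i\restr V^{>a_2}_{\Gamma_i}$. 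The essential point---that recomputing values inside these subgames reproduces the values in $\Gamma_i$, and hence in $\Gamma$---is exactly Lemma~\ref{mfcs:lem:subgameval}, applied with $\mu=a_1$ for the left branch and $\mu=a_2$ for the right branch. Induction over the halving value sets then establishes that each vertex ultimately receives its correct value.

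For the running time, I would charge the work one recursion level at a time. The subgames occurring at any fixed depth have pairwise disjoint vertex sets (each step partitions its vertices), so running the threshold algorithm on all of them costs no more than running it once on a game of size $(n,m)$, i.e.\ $O(\mathsf{TH})$ per level. Each query rescales the weights by a factor of at most $n$ (the denominator $z\le n$), which---since $\mathsf{TH}$ grows linearly in the maximum weight, as it does for Theorem~\ref{mfcs:thm:tmpp}---raises the per-level cost to $O(n\,\mathsf{TH})$. Because the value set is halved at each step and $|S^{(\Gamma,p,w)}|=O(n^2W)$, the recursion has depth $O(\log(n^2W))=O(\log(nW))$, and multiplying the two factors yields the claimed $O(n\log(nW)\,\mathsf{TH})$.

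The step I expect to be the main obstacle is the value-preservation claim underlying the recursion: one must check that restricting to the subgame induced by $V^{>a_2}$ (resp.\ $V^{<a_1}$) deletes only edges that cannot change any optimal value, which rests on the fact that no player-2 vertex of $V^{>\mu}_\Gamma$ has an edge into $V^{\le\mu}_\Gamma$ and, dually, no player-1 vertex of $V^{<\mu}_\Gamma$ has an edge into $V^{\ge\mu}_\Gamma$. This is precisely the content of Lemma~\ref{mfcs:lem:subgameval}; once it is available, the rest is careful bookkeeping of the rescaling factor $n$ and the size $O(n^2W)$ of $S^{(\Gamma,p,w)}$.
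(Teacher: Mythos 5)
Your proposal is correct and follows essentially the same route as the paper: it confines the optimal values to $S^{(\Gamma,p,w)}$ via Lemma~\ref{mfcs:lem:optval}, runs the dichotomic search using the key observation (threshold queries on the weight-rescaled game, contributing the factor $n$), and justifies the recursion on $\Gamma_i \restr V^{<a_1}_{\Gamma_i}$ and $\Gamma_i \restr V^{>a_2}_{\Gamma_i}$ by the value-preservation Lemma~\ref{mfcs:lem:subgameval}, with depth $O(\log(nW))$ from halving $S$ where $|S| = O(n^2W)$. Your per-level accounting (disjointness of subgames at each depth, plus the linear dependence of $\mathsf{TH}$ on $W$) merely makes explicit the bookkeeping the paper states tersely.
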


\noindent As a corollary of the above theorem and Theorem~\ref{mfcs:thm:tmpp}, the value function 
for mean-payoff parity games can be computed in $O(n^d \cdot m \cdot W \cdot\log(nW))$ 
time. 

\section{Conclusion}
In this chapter, we present faster algorithms for mean-payoff parity games. 
Our most interesting results are for mean-payoff Büchi and mean-payoff coBüchi games, 
which are the base cases. For threshold mean-payoff Büchi and mean-payoff coBüchi games, 
our bound $O(nmW)$ matches the current best-known bound for mean-payoff games. 
For the value problem, we show the dichotomic search approach of~\cite{brim2011} 
for mean-payoff games can be generalized to mean-payoff parity games. 
This gives an additional multiplicative factor of $n\log(nW)$ as compared to the threshold problem. 
A recent work by Comin et al.~\cite{Comin2017} shows that the value problem for mean-payoff objective can be 
solved with a multiplicative factor $n$ compared to the threshold objective 
(i.e., it shaves of the $\log$ factor). 
An interesting question is whether the approach of Comin et al.\ can be generalized to mean-payoff parity games.

	\chapter[Near-Linear Time Algorithms for Streett Objectives in Graphs and MDPs][Near-Linear Time
	Algs.\@ f.\@ Streett Obj.\@ in Graphs \& MDPs]{Near-Linear Time Algorithms for Streett Objectives in Graphs and MDPs}\label{cha:concur}
	In this chapter, we present randomized near-linear time algorithms for Streett objectives in graphs and MDPs.

\section{Introduction}\label{concur:sec:intro}
In this work, we present near-linear (hence near-optimal) randomized algorithms 
for the strong fairness verification in graphs and Markov Decision Processes (MDPs).
In the fundamental model-checking problem, the input is a \emph{model} and a 
\emph{specification}, and the algorithmic verification problem is to check whether
the model \emph{satisfies} the specification. 
We first describe the models and the specifications we consider, then the notion 
of satisfaction, and then previous results followed by our contributions.

\para{Models: Graphs and MDPs.}
Graphs and Markov decision processes (MDPs) are two classical models of reactive 
systems.
The states of a reactive system are represented by the vertices of a graph, the transitions
of the system are represented by the edges and non-terminating trajectories of the system
are represented as infinite paths of the graph.
Graphs are a classical model for reactive systems with nondeterminism, and 
MDPs extend graphs with probabilistic transitions that represent reactive systems with 
both nondeterminism and uncertainty. 
Thus, graphs and MDPs are the standard models of reactive systems with nondeterminism,
and nondeterminism with stochastic aspects, respectively~\cite{ClarkeBook,baierbook}.
Moreover, MDPs are used as models for concurrent finite-state processes~\cite{CY95,Vardi85}
as well as probabilistic systems in open environments~\cite{Segala95,prism,STORM,baierbook}.

\para{Specification: Strong fairness (aka Streett) objectives.}
A fundamental specification formalism in the analysis of reactive systems 
is the {\em strong fairness condition}.
The strong fairness conditions (aka Streett objectives) consist of $k$ types 
of requests and corresponding grants, and the requirement is that for each type if the 
request happens infinitely often, then the corresponding grant must also happen
infinitely often. 
Beyond safety, reachability, and liveness objectives, the most standard properties
that arise in the analysis of reactive systems are Streett objectives,  
and chapters of standard textbooks in verification are devoted to it (e.g., 
\cite[Chapter~3.3]{ClarkeBook},~\cite[Chapter~3]{MPProgress},~\cite[Chapters~8,~10]{AH04}).
Besides, $\omega$-regular objectives can be specified as Streett objectives, e.g., 
LTL formulas and non-deterministic $\omega$-automata can be translated to
deterministic Streett automata~\cite{Safra88} and efficient translations
have been an active research area~\cite{ChatterjeeGK13,EsparzaK14,KomarkovaK14}. 
Consequently, Streett objectives are a canonical class of objectives that arise in verification.

\para{Satisfaction.} 
The notions of satisfaction for graphs and MDPs are as follows:
For graphs, the notion of satisfaction requires that there is a trajectory (infinite path) 
that belongs to the set of paths specified by the Streett objective.
For MDPs, the satisfaction requires that there is a strategy to resolve the nondeterminism 
such that the Streett objective is ensured almost-surely (with probability~1).
Thus the algorithmic model-checking problem of graphs and MDPs with Streett objectives 
is a central problem in verification, and is 
at the heart of many state-of-the-art tools such as SPIN, NuSMV for graphs~\cite{SPIN,Cimatti2000}
and
PRISM, LiQuor, Storm for MDPs~\cite{prism,LIQUOR,STORM}.

Our contributions are related to the algorithmic complexity of graphs and MDPs with 
Streett objectives. 
We first present previous results and then our contributions.

\subparagraph*{Previous results.}
The most basic algorithm for the problem for graphs is based on repeated SCC (strongly
connected component) computation, and informally can be described as follows:
for a given SCC, (a)~if for every request type that is present in the SCC 
the corresponding grant type is also present in the SCC, then the SCC is identified 
as ``good'', (b)~else vertices of each request type that have no corresponding
grant type in the SCC are removed, and the algorithm recursively proceeds
on the remaining graph.
Finally, reachability to good SCCs is computed. 
The algorithm for MDPs is similar where the SCC computation is replaced with 
maximal end-component (MEC) computation and reachability to good SCCs is replaced
with probability~1 reachability to good MECs. 
The basic algorithms for graphs and MDPs with Streett objective have been improved in several
works, such as for graphs in~\cite{HT96,ChatterjeeHL15}, for MEC computation 
in~\cite{CH11,ChatterjeeH12,CH14}, and MDPs with Streett objectives in~\cite{CDHL16}.
For graphs/MDPs with $n$ vertices, $m$ edges, and $k$ request-grant pairs with $b$ denoting 
the size to describe the request grant pairs, the current best-known bound is $O(\min(n^2, m \sqrt{m
\log n}) + b \log n)$.

\subparagraph*{Our contributions.} 
In this work, our main contributions are randomized near-linear time (i.e.\ linear times a polylogarithmic factor) algorithms 
for graphs and MDPs with Streett objectives. In detail, our contributions are as follows:

\begin{itemize}

\item First, we present a near-linear time randomized algorithm for graphs with Streett objectives
	where the expected running time is $\widetilde{O}(m + b)$, where the $\widetilde{O}$ notation
	hides poly-log factors. 
	Our algorithm is based on a recent randomized algorithm for maintaining the SCC decomposition of graphs 
	under edge deletions, where the expected total running time is near linear~\cite{BPW19}.

\item Second, by exploiting the results of~\cite{BPW19} we present a randomized near-linear time
	algorithm for computing the MEC decomposition of an MDP where the expected running time is $\widetilde{O}(m)$.
	We extend the results of~\cite{BPW19} from graphs to MDPs and present a randomized algorithm to maintain 
	the MEC decomposition of an MDP under edge deletions, where the expected total running time is near
	linear~\cite{BPW19}.

\item Finally, we use the result of the above item to present a near-linear time randomized algorithm 
	for MDPs with Streett objectives where the expected running time is $\widetilde{O}(m + b)$.

\end{itemize}
All our algorithms are randomized and since they are near-linear in the size of the input, they are optimal 
up to poly-log factors. 
An important open question is whether there are deterministic algorithms that can improve the 
existing running time bound for graphs and MDPs with Streett
objectives. Our algorithms are deterministic except for the invocation of the decremental SCC
algorithm presented in~\cite{BPW19}.

\begin{table}[h!]
	\small
	\caption{Summary of Results.}\label{concur:tab:results}
	\centering
	\renewcommand{\arraystretch}{1.10}
	\begin{tabular}{l l l}
	
		\toprule
		Problem & New Running T. & Old Running T. \\ 
		\midrule

		Streett Objectives on Graphs & $\O(m + b)$ & $\O(\min(n^2, m \sqrt{m}) +
		b)$~\cite{CHL17,HT96}\\
		Almost-Sure Reachability & $\O(m)$ & $O(m \cdot n^{2/3})$~\cite{CDHL16,CH14}\\
		MEC Decomposition & $\O(m)$ & $O(m\cdot n^{2/3})$~\cite{CH14}\\
		Decremental MEC Decomposition & $\O(m)$ & $O(nm)$~\cite{CH14}\\
		Streett Objectives on MDPs   & $\O(m+b)$ & $\O(\min(n^2, m \sqrt{m}) +
		b)$~\cite{CDHL16}\\
		\bottomrule

	\end{tabular}
\end{table}

\section{Decremental SCCs}\label{concur:subsec:decsccs}

We first recall the result about decremental strongly
connected components maintenance in~\cite{BPW19} (cf.\ Theorem~\ref{concur:thm:decscc} below) and then augment the result for our
purposes. 

\SetKwFunction{query}{query}
\SetKwFunction{delete}{delete}
\begin{theorem}[Theorem 1.1 in~\cite{BPW19}]\label{concur:thm:decscc}
	Given a graph $G=(V,E)$ with $m$ edges and $n$ vertices, we can maintain a
	data structure $\A$ that supports the operations
	\begin{itemize}
	\item \delete{$u,v$}: Deletes the edge $(u,v)$ from the graph
		$G$.
	\item \query{$u,v$}: Returns whether $u$ and $v$ are in the same
		SCC in $G$,
	\end{itemize}
	in total expected update time $O(m \log^4 n)$ and with worst-case
	constant query time. The bound holds against an oblivious 
	adversary.
\end{theorem}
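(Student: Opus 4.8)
The statement is quoted verbatim as Theorem~1.1 of~\cite{BPW19}, so the honest plan is to recall the high-level structure of that randomized construction rather than to reprove it from scratch; the genuine work in this chapter lies in \emph{augmenting} the data structure, not in re-deriving it. At the core is a randomized divide-and-conquer on the condensation. Since only edge deletions occur, strongly connected components can never merge and only split, so the condensation refines monotonically over the entire update sequence. The idea is to maintain a hierarchy of depth $O(\log n)$ in which, at each level, one isolates the SCCs of a few randomly chosen \emph{pivot} vertices---the vertices that both reach and are reached by a pivot, computed by a forward and a backward search as in a Tarjan-style decomposition---and recurses on the parts that are forward-only, backward-only, or incomparable to the pivot. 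Choosing the pivots at random is what guarantees that, in expectation, each recursive subproblem shrinks by a constant factor, capping the hierarchy depth at $O(\log n)$.

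The technical heart is maintaining, decrementally and within near-linear \emph{total} time, the reachability information needed to detect when an SCC splits. First I would maintain single-source reachability to and from each pivot under deletions using layered Even--Shiloach-style search trees, truncated to bounded depth and rebuilt when they grow too deep; the randomization over pivots is exactly what bounds the expected rebuild cost. Whenever a deletion causes a vertex to leave a pivot's reachable set, that vertex descends one level in the hierarchy, and I would charge the cost of re-examining its incident edges to this descent. Since a vertex can descend at most $O(\log n)$ times and each level contributes near-linear work over the whole deletion sequence, a potential/charging argument yields the claimed $O(m\,\mathrm{polylog}\,n)$ bound, with the polylog factors arising from the depth bounds of the layered trees and the $O(\log n)$ recursion depth. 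Query time is constant because each vertex stores a pointer to its current SCC identifier, updated lazily.

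The main obstacle is precisely this amortized accounting across levels: one must show that the split events, summed over all deletions and all $O(\log n)$ levels, reorganize only $\widetilde{O}(m)$ edge-work in total, and that the randomized pivoting keeps the recursion balanced \emph{in expectation} even though the adversary controls the deletion order. This is where the oblivious-adversary hypothesis is essential: the pivots are sampled independently of the deletion sequence, so the adversary cannot steer deletions to repeatedly unbalance the partition or to force a particular vertex to be re-examined many times at the same level. I would not attempt to reproduce the full amortization here; for the purposes of this chapter it suffices to treat Theorem~\ref{concur:thm:decscc} as a black box and then verify, in the augmentation that follows, that the additional operations we require remain compatible with the oblivious-adversary guarantee.
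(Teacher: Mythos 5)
Your proposal takes the same approach as the paper: Theorem~\ref{concur:thm:decscc} is an external result imported verbatim from~\cite{BPW19}, and the paper offers no proof of it, treating it purely as a black box whose interface (delete, query, oblivious-adversary guarantee) is then extended in Corollary~\ref{concur:cor:returnnewsccs}. Your high-level recollection of the randomized hierarchy, ES-tree maintenance, and charging argument is a reasonable sketch of~\cite{BPW19}, but since you correctly conclude by black-boxing the theorem exactly as the paper does, there is nothing to fix.
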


The preprocessing time of the algorithm is $O(m + n)$ using~\cite{Tarjan72}.
To use this algorithm we extend the query and update operations 
with three new operations described in Corollary~\ref{concur:cor:returnnewsccs}.

Intuitively, the first function is available in the algorithm described in~\cite{BPW19}.
The second function can be implemented directly from the construction of the data structure
maintained in~\cite{BPW19}.
The key idea for the third function is that when an SCC splits, we
consider the new SCCs. We distinguish between the largest of them and the others which we call small
SCCs. We then consider all edges incident to the small SCCs:
Note that as the new outgoing edges in the large SCC are also incident to a small 
SCC we can also determine the outgoing edges of the large SCC.
Observe that whenever an SCC splits all the small SCCs are at most half the size of the original SCC\@. 
That is, each vertex can appear only $O(\log n)$ times in small SCCs during the whole algorithm.
As an edge is only considered if one of the incident vertices is in a small SCC each 
edge is considered $O(\log n)$ times and the additional running time is bounded by $O(m \log n)$.
Furthermore, we define $\decrSCC$ as the running time of
the best decremental SCC algorithm which supports the operations in 
Corollary~\ref{concur:cor:returnnewsccs}. Currently, $\decrSCC = O(m \log^4 n)$.

\SetKwFunction{deleteannouncenooutgoing}{delete-announce-no-outgoing}
\SetKwFunction{deleteannounce}{delete-announce}
\SetKwFunction{rep}{rep}
\begin{corollary}\label{concur:cor:returnnewsccs}
	Given a graph $G = (V,E)$ with $m$ edges and $n$ vertices, we can maintain a
	data structure $\A$ that supports the operations
	\begin{itemize}
	\item \rep{$u$} (query-operation): 
		Returns a reference to the SCC containing the vertex $u$.
	\item \deleteannounce{$E$} (update-operation): 
		Deletes the set $E$ of edges from the graph $G$. 
		If the edge deletion creates new SCCs $C_1, \dots, C_k$ the
		operation returns a list $Q = \{ C_1, \dots, C_k \}$ 
		of references to the new SCCs.
	\item \deleteannouncenooutgoing{$E$} (update-operation): 
		Deletes the set $E$ of edges from the graph $G$. 
		The operation returns a list $Q = \{C_1, \dots, C_k\}$ of
		references to all new SCCs with no outgoing edges.
	\end{itemize}
	in total expected update time $O(m \log^4 n)$ and worst-case
	constant query time for the first operation. 
	The bound holds against an oblivious adaptive adversary.
\end{corollary}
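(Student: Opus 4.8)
The plan is to build the three operations on top of the decremental SCC structure of \Cref{concur:thm:decscc}, confining all extra work to an $O(m\log n)$ budget so that the total update time stays $O(m\log^4 n)$. The first two operations merely expose information the structure already maintains, whereas \deleteannouncenooutgoing{$E$} is the real work and is exactly where the small-SCC charging argument from the discussion preceding the statement is used.

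For \rep{$u$}, I would use that the structure of~\cite{BPW19} already stores, for every vertex, a label naming its current SCC---this is what lets \query{$u,v$} answer in constant time by comparing labels---so \rep{$u$} simply returns that label in worst-case constant time. For \deleteannounce{$E$}, I would feed the edges of $E$ to the base structure and hook into the split events it performs: in the decremental setting SCCs only split, and whenever an SCC $C$ breaks up the structure already recomputes its pieces to fix the labels, singling out the small pieces and keeping $C$'s label on the unique largest piece. Collecting a reference to each resulting piece is then $O(1)$ per piece, and the union of the lists reported while deleting $E$ is precisely the set of newly created SCCs.

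The crux is \deleteannouncenooutgoing{$E$}, for which I would additionally maintain, for each current SCC, a counter of its outgoing edges, so that an SCC is a bottom SCC exactly when its counter is $0$. Two kinds of events change a counter. First, deleting an edge $(u,v)$ whose endpoints lie in \emph{different} SCCs destroys an outgoing edge of $u$'s SCC, so I decrement that counter in $O(1)$ and report the SCC if it reaches $0$. Second, when deleting an internal edge splits an SCC $C$ into a largest piece $C_{\mathrm{large}}$ and small pieces $C_1,\dots,C_\ell$, I let $C_{\mathrm{large}}$ inherit $C$'s counter and then repair every counter by scanning only edges incident to the small pieces: for an edge $(u,v)$ with $u\in C_i$ I increment $C_i$'s counter when $v\notin C_i$ and decrement $C_{\mathrm{large}}$'s inherited counter when $v$ lay outside the old $C$, while for an edge $(u,v)$ with $v\in C_i$ and $u\in C_{\mathrm{large}}$ I increment $C_{\mathrm{large}}$'s counter, since such an edge was internal to $C$ but now leaves $C_{\mathrm{large}}$. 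Membership in the old $C$ is decided by checking whether \rep{$v$} is one of the now-known pieces $C_{\mathrm{large}},C_1,\dots,C_\ell$. After the split is processed, every piece with counter $0$ is appended to $Q$ as a new bottom SCC.

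The main obstacle is precisely this last bookkeeping: the counter of $C_{\mathrm{large}}$ must be corrected \emph{without} ever touching its possibly $\Theta(m)$ incident edges, which forces the inherit-and-patch scheme to be driven entirely by scans of the small pieces and requires care to avoid double counting across edges that run between two small pieces (each such edge is charged once, as an outgoing edge of its tail piece, and never to $C_{\mathrm{large}}$). Correctness then follows by induction over the split and deletion events, using that the outside edges of $C$ distribute cleanly among the pieces and that internal edges of $C$ become outgoing exactly when they cross a piece boundary. For the running time, the charging argument sketched before the statement applies verbatim: a small piece has at most half the size of its parent, so each vertex belongs to a small piece $O(\log n)$ times over the whole sequence, each edge is scanned $O(\log n)$ times, and the added cost is $O(m\log n)$, dominated by the $O(m\log^4 n)$ update time of~\cite{BPW19}. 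Since the new machinery is purely deterministic bookkeeping and draws no fresh randomness, the bounds continue to hold against the same oblivious adversary.
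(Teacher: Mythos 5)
Your proposal is correct and follows essentially the same route as the paper's proof: exposing the top-level SCC representative for \rep{$u$}, hooking into split events for \deleteannounce{$E$}, and for \deleteannouncenooutgoing{$E$} maintaining per-SCC outgoing-edge counters with the inherit-and-patch scheme (largest piece inherits the old counter, corrections driven only by scans of the small pieces) plus the same $O(\log n)$-per-edge charging argument. If anything, your explicit handling of edges between two small pieces (charging each once to its tail piece, never to $C_{\mathrm{large}}$) states the bookkeeping more carefully than the paper's own phrasing of the increment rule.
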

\begin{proof}
	We first recall the data structure $\A$ maintained by the algorithm of~\cite{BPW19}. We need the
	following detail about $\A$ to prove the second and third point.
	The data structure $\A$ maintains a hierarchy of graphs $\hat{G} = \{ \hat{G}_0,
	\dots, \hat{G}_{\lfloor \log n \rfloor + 1}\}$:
	$$ \hat{G}_i = \condense{(V,\bigcup_{j<i} E_j)} \cup E_i $$
	where the edge sets $E_i$ form a partition of $E$.
	Additionally, the top graph $\hat{G}_{\lfloor \log n \rfloor + 1}$ contains all the edges of $G$
	and the SCCs in $\hat{G}_{\lfloor \log n \rfloor + 1}$ are thus the same as
	in $G$~\cite[P.\@ 5]{BPW19}. The top level graph thus corresponds to
	$\condense{G}$. 
	\begin{itemize}
		\item The data structure $\A$ maintained in~\cite{BPW19} can be extended
			to support the function $\rep{$v$}$ for some vertex $u$ because 
			$\A$ can in constant time identify the node representing $v$ in the graph $\hat{G}_{\lfloor \log n \rfloor
			+ 1}$~\cite[P.23]{BPW19}.

		\item The data structure $\A$ maintained in~\cite{BPW19} can be extended to
			support the modified delete-operation $\deleteannounce{$E$}$:
			At the beginning of the operation we initialize a new List $\L$.
			For each edge $e \in E$ we do the following: We compute $\A.\delete{$e$}$
			and when nodes in $\condense{G}$ are split due to an edge deletion, 
			$\A$ creates new nodes $\{ s_1,\dots,s_k \}$ in $\condense{G}$ by the fact 
			that $\A$ maintains the hierarchy $\hat{G}$. For each $s_i \in
			\{s_1, \dots, s_k\}$, store the pointer to the
			nodes in $\L$. In the end we return $\L$. This can be done with an additional constant effort
			when maintaining the graph $\condense{G}$. If an SCC $C$ splits which is
			already in $\L$ due to an edge deletion in $C$, we create a new node for each new SCCs
			(and add it to $\L$). 			

		\item We prove that the data structure $\A$ maintained in~\cite{BPW19} can be extended to
			support the modified delete operation
			$\deleteannouncenooutgoing{$E$}$ in total expected
			update time $O(m \log^4 n)$ time: Note that the top level
			maintained in the hierarchy corresponds to $\condense{G}$.
			Initially, we store the number of
			outgoing edges in a counter for all nodes in $\condense{G}$. 
			Clearly, this is in $O(m+n)$ time, the preprocessing
			time of $\A$. These counters will be maintained throughout the
			entire sequence of edge-deletions, i.e., for each edge deleted from
			$G$. Whenever an edge $e= (u,v) \in E$ is deleted we need to distinguish between the case
			(i) $e$ is in the graph $\condense{G}$ and (ii) $e$ is in one of the SCCs.
			In case (i) the counter of the node representing $u$ must be decremented. 
			This is in constant time.
			For case (ii), when $e$ is inside an SCC we also distinguish between two cases.
			Either the SCC decomposition does not change ($\condense{G}$ does not change) or the SCC
			decomposition changes. In the first case, we leave the counters unchanged and are
			done because $\condense{G}$ does not change.
			When the SCC decomposition changes, new nodes are created in $\condense{G}$ because by
			deleting edges we create more SCCs.
			Thus there is an SCC $C_o$ which is split into new SCCs $C = \{C_1, \dots, C_k\}$ when we delete the edge $e$.
			We initialize the counters for all new SCCs to zero (nodes created in
			$\condense{G}$) except for the largest one (i.e., $C_\ell = \arg\max_{C_i \in C} |C_i|$) which we set to the value of the original SCC $C_o$.
			We determine the number of outgoing edges of the SCCs
			in $C \setminus \{ C_\ell \}$ by looking at the incident edges.
			During this process we modify the counter of $C_\ell$, which we initialized to the counter of
			the original SCC $C_o$ in the following two cases:
			\begin{itemize}
				\item If an edge in $C \setminus \{ C_\ell \}$ goes to an SCC
					not in $C$ we decrement the counter of
					$C_\ell$. 
				\item If there is an edge going to an SCC in $C$ we increment the
					counter of $C_\ell$. 			
			\end{itemize}
			\begin{sloppypar}
				Assume the counters were correct before
				the edge deletion, that is, in $\condense{G}$. 
				Let $G'$ be the graph without $e$. We prove that we count each
				outgoing edge in $\condense{G'}$ correctly. Let $e'=(u,v)$ be an arbitrary edge in
				$\condense{G'}$. 
				Either $e'$ is present before the deletion of $e$ or $e'$ is new
				after the deletion of $e$. 
			\end{sloppypar}
			\begin{itemize}
				\item Edges present before the deletion of $e$ are of two types, going out of $C_o$ and not
					going out of $C_o$:
					When the edge $e'$ is not going out of $C_o$ the counter of the outgoing node is still
					correct because the number of outgoing edges did not change. 
					When the edge $e'$ is going out of $C_o$ there are two cases. Either it is now going out of
					$C_\ell$ or of some other SCC in $C \setminus \{C_\ell\}$. If $u = C_\ell$, notice that $C_\ell$ is set to the counter of
					$C_o$ and we count $e'$.
					On the other hand, if $u$ is in $C \setminus \{C_\ell\}$ and $v$ is a node outside
					of $C$ we count $e'$ when we look at all edges incident to $u$. Note that we also count
					it at $C_\ell$ because 
					this edge is originally incident to $C_o$ and but not to $C_\ell$.
					Thus we decrement the counter of $C_\ell$ for each such edge. 			%
				\item Edges new after the deletion of $e$ are also of two types: Going out of $C_\ell$ and
					going out of an SCC in $C \setminus \{C_\ell\}$. Note that all edges have their
					endpoints in $C$.
					If $u$ is in $C \setminus \{C_\ell\}$ we count $e'$ correctly because we looked at
					all edges incident to $u$.  
					If $u=C_\ell$ we need to increment the counter of $u$. This is done when
					we look at all the incident edges of $C \setminus \{ C_\ell \}$. 
			\end{itemize}			%
			For each edge, we have a constant amount of work: An edge can be
			in the smaller part of the partition at most $O(\log n)$ times
			because it is only considered when the corresponding original SCC
			halves. 
			When we execute the operation $\deleteannouncenooutgoing{$E$}$ while deleting edges 
			and we find that an SCC has no outgoing edges, i.e., the stored
			counter is zero, we add it to a list and return this list.\qedhere
	\end{itemize}
\end{proof}

\section{Graphs with Streett Objectives}\label{concur:sec:streettgraph}
In this section, we present an algorithm which computes the winning regions for graphs with Streett objectives.
The input is a directed graph $G=(V,E)$ and $k$ Streett pairs $(L_j,U_j)$ for $j
= 1,\dots, k$. The size of the input is measured in terms of $m = |E|$, $n =
|V|$, $k$ and $b = \sum_{j=1}^k(|L_j| + |U_j|) \leq 2nk$. 

\subparagraph*{Algorithm \textsf{Streett} and good component detection.}
Let $C$ be an SCC of $G$. In the good component detection problem, we compute (a) a
non-trivial SCS $G[X] \subseteq C$ induced by the set of vertices $X$, such that
for all $1 \leq j \leq k$ either $L_j \cap X = \emptyset$ or $U_j \cap X
\neq \emptyset$ or (b) that no such SCS exists. In the first case, there exists
an infinite path that eventually stays in $X$ and satisfies the Streett
objective, while in the latter case, there exists no path which satisfies the
Streett objective in $C$. From the results of~\cite[Chapter 9, Proposition
9.4]{AH04} the following algorithm,
called Algorithm \textsf{Streett}, suffices for the winning set computation:
\begin{enumerate}
	\item Compute the SCC decomposition of the graph;
	\item For each SCC $C$ for which the good component detection returns an
		SCS, label the SCC $C$ as satisfying.
	\item Output the set of vertices that can reach a satisfying SCC as the
		winning set.
\end{enumerate}
Since the first and last step are computable in linear time, the running time of
Algorithm \textsf{Streett} is dominated by the detection of good components in SCCs.
In the following, we assume that the input graph is strongly connected and focus
on good component detection.

\para{Bad vertices.}
A vertex is $\emph{bad}$ if there is some $1\leq j \leq k$ such that the vertex is in
$L_j$ but it is not strongly connected to any vertex of $U_j$. All other
vertices are good. Note that a good vertex might become bad if a vertex
deletion disconnects an SCS or a vertex of a set $U_j$. A good component is a
non-trivial SCS that contains only good vertices.

\para{Decremental strongly connected components.} 
Throughout the algorithm, we use the algorithm described in
Section~\ref{concur:subsec:decsccs} to maintain the SCCs of a graph when deleting edges. In particular, we use
Corollary~\ref{concur:cor:returnnewsccs} to obtain a list of the new SCCs which are
created by removing bad vertices. Note that we can `remove' a vertex
by deleting all its incident edges. Because the decremental SCC algorithm assumes an oblivious
adversary we sort the list of the new SCCs as, otherwise, the edge deletions performed by our algorithm would depend on the random choices of the decremental SCC algorithm.

\para{Data structure.}
During the algorithm, we maintain a decomposition of the vertices
in $G = (V,E)$: We maintain a list $Q$ of certain sets $S \subseteq V$ such that every SCC of $G$ is contained in some $S$
stored in $Q$.

\SetKwFunction{add}{add}
\SetKwFunction{pull}{pull}
The list $Q$ provides two operations: $Q.\add{$X$}$ enqueues $X$ to $Q$; and $Q.\pull{}$ dequeues an arbitrary element $X$ from $Q$.
For each set $S$ in the decomposition, we store a data structure $D(S)$ in the list $Q$.
This data structure $D(S)$ supports the following operations 

\SetKwFunction{Construct}{construct}
\SetKwFunction{Remove}{remove}
\SetKwFunction{Bad}{bad}
\SetKwFunction{sccs}{d-sccs}
\begin{enumerate}
	\item \Construct{$S$}: initializes the data structure for the set $S$
	\item \Remove{$S,B$} updates $S$ to $S \setminus B$ for a set $B \subseteq V$ and returns
		$D(S)$ for the new set $S$.
	\item \Bad{$S$} returns a reference to the set $\{ v \in S \mid \exists j \text{ with } v
		\in L_j \text{ and } U_j \cap S = \emptyset \}$
\item \sccs{$S$} returns the set of SCCs currently in $G[S]$. We implement \sccs{$S$} as a balanced binary search tree which allows logarithmic and updates and deletions.
\end{enumerate}
In~\cite{HT96} an implementation of this data structure with functions (1)-(3) is described that achieves
the following running times. For a set of vertices $S \subseteq V$, let
$\bits(S)$ be defined as $\sum_{j=1}^k (|S \cap L_j| + |S \cap U_j|)$.

\begin{lemma}[Lemma~2.1 in~\cite{HT96}]\label{concur:lem:Streett_ds}
	After a one-time preprocessing of time $O(k)$, the data structure $D(S)$ can
	be implemented in time $O(bits(S) + |S|)$ for \Construct{$S$}, time
	$O(\bits(B) + |B|)$ for \Remove{$S,B$} and constant running time
	for \Bad{$S$}.
\end{lemma}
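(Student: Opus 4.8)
The plan is to equip each set $S$ with per-pair and per-vertex bookkeeping so that \Bad{$S$} is a trivial pointer return while all the maintenance cost is pushed onto \Construct and \Remove. Concretely, I would store, for every pair $j$, a counter $c_j = |U_j \cap S|$ together with a doubly-linked list of $L_j \cap S$; for every vertex $v \in S$, its incidence lists $\{ j : v \in L_j \}$ and $\{ j : v \in U_j \}$ (available from the input description); and one explicit doubly-linked list $\mathit{Bad}(S)$ together with a boolean flag per vertex recording membership in it. By definition a vertex $v$ is bad exactly when some pair $j$ with $v \in L_j$ has $c_j = 0$. Since the structure only ever shrinks, being bad is monotone --- once $c_j$ reaches $0$ it never rises again --- so a single boolean flag per vertex is sound, and \Bad{$S$} simply returns the pointer to $\mathit{Bad}(S)$ in constant time.

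For \Construct{$S$} I would scan $S$ and the incidence lists of its vertices, setting each touched $c_j$ to $|U_j \cap S|$, building the lists $L_j \cap S$, and flagging as bad every $v$ that lies in some $L_j$ with $c_j = 0$. Each vertex $v$ costs $O(1 + |\{ j : v \in L_j \}| + |\{ j : v \in U_j \}|)$ work, so the total is $O(\bits(S) + |S|)$. The one point that needs care is that naively zeroing the length-$k$ counter array would cost $\Theta(k)$ per call; I would instead spend the one-time $O(k)$ preprocessing to bring that array into a clean state once, and thereafter use the standard sparse-reset discipline so that each \Construct touches --- and on teardown restores --- only the $O(\bits(S))$ counters of pairs actually present in $S$.

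For \Remove{$S,B$} I would, for each $v \in B$, walk its incidence lists: decrement $c_j$ for every $j$ with $v \in U_j$, splice $v$ out of each list $L_j$, and delete $v$ from $S$ and from $\mathit{Bad}(S)$. This direct work is $O(\bits(B) + |B|)$. The real obstacle --- and the only interesting step --- is the event that a decrement drives some $c_j$ from $1$ to $0$: at that instant every vertex still present in $L_j$ becomes bad, and keeping $\mathit{Bad}(S)$ correct forces a scan of the entire list $L_j$, which may be vastly larger than $\bits(B)$. I would handle this by amortization rather than by trying to make it worst-case. Because deletions are irrevocable, each pair $j$ can die at most once over the whole lifetime of the structure, so the cost of that scan charges $O(1)$ to each incidence $(v,j)$ with $v \in L_j$, and each such incidence is charged at most once. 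The scanning work therefore totals $O(\bits(S_0))$ across all removals, where $S_0$ is the set at construction; folding this into the $\bits(S_0)$ term already paid by \Construct leaves \Remove with the claimed amortized bound $O(\bits(B) + |B|)$ while preserving \Construct at $O(\bits(S) + |S|)$.
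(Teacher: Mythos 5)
Your proposal is correct in substance, but note first that the paper never proves this lemma at all: it is imported verbatim as Lemma~2.1 of~\cite{HT96}, so the only ``proof'' in the paper is the citation. Measured against the actual construction in~\cite{HT96}, your data structure is essentially the same --- per-pair counters $c_j = |U_j \cap S|$, doubly-linked lists for $L_j \cap S$, per-vertex incidence lists, an explicit bad list with monotone flags, and the one-time $O(k)$ preprocessing spent so that \Construct{$\cdot$} never pays $\Theta(k)$ for counter initialization --- except at the one point you yourself identify as the interesting step. When a decrement drives $c_j$ to zero you scan $L_j \cap S$ and recover the bound by charging each incidence $(v,j)$ at most once over the structure's lifetime, which makes your \Remove{$\cdot$} bound \emph{amortized}. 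The standard implementation avoids the scan altogether: since $L_j \cap S$ is already a doubly-linked list, it can be spliced onto $\mathit{Bad}(S)$ in $O(1)$ when pair $j$ dies (a vertex may then occur in the bad list once per dead pair containing it, which is harmless, as the list still references exactly the set of bad vertices), yielding the stated bounds \emph{worst-case} per operation. Your amortized variant is fully sufficient for every use in this chapter, because the running-time analysis of Algorithm~\ref{concur:alg:goodcomp} in Lemma~\ref{concur:lem:goodcomp:runningtime} only ever sums the data-structure costs over the whole execution. One implementation detail you should tighten: several structures $D(S)$, $D(S')$ coexist in the queue $Q$, so the counters $c_j$ must be per-structure objects reachable via pointers stored with the incidences at \Construct{$\cdot$} time; the global length-$k$ scratch array can only be used inside a single \Construct{$\cdot$} call and must be cleaned before that call returns, since otherwise two coexisting sets meeting the same pair $j$ would clash on the same array cell.
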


We augment the data structure with the function \sccs{$S$} which runs in total
time of a decremental SCC algorithm supporting the first function in
Corollary~\ref{concur:cor:returnnewsccs}.

\para{Algorithm Description.}
The key idea is that the algorithm maintains the list $Q$ of data structures $D(S)$ as described
above when deleting bad vertices.
Initially, we enqueue the data structure
returned by \Construct{$V$} to $Q$. As long as $Q$ is non-empty, the algorithm repeatedly pulls a set $S$ from $Q$ and
identifies and deletes bad vertices from $G[S]$. If no edge is contained in
$G[S]$, the set $S$ is removed as it can only induce trivial SCCs. Otherwise, the
subgraph $G[S]$ is either determined to be strongly connected and output as a
good component or we identify and remove an SCC with at most
half of the vertices in $G[S]$. Consider Figure~\ref{concur:fig:goodcomp} for an illustration of an example
run of Algorithm~\ref{concur:alg:goodcomp}.

\begin{figure}[ht]
	\centering
	\includegraphics[width=\textwidth]{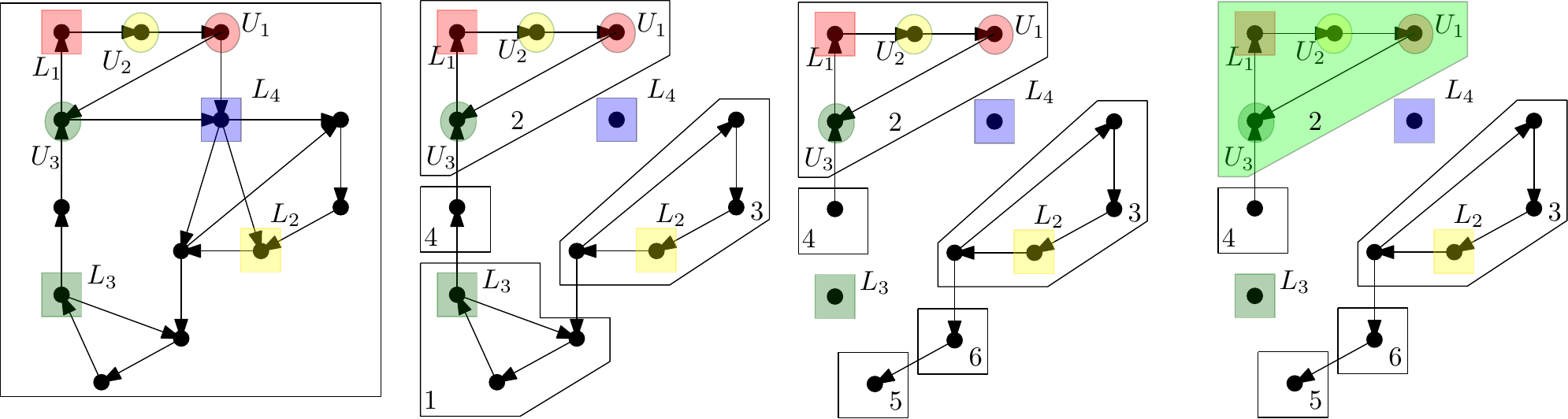}
	\caption{Illustration of one run of Algorithm~\ref{concur:alg:goodcomp}: The vertex in the set $L_4$
		is a bad vertex and we remove it from the SCC yielding four
		new SCCs. First, we look in the SCC containing
		$L_3$. The vertex in $L_3$ is a bad vertex because there is no vertex in
		$U_3$ in this SCC\@.
		Again two SCCs are created after its removal. The next SCC we process is the SCC containing
		$L_1$. It is a good component
		because the vertex in $L_1$ has a vertex in $U_1$ in the same SCC\@. No bad vertices are
		removed and the whole SCC is identified as a good component.
}\label{concur:fig:goodcomp}
\end{figure}

\para{Outline correctness and running time.}
In the following, 
when we talk about the \emph{input graph} $\hat{G}$ we mean the unmodified, strongly connected graph which we
use to initialize Algorithm~\ref{concur:alg:goodcomp}. 
In contrast, with the \emph{current} graph $G$ we refer to the graph where we already deleted vertices and their incident edges in the course of finding a good component.
For the correctness of Algorithm~\ref{concur:alg:goodcomp}, we show that if a good component exists, then
there is a set $S$ stored in list $Q$ which contains all vertices of this good component.

To obtain the running time bound of Algorithm~\ref{concur:alg:goodcomp}, we use the fact that we can maintain the SCC decomposition
under deletions in $O(\decrSCC)$ total time. With the properties of the data
structure described in Lemma~\ref{concur:lem:Streett_ds} we get a
running time of $\O(n+b)$ for the maintenance of the data structure and
identification of bad vertices over the whole algorithm. Combined, these ideas
lead to a total running time of $\O(\decrSCC + n + b)$ which is $\O(m + b)$ using
Corollary~\ref{concur:cor:returnnewsccs}.

\begin{algorithm}[t]
	\small
	\caption{Algorithm~\textsc{GoodComp}}\label{concur:alg:goodcomp}
	\KwIn{Strongly connected graph $G = (V,E)$ and Streett pairs $(L_j,U_j)$ for $j = 1, \dots, k$}
	\KwOut{a good component in $G$ if one exists}
	Invoke an instance $\A$ of the decremental SCC algorithm; Initialize $Q$ as a new list.\;
	$D(V) \gets$ \Construct{$V$}; $D(V).\sccs{$V$} \gets \{\A.\rep{$x$}\}$ for some $x \in V$\;
	$Q.\add{$D(V)$}$\; 
	\While{$Q$ is not empty}{\label{concur:alg:goodcomp:outerwhile}
		$D(S) \gets Q.\pull{}$\;\label{concur:alg:goodcomp:pull}
		\While{$D(S).\Bad{$S$}$ is not empty}{\label{concur:alg:goodcomp:removebad1}
			$B \gets D(S).\Bad{$S$};$ $D(S) \gets D(S).\Remove{$S,B$}$\;
			\tcp{obtain SCCs after deleting bad vertices from $S$}
			$D(S).\sccs{$S$} \gets D(S).\sccs{$S$} \setminus \left(\bigcup_{b \in
					B} \set{\A.\rep{$b$}}\right)$\;  \label{concur:alg:goodcomp:removebfromscc}
			$D(S).\sccs{$S$} \gets D(S).\sccs{$S$} \cup \A.\deleteannounce{$\edgeset{B}$}$\;\label{concur:alg:goodcomp:dannounce}
		}\label{concur:alg:goodcomp:removebad2}

		\If{$G[S]$ contains at least one edge}{\label{concur:alg:goodcomp:removetrivial}
			Initialize $K$ as a new list\;
			\For{ $X \gets D(S).\sccs{$S$}$}{\label{concur:alg:goodcomp:for1}
				\lIf{$X = S$} {
					\textbf{output} $G[S]$; \tcp*[f]{good component found}\DontPrintSemicolon\label{concur:alg:goodcomp:outputgoodcomp} 
				}
				\lIf{$|X| \leq {|S|\over 2}$}{
					$K.\add{$X$}$;
				}
			}
			Sort the SCCs in $K$ by vertex id (look at all the vertices in each SCC of
			$K$)\;\label{concur:alg:goodcomp:sort}
			$R \gets \emptyset$\tcp*{{\footnotesize Build $D(X)$ for SCCs $X$ in $K$
			and remove $X$ from $S$,$D(S)$ and \sccs{$S$}}}
			
			\For{$X \gets K.\pull{}$}{\label{concur:alg:goodcomp:for2}
				$R \gets R \cup X$; $D(X) \gets$ \Construct{$X$}\;
				$D(X).\sccs{$X$} \gets \set{\A.\rep{$x$}}$ for some $x \in
				X$\;\label{concur:alg:goodcomp:newSmallSCC}
				$D(S).\sccs{$S$} \gets D(S).\sccs{$S$} \setminus \{\A.\rep{$x$}\}$ for some $x \in X$\;\label{concur:alg:goodcomp:remSmallSCC}
				$Q.\add{$D(X)$}$\;\label{concur:alg:goodcomp:ds1}
			}
			\lIf{$D(S).\sccs{$S$} \not= \emptyset$}{
				$Q.\add{$D(S).\Remove{$S,R$}$}$\label{concur:alg:goodcomp:ds2} 
			}
		}
	}

	\Return No good component exists. 
\end{algorithm}

\begin{lemma}\label{concur:lem:goodcomp:runningtime}
	Algorithm~\ref{concur:alg:goodcomp} runs in expected time  $\O(m + b)$.
\end{lemma}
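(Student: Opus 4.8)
The plan is to decompose the total running time of Algorithm~\ref{concur:alg:goodcomp} into three cost centers and bound each separately: (i)~the update and query operations of the decremental SCC structure $\A$ from Corollary~\ref{concur:cor:returnnewsccs}, (ii)~the bad-vertex data structure $D(S)$ governed by Lemma~\ref{concur:lem:Streett_ds}, and (iii)~the bookkeeping on the balanced search trees implementing \sccs{$\cdot$}. The only source of randomness is $\A$, so every bound touching $\A$ is an expectation and everything else is worst-case; crucially, sorting the newly created components on line~\ref{concur:alg:goodcomp:sort} makes the sequence of edges fed to $\A$ independent of $\A$'s own random bits, which is exactly what the oblivious-adversary guarantee of Corollary~\ref{concur:cor:returnnewsccs} demands.

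I would first establish two combinatorial facts. The halving/charging fact: a vertex is placed into a set $X$ that is enqueued on line~\ref{concur:alg:goodcomp:ds1} (an SCC of size at most $|S|/2$) at most $O(\log n)$ times over the whole run, since each such move at least halves the size of its container. The forest fact: under edge deletions SCCs only split, so the family of SCCs ever created is a forest with at most $n$ leaves and hence $O(n)$ nodes in total; consequently only $O(n)$ small SCCs are ever separated, and since every set added to $Q$ other than the initial $V$ is either such a small SCC or a remainder obtained on line~\ref{concur:alg:goodcomp:ds2} after peeling off at least one small SCC, the number of sets ever enqueued and hence the number of outer iterations is $O(n)$. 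Using the halving fact, $\sum_X(\bits(X)+|X|)$ over all \Construct{$X$} calls and $\sum_R(\bits(R)+|R|)$ over all \Remove{$S,R$} calls on line~\ref{concur:alg:goodcomp:ds2} are each $O((b+n)\log n)$ by Lemma~\ref{concur:lem:Streett_ds}. The \Remove{$S,B$} calls deleting bad vertices (lines~\ref{concur:alg:goodcomp:removebad1}--\ref{concur:alg:goodcomp:removebad2}) I would charge directly to the deleted vertices, each of which becomes bad at most once, giving $O(b+n)$; and since each inner-loop iteration removes at least one bad vertex while each of the $O(n)$ outer iterations contributes one failing \Bad{$\cdot$} test, there are $O(n)$ constant-cost \Bad{$\cdot$} calls. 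Thus $D(S)$ contributes $\widetilde{O}(n+b)$ in total.

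For $\A$, the only deletions are the edges incident to bad vertices handed to \deleteannounce{\edgeset{B}} on line~\ref{concur:alg:goodcomp:dannounce}; since each edge is deleted at most once and inter-component edges are never removed, the total update cost is $O(\decrSCC)$, while each \rep{$\cdot$} query is $O(1)$ and only $O(n)$ queries are issued. For the \sccs{$\cdot$} trees, the forest fact gives $O(n)$ SCC objects, each inserted, deleted, and moved between trees a constant number of times at $O(\log n)$ each (lines~\ref{concur:alg:goodcomp:removebfromscc}, \ref{concur:alg:goodcomp:dannounce}, \ref{concur:alg:goodcomp:newSmallSCC}, \ref{concur:alg:goodcomp:remSmallSCC}), for $O(n\log n)$ total, and the sort on line~\ref{concur:alg:goodcomp:sort} adds only one further logarithmic factor charged again to the small SCCs. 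Summing the three centers yields expected time $\widetilde{O}(\decrSCC + n + b)$; substituting $\decrSCC = O(m\log^4 n)$ and using $n \le m$ (every vertex has an outgoing edge) gives the claimed $\widetilde{O}(m+b)$.

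The step I expect to be the main obstacle is making the amortization for \Construct and \Remove airtight, namely ensuring that no vertex is ever charged for residing in a \emph{large} component. This hinges on the observation that the large SCC retained in $S$ is never rebuilt from scratch: it keeps its data structure and is only updated via \Remove{$S,R$} on line~\ref{concur:alg:goodcomp:ds2}, whose cost is attributed to the separated small pieces $R$. Verifying that the \sccs{$\cdot$} references stay consistent across the split on lines~\ref{concur:alg:goodcomp:newSmallSCC}--\ref{concur:alg:goodcomp:remSmallSCC}, and that deletions internal to a separated SCC do not interact in $\A$ with the undeleted edges between components, is the delicate bookkeeping that the halving bound rests upon.
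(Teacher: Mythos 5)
Your proposal follows the paper's own proof in all essentials: the same decomposition into decremental-SCC cost, Streett data-structure cost, and tree/sorting bookkeeping; the same $O(\decrSCC)=\O(m)$ bound via Corollary~\ref{concur:cor:returnnewsccs}; the same halving argument charging the $O(|X|+\bits(X))$ cost of \Construct{$\cdot$} and \Remove{$\cdot$} to vertices and bits at $O(\log n)$ charges each; and the same final summation $\O(\decrSCC+n+b)=\O(m+b)$. Your ``forest fact'' giving $O(n)$ enqueued sets is a small, correct addition that the paper leaves implicit, and your handling of the oblivious-adversary issue matches what the paper proves separately in Proposition~\ref{concur:prop:goodcomp:obadv}.

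However, the step you defer at the end as ``delicate bookkeeping'' is a genuine gap, not a verification detail, and your observation that the large SCC keeps its data structure does not close it. The problem sits in the for-loop at Line~\ref{concur:alg:goodcomp:for1}: to fill $K$ the algorithm must decide, for every SCC $X$ in \sccs{$S$}, whether $|X|\leq |S|/2$. A reference returned by Corollary~\ref{concur:cor:returnnewsccs} does not come with a constant-time size query, so the natural way to compute $|X|$ is to enumerate its vertices --- but then the single large SCC costs $\Omega(|S|/2)$ in every outer iteration it survives, and since each iteration may remove only one vertex from its containing set, it can survive $\Theta(n)$ iterations, giving $\Omega(n^2)$ and destroying the amortization; none of your three cost centers absorbs this. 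The paper closes the hole with a specific device: enumerate the vertices of all SCCs in lockstep, round by round, so that every SCC finishing while others are still active is added to $K$ if small, and when a single SCC remains its size is obtained by subtraction ($|S|$ minus the sizes of the others) at no enumeration cost. This makes the cost of Lines~\ref{concur:alg:goodcomp:for1}--\ref{concur:alg:goodcomp:sort} proportional to the vertices of the small SCCs only, which is exactly what your halving charge needs; with this device inserted, your accounting goes through as written.
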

\begin{proof}
	The preprocessing and initialization of the data structure $D$ and the removal
	of bad vertices in the whole algorithm takes time $O(m + k + b)$ using
	Lemma~\ref{concur:lem:Streett_ds}. Since each vertex is deleted at most once, the data structure 
	can be constructed and maintained in total time $O(m)$. 
	Announcing the new SCCs after deleting the bad vertices at
	Line~\ref{concur:alg:goodcomp:dannounce} is in $O(\decrSCC) = \O(m)$ total time by 
	Corollary~\ref{concur:cor:returnnewsccs}.
	Consider an iteration of the while loop at Line~\ref{concur:alg:goodcomp:outerwhile}:
	A set $S$ is removed from $Q$. Let us denote by $n'$ the number of vertices of $S$. 
	If $G[S]$ does not contain any edge	after the removal of bad vertices, 
	then $S$ is not considered further by the algorithm. 
	Otherwise, the for-loop at Line~\ref{concur:alg:goodcomp:for1} considers all new SCCs.
	We can implement the for-loop in a lockstep fashion:
	In each step for each SCC we access the $i$-th vertex and as soon as all of the vertices 
	of an SCC are accessed we add it to the list $K$.
	When only one SCC is left we compute its size using the original set $S$ and 
	the sizes of the other SCCs. If its size is at most $|S|/2$ we add it to $K$.
	Note that this can be done in time proportional to the number of vertices in the SCCs in $S$
	of size at most $|S|/2$.
	The sorting operation at Line~\ref{concur:alg:goodcomp:sort} takes time $O(|K| \log |K|)$ plus the size of all
	the SCCs in $K$, that is $\sum_{K_i \in K} |K_i|$. Note that $O(|K| \log |K|) = O((\sum_{K_i
		\in K} |K_i|) \log (\sum_{K_i \in K} |K_i|))$.  Let $K_i \in K$ be an SCC stored in $K$. 
	Note that during the algorithm each vertex can appear at most $\log(n)$ times in the list $K$.
	This is by the fact that $K$ only contains SCCs that are at most half the size of the original set $S$.
	We obtain a running time bound
	of $O(n (\log n)^2)$ for Lines~\ref{concur:alg:goodcomp:for1}-\ref{concur:alg:goodcomp:sort}.\\
	Consider the second for-loop at Line~\ref{concur:alg:goodcomp:for2}:
	Let $|X| = n_1$. The operations \Remove{$\cdot$} and \Construct{$\cdot$} are called once per
	found SCC $G[X]$ with $X \neq S$ and take by
	Lemma~\ref{concur:lem:Streett_ds} $O(|X| + \bits(X))$ time. Whenever a vertex is in
	$X$, the size of the set in $Q$ containing $v$ originally is reduced by at least a factor of two due to the fact
	that $|X| = n_1 \leq n'/2$. This happens at most $\lceil \log n \rceil$
	times. By charging $O(1)$ to the vertices in $X$ and, respectively,
	to $\bits(X)$, the total running time for
	Lines~\ref{concur:alg:goodcomp:ds1} \&~\ref{concur:alg:goodcomp:ds2} can be bounded by $O((n +
	b) \log n)$ as each vertex and bit is only charged $O(\log n)$ times.
	Combining all parts yields the claimed running time bound of $O(\decrSCC + b\log n + n \log^2n) =
	\O(m +b)$. 
\end{proof}

The correctness of the algorithm is similar to the analysis given
in~\cite[Lemmas~3.6 \& 3.7]{CHL17} except that we additionally have to prove that
\sccs{$S$} holds the SCCs of $G[S]$. Lemma~\ref{concur:lem:goodcomp:ds} shows that we
maintain \sccs{$S$} properly for all the data structures in $Q$.

\begin{lemma}\label{concur:lem:goodcomp:ds}
	After each iteration of the outer while-loop every non-trivial SCC of the current
	graph is contained in one of the subgraphs $G[S]$ for which the data structure $D(S)$ is
	maintained in $Q$ and \sccs{$S$} stores a list of all SCCs contained in $S$.
\end{lemma}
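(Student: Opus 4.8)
The plan is to prove the statement by induction on the number of completed iterations of the outer while-loop at Line~\ref{concur:alg:goodcomp:outerwhile}, carrying along an auxiliary invariant that will do most of the real work: the sets $S$ with $D(S)\in Q$ are pairwise vertex-disjoint. The first conjunct (every non-trivial SCC of the current graph lies inside some $G[S]$ with $D(S)\in Q$) is handled exactly as in~\cite[Lemmas~3.6 \&~3.7]{CHL17}, so I would concentrate on the genuinely new part, namely that \sccs{$S$} always equals the collection of SCCs of the current graph that are contained in $S$. For the base case, $Q$ is initialized with $D(V)$ and \sccs{$V$} is set to $\{\A.\rep{$x$}\}$; since the input graph is strongly connected, $V$ is its unique SCC, so $\A.\rep{$x$}=V$, the list is correct, and disjointness is vacuous.

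For the inductive step I assume all three invariants hold when $D(S)$ is pulled from $Q$ and analyze the two phases of the iteration separately. In the bad-vertex phase (the inner while-loop) each round computes $B=\Bad{$S$}$, shrinks $S$ to $S\setminus B$, and deletes the incident edges $\edgeset{B}$ through $\A.\deleteannounce{$\edgeset{B}$}$. The key observations are: (i) every deleted edge is incident to a vertex of $B\subseteq S$, so by the disjointness invariant no SCC lying inside another set $S'$ loses an edge and the lists \sccs{$S'$} stay valid without any update; (ii) inside $S$, an SCC disjoint from $B$ keeps all of its internal edges and hence survives unchanged, whereas every SCC meeting $B$ is destroyed, and its reference is exactly one of the $\A.\rep{$b$}$ removed at Line~\ref{concur:alg:goodcomp:removebfromscc}; and (iii) because all removed edges are incident to $S$, every newly created SCC is again a subset of $S$, so the SCCs reported by \deleteannounce at Line~\ref{concur:alg:goodcomp:dannounce} are precisely those that must be inserted into \sccs{$S$}. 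Together (ii) and (iii) show that after each round \sccs{$S$} equals the SCCs of the current graph contained in $S\setminus B$.

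In the splitting phase (the body of the if-block) no edges are deleted and the algorithm only redistributes the SCCs recorded in \sccs{$S$}. Every SCC $X$ with $|X|\le |S|/2$ is moved into its own new set through \Construct{$X$} with \sccs{$X$} set to $\{X\}$ (Line~\ref{concur:alg:goodcomp:newSmallSCC}) and removed from $S$ and \sccs{$S$} (Lines~\ref{concur:alg:goodcomp:remSmallSCC} and~\ref{concur:alg:goodcomp:ds2}), while the at most one SCC of size exceeding $|S|/2$ remains inside the shrunk set together with the correspondingly trimmed list. Since each such $X$ is itself an SCC of the current graph, $\sccs{$X$}=\{X\}$ is correct and the trimmed list for the surviving set still lists exactly the SCCs it contains; the newly created sets are disjoint from one another and from all existing sets, so the disjointness invariant is preserved, and as the SCC decomposition of the current graph is unchanged in this phase, the coverage conjunct is preserved as well.

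The main obstacle I anticipate is the bookkeeping in step~(iii): one must ensure that \deleteannounce returns all and only the relevant new SCCs, and that no spurious trivial component — for instance an isolated former bad vertex $b$, which is no longer in $S\setminus B$ — is mistakenly retained in \sccs{$S$}. This rests on the precise guarantee of Corollary~\ref{concur:cor:returnnewsccs} that \deleteannounce reports exactly the components created by a deletion, combined with the fact that the removed $\A.\rep{$b$}$ references account for every old component that splits. Reconciling the global view of the decremental structure $\A$, which tracks the SCCs of the entire current graph across all sets of $Q$ at once, with the per-set lists \sccs{$S$} is exactly where the disjointness invariant is indispensable, since it localizes the effect of deleting $\edgeset{B}$ to the single set $S$.
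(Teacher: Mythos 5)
Your proof is correct and follows essentially the same route as the paper's: induction over iterations of the outer while-loop, with the base case given by strong connectivity of the input and the inductive step tracking exactly how \sccs{$S$} is updated when bad vertices are deleted (Lines~\ref{concur:alg:goodcomp:removebfromscc} and~\ref{concur:alg:goodcomp:dannounce}) and when small SCCs are split off into new sets (Lines~\ref{concur:alg:goodcomp:newSmallSCC}, \ref{concur:alg:goodcomp:remSmallSCC} and~\ref{concur:alg:goodcomp:ds2}). The only difference is that you make explicit two points the paper leaves implicit---the pairwise disjointness of the sets in $Q$ (which, together with the first conjunct of the induction hypothesis, localizes the effect of deleting $\edgeset{B}$ to the single set $S$) and the bookkeeping for trivial components of removed bad vertices---which strengthens rather than changes the argument.
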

\begin{proof}
	Initially, \sccs{$V$} stores the whole input graph as one SCC\@. 
	Thus, by the assumption that the input is a strongly connected graph the claim
	is true before the first iteration of the while loop.
	Thus let us assume the statement is true before an iteration of the loop. 
	We will show that then it also holds after the iteration.
	First, the SCCs of a set $S$ stored in the data structure are only modified when we
	remove bad vertices at Lines~\ref{concur:alg:goodcomp:removebad1}-\ref{concur:alg:goodcomp:removebfromscc}.
	Consider such bad vertex $b \in B$: If new SCCs are created due to the deletion of $b$, 
	the representative SCC of $b$ splits into multiple SCCs and 
	\sccs{$S$} is correctly updated in Lines~\ref{concur:alg:goodcomp:removebfromscc} and~\ref{concur:alg:goodcomp:dannounce}.
	Moreover, for each new non-trivial SCC $X$ that we remove from $S$, the algorithm adds a new set 
	to $Q$ in Lines~\ref{concur:alg:goodcomp:ds1} \&~\ref{concur:alg:goodcomp:ds2}. Also, we update the
	corresponding set \sccs{$X$} at Line~\ref{concur:alg:goodcomp:newSmallSCC} and the old set \sccs{$S$} at
	Line~\ref{concur:alg:goodcomp:remSmallSCC}. 
\end{proof}

We prove the next Lemma by showing that we never remove edges of vertices of
good components.
\begin{lemma}\label{concur:lem:goodcomp:maintaingood}
	After each iteration of the outer while-loop every good component
	of the input graph is contained in one of the subgraphs $G[S]$ for which the
	data structure $D(S)$ is maintained in the list $Q$.
\end{lemma}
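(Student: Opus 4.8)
The plan is to prove the statement by induction on the number of iterations of the outer while-loop (Line~\ref{concur:alg:goodcomp:outerwhile}), following the hint that no edge incident to a vertex of a good component is ever deleted. I would fix a good component $G[W]$ of the input graph $\hat{G}$, so that $\hat{G}[W]$ is a non-trivial strongly connected subgraph and for every $j$ with $L_j \cap W \neq \emptyset$ we also have $U_j \cap W \neq \emptyset$. For the base case, $Q$ initially contains only $D(V)$ and $W \subseteq V$, so the claim holds. For the inductive step I would carry the \emph{strengthened} hypothesis that the set $S \in Q$ currently containing $W$ still contains all edges of $\hat{G}[W]$, so that $G[S]$ restricted to $W$ is strongly connected.

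First I would argue that no vertex of $W$ is ever classified as bad while $S$ is processed. Since every edge of $\hat{G}[W]$ is present in $G[S]$ by the induction hypothesis, any $v \in W \cap L_j$ is strongly connected within $S$ to some vertex of $U_j \cap W$, which is non-empty by the good-component property. Removing vertices outside $W$ during the inner bad-vertex-removal loop (Lines~\ref{concur:alg:goodcomp:removebad1}--\ref{concur:alg:goodcomp:removebad2}) cannot destroy this internal strong connectivity, so no vertex of $W$ turns bad. Consequently the edge set $\edgeset{B}$ deleted in each step is incident only to bad vertices, none of which lie in $W$; hence no edge of $\hat{G}[W]$ is ever deleted, which re-establishes the strengthened hypothesis for the next iteration.

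Next I would track where $W$ lands. After the bad vertices are removed, $W$ is still strongly connected in $G[S]$, so it is contained in a single SCC $X^\ast$ of $G[S]$, which by Lemma~\ref{concur:lem:goodcomp:ds} appears in $D(S).\texttt{d-sccs}(S)$. The algorithm then either keeps the unique SCC of size larger than $|S|/2$ inside $S \setminus R$ (re-added to $Q$ at Line~\ref{concur:alg:goodcomp:ds2}), or moves each SCC of size at most $|S|/2$ into a freshly built set added to $Q$ at Line~\ref{concur:alg:goodcomp:ds1}. In the first case $X^\ast \subseteq S \setminus R$ and in the second case $X^\ast$ itself is added to $Q$; either way $W \subseteq X^\ast$ ends up contained in some set stored in $Q$. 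Good components contained in sets of $Q$ other than the pulled $S$ are untouched during the iteration, so the invariant is preserved for \emph{all} good components. (The case where $G[S]$ becomes edgeless cannot occur for a genuine good component, since its edges survive and it is non-trivial; and when $G[S]$ is output at Line~\ref{concur:alg:goodcomp:outputgoodcomp} the set $S$ is re-added anyway, so the invariant still holds.)

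The hard part will be the apparent circularity between \emph{``the internal edges of $W$ survive''} and \emph{``the vertices of $W$ stay good''}: badness is defined relative to the \emph{current} graph, so to conclude that a vertex of $W$ is not bad I must already know that the edges within $W$ have not been removed. I would resolve this by folding the edge-preservation statement into the induction hypothesis rather than proving it as a separate fact, so that both properties are established simultaneously at each step; the monotone nature of the process (edges are only ever deleted, never added) is what makes this self-referential induction go through cleanly.
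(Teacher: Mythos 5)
Your proof is correct and follows essentially the same route as the paper's: bad vertices can never belong to a good component, so the edge deletions at Line~\ref{concur:alg:goodcomp:dannounce} never touch an edge of a good component, every good component of the input graph therefore stays strongly connected, and the claim follows from Lemma~\ref{concur:lem:goodcomp:ds}. The only difference is presentational: you make the induction explicit and fold edge-preservation into the hypothesis to break the circularity between ``edges of $W$ survive'' and ``vertices of $W$ stay good,'' a point the paper's terser argument leaves implicit.
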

\begin{proof}
	We first show that Algorithm~\ref{concur:alg:goodcomp} never removes edges or vertices that belong to
	a good component.   
	Consider an arbitrary iteration of the outer while loop at Line~\ref{concur:alg:goodcomp:outerwhile}.
	Let $D(S)$ be the data structure pulled from the list $Q$ in that iteration.
	Edges are only removed in Line~\ref{concur:alg:goodcomp:dannounce} of the algorithm 
	where bad vertices and incident edges are removed from $S$.
	As bad vertices cannot be part of any good component in $S$
	the algorithm does not delete any edge in a good component.
	That is, every good component of the initial graph stays strongly connected 
	in the modified graph during the whole algorithm and the claim follows from 
	Lemma~\ref{concur:lem:goodcomp:ds}.
\end{proof}

\begin{proposition}\label{concur:lem:goodcomp:correctness}
	Algorithm~\ref{concur:alg:goodcomp} outputs a good component if one exists,
	otherwise the algorithm reports that no such component exists.
\end{proposition}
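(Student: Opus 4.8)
The plan is to establish the two directions separately: \emph{soundness} (whenever the algorithm executes the \textbf{output} statement, the returned subgraph is a genuine good component) and \emph{completeness} (whenever a good component exists in the input graph, the algorithm reaches an \textbf{output} statement rather than the concluding line that reports no good component exists). Since that concluding line is reached exactly when $Q$ has been emptied, the two directions together give the claim: if a good component exists, completeness forces an output; if none exists, soundness forbids any output, so the algorithm must fall through to the final line.

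For soundness I would argue as follows. Consider an iteration that reaches the \textbf{output} on Line~\ref{concur:alg:goodcomp:outputgoodcomp} with current set $S$. The inner while-loop on Lines~\ref{concur:alg:goodcomp:removebad1}--\ref{concur:alg:goodcomp:removebad2} has terminated, so $\Bad{$S$}$ is empty and $S$ contains no bad vertices; the outer \textbf{if} guarantees $G[S]$ has at least one edge; and the guard $X = S$ together with \Cref{concur:lem:goodcomp:ds} (which certifies that \sccs{$S$} lists exactly the SCCs of $G[S]$) shows that $G[S]$ is a single, hence non-trivial, SCC. Because $G[S]$ is strongly connected, every vertex of $L_j \cap S$ is strongly connected inside $S$ to every vertex of $U_j \cap S$; as no vertex of $S$ is bad, $L_j \cap S \neq \emptyset$ implies $U_j \cap S \neq \emptyset$ for every $j$. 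This is precisely the good-component condition, and $G[S]$ is a subgraph of the input graph, so the output is valid.

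For completeness, fix a good component with vertex set $W$ in the input graph. Because a bad vertex can never belong to a good component, the proof of \Cref{concur:lem:goodcomp:maintaingood} shows that no vertex of $W$ and no edge of $\hat{G}[W]$ is ever deleted; hence $\hat{G}[W]$ stays strongly connected throughout, so at any moment $W$ is contained in a single SCC of the current graph, which by \Cref{concur:lem:goodcomp:ds} lies in exactly one set $S$ stored in $Q$. I would then follow this set $S$ across the successive times it is pulled from $Q$. Since no vertex of $W$ is bad, $W$ survives the bad-vertex removal on Lines~\ref{concur:alg:goodcomp:removebad1}--\ref{concur:alg:goodcomp:removebad2} and $G[S]$ still contains the edges of $\hat{G}[W]$, so the edge test on Line~\ref{concur:alg:goodcomp:removetrivial} passes. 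If the remaining $G[S]$ is a single SCC, the algorithm outputs it, which is a good component by soundness; otherwise $G[S]$ splits, $W$ lies entirely inside one new SCC $X'$ by strong connectivity, and $X'$ is re-enqueued either as a small component or as the large remainder, so the set containing $W$ becomes a proper subset of the previous one.

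The crux, and the step I expect to require the most care, is turning this local shrinkage into global progress. Each time the set containing $W$ is processed without yielding an output its size strictly decreases while staying at least $|W| \geq 2$; this can happen only finitely often, so the set must eventually coincide with a single SCC containing $W$, at which point the output on Line~\ref{concur:alg:goodcomp:outputgoodcomp} is reached. Termination of the procedure itself is not at issue, as it already follows from the running-time bound of \Cref{concur:lem:goodcomp:runningtime}; what needs verification is precisely that $W$ cannot be relegated forever to ever-smaller splitting sets without being isolated and output. Combined with soundness this yields the dichotomy: a good component exists iff the algorithm outputs one, and otherwise the algorithm falls through to its final line reporting that none exists.
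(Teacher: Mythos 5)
Your proof is correct, and its soundness half coincides with the paper's argument almost word for word: after the inner loop there are no bad vertices, the outer guard supplies at least one edge, and the test $X = S$ together with Lemma~\ref{concur:lem:goodcomp:ds} yields a single non-trivial SCC in which every non-empty $L_j \cap S$ meets $U_j$, i.e.\ a good component. Where you diverge is the completeness half. The paper dispatches it in one sentence, contrapositively: the algorithm terminates, termination without output means $Q$ is empty, and Lemma~\ref{concur:lem:goodcomp:maintaingood} guarantees that after every iteration each good component of the input is contained in some set stored in $Q$ — so an empty $Q$ certifies that no good component existed. You instead prove the forward statement: you track the unique set $S \supseteq W$ through successive pulls and show each pull either triggers the output or strictly shrinks that set (a non-output pull splits $G[S]$, $W$ stays inside one SCC by strong connectivity, and the re-enqueued set containing $W$ — whether a small SCC placed in $K$ or the large remainder after removing the non-empty $R$ — is a proper subset of $S$), so with termination an output is forced. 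Both routes rest on exactly the same two lemmas and are valid; yours is strictly more work, since once termination and the containment invariant are available, the contrapositive reading already closes the proof and the size-decreasing bookkeeping is redundant. What your version buys is a more explicit operational picture — the good component's host set monotonically shrinks until it is isolated and emitted — which the paper's terse argument leaves implicit.
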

\begin{proof}
	First consider the case where Algorithm~\ref{concur:alg:goodcomp} outputs a subgraph $G[S]$.
	We show that $G[S]$ is a good component: 
	Line~\ref{concur:alg:goodcomp:removetrivial} ensures only non-trivial SCSs are considered. 
	After the removal of bad vertices from $S$ in Lines~\ref{concur:alg:goodcomp:removebad1}-\ref{concur:alg:goodcomp:removebad2}, we
	know that for all $1 \leq j \leq k$ that $U_j \cap S \neq \emptyset$ if
	$S \cap L_j \neq \emptyset$. Due to Line~\ref{concur:alg:goodcomp:outputgoodcomp} there
	is only one SCC in $G[S]$ and thus $G[S]$ is a good component.
	Second, if Algorithm~\ref{concur:alg:goodcomp} terminates without a good component, by
	Lemma~\ref{concur:lem:goodcomp:maintaingood}, we have that the initial graph has 
	no good component and thus the result is correct as well.
\end{proof}

The running time bounds for the decremental SCC algorithm of~\cite{BPW19} (cf. Corollary~\ref{concur:cor:returnnewsccs}) only
hold against an oblivious adversary. Thus we have to show that in our algorithm the sequence of edge deletions does not depend on the random choices of the decremental SCC algorithm. 
The key observation is that only the order of the computed SCCs depends on the random choices
of the decremental SCC and we eliminate this effect by sorting the SCCs.
\begin{proposition}\label{concur:prop:goodcomp:obadv}
	The sequence of deleted edges does not depend on the random choices of the decremental SCC
	Algorithm but only on the given instance.
\end{proposition}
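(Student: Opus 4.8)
The plan is to show that every choice the algorithm makes which influences \emph{which} edges are deleted, and in \emph{which} order, is a deterministic function of the input graph $\hat G$ and the Streett pairs $(L_j,U_j)$, and in particular is independent of the internal coin flips of the decremental structure $\A$. Once this is established, the sequence of \deleteannounce{} calls issued by Algorithm~\ref{concur:alg:goodcomp} may be regarded as fixed before $\A$ draws its randomness, so $\A$ is indeed used against an oblivious adversary and the expected running time bound of Corollary~\ref{concur:cor:returnnewsccs}, inherited from Theorem~\ref{concur:thm:decscc}, applies.

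First I would isolate the two outputs of $\A$ that actually depend on its coins: (i) the concrete reference that \rep{} returns for an SCC, and (ii) the order in which the freshly created SCCs appear in the list returned by \deleteannounce{}. Crucially, the coins do \emph{not} affect the partition of the current graph into SCCs, which is a purely combinatorial property, nor whether any given edge is present. The only edges ever removed are the edges $\edgeset{B}$ incident to a bad set $B$ at Line~\ref{concur:alg:goodcomp:dannounce}, and $B$ is the value of \Bad{$S$}, which depends solely on the vertex set $S$ and the pairs $(L_j,U_j)$; hence $B$ is determined once $S$ is. Furthermore, the references produced by \rep{} enter the algorithm only through equality tests and through union and difference on the lists \sccs{$\cdot$}; relabelling every reference by an arbitrary bijection therefore leaves all of these operations, and with them the entire control flow, unchanged.

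Next I would prove, by induction on the iterations of the outer while loop at Line~\ref{concur:alg:goodcomp:outerwhile}, that the multiset of sets held in $Q$ together with the order in which they are pulled is determined by the instance alone. In a single iteration the SCCs arising after the deletion of $B$ are fixed \emph{as a set}; membership in the list $K$ is decided by the deterministic size test $|X|\le|S|/2$; and the one place where the randomness-dependent listing order of \deleteannounce{} could leak into a later deletion, namely the construction of $K$ and the enqueuing of the new data structures into $Q$, is precisely the place that is canonicalised by the sort at Line~\ref{concur:alg:goodcomp:sort}, which reorders $K$ by vertex id. Thus the sets $D(X)$ are enqueued in a canonical order and, using a fixed instance-independent rule for \pull{}, are processed in a canonical order, closing the induction. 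Fixing in addition a canonical order on the edges inside each batch $\edgeset{B}$ (say, lexicographically by endpoint id) makes the whole deletion sequence a deterministic function of the instance.

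I expect the main obstacle to be the careful bookkeeping around these two randomness-carrying outputs of $\A$: one must verify that SCC references genuinely occur only inside equality and set operations, so that the run is invariant under relabelling them, and that the listing order of \deleteannounce{} can influence a deletion decision only through the construction of $K$, which is exactly what the sort at Line~\ref{concur:alg:goodcomp:sort} neutralises. Everything outside these two points is routine invariant maintenance of the kind already established in Lemma~\ref{concur:lem:goodcomp:ds}.
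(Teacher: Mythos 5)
Your proposal is correct and follows essentially the same route as the paper's proof: an induction over the iterations of the outer while loop, observing that the bad sets (and hence the deleted edge batches) are determined by the instance, that only the \emph{order} of the SCCs returned by $\A$ is randomness-dependent, and that the sort at Line~\ref{concur:alg:goodcomp:sort} canonicalises this order. Your additional remarks on relabelling-invariance of \rep{} references and fixing an order within each batch $\edgeset{B}$ are just extra bookkeeping on top of the same argument.
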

\begin{proof}
	Note that we only delete edges at Line~\ref{concur:alg:goodcomp:dannounce}.
	We prove that the claim holds for every iteration of the while loop:
	Initially, there is only one element in $Q$ at
	Line~\ref{concur:alg:goodcomp:outerwhile}. 
	Thus the claim holds initially.
	Assume that the claim holds before the while-loop.
	Thus, the $S$ we pull from $Q$ does not depend on the random choices of the
	decremental SCC algorithm $\A$. 
	Note, that when we remove vertices at Line~\ref{concur:alg:goodcomp:dannounce}
	the newly created SCCs are returned to $K$ are determined by the input instance and do not
	depend on the random choices of the algorithm.
	Only \emph{the order} in which the SCCs are returned 
	depends on the random choices of the decremental SCC algorithm $\A$. 
	When we look at the set $S$ of vertices which contains all of the new SCCs at 
	Line~\ref{concur:alg:goodcomp:for1}-\ref{concur:alg:goodcomp:ds1} we add all SCCs with
	at most $|S|/2$ vertices to a list. We sort this list by the minimum
	vertex id of each SCC and add them according to this order to our list data structure
	$Q$ (if there is an SCC with more than $|S|/2$ vertices it is at the
	end of $Q$). Thus, again the order in which SCCs are dequeued from $Q$ is
	fixed and does not depend on the random choices of $\A$. 
\end{proof}

Due to Lemma~\ref{concur:lem:goodcomp:runningtime}, Lemma~\ref{concur:lem:goodcomp:correctness} and
Proposition~\ref{concur:prop:goodcomp:obadv} we obtain the following result.
\begin{theorem}
	In a graph, the winning set for a $k$-pair Streett objective can be computed in $\O(m + b)$ expected time.  
\end{theorem}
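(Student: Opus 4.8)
The plan is to assemble the theorem from the good-component machinery developed above together with the reduction encoded in Algorithm \textsf{Streett}. First I would invoke the structural result of~\cite[Chapter~9, Proposition~9.4]{AH04}: the player-1 winning set for a Streett objective in a graph is exactly the set of vertices that can reach some SCC containing a good component. This decomposes the problem into three phases — computing the SCC decomposition, running good-component detection inside each SCC, and computing reachability to the SCCs flagged as satisfying.

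The first and third phases are linear: Tarjan's algorithm yields the SCC decomposition in $O(m)$ time~\cite{Tarjan72}, and a single graph-reachability computation identifies all vertices that reach a satisfying SCC in $O(m)$ time. The entire cost therefore rests on good-component detection, for which I would run Algorithm~\ref{concur:alg:goodcomp} on each SCC (each of which is strongly connected, as required by the algorithm's precondition). Correctness of each invocation is exactly Proposition~\ref{concur:lem:goodcomp:correctness}: \textsc{GoodComp} outputs a good component of the SCC if one exists and otherwise reports that none exists. Labelling an SCC as satisfying precisely when \textsc{GoodComp} returns a component thus yields the correct set of satisfying SCCs, and the reduction then gives the correct winning set.

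For the running time I would appeal to Lemma~\ref{concur:lem:goodcomp:runningtime}, which bounds the expected cost of \textsc{GoodComp} on a strongly connected graph by $\O(m+b)$. Applied to an SCC $C$ with $m_C$ internal edges and $b_C = \sum_{j}(|C \cap L_j| + |C \cap U_j|)$, this gives expected cost $\O(m_C + b_C)$. Since the SCCs are vertex-disjoint, their internal edge sets are disjoint and $\sum_C b_C \le b$, so summing over all SCCs telescopes to a total expected running time of $\O(m + b)$, dominating the linear cost of the other two phases.

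The one genuine subtlety — and the step I expect to require the most care — is the randomization. The near-linear bound of~\cite{BPW19} for decremental SCC maintenance (Corollary~\ref{concur:cor:returnnewsccs}) holds only against an oblivious adversary, so the sequence of edge deletions must be independent of the random bits of that data structure. This is precisely what Proposition~\ref{concur:prop:goodcomp:obadv} establishes for a single run of \textsc{GoodComp}: only the \emph{order} in which split SCCs are reported depends on the random choices, and sorting the small SCCs by vertex id before enqueueing them removes this dependence. Because each SCC of the input is processed with its own independent instance of the decremental structure, the oblivious-adversary condition holds locally for every invocation, and the expected-time guarantees compose. Combining these three ingredients yields the claimed $\O(m+b)$ expected running time.
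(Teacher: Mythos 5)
Your proposal is correct and follows the paper's own route: the paper derives this theorem by combining Algorithm \textsf{Streett} (the reduction via~\cite[Chapter~9, Proposition~9.4]{AH04}), Lemma~\ref{concur:lem:goodcomp:runningtime} for the expected $\O(m+b)$ running time, Proposition~\ref{concur:lem:goodcomp:correctness} for correctness of good-component detection, and Proposition~\ref{concur:prop:goodcomp:obadv} to handle the oblivious-adversary requirement of~\cite{BPW19}. Your per-SCC summation and the observation that each SCC runs its own independent decremental instance simply make explicit what the paper leaves implicit.
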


\section{Algorithms for MDPs}\label{concur:sec:algformdps}

In this section, we present expected near-linear time algorithms for
computing a MEC decomposition, 
deciding almost-sure reachability and
maintaining a MEC decomposition in a decremental setting. 
In the last section, we present an algorithm for \emph{MDPs} with Streett objectives by using the new algorithm
for the decremental MEC decomposition.


\subsection{Maximal End-Component Decomposition}\label{concur:sec:mec}
In this section, we present an expected near linear time algorithm for MEC
decomposition. 
Our algorithm is an efficient implementation of the static algorithm presented in~\cite[p.
29]{CH14}: The difference is that the bottom SCCs are computed with a dynamic SCC algorithm instead
of recomputing the static SCC algorithm.
A similar algorithm was independently proposed in an unpublished extended version
of~\cite{CHILP16}.

\para{Algorithm Description.}
The MEC algorithm described in Algorithm~\ref{concur:alg:mec_algo} repeatedly
removes bottom SCCs and the corresponding random attractor. 
After removing bottom SCCs the new SCC decomposition with its bottom SCCs is computed using a dynamic SCC algorithm.

\begin{algorithm}[ht]
	\caption{MEC Algorithm}\label{concur:alg:mec_algo}
	\small
	\KwIn{MDP $P = (V, E, \langle V_1, V_R \rangle, \delta)$, decremental SCC algorithm $\A$}
	Invoke an instance $\A$ of the decremental SCC algorithm\;
	Compute the SCC-decomposition of $G = (V,E)$: $C = \{C_1, \dots, C_\ell\}$ \;
	Let $M = \emptyset$; $Q \gets \{ C_i \in C \mid \text{ $C_i$ has no outgoing edges} \}$\;
	\While{$Q$ is not empty}{
		$C \gets \emptyset$\;
		\lFor{$C_k \in Q$}{
			$C \gets C \cup C_k$; $M \gets M \cup \{C_k\}$ 
		}		
		$A \gets \attr{R}{C}{P}$\;\label{concur:alg:mec:attrcomp}
	$Q \gets \A.\deleteannouncenooutgoing{$\edgeset{A}$}$\tcp{remove $A$ from $P$}\;\label{concur:alg:mec:removeattr}}
	\Return{$M$}\;
\end{algorithm}

Correctness follows because our algorithm just removes attractors of bottom SCCs
and marks bottom SCCs as MECs\@. This is
precisely the second static algorithm presented in~\cite[p. 29]{CH14} except that the bottom SCCs
are computed using a dynamic data structure. 
By using the decremental SCC algorithm described in
Subsection~\ref{concur:subsec:decsccs} we obtain the following lemma.
\begin{lemma}\label{concur:lem:mec:running}
	Algorithm~\ref{concur:alg:mec_algo} returns the MEC-decomposition of an MDP $P$ in expected time $\O(m)$. 
\end{lemma}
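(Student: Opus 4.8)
The plan is to charge the running time to three sources of work --- the single static SCC decomposition at the start, the random-attractor computations in Line~\ref{concur:alg:mec:attrcomp}, and the operations on the decremental SCC structure $\A$ in Line~\ref{concur:alg:mec:removeattr} --- and to show each of these is $\O(m)$. The organizing observation is that every vertex is removed from the working graph exactly once: once a vertex enters some attractor $A = \attr{R}{C}{P}$ it is deleted together with its incident edges and never reappears. Hence the attractor sets produced over all iterations are pairwise disjoint, and likewise every edge of $E$ is deleted at most once over the whole run.

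First I would bound the attractor computations. By the attractor bound of~\cite{I81,B80}, computing $A = \attr{R}{C}{P}$ costs $O(\sum_{v\in A}|\In{v}|)$ time, where $\In{v}$ is taken in the current (shrinking) graph and is therefore dominated by the in-degree of $v$ in the original MDP. Since each vertex lies in at most one attractor set across the whole algorithm, summing over all iterations gives $\sum_{v\in V}|\In{v}| = O(m)$ total time for all attractor computations. The initial SCC decomposition costs $O(m+n)$ by Tarjan's algorithm~\cite{Tarjan72}, and the bookkeeping that unions the current bottom SCCs into $C$ and records them in $M$ touches each vertex at most once, contributing $O(n)$.

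Next I would bound the cost of the decremental structure. The edges removed across all iterations are exactly $\bigcup_i \edgeset{A_i}$, and since each edge is deleted at most once the algorithm issues delete-announce-no-outgoing requests on a total of $O(m)$ edges. By Corollary~\ref{concur:cor:returnnewsccs} the whole sequence of these update operations (including the reporting of all newly created SCCs without outgoing edges, which populate $Q$) runs in total expected time $O(\decrSCC) = O(m\log^4 n) = \O(m)$; each reported SCC is enqueued into $Q$ and dequeued at most once. Adding the three contributions yields the claimed expected bound $\O(m)$.

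The step requiring the most care is the validity of invoking Corollary~\ref{concur:cor:returnnewsccs}: its guarantees hold only against an oblivious adversary, so I must argue that the sequence of deleted edges is independent of the random bits of $\A$. The crucial point is that in each iteration of the while-loop the algorithm processes the \emph{entire} set $Q$ at once --- it unions all current bottom SCCs into $C$ and deletes $\edgeset{\attr{R}{C}{P}}$ --- so the edge set removed in that iteration is a deterministic function of the current graph alone, regardless of the order in which $\A$ happens to report the SCCs. By induction on the iterations the current graph, and hence the full deletion sequence, is a deterministic function of the input MDP. Consequently, unlike the Streett good-component algorithm (where the SCCs had to be sorted to defeat adaptivity), no additional tie-breaking is needed here, and the oblivious-adversary bound of Corollary~\ref{concur:cor:returnnewsccs} applies directly.
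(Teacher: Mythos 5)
Your proof is correct and follows essentially the same route as the paper's: the initial Tarjan SCC decomposition in $O(m+n)$, the attractor computations charged to edges incident to removed vertices for an $O(m+n)$ total, and the decremental SCC structure contributing $O(\decrSCC)=\O(m)$ by Corollary~\ref{concur:cor:returnnewsccs}. Your careful argument that the deletion sequence is a deterministic function of the input (because each iteration unions \emph{all} of $Q$ before deleting, so the order in which $\A$ reports SCCs is irrelevant) is exactly the justification the paper gives, only stated there as a remark immediately after Theorem~\ref{concur:thm:mec} rather than inside the proof of the lemma.
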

\begin{proof}
	The running time of algorithm $\A$ is in total time $O(\decrSCC) = \O(m)$ by
	Theorem~\ref{concur:thm:decscc} and Corollary~\ref{concur:cor:returnnewsccs}.
	Initially, computing the SCC decomposition and determining the SCCs with no
	outgoing edges takes time $O(m+n)$ by using~\cite{Tarjan72}. 
	Each time we compute the attractor of a bottom
	SCC $C_k$ at Line~\ref{concur:alg:mec:attrcomp} we
	remove it from the graph by deleting all its edges and never
	process these edges and vertices again. Since we can compute the attractor $A$ at Line~\ref{concur:alg:mec:attrcomp} in 
	time $O(\sum_{v \in A} \In{A})$, we need $O(m+n)$ total time for computing the
	attractors of all bottom SCCs.
	Hence, the running time is dominated by the decremental SCC algorithm
	$\A$, which is $O(\decrSCC) = \O(m)$.
\end{proof}

The algorithm uses $O(m+n)$ space because the decremental
SCC algorithm $\A$ uses $O(m+n)$ space and $Q$ only contains 
vertices.
\begin{theorem}\label{concur:thm:mec}
	Given an MDP the MEC-decomposition can
	be computed in $\O(m)$ expected time. The algorithm uses $O(m + n)$ space.
\end{theorem}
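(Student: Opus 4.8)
The plan is to obtain Theorem~\ref{concur:thm:mec} as the combination of two facts about the single object Algorithm~\ref{concur:alg:mec_algo}: the expected running time, which is already established, and the space requirement, which needs a short separate accounting. Since correctness and the time bound were handled in Lemma~\ref{concur:lem:mec:running}, the only genuinely new content is the $O(m+n)$ space claim.

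For the time bound I would simply invoke Lemma~\ref{concur:lem:mec:running}. That lemma already argues that Algorithm~\ref{concur:alg:mec_algo} is the static MEC algorithm of~\cite[p.~29]{CH14} with the repeated static SCC recomputation replaced by the decremental structure $\A$, and that the expected cost is dominated by the total update time $O(\decrSCC)=\O(m)$ of $\A$ (via Theorem~\ref{concur:thm:decscc} and Corollary~\ref{concur:cor:returnnewsccs}), with the initial SCC decomposition costing $O(m+n)$ by~\cite{Tarjan72} and all attractor computations at Line~\ref{concur:alg:mec:attrcomp} summing to $O(m+n)$, since each vertex and its incident edges are deleted and thus touched only once and $\attr{R}{C}{P}$ costs $\sum_{v\in A}|\In{v}|$. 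So for the running time no new argument is required.

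The space bound is where I would spend the care. I would decompose the footprint into (i) the decremental SCC structure $\A$, and (ii) the bookkeeping added by Algorithm~\ref{concur:alg:mec_algo} and by Corollary~\ref{concur:cor:returnnewsccs}. For (i), I rely on the $O(m+n)$ space guarantee of the structure of~\cite{BPW19}: the edge sets $E_i$ partition $E$, so only $m$ edges are stored across the hierarchy $\hat{G}$, and the condensation at the top level corresponds to $\condense{G}$. For (ii), the counters introduced in Corollary~\ref{concur:cor:returnnewsccs} use one integer per node of $\condense{G}$, i.e.\ $O(n)$; the set $M$ of discovered MECs is a subpartition of $V$ and hence has total size $O(n)$; the worklist $Q$ holds references to SCCs with no outgoing edges and also has size $O(n)$; and the transient set $C$ and attractor $A = \attr{R}{C}{P}$ within one iteration are each $O(n)$. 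Because edges are only deleted and never created, the live edge set never exceeds $m$. Summing these contributions yields $O(m+n)$, and the theorem then follows by combining this with Lemma~\ref{concur:lem:mec:running}.

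I expect the one point worth flagging — rather than a deep obstacle — to be that the $O(m+n)$ space of $\A$ is inherited from the construction in~\cite{BPW19} rather than re-derived here; in particular a naive per-level storage of the hierarchy could suggest $O(n\log n)$ nodes, so I would confirm that~\cite{BPW19} represents the hierarchy within $O(m+n)$ space (e.g.\ via a shared union-find-style encoding over the $n$ vertices). Everything Algorithm~\ref{concur:alg:mec_algo} itself adds is manifestly $O(m+n)$, so once that external bound is taken as given the space claim is immediate.
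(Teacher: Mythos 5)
Your proposal matches the paper's own proof of this theorem: the time bound is delegated to Lemma~\ref{concur:lem:mec:running}, and the space bound follows from the $O(m+n)$ space of the decremental SCC structure of~\cite{BPW19} plus the algorithm's own bookkeeping ($Q$, $M$, counters, transient attractor sets), which is exactly the paper's argument, only stated there in a single sentence where you give a more granular accounting. The one point the paper adds, immediately after the theorem, is the remark that the sequence of deleted edges is determined by the instance and not by the random choices of $\A$, which is what licenses using the oblivious-adversary bound of~\cite{BPW19}; this does not change the structure of your argument but is needed for the expected-time guarantee you import via the lemma.
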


Note that we can use the decremental SCC Algorithm $\A$ of~\cite{BPW19} even though this algorithm only
works against an oblivious adversary as the sequence of deleted edges does not depend on the random choices 
of the decremental SCC Algorithm.

\subsection{Almost-Sure Reachability}
In this section, we present an expected near linear-time algorithm for the
almost-sure reachability problem. 
In the almost-sure reachability problem, we are given an MDP $P$ and a target set $T$ and
we ask for which vertices player~1 has a strategy to reach $T$ almost surely, i.e., $\ASW{\reach{T}}$.
Due to~\cite[Theorem 4.1]{CDHL16} we can determine the set $\ASW{\reach{T}}$ in time $O(m + \MEC)$ where $\MEC$ is the running time of the fastest MEC algorithm. 
We use Theorem~\ref{concur:thm:mec} to compute the MEC decomposition and obtain the following theorem.

\begin{theorem}\label{concur:thm:asreach}
	We can compute $\ASW{\reach{T}}$ in	$\O(m)$ expected time.
\end{theorem}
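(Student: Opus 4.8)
The plan is to reduce almost-sure reachability to MEC decomposition and then invoke the near-linear MEC algorithm just established. The starting point is the structural result of~\cite[Theorem 4.1]{CDHL16}, which I would treat as a black box: given an MDP $P$ and a target set $T$, once the MEC decomposition of $P$ is available, the set $\ASW{\reach{T}}$ can be computed in additional $O(m)$ deterministic time. Intuitively, this is because almost-sure reachability to $T$ in an MDP reduces to a graph-reachability-style computation over the condensed MDP where each MEC is collapsed: from a vertex one can almost-surely reach $T$ exactly when one can force entry into a MEC (or a vertex) from which $T$ is reachable with probability one, and this forcing relation is captured by iterated random-attractor computations on the collapsed structure, each of which is linear.

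Given that reduction, the proof is immediate. First I would run Algorithm~\ref{concur:alg:mec_algo} to obtain the MEC decomposition of $P$; by Theorem~\ref{concur:thm:mec} this takes $\O(m)$ expected time. Then I would feed this decomposition into the procedure of~\cite[Theorem 4.1]{CDHL16}, paying an additional $O(m)$ deterministic time to extract $\ASW{\reach{T}}$. Summing the two phases gives total expected time $O(m + \MEC) = O(m + \O(m)) = \O(m)$, since here $\MEC = \O(m)$. The randomization and the ``expected'' qualifier enter only through the MEC-decomposition phase, as the second phase is deterministic; in particular the oblivious-adversary caveat of the decremental SCC subroutine is already handled inside Theorem~\ref{concur:thm:mec}, so no further care about adaptivity is needed here.

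There is essentially no hard step remaining: the entire content of the theorem is the substitution of the new $\O(m)$-expected-time MEC bound into the pre-existing $O(m + \MEC)$ reduction. The only point worth double-checking is bookkeeping, namely that the output format of Theorem~\ref{concur:thm:mec} (a list of MECs together with the set of vertices belonging to no MEC) matches exactly the input expected by~\cite[Theorem 4.1]{CDHL16}; this is a trivial interface check rather than a genuine obstacle.
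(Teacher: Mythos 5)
Your proposal is correct and follows exactly the paper's own argument: invoke \cite[Theorem 4.1]{CDHL16} to reduce almost-sure reachability to an $O(m + \MEC)$ computation given the MEC decomposition, then substitute the $\O(m)$ expected-time bound of Theorem~\ref{concur:thm:mec}. The additional remarks on where the randomization enters and on the interface between the two components are sound but not needed beyond what the paper states.
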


\subsection{Decremental Maximal End-Component Decomposition}\label{concur:sec:decmec}
We present an expected near-linear time
algorithm for the MEC-decomposition which supports player-1 edge deletions and a query 
that answers if two vertices are in the same MEC\@.
We need the following lemma from~\cite{CH11} to prove the correctness of our algorithm.
Given an SCC $C$ we consider the set U of the random vertices in $C$ with edges leaving 
$C$. The lemma states that for all non-trivial MECs $X$ in $P$ the intersection with
$U$ is empty, i.e., $\attr{R}{U}{P} \cap X = \emptyset$.
\begin{lemma}[Lemma 2.1(1),~\cite{CH11}]\label{concur:lem:MECcorr}
	Let $C$ be an SCC in $P$. Let $U = \set{v \in C\cap V_R \mid E(v) \cap (V \setminus C) \neq
	\emptyset}$ be the random vertices in $C$ with edges leaving $C$. Let $Z = \attr{R}{U}{P} \cap
	C$. Then, for all non-trivial MECs $X$ in $P$ we have $Z \cap X = \emptyset$ and for any edge
	$(u,v)$ with $u \in X$ and $v \in Z$, $u$ must belong to $V_1$.
\end{lemma}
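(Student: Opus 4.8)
The plan is to reduce both assertions to the single key claim that $\attr{R}{U}{P} \cap X = \emptyset$ for every non-trivial MEC $X$. Indeed, since $Z = \attr{R}{U}{P} \cap C \subseteq \attr{R}{U}{P}$, this claim immediately yields $Z \cap X = \emptyset$. The edge assertion then follows cheaply: if $(u,v)$ is an edge with $u \in X$ and $v \in Z$, then $v \notin X$ because $Z \cap X = \emptyset$; were $u$ a random vertex, the closure property of MECs (every random vertex of a MEC has all of its out-edges inside the MEC) would force $v \in X$, a contradiction, so $u \in V_1$.

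Before the induction I would record one standing observation: any MEC meeting $C$ is contained in $C$. A MEC is strongly connected in $P$, so all of its vertices lie in a common SCC of $(V,E)$; hence if it intersects the SCC $C$ it lies entirely inside $C$. Thus I may assume $X \subseteq C$, since otherwise $X \cap C = \emptyset$ and the claim is vacuous (recall $Z \subseteq C$).

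To prove the key claim I would induct on the levels $A_0 \subseteq A_1 \subseteq \cdots$ of the random-attractor construction, where $A_0 = U$, showing $A_i \cap X = \emptyset$ for all $i$. For the base case, a vertex in $U \cap X$ would be a random vertex of the MEC $X$ possessing an edge leaving $C \supseteq X$, contradicting the property that every random vertex of a MEC keeps all its out-edges inside the MEC. For the inductive step, assume $A_i \cap X = \emptyset$ and suppose $v \in (A_{i+1} \setminus A_i) \cap X$. Then either (i) $v \in V_R$ with $\Out{v} \cap A_i \neq \emptyset$, or (ii) $v \in V_1$ with $\Out{v} \subseteq A_i$. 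In case (i), MEC-closure sends every out-edge of $v$ into $X$, so the witnessing successor in $A_i$ lies in $A_i \cap X$, a contradiction. In case (ii), since $P[X]$ is strongly connected and $|X| \geq 2$, the vertex $v$ has at least one successor inside $X$; that successor lies in $\Out{v} \subseteq A_i$, again contradicting $A_i \cap X = \emptyset$.

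I expect case (ii) to be the main point to get right: it is the only place where the hypothesis that $X$ is \emph{non-trivial} is used, and it is exactly what guarantees that a player-1 vertex of $X$ has an out-neighbour within $X$ (a trivial one-vertex end-component carries no such guarantee, which is precisely why trivial MECs are excluded). Once the induction goes through, we obtain $\attr{R}{U}{P} \cap X = \emptyset$, and both conclusions of the lemma follow as described in the first paragraph.
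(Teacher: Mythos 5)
The paper never proves this lemma itself --- it imports it verbatim from \cite{CH11} (Lemma~2.1(1)) --- so there is no in-paper proof to compare against. Your argument is correct and is essentially the standard one from the cited source: reduce everything to $\attr{R}{U}{P}\cap X=\emptyset$, prove that by induction over the attractor levels using the closure property of end-components (random vertices of $X$ cannot have edges out of $X$, and a player-1 vertex of a non-trivial, strongly connected $X$ always has a successor inside $X$), with non-triviality entering exactly where you flag it, in the player-1 case of the inductive step.
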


The \emph{pure MDP graph} $P^P$ of an MDP $P = (V, E, \langle V_1, V_R \rangle, \delta)$ is
the graph which contains only edges in non-trivial
MECs of $P$. More formally, the pure MDP graph $P^P$ is defined as follows:
Let $M_1, \dots M_k$ be the set of MECs of $P$.
Then we define the MDP $P^P = (V^P, E^P, \langle V_1^P, V_R^P \rangle, \delta^P)$ where
$V^P = V, V_1^P = V_1, V_R^P = V_R$, 
$E^P = \bigcup_{i= 1}^k \{(u,v) \in E \cap (M_i \times M_i)\}$ and
for each $v \in V_R$: $\delta^P(v)$ the uniform distribution over vertices $u$ with $(v,u) \in E^P$.

Throughout the algorithm, we maintain the pure MDP graph $P^P$ for an input MDP
$P$. Note that every non-trivial SCC in $P^P$ is also a MEC due to the fact that there are only edges inside
of MECs. Moreover, a trivial SCC $\{v\}$ is a MEC iff $v \in V_1$.
Note furthermore that when a player-1 edge of an MDP $P$ is
deleted, existing MECs might split up into several MECs 
but no new vertices are added to existing MECs.

Initially, we compute the MEC-decomposition in $\O(m)$ expected time using the algorithm
described in Section~\ref{concur:sec:mec}. Then we remove every edge that is not in a
MEC\@. 
The resulting graph is the pure MDP graph $P^P$. 
Additionally, we invoke a decremental SCC algorithm $\A$ which can (1) announce new SCCs
under edge deletions and return a list of their vertices and (2) can 
answer queries that ask whether two vertices $v,u$ belong to the same SCC\@.
When an edge $(u,v)$ is deleted, we know that (i) the MEC-decomposition stays
the same or (ii) one MEC splits up into new MECs and the rest of the
decomposition stays the same.
We first check if $u$ and $v$ are in the same
MEC, i.e., if it exists in $P^P$. If not, we are done. Otherwise, $u$ and $v$ are
in the same MEC $C$ and either (1) the MEC $C$ does not split or (2) the MEC
$C$ splits. In the case of (1) the SCCs of the pure MDP graph $P^P$ remain intact and nothing needs to be done. In the case of (2) we need to identify the new SCCs $C_1, \dots, C_k$ in $P^P$ using the decremental SCC algorithm $\A$. Let, w.l.o.g., $C_1$ be the SCC with the most vertices. We iterate through every edge
of the vertices in the SCCs $C_2, \dots, C_k$. By considering all the edges, we identify 
all SCCs (including $C_1$) which are also MECs. We remove all edges $(y,z)$ where $y$ and
$z$ are not in the same SCC to maintain the pure MDP graph $P^P$. 
For the SCCs that are not MECs let $U$ be the set of random vertices with edges
leaving its SCC\@. We compute and remove $A= \attr{R}{U}{P^P}$ (these vertices belong to no MEC due to
Lemma~\ref{concur:lem:MECcorr}) and recursively start the procedure on the new SCCs
generated by the deletion of the attractor. The algorithm is illustrated in Figure~\ref{concur:fig:decr_mec}.
\begin{figure}
	\centering
	\includegraphics[width=0.9\textwidth]{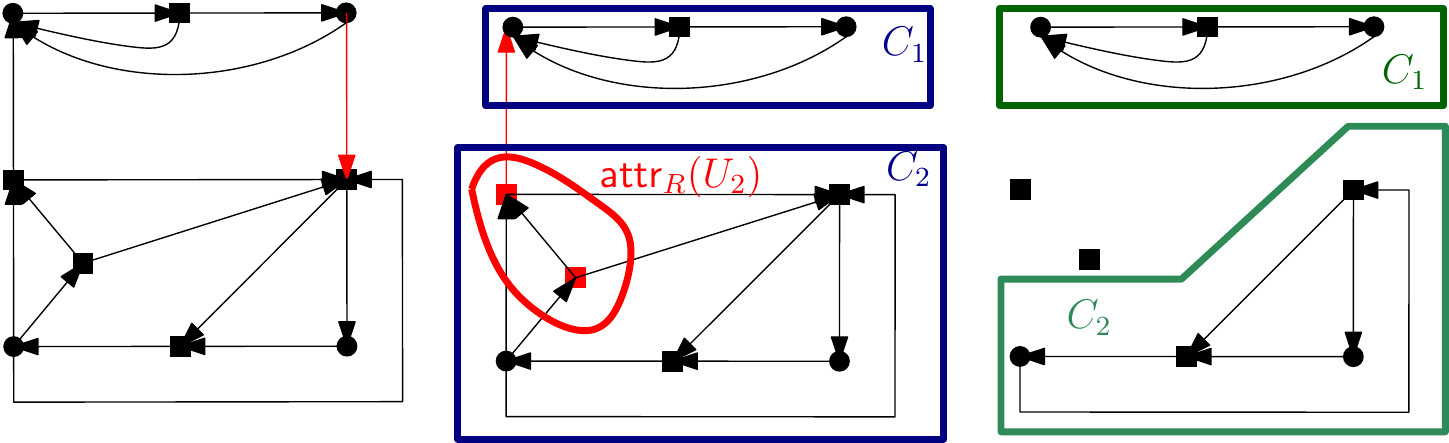}
	\caption{We delete an edge which splits the MEC into two new SCCs $C_1$ and $C_2$. The SCC $C_2$ 
	is not a MEC\@. We thus compute and remove the attractor of $U_2$ and the resulting SCC is a
MEC.}\label{concur:fig:decr_mec}
\end{figure}
\begin{algorithm}[t]
	\caption{Decremental MEC-update}\label{concur:alg:decr_mec}
	\small
	\KwIn{Player-1 Edge $e=(u,v)$}
	\If{$\A.\query{$u,v$}= \True$}{\label{concur:alg:decr_mec:checkid} 
		List $K \gets \{\A.\deleteannounce{$(u,v)$}$\}\;\label{concur:alg:decr_mec:removeedge}
		\While{$K \neq \emptyset$}{\label{concur:alg:decr_mec:newsccs} 
			pull a list $J$ of SCCs from $K$\ and let $C_1$ be the largest SCC\;
			$\{C_1, \dots C_k\} \gets$ Sort all SCCs in $J$ except $C_1$ by the smallest vertex id.\;\label{concur:alg:decr_mec:sort}
			$\MEC^{C_1} = \True$, $U_1 \gets \emptyset$\;\label{concur:alg:decr_mec:labelc1true}

			\For{$i = 2;\ i \leq k;\ i\!+\!+$}{\label{concur:alg:decr_mec:iteratesmaller}
				$\MEC^{C_i} = \True$, $U_i \gets \emptyset$\;\label{concur:alg:decr_mec:labelcitrue}

				\For{$e=(s,t)$ where $e \in
					\edgeset{C_i}$}{\label{concur:alg:decr_mec:iterateedges2k}
					\lIf{$(s \notin C_i) \lor (t \notin	C_i)$}{\label{concur:alg:decr_mec:putInD}
						$\A.\delete{$e$}$
					}

					\lIf{$(s \in V_R \land t \notin
						C_i)$}{\label{concur:alg:decr_mec:labelcifalse}
						$\MEC^{C_i} = \False$; $U_i \gets U_i \cup \set{s}$
					}

					\lIf{$(s \in V_R \land s \in
						C_1)$}{\label{concur:alg:decr_mec:labelc1false}
						$\MEC^{C_1} = \False$; $U_1 \gets U_1 \cup \set{s}$
					}
				}\label{concur:alg:decr_mec:iteratesmaller_end}
				\uIf{$\MEC^{C_i} = \False$}{\label{concur:alg:decr_mec:setMECi}
					$A \gets \attr{R}{U_i}{P^P}\cap
					C_i$\;\label{concur:alg:decr_mec:attrci}
					$J \gets \A.\deleteannounce{$\edgeset{A}$}\setminus \left(\bigcup_{a \in A}\A.\rep{$a$}\right)$\;\label{concur:alg:decr_mec:removeattrci}
					\lIf{$ J \neq \emptyset$}{
						$K \gets K \cup \{J\}$\label{concur:alg:decr_mec:addKi}
					}

				}
			}
			\If{$\MEC^{C_1} = \False$}{\label{concur:alg:decr_mec:setMEC1}
				$A \gets \attr{R}{U_1}{P^P} \cap C_1$\; \label{concur:alg:decr_mec:attrc1}
				$J \gets \A.\deleteannounce{$\edgeset{A}$}\setminus \left(\bigcup_{a \in A}\A.\rep{$a$}\right)$\;\label{concur:alg:decr_mec:removeattrc1}
				\lIf{$ J \neq \emptyset$}{
					$K \gets K \cup \{J\}$\label{concur:alg:decr_mec:addK1}
				}
			}		
		} 
	}
\end{algorithm}

\noindent Lemma~\ref{concur:lem:decr_mec:inv2} describes the key invariants of the while-loop at
Line~\ref{concur:alg:decr_mec:newsccs}. 
We prove it with a straightforward induction on the number
of iterations of the while-loop and apply Lemma~\ref{concur:lem:MECcorr}.
\begin{lemma}\label{concur:lem:decr_mec:inv2}
	Assume that $\A$ maintains the pure MDP graph $P^P$
	before the deletion of $e = (u,v)$ then the while-loop at Line~\ref{concur:alg:decr_mec:newsccs} maintains the
	following invariants: 
	\begin{enumerate}
	\item For the graph stored in $\A$  and all lists of SCCs $\{C_1, \dots, C_k\}$ in $K$ there are only edges inside the
		SCCs or between the SCCs in the list, i.e., 
		for each $(x,y) \in \bigcup_{j=0}^k E[C_j]$ we have $x,y \in \bigcup_{j=0}^k C_j$.
	\item If a non-trivial SCC of the graph in $\A$ is not a MEC of the current MDP it is in $K$.
	\item If $M$ is a MEC of the current MDP then we do not delete an edge of $M$ in the
		while-loop. 
	\end{enumerate}
\end{lemma}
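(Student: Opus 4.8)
The plan is to establish the three invariants simultaneously by induction on the number of completed iterations of the while-loop at Line~\ref{concur:alg:decr_mec:newsccs}, calling on Lemma~\ref{concur:lem:MECcorr} exactly at the step where an attractor is deleted, in order to certify that no edge internal to a surviving MEC is removed. For the base case, observe that before the loop we delete the single edge $(u,v)$ and place the SCCs into which its former MEC $C$ breaks as one list $J$ in $K$. Since $\A$ stored the pure MDP graph $P^P$ and $u,v$ lay in the common MEC $C$, every edge incident to $C$ in $P^P$ had both endpoints in $C$ and no edge of $P^P$ joined distinct MECs; deleting one edge adds nothing, so the SCCs of $J$ partition the vertex set $C$ and the region $\bigcup_j C_j = C$ has no edge crossing its boundary, which gives Invariant~1. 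Deleting the player-1 edge $(u,v)$ leaves the out-edges of every random vertex intact and affects only $C$, so every MEC other than the pieces of $C$ is untouched; hence the only non-trivial SCCs that can fail to be MECs lie in $J \subseteq K$, giving Invariant~2, while Invariant~3 holds vacuously since no edge has yet been deleted inside the loop.

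For the inductive step, suppose the invariants hold before an iteration that pulls a list $J$, names its largest SCC $C_1$, and scans $C_2,\dots,C_k$. By Invariant~1 the region $R=\bigcup_{j=1}^{k} C_j$ is isolated in $\A$, so every edge the inner loop deletes at Line~\ref{concur:alg:decr_mec:putInD} joins two distinct SCCs of $J$; since each non-trivial MEC is strongly connected in $P^P$ and therefore contained in a single SCC, no such inter-SCC edge can be internal to a current MEC, and Invariant~3 is preserved for these deletions. Isolation also implies that every edge leaving $C_1$ ends in some $C_i$ with $i\ge 2$, so scanning the incident edges of $C_2,\dots,C_k$ encounters all of them; this is why the flag $\MEC^{C_1}$ can be set correctly at Line~\ref{concur:alg:decr_mec:labelc1false} without iterating over $C_1$ directly, and dually $U_i$ records exactly the random vertices of $C_i$ with an out-edge leaving $C_i$.

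When a piece is found not to be a MEC, the algorithm removes $A=\attr{R}{U_i}{P^P}\cap C_i$ together with all incident edges and enqueues the SCCs of $C_i\setminus A$ (Lines~\ref{concur:alg:decr_mec:attrci}--\ref{concur:alg:decr_mec:addKi}). Applying Lemma~\ref{concur:lem:MECcorr} to the SCC $C_i$ and its exiting random vertices $U_i$ shows that $A$ meets no non-trivial MEC of the current MDP and that every edge entering $A$ from a MEC starts at a player-1 vertex; thus each deleted attractor-incident edge has an endpoint outside every current MEC and is not internal to one, so Invariant~3 is maintained. For Invariant~1, the boundary edges of $C_i$ have already been removed and the attractor-incident edges are now gone, so $C_i\setminus A$ is an isolated region and its SCCs form a valid new list; lists of $K$ untouched in this iteration are unaffected because all deletions lie inside $R$. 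For Invariant~2, a piece found to be a MEC becomes an isolated non-trivial SCC and is correctly not enqueued, the attractor vertices become trivial SCCs and hence fall outside the scope of the invariant, and every remaining non-trivial SCC that may still fail to be a MEC is an SCC of $C_i\setminus A$ and is therefore enqueued. The identical argument applied to $C_1$ after the inner loop (Lines~\ref{concur:alg:decr_mec:attrc1}--\ref{concur:alg:decr_mec:addK1}) finishes the step.

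I expect the main obstacle to be Invariant~3 during attractor deletion: we erase \emph{all} edges incident to $A$, and we must be sure none of them belongs to a surviving MEC. This is precisely what Lemma~\ref{concur:lem:MECcorr} delivers, both the disjointness of $A$ from every non-trivial MEC and the fact that any edge from a MEC into $A$ emanates from a player-1 vertex, so that removing it only severs a path that exits the MEC rather than an internal transition. A secondary subtlety is the $C_1$ optimization, whose correctness hinges on the isolation guarantee of Invariant~1; this is what forces the three invariants to be carried jointly through the induction rather than proved separately.
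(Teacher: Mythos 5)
Your proof is correct and takes essentially the same route as the paper's: induction over the iterations of the while-loop, using the fact that MECs are contained in single SCCs to handle the inter-SCC edge deletions, and Lemma~\ref{concur:lem:MECcorr} to show the attractor deletions never remove an edge internal to a surviving MEC. The additional details you supply (the isolation argument justifying the $C_1$ optimization, and the observation that attractor vertices become trivial SCCs) are consistent with, and slightly more explicit than, the paper's own argument.
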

\begin{proof}
	Initially, $\A$ contains the SCCs of the pure MDP graph $P^P$.
	If the deletion of $e$ does not change the SCCs the while loop is not executed and the three invariants hold.
	Thus in the remainder of the proof we consider the case where deletion of $e$ splits a SCC 
	and the newly created SCCs $C_1, \dots C_k$ are added to $K$.
%
%
	\begin{enumerate}
		\item Initially, the invariant holds because we assume that we maintain the pure MDP graph
			in $\A$. Assume the invariant holds before an iteration of the the while-loop.
			Let $C_i$ be an arbitrary SCC in the list $\{C_1, \dots, C_k\}$.
			Due to the for-loop at
			Lines~\ref{concur:alg:decr_mec:iterateedges2k}-\ref{concur:alg:decr_mec:iteratesmaller_end} we remove
			all edges going into or out of $C_i$ (Line~\ref{concur:alg:decr_mec:putInD}. Thus
			all outgoing edges of $C_i$ are removed. Note that because we
			delete edges between SCCs no new SCCs are created due to this removal.
			If $i \neq 1$ then, we remove edges inside of $C_i$ at Line~\ref{concur:alg:decr_mec:removeattrci}
			which might split up $C_i$ into new SCCs which are added to $K$. But
			notice that $C_i$ does not
			have any outgoing edges and thus the newly created SCCs have only edges inside the SCCs or
			between them which proves our claim.
			The same argument holds for $i = 1$ but now edges are removed at Line~\ref{concur:alg:decr_mec:removeattrc1}.

		\item Initially the invariant holds since we put any potential new SCCs after
			deleting the input edge $e$ into $K$ at Line~\ref{concur:alg:decr_mec:removeedge}.
			Assume the invariant holds before an iteration of the while-loop. 
			We prove that the invariant also holds after this iteration of the while-loop.  
			We only create potential new SCCs when we remove edges in $E(\attr{R}{U_i} \cap C_i)$ where $U_i = \{v \in C_i \cap V_R \mid E(v)
			\cap (V\setminus C_i) \}$  for all $1 \leq i \leq k$. The other edge deletions between
			SCCs cannot create new SCCs\@.
			Consider the removal of any $E(\attr{R}{U_i} \cap C_i)$:
			If the removal creates SCCs we put them into the set $J$ and add them to $K$
			(Lines~\ref{concur:alg:decr_mec:setMECi}-\ref{concur:alg:decr_mec:addKi} and
			Lines~\ref{concur:alg:decr_mec:setMEC1}-\ref{concur:alg:decr_mec:addK1}). 
			On the other hand, if the removal of $E(\attr{R}{U_i} \cap C_i)$ does not create new SCCs, $U_i$
			is empty and thus $C_i$ has no outgoing random edge. Thus $C_i$ is
			a MEC\@. If $K$ is now empty each
			SCC is a MEC due to the fact that the invariant is true at the end of the last iteration
			of the while-loop.
			As we do not remove elements from $K$, the condition holds for the remaining SCCs in $K$ due to the
			assumption that the invariant was true before the while-loop. 

		\item The  while loop deletes edges between SCCs at Line~\ref{concur:alg:decr_mec:putInD} and if we delete edges between
			SCCs we do not remove edges inside any MEC because a MEC is contained in the SCCs
			of the graph. If we delete attractors of random vertices with edges leaving their
			corresponding SCCs (Line~\ref{concur:alg:decr_mec:removeattrci}
			and~\ref{concur:alg:decr_mec:removeattrc1}), we do not remove edges of any nontrivial MEC due to
			Lemma~\ref{concur:lem:MECcorr}. 
	\end{enumerate}
\end{proof}

\begin{proposition}\label{concur:prop:decrMECcorrectness}
	Algorithm~\ref{concur:alg:decr_mec} maintains the pure MDP graph $P^P$ in the data structure $\A$ under
	player-1 edge deletions.
\end{proposition}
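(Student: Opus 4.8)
The plan is to read off the proposition from the three loop invariants already proved in Lemma~\ref{concur:lem:decr_mec:inv2} together with Lemma~\ref{concur:lem:MECcorr}: I would show that when the while-loop at Line~\ref{concur:alg:decr_mec:newsccs} terminates, the non-trivial SCCs of the graph held in $\A$ are exactly the non-trivial MECs of the MDP obtained after deleting $e=(u,v)$, and that the surviving edges are precisely the edges lying inside those MECs. First I would dispose of the case where the test at Line~\ref{concur:alg:decr_mec:checkid} returns $\False$. Then $u$ and $v$ lie in distinct SCCs of $P^P$, so $e$ is a player-1 edge contained in no MEC and hence $e \notin E^P$. Deleting such an edge cannot destroy the strong connectivity of any MEC (it is not an internal edge of one) and cannot break the closure condition on random vertices (it is a player-1 edge), so the MEC decomposition, and thus $P^P$, is unchanged; the algorithm leaves $\A$ untouched, which is correct.

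For the main case the test returns $\True$, so $u$ and $v$ share a MEC $C$. Because $P^P$ carries only edges internal to MECs and MECs are pairwise disjoint, $C$ has no outgoing edge in $\A$ and shares no edge with any other MEC; hence every MEC distinct from $C$ is never touched by the loop and remains a full SCC of $\A$ with its edges intact. Before invoking the invariants I would argue termination: every group pulled from $K$ is either certified to consist of MECs and discarded, or split by an attractor removal into strictly smaller SCCs that are re-enqueued; since SCC sizes strictly decrease this can happen only finitely often, so the loop ends with $K=\emptyset$.

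The core step is the identification of SCCs with MECs inside the region formerly spanned by $C$. Invariant~(2) states that a non-trivial SCC of $\A$ that is not a MEC of the current MDP still sits in $K$; since $K=\emptyset$ at termination, every non-trivial SCC of $\A$ is a MEC. For the reverse inclusion, invariant~(3) guarantees that no edge of a MEC of the new MDP is ever deleted, so each such MEC $M$ stays strongly connected in $\A$ and therefore lies inside a single SCC $C'$ of $\A$; by the first direction $C'$ is itself an end-component, and maximality of the MEC $M$ forces $M=C'$. Thus the non-trivial SCCs of $\A$ coincide with the non-trivial MECs of the new MDP.

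Finally I would account for the edge set. Within the former-$C$ region the only deletions are the cross-SCC edges removed at Line~\ref{concur:alg:decr_mec:putInD} and the attractor edges removed at Lines~\ref{concur:alg:decr_mec:removeattrci} and~\ref{concur:alg:decr_mec:removeattrc1}; by Lemma~\ref{concur:lem:MECcorr} the set $\attr{R}{U_i}{P^P}\cap C_i$ meets no non-trivial MEC, so these removals only discard edges outside every MEC. Invariant~(1) keeps each processed group closed, so no edge survives between two distinct SCCs, and the cross-edge deletions remove every within-group cross edge; hence every surviving edge has both endpoints in one SCC, which by the identification above is a MEC, while invariant~(3) ensures no MEC edge was lost. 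Therefore the graph in $\A$ has edge set exactly $E^P$ of the new MDP, i.e.\ $\A$ stores the pure MDP graph after the deletion. The hard part will be the reverse inclusion in the SCC-to-MEC identification: obtaining $M=C'$ cleanly needs both invariant~(2) and the maximality of MECs, and one must be sure that invariant~(3) protects every edge of $M$ across all the recursive attractor removals, not merely at the first split.
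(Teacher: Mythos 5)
Your proof is correct and follows essentially the same route as the paper's: both read the proposition off the three invariants of Lemma~\ref{concur:lem:decr_mec:inv2} together with Lemma~\ref{concur:lem:MECcorr}, using invariants~(2) and~(3) to identify the non-trivial SCCs of $\A$ at termination with the non-trivial MECs, and invariant~(1) plus the cross-edge deletions at Line~\ref{concur:alg:decr_mec:putInD} to show no inter-MEC edges survive. Your treatment is somewhat more explicit than the paper's (the $\False$ case, termination, and the maximality step $M=C'$ are spelled out rather than left implicit), but the argument is the same.
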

\begin{proof}
	We show that after deleting an edge using Algorithm~\ref{concur:alg:decr_mec}
	(i) every non-trivial SCC is a MEC and vice-versa, and (ii) there are no edges going from one MEC to another.
	Initially, we compute the pure MDP graph and both conditions are fulfilled.

	When we delete an edge and the while-loop at Line~\ref{concur:alg:decr_mec:newsccs}
	terminates (i) is true due to Lemma~\ref{concur:lem:decr_mec:inv2}(2,3). 
	That is, as we never delete edges within MECs they are still strongly
	connected  and
	when the while-loop terminates, $K = \emptyset$ which means that all SCCs
	are MECs. 	

	For (ii) notice that each SCC is once processed as a List $J$. 
	Consider an arbitrary SCC $C_i$ and the corresponding list of SCCs $J = \{C_1, \dots, C_k\}$ of the
	iteration in which $C_i$ was identified as a MEC\@. By
	Lemma~\ref{concur:lem:decr_mec:inv2}(1) there are no edges to SCCs not in the list. 
	Additionally, due to Line~\ref{concur:alg:decr_mec:putInD} we remove all edges from $C_i$ to other SCCs
	in $J$.
\end{proof}

Now that we maintain the pure MDP graph $P^P$ in $\A$,
we can answer MEC queries of the form:
$\query{$u,v$}$: Returns whether $u$ and $v$ are in the same MEC in $P$, by an SCC query
$\A.\query{$u,v$}$ on the pure MDP graph $P^P$. 

The key idea for the running time of Algorithm~\ref{concur:alg:decr_mec} is that
we do not look at edges of the largest SCCs but the new SCC decomposition by inspecting the edges of
the smaller SCCs. Note that we identify the largest SCC by processing the SCCs in a lockstep manner. This can only happen $\lceil \log n
\rceil$ times for each edge. Additionally, when we sort the SCCs, we only
look at the vertex ids of the smaller SCCs and when we charge this cost to the
vertices we need $O(n \log^2 n)$ additional time.

\begin{proposition}\label{concur:prop:decrMECrunning}
	\begin{sloppypar}
		Algorithm~\ref{concur:alg:decr_mec} maintains the MEC-decomposition of	$P$ under
		player-1 edge deletions in expected total time $\O(m)$. Algorithm~\ref{concur:alg:decr_mec} answers queries
		that ask whether two vertices $v,u$ belong to the same MEC in $O(1)$. The
		algorithm uses $O(m + n)$ space.
	\end{sloppypar}
\end{proposition}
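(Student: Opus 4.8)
The plan is to establish the three claims — constant query time, $O(m+n)$ space, and expected $\O(m)$ total update time — separately, with the update time being the crux. By Proposition~\ref{concur:prop:decrMECcorrectness} the engine $\A$ always stores the pure MDP graph $P^P$, in which every non-trivial SCC is exactly a MEC and every player-1 singleton SCC is a trivial MEC. Hence a MEC query $\query{u,v}$ reduces to the SCC query $\A.\query{u,v}$ on $P^P$, which runs in worst-case constant time by Corollary~\ref{concur:cor:returnnewsccs}; this gives the query bound. For space, $\A$ itself uses $O(m+n)$, and the only auxiliary structures the algorithm maintains — the list $K$ of lists of SCCs, the vertex sets $U_i$, and the attractor sets $A$ — are subsets of $V$ or $E$, so the total is $O(m+n)$.

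The heart of the argument is the $\O(m)$ bound on the total update time summed over the whole deletion sequence. First I would charge the decremental SCC engine itself: all calls to $\A.\delete$, $\A.\deleteannounce$, and $\A.\rep$ across all deletions cost $O(\decrSCC)=\O(m)$ in total by Corollary~\ref{concur:cor:returnnewsccs}. The remaining cost is the bookkeeping layered on top of $\A$: sorting the non-largest SCCs, scanning their incident edges in the inner for-loop at Lines~\ref{concur:alg:decr_mec:iterateedges2k}--\ref{concur:alg:decr_mec:iteratesmaller_end}, and computing the random attractors. The key structural fact — the same one underlying the decremental SCC result — is that whenever an SCC splits, every SCC other than the designated largest one $C_1$ has at most half the vertices of its parent SCC. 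Therefore each vertex lies in a ``small'' SCC at most $\lceil\log n\rceil$ times over the \emph{entire} run, and consequently each edge, being incident to at most two vertices, is scanned in the inner for-loop $O(\log n)$ times; identifying $C_1$ by processing the SCCs in lockstep does not change this asymptotically. Charging each edge scan to the endpoint lying in a small SCC yields $O(m\log n)$ for all scans, and the sort at Line~\ref{concur:alg:decr_mec:sort} — which inspects only the vertex ids of the small SCCs — contributes $O(n\log^2 n)$ by the same charging (each vertex appears $O(\log n)$ times, each contributing an $O(\log n)$ sort factor).

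For the attractor computations at Lines~\ref{concur:alg:decr_mec:attrci} and~\ref{concur:alg:decr_mec:attrc1}, I would argue the work is absorbed into the edge-deletion budget: each random attractor $A=\attr{R}{U_i}{P^P}\cap C_i$ is computed in time $O(\sum_{v\in A}|\In{v}|)$, and by Lemma~\ref{concur:lem:MECcorr} its vertices belong to no non-trivial MEC, so every edge incident to $A$ is deleted from $P^P$ and never revisited; the total over all deletions is thus $O(m+n)$. Combining the pieces — $\O(m)$ for the engine, $O(m\log n)$ for edge scans, $O(n\log^2 n)$ for sorting, and $O(m+n)$ for attractors — gives expected total update time $\O(m)$. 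Finally, since the guarantee of $\A$ holds only against an oblivious adversary, I would invoke the deterministic ordering imposed by the sort at Line~\ref{concur:alg:decr_mec:sort} (exactly as in Proposition~\ref{concur:prop:goodcomp:obadv}) to conclude that the sequence of deletions fed to $\A$ is independent of its random choices, so its expected-time bound applies.

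The step I expect to be most delicate is making the halving/charging argument fully rigorous across the \emph{recursion within a single deletion}: an attractor removal inside the while-loop can split a $C_i$ further and re-enqueue the fragments into $K$, so I must verify that the global ``at most $\log n$ small-SCC memberships per vertex'' invariant still holds when these nested splits are counted, and that no edge or vertex is charged twice across the successive attractor removals. I would handle this by tying the potential to each vertex's current home-SCC size in $P^P$, which only ever decreases and halves at most $\log n$ times regardless of whether a split is triggered directly by \deleteannounce or by a subsequent attractor deletion.
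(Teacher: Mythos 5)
Your proposal is correct and follows essentially the same route as the paper's proof: charging the engine calls to $O(\decrSCC)=\O(m)$, bounding the edge scans by $O(m\log n)$ and the sorting by $O(n\log^2 n)$ via the ``each vertex/edge is in a small SCC at most $\lceil\log n\rceil$ times'' halving argument, absorbing the attractor computations into the edges that get deleted, and reducing queries to $O(1)$ SCC queries on $P^P$. Your additional remarks on the oblivious-adversary issue and on a vertex-potential formulation of the halving invariant match what the paper handles in Proposition~\ref{concur:prop:mec_decr_oblivious} and implicitly in its charging scheme, respectively.
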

\begin{proof}
	Initialization, i.e., the initial MEC-decomposition can be done in $\O(m)$ expected time as shown
	in Section~\ref{concur:sec:mec}. 
	We will prove that the while loop at Line~\ref{concur:alg:decr_mec:newsccs} can
	also be computed in total time $O(\decrSCC + m \log^2 n)$. 
	When the algorithm detects new SCCs, say $C_1,C_2, \dots, C_k$,  we consider all vertices of the new SCCs
	except the vertices in the largest SCC, i.e., w.l.o.g., $C_1$. Considering
	all the edges of the vertices in $C_i$ for $i \in [2,k]$ we identify all the
	SCCs which are also MECs and the set $U$, i.e., the set of random vertices
	with edges leaving $C_i$. This can be achieved with one pass through the
	edges in $C_i$. 
	When an SCC $C$ falls apart, an edge can be in the part with 
	at most $|C|/2$ vertices only $\lceil \log n \rceil$ times.
	As this event can happen only $\lceil \log n \rceil$ 
	times for each edge and we only perform a constant amount of work
	for each such edge, we obtain a running time of $O(m \log n)$. 
	Also, we sort the newly creates SCCs at Line~\ref{concur:alg:decr_mec:sort}. As we do this only for the smaller SCCs,
	we can again bound the total running time by $n \log^2 n$: Every vertex can
	only be in the smaller SCC $\lceil \log n \rceil$ times and sorting the SCCs costs
	$O(1 + \log n)$ (we need to go through the SCCs except for the largest and check for the
	smallest vertex id) for each vertex which yields a total running time of $O(n
	\log^2 n)$.

	Computing the attractor at Line~\ref{concur:alg:decr_mec:attrci} or
	Line~\ref{concur:alg:decr_mec:attrc1} only costs total $O(m)$ time as we either delete the edges of the vertices in
	the attractor or, for the edges we do not delete, we charge the constant amount of work to the edges in the
	smaller half of the $C/2$ vertices.	This results in a running time of $O(\decrSCC + m)$ or
	expected running time $\O(m)$.
	Finally, the query operation can be implemented by $\A.\query{$u,v$}$ and takes time $O(1)$. 
\end{proof}

Due to the fact that the decremental SCC algorithm we use in Corollary~\ref{concur:cor:returnnewsccs} only
works for an oblivious adversary, we prove the following proposition. The key
idea is that we sort SCCs returned by the decremental SCC Algorithm. Thus, 
the order in which new SCCs are returned does only depend on the given instance.
\begin{proposition}\label{concur:prop:mec_decr_oblivious}
	The sequence of deleted edges does not depend on the random choices of the
	decremental SCC Algorithm but only on the given instance.
\end{proposition}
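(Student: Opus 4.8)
The plan is to mirror the argument already used for the graph case in Proposition~\ref{concur:prop:goodcomp:obadv}. The guiding principle is that the strongly connected component decomposition of a fixed directed graph is unique, so the internal randomness of the decremental SCC data structure $\A$ (Corollary~\ref{concur:cor:returnnewsccs}) can influence only the \emph{order} in which freshly split SCCs are reported, never \emph{which} SCCs are reported. I would prove the claim by induction over the iterations of the while-loop at Line~\ref{concur:alg:decr_mec:newsccs}, carrying as inductive hypothesis that the pure MDP graph currently stored in $\A$ together with the contents of the worklist $K$ is a deterministic function of the input MDP and the fixed sequence of player-1 edge deletions.

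Next I would enumerate every point at which Algorithm~\ref{concur:alg:decr_mec} deletes an edge and argue that each deletion target is fixed by the instance. Edges are removed only (i)~at Line~\ref{concur:alg:decr_mec:removeedge}, where the edge deleted is the externally supplied input edge $e=(u,v)$; (ii)~at Line~\ref{concur:alg:decr_mec:putInD}, where we delete exactly the edges having an endpoint outside the component $C_i$ being scanned; and (iii)~at Lines~\ref{concur:alg:decr_mec:removeattrci} and~\ref{concur:alg:decr_mec:removeattrc1}, where we delete the edges incident to $\attr{R}{U_i}{P^P}\cap C_i$. In case (ii) the set of inter-component edges is a function of SCC membership, which is instance-determined by uniqueness of the SCC decomposition; in case (iii) the set $U_i$ of random vertices of $C_i$ with an edge leaving $C_i$, and therefore its random attractor, is likewise a deterministic function of the current graph. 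Hence, conditioned on the inductive hypothesis, the \emph{set} of edges deleted within one iteration is independent of $\A$'s coin flips.

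The main obstacle is that the oblivious-adversary guarantee of $\A$ requires the \emph{entire ordered sequence} of deletions to be independent of $\A$'s randomness, not merely the multiset of deleted edges, so I must also pin down the order in which these deletions occur. The random order in which $\A$ reports the SCCs of a split must not leak into the deletion order, and two places need care. First, the designation of the largest SCC $C_1$ at Line~\ref{concur:alg:decr_mec:sort}: should several SCCs attain the maximum size, I break ties by smallest vertex id, so that $C_1$ is an instance-determined set. Second, the remaining SCCs are explicitly sorted at Line~\ref{concur:alg:decr_mec:sort} by smallest vertex id, which fixes the order of the outer for-loop at Line~\ref{concur:alg:decr_mec:iteratesmaller}; traversing the incident edges at Line~\ref{concur:alg:decr_mec:iterateedges2k} in a fixed id order then makes the inter-component deletions at Line~\ref{concur:alg:decr_mec:putInD} and the attractor deletions at Lines~\ref{concur:alg:decr_mec:removeattrci} and~\ref{concur:alg:decr_mec:removeattrc1} occur in an instance-determined order. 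Finally, I fix a deterministic key (again smallest vertex id) for pulling lists $J$ from the worklist $K$; the lists in $K$ are pairwise vertex-disjoint by Lemma~\ref{concur:lem:decr_mec:inv2}(1), so this key is well defined and distinct lists never interfere.

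Closing the induction is then routine. Under the inductive hypothesis the pure MDP graph in $\A$ is instance-determined, so the SCCs produced by the current deletions, the identity of $C_1$, and the fully sorted processing order of the remaining components are all instance-determined; therefore the ordered list of edges deleted in this iteration and the lists appended to $K$ are instance-determined as well. It follows that the complete sequence of delete operations presented to $\A$ depends only on the input MDP and the fixed deletion sequence. Thus $\A$ faces an oblivious adversary, its expected bound from Corollary~\ref{concur:cor:returnnewsccs} applies, and the running-time analysis underlying Proposition~\ref{concur:prop:decrMECrunning} goes through.
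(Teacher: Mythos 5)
Your proposal is correct and follows essentially the same route as the paper's proof: induction over the iterations of the while-loop, observing that the SCC decomposition itself is instance-determined so the randomness of $\A$ can only affect the \emph{order} in which new SCCs are reported, and neutralizing that by sorting the SCCs by minimum vertex id before processing them. Your additional care about tie-breaking for the largest SCC and about fixing the edge-traversal and worklist-pulling orders makes the argument somewhat more thorough than the paper's terse version, but it is the same underlying idea.
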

\begin{proof}
	Note that we only delete edges at
	Lines~\ref{concur:alg:decr_mec:removeedge},~\ref{concur:alg:decr_mec:putInD},~\ref{concur:alg:decr_mec:removeattrci}
	and~\ref{concur:alg:decr_mec:removeattrc1}.
	The initial edge deletion at Line~\ref{concur:alg:decr_mec:removeedge} does not
	depend on the decremental SCC algorithm.

	We will prove that the claim holds for every iteration of the while-loop at
	Line~\ref{concur:alg:decr_mec:newsccs}:
	Initially, there is only one element in $K$. Thus the claim holds initially.

	Assume that the claim holds before the while-loop.
	Thus, the list of SCCs we pull from $K$ does not depend on the random choices of the
	decremental SCC algorithm $\A$. 
	The edges we remove at Line~\ref{concur:alg:decr_mec:putInD} does not create new
	SCCs because they are not in any SCCs. 
	Note, that when we remove vertices at Lines~\ref{concur:alg:decr_mec:removeattrci}
	and~\ref{concur:alg:decr_mec:removeattrc1}
	the newly created SCCs are returned to $J$ in a order depending on the
	random choices of the decremental SCC algorithm $\A$. 
	Thus, we sort the SCCs in $J$ by the minimum
	vertex id and add them according to this order to our list data structure
	$K$ (the largest is put at the end of the list). 
	Thus, again, the order in which SCCs are processed from $Q$ is
	fixed and does not depend on the random choices of $\A$. 
\end{proof}

The algorithm presented in~\cite{BPW19} fulfills all the conditions of
Proposition~\ref{concur:prop:decrMECrunning} due to Corollary~\ref{concur:cor:returnnewsccs}. Therefore we obtain the following theorem due to
Proposition~\ref{concur:prop:decrMECcorrectness} and
Proposition~\ref{concur:prop:decrMECrunning}.

\begin{theorem}\label{concur:thm:mec_decr}
	Given an MDP with $n$ vertices and $m$ edges, the MEC-decomposition can be maintained under the deletion of $O(m)$ player-1 edges
	in total expected time $\O(m)$ and we can answer queries that ask whether two vertices $v,u$ belong to the same MEC in $O(1)$ time. 
	The	algorithm uses $O(m + n)$ space. The bound holds against an oblivious
	adversary.
\end{theorem}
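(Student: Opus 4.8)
The plan is to obtain the theorem by assembling the three statements already established for Algorithm~\ref{concur:alg:decr_mec}, so the real content lies in showing that maintaining the \emph{pure MDP graph} $P^P$ is both sufficient and efficient for decremental MEC maintenance. The guiding observation is structural: in $P^P$ every non-trivial strongly connected component is exactly a non-trivial MEC of the current MDP, and a singleton $\{v\}$ is a MEC iff $v \in V_1$. Hence, once $\A$ maintains $P^P$ correctly, a MEC query for a pair $(u,v)$ reduces to an SCC query on $P^P$, which Corollary~\ref{concur:cor:returnnewsccs} answers in worst-case $O(1)$ time. My first step is therefore to invoke Proposition~\ref{concur:prop:decrMECcorrectness} to certify that, after each player-1 edge deletion, the graph stored in $\A$ is indeed $P^P$: every non-trivial SCC is a MEC and no edge runs between distinct MECs.

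To unpack the correctness, I would rely on the loop invariants of Lemma~\ref{concur:lem:decr_mec:inv2}, proved by induction over iterations of the while-loop at Line~\ref{concur:alg:decr_mec:newsccs}. The invariant maintains that (1) all surviving edges lie within the current worklist of SCCs, (2) every non-trivial SCC that is not yet a MEC still sits in the worklist $K$, and (3) no edge inside a genuine MEC is ever deleted. The crucial enabler of (3) is Lemma~\ref{concur:lem:MECcorr}: when a MEC splits after the deletion of $(u,v)$, the random vertices of a new SCC $C_i$ that retain edges leaving $C_i$, together with their random attractor $\attr{R}{U_i}{P^P} \cap C_i$, belong to no non-trivial MEC, so deleting that attractor and recursing on the resulting pieces is safe. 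Combining (2) and (3) at termination ($K = \emptyset$) yields that every non-trivial SCC of $\A$ is precisely a MEC and all inter-MEC edges are gone, which is exactly what Proposition~\ref{concur:prop:decrMECcorrectness} asserts.

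Next I would import the resource bounds from Proposition~\ref{concur:prop:decrMECrunning}. The initial MEC decomposition and the formation of $P^P$ cost $\O(m)$ expected time by Section~\ref{concur:sec:mec}. For the updates, the key amortization is the ``inspect the smaller halves'' idea: whenever an SCC splits, I identify the largest piece $C_1$ by scanning the pieces in lockstep and then touch only the edges of the smaller pieces $C_2,\dots,C_k$ to detect which are MECs and to collect the escaping random vertices $U_i$. Since an edge can lie in the smaller half of a split only $O(\log n)$ times, the edge scans and attractor computations total $O(m\log n)$, and sorting the smaller pieces by minimum vertex id costs an additional $O(n\log^2 n)$; the underlying decremental SCC work is $O(\decrSCC) = \O(m)$. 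This delivers $\O(m)$ expected total update time, $O(1)$ query time, and $O(m+n)$ space.

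The step I expect to require the most care is the reconciliation with the oblivious-adversary restriction of Corollary~\ref{concur:cor:returnnewsccs}, handled by Proposition~\ref{concur:prop:mec_decr_oblivious}: because the randomized decremental SCC structure only guarantees its bounds when the deletion sequence is fixed in advance of its coin flips, I must argue that Algorithm~\ref{concur:alg:decr_mec} never lets the random bits of $\A$ influence which edges it subsequently deletes. The only quantity that depends on that randomness is the \emph{order} in which newly created SCCs are returned; sorting them by minimum vertex id before enqueuing into $K$ removes this dependence, so the deletion sequence is a deterministic function of the input instance alone. With correctness, the resource bounds, and this oblivious-adversary argument in hand, the theorem follows directly.
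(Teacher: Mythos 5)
Your proposal is correct and follows essentially the same route as the paper: the theorem is obtained by assembling Proposition~\ref{concur:prop:decrMECcorrectness} (correctness of maintaining the pure MDP graph $P^P$, via Lemma~\ref{concur:lem:decr_mec:inv2} and Lemma~\ref{concur:lem:MECcorr}), Proposition~\ref{concur:prop:decrMECrunning} (the $\O(m)$ update, $O(1)$ query, and $O(m+n)$ space bounds via the smaller-halves amortization), and Proposition~\ref{concur:prop:mec_decr_oblivious} (the sorting-by-vertex-id argument that keeps the deletion sequence independent of $\A$'s randomness). Your unpacking of each ingredient matches the paper's own proofs of those statements, so there is nothing to add.
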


\subsection{MDPs with Streett Objectives}

Similar to graphs we compute the winning region of Streett objectives with $k$ pairs $(L_i,U_i)$ (for $1
\leq i \leq k$) for an MDP $P$ as follows: 
\begin{enumerate}
\item We compute the MEC-decomposition of $P$.
\item For each MEC, we find good end-components, i.e., end-components where
	$L_i \cap X = \emptyset$ or $U_i \cap X \neq \emptyset$ for all $1 \leq i \leq k$ and label the MEC as satisfying.
\item We output the set of vertices that can almost-surely reach a satisfying MECs.
\end{enumerate}
For 2., we find good \emph{end-components} similar to how we find good
components as in Section~\ref{concur:sec:streettgraph}.
The key idea is to use the decremental MEC-Algorithm described in
Section~\ref{concur:sec:decmec} instead of the decremental SCC Algorithm. 
We modify the Algorithm presented in Section~\ref{concur:sec:streettgraph} as follows
to detect good end-components:
First, we use the decremental MEC-algorithm instead of the decremental SCC Algorithm.
Towards this goal, we augment the decremental MEC-algorithm with a function to return a list
of references to the new MECs when we delete a set of edges. 
Second, the decremental MEC-algorithm does not allow the deletion of arbitrary edges, but only player-1 edges. 
To overcome this obstacle, we create an equivalent instance where we remove player-1 edges when we remove `bad' vertices. 
\begin{lemma}\label{concur:cor:returnnewmecs}
	Given an MDP $P = (V,E,\langle V_1, V_R \rangle, \delta)$ with $m$ edges and $n$ vertices, we can maintain a
	data structure that supports the operation
	\begin{itemize}
	\item $\deleteannounce{$E$}$: Deletes the set of $E$ of player-1 edges $(u,v)$ from
		the MDP $P$. If the edge deletion creates new MECs $C_1, \dots, C_k$ the
		operation returns a list $Q = \{ C_1, \dots, C_k \}$ of references
		to the new non-trivial MECs.
	\end{itemize}
	in total expected update time $\O(m)$. The bound holds against an oblivious adaptive adversary.
\end{lemma}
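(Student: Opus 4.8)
The plan is to augment the decremental MEC-decomposition of Theorem~\ref{concur:thm:mec_decr} (realised by Algorithm~\ref{concur:alg:decr_mec}) so that every deletion additionally returns references to the newly created non-trivial MECs, exactly mirroring how Corollary~\ref{concur:cor:returnnewsccs} augments the decremental SCC algorithm. The central fact I would exploit is that Algorithm~\ref{concur:alg:decr_mec} maintains the pure MDP graph $P^P$ inside the decremental SCC data structure $\A$, and that in $P^P$ the non-trivial SCCs coincide exactly with the non-trivial MECs of the current MDP (Proposition~\ref{concur:prop:decrMECcorrectness}). Hence reporting the new non-trivial MECs reduces to reporting the non-trivial SCCs of $P^P$ that the deletion creates.

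First I would process each player-1 edge of the input set one at a time by the decremental MEC-update of Algorithm~\ref{concur:alg:decr_mec}, accumulating a single output list $Q$. The augmentation is local to the while-loop at Line~\ref{concur:alg:decr_mec:newsccs}: whenever an SCC $C_i$ (including the largest one $C_1$) is \emph{confirmed} to be a MEC, i.e.\ when $\MEC^{C_i}=\True$ holds after its incident edges have been inspected (for $C_1$ this is decided through the edges of the smaller SCCs, as in the original algorithm), I add a reference to $C_i$ to $Q$ provided $C_i$ is non-trivial. The SCCs with $\MEC^{C_i}=\False$ are \emph{not} added; instead their random attractor is removed and the resulting SCCs are re-enqueued into $K$, so that their surviving parts are confirmed and collected in later iterations.

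For correctness I would show that the non-trivial MECs gathered over all iterations of the loop are exactly the new non-trivial MECs. By Lemma~\ref{concur:lem:decr_mec:inv2}(2,3) every non-trivial SCC that is not yet a MEC stays in $K$ and no edge inside a genuine MEC is ever deleted, so the loop terminates with $K=\emptyset$ precisely when every surviving non-trivial SCC of $P^P$ is a MEC, and each such MEC is inserted into $Q$ at its moment of confirmation. Since the two endpoints of the deleted edge lay in one common MEC $C$ that ceases to exist once it splits, every non-trivial MEC produced from $C$ is genuinely new; therefore $Q$ holds references to exactly the new non-trivial MECs, and $Q=\emptyset$ precisely when the deletion splits no MEC (and the loop performs no confirmation).

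For the running time, inserting one reference costs $O(1)$ and happens only when an SCC is finished being processed as a MEC, an event already charged in the analysis of Proposition~\ref{concur:prop:decrMECrunning}; thus the augmentation adds no asymptotic overhead and the total expected update time remains $\O(m)$. The oblivious-adversary bound is inherited unchanged from Proposition~\ref{concur:prop:mec_decr_oblivious}, because the augmentation only reads off references to SCCs already computed by $\A$ and never alters which edges are deleted, so the deletion sequence still depends only on the input instance. The step I expect to be the main obstacle is the bookkeeping in the correctness argument: an SCC can first appear as a non-MEC and only a strict subset of it become a MEC after random-attractor removal, so a reference must be added to $Q$ only at the exact moment $\MEC^{C}=\True$ holds---adding it when the SCC is first created would wrongly report non-MEC SCCs and double-count MECs that are refined later.
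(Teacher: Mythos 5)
Your proposal is correct and follows essentially the same route as the paper: the paper also augments Algorithm~\ref{concur:alg:decr_mec} by recording a reference to an SCC of the pure MDP graph $P^P$ exactly at the moment it is confirmed to be a MEC (the checks at Lines~\ref{concur:alg:decr_mec:setMECi} and~\ref{concur:alg:decr_mec:setMEC1}), testing non-triviality before adding it, and handling later splits by replacing stale references with the new non-trivial MECs. Your observation about the timing subtlety (adding references only upon confirmation, not upon SCC creation) and the inheritance of the $\O(m)$ bound and oblivious-adversary guarantee matches the paper's (much terser) argument.
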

\begin{proof}
	The data structure $\A$ maintained in Section~\ref{concur:sec:decmec} can be extended to
	support the modified delete operation $\deleteannounce{$E$}$.
	To this end, we maintain a list $\L$ of references to the SCCs in $P^P$ that correspond to the
	new MECs in the current MDP\@.
	Whenever we identify a new MEC in Line~\ref{concur:alg:decr_mec:setMECi} or
	Line~\ref{concur:alg:decr_mec:setMEC1} in Algorithm~\ref{concur:alg:decr_mec}.
	we test whether it is non-trivial and add a reference to that MEC to the list $\L$. 
	Whenever a MEC in the List splits due to another edge update we remove the old identifier from the list
	and add the new non-trivial MECs (if any).
\end{proof}
\para{Deleting bad vertices.}
As the decremental MEC-algorithm only allows deletion of player-1 edges, we first modify the original instance $P = (V,E,\langle V_1,V_R \rangle, \delta)$ to a new instance $P' =(V',E',\langle V_1',V_R' \rangle, \delta')$ such that we can remove bad vertices by
deleting player-1 edges only.
In $P'$ each vertex $v \in V_x$ for $x \in \{1, R\}$
is split into two vertices $v_{in} \in V_1'$ and $v_{out} \in V_x'$ such that 
$E'=\{ (u_{out},v_{in}) \mid (u,v) \in E\} \cup \{(v_{in}, v_{out})\mid v \in V\}$
and $L_i' = \{ v_{in}  \in V' \mid v \in L_i \}$ and $U_i' = \{ v_{out} \in V' \mid v \in U_i \}$ for all $1 \leq i \leq k$.  
The new probability distribution is $\delta'(v_{out})[w_{in}] = \delta(v)[w]$ for $v \in V_R$ and
$w \in \Out{v}$. 
Note that for each $v \in V_R$ the corresponding vertex $v_{out} \in
V_R' $ has the same probabilities to reach the representation $v_{out}$ of a vertex as $v$.
The described reduction allows us to remove bad vertices from MECs by removing the
player-1 edge $(v_{in},v_{out})$. 

The key idea for the following lemma is that for each original vertex $v \in V$
either both $v_{in}$ and $v_{out}$ are part of a good end-component or none of
them. 
Note that the only way that $v_{in}$ and $v_{out}$ are strongly connected is when the other
vertex is also in the strongly connected component because $v_{in}$ ($v_{out}$) has only one
outgoing (incoming) edges to $v_{out}$ (from $v_{in}$).

\begin{lemma}\label{concur:lem:streetmec:modifiedinstance}
	There is a good end-component in the modified instance $P'$ iff there is a good
	component in the original instance $P$.
\end{lemma}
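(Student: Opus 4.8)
The plan is to exhibit a correspondence between the non-trivial good end-components of $P$ and those of $P'$ that is driven entirely by the gadget structure: the internal edge $v_{in}\to v_{out}$ gives $v_{in}$ a unique out-edge and $v_{out}$ a unique in-edge. As in the graph case I read ``good (end-)component'' as \emph{non-trivial} throughout, which is exactly what is needed for Streett satisfaction. First I would isolate the structural observation that underlies both directions: if $Y\subseteq V'$ induces a non-trivial strongly connected subgraph of $P'$, then for every $v\in V$ we have $v_{in}\in Y$ iff $v_{out}\in Y$. Indeed, since $P'[Y]$ is strongly connected with $|Y|\geq 2$, every vertex of $Y$ has both an out-edge and an in-edge inside $Y$; as the only out-edge of $v_{in}$ is $(v_{in},v_{out})$ and the only in-edge of $v_{out}$ is $(v_{in},v_{out})$, membership of either endpoint forces the other. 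Hence any such $Y$ has the form $Y=\set{v_{in},v_{out}\mid v\in X}$ for $X=\set{v\in V\mid v_{in}\in Y}$, and conversely this formula lifts a vertex set $X\subseteq V$ to a candidate set in $P'$.

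For the direction from $P$ to $P'$, I would take a non-trivial good end-component $X$ of $P$ and set $X'=\set{v_{in},v_{out}\mid v\in X}$. Strong connectivity transfers because every edge $(u,v)\in E$ with $u,v\in X$ is simulated by the path $u_{out}\to v_{in}\to v_{out}$ in $P'[X']$, while the internal edges only refine existing paths; since there are no self-loops and $|X|\geq 2$, $X'$ is non-trivial. The random-closure condition also transfers: the random vertices of $P'$ lying in $X'$ are exactly the $v_{out}$ with $v\in X\cap V_R$, and their out-edges go to $w_{in}$ for $w\in\Out{v}$; as $X$ is an end-component all such $w$ lie in $X$, so these $w_{in}$ lie in $X'$ (the $v_{in}$ are player-1 in $P'$ and impose no constraint). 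Finally, since $L_i'\cap X'=\set{v_{in}\mid v\in L_i\cap X}$ and $U_i'\cap X'=\set{v_{out}\mid v\in U_i\cap X}$, the condition ``$L_i\cap X=\emptyset$ or $U_i\cap X\neq\emptyset$'' translates verbatim into ``$L_i'\cap X'=\emptyset$ or $U_i'\cap X'\neq\emptyset$'', so $X'$ is good.

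The converse direction runs the same computation backwards. Given a non-trivial good end-component $X'$ of $P'$, the structural observation gives $X'=\set{v_{in},v_{out}\mid v\in X}$ with $X=\set{v\mid v_{in}\in X'}$; because a single original vertex yields the non-strongly-connected pair $\set{v_{in},v_{out}}$ (there is no edge $(v_{out},v_{in})$ absent a self-loop), $|X'|\geq 2$ forces $|X|\geq 2$, so $X$ is non-trivial. Projecting paths of $P'[X']$ onto $P[X]$ (replacing each $w_{out}\to z_{in}$ by $w\to z$ and contracting the internal edges) yields strong connectivity of $P[X]$, and the random-closure condition for $v\in X\cap V_R$ follows from that of $v_{out}$ in $X'$. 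The goodness condition again translates identically via $v\in L_i\cap X\Leftrightarrow v_{in}\in L_i'\cap X'$ and $v\in U_i\cap X\Leftrightarrow v_{out}\in U_i'\cap X'$.

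The only genuinely delicate point, which I would state as a standalone claim before the two directions, is the coincidence $v_{in}\in Y\Leftrightarrow v_{out}\in Y$ for non-trivial strongly connected $Y$; everything else is a routine transfer of the end-component and Streett-local conditions along the edge correspondence $(u,v)\leftrightarrow(u_{out},v_{in})$. I would also be careful to invoke non-triviality (equivalently, the no-self-loop assumption) exactly where it is needed: to exclude the degenerate pair $\set{v_{in},v_{out}}$ and to guarantee that each retained vertex sits on a cycle inside its component.
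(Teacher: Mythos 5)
Your proposal is correct and follows essentially the same route as the paper's proof: the same key observation that $v_{in}\in X'$ iff $v_{out}\in X'$ in any non-trivial strongly connected set of $P'$ (forced by $v_{in}$'s unique out-edge and $v_{out}$'s unique in-edge), the same correspondence $X \leftrightarrow \set{v_{in},v_{out} \mid v\in X}$, and the same verification of strong connectivity, random-closure, and the Streett goodness conditions. Your explicit treatment of non-triviality (ruling out the degenerate pair $\set{v_{in},v_{out}}$ via the no-self-loop assumption) is a small point the paper glosses over, but otherwise the arguments coincide.
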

\begin{proof}
We show the "if" and the "only if" parts separately.
	\begin{itemize}
		\item (if) Assume there is a good end-component $X'$ in the modified instance $P'$. Thus, 
			$L_i' \cap X = \emptyset$ or $U_i' \cap X' \neq \emptyset$ for all $1
			\leq i \leq k$ and $X'$ is strongly connected. Note that for each vertex $v_{in}$ in
			$X'$, $v_{out}$ must also be in $X'$
			(and vice versa) as	otherwise $X'$ is not strongly connected.

			Let $X = \{v \in V \mid v_{in} \in X' \}$. We prove it is a good end-component in $P$.
			First we prove that $X$ is strongly connected.
			Let $v, u \in X$ be arbitrary. Note that $v_{in} \in X'$. There
			is a path from $v_{in}$ to $u_{in}$ because $X'$ is strongly connected. Thus there is a path
			from $v$ to $u$ because $v$ posesses all incoming and outgoing edges of $v_{in}$ and
			$v_{out}$ respectively (the same holds for $u$ and $u_{in},u_{out}$ and the respective vertices on the
			path). Moreover, there is no random edge out of $X$ as the random edges in $X'$ only go to
			vertices $v_{in}$ in $X'$ and we include each corresponding $v$ in $X$.
			Thus $X$ is a end-component in $P$.

			It remains to show that $X$ is also a \emph{good} end-component.
			Let $i \in [1,k]$. In the first case, $L_i' \cap X' = \emptyset$. But then, by the
			definition of $L_i'$, we know that $L_i \cap X = \emptyset$. In the second case, $U_i'
			\cap X' \neq \emptyset$, i.e., there is some $v_{out} \in U_i'$ in $X'$. But then the
			corresponding vertex $v \in U_i$ is in $X$ because
			$v_{in} \in X'$ due to the fact that $X'$ is otherwise not strongly connected ($v_{out}$
			has only one incoming edge: $v_{in}$).
			Hence, $X$ is a good end-component in $P$.

		\item (only if) Assume there is a good component $X$ in the original instance $P$. Thus, 
			$L_i \cap X = \emptyset$ or $U_i \cap X' \neq \emptyset$ for all $1
			\leq i \leq k$. 
			Let $X' = \{v_{in} \in V'  : v \in X \} \cup \{ v_{out} \in V' \mid v \in X \}$.
			We first prove that $X'$ is strongly connected: Let $v_j, u_\ell \in X'$ for $j,\ell \in
			\{\mathit{in},\mathit{out}\}$ be arbitrary.
			By the definition of $X'$, $v,u \in X$ and there is a path from $v$ to $u$ in $X$. 	
			But then there is a path from $v_{in}$ to $u_{out}$ due to the fact that for each $z$ on the
			path from $v$ to $u$ each $z_{in}$ ($z_{out}$) on this
			path contains all incoming edges of $z$ (outgoing edges of $z$) and $z_{in},z_{out} \in E'$ which proves the claim.
			Moreover, there is no random edge out of $X'$ the random edges in $X$ only go to vertices $v$ in $X$
			and we include $v_{in}$ in $X'$.
			Thus $X'$ is a end-component in $P'$.
			We prove it is a good end-component in $P'$.
			Let $i \in [1,k]$. In the first case, $L_i \cap X = \emptyset$. But then, by the
			definition of $L_i'$, we know that $L_i' \cap X = \emptyset$. In the second case, $U_i
			\cap X \neq \emptyset$, i.e., there is some $v \in U_i$ in $X$. Then the
			corresponding vertex $v_{out} \in U_i'$ is in $X'$ by the definition of $X'$.\qedhere
	\end{itemize}
\end{proof}

\noindent On the modified instance $P'$ the algorithm for MDPs is identical to Algorithm~\ref{concur:alg:goodcomp} except that
we use a dynamic MEC algorithm instead of a dynamic SCC algorithm.

\begin{theorem}\label{concur:thm:mdp_Streett}
	In an MDP the winning set for a $k$-pair Streett objectives can be computed in $\O(m + b)$ expected time. 
\end{theorem}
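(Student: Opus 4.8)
The plan is to follow the three-step template already used for graphs: (1) compute the MEC decomposition of $P$, (2) detect inside each MEC a good end-component, and (3) compute the set of vertices that almost-surely reach a satisfying MEC. Steps~(1) and~(3) are essentially free for my purposes: by Theorem~\ref{concur:thm:mec} the MEC decomposition is obtained in $\O(m)$ expected time, and by Theorem~\ref{concur:thm:asreach} almost-sure reachability to the satisfying MECs, i.e.\ $\ASW{\reach{T}}$ for $T$ the union of satisfying MECs, is computed in $\O(m)$ expected time. Hence the whole difficulty concentrates in step~(2), the detection of good end-components, where I aim for an $\O(m+b)$ bound.

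For step~(2) I would reuse Algorithm~\ref{concur:alg:goodcomp} essentially verbatim, changing only the underlying dynamic primitive: instead of the decremental SCC structure I would plug in the decremental MEC structure of Theorem~\ref{concur:thm:mec_decr}, augmented by Lemma~\ref{concur:cor:returnnewmecs} so that each batch of player-1 edge deletions returns references to the newly created non-trivial MECs. The algorithm then repeatedly pulls a candidate set $S$ from its work-list, deletes bad vertices (those $v \in L_j$ with $U_j \cap S = \emptyset$), and either certifies $S$ as a good end-component or splits off the small new MECs, exactly as in the graph case but with ``SCC'' replaced by ``MEC'' throughout. Correctness carries over from the graph analysis of Lemma~\ref{concur:lem:goodcomp:maintaingood}: the invariant that every good end-component of the input stays inside some maintained set $S$ holds because bad vertices can never belong to a good end-component, so no edge of a good end-component is ever deleted.

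The main obstacle is that the decremental MEC structure supports the deletion of \emph{player-1} edges only, whereas removing a bad vertex requires deleting \emph{all} its incident edges, including random ones. I would resolve this with the vertex-splitting reduction to the instance $P'$: each $v$ becomes $v_{in} \in V_1'$ and $v_{out} \in V_x'$ joined by the single player-1 edge $(v_{in},v_{out})$, with $L_i'$ placed on the $v_{in}$ copies and $U_i'$ on the $v_{out}$ copies. Deleting a bad vertex in $P$ is then realized by deleting the one player-1 edge $(v_{in},v_{out})$ in $P'$, which is permitted. Lemma~\ref{concur:lem:streetmec:modifiedinstance} guarantees that $P'$ has a good end-component exactly when $P$ does, so running the modified Algorithm~\ref{concur:alg:goodcomp} on $P'$ solves good-component detection for $P$; and since $P'$ has $O(n)$ vertices and $O(m)$ edges, the asymptotics are unaffected.

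For the running time I would reuse the amortization of Lemma~\ref{concur:lem:goodcomp:runningtime}: the dynamic MEC operations cost $\O(m)$ in total by Theorem~\ref{concur:thm:mec_decr}, while constructing and maintaining the Streett data structure and identifying bad vertices costs $O(n+b)$ by Lemma~\ref{concur:lem:Streett_ds}, where each vertex and each bit is charged $O(\log n)$ times because a vertex lands in a ``small'' MEC only $O(\log n)$ times. Summing yields $\O(m+b)$. Finally, since the decremental structure inherits the oblivious-adversary restriction from~\cite{BPW19}, I would sort the returned MECs by minimum vertex id before enqueuing them, exactly as in Proposition~\ref{concur:prop:goodcomp:obadv}; this makes the deletion sequence depend only on the input instance and not on the random choices of the dynamic algorithm, justifying the expected bound. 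Combining steps~(1)--(3) then gives the claimed $\O(m+b)$ expected time.
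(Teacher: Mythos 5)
Your proposal is correct and follows essentially the same route as the paper: the three-step template (MEC decomposition via Theorem~\ref{concur:thm:mec}, good end-component detection by running Algorithm~\ref{concur:alg:goodcomp} with the decremental MEC structure of Theorem~\ref{concur:thm:mec_decr} and Lemma~\ref{concur:cor:returnnewmecs} in place of the decremental SCC structure, and almost-sure reachability via Theorem~\ref{concur:thm:asreach}), together with the vertex-splitting reduction to $P'$ and Lemma~\ref{concur:lem:streetmec:modifiedinstance} to restrict deletions to player-1 edges. Your added details on the amortized charging and on sorting returned MECs to respect the oblivious-adversary assumption match the paper's treatment of the graph case and its decremental MEC analysis.
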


	\chapter[Faster Algorithms for Bounded Liveness in Graphs and Game Graphs][Faster Algs.\@ for
	B.\@ Liveness in Graphs \& Game Graphs]{Faster Algorithms for Bounded Liveness in Graphs and Game Graphs}\label{cha:icalp}
	In this chapter, we consider algorithms for computing the winning set of bounded B\"uchi objectives in graphs and MDPs.

\section{Introduction}

\para{Graphs and games on graphs.}
Graphs and two-player games played on graphs provide a general mathematical
framework for a wide range of problems in computer science:
in particular, for the analysis of reactive systems, where the 
vertices of the graph represent the states of a reactive system and 
the edges represent the transitions between the states.
The classical synthesis problem (the problem of Church) asks for the construction of a
winning strategy in a game played on the graph~\cite{Church62,Rab69-TAMS,PnueliR89} 
and the fundamental model-checking problem is an algorithmic graph problem~\cite{ModelCheckingBook}.

\para{Omega-regular specifications: strength and weakness.}
In the analysis of reactive systems, the desired temporal properties that the system should
satisfy constitute the specification.
The class of $\omega$-regular languages provides a robust specification formalism~\cite{MannaP92,PnueliR89}.
Every $\omega$-regular objective can be decomposed into a safety part and a liveness part~\cite{AlpernS85}. 
The safety part ensures that the system will not do anything ``bad'' 
(such as violating an invariant) within any finite number of transitions.
The liveness part ensures that the system will do something ``good'' 
(such as proceed or respond) in the long-run. 
Liveness can be violated only in the limit, by infinite sequences of transitions, 
as no bound is specified on when a ``good'' event must happen.
This infinitary formulation has several strengths, such as robustness and
simplicity~\cite{MannaP92,Thomas97}. 
However, there is also  a weakness of the classical definition of liveness: it
can be satisfied by systems that are unsatisfactory because no bound
can be put between the occurrence of desired events.

\para{Stronger notion of liveness.}
For the weakness of the infinitary formulation of liveness, alternative and stronger 
formulations of liveness have been proposed.
The first formulation is \emph{bounded liveness} which ensures, given a bound $d$, that eventually,
good events happen within $d$ transitions. The second formulation is 
\emph{finitary liveness} which requires the existence of a bound such that, eventually,
good events happen within the bound. 
Finitary liveness was proposed in~\cite{AlurH98} 
 and has been widely studied; e.g.,
games on graphs with finitary $\omega$-regular objectives~\cite{CHH09},
and logics such as PromptLTL based on finitary liveness~\cite{KPV09}.
The notion of bounded liveness has also been investigated in many contexts, such as
MSO with bounding quantifiers~\cite{BC06}, bounded model-checking~\cite{BCCSZ03}, and ``bounded 
until'' in logics such as RTCTL~\cite{EmersonMSS92}.

\para{Algorithmic questions for bounded liveness.}
In this work, we consider graphs and games on graphs with bounded liveness objectives.
Consider a graph with $n$ vertices, $m$ edges, and a bounded liveness objective
with bound $d$.
A basic algorithmic approach is to reduce the bounded liveness objective to a 
liveness objective on a larger graph (that we call the auxiliary graph) that explicitly keeps track of the
number of transitions since the last good event. 
This basic approach yields the following bounds: 
(a)~an $O(dm)$-time algorithm for graphs (applying the linear-time algorithm for liveness
objectives on graphs),
and 
(b)~an $O(n^2 d^2)$-time algorithm for games on graphs (applying the current best-known 
$O(n^2)$-time algorithm for games on graphs with liveness objectives~\cite{CH14}).
A fundamental algorithmic question is whether the above bounds can be improved.

\para{Our contributions.} 
In this work, our main contributions are improved algorithmic bounds for 
bounded liveness on graphs and games on graphs.
\begin{itemize}
\item In graphs, there are two relevant semantics: (a)~an existential semantic that asks whether
there exists a path to satisfy the objective, and (b)~a universal semantic that asks whether
all paths satisfy the objective. The answer to the universal semantics with bounded liveness is
``Yes''
if and only if the answer is ``No'' for existential semantics with the complementary bounded coliveness objective.
We consider graphs with the existential semantics and bounded liveness and bounded coliveness
objectives.
For bounded liveness objectives, all previous algorithmic approaches yield an
$O(n^3)$ worst-case time-bound (where $d=O(n)$) and we present a randomized algorithm with one-sided error
whose worst-case time-bound is $O(n^{2.5} \log n)$.
For bounded coliveness objectives, we present a deterministic linear-time algorithm.

\item For games on graphs with bounded liveness objectives, we present an $O(n^2 d)$-time 
algorithm that improves the previous $O(n^2 d^2)$-time algorithm.

\end{itemize}

\para{Significance of the contributions.}  On the technical front, it is
threefold.
\begin{enumerate}
  \item To break the $O(n^3)$-time barrier for graphs, we exploit randomization
  to estimate for all pairs of good events how far they are from each other.
  Using this information along with a suitably modified auxiliary graph results
  in the faster $O(n^{2.5} \log n)$-time algorithm.
\end{enumerate}
  To get the improved time bound of $O(n^2 d)$ for game graphs:
\begin{enumerate}
  \setcounter{enumi}{1}
  \item we construct an auxiliary game graph (similar to the graph case) and make a crucial observation that this game graph after
  each iteration has a lot of structure, a property we call \emph{induced
    symmetry};
  \item we strategically introduce as many ``layover'' vertices as there are
  good events; in combination with induced symmetry, this enables us to prove that a significant chunk of the auxiliary
  game graph is deleted after each iteration.
\end{enumerate}
Furthermore, there are several important implications of our contributions.
First, for graphs with bounded liveness objectives, the previous worst-case time-bound
is $O(n^3)$. 
In recent years, many such algorithmic problems with $O(n^3)$ bound are conditionally optimal with a reduction from classical problems such as 
BMM (boolean matrix multiplication)~\cite{AbboudW14,AbboudWY18,ChatterjeeDHL16b,CDHL16,CDHS18ICAPS,VW2018survey}.
Our new algorithm breaks the $O(n^3)$ barrier and shows that such conditional 
lower bound approaches do not apply for bounded liveness in graphs.
Second, for graphs with bounded coliveness objectives our linear-time bound shows
that there is a very efficient algorithm for the complement of the bounded liveness
objectives.
Finally, we show that the basic algorithmic approach for games on graphs 
can also be improved.
Given our results improve the bounds for graphs and games on graphs with bounded liveness
objectives, there are several interesting questions for future work.
Whether the bounds can be further improved or a deterministic sub-cubic time algorithm 
can be obtained for graphs with bounded liveness objectives are the most interesting 
algorithmic open questions.
\section{Algorithms for Graphs}
Graphs are a special case of game graphs with
$V_2=\emptyset$.  Hereon, we will call this ``the graph case'' as
opposed to ``the game graph case'' (where $V_1 \neq \emptyset$ and
$V_2 \neq \emptyset$).  The objectives we consider are \emph{prefix
  independent}, i.e., if $\omega \in \Omega$, then any play obtained by adding
or removing a finite prefix to or from $\omega$ is also in $\Omega$.  Hence,
with respect to computing winning vertices, it is enough to focus on strongly
connected graphs. The reasoning is as follows.

In the input graph, we call a strongly connected component (SCC) $S$ \emph{good} if the
graph restricted to $S$ has a winning vertex. Due to prefix independence, all
vertices in a good SCC and those from which you can reach a good SCC are
winning.  We will prove that such vertices are exactly the winning vertices, and that
this set can be computed by the following procedure:
\begin{itemize}
  \item Compute the SCCs of the input graph (can be done in linear time~\cite{Tarjan72}).
  \item Determine for each SCC if it is good (this step depends on the objective).
  \item Consider the set of all vertices belonging to a good SCC.\@ Perform
  reachability to this set.  (This can also be done in linear time.)
\end{itemize}

\begin{lemma}
  \label{lem:goodscc}
  A vertex $v$ is a winning vertex if and only if it there is path from $v$ to
  some vertex in a good SCC.
\end{lemma}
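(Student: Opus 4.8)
The plan is to prove the two implications separately, using \emph{prefix independence} of the objective as the main tool in each direction. Recall that in the graph case (where $V_2 = \emptyset$ and player~1 fully controls the play) a vertex $v$ is winning exactly when there is some play starting at $v$ that lies in the objective $\Omega$.

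For the ``if'' direction, suppose there is a path from $v$ to a vertex $u$ lying in a good SCC $S$. By the definition of good, $G[S]$ contains a winning vertex $w$, so there is a play $\omega_w$ starting at $w$ and using only edges inside $S$ with $\omega_w \in \Omega$. Since $S$ is strongly connected, there is a finite path inside $S$ from $u$ to $w$. Concatenating the given path $v \rightsquigarrow u$, this internal path $u \rightsquigarrow w$, and the infinite play $\omega_w$ yields a single play starting at $v$ that, after a finite prefix, coincides with $\omega_w$. By prefix independence this play is again in $\Omega$, so $v$ is a winning vertex.

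For the ``only if'' direction, suppose $v$ is winning and fix a play $\omega = \ls v_0, v_1, v_2, \dots \rs$ with $v_0 = v$ and $\omega \in \Omega$. The key step is to extract a good SCC from the tail of $\omega$. Consider $\Inf{\omega}$, which is non-empty since the graph is finite. I claim $G[\Inf{\omega}]$ is strongly connected: there is an index $N$ past which $\omega$ visits only vertices of $\Inf{\omega}$, and for any $x,y \in \Inf{\omega}$ the tail beyond $N$ contains an occurrence of $x$ followed later by an occurrence of $y$, which gives a path from $x$ to $y$ lying entirely inside $\Inf{\omega}$. Hence $\Inf{\omega}$ is a strongly connected subgraph and is therefore contained in a unique maximal SCC $S$ of the input graph. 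The suffix $\ls v_N, v_{N+1}, \dots \rs$ is then an infinite play using only vertices of $\Inf{\omega} \subseteq S$, so it is a play in $G[S]$; by prefix independence it also lies in $\Omega$, making $v_N$ a winning vertex of $G[S]$ and $S$ a good SCC. Finally, the prefix $\ls v_0, \dots, v_N \rs$ is exactly a path from $v$ to the vertex $v_N \in S$, so $v$ reaches a good SCC.

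I expect the only genuine obstacle to be the strong-connectivity claim for $\Inf{\omega}$, together with the care needed to ensure the extracted suffix is a legitimate play living inside $G[S]$. Both reduce to elementary reasoning about which vertices recur infinitely often in a play on a finite graph, after which the remaining bookkeeping follows mechanically from prefix independence.
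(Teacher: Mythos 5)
Your proposal is correct and follows essentially the same approach as the paper: both directions rest on prefix independence, and the ``only if'' direction identifies a good SCC as the one containing the tail of a winning play. The only difference is in one sub-step: the paper justifies that the play eventually settles inside a single SCC by appealing to the acyclicity of the SCC condensation, whereas you prove it directly by showing $\Inf{\omega}$ is strongly connected and hence contained in a single SCC --- a slightly more detailed but equivalent argument.
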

\begin{proof}
  As mentioned before, due to prefix independence, if $v$ has a path to some
  vertex in a good SCC, then it is winning.  Next, we show the converse.

  If $v$ is winning, then there is a winning play $\omega$ starting at $v$.
  Since SCCs themselves form a directed acyclic graph (DAG), $\omega$ must
  eventually enter an SCC $S$ and stay there.  Again, due to prefix
  independence, the vertices visited by $\omega$ in $S$ are also winning, i.e.,
  $S$ is a good SCC.
\end{proof}

By Lemma~\ref{lem:goodscc} and the procedure described above it, the problem of
computing the winning vertices is reduced to determining, given a
strongly-connected graph, whether there is a winning vertex or not.  More
formally, we get the following lemma.

\begin{lemma}
  \label{lem:reducetoscc}
  Let $S_1, S_2, \ldots$ be SCCs of the graph $G = (V, E)$.  When
  $V_2 = \emptyset$, i.e., in the graph case, for a prefix
  independent objective, the set of winning vertices can be computed in time
  $O(m + \sum_i t(S_i))$ time, where $m = |E|$ and $t(S_i)$ is the time required
  to compute whether $S_i$ is a good SCC or not.
\end{lemma}

In this chapter, we consider bounded B\"{u}chi and bounded coB\"{u}chi objectives.

\subsection{The Bounded B\"{u}chi Objective}
We are given a graph $G = (V, E)$, a set $B$ of B\"{u}chi vertices, and a positive
integer $d$.  A cyclic-walk in $G$ is a walk $(v_1, v_2, \ldots, v_\ell)$ such that $v_1 = v_\ell$.
We say that a cyclic-walk $C$ is \emph{feasible} if it has
at least one B{\"u}chi vertex and the number of edges in $C$
between any two consecutive B{\"u}chi vertices is at most $d$.  We assume that
$G$ is strongly connected, and our goal is to determine if there is a winning
vertex in $G$.  Then, using Lemma~\ref{lem:reducetoscc}, we generalize the
result to a graph that might not be strongly connected. 
The following lemma reduces this problem to finding a feasible cyclic-walk in $G$.

\begin{lemma}
  \label{lem:feasible}
  The strongly-connected input graph $G$ has a winning vertex with respect to
  the bounded B\"{u}chi objective if and only if it has a feasible cyclic-walk.
\end{lemma}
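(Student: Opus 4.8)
The plan is to prove both directions by first translating the window-based definition of the bounded B\"uchi objective into a more convenient \emph{gap condition}: a play $\omega = \langle v_0, v_1, \ldots \rangle$ lies in $\pbuchi{B,d,\Gamma}$ if and only if there is an index $i$ after which $\omega$ visits infinitely many B\"uchi vertices and the number of edges between any two consecutive visits is at most $d$. I would establish this equivalence as the technical core. Given $\omega \in \pbuchi{B,d,\Gamma}$ with witness $i$, if $v_a, v_b$ (with $i \le a < b$) are consecutive B\"uchi visits with no B\"uchi vertex strictly between them, then applying the window condition at $j = a+1$ forces the next B\"uchi vertex to occur at a position $\le a + d$, hence $b - a \le d$. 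Conversely, if consecutive gaps are bounded by $d$ from some point on, then any length-$d$ window lies between two consecutive B\"uchi visits whose gap would otherwise exceed $d$, so every such window contains a B\"uchi vertex.

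For the ($\Leftarrow$) direction, suppose $G$ has a feasible cyclic-walk $C$. After rotating $C$ so that it both starts and ends at a B\"uchi vertex (possible since $C$ contains at least one and $v_1 = v_\ell$), I would take the periodic play $\rho$ obtained by repeating the first $\ell - 1$ vertices of $C$ forever. Because $C$ is feasible, consecutive B\"uchi visits along $C$ are at most $d$ edges apart; since the endpoints are B\"uchi vertices, the wrap-around between successive copies of $C$ inherits the same bound, so $\rho$ satisfies the gap condition. By the equivalence, $\rho \in \pbuchi{B,d,\Gamma}$, and as $\rho$ starts at the first vertex of $C$, that vertex is winning.

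For the ($\Rightarrow$) direction, suppose some vertex is winning, witnessed by a play $\omega \in \pbuchi{B,d,\Gamma}$ with index $i$. By the gap condition, $\omega$ visits infinitely many B\"uchi vertices from position $i$ on; since $B$ is finite, some B\"uchi vertex $b^\star$ recurs infinitely often. Picking two positions $i \le a < a'$ with $v_a = v_{a'} = b^\star$, the finite segment $\langle v_a, v_{a+1}, \ldots, v_{a'}\rangle$ is a closed walk, hence a cyclic-walk, containing the B\"uchi vertex $b^\star$. All its vertices have index $\ge i$, so consecutive B\"uchi visits inside it are at most $d$ edges apart; and since both endpoints equal $b^\star$, there is no separate boundary gap to control. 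Thus this segment is a feasible cyclic-walk.

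The step I expect to be the main obstacle is the gap-condition equivalence, specifically handling the boundary carefully: one must check that the length-$d$ window semantics matches ``at most $d$ edges between consecutive B\"uchi vertices'' with the correct off-by-one, and that feasibility of a cyclic-walk genuinely controls the gap across the seam when the walk is repeated. This is precisely why rotating to start and end at a B\"uchi vertex (in the $\Leftarrow$ direction) and choosing both endpoints to be $b^\star$ (in the $\Rightarrow$ direction) is essential; the finiteness-based extraction of a recurring vertex is then routine.
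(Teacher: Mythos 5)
Your proposal is correct and follows essentially the same route as the paper's proof: for the forward direction, both extract the segment of the winning play between two occurrences of a B\"uchi vertex that recurs after the threshold index, and for the converse, both obtain a winning play by traversing the feasible cyclic-walk forever. The only difference is presentational — you make explicit the window-versus-gap equivalence and the seam/rotation issue that the paper's terser proof ("keep traversing it") leaves implicit — but the underlying argument is the same.
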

\begin{proof}
  If $G$ has a winning vertex, say $v$, then there is a winning play $\omega$
  that starts at $v$.  Let $\omega = \ls v_0 = v, v_1, v_2, \ldots \rs$; so by
  the definition of winning play, $\exists i \ge 1$ such that
  $\forall j \ge i: \{v_j,v_{j+1}, \dots, v_{j+d-1}\} \cap B \neq \emptyset$.
  Consider the set $\Inf{\omega}$ of vertices that appear infinitely often in
  $\omega$.  Since $\omega$ is winning, $\Inf{\omega} \cap B \neq \emptyset$.  Thus,
  we can choose a $j' \ge i$ such that $v_{j'} \in \Inf{\omega} \cap B$.  Since $v_{j'}$
  appears infinitely often, for some $j'' > j'$, we have that $v_{j''} = v_{j'}$.
  Thus $(v_{j'}, v_{j'+1}, \ldots, v_{j''} = v_{j'})$ is a feasible cyclic-walk
  because $j' \ge i$ and, as mentioned earlier,
  $\forall j \ge i: \{v_j,v_{j+1}, \dots, v_{j+d-1}\} \cap B \neq \emptyset$.

  In the other direction, if $G$ has a feasible cyclic-walk, then we can keep
  traversing it to construct a winning play, which means $G$ has a winning
  vertex.
\end{proof}

\subsubsection*{An $O(dm)$-time algorithm for bounded B\"{u}chi}

Next, we recall the basic $O(dm)$-time algorithm to determine if there is a feasible
cyclic-walk.  This algorithm tries to trace a feasible cycle by maintaining a
counter with each possible non-B{\"u}chi vertex denoting how far away we are
from the last visit to a B{\"u}chi vertex.  We construct a $(d{+}1)$-layered
auxiliary graph $G^*=(V^*, E^*)$, where
$V^* = (B \times \{0\}) \cup ((V \setminus B) \times \{1, \ldots, d\})$.
We define a more general graph here that we also use in Section~\ref{sec:games}.
We illustrate an example in
Figure~\ref{fig:layered}.  So, for $(v, \ell) \in V^*$, the integer $\ell$ corresponds
to the aforementioned counter.  We call the vertices in $B \times \{0\}$
B{\"u}chi vertices and the vertices in $(V \setminus B) \times \{1, \ldots, d\}$
non-B{\"u}chi vertices.  The edge set $E^*$ is constructed by
Algorithm~\ref{alg:lg}.  The last layer of the auxiliary graph is actually not
needed for the graph case but is needed for the game graph case later.  Observe
that the auxiliary graph is also a game graph.  (The ownership of the vertices
will be defined later in a natural way.)

\begin{algorithm}
	\caption{Construction of the auxiliary graph $G^*$ from $G$, $B$, and $d$.  It
		is easy to see that the running time of this algorithm is $O(dm)$ and $G^*$
		has at most $dm$ edges.}\label{alg:lg}
	\SetKwFunction{ConstructAuxiliaryGraph}{ConstructAuxiliaryGraph}
	\SetKwFunction{auxGraphDLayers}{AuxiliaryGraph-d-Layers}
	\Procedure{\ConstructAuxiliaryGraph{$G = (V, E), B \subseteq V$, $d$}}{
		$V^* \gets (B \times \{0\}) \cup ((V \setminus B) \times \{1, \ldots, d\})$
		and $E^* \gets \emptyset$\;
		\For{$(u, v) \in E$ such that $v \notin B$ (add counter-incrementing
			edges)}{
			\If{$u \notin B$}{
				\For{$i \in \{1, \ldots, d{-}1\}$}{
					Add $((u, i), (v, i{+}1))$ to $E^*$.\;
				}
				Add $((u, d), (v, d))$ to $E^*$ (edges in the last layer to $V\setminus B$ stay in the last layer).
			}\Else {
				Add $((u, 0), (v, 1))$ to $E^*$.
			}

		}
		\For{$(u, v) \in E$ such that $v \in B$ (add counter-resetting edges)}{
			\If{$u \notin B$}{
				\For{$i \in \{1, \ldots, d\}$}{
					Add $((u, i), (v, 0))$ to $E^*$.
				}
			}
			\Else{ 
				Add $((u, 0), (v, 0))$ to $E^*$.
			}
		}
		\Return {$G^* = (V^*, E^*)$}
	}
	\Procedure{\auxGraphDLayers{$G = (V, E), B \subseteq V$, $d$}}{

		$G^* \gets$ \ConstructAuxiliaryGraph{$G = (V, E), B \subseteq V, d$}\;
		Return the graph resulted by removing layer-$d$ from $G^*$,	called $G' = (V', E')$.\;
	}

\end{algorithm}
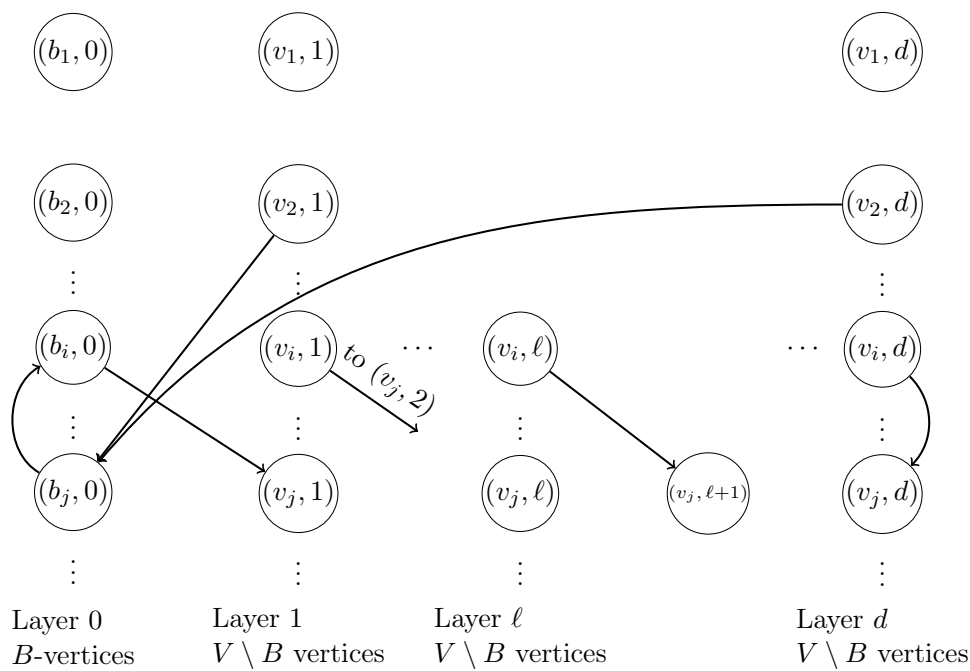
\begin{figure}[ht]
  \centering
  \resizebox{\textwidth}{!}{%
	  \begin{tikzpicture}[vert/.style={shape=circle, draw, inner sep=0pt}]
		  \node [vert] (b1) at (0,0) {$(b_1,0)$};
		  \node [vert, below=1cm of b1] (b2) {$(b_2,0)$};
		  \node [below of=b2] (vdots0) {$\vdots$};
		  \node [vert, below of=vdots0] (bi) {$(b_i,0)$};
		  \node [below of=bi] (vdots00) {$\vdots$};
		  \node [vert, below of=vdots00] (bj) {$(b_j,0)$};
		  \node [below of=bj] (vdots000) {$\vdots$};
		  \node [below of=vdots000, align=left] {Layer $0$\\$B$-vertices};

		  \node [vert,right=2cm of b1] (v11)  {$(v_1,1)$};
		  \node [vert, below=1cm of v11] (v12) {$(v_2,1)$};
		  \node [below of=v12] (vdots1) {$\vdots$};
		  \node [vert, below of=vdots1] (v1i) {$(v_i,1)$};
		  \node [below of=v1i] (vdots10) {$\vdots$};
		  \node [vert, below of=vdots10] (v1j) {$(v_j,1)$};
		  \node [below of=v1j] (vdots100) {$\vdots$};
		  \node [below of=vdots100, align=left] {Layer $1$\\$V\setminus B$ vertices};

		  \node [right=0.75cm of v1i] {$\cdots$};

		  \node [vert, right = 2cm of v1i] (vli) {$(v_i,\ell)$};
		  \node [below of=vli] (vdotsl0) {$\vdots$};
		  \node [vert, below of=vdotsl0] (vlj) {$(v_j,\ell)$};
		  \node [below of=vlj] (vdotsl00) {$\vdots$};
		  \node [below of=vdotsl00, align=left] {Layer $\ell$\\$V\setminus B$ vertices};

		  \node [vert, right = 4.5cm of v1j] (vlp1j) {\tiny $(v_j,\ell{+}1)$};

		  \node [right=3cm of vli] {$\cdots$};

		  \node [vert,right=10cm of b1] (vd1)  {$(v_1,d)$};
		  \node [vert, below=1cm of vd1] (vd2) {$(v_2,d)$};
		  \node [below of=vd2] (vdots1) {$\vdots$};
		  \node [vert, below of=vdots1] (vdi) {$(v_i,d)$};
		  \node [below of=vdi] (vdots10) {$\vdots$};
		  \node [vert, below of=vdots10] (vdj) {$(v_j,d)$};
		  \node [below of=vdj] (vdots100) {$\vdots$};
		  \node [below of=vdots100, align=left] {Layer $d$\\$V\setminus B$ vertices};

		  \draw[thick, ->] (bj) to [out=150, in=210] (bi);
		  \draw[thick, ->] (bi) -- (v1j);
		  \draw[thick, ->] (v1i) --  node[above,sloped] {to $(v_j,2)$} +(-35:2cm);
		  \draw[thick, ->] (vli) -- (vlp1j);
		  \draw[thick, ->] (v12) -- (bj);

		  \draw[thick, ->] (vdi) to [out=-45, in=45] (vdj);
		  \draw[thick, ->] (vd2) to [out=180, in=50] (bj);

	  \end{tikzpicture}
  }
  \caption{An illustration of how the auxiliary layered graph is constructed.
    If $G$ contains the edges $(b_j, b_i)$, $(b_i, v_j)$, $(v_2, b_j)$, and
    $(v_i, v_j)$, then the auxiliary layered graph $G^*$ will have shown edges.}
  \label{fig:layered}
\end{figure}


\begin{lemma}
\label{lem:rtlg}
The running time of the procedures \ConstructAuxiliaryGraph{$\cdot$} and
\auxGraphDLayers{$\cdot$} in Algorithm~\ref{alg:lg} is $O(dm)$.
\end{lemma}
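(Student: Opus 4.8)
The plan is to establish the bound by a direct resource count over the two phases of \textsc{ConstructAuxiliaryGraph}: the initialization of the vertex set $V^*$ and the two outer \textbf{for}-loops that populate the edge set $E^*$.

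First I would bound the vertex set. By construction $V^* = (B \times \{0\}) \cup ((V \setminus B) \times \{1, \ldots, d\})$, so $|V^*| \leq |B| + d \cdot |V \setminus B| = O(nd)$, and initializing it costs $O(nd)$ time. Since we follow the standing assumption that every vertex has an outgoing edge, we have $n \leq m$, hence $O(nd) = O(dm)$; thus the vertex initialization does not dominate the claimed bound.

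Next I would analyze the edge-construction loops. Each edge $(u,v) \in E$ is handled by exactly one of the two outer loops, depending on whether $v \notin B$ (the counter-incrementing case) or $v \in B$ (the counter-resetting case). Within either loop, the inner \textbf{for} ranges over at most $d$ values of the layer index $i$ and adds a single edge per iteration, while the two $u \in B$ branches add only one edge and cost $O(1)$. Hence each original edge contributes $O(d)$ edges to $E^*$ and is processed in $O(d)$ time. Summing over all $m$ edges yields $|E^*| = O(dm)$ and total construction time $O(dm)$. Combined with the vertex-initialization estimate, \textsc{ConstructAuxiliaryGraph} runs in $O(dm)$ time, which also justifies the claim that $G^*$ has at most $O(dm)$ edges.

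Finally, for \textsc{AuxiliaryGraph-d-Layers} I would add the cost of the post-processing step. The procedure first calls \textsc{ConstructAuxiliaryGraph} (already shown to be $O(dm)$) and then deletes layer $d$, i.e., all vertices $(v,d)$ together with their incident edges. This deletion is bounded by the total size $|V^*| + |E^*| = O(dm)$ of the graph just built, so it too is $O(dm)$, giving the overall $O(dm)$ bound. There is no genuine obstacle here, as the statement is a routine counting argument; the only point that I would state explicitly, rather than leave implicit, is that the $O(nd)$ vertex-initialization cost is subsumed by the $O(dm)$ edge count precisely because the no-dead-end assumption forces $n \leq m$.
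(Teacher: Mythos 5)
Your proposal is correct and follows essentially the same argument as the paper: the outer loops run for at most $m$ iterations, the inner loops for at most $d$, giving $O(dm)$ for construction, and removing the last layer costs no more than the size of the graph just built. The only difference is that you additionally account for the $O(nd)$ vertex initialization via the no-dead-end assumption $n \leq m$, a detail the paper's proof leaves implicit.
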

\begin{proof}
  In \ConstructAuxiliaryGraph{$\cdot$}, each of the outer for loop runs for at
  most $m$ iterations, and each of the inner for loops runs for at most $d$
  iterations. \auxGraphDLayers{$\cdot$} just calls
  \ConstructAuxiliaryGraph{$\cdot$} and removes the last layer, which takes
  $O(dm)$ time.
\end{proof}

For the graph case, we are interested in $G^*$ induced on
layers-$\{0, 1, \ldots, d{-}1\}$.  Let $G'$ denote this graph.

\begin{lemma}
  \label{lem:redcyclegstar}
  The strongly-connected input graph $G$ has a feasible cyclic-walk if and only
  if $G'$ has a cycle.
\end{lemma}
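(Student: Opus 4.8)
The plan is to prove the equivalence by exploiting the intended meaning of the second coordinate in $G'$: the layer $\ell$ of a vertex $(v,\ell)$ records exactly how many edges have been traversed since the last visit to a B\"uchi vertex. Under this reading, a feasible cyclic-walk in $G$ and a cycle in $G'$ are two descriptions of the same object, and the deletion of layer $d$ is precisely what encodes the constraint ``at most $d$ edges between consecutive B\"uchi vertices''. I would therefore set up a correspondence between feasible cyclic-walks and closed walks in $G'$, and then use that a closed walk with at least one edge contains a cycle.

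For the forward direction, I would start from a feasible cyclic-walk $C$ in $G$, rotate it so that it begins at a B\"uchi vertex (possible since $C$ contains at least one), and lift it to a walk in $G^*$ by assigning to each vertex occurrence the counter value equal to the number of edges since the last B\"uchi vertex along $C$, with B\"uchi occurrences receiving layer $0$. A short case analysis over the four edge types created in Algorithm~\ref{alg:lg} shows this lift is a valid walk. The feasibility condition guarantees that whenever an occurrence receives counter $d-1$, the next vertex must be a B\"uchi vertex (otherwise two consecutive B\"uchi vertices would be more than $d$ edges apart), so no occurrence ever needs layer $d$; hence the lift lives entirely in $G'$. Being a closed walk containing at least one edge, it contains a cycle in $G'$.

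For the backward direction, I would take a cycle in $G'$ and project it onto its first coordinate to obtain a closed walk in $G$, which is valid because every edge of $E^*$ projects to an edge of $E$ by construction. The key structural observation is that the layer coordinate is forced: from a layer-$0$ vertex the cycle either stays at layer $0$ or moves to layer $1$, and from a non-B\"uchi vertex at layer $i \in \{1,\ldots,d-1\}$ the only options available in $G'$ are to reset to layer $0$ or to advance to layer $i+1$, where the latter exists only for $i \le d-2$ since the layer-$d$ vertices and the last-layer edges have been removed. Thus layers strictly increase along any B\"uchi-free stretch, so a cycle cannot avoid layer $0$ altogether, which yields at least one B\"uchi vertex in the projection. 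Moreover, between two consecutive layer-$0$ vertices the layers run through $0,1,2,\ldots,t$ with $t \le d-1$, bounding the number of edges in that stretch by $t+1 \le d$; hence the projected closed walk is a feasible cyclic-walk.

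The main obstacle is the boundary bookkeeping: I must verify the off-by-one correspondence carefully, namely that ``the counter of a non-B\"uchi vertex stays at most $d-1$'' matches exactly ``at most $d$ edges lie between consecutive B\"uchi vertices'' once layer $d$ is deleted, and that the wrap-around case of a single B\"uchi vertex on the cycle is covered by the same count. The remaining ingredients, i.e.\ the edge-type case analysis for validity of the lift and the passage from a closed walk to a cycle, are routine once this encoding is fixed.
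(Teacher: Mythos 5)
Your proof is correct and takes essentially the same route as the paper's: you lift the (rotated) feasible cyclic-walk into the layered graph, using the number of edges since the last B\"uchi vertex as the second coordinate, observe that feasibility keeps the lift inside the first $d$ layers, and short-cut the resulting closed walk to a cycle; conversely you project a cycle of $G'$ onto its first coordinate and bound the gap between consecutive B\"uchi vertices by $d$. Your write-up is merely more explicit than the paper's about the forced layer structure, the wrap-around case, and the off-by-one bookkeeping.
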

\begin{proof}
  Let
  $C = (b_1,v_{1,1}, \ldots, v_{1,\ell_1}, b_2, v_{2,1}, \ldots, v_{2,\ell_2},
  b_3, \ldots, b_1)$, where each $b_i \in B$, each $v_{i,j} \in V \setminus B$,
  and each $\ell_i \le d - 1$, be a feasible cyclic-walk in $G$.  There is a
  corresponding cyclic-walk $C'$ in $G'$:
  \begin{itemize}
    \item for each $(b_i, v_{i,1}) \in C$, the edge $((b_i, 0), (v_{i,1}, 1)) \in E'$,
    \item for each $(v_{i,j}, v_{i, j{+}1}) \in C$, the edge
    $((v_{i,j}, j), (v_{i,j{+}1}, j{+}1)) \in E'$, 
    \item for each $(v_{i,\ell_j}, b_{i{+}1}) \in C$, the edge
    $((v_{i,\ell_j}, \ell_j), (b_{i{+}1}, 0)) \in E'$, and
    \item for the final edge $(v_{i,\ell_j}, b_1) \in C$, the edge
    $((v_{i,\ell_j}, \ell_j), (b_1, 0)) \in E'$.
  \end{itemize}
  If $C'$ consists of union of cycles can be short-cut to get a cycle in $G'$.

  In the other direction, consider a cycle in $G'$.  A projection of this
  cycle on the first coordinate of the vertices, by construction,
  gives a feasible cyclic-walk in $G$, because the number of edges between
  consecutive B{\"u}chi vertices is at most $d$.
\end{proof}

Thus, by Lemmas~\ref{lem:feasible}~and~\ref{lem:redcyclegstar}, we get
Algorithm~\ref{alg:bounded}.

\begin{algorithm}[H]
	\caption{This algorithm determines if the strongly-connected input graph
		has a winning vertex with respect to the bounded B\"{u}chi objective.}\label{alg:bounded}
	\SetKwFunction{boundedbuchi}{BoundedB\"{u}chi}
	\Procedure{\boundedbuchi{$G = (V, E), B \subseteq V$, $d$}}{
		$G' \gets \auxGraphDLayers{G, B, d}$\;
		Run depth-first search on $G'$ to determine if it has a cycle.\;
		\If{$G'$ has a cycle}{
			\Return{``$G$ has a winning vertex.''}
		}\Else{
			\Return ``$G$ does not have a winning vertex.''
		}
	}
\end{algorithm}

\begin{lemma}
  \label{lem:dmbounded}
  Algorithm~\ref{alg:bounded} determines if the strongly-connected input graph $G$
  has a winning vertex with respect to the bounded B\"{u}chi objective in $O(dm)$
  time.
\end{lemma}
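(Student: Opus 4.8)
The plan is to establish correctness by composing the two equivalences proved above and then to bound the running time by the cost of building the auxiliary graph plus a single depth-first search.

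First I would argue correctness. By Lemma~\ref{lem:feasible}, the strongly-connected graph $G$ has a winning vertex for the bounded B\"uchi objective if and only if $G$ has a feasible cyclic-walk. By Lemma~\ref{lem:redcyclegstar}, $G$ has a feasible cyclic-walk if and only if the graph $G'$ returned by \textsc{AuxiliaryGraph-d-Layers} contains a cycle. Chaining these two biconditionals shows that $G$ has a winning vertex if and only if $G'$ has a cycle. Since depth-first search correctly decides whether a directed graph contains a cycle (a back edge is encountered exactly when a cycle exists), Algorithm~\ref{alg:bounded} returns ``$G$ has a winning vertex'' precisely in the winning case, which is the desired correctness claim.

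Next I would bound the running time. Constructing $G'$ costs $O(dm)$ by Lemma~\ref{lem:rtlg}. The auxiliary graph has at most $d+1$ copies of each vertex of $G$, hence $O(dn)$ vertices, and at most $dm$ edges (as recorded in the caption of Algorithm~\ref{alg:lg}). A depth-first search runs in time linear in the size of the graph it is applied to, i.e.\ $O(dn + dm)$. Because $G$ is strongly connected, every vertex has an outgoing edge, so $n \le m$ and therefore $O(dn + dm) = O(dm)$. Summing the construction cost and the search cost yields the claimed $O(dm)$ total time.

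The step I expect to carry the least risk is correctness, which is a pure assembly of Lemmas~\ref{lem:feasible} and~\ref{lem:redcyclegstar}; there is essentially no obstacle there. The only point requiring a little care is the running-time accounting: one must check that the vertex count of $G'$ does not dominate, which follows from the size bound $O(dn)$ together with the observation $n \le m$ for strongly connected $G$, so that the depth-first search stays within $O(dm)$.
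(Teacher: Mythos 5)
Your proposal is correct and follows the same route as the paper's proof: correctness by chaining Lemmas~\ref{lem:feasible} and~\ref{lem:redcyclegstar} with cycle detection via depth-first search, and the $O(dm)$ bound from Lemma~\ref{lem:rtlg} plus a linear-time search on $G'$. Your extra remark that the $O(dn)$ vertex count of $G'$ is dominated by $O(dm)$ because $n \le m$ in a strongly connected graph is a small but welcome refinement of a point the paper leaves implicit.
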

\begin{proof}
  By Lemmas~\ref{lem:feasible}~and~\ref{lem:redcyclegstar}, $G$ has a winning
  vertex if and only if $G'$ has a cycle.  Since a depth-first search finds if
  there is a cycle in $G'$, the correctness of the algorithm is established.
  By Lemma~\ref{lem:rtlg}, \auxGraphDLayers{$\cdot$} takes $O(dm)$ time,
  and a depth-first search on $G'$ takes time $O(dm)$, because the number of
  edges in $G'$ is $O(dm)$.
\end{proof}

Thus, by Lemma~\ref{lem:reducetoscc}, we get the following theorem.

\begin{theorem}
  \label{thm:dmbounded}
  The set of winning vertices for the bounded B\"{u}chi objective in the graph
  case can be computed in time $O(dm)$.
\end{theorem}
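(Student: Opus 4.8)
The plan is to combine the SCC reduction of Lemma~\ref{lem:reducetoscc} with the per-SCC decision procedure of Lemma~\ref{lem:dmbounded}. First I would verify that the bounded B\"uchi objective is prefix independent: whether $\omega \in \pbuchi{B,d,\Gamma}$ depends only on whether, from some finite point onward, every window of $d$ consecutive vertices of $\omega$ contains a B\"uchi vertex, and this tail condition is unaffected by prepending or deleting a finite prefix. Hence the hypothesis of Lemma~\ref{lem:reducetoscc} is satisfied, and the winning set of $G$ can be computed in time $O(m + \sum_i t(S_i))$, where $t(S_i)$ is the cost of deciding whether the SCC $S_i$ is good.

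Next I would bound each $t(S_i)$. By Lemma~\ref{lem:dmbounded}, running Algorithm~\ref{alg:bounded} on the strongly connected induced subgraph $G[S_i]$ decides whether $S_i$ is good (i.e.\ whether it contains a winning vertex) in time $O(d\,m_i)$, where $m_i$ denotes the number of edges of $G[S_i]$. Substituting this into the bound of Lemma~\ref{lem:reducetoscc} yields a total running time of $O(m + d\sum_i m_i)$.

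The final step is the amortization over the SCC decomposition. Since the SCCs partition $V$ into pairwise disjoint vertex sets and every edge of $G[S_i]$ has both endpoints in $S_i$, the induced edge sets of distinct SCCs are disjoint subsets of $E$; hence $\sum_i m_i \le m$. Therefore $O(m + d\sum_i m_i) = O(m + dm) = O(dm)$, which matches the claimed bound. I do not anticipate a genuine obstacle here: the result is a direct composition of Lemmas~\ref{lem:reducetoscc}~and~\ref{lem:dmbounded}, and the only point that needs a moment of care is confirming that the per-SCC invocations of Algorithm~\ref{alg:bounded} are charged against pairwise disjoint edge sets, so that the factor $d$ multiplies $m$ rather than, say, $nm$.
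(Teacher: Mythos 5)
Your proposal is correct and follows essentially the same route as the paper: the paper also invokes Lemma~\ref{lem:reducetoscc} together with the per-SCC bound of Lemma~\ref{lem:dmbounded} and concludes via $\sum_i m_i \le m$. Your additional remarks (explicitly checking prefix independence and noting that the SCC-induced edge sets are pairwise disjoint) just spell out steps the paper leaves implicit.
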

\begin{proof}
  Let $S_1, S_2, \ldots$ be SCCs of the input graph $G = (V, E)$.  Let
  $m_1, m_2, \ldots$ be the number of edges in the SCCs $S_1, S_2, \ldots$.
  Then, by Lemma~\ref{lem:dmbounded}, for $i = 1, 2, \ldots$, we can determine in
  time $O(dm_i)$ whether $S_i$ is good.  Since $m \ge \sum_i m_i$, the proof is
  complete by Lemma~\ref{lem:reducetoscc}.
\end{proof}

\subsubsection*{An $O(|B|m)$-time algorithm for bounded B\"{u}chi}
Now, we briefly discuss an $O(|B|m)$-time algorithm for bounded B\"{u}chi.  Given
$G = (V, E)$ and $B$, consider the graph $G' = (B, E')$ such that
$(b, b') \in E'$ if the distance from $b$ to $b'$ in $G$ is at most $d$.  We
allow self loops in $G'$.  It is easy to see that $G$ has a feasible-cyclic walk
if and only if $G'$ has a cycle.  To construct $G'$, we perform $|B|$ breadth-first
searches, one starting from each vertex in $B$.  This takes time $O(|B|m)$.
Then, by a similar argument as in the proof of Theorem~\ref{thm:dmbounded}, we
get the following theorem.
\begin{theorem}
  \label{thm:bmbounded}
  The set of winning vertices for the bounded B\"{u}chi objective in the graph
  case can be computed in time $O(|B|m)$.
\end{theorem}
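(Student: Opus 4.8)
The plan is to follow the reduction-to-SCCs strategy already used for Theorem~\ref{thm:dmbounded}, but to replace the $(d{+}1)$-layered auxiliary graph by the much smaller quotient graph $G'=(B,E')$ living only on the Büchi vertices. First I would invoke Lemma~\ref{lem:reducetoscc}: it suffices to decide, for each SCC $S_i$ of the input graph with $m_i$ edges and Büchi set $B_i = B\cap S_i$, whether $S_i$ is good, in time $O(|B_i|\,m_i)$ per SCC; the winning set is then obtained by a single reachability computation to the union of the good SCCs, in linear time. Since $|B_i|\le |B|$, the construction cost sums to $\sum_i |B_i|\,m_i \le |B|\sum_i m_i \le |B|\,m$, which together with the $O(m)$ SCC decomposition and the final reachability gives the claimed $O(|B|\,m)$ bound. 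By Lemma~\ref{lem:feasible}, deciding whether a strongly connected $S_i$ is good is the same as deciding whether it contains a feasible cyclic-walk, so I focus on that.

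Next I would pin down the construction of $G'$. The intended edge relation is: $(b,b')\in E'$ iff $G$ contains a walk from $b$ to $b'$ of length $\ell$ with $1\le \ell\le d$. For $b'\ne b$ this is exactly $\dist(b,b')\le d$, which a single breadth-first search from $b$ computes for all targets simultaneously. The delicate point---and the step I expect to be the main obstacle---is the self-loop case $b=b'$: the bare criterion $\dist(b,b)\le d$ is useless because $\dist(b,b)=0$, which would place every self-loop in $E'$ and make cycle detection succeed trivially. The correct reading is that $(b,b)\in E'$ iff $G$ has a \emph{closed} walk through $b$ of length in $[1,d]$, equivalently iff some in-neighbour $u$ of $b$ satisfies $\dist(b,u)\le d-1$ (take the shortest $b$-to-$u$ walk followed by the edge $u\to b$). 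This is read off from the same BFS by scanning the in-edges of $b$, so each Büchi vertex costs $O(m_i)$ and building $G'$ for $S_i$ costs $O(|B_i|\,m_i)$.

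I would then prove the equivalence that $G$ (strongly connected) has a feasible cyclic-walk iff $G'$ has a directed cycle, where self-loops count as cycles. For the forward direction, read off from a feasible cyclic-walk its cyclic sequence of consecutive Büchi-vertex occurrences $b_1,b_2,\dots,b_p$ (understood cyclically, so when $p=1$ the pair $b_1,b_1$ refers to one full loop around); feasibility says each connecting segment has at most $d$ edges, hence each $(b_i,b_{i+1})$ (indices mod $p$) lies in $E'$, yielding a closed walk and therefore a cycle in $G'$, with the case $p=1$ producing exactly the self-loop captured above. For the converse, replace every edge of a cycle of $G'$ by the corresponding walk of length at most $d$ in $G$ and concatenate them into a cyclic-walk. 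The key verification is that feasibility survives even when a connecting walk passes through further Büchi vertices in its interior: since that whole walk has length at most $d$, any two Büchi vertices consecutive along the concatenation are separated by at most $d$ edges, so the resulting cyclic-walk is feasible.

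Finally, detecting a cycle in each $G'$ is a depth-first search on a graph with $|B_i|$ vertices and at most $|B_i|^2$ edges, so over all SCCs this costs $O(\sum_i |B_i|^2)=O(|B|^2)$, which is $O(|B|\,m)$ because $|B|\le n\le m$ (every vertex has an outgoing edge, hence $m\ge n$). Thus both the construction and the cycle detection fit within $O(|B|\,m)$, and assembling the per-SCC bounds through Lemma~\ref{lem:reducetoscc}, exactly as in the proof of Theorem~\ref{thm:dmbounded}, yields the $O(|B|\,m)$-time algorithm for computing the winning set of the bounded B\"uchi objective in the graph case.
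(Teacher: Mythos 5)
Your proposal is correct and follows exactly the paper's route: reduce to strongly connected components via Lemma~\ref{lem:reducetoscc}, build the quotient graph $G'=(B,E')$ with one BFS per B\"uchi vertex, and detect a cycle (self-loops allowed) in $G'$. Your explicit treatment of the self-loop case is just an unpacking of the paper's convention that $\dist(b,b)$ denotes the length of a shortest cycle through $b$, and your correctness and running-time arguments fill in details the paper dismisses as ``easy to see,'' so the two proofs coincide in substance.
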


\begin{remark}
  Note that both algorithms that we have seen so far can take
  $\Theta(n^3)$ time if $m = \Theta(n^2)$ and $B$ and $d$ are $\Theta(n)$.  The
  next algorithm we see is combinatorial and has running time $O(n^{2.5}\log n)$ for
  the worst setting of the parameters and breaks the cubic barrier.  This also
  rules out any conditional lower bound approaches to get an $\Omega(n^3)$ lower
  bound for combinatorial algorithms.
\end{remark}
\subsubsection*{An $O((m+|B|^2)\sqrt{n} \log n)$-time algorithm for bounded
  B\"{u}chi}

In this section, we present an $O((m+|B|^2)\sqrt{n} \log n)$-time algorithm for
bounded B\"{u}chi in the graph case.  This is one of our main contributions.
Here, we give a procedure that computes distances between all pairs of B{\"u}chi
vertices if the distance is at least $\sqrt{N}$, where $N \ge |V|$ is a
parameter that we will fix later. This information can be used to reduce the
number of layers in the auxiliary graph to $\sqrt{N}$.  By $\dist$, we denote the distance
function for $G$.  For any $u, v \in V$, if $u \neq v$, then
$\dist(u,v)$ denotes the length of a shortest path from $u$ to $v$, and for any
$u \in V$, $\dist(u, u)$ denotes the length of a shortest cycle through $u$.

\begin{algorithm}[ht]
  \caption{This algorithm determines if the strongly-connected input graph
    has a winning vertex with respect to the bounded B\"{u}chi objective.}\label{alg:randbounded}
	\SetKwFunction{randboundedbuchi}{RandBoundedB\"{u}chi}
	\Procedure{\randboundedbuchi{$G = (V, E), B \subseteq V$, $d$, $N$}}{
		\If{$d < \sqrt{N}$}{
			\Return{\boundedbuchi{$G = (V, E), B \subseteq V$, $d$}}
		}
		Sample $4\sqrt{N}\ln N$ vertices uniformly at random, independently,
		and with replacement.\;
		$S \gets $ the set of sampled vertices.\;
		\For{$s \in S$}{
			Perform incoming and outgoing breadth-first search (BFS) to and from $s$.
			Compute distances $\dist(b, s)$ and $\dist(s, b)$ for each $b \in B$ during the BFSs.
		} 
		$G' \gets$ \auxGraphDLayers{$G, B, \sqrt{N} - 1$}
		\For{$b \in B$}{
			\For{$b' \in B$}{
				$\dist^S(b, b') \gets \infty$\;
				\For{$s \in S$}{
					$\dist^S(b, b') \gets \min\{\dist^S(b, b'), \dist(b, s) + \dist(s, b')\}$\;
				}

				\If{$\dist^S(b, b') \le d$}{
					Add $((b, 0), (b', 0))$ to $E'$ (this would be a self-loop if $b = b'$).\;
				}
			}
		}

		Run depth-first search on $G'$ to determine if it has a cycle.\;

		\If{$G'$ has a cycle}{
			\Return{``$G$ has a winning vertex.''}
		}\Else{
			\Return{``$G$ does not have a winning vertex.''}
		}
	}
\end{algorithm}

\begin{lemma}
  \label{lem:randbounded}
  Let $N \ge |V|$.  Algorithm~\ref{alg:randbounded} determines with probability
  at least $1 - 1/N^2$ if the strongly-connected input graph $G$ has a winning
  vertex with respect to the bounded B\"{u}chi objective in
  $O((m+|B|^2)\sqrt{N} \log N)$ time.  It never returns a \emph{false positive},
  i.e., if it outputs that $G$ has a winning vertex, then it is correct with
  probability $1$.  Its running time is $O((m+|B|^2)\sqrt{N} \log N)$.
\end{lemma}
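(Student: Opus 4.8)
The plan is to prove the three assertions (no false positive, high-probability correctness, running time) separately, and the conceptual core is a two-way correspondence: on a suitable ``good event'' the modified auxiliary graph $G'$ --- which retains only $\sqrt{N}-1$ layers but is augmented with a shortcut edge $((b,0),(b',0))$ whenever $\dist^S(b,b')\le d$ --- has a cycle if and only if $G$ has a feasible cyclic-walk, while the ``only if'' direction holds unconditionally. I would first dispatch the running time by splitting on the test $d<\sqrt{N}$. In that branch the algorithm calls \textsc{BoundedB\"uchi}, which by Lemma~\ref{lem:dmbounded} runs in $O(dm)=O(\sqrt{N}\,m)$ time. In the branch $d\ge\sqrt{N}$ I would charge: the sampling of $O(\sqrt{N}\log N)$ vertices; one incoming and one outgoing BFS per sample, for $O(|S|\,m)=O(\sqrt{N}\,m\log N)$ total (using $n\le m$) and producing all values $\dist(b,s),\dist(s,b)$; the construction of $G'$ via \textsc{AuxiliaryGraph-$d$-Layers} with parameter $\sqrt{N}-1$, which by Lemma~\ref{lem:rtlg} costs $O(\sqrt{N}\,m)$ and yields $O(\sqrt{N}\,m)$ edges; the computation of each $\dist^S(b,b')$ as a minimum over $S$, for $O(|B|^2\,|S|)=O(|B|^2\sqrt{N}\log N)$; and a final depth-first search on $G'$, whose vertex and edge counts are $O(\sqrt{N}\,n)$ and $O(\sqrt{N}\,m+|B|^2)$. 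Summing gives $O((m+|B|^2)\sqrt{N}\log N)$ in both branches.

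Next I would establish the one-sided guarantee. The structural observation is that all edges among the positive-layer (non-B\"uchi) vertices strictly increase the layer index, so no cycle lies entirely in the positive layers; hence every cycle of $G'$ visits a layer-$0$ B\"uchi vertex. Listing the layer-$0$ vertices of a cycle in cyclic order as $b_1,\dots,b_k$, each connecting segment is either a shortcut edge or a path through the positive layers of length at most $\sqrt{N}-1\le d$. A layer path of length $\ell$ projects to an actual walk of length $\ell\le d$ in $G$, so $\dist_G(b_i,b_{i+1})\le d$; a shortcut edge is present only when $\dist^S(b_i,b_{i+1})\le d$, and since the triangle inequality gives $\dist_G(b_i,b_{i+1})\le\dist^S(b_i,b_{i+1})$, again $\dist_G(b_i,b_{i+1})\le d$. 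Expanding each shortcut into a shortest path of length at most $d$ (whose possible intermediate B\"uchi vertices only shorten the gaps) yields a feasible cyclic-walk in $G$, so by Lemma~\ref{lem:feasible} the graph $G$ has a winning vertex. Because this argument never uses the sampled set, a ``winning vertex'' answer is correct with probability $1$.

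Finally I would prove correctness with probability at least $1-1/N^2$. Define the good event $\mathcal{E}$: for every pair $b,b'\in B$ with $\dist_G(b,b')\ge\sqrt{N}$ we have $\dist^S(b,b')=\dist_G(b,b')$. For a fixed such pair, a shortest $b$-to-$b'$ path contains at least $\sqrt{N}$ distinct vertices, and hitting any one of them with a sampled $s$ forces $\dist(b,s)+\dist(s,b')=\dist_G(b,b')$, which the triangle inequality turns into the minimum defining $\dist^S$. A single sample misses this vertex set with probability at most $1-\sqrt{N}/n\le 1-1/\sqrt{N}$, so all $4\sqrt{N}\ln N$ independent samples miss it with probability at most $(1-1/\sqrt{N})^{4\sqrt{N}\ln N}\le e^{-4\ln N}=N^{-4}$; a union bound over the at most $|B|^2\le N^2$ relevant pairs gives $\Pr[\text{not }\mathcal{E}]\le N^{-2}$. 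On $\mathcal{E}$ I would prove the converse direction: if $G$ has a feasible cyclic-walk, its B\"uchi vertices in cyclic order $b_1,\dots,b_k$ satisfy $\dist_G(b_i,b_{i+1})\le d$, and each such pair either has $\dist_G\le\sqrt{N}-1$, realized by a layer path in $G'$, or has $\dist_G\ge\sqrt{N}$, in which case $\mathcal{E}$ forces $\dist^S=\dist_G\le d$ so the shortcut edge is present. Either way the cyclic sequence lifts to a cycle in $G'$, so $G'$ has a cycle and the algorithm answers correctly. Combined with the $d<\sqrt{N}$ branch, which is deterministically exact by Lemma~\ref{lem:dmbounded}, this yields the claimed error bound.

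The step I expect to be most delicate is not the hitting-probability estimate (which is routine) but the two-way correspondence between cycles in the hybrid graph $G'$ and feasible cyclic-walks: one must argue that shortcut edges never over-count, which is secured unconditionally by $\dist_G\le\dist^S$ and gives the one-sided property, and simultaneously that they never under-count on $\mathcal{E}$, which is secured by exactness of $\dist^S$ on far pairs, while verifying that replacing a long walk segment by a shortest path (possibly introducing extra B\"uchi vertices) preserves feasibility.
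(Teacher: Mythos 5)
Your proposal is correct and follows essentially the same route as the paper's proof: the same split on $d<\sqrt{N}$, the same shortest-path-hitting argument with a union bound over $|B|^2\le N^2$ far pairs, the same two-way correspondence between cycles in the hybrid graph $G'$ and feasible cyclic-walks (with one-sidedness secured by $\dist_G\le\dist^S$), and the same running-time accounting. The only differences are presentational — you make explicit that cycles in $G'$ must visit layer~$0$ and that expanding a shortcut into a shortest path (rather than the path certifying $\dist^S$) preserves feasibility — which the paper leaves implicit.
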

\begin{proof}
  If $d < \sqrt{N}$, then we are done by Lemma~\ref{lem:dmbounded}.  Thus, we
  assume for the rest of the proof that $d \ge \sqrt{N}$.
  
  For any $b, b' \in B$, by $T(b,b')$, we denote a fixed shortest cycle through
  $b$ if $b=b'$ or a fixed shortest path from $b$ to $b'$ otherwise.  Let the
  event that a vertex $v(b,b') \in T(b,b')$ is sampled into $S$ be denoted by
  $\event(b,b')$.  Since $v(b,b') \in T(b,b')$, we have that
  $\dist(b,b') = \dist(b,v(b,b')) + \dist(v(b,b'),b')$.  This implies that if
  $\event(b,b')$ occurs, then $\dist(b,v(b,b'))$ and $\dist(v(b,b'),b')$ are
  computed by the algorithm using the incoming and outgoing BFS at $v(b,b')$,
  and hence $\dist^S(b,b') = \dist(b,b')$.  Let $\event^c(b,b')$ be the
  complement of $\event(b,b')$.  Now,
  $\Pr[\event^c(b,b')] = (1 - \dist(b,b')/|V|)^{4\sqrt{N}\ln N}$, because
  $1 - \dist(b,b')/|V|$ is the probability that a fixed sample does not contain a
  vertex of $T(b,b')$ and we draw $4\sqrt{N}\ln N$ independent samples.

  For any $b, b' \in B$, where $\dist(b,b') \geq \sqrt{N}$, we denote the event
  that $\dist^S(b,b') = \dist(b,b')$ by $\event'(b,b')$.  As noted earlier,
  $\dist^S(b,b') = \dist(b,b')$ if $\event(b,b')$ occurs, hence:
  \begin{align*}
    \Pr[\event'(b,b')] &\ge \Pr[\event(b,b')] &\text{$\event(b,b')$ is a subevent of $\event'(b,b')$,}\\
                       & = 1 - \Pr[\event^c(b,b')] \\
                       & = 1 - \left(1 - \frac{\dist(b,b')}{|V|}\right)^{4\sqrt{N}\ln N} &\text{by the argument earlier,}\\
                       & \ge 1 - \left(1 -
                         \frac{1}{\sqrt{N}}\right)^{4\sqrt{N}\ln N} &\text{because $\dist(b,b')/|V| \ge 1/\sqrt{N}$,}\\
                       & \ge 1 - \frac{1}{N^4} &\text{by well-known fact $(1-1/x)^x \leq 1/e$.}
  \end{align*}
  Since $N \ge |B|$, by the union bound, we have
  $\Pr[\forall (b, b') \in B \times B:\; \event'(b,b')] \ge 1 - 1/N^2$.  Let us
  condition on the event that for all $(b, b') \in B \times B:\; \event'(b,b')$,
  and let $G'$ be the auxiliary graph constructed by the algorithm.

  Suppose $G$ has a winning vertex.  By Lemma~\ref{lem:feasible}, there is a
  feasible cyclic-walk $C$ in $G$.  Then for any consecutive B\"{u}chi vertices
  $b$ and $b'$ in $C$, either $\dist(b,b') \ge \sqrt{N}$, in which case there is
  an edge $((b,0),(b',0))$ or $\dist(b,b') < \sqrt{N}$, in which case there
  exists a cycle $((b,0),(u_1,1),(u_2,2),\ldots,(u_\ell,\ell),(b',0))$ in $G'$,
  where $\ell < \sqrt{N} - 1$.  Thus, $C$ induces a cycle in $G'$.

  On the other hand, if there is a cycle $C'$ in $G'$, then a projection of $C'$
  on the first coordinate of the vertices, by construction of $G'$, gives a
  feasible cyclic-walk in $G$ after replacing all edges in $C'$ of the form
  $((b, 0), (b', 0))$ by corresponding paths of length at most $d$ that certify
  $\dist^S(b, b')$.  By Lemma~\ref{lem:feasible}, $G$ has a winning vertex.

  Also, if the algorithm does return that $G$ has a winning vertex, then $G'$
  has a cycle, and the existence of a feasible cyclic-walk in $G$ can be shown in the same
  way as above. This shows that the algorithm never returns a false positive.

  \paragraph*{Running time}
  Incoming and outgoing BFSs from the vertices in $S$ take time
  $O(m\sqrt{N} \log N)$.  \auxGraphDLayers{$\cdot$} takes $O(m\sqrt{N})$
  time.  Computing $\dist^S$ takes time $O(|B|^2\sqrt{N} \log N)$.  DFS on $G'$
  takes time $O(|B|^2 + m\sqrt{N})$.  In total, Algorithm~\ref{alg:randbounded}
  has running time $O((m+|B|^2)\sqrt{N} \log N)$.
\end{proof}

Finally, we use Lemma~\ref{lem:reducetoscc} to generalize the above to a graph that may
not be strongly connected.  Fix $N$ to be $n$ in Algorithm~\ref{alg:randbounded}
when running it for each SCC\@.  Then, by a similar argument as in the proof of
Theorem~\ref{thm:dmbounded}, we get the following theorem.  

\begin{theorem}\label{thm:randbounded}
  The set of winning vertices for the bounded B\"{u}chi objective can be computed
  with probability at least $1 - 1/n$ in time $O((m+|B|^2)\sqrt{n} \log n)$
  which is $O(n^{2.5}\log n)$.  Moreover, the algorithm never returns a \emph{false
    positive}, i.e., each vertex in the set it outputs is a winning vertex with
  probability $1$.
\end{theorem}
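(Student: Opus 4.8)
The plan is to lift the strongly-connected guarantee of Lemma~\ref{lem:randbounded} to arbitrary graphs through the SCC reduction of Lemma~\ref{lem:reducetoscc}, paying only for a union bound over the components. First I would compute the SCCs $S_1, S_2, \ldots$ of the input graph $G = (V,E)$ in linear time via~\cite{Tarjan72}, and for each component invoke Algorithm~\ref{alg:randbounded} on $G[S_i]$ with the sampling parameter fixed to $N = n$. Because the bounded B\"uchi objective is prefix independent, Lemma~\ref{lem:reducetoscc} tells us that the full winning set is exactly the set of vertices that can reach a \emph{good} SCC, so it suffices to decide goodness component-by-component and then perform a single linear-time backward reachability. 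Since $n \geq |V| \geq |S_i|$, the hypothesis $N \geq |V|$ of Lemma~\ref{lem:randbounded} holds for every call, and each call therefore errs with probability at most $1/n^2$.

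For the running time, let $m_i$ and $B_i$ denote the number of edges and the set of B\"uchi vertices inside $S_i$. By Lemma~\ref{lem:randbounded} the $i$-th call costs $O((m_i + |B_i|^2)\sqrt{n}\log n)$. Summing over the components and using that the $S_i$ are vertex-disjoint, I have $\sum_i m_i \leq m$ and, crucially, $\sum_i |B_i|^2 \leq \bigl(\sum_i |B_i|\bigr)^2 \leq |B|^2$. Together with the $O(m)$ overhead of the reduction this yields the claimed bound $O((m+|B|^2)\sqrt{n}\log n)$, which is $O(n^{2.5}\log n)$ in the worst setting $m = \Theta(n^2)$, $|B| = \Theta(n)$.

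For correctness I would treat the two error modes separately. A graph on $n$ vertices has at most $n$ SCCs, so by the union bound the probability that \emph{some} component check is wrong is at most $n \cdot 1/n^2 = 1/n$; conditioned on the complementary event every goodness label is exact, and by Lemma~\ref{lem:reducetoscc} the backward reachability returns precisely $\W{1}{\cdot}$. For the stronger one-sided claim I would use that Lemma~\ref{lem:randbounded} never reports a good component incorrectly: whenever a call announces that $S_i$ contains a winning vertex, $S_i$ is genuinely good, hence every vertex reaching $S_i$ is genuinely winning \emph{deterministically}. Thus the output set is always a subset of the true winning set regardless of the random choices, so each reported vertex is winning with probability $1$; the only possible failure is a false negative (a missed good SCC), which is exactly what the $1/n$ term accounts for.

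The conceptually hard work is already absorbed into Lemma~\ref{lem:randbounded} (the sampling argument that estimates all long inter-B\"uchi distances and collapses the auxiliary graph to $\sqrt{n}$ layers), so the remaining obstacle is essentially bookkeeping: getting the superadditive inequality $\sum_i |B_i|^2 \leq |B|^2$ right and, more subtly, disentangling the one-sided semantics across components so that false positives remain impossible while false negatives stay confined to the $1/n$ failure event. I do not expect either point to require more than the care outlined above.
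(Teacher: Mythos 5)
Your proposal is correct and follows essentially the same route as the paper: per-SCC invocation of Algorithm~\ref{alg:randbounded} with $N = n$, the superadditivity bound $\sum_i |B_i|^2 \le |B|^2$ together with $\sum_i m_i \le m$ for the running time, a union bound over at most $n$ SCCs for the $1-1/n$ success probability, and the no-false-positive property of Lemma~\ref{lem:randbounded} for the one-sided guarantee. Your write-up is in fact slightly more explicit than the paper's (which compresses the one-sided-error argument into a single citation of Lemma~\ref{lem:randbounded}), but there is no substantive difference.
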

\begin{proof}
  Let $S_1, S_2, \ldots$ be SCCs of the input graph $G = (V, E)$.  Let
  $m_1, m_2, \ldots$ be the number of edges and by $\beta_1, \beta_2, \ldots, $
  be the number of B\"{u}chi vertices in the SCCs $S_1, S_2, \ldots$,
  respectively.  Then, by Lemma~\ref{lem:randbounded}, for $i = 1, 2, \ldots$,
  the algorithm outputs in time $O((m_i+\beta_i^2)\sqrt{n} \log n)$ whether
  $S_i$ is good.  Since $m \ge \sum_i m_i$ and
  $|B|^2 = (\sum_i \beta_i) ^2 \ge \sum_i \beta_i^2$, the running time bound is
  proved.

  The probability bound is obtained by a union bound over at most $n$ SCCs.
  Moreover, the algorithm never returns a \emph{false positive} by
  Lemma~\ref{lem:randbounded}.
\end{proof}

\subsection{The Bounded coB\"{u}chi Objective}
Given a graph $G = (V, E)$, a set $C$ of vertices, and a positive integer $d$, a
walk $W$ is called a \emph{feasible} walk if $W \subseteq C$ and the number of
vertices in $W$ is at least $d$.

We assume that $G$ is strongly connected, and our goal is to determine if there
is a winning vertex in $G$ for the bounded coB\"{u}chi objective.
The following lemma reduces this problem to finding a feasible walk in $G$.
\begin{lemma}
  \label{lem:feasiblew}
  The strongly-connected input graph $G$ with a set $C$ of vertices has a
  winning vertex with for the bounded coB\"{u}chi objective if and only if
  it has a feasible walk.
\end{lemma}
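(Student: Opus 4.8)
The plan is to prove the two implications separately, exploiting that $\pcobuchi{C,d,G}$ is a pure recurrence condition: a play wins exactly when a block of $d$ consecutive vertices lying entirely in $C$ recurs infinitely often. The key conceptual point — and the reason the notion of feasible walk here is so much simpler than the feasible \emph{cyclic}-walk used for bounded B\"uchi — is that bounded co-liveness constrains only the good windows and imposes nothing on the stretches between them; the play may leave $C$ freely. Consequently a single sufficiently long excursion inside $C$, pumped by strong connectivity, already produces a winning play.

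For the forward direction (winning vertex $\Rightarrow$ feasible walk) I would take a winning play $\omega = \ls v_0, v_1, \dots \rs$ and instantiate the outer quantifier of the definition of $\pcobuchi{C,d,G}$ at $i = 0$. This yields an index $j \ge 0$ with $\{v_j, v_{j+1}, \dots, v_{j+d-1}\} \subseteq C$, and the segment $(v_j, v_{j+1}, \dots, v_{j+d-1})$ is a walk of $d$ vertices entirely inside $C$ — that is, a feasible walk. Nothing beyond reading off one good window from the play is needed.

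For the backward direction (feasible walk $\Rightarrow$ winning vertex) let $W = (w_1, \dots, w_\ell)$ with $\ell \ge d$ and $W \subseteq C$ be a feasible walk. Since $G$ is strongly connected there is a finite path $P$ from $w_\ell$ back to $w_1$. I would then consider the infinite play $\rho$ that starts at $w_1$ and repeats the loop $W \cdot P$ forever. Each traversal of $W$ contributes a window of $d$ consecutive positions whose vertex set is $\{w_1, \dots, w_d\} \subseteq C$; this window sits strictly inside the $W$-portion and never involves $P$. Because the loop repeats infinitely often, such a window occurs at arbitrarily large positions, so $\rho$ satisfies the defining condition of $\pcobuchi{C,d,G}$ and $w_1$ is a winning vertex.

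I expect no substantial obstacle; the two points needing care are purely bookkeeping. First, I would fix the reading of ``the number of vertices in $W$ is at least $d$'' as the length of the vertex sequence (counting repetitions), which is exactly what guarantees the first $d$ positions of $W$ form $d$ consecutive vertices in $C$. Second, I would note explicitly that the return path $P$ is allowed to pass through vertices outside $C$, which is harmless precisely because the bounded coB\"uchi condition places no requirement on the parts of the play between two successive good windows. The entire content of the lemma is this last observation, which sharply separates the bounded co-liveness case from the bounded liveness (B\"uchi) case treated earlier.
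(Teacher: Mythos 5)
Your proof is correct, and both directions follow the paper's plan: read off one good window from a winning play, and pump a loop containing the feasible walk to build a winning play. One difference in the backward direction is worth noting, and it favors your version. The paper closes the loop by traversing the walk forward and then \emph{backward}, i.e.\ it considers the cyclic walk $v'_1,\dots,v'_{d-1},v'_d,v'_{d-1},\dots,v'_1$ and repeats it forever; as literally written this uses edges $(v'_{i+1},v'_i)$ that a directed graph need not contain, so the construction is only sound if one reads it charitably. You instead close the loop with a return path $P$ from $w_\ell$ to $w_1$, which exists by strong connectivity, and you correctly observe that $P$ may leave $C$ with impunity since $\pcobuchi{C,d,G}$ constrains only the recurring good windows. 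This makes your argument the cleaner one for directed graphs, at the mild cost of invoking strong connectivity in the backward direction (which the statement supplies anyway). Your forward direction, instantiating the definition at $i=0$ (or indeed at any $i$), matches the paper exactly.
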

\begin{proof}
  If $G$ has a winning vertex, say $v$, then there is a winning play $\omega$
  that starts at $v$.  Let $\omega = \ls v_0 = v, v_1, v_2, \ldots \rs$; so by
  the definition of winning play,
  $\forall i \ge 0, \exists j \ge i: \{v_j,v_{j+1}, \dots, v_{j+d-1}\} \subseteq
  C$.  Any such walk $v_j,v_{j+1}, \dots, v_{j+d-1}$ is a feasible walk.

  In the other direction, say $G$ has a feasible walk
  $v'_1,v'_2, \dots, v'_\ell$, where $\ell \ge d$ and $v'_i \in C$ for
  $i \in \{1,\ldots,\ell\}$.  Now, consider the cyclic walk
  $v'_1,v'_2, \dots, v'_{d-1},v'_d,v'_{d-1},\dots,v'_1$.  We keep traversing it to
  construct a winning play.  The existence of a winning play implies that $G$
  has a winning vertex.
\end{proof}

Now we give an $O(m)$-time algorithm to determine if the strongly-connected
input graph $G$ has a feasible walk.

\begin{algorithm}[ht]
  \caption{This algorithm determines if the strongly-connected input graph has a
    winning vertex with respect to the bounded coB\"{u}chi objective.}\label{alg:boundedco}
\SetKwFunction{boundedcobuchi}{BoundedcoB\"{u}chi}
\SetKwFunction{longestPath}{LongestPath}
\Procedure{\boundedcobuchi{$G = (V, E), C \subseteq V$, $d$}}{
    $G' \gets G$ induced on $C$\;
    Perform DFS on $G'$; if there is a cycle, return ``$G$ has a winning vertex.''\;
	\If{\longestPath{$G'$} $\ge d$}{
		\Return{``$G$ has a winning vertex.''}
	}\Else{
		\Return {``$G$ does not have a winning vertex.''}
	}
}
\end{algorithm}

\begin{algorithm}[ht]
  \caption{An algorithm to compute the length of a longest path in a directed
    acyclic graph.}\label{alg:llpdag}
	\Procedure{\longestPath{$G = (V, E)$}}{
		Compute the topological ordering $T$ of $G$.\;
		$L \gets$ integer array of size $|V|$, initialized to all zeros.\;
		$M \gets 0$\;
		\For{vertex $v$ in order of $T$}{
			\For{incoming edges $(u, v)$}{
				$L[v] \gets \max\{L[v], L[u] + 1\}$
			}
			$M \gets \max\{M, L[v]\}$\;
		}
		\Return{$M$}
	}
\end{algorithm}

\begin{lemma}
  \label{lem:boundedco}
  Algorithm~\ref{alg:boundedco} determines if the strongly-connected input graph
  $G$ has a winning vertex with respect to the bounded coB\"{u}chi objective in
  $O(m)$ time.
\end{lemma}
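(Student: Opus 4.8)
The plan is to prove correctness through Lemma~\ref{lem:feasiblew}, which already reduces the existence of a winning vertex for the bounded coB\"uchi objective $\pcobuchi{C,d,\Gamma}$ in the strongly connected input graph $G$ to the existence of a \emph{feasible walk}, i.e., a walk contained in $C$ on at least $d$ vertices. Since every feasible walk uses only vertices of $C$ (and hence only edges of $G[C]$), such a walk exists in $G$ if and only if it exists in the induced subgraph $G' = G[C]$ that Algorithm~\ref{alg:boundedco} constructs on its first line. So it suffices to show that $G'$ admits a walk on at least $d$ vertices exactly when the algorithm reports a winning vertex, and that every step runs in $O(m)$ time.

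I would split the argument along the two cases the algorithm distinguishes. First, if the depth-first search detects a cycle in $G'$, then traversing that cycle repeatedly produces walks in $C$ of unbounded length; in particular a feasible walk exists for every $d$, so returning ``winning vertex'' is correct by Lemma~\ref{lem:feasiblew}. Second, if $G'$ is acyclic, the key observation is that in a DAG every walk is automatically a \emph{simple path}: any repeated vertex would close a cycle, contradicting acyclicity. Hence the longest walk in $G'$ coincides with its longest simple path, whose number of vertices is one more than the number of edges on a longest path. Algorithm~\ref{alg:llpdag} computes that edge count by the standard topological-order dynamic program, where $L[v]$ records the maximum number of edges on a path ending at $v$ and is updated over the incoming edges of $v$; I would confirm by induction along the topological order that each $L[v]$ is correct, so the returned $M$ is indeed the longest-path edge count. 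Combining the two cases with Lemma~\ref{lem:feasiblew} then yields both directions of correctness.

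For the running time, constructing $G[C]$, running DFS for cycle detection, computing a topological order, and executing the dynamic program each touch every vertex and edge a constant number of times, giving $O(n+m) = O(m)$ total, where the equality uses that $G$ is strongly connected and hence $m \ge n$. The step I expect to be the crux is the acyclic case: making precise that walk-length in a DAG equals path-length, verifying that the dynamic program of Algorithm~\ref{alg:llpdag} reports exactly the longest path, and above all handling the bookkeeping between the \emph{edge} count returned by the procedure and the \emph{vertex} count used in the definition of a feasible walk, so that the threshold comparison against $d$ is carried out correctly. Once that edge-versus-vertex correspondence is nailed down, the cycle case and the linear time bound are immediate.
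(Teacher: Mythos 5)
Your proposal takes essentially the same route as the paper's proof: the cycle case handled by repeating the cycle, the acyclic case by the longest-path dynamic program, both tied back to Lemma~\ref{lem:feasiblew}, and the $O(m)$ bound from linear-time DFS and topological ordering. The edge-count-versus-vertex-count bookkeeping you single out as the crux is in fact the one point the paper's proof glosses over (Algorithm~\ref{alg:llpdag} returns an edge count, while a feasible walk is defined by a vertex threshold of $d$), so your plan is, if anything, more careful than the published argument.
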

\begin{proof}
  If $G'$ has a cycle, then this cycle can be repeated to get a feasible walk in
  $G$, which, by Lemma~\ref{lem:feasiblew}, means that $G$ has a winning
  vertex.

  If $G'$ is acyclic and has the length of longest path at least $d$, then there
  is a feasible walk in $G$, which, by Lemma~\ref{lem:feasiblew}, means that
  $G$ has a winning vertex.  Conversely, if $G'$ is acyclic and the length of
  longest path in $G'$ is less than $d$ then $G$ cannot contain a feasible walk,
  which, by Lemma~\ref{lem:feasiblew}, means that $G$ does not have a winning
  vertex.

  The calls to DFS on $G'$ and \longestPath{$G'$} take time $O(m)$; hence, the
  running time of Algorithm~\ref{alg:boundedco} is $O(m)$.
\end{proof}

By a similar argument as in the proof of Theorem~\ref{thm:dmbounded}, we
get the following theorem.

\begin{theorem}
  \label{thm:boundedc}
  The set of winning vertices for the bounded coB\"{u}chi objective in the graph
  case can be computed in time $O(m)$.
\end{theorem}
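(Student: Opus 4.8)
The plan is to prove \Cref{thm:boundedc} by the same SCC-reduction template already used for \Cref{thm:dmbounded}, but invoking the per-SCC decision procedure of \Cref{lem:boundedco} in place of the bounded B\"uchi test. First I would observe that the bounded coB\"uchi objective is prefix independent: whether a play exhibits, infinitely often, a window of $d$ consecutive vertices inside $C$ is unaffected by deleting or prepending any finite prefix. Hence \Cref{lem:reducetoscc} applies, and (as in \Cref{lem:goodscc}) a winning play must eventually remain inside a single SCC; it therefore suffices to decide, for each SCC, whether the subgraph induced on it contains a winning vertex. The winning set of the whole graph is then exactly the set of vertices that can reach a \emph{good} SCC, which is obtained by one SCC decomposition plus one backward reachability pass, both in $O(m)$ time.

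Second, for each SCC $S_i$ I would run \textsc{BoundedcoB\"uchi} (Algorithm~\ref{alg:boundedco}) on the induced graph $G[S_i]$ with the coB\"uchi set taken to be $C \cap S_i$. Since $G[S_i]$ is strongly connected, \Cref{lem:boundedco} guarantees that this call correctly reports whether $G[S_i]$ has a winning vertex, i.e.\ whether $S_i$ is good, in time $O(m_i)$ where $m_i = |E(S_i)|$. The only point needing a moment's care is that restricting the coB\"uchi set to $C \cap S_i$ is faithful: a feasible walk certifying goodness of $S_i$ must stay inside $C$ and so lies inside the induced graph, while strong connectivity of $S_i$ supplies the return paths needed (via \Cref{lem:feasiblew}) to splice such a feasible walk into a recurrent winning play; these return paths may leave $C$, which the objective permits.

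Finally I would sum the per-SCC costs. Because the SCCs are vertex disjoint, their edge sets are disjoint subsets of $E$, so $\sum_i m_i \le m$ and hence $\sum_i t(S_i) = \sum_i O(m_i) = O(m)$. Substituting into \Cref{lem:reducetoscc} gives a total bound of $O(m + \sum_i t(S_i)) = O(m)$, which is the claimed running time.

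I do not expect a genuine obstacle here: both ingredients—the reduction of \Cref{lem:reducetoscc} and the linear-time strongly connected test of \Cref{lem:boundedco}—are already established, so the argument is essentially their composition. The subtlest bookkeeping is confirming prefix independence and that ``good SCC'' is faithfully decided by the per-SCC call with $C$ restricted to the component, both of which are routine.
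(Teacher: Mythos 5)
Your proposal is correct and follows essentially the same route as the paper: the paper likewise obtains Theorem~\ref{thm:boundedc} by combining the SCC reduction of Lemma~\ref{lem:reducetoscc} with the per-SCC linear-time test of Lemma~\ref{lem:boundedco}, and summing $\sum_i O(m_i) = O(m)$ exactly as in the proof of Theorem~\ref{thm:dmbounded}. The extra care you take with prefix independence and with restricting $C$ to each SCC is sound and simply makes explicit what the paper leaves implicit.
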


\section{Algorithms for Game Graphs}
\label{sec:games}
In this section, we present algorithms for the bounded B\"uchi objective in game graphs. 
We first introduce the auxiliary \emph{game graph} similar to the auxiliary graph defined 
earlier. We then show that we can  
compute in $O(n^2d^2)$  time the winning set of a given bounded B\"uchi objective on game graphs
by computing the winning set of a coB\"uchi objective on the auxiliary game graph.
Finally, we show how to improve the running time to $O(n^2d)$ by using structural properties of the
auxiliary game graph and adapting a known technique for solving B\"uchi Games~\cite{CH14}.

\para{The Auxiliary Game Graph.}
Given a game graph $\Gamma=(V,E,\langle V_1, V_2 \rangle)$ with $n$ vertices, $m$ edges
and a bounded B\"uchi objective $\pbuchi{B,d}$, we first construct the auxiliary graph by calling
\ConstructAuxiliaryGraph{$(V,E),B,d$} in Algorithm~\ref{alg:lg} and additionally 
partition the vertices of the auxiliary graph $V^*$ into player-1 vertices $V_1^*$ and player-2
vertices $V_2^*$, i.e., for each $(v,\ell) \in V^*$ we get $(v,\ell) \in V_1^*$ if $v
\in V_1$ and $(v,\ell) \in V^*_2$ if $v \in V_2$. 
The auxiliary game graph has $O(nd) = O(n^2)$ vertices and $O(md) = O(mn)$ edges.
We say that a vertex $(v, \ell) \in V^*$ is a \emph{layer-$\ell$ vertex} and $v$ is its \emph{first component}.

For any play $\lambda$, we denote by $\lambda_k$ the $k$th vertex of the play.
If a play has a superscript, it denotes the starting vertex of the play, e.g $\lambda^v$ means
that the play $\lambda$ starts at $v$. By $\lambda^v_k$ we refer to the $k$th vertex of the play
$\lambda^v$ which starts at $v$.
Given a finite feasible play $\lambda^{(w, \ell)}$ in $\Gamma^*$ starting at
$(w, \ell)$, we define $\proj(\lambda^{(w, \ell)})$ to be the projection
of $\lambda^{(w, \ell)}$ on the first component of the vertices in it; by
definition, this finite play starts at $w$ and is feasible in $\Gamma$.  Analogously,
given a finite feasible play $\lambda^w$ in $\Gamma$, we define
$\lift(\lambda^w, \ell)$ to be the unique finite feasible play in $\Gamma^*$ starting at
$(w, \ell)$ such that the first component of $\lift(\lambda^w, \ell)_k$ is the same
as $\lambda^w_k$. For $(u, v)$ in $E$ such that $(u, j) \in V^*$ define (the appropriate
next layer number if you followed the copy of $(u, v)$ starting in layer~$j$)
\[
  \nextl(u, v, j) = \begin{cases}
  j+1 \text{ if } j<d \text{ and }  v \notin B\\
  d \text{ if } j=d \text{ and } v \notin B\\
  0 \text{ if } v \in B\,.
\end{cases}
\]
Now, define $\lift(\lambda^w, \ell)_1 = (w, \ell)$, and for $k > 1$, given
$\lift(\lambda^w, \ell)_{k-1} = (\lambda^w_{k-1}, j)$ define
$\lift(\lambda^w, \ell)_k = (\lambda^w_k, \nextl(\lambda^w_{k-1}, \lambda^w_k,
j))$.
Similarly, given the finite feasible play $\lambda^{(w, \ell)}$ in $\Gamma^*$, we define
$\move(\lambda^{(w, \ell)}, \ell')$ to be the finite play that starts at $(w, \ell')$ in $\Gamma^*$
such that, for any $k$, the first components of $\lambda^{(w, \ell)}_k$ and
$\move(\lambda^{(w, \ell)}, \ell')_k$ are the same. By construction of $\Gamma^*$ the finite play
$\move(\lambda^{(w, \ell)}, \ell')$ is well-defined because (1) edges going from layer-$i$ vertices
to layer-$(i+1)$ vertices $(1 \leq i \leq d-1)$ exist in all layers with the same respective first components except 
in layer-$d$ where these edges go again to layer-$d$, (2) edges going to layer-0 vertices exist in all
layers $(1 \leq i \leq d)$ and (3) because edges originating from layer-0 vertices implies that both
plays are currently visiting the same layer-0 vertex.

In comparison, 
the goal of the two operations $\proj(\cdot)$ and $\lift(\cdot)$ is to map finite plays between $\Gamma^*$ and $\Gamma$ such that the
finite play in
$\Gamma^*$ has, for all vertices, the same first component as the corresponding finite play in
$\Gamma$ and vice versa.
In contrast, $\move(\lambda^{(w,\ell)},\ell')$ maps a finite play in $\Gamma^*$ to a finite play also in 
$\Gamma^*$ which has the same first component but a ``shifted'' starting vertex.

\subsection{An $O(n^2d^2)$-time Algorithm for Bounded B\"uchi in Games}
\begin{sloppypar}
In this section, we show that we can compute the winning set of 
a given \emph{bounded B\"uchi objective} on game graphs
by computing the winning set of a \emph{coB\"uchi objective} on the auxiliary game graph.
Then we apply the best-known algorithm
for computing the winning set of a B\"uchi objective on the auxiliary game graph to get the desired result.
In the following lemma, we prove that computing $\W{1}{\pbuchi{B,d, \Gamma}}$ is the same as
computing $\W{1}{\cobuchi{C^*, \Gamma^*}}$ where $C^*$ are the vertices in layers-$\{0, 1, \ldots, d{-}1\}$.
Intuitively, when a play $\phi$ in $\cobuchi{C^*, \Gamma^*}$ stays in layers-$\{0, 1, \ldots, d{-}1\}$,
it reaches a vertex in layer~0 every at most $d$ steps by
construction of $\Gamma^*$.
The layer-$0$ vertices correspond to the vertices in $B$ which means that a play $\phi'$ in $\Gamma$ defined as
the projection on the first component of the vertices in $\phi$ 
visits a vertex in $B$ every at most $d$ steps which implies that $\phi' \in \pbuchi{B,d, \Gamma}$. 
On the other hand, when
player~1 has a strategy in $\Gamma$ to visit a vertex in $B$ every at most $d$ steps, a similar strategy
which visits the same vertices in the first component in $\Gamma^*$ allows player~1 to stay in the first $d$ layers of the auxiliary graph.
\end{sloppypar}

\begin{lemma}\label{lem:auxgg}
	\begin{sloppypar}
		Let $\GG = (V,E,\langle V_1, V_2 \rangle)$ be a game graph with bounded B\"uchi objective
		$\pbuchi{B,d}$, let $\GG^* = (V^*,E^*, \langle V_1^*, V_2^* \rangle)$ be the corresponding auxiliary
		game graph,
		and let $C^*$ be the vertices in the first $d$ layers of the auxiliary graph, i.e., $C^*= \{(v,i) \in V^* \mid 0 \leq i \leq d-1 \}$.
		Then $\{ w \mid (w,i) \in  \W{1}{\cobuchi{C^*, \Gamma^*}}, \text{ for some } 0 \leq i \leq d\}
		=  \W{1}{\pbuchi{B,d, \Gamma}}$.
	\end{sloppypar}
\end{lemma}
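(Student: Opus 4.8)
The plan is to prove the two inclusions by transferring winning strategies between $\Gamma$ and $\Gamma^*$, using $\proj$, $\lift$, and $\move$ as the bridge. The crucial preliminary observation is that the layer coordinate of a play in $\Gamma^*$ is a deterministic function of the sequence of first components together with the initial layer: by the definition of $\nextl$, once a starting layer $i$ is fixed, every infinite play $\lambda^{(w,i)}$ of $\Gamma^*$ satisfies $\lambda^{(w,i)} = \lift(\proj(\lambda^{(w,i)}),i)$, and conversely $\proj(\lift(\phi,i)) = \phi$ for every play $\phi$ of $\Gamma$ from $w$. Hence, for each valid starting layer $i$ (namely $i=0$ if $w\in B$, and $i\in\{1,\dots,d\}$ if $w\notin B$), the maps $\proj$ and $\lift(\cdot,i)$ are mutually inverse bijections between the plays of $\Gamma$ from $w$ and the plays of $\Gamma^*$ from $(w,i)$. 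Since the ownership of $(v,\ell)\in V^*$ is determined solely by its first component $v$, this bijection also preserves which player moves at each step, as well as edge validity. The operation $\move$ additionally lets one shift the starting layer, reflecting the prefix independence of both objectives, although the per-layer bijection already suffices below.

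The technical heart is the play-level correspondence between the two objectives: I would show that $\phi \in \pbuchi{B,d,\Gamma}$ iff $\lift(\phi,i) \in \cobuchi{C^*,\Gamma^*}$. The key is that $\lift(\phi,i)$ occupies a layer-$d$ vertex at a position $k$ (past the first B\"uchi visit) exactly when the size-$d$ window $\{v_{k-d+1},\dots,v_k\}$ contains no B\"uchi vertex: writing $p$ for the last B\"uchi-visited position before $k$, the layer coordinate grows as $\min(k-p,d)$, so reaching layer $d$ is equivalent to $k-p\ge d$, i.e.\ to a full run of $d$ non-B\"uchi vertices ending at $k$. Because $V^*$ is finite, $\Inf{\lift(\phi,i)}\not\subseteq C^*$ holds iff layer $d$ is visited infinitely often, which by the windowing equivalence holds iff infinitely many size-$d$ windows of $\phi$ lack a B\"uchi vertex, i.e.\ iff $\phi\notin\pbuchi{B,d,\Gamma}$ (the case where $B$ is visited only finitely often is subsumed, as then the play is eventually trapped in layer $d$). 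This argument is insensitive to the valid choice of $i$.

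Given these two ingredients, both inclusions follow by strategy mimicry. For $\supseteq$, take $w\in\W{1}{\pbuchi{B,d,\Gamma}}$ with a winning strategy $\sigma$, fix any valid layer $i$, and define $\sigma^*$ on $\Gamma^*$ by $\sigma^*(\lambda^*) = (\sigma(\proj(\lambda^*)),\ell')$, where $\ell'$ is the layer forced by $\nextl$; every $\sigma^*$-play projects to a $\sigma$-play, which lies in $\pbuchi{B,d,\Gamma}$, so by the correspondence the $\Gamma^*$-play lies in $\cobuchi{C^*,\Gamma^*}$, giving $(w,i)\in\W{1}{\cobuchi{C^*,\Gamma^*}}$. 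For $\subseteq$, take $(w,i)\in\W{1}{\cobuchi{C^*,\Gamma^*}}$ with a winning strategy $\sigma^*$ and set $\sigma$ on $\Gamma$ to the first component of $\sigma^*(\lift(\cdot,i))$; every $\sigma$-play lifts to a $\sigma^*$-play in $\cobuchi{C^*,\Gamma^*}$ and projects back to itself, hence lies in $\pbuchi{B,d,\Gamma}$. In both directions the ownership-preserving bijection ensures that player~2's arbitrary responses in one game correspond to legal responses in the other, so the constructed strategy wins against every opponent. The main obstacle I anticipate is not the strategy bookkeeping but the precise windowing equivalence of the second paragraph: pinning down the off-by-one boundary (a gap of exactly $d$ edges resetting to layer~$0$ versus a gap of $d{+}1$ reaching layer~$d$) against the size-$d$ window definition of $\pbuchi{B,d}$, and phrasing it uniformly across plays with finitely and infinitely many B\"uchi visits.
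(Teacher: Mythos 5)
Your proposal is correct and follows essentially the same route as the paper's proof: strategies are transferred between $\Gamma$ and $\Gamma^*$ via $\proj$, $\lift$, and $\nextl$, and the objective correspondence rests on the observation that the layer coordinate counts steps since the last B\"uchi visit, so that staying in $C^*$ is equivalent to visiting layer~$0$ (hence $B$) every at most $d$ steps. The only difference is organizational — you isolate the play-level equivalence $\phi \in \pbuchi{B,d,\Gamma} \Leftrightarrow \lift(\phi,i) \in \cobuchi{C^*,\Gamma^*}$ and invoke the play bijection, where the paper instead explicitly constructs the opponent's corresponding strategy in each direction and proves the plays agree by induction on positions — but the substance is identical.
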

\begin{proof}

	\par We first prove that $\{ w \mid (w,i) \in  \W{1}{\cobuchi{C^*, \Gamma^*}}, \text{ for some } 0 \leq i \leq d\}  \subseteq
	\W{1}{\pbuchi{B,d, \Gamma}}$.
	Let $(w,i) \in \W{1}{\cobuchi{C^*, \Gamma^*}} $. Then player~1 has a winning strategy
	$\sigma^*$ in $\Gamma^*$ such that for all player-2 strategies $\pi^*$, we have that $\omega((w,i),
	\sigma^*,\pi^*) \in \cobuchi{C^*, \Gamma^*}$. 

	Whenever player~1 makes a move in $\Gamma^*$, we define the corresponding move in $\Gamma$ as
	follows:
	For any finite play $\lambda^w$ in $\Gamma$ that ends in a player-1 vertex, define
        $\sigma(\lambda^w)$ to be the first component of
        $\sigma^*(\lift(\lambda^w, i))$.  (It does not matter how we define
        $\sigma$ for plays that do not start at $w$.)

	Next, we argue why $\sigma$ is a winning player-1 strategy for $\pbuchi{B,d, \Gamma}$ starting at $w$.  Let $\pi$ be an
        arbitrary player-2 strategy in $\Gamma$.  We define a corresponding
        player-2 strategy $\pi^*$ in $\Gamma^*$: for $\lambda^{(w, i)}$ that
        ends in a player-2 vertex $(u, j)$, let
        $v = \pi(\proj(\lambda^{(w, i)}))$ and define
        $\pi^*(\lambda^{(w, i)}) = (v, \nextl(u, v, j))$.

	Now, it is straightforward to show that the first component of $\omega((w,i),\sigma^*,\pi^*)_k$ is equal to
	$\omega(w,\sigma,\pi)_k$ by induction on $k$.

	Since the play $\omega((w,i), \sigma^*,\pi^*) \in \cobuchi{C^*, \Gamma^*}$, it
        stays in $C^*$ after a finite number of steps. Note that to stay in
        $C^*$ means to visit a layer-$0$ vertex after every at most $d$ steps because there are only
		$d$ layers in $C^*$ and each step that does not go to a layer-$0$ vertex increases the
		layer counter.
        Since the first component of each layer-$0$ vertex is in $B$, the play
        $\omega(w,\sigma,\pi)$ visits a vertex in $B$ every at most $d$ steps
        after a finite number of steps and is in $\pbuchi{B,d, \Gamma}$.

		\begin{sloppypar}
			The other direction, i.e., $\W{1}{\pbuchi{B,d, \Gamma}} \subseteq \{ w \mid (w,i) \in
				\W{1}{\cobuchi{C^*, \Gamma^*}},$ for some $0 \leq i
				\leq d\}$ can be shown with a similar argument.

			Let $w \in \W{1}{\pbuchi{B,d, \Gamma}}$. Then player~1 has a winning strategy
			$\sigma$ in $\Gamma$ such that for all player-2 strategies $\pi$, we have that
			$\omega(w,\sigma,\pi) \in \pbuchi{B,d, \Gamma}$. 

			Whenever player~1 makes a move in $\Gamma$, we define the corresponding move in $\Gamma^*$ as
			follows: Let $0 \leq i \leq d$ be arbitrary such that $(w, i) \in V^*$.
			For any finite play $\lambda^{(w,i)}$ in $\Gamma$ that ends in a player-1 vertex $(u,j)$, let
			$v = \sigma(\proj(\lambda^{(w,i)}))$ and
			define $\sigma^*(\lambda^{(w,i)}) = (v, \nextl(u,v,j))$. 
		\end{sloppypar}

	Next, we argue why $\sigma^*$ is a winning player-1 strategy for $\cobuchi{C^*, \Gamma^*}$
	starting at $(w,i)$. Let $\pi^*$ be an arbitrary player-2 strategy in $\Gamma^*$. 
	We define a corresponding player-2 strategy $\pi$ in $\Gamma$: For the finite play $\lambda^{w}$ that ends in a
	player-2 vertex $u$, let $\pi(\lambda^w)$ be the first component of
	$\pi^*(\lift(\lambda^w,i))$ (it does not matter how we define $\pi$ for plays that do not start
	at $w$). 

	Now it is straightforward to show by induction on $k$ that the first component of $\omega((w,i),
	\sigma^*, \pi^*)_k$ is equal to $\omega(w,\sigma,\pi)_k$.

	Since the play $\omega(w, \sigma, \pi) \in \pbuchi{B,d, \Gamma}$, it visits a vertex in $B$
	every at most $d$ steps after a finite number of steps. Thus, the play $\omega((w,i), \sigma^*,
	\pi^*)$ visits a layer-0 vertex every at most $d$ steps after a finite number of steps.
	Hence, $\omega((w,i), \sigma^*, \pi^*)$ stays in $C^*$ as the play
	increments the layer counter to at most $d-1$ and is in $\cobuchi{C^*,\Gamma^*}$.
\end{proof}

To compute
$\W{1}{\cobuchi{C^*}}$ in $\Gamma^*$, we observe that, by the duality of B\"uchi objectives,
$\W{1}{\cobuchi{C^*}} = V^* \setminus \W{2}{\buchi{V^* \setminus C^*}} = V^* \setminus \W{2}{\buchi{\{(v,d) \in V^*\}}}$. 
Since, traditionally, 
we always compute the player-1 winning set of a given objective, we swap player-1 and player-2 vertices
in $\Gamma^*$. Then we compute $W = \W{1}{\buchi{\{(v,d) \in V^*\}}}$ using the algorithm of Chatterjee and Henzinger~\cite{CH14}, which is the fastest algorithm for B\"uchi games known, and project $V^* \setminus W$  on the first coordinate.
We illustrate the details in Algorithm~\ref{alg:games:pbuchi}.

\begin{algorithm}[ht]
	\caption{Determine $\W{1}{\pbuchi{B,d}}$, given a game graph$\Gamma$}\label{alg:games:pbuchi}
	\SetKwFunction{bbuchigames}{BoundedB\"uchiGames}
	\SetKwFunction{buchiGamesFast}{B\"{u}chiGamesFast}
	\Procedure{\bbuchigames{$\Gamma = (V,E, \langle V_1, V_2\rangle), B, d$}}{
		$(V^*,E^*) \gets$ \ConstructAuxiliaryGraph{$(V,E)$}\label{alg:games:pbuchi:auxgg}\; 
		$V^*_1 \gets \{(v,i) \in V^* \mid v \in V_1 \}, V^*_2 \gets \{(v,i) \in V^* \mid v \in V_2
		\}$\;
		$\Gamma^* \gets (V^*,E^*, V^*_1, V^*_2); B^* \gets \{(v,d) \in V^* \mid v \in V\setminus
			B\}$\;
		$W \gets$ \buchiGamesFast{$\Gamma^* = (V^*,E^*,\langle V^*_2,V^*_1 \rangle),B^*$} (\cite{CH14},
		Algorithm~\ref{alg:games:fastbuchi})\label{alg:games:pbuchi:fastbuchi}\;
		\Return{$\{ x \mid (x,i)\in  V^* \setminus W \text{ for some } 0 \le i \le d\}$}
	}
\end{algorithm}

The correctness of Algorithm~\ref{alg:games:pbuchi} is due to the correctness of the fast B\"uchi
games algorithm~\cite[Theorem 2.14]{CH14}, the
argument above, and Lemma~\ref{lem:auxgg}.
The argument for the running time of Algorithm~\ref{alg:games:pbuchi} is as follows.
We first construct $\Gamma^*$ in $O(md)$ time and then compute the winning set of $\cobuchi{C^*}$ in time $O(|V^*|^2)$~\cite[Theorem 2.14]{CH14}. 
As $|V^*| = O(nd)$ and $d = O(n)$, we get the following theorem.
\begin{theorem}
	The set of winning vertices for the bounded B\"uchi objectives in games can be computed in time
	$O(n^2d^2) = O(n^4)$.
\end{theorem}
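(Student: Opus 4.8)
The plan is to prove the theorem by establishing that Algorithm~\ref{alg:games:pbuchi} correctly computes $\W{1}{\pbuchi{B,d,\Gamma}}$ within the claimed time, reducing the bounded B\"uchi game on $\Gamma$ to an ordinary B\"uchi game on the auxiliary game graph $\Gamma^*$. First I would invoke Lemma~\ref{lem:auxgg}, which already equates $\{ w \mid (w,i) \in \W{1}{\cobuchi{C^*,\Gamma^*}} \text{ for some } 0 \le i \le d\}$ with $\W{1}{\pbuchi{B,d,\Gamma}}$, where $C^*$ is the set of vertices in layers $0,\dots,d-1$. Hence it suffices to compute $\W{1}{\cobuchi{C^*,\Gamma^*}}$ in $\Gamma^*$ and project the result onto first components. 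The entire nontrivial correctness burden — the translation of strategies between $\Gamma$ and $\Gamma^*$ through the maps $\proj$, $\lift$, and $\nextl$ — is carried by Lemma~\ref{lem:auxgg}, so at the level of the theorem this step is just an appeal to that lemma.

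Next I would convert the coB\"uchi objective into a B\"uchi objective by duality so that a standard player-1 routine applies. Since $\cobuchi{C^*}$ and $\buchi{V^*\setminus C^*}$ are complementary objectives (as recorded in the preliminaries) and parity/$\omega$-regular games are determined~\cite{Martin75}, the two winning sets partition $V^*$, giving $\W{1}{\cobuchi{C^*,\Gamma^*}} = V^* \setminus \W{2}{\buchi{V^*\setminus C^*}}$. By the construction of $\Gamma^*$ the complement of the first $d$ layers is exactly the last layer, i.e.\ $V^*\setminus C^* = \{(v,d)\in V^*\}$. To reuse the fastest known player-1 B\"uchi-game algorithm, I would then swap the roles of the players in $\Gamma^*$ (exchanging $V_1^*$ and $V_2^*$), compute $W = \W{1}{\buchi{\{(v,d)\in V^*\}}}$ on the swapped graph via~\cite[Theorem 2.14]{CH14}, and return the first-component projection of $V^*\setminus W$. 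Correctness of this subroutine is precisely~\cite[Theorem 2.14]{CH14}, so combining it with Lemma~\ref{lem:auxgg} and the duality step establishes that the algorithm outputs $\W{1}{\pbuchi{B,d,\Gamma}}$.

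For the running time I would bound the two contributions separately. Building $\Gamma^*$ by the construction of Algorithm~\ref{alg:lg} costs $O(md)$ time by Lemma~\ref{lem:rtlg}, and the vertex partition and final projection are linear in the size of $\Gamma^*$. The auxiliary game graph has $|V^*| = O(nd)$ vertices and $O(md)$ edges, so the B\"uchi-game computation runs in $O(|V^*|^2) = O(n^2 d^2)$ time and dominates. Finally, using $d = O(n)$ — which we may assume since for $d \ge n$ the bounded objective collapses to ordinary B\"uchi liveness — yields the stated $O(n^2 d^2) = O(n^4)$ bound.

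I do not expect a serious conceptual obstacle here, precisely because the delicate part lives in Lemma~\ref{lem:auxgg} and in the cited B\"uchi-game algorithm. The only point that requires care is the duality-plus-player-swap step: one must check that the target handed to the B\"uchi routine is exactly the complement of the coB\"uchi ``stay in $C^*$'' set, namely the full last layer $\{(v,d)\in V^*\}$, and that the final projection keeps only first components. These are bookkeeping matters rather than genuine difficulties, so I expect the proof to be short once Lemma~\ref{lem:auxgg} and~\cite[Theorem 2.14]{CH14} are in hand.
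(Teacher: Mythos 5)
Your proposal is correct and follows essentially the same route as the paper: reduce via Lemma~\ref{lem:auxgg} to a coB\"uchi objective on $\Gamma^*$, dualize to a B\"uchi objective on the last layer with swapped players, solve it with the $O(|V^*|^2)$ algorithm of~\cite[Theorem 2.14]{CH14}, and bound $|V^*| = O(nd)$ with $d = O(n)$. The only (harmless) addition is your explicit justification of the assumption $d = O(n)$, which the paper simply asserts.
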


\subsection{An $O(n^2d)$-time Algorithm for Bounded B\"uchi in Games}
In this section, we give a refined running time analysis of Algorithm~\ref{alg:games:pbuchi}
giving us an $O(n^2d)$-time algorithm for bounded B\"uchi games.
We first describe the fastest algorithm for B\"uchi Games~\cite{CH14} for completeness.
Then, we identify key ideas of the refined running time analysis when the input is an auxiliary
game graph and prove the improved running time formally.

\subsubsection{The B\"uchi Games Algorithm of~\cite{CH14}}
Given a game graph $\Gamma = (V, E, \langle V_1, V_2 \rangle)$ and a set $B$ of B\"uchi
vertices\footnote{not to be confused with the input for the bounded B\"uchi problem in the previous and later sections}, we fix an order on the edges. In this fixed order, the edges $(u,v)$ where $u$ is a non-B\"uchi player-2 vertex, i.e., $u \in (V_2 \setminus B)$,
come before all other edges. We call them priority-1 edges. All the other edges are priority-0 edges. 

\begin{definition}\label{def:fastbuchi:gammai}
	Given a game graph $\Gamma = (V,E, \langle V_1,V_2 \rangle)$, let $\Gamma_i = (V,E_i, \langle V_1,V_2 \rangle)$ for $1 \leq i \leq \log n$
	be a subgraph of $\Gamma$ which we define as follows: For all $u \in V$, the set $E_i$ contains the following edges: 
	\begin{enumerate}
		\item If the outdegree of $u$ in $E$ is at most $2^i$, $E_i$ contains all edges of the form $(u,v)$,
			i.e., if $|\Out{u}|  \leq 2^i$ then the set $\{(u,v) \mid v \in \Out{u}\} \subseteq E_i$.
		\item If the edge $(v,u)$ belongs to the first $2^i$ inedges of vertex $u$ in $E$, we have $(v,u) \in E_i$ (``first''
			means with respect to to the fixed order we specified above).
	\end{enumerate}
	Note that $E_{i-1} \subseteq E_i$ since the order of the edges is fixed.
	We form a partition of $V$ in $\Gamma_i$ by giving each vertex a color:
	\begin{itemize}
		\item \emph{Blue}: A player-1 vertex $v$ in $\Gamma_i$ is blue if the outdegree of $v$ is greater than $2^i$.
		\item \emph{Red}: A player-2 vertex $u$ in $\Gamma_i$ is red if it has no outedge in $E_i$.\footnote{In the algorithm of
				Chatterjee and Henzinger~\cite{CH14} red vertices are player-2 vertices where an edge of $E$ is
				missing. We change this definition slightly, i.e., without changing their algorithm or
				correctness argument, by saying that player-2 vertices are red if they do not have
				any outedges in $E_i$.}
		\item All other vertices are \emph{white}.
	\end{itemize}

\end{definition}
Thus, if a player-1 vertex is white then all its outedges are in $E_i$, and if a player-2 vertex is white then
it has at least one outgoing edge in $E_i$.

\para{Algorithm description.}
The input of Algorithm~\ref{alg:games:fastbuchi} is a game graph $\Gamma$ and a set of B\"uchi
vertices $B$. Recall that every vertex in a player-1 closed set $S$ without B\"uchi vertices cannot 
be in the player-1 winning set of the given B\"uchi objective $\W{1}{\buchi{B}}$ (Proposition~\ref{prop:closedsets}~(2)). 
We repeatedly find such a set $S$ by removing from $V$ the player-1 attractor of the set $B$ 
(Proposition~\ref{obs:attractorclosedset}) and forming $S$ from all the remaining vertices. Then we remove the player-2 attractor of $S$.
In the algorithm, we identify such a set $S_j$ at Line~\ref{alg:games:fastbuchi:sj} 
and remove the attractor at Line~\ref{alg:games:fastbuchi:dj}. 
Note that a naive algorithm would take $O(nm)$ time, as the attractor of $S$ could always be of size $1$ 
and computing the attractor is in $O(m)$ time.
To obtain a quadratic-time (in the number of vertices) algorithm, the improved algorithm of Chatterjee and
Henzinger constructs, for $i=1,\dots,\log n$,
the graph $\Gamma_i$ which has at most $2^i$ edges. 
Due to the properties of $\Gamma_i$, it can be shown that the set $S_j$ has size of at 
least $2^{i-1}$. In this way, the attractor computation takes time proportional to the removed vertices.
Since player-1 vertices with missing outgoing edges or player-2 vertices with no 
outgoing edge in $\Gamma^i$, i.e., non-white vertices might still be able to reach a 
vertex in $B$, we compute the player-1 attractor of the non-white vertices combined with the vertices 
in $B$. We illustrate the details in Algorithm~\ref{alg:games:fastbuchi}.

\begin{algorithm}[ht]
	\caption{Determine $\W{1}{\buchi{B}}$, given a game graph
		$\Gamma$~\cite{CH14}}\label{alg:games:fastbuchi}
	  \Procedure{\buchiGamesFast{$\Gamma = (V, E, \langle V_1, V_2 \rangle), B$}}{
	  Let $j \gets 0$; $U \gets \emptyset$; $Y_0 \gets \attr{1}{B}{\Gamma}$; $S_0 \gets V \setminus
	  Y_0;$ $D_0 \gets \attr{2}{S_0}{\Gamma}$; $\Gamma^j \gets \Gamma$\;
	  $j \gets j+1$;
	  \While{$D_{j-1} \neq \emptyset$}{\label{alg:games:fastbuchi:while}
		  Remove the vertices in $D_{j-1}$ from $\Gamma^{j-1}$ to obtain $\Gamma^{j}$; and $U \gets U \cup
		  D_{j-1}$\;
		  $i \gets 1$\;
		  \Repeat{$S_j$ is nonempty or $i \ge 1 + \log n$\label{alg:games:fastbuchi:repeat}}
		  {
			  Construct $\Gamma^{j}_i$ from $\Gamma^j$ as described in
			  Definition~\ref{def:fastbuchi:gammai}. \;
			  Let $Z^j_i$ be the vertices of $V^j$ that are either red or blue\;
			  $Y^j_i \gets \attr{1}{B^j \cup Z^j_i}{\Gamma^{j}_i}$\;\label{alg:games:fastbuchi:p1attractor}
			  $S_j \gets V^j \setminus Y^j_i$\;\label{alg:games:fastbuchi:sj}
			  $i \gets i +1$\;
		  }\label{alg:games:fastbuchi:until}
		  \If{$S_j \neq \emptyset$}{
			  $D_j \gets \attr{2}{S_j}{\Gamma^j}$\label{alg:games:fastbuchi:dj} \;
		  }\Else{
			  \Return{$V \setminus U$}
		  }
		  $j \gets j+1$\;
	  }
  }
\end{algorithm}

The removal of player-1 closed sets in $\Gamma_i$ now includes vertices which are blue, red, and white.
The definition of a \emph{separating cut} further refines the definition of the winning regions for player~2 in this regard.

\para{Separating cut.}
A set $S$ of vertices induces a separating cut in a game graph $\Gamma_i$ or $\Gamma^j_i$ in
Algorithm~\ref{alg:games:fastbuchi} if
\begin{enumerate}
	\item the only edges from $S$ to $V \setminus S$ come from player-2 vertices in $S$
	\item every player-2 vertex in $S$ has an edge to another vertex in $S$
	\item every player-1 vertex in $S$ is white and 
	\item $B \cap S = \emptyset$.
\end{enumerate}
Thus, a separating cut $S$ is a player-1 closed set where (i) player-1 vertices are white and which (ii) does not contain a
vertex in $B$.

The following lemmas are needed to establish the improved running time guarantees in the next
section. Detailed proofs can be found in the paper by Chatterjee and Henzinger~\cite{CH14}. 

Lemma~\ref{lem:games:sj_sepcutGj_attr1} below says that the set $S_j$ is indeed a separating cut in $\Gamma^j$ (not only in
$\Gamma^j_i$) and that due to the careful construction of $\Gamma^j_i$ from the game graph
$\Gamma^j$ in iteration $j$, $S_j$
does not include a vertex of the player-1 attractor of the B\"uchi vertices in $\Gamma^j$.
\begin{lemma}[\cite{CH14}, Lemma 2.9]\label{lem:games:sj_sepcutGj_attr1}
	\begin{sloppypar}
		Let $S_j$ be the non-empty set computed by Algorithm~\ref{alg:games:fastbuchi} in iteration $j$.
		Then, (1) $S_j$ is a separating cut in $\Gamma^j$; and (2) 
		$S_j \cap \attr{1}{B^j}{\Gamma^j} = \emptyset$.
	\end{sloppypar}
\end{lemma}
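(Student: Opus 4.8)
The plan is to derive both claims from the single fact that $Y^j_i = \attr{1}{B^j \cup Z^j_i}{\Gamma^j_i}$ is a player-1 attractor, and then carefully transfer the resulting closure property from the sparsified graph $\Gamma^j_i$ back to the full subgraph $\Gamma^j$. First I would invoke Observation~\ref{obs:attractorclosedset} with $\ell = 1$ in the graph $\Gamma^j_i$: since $S_j = V^j \setminus Y^j_i$ is the complement of a player-1 attractor, it is a player-1 closed set in $\Gamma^j_i$. Concretely, no player-1 vertex of $S_j$ has an $E^j_i$-edge into $Y^j_i$, and every player-2 vertex of $S_j$ has an $E^j_i$-edge that stays inside $S_j$. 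Because $B^j \cup Z^j_i \subseteq Y^j_i$, I immediately obtain $B^j \cap S_j = \emptyset$ (condition~(4) of a separating cut) and $Z^j_i \cap S_j = \emptyset$, so every vertex of $S_j$ is white with respect to the index $i$ at which the repeat-loop terminates with $S_j$ nonempty; in particular every player-1 vertex of $S_j$ is white, giving condition~(3).

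The crux of the argument---and the step I expect to carry the weight---is promoting closure in $\Gamma^j_i$ to closure in $\Gamma^j$. Here I would use the defining property of the coloring in Definition~\ref{def:fastbuchi:gammai}: a white player-1 vertex has outdegree at most $2^i$ in $\Gamma^j$, so \emph{all} of its outgoing edges already lie in $E^j_i$. Consequently, the statement ``no player-1 vertex of $S_j$ has an $E^j_i$-edge into $Y^j_i$'' is in fact the statement that no player-1 vertex of $S_j$ has any $E^j$-edge into $Y^j_i$, i.e.\ all $\Gamma^j$-successors of player-1 vertices in $S_j$ remain in $S_j$. This yields condition~(1): every edge leaving $S_j$ in $\Gamma^j$ must originate at a player-2 vertex. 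For the player-2 vertices the argument runs the other way and is easier: each such vertex of $S_j$ already has an $E^j_i$-edge inside $S_j$, and since $E^j_i \subseteq E^j$ this edge survives in $\Gamma^j$, establishing condition~(2) (and the no-self-loops assumption ensures it reaches a genuinely different vertex). Collecting conditions~(1)--(4) shows $S_j$ is a separating cut in $\Gamma^j$, which is part~(1) of the lemma.

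For part~(2) I would observe that a separating cut is, by its conditions~(1) and~(2), exactly a player-1 closed set in $\Gamma^j$, and by condition~(4) it is disjoint from $B^j$. I would then apply Proposition~\ref{prop:closedsets}(2) with the closed set $U = S_j$ and the target $T = B^j \subseteq V^j \setminus S_j$ to conclude $\W{1}{\reach{B^j, \Gamma^j}} \cap S_j = \emptyset$. Since $\attr{1}{B^j}{\Gamma^j} = \W{1}{\reach{B^j, \Gamma^j}}$ by the definition of the attractor, this is precisely $S_j \cap \attr{1}{B^j}{\Gamma^j} = \emptyset$. The only genuinely delicate point in the whole argument is the asymmetry exploited in the second paragraph: closure of the player-1 vertices needs \emph{all} their edges, which is guaranteed only because they are white, whereas closure of the player-2 vertices needs only \emph{one} surviving edge, which $E^j_i \subseteq E^j$ supplies for free. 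Everything else is bookkeeping with the attractor and closed-set characterizations already established in the preliminaries.
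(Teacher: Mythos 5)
Your proof is correct, and it follows essentially the same route as the argument this paper defers to (the paper itself only cites~\cite{CH14} for the proof): the complement of the player-1 attractor $Y^j_i$ is a player-1 closed set in $\Gamma^j_i$ by Observation~\ref{obs:attractorclosedset}, whiteness of all vertices in $S_j$ (since $Z^j_i \subseteq Y^j_i$) is exactly what lets the closure of player-1 vertices transfer from $E^j_i$ to $E^j$ while $E^j_i \subseteq E^j$ handles the player-2 side, and part~(2) then follows from Proposition~\ref{prop:closedsets}(2) applied to the closed set $S_j$ with target $B^j$. You correctly identified the one delicate asymmetry (player-1 closure needs \emph{all} edges, supplied by whiteness; player-2 closure needs only \emph{one} edge, supplied by the subgraph inclusion), which is the heart of the original lemma.
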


Lemma~\ref{lem:games:sepcut_in_sj} establishes that the separating cut found in $\Gamma^j_i$ is
indeed the maximum separating cut in $\Gamma^j_i$. Also, if $\Gamma^j_i$ contains a separating
cut, Algorithm~\ref{alg:games:fastbuchi} finds it.
\begin{lemma}[\cite{CH14}, Lemma 2.11]\label{lem:games:sepcut_in_sj}
	Let $\Gamma^j_i$ be the game graph in iteration $j$ of the outer loop and iteration $i$ of the inner
	loop. If $S$ induces a separating cut in $\Gamma^j_i$, then $S \subseteq S_j$.
\end{lemma}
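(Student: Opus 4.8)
The plan is to recast the claim as a disjointness statement and then run the standard closed-set/attractor argument entirely inside $\Gamma^j_i$. Recall that Algorithm~\ref{alg:games:fastbuchi} sets $S_j = V^j \setminus Y^j_i$ with $Y^j_i = \attr{1}{B^j \cup Z^j_i}{\Gamma^j_i}$, where $Z^j_i$ is the set of red and blue vertices. Hence proving $S \subseteq S_j$ is equivalent to proving $S \cap Y^j_i = \emptyset$, i.e., that no vertex of a \emph{separating cut} $S$ lies in the player-1 attractor of $B^j \cup Z^j_i$ in $\Gamma^j_i$. First I would observe that the four defining conditions of a separating cut say precisely that $S$ is a player-1 closed set in $\Gamma^j_i$ that is disjoint from the attractor target: conditions (1) and (2) are exactly the two closed-set requirements (player-1 vertices keep all their $E_i$-edges inside $S$, and every player-2 vertex in $S$ has an $E_i$-successor in $S$), while conditions (3) and (4) will give disjointness from the target.

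Next I would verify $S \cap (B^j \cup Z^j_i) = \emptyset$. Condition (4) gives $S \cap B^j = \emptyset$ directly. For $Z^j_i$, note that a blue vertex is a player-1 vertex of out-degree exceeding $2^i$; condition (3) forces every player-1 vertex of $S$ to be white, so $S$ contains no blue vertex. A red vertex is a player-2 vertex with no out-edge in $E_i$; condition (2) gives every player-2 vertex of $S$ an $E_i$-edge to another vertex of $S$, so $S$ contains no red vertex. Therefore $S \cap Z^j_i = \emptyset$, and combined with condition (4) we obtain $S \cap (B^j \cup Z^j_i) = \emptyset$.

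The core step is then an induction over the inductive definition of the attractor $A = \attr{1}{B^j \cup Z^j_i}{\Gamma^j_i}$, showing $A_k \cap S = \emptyset$ for every level $k$. The base case $A_0 = B^j \cup Z^j_i$ is handled by the previous paragraph. For the inductive step, suppose a vertex $v$ enters $A_{k+1}$. If $v \in V_1$ and it is added because $\Out{v} \cap A_k \neq \emptyset$, then $v \in S$ would force (closed-set condition (1)) all of $\Out{v}$ into $S$, producing a vertex in $S \cap A_k$, contradicting the hypothesis. If $v \in V_2$ and it is added because $\Out{v} \subseteq A_k$, then $v \in S$ would (condition (2)) give a successor inside $S$ that also lies in $A_k$, again contradicting $S \cap A_k = \emptyset$. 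Hence $A_{k+1} \cap S = \emptyset$, and taking the limit yields $S \cap Y^j_i = \emptyset$, i.e., $S \subseteq S_j$. Equivalently, one can invoke Observation~\ref{obs:attractorclosedset} with Proposition~\ref{prop:closedsets}(1) to argue that from any vertex of the closed set $S$ player~2 keeps the play inside $S$ forever, so player~1 cannot force a visit to the target.

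I expect the only genuine subtlety to be the consistent use of the restricted edge set $E_i$ throughout: the separating-cut conditions, the red/blue coloring, and the attractor are all defined with respect to $\Gamma^j_i$ rather than $\Gamma^j$, so the closure property I exploit is closure under $E_i$-edges only, which is exactly what the attractor construction in $\Gamma^j_i$ requires. Keeping this alignment explicit is what makes the two attractor cases go through cleanly; the remainder is routine.
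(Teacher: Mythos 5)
Your proof is correct and is exactly the standard argument: a separating cut is a player-1 closed set in $\Gamma^j_i$ that is disjoint from the attractor target $B^j \cup Z^j_i$ (condition (3) excludes blue vertices, condition (2) excludes red vertices, condition (4) excludes B\"uchi vertices), and a closed set disjoint from the target can never meet the player-1 attractor, so $S \cap Y^j_i = \emptyset$, i.e., $S \subseteq S_j$. The thesis itself gives no proof of this lemma --- it defers to Lemma 2.11 of~\cite{CH14} --- and your level-by-level induction over the attractor, including the explicit care that the closure, coloring, and attractor are all taken with respect to the restricted edge set $E_i$ of $\Gamma^j_i$, is precisely the argument used there.
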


Lemma~\ref{lem:games:sj_sepcutGji} says that the set $S_j$ is a separating cut in $\Gamma^j_i$. This
does not follow from Lemma~\ref{lem:games:sj_sepcutGj_attr1}(1) because $\Gamma^j_i$ might have less edges
than $\Gamma^j$ and separating cuts are not preserved if we only consider a subset of edges in
$\Gamma^j$ (property 2 might be violated).
\begin{lemma}[\cite{CH14}, Lemma 2.12]\label{lem:games:sj_sepcutGji}
	Consider an iteration $j$ of the outer loop of Algorithm~\ref{alg:games:fastbuchi} such that the
	algorithm stops the inner loop at value $i$ and identifies a non-empty set $S_j$. Then, $S_j$
	is a separating cut in $\Gamma^j_i$.
\end{lemma}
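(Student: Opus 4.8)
The plan is to verify directly the four defining properties of a separating cut for the set $S_j=V^j\setminus Y^j_i$, where $Y^j_i=\attr{1}{B^j\cup Z^j_i}{\Gamma^j_i}$ is the player-1 attractor computed at Line~\ref{alg:games:fastbuchi:p1attractor}. The central observation I would use is that the complement of a player-$1$ attractor is a player-$1$ closed set: applying Observation~\ref{obs:attractorclosedset} with $\ell=1$ to the set $Y^j_i$ in the subgraph $\Gamma^j_i$ shows that $S_j$ is closed for player~$1$ in $\Gamma^j_i$. Throughout, I would be careful that all edge relations are read off the edge set $E^j_i$ of $\Gamma^j_i$ (not of $\Gamma^j$), since $\Gamma^j_i$ is the graph in which the cut is claimed.

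From the player-1 closed-set property I would immediately extract two of the four conditions. First, the closed-set property guarantees that no player-1 vertex of $S_j$ has an edge in $\Gamma^j_i$ leaving $S_j$; hence every edge in $\Gamma^j_i$ from $S_j$ to $V^j\setminus S_j=Y^j_i$ must originate at a player-2 vertex, which is condition~(1). Second, the same property guarantees that every player-2 vertex of $S_j$ has an outgoing edge in $\Gamma^j_i$ that stays inside $S_j$, which is exactly condition~(2).

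The remaining two conditions I would obtain from what the attractor $Y^j_i$ absorbs. Since $Y^j_i$ is the attractor of $B^j\cup Z^j_i$, it contains both $B^j$ and the set $Z^j_i$ of all red and blue vertices of $\Gamma^j_i$. Taking complements within $V^j$, the set $S_j$ meets neither: $S_j\cap B^j=\emptyset$ gives condition~(4), and $S_j\cap Z^j_i=\emptyset$ means every vertex of $S_j$ is neither red nor blue, hence white; in particular every player-1 vertex of $S_j$ is white, which is condition~(3).

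I expect no genuine obstacle here beyond bookkeeping: the whole argument reduces to Observation~\ref{obs:attractorclosedset} together with the definition of $S_j$ as an attractor complement. The one point that deserves attention is the insistence that conditions~(1) and~(2) are read relative to $E^j_i$; this is precisely why the statement is kept separate from Lemma~\ref{lem:games:sj_sepcutGj_attr1}, which asserts the stronger fact in $\Gamma^j$. Here I only need the in-$\Gamma^j_i$ version, so the closed-set property of $S_j$ in $\Gamma^j_i$ suffices and no comparison between the edge sets of $\Gamma^j_i$ and $\Gamma^j$ is required.
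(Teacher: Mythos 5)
Your proof is correct, and it is essentially the intended argument: the paper itself does not reprove this lemma (it defers to~\cite{CH14}), and the proof there proceeds exactly as you do — $S_j$ is the complement of the player-1 attractor $\attr{1}{B^j \cup Z^j_i}{\Gamma^j_i}$, so Observation~\ref{obs:attractorclosedset} yields conditions~(1) and~(2) with respect to the edge set of $\Gamma^j_i$, while the inclusion of the base set $B^j \cup Z^j_i$ in the attractor yields conditions~(3) and~(4). Your remark about reading the edge relations off $E^j_i$ rather than $E^j$ correctly identifies why this statement is separate from Lemma~\ref{lem:games:sj_sepcutGj_attr1}.
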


\subsubsection{Faster Algorithm for Bounded B\"uchi Games}
In this section, we give the refined running time analysis of
Algorithm~\ref{alg:games:pbuchi}.  We note that $\Gamma^*$ gets redefined to be
$(V^*,E^*,\langle V^*_2,V^*_1 \rangle)$ in Algorithm~\ref{alg:games:pbuchi} on
Line~\ref{alg:games:pbuchi:fastbuchi}.  Therefore, from here on, when we say
player~1 (respectively player~2), we mean the player controlling the vertices in
$V_2^*$ (respectively, those in $V_1^*$).

\para{Distinct vertices.}
We call a set of vertices $S$ in $\Gamma^*$ \emph{distinct} if, for each pair of vertices $(v,\ell),
(v',\ell') \in S$, we have $v \neq v'$. 

\para{Copies of a vertex.}
Let $\copies{v}$ denote the set of ``copies'' of a vertex $v \in V^*$, i.e.,
for a layer-$0$ vertex $(v,0)$ we have that $\copies{(v,0)}=\{(v,0)\}$ and for a vertex $(v,\ell)$,
where $\ell>0$, we have $\copies{(v,\ell)} = \{(v,1), \dots, (v,d)\}$.

The improved running time guarantee is due to two key ideas. 

\para{Key idea 1.}
When \emph{there is a} vertex
$(v,\ell)$ in $D_j$ then $\copies{(v,\ell)} \subseteq D_j$, i.e., \emph{all} its
copies are in $D_j$.

On a very high level, the argument is that if there is a player-2 strategy to go
from a vertex to $S_j$, then there exists a player-2 strategy from all copies of
that vertex to $S_j$.  While the idea is simple to state, complicated
machinery is needed to prove it formally.  We prove the key idea in  
Claim~\ref{claim:games:copiesdeleted} building on Definition~\ref{def:is} and
Claim~\ref{clm:transstrat}.

Now, if we follow the original running-time argument~\cite{CH14}, then we can only
claim that we remove $2^{i-1}$ vertices in \emph{total} if the inner loop at Line~\ref{alg:games:fastbuchi:repeat} 
stops at iteration $i$, but the second key idea states something stronger.

\para{Key idea 2.}
If the inner loop at Line~\ref{alg:games:fastbuchi:repeat} 
stops at iteration $i^*$, we remove $2^{i^*-1}$ \emph{distinct} vertices.

Combining the key ideas, we remove from the game graph in iteration $j$ all copies of those distinct
vertices.
The $i$th iteration of the loop at Lines~\ref{alg:games:fastbuchi:repeat}--\ref{alg:games:fastbuchi:until} 
takes time $O(2^i nd)$ for
constructing the auxiliary version of ${(\Gamma^*)}_i$ and performing the attractor computations. The
iterations of the loop in Lines~\ref{alg:games:fastbuchi:repeat}--\ref{alg:games:fastbuchi:until} before $i' < i$ amount 
to a total running time of $O(2^ind)$.
Thus, we charge the $2^{i-1}$ removed distinct vertices 
the cost of the iteration and the iterations before, 
i.e., each such removed original vertex is charged $O(nd)$.
As we can remove only $n$ distinct vertices since they correspond to the vertices in the game graph $\Gamma$, we have a total cost of
$O(n^2d)$.

For the second key idea to work, we must modify the
original bounded B\"uchi instance $(\Gamma, B, d)$ carefully.
For every vertex in $v \in B$ we add a player-2 vertex $v'$ which is not in $B$ and an edge
$(v',v)$. Then we redirect all edges which go to $v$ in the original instance and make them go to $v'$ instead, i.e.,
for all $v \in B$ we have $V_2 \gets V_2 \cup \{v'\}$ and $E \gets (E \cup \{(v',v)\} \cup \{(u,v') \mid
	(u,v) \in E \}) \setminus  \{(u,v) \in E\}$.  Also, we
increase $d$ by one, as we increase the distance to all vertices in $B$ by one. Note that this
simple modification allows us to assume, without loss of generality, that all vertices
in $B$ have incoming edges from player-2 vertices only. 
Since we swap the player-1 vertices with player-2 vertices in Algorithm~\ref{alg:games:pbuchi} we can assume
that all incoming edges to a layer-$0$ vertex are from
player-1 vertices. This adds at most $n$ vertices and edges to $\Gamma$.

\begin{observation}\label{obs:gg:layer0nop2incoming}
	We can assume, without loss of generality, that all layer-$0$ vertices $v \in V^*$ 
	of the auxiliary game graph $\Gamma^*$ created at Line~\ref{alg:games:pbuchi:auxgg} 
	in Algorithm~\ref{alg:games:pbuchi} have no incoming edges from player-2 vertices, i.e., 
	if $(v,0) \in V^*$ then $\In{(v,0)} \cap V^*_2 = \emptyset$.
\end{observation}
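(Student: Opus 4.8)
The statement is a \emph{without-loss-of-generality normalization} of the input, so the plan is to describe a cheap preprocessing of the instance $(\Gamma,B,d)$ that (i) leaves the winning set $\W{1}{\pbuchi{B,d,\Gamma}}$ unchanged on the original vertices and (ii) forces the claimed shape on the auxiliary graph, while blowing up $\Gamma$ by only $O(n)$ vertices and edges. Concretely, I would use the \emph{pass-through gadget} sketched just before the statement: for every $v \in B$ introduce one fresh vertex $v'$ with $v' \notin B$, add the single edge $(v',v)$, reroute every original edge $(u,v) \in E$ to $(u,v')$, and raise the bound from $d$ to $d+1$, obtaining an instance $(\Gamma',B,d+1)$. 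Since $v'$ has exactly one out-edge it is \emph{deterministic}, so its owner is irrelevant to the value of the game; I assign it to the player whose copies must be excluded as in-neighbours of the layer-$0$ vertices, which makes every predecessor of a B\"uchi vertex belong to that one player.

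First I would show the gadget is \emph{winning-preserving}: for every original vertex $w$, $w \in \W{1}{\pbuchi{B,d,\Gamma}}$ iff $w \in \W{1}{\pbuchi{B,d+1,\Gamma'}}$. The argument rests on a natural bijection between plays of $\Gamma$ and $\Gamma'$: a play of $\Gamma'$ is obtained from a play of $\Gamma$ by inserting one step through $v'$ immediately before each visit of a B\"uchi vertex $v$, and conversely deleting all $v'$-steps recovers a play of $\Gamma$ (in $\Gamma'$ the only predecessor of $v \in B$ is $v'$, so every B\"uchi visit after the first is preceded by such a step). Under this bijection the owner of every \emph{genuine} decision vertex is unchanged and the forced moves at the deterministic $v'$ vertices are filled in mechanically, so a winning player-1 strategy on either side transfers verbatim to the other. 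Because the $v'$ are not B\"uchi vertices, the number of edges between two consecutive B\"uchi visits grows by exactly one, so a play meets $\pbuchi{B,d,\Gamma}$ iff its image meets $\pbuchi{B,d+1,\Gamma'}$; this is precisely why the bound is incremented.

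Next I would read off the structural property on the auxiliary graph $\Gamma^*$ constructed from $(\Gamma',B,d+1)$ at Line~\ref{alg:games:pbuchi:auxgg} via Algorithm~\ref{alg:lg}. By that construction the incoming edges of a layer-$0$ vertex $(v,0)$ are exactly the counter-resetting edges coming from the copies of the predecessors of $v$ in $\Gamma'$. After the gadget every predecessor of $v \in B$ is one of the new vertices $v'$, all owned by the designated player; hence every in-neighbour of $(v,0)$ is a copy of a vertex of that player, which gives $\In{(v,0)} \cap V_2^* = \emptyset$ for the partition in which $V_2^*$ is that player's copies. (Equivalently, one may keep the paragraph's player-2 gadget and invoke the player-swap performed in Algorithm~\ref{alg:games:pbuchi} before the fast B\"uchi-games routine of Algorithm~\ref{alg:games:fastbuchi}, which relabels these in-neighbours accordingly.) Finally the gadget adds at most $|B| \le n$ vertices and $O(n)$ edges, so $|V^*| = O(nd)$, $|E^*| = O(md)$ and the $O(n^2 d)$ target bound are all unaffected.

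The hard part will be the bookkeeping in the winning-preservation step, not any deep idea: I must verify that the one-step delay is \emph{uniform} over all B\"uchi visits, so that the single increment $d \mapsto d+1$ compensates it exactly with no off-by-one at the first visit or when a play starts inside $B$, and that the strategy translation never grants either player a choice it did not previously have (this is where determinism of $v'$ is essential). Here prefix independence helps decisively: since $\pbuchi{B,d,\Gamma}$ is an eventual condition ($\exists i\, \forall j \ge i$), the finitely many initial steps and the treatment of the starting vertex do not affect membership, which disposes of the only genuine edge cases and makes the equivalence go through.
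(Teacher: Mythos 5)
Your proposal is correct and follows essentially the same route as the paper: the paper establishes this observation via exactly the gadget you describe (for each $v \in B$, add a fresh non-B\"uchi predecessor $v'$, reroute all edges $(u,v)$ to $(u,v')$, add $(v',v)$, and increment $d$ by one), assigning $v'$ to player~2 so that the player swap in Algorithm~\ref{alg:games:pbuchi} yields the claimed property on layer-$0$ vertices. Your write-up is in fact more detailed than the paper's, which only sketches the construction and asserts preservation of the winning set, whereas you spell out the play bijection and the role of prefix independence.
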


With the above observation, we can prove the following proposition which is the crux of
this section.

\begin{proposition}
	Algorithm~\ref{alg:games:pbuchi} runs in time $O(n^2d) = O(n^3)$.
\end{proposition}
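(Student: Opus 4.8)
The plan is to bound the cost of the single call to \buchiGamesFast on the auxiliary game graph at Line~\ref{alg:games:pbuchi:fastbuchi}, since the only other work, constructing $\Gamma^*$, costs $O(md)=O(n^2d)$. Let $i_j$ denote the value of the inner counter at which the loop of Lines~\ref{alg:games:fastbuchi:repeat}--\ref{alg:games:fastbuchi:until} of Algorithm~\ref{alg:games:fastbuchi} stops in outer iteration $j$. First I would observe, exactly as in~\cite{CH14}, that inner iterations $1,\dots,i_j$ of outer iteration $j$ cost $O(2^{i_j}\,nd)$ in total: by Definition~\ref{def:fastbuchi:gammai} the graph $\Gamma^j_i$ keeps at most $2^i$ in- and out-edges per vertex, so over the $O(nd)$ vertices of $\Gamma^*$ it has $O(2^i nd)$ edges, the attractor computations are linear in that size, and the geometric sum over $i\le i_j$ is $O(2^{i_j}nd)$. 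Hence the whole run costs $O(nd)\cdot\sum_j 2^{i_j}$, and the theorem reduces to the claim $\sum_j 2^{i_j}=O(n)$.

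To prove $\sum_j 2^{i_j}=O(n)$ I would show that outer iteration $j$ permanently removes at least $2^{i_j-1}$ \emph{distinct} original vertices (distinct in the first component). Once all copies of an original vertex are deleted from $\Gamma^*$ they are never reinstated, and there are only $n$ original vertices, so the total number of distinct vertices removed across all iterations is at most $n$; combined with the per-iteration bound this gives $\sum_j 2^{i_j-1}\le n$, i.e.\ $\sum_j 2^{i_j}\le 2n$. The two key ideas supply the two halves of the per-iteration bound: Claim~\ref{claim:games:copiesdeleted} that $D_j=\attr{2}{S_j}{\Gamma^j}$ is a union of copy-classes $\copies{\cdot}$, and Key idea~2 that the separating cut $S_j$ already contains $2^{i_j-1}$ \emph{distinct} first components. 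Granting both, removing the copy-closed set $D_j\supseteq S_j$ deletes all copies of at least $2^{i_j-1}$ distinct original vertices, as required. Here Key idea~2 I would obtain by combining the \cite{CH14} size bound $|S_j|\ge 2^{i_j-1}$ (available from Lemmas~\ref{lem:games:sepcut_in_sj} and~\ref{lem:games:sj_sepcutGji}) with a structural argument that, after the layover modification guaranteed by Observation~\ref{obs:gg:layer0nop2incoming}, the maximal separating cut never uses two layers of the same original vertex, so its $2^{i_j-1}$ vertices have $2^{i_j-1}$ distinct first components.

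The scaffolding for Claim~\ref{claim:games:copiesdeleted} is the invariant I would phrase as \emph{induced symmetry} (Definition~\ref{def:is}): every surviving graph $\Gamma^j$ is the subgraph of $\Gamma^*$ induced on a copy-closed vertex set, so $\Gamma^j$ is invariant under the layer-shift $\move(\cdot,\ell')$ away from the boundary layers $0$ and $d$. I would maintain it by induction on $j$, the base case $\Gamma^0=\Gamma^*$ being copy-closed by construction and the inductive step being exactly the copy-closure of $D_j$. For that step the tool is the strategy-transfer lemma (Claim~\ref{clm:transstrat}): given a strategy witnessing $(v,\ell)\in D_j$, the operator $\move(\cdot,\ell')$ replays the same first components from $(v,\ell')$. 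The engine of the argument is that two plays agreeing on first components but starting at different layers \emph{synchronise} within at most $d$ steps---they meet either at the first reset, where both jump to layer $0$ upon hitting a B\"uchi vertex, or by saturating together at layer $d$---and from the synchronisation point onward the plays are identical, so the shifted play inherits the forcing guarantee of the original.

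The hard part will be the pre-synchronisation prefix, where the shifted play sits at a copy of a vertex that is \emph{not} the vertex the original play occupies, combined with the correct treatment of the two boundary layers; this is exactly the point at which Key idea~1 (copy-closure of $D_j$) and Key idea~2 (distinctness, hence non-copy-closure, of $S_j$) pull in opposite directions and must be reconciled. This is where the layover modification and Observation~\ref{obs:gg:layer0nop2incoming} are indispensable: inserting a private player-2 vertex before every B\"uchi vertex forces every layer-$0$ vertex to be entered through edges of a single player, so the reset boundary does not destroy the shift symmetry and the transferred strategy remains a genuine forcing strategy rather than a mere matching play. I would therefore expect the real work to lie in proving, from the precise form of Definition~\ref{def:is} together with the separating-cut properties of $S_j$ in the copy-symmetric full graph $\Gamma^j$ (Lemma~\ref{lem:games:sj_sepcutGj_attr1}, and disjointness from the attractor to the layer-$d$ set), that the pre-synchronisation segment of every shifted play stays on the correct side of $S_j$; once that is settled, copy-closure of $D_j$, distinctness of $S_j$, and hence the charging bound $\sum_j 2^{i_j}=O(n)$ all follow, giving the $O(n^2d)=O(n^3)$ running time.
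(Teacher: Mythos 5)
Your overall architecture is the same as the paper's (bound the auxiliary-graph construction by $O(md)$, charge the geometric cost $O(2^{i_j}nd)$ of the inner loop to removed vertices, and reduce everything to Key idea~1 and Key idea~2), but your route to Key idea~2 contains a genuine gap: the structural claim that \emph{the maximal separating cut never uses two layers of the same original vertex} is false, and no amount of layover modification makes it true. The cut $S_j$ is just the complement of the player-1 attractor $Y^j_{i^*}$ in the sparsified graph, and nothing prevents several copies of an original vertex from lying outside that attractor simultaneously. Concretely, take $\Gamma$ with a Büchi vertex $b$ and a non-Büchi vertex $x$ with edges $b\to x\to b$; after the layover modification ($x\to b'\to b$, bound $d=4$) the very first cut computed on $\Gamma^*$ is $S_0=\{(b,0),(x,1),(x,2),(b',1),(b',2),(b',3)\}$, which contains two copies of $x$ and three copies of $b'$. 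Indeed your claim sits in direct tension with Key idea~1, which you rely on: $D_j\supseteq S_j$ is always copy-closed, and in the degenerate case $D_j=S_j$ the cut itself contains \emph{every} copy of each of its non-layer-$0$ vertices. So the cardinality bound $|S_j|\ge 2^{i_j-1}$ from~\cite{CH14} does not give you $2^{i_j-1}$ distinct first components, and the charging argument collapses at exactly this point.

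What the paper's Claim~\ref{claim:games:sizeofsj} actually does is different: it never claims $S_j$ is all-distinct, but instead locates a \emph{witness set} of $2^{i^*-1}$ distinct vertices inside $S_j$. Either $S_j$ contains a blue player-1 vertex, whose $\ge 2^{i^*-1}$ out-neighbors all lie in $S_j$ and are distinct because $\Gamma$ has no self-loops; or all player-1 vertices of $S_j$ are white, and then (comparing $(\Gamma^*)^j_{i^*-1}$ with $(\Gamma^*)^j_{i^*}$ via Lemma~\ref{lem:games:sepcut_in_sj}) some vertex $v\in S_j$ must have $\ge 2^{i^*-1}$ priority-1 in-edges whose sources are non-Büchi player-2 vertices forced into $S_j$ by closedness; these sources are distinct precisely because $v$ can be neither a layer-$0$ vertex (this, and only this, is where Observation~\ref{obs:gg:layer0nop2incoming} is used, since a layer-$0$ vertex's in-neighborhood is a single copy-class) nor a layer-$d$ vertex (those lie in $Y^j_{i^*}$), and every middle-layer vertex has in-edges only from distinct first components. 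Relatedly, you misattribute the role of Observation~\ref{obs:gg:layer0nop2incoming}: it plays no part in induced symmetry or the strategy-transfer claim (the $\move$ operation is well-defined purely by the construction of $\Gamma^*$, since counter-resetting edges exist from all layers); its sole purpose is the distinctness count above. Finally, a minor accounting point: your bound $\sum_j 2^{i_j}=O(n)$ is justified only for iterations that remove vertices; the last iteration of the outer loop removes nothing yet still works through all $\log n$ sparsified graphs, and must be charged separately as one extra $O(n^2d)$ term, as the paper does.
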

\begin{proof}
	In this proof we denote by $(\Gamma^*, B^*)$ the input of Algorithm~\ref{alg:games:fastbuchi} at
	Line~\ref{alg:games:pbuchi:fastbuchi} of Algorithm~\ref{alg:games:pbuchi}. The input to
	Algorithm~\ref{alg:games:pbuchi} is $(\Gamma,B,d)$. If we can show that the running time of the
	call to Algorithm~\ref{alg:games:fastbuchi} at Line~\ref{alg:games:pbuchi:fastbuchi} 
	is in $O(n^2d) = O(n^3)$ we are done, as the rest of the
	operations of Algorithm~\ref{alg:games:pbuchi} are in $O(md)$. This entails constructing
	$(\Gamma^*, B^*)$ and going through $W$. We therefore prove the following lemma.

	\begin{lemma}
		The total time Algorithm~\ref{alg:games:pbuchi} spends in
		Algorithm~\ref{alg:games:fastbuchi} is $O(n^2d) = O(n^3)$.
	\end{lemma}
	Every vertex $v$ in $\Gamma^*$ has only $O(n)$ out-edges by the definition of the auxiliary game
	graph. 
	Thus, when we consider the graphs $(\Gamma^*)_i$ of Definition~\ref{def:fastbuchi:gammai} for $1 \leq
	i \leq \log n$, 
	we have $(\Gamma^*)_{\log{n}} = \Gamma^*$. 
	The construction of $(\Gamma^*)_i$ ($1 \leq i \leq \log n)$ takes time $O(nd \cdot 2^{i})$. 

	We split the running time argument into two parts. In the first part, we bound the 
	running time of all except the last iteration of the while loop at Line~\ref{alg:games:fastbuchi:while}. 
	In the second part of the analysis, we bound the running time of the last iteration of the same
	loop.

	\para{Running time bound for all iterations of the while loop except the last.}
	Consider iteration $j$, and assume that Algorithm~\ref{alg:games:fastbuchi} stops the repeat-until
	loop at Line~\ref{alg:games:fastbuchi:until} with value $i^*$ and it is not the last iteration 
	of the while loop at Line~\ref{alg:games:fastbuchi:while}. Thus, $S_j$ is not
	empty. By Lemma~\ref{lem:games:sj_sepcutGji}, the set $S_j$ is a separating
	cut in $(\Gamma^*)^j_{i^*}$.  We make a detour to set up some claims.

        We need the following definition because it helps us translate plays and
        strategies from a vertex to its copies.
        \begin{definition}
          \label{def:is}
          If $\Gamma^*_s$ is an induced subgraph of $\Gamma^*$ such that for all
          $(u,\ell_s)$ in $\Gamma^*_s$ we have that $\copies{(u,\ell_s)}$ are
          also in $\Gamma^*_s$, then we say that $\Gamma^*_s$ has the
          \emph{induced-symmetry} property or that it is \emph{symmetrically
            induced}.
        \end{definition}
        The following claim is about the translation of a strategy from a vertex to
        its copy.
        \begin{claim}
          \label{clm:transstrat}
          Suppose $\Gamma^*_s$ is symmetrically induced.  Then, in $\Gamma^*_s$,
          if a player has a strategy to reach a copy of $w$ from a copy of $u$,
          then from all copies of $u$, she has a strategy to reach some copy of
          $w$.  More formally, in $\Gamma^*_s$, if player~$\rho$ has a strategy
          $\pi$ to reach $(w,\ell_d)$ from $(u,\ell_s)$, then for all copies
          $(u,\ell'_s)$, she also has a strategy $\pi'$ to reach $(w,\ell'_d)$
          for some $\ell'_d$.
        \end{claim}
        \begin{proof}
          We define $\pi'$.  Consider a finite feasible play
          $\lambda^{(u,\ell'_s)}$ that ends in a player-$\rho$ vertex $(v,j)$.
          Let $\pi(\move(\lambda^{(u,\ell'_s)},\ell_s)) = (y, p)$.  Define
          $\pi'(\lambda^{(v,\ell'_s)}) = (y, \nextl(v,y,j))$.  Now, the play
          $\move(\lambda^{(u,\ell'_s)},\ell_s)$ is feasible and the strategy
          $\pi'$ is well defined because $\Gamma^*_s$ is symmetrically induced.

          We argue why player~$\rho$ can reach a copy of $w$ using $\pi'$.  Let
          $\sigma'$ be an arbitrary strategy for the other player, i.e.,
          player~$(3-\rho)$.  For any finite feasible play
          $\lambda^{(u,\ell_s)}$ that ends in a player-$(3-\rho)$ vertex
          $(v,j)$, let $\sigma'(\move(\lambda^{(u,\ell_s)},\ell'_s)) = (y,p)$.
          Define $\sigma(\lambda^{(u,\ell_s)}) = (y, \nextl(v,y,j))$.  Again,
          $\move(\lambda^{(u,\ell_s)},\ell'_s)$ is feasible and $\sigma$ is well
          defined because $\Gamma^*_s$ is symmetrically induced.

		  \begin{sloppypar}
			  Now, it is straightforward to show by induction on $k$ that the first
			  components of $\omega((u,\ell_s),\sigma,\pi)_k$ and
			  $\omega((u,\ell'_s),\sigma',\pi')_k$ are the same.  This means that if
			  $\omega((u,\ell_s),\sigma,\pi)_k$ reaches $(w,\ell_d)$, then
			  $\omega((u,\ell'_s),\sigma',\pi')_k$ reaches $(w,\ell'_d)$ for some
			  $\ell'_d$.
		  \end{sloppypar}
        \end{proof}

        The following claim is a formal version of the first key idea.
	\begin{claim}\label{claim:games:copiesdeleted}
          If a vertex $(v,\ell)$ is in $D_j$, then
          $\copies{(v,\ell)} \subseteq D_j$; and, $(\Gamma^*)^j$ has induced
          symmetry.
	\end{claim}
	\begin{proof}
          We prove the claim by induction on $j$.

          \para{Base case, $j = 0$.}  If $(v,\ell) \in D_0$, then there is a
          player-2 strategy $\pi_1$ to reach $(w,p) \in S_0$.  The set
		  $S_0 = V\setminus \attr{1}{B^*}{\Gamma^*}$ is a player-1 closed set by Observation~\ref{obs:attractorclosedset}:
          This means that
          there is a player-2 strategy $\pi_2$ to stay inside $S_0$.  By
          construction of $\Gamma^*$, any edge from a non-layer-$d$ vertex goes
          to the next layer or to layer-$0$.  Then, since
          $S_0 \cap B^* = \emptyset$, that is, since $S_0$ does not contain any
          layer-$d$ vertices, any (infinite) play that stays inside $S_0$ must
          eventually return to layer-$0$.  Thus, player~2 can first use $\pi_1$
          to reach $(w,p) \in S_0$ from $(v,\ell)$, then use $\pi_2$ to reach
          $(x,0)\in S_0$ from $(w,p)$; effectively, this gives a player-2
          strategy to go to $(x,0)\in S_0$ from $(v,\ell)$.  Then, by
          Claim~\ref{clm:transstrat}, player~2 has a strategy to reach a copy of
          $(x,0)$ from $(v,\ell')$ for any $\ell'$ because $\Gamma^*$ itself has
          induced symmetry.  Now, $(x,0)$ does not have any other copy, this
          means player~2 has a strategy to reach $(x,0)\in S_0$ from
          $(v,\ell')$.  By induced symmetry of $\Gamma^*$ again, we have that
          all copies of $(v,\ell)$, i.e., $\copies{(v,\ell)}$ are in $\Gamma^*$;
          moreover, by the above argument, for each of these copies, there is a
          player-2 strategy to reach $S_0$, which implies that
          $\copies{(v,\ell)} \subseteq D_0$.  Noting that
          $(\Gamma^*)^0 = \Gamma^*$ has induced symmetry finishes the base case.

          \para{Induction step, $j \ge 1$.} By induction hypothesis,
          $(\Gamma^*)^{j-1}$ has induced symmetry, and if a vertex $(v,\ell)$ is
          in $D_{j-1}$, then $\copies{(v,\ell)} \subseteq D_{j-1}$.  This
          implies that deleting $D_{j-1}$ from $(\Gamma^*)^{j-1}$ to get
          $(\Gamma^*)^j$ means deleting all copies of a vertex being deleted.
          Therefore, since $(\Gamma^*)^{j-1}$ has induced symmetry,
          $(\Gamma^*)^j$ also has induced symmetry.

          Since $S_j$ is a separating cut (by
          Lemma~\ref{lem:games:sj_sepcutGj_attr1}), it is a player-1 closed set.
          Thus, by the same argument as in the base case that uses the induced
          symmetry of $(\Gamma^*)^j$, if $(v,\ell)$ is in $D_j$, then
          $\copies{(v,\ell)} \subseteq D_j$.  This completes the induction step
          and the proof.
	\end{proof}

The following claim is the formal proof of the second key idea.	
	\begin{claim}\label{claim:games:sizeofsj}
		The set $S_j$ contains at least $2^{i^*-1}$ \emph{distinct} vertices.
	\end{claim}
	\begin{proof}
		The proof is similar to the proof of~\cite[Lemma 2.13]{CH14} except that we must now argue that all of the
		$2^{i^*-1}$ vertices are distinct.
		Consider the set $S_j$ in the game graph of the iteration before, i.e., we argue about $S_j$ in $(\Gamma^*)^j_{{i^*}-1}$. 
		Note that we have the following two cases.
		\begin{itemize}
			\item In the first case, $S_j$ contains a player-1 vertex $(x,\ell)$ for $1 \leq \ell \leq d$ that is blue in $(\Gamma^*)^j_{{i^*}-1}$.
				Thus, $(x,\ell)$ has outdegree at least $2^{{i^*}-1}$ in $(\Gamma^*)^j_{{i^*}}$ 
				and none of these edges go to vertices in $V^j \setminus S_j$ in $(\Gamma^*)^j_{i^*}$.
				Thus, $S_j$ contains at least $2^{{i^*}-1}$ vertices. Note that vertex $(x,\ell)$ can only have edges to
				vertices which are distinct to $(x,\ell)$, i.e., for all $((x,\ell), (y,\ell')) \in E^*$ we have 
				$x \neq y$ because the game graph $\Gamma$ does not have self loops.
			\item In the second case, all player-1 vertices in $S_j$ are white in $(\Gamma^*)^j_{{i^*}-1}$. 
				Thus, their outedges in	$(\Gamma^*)^j_{i^*}$ and $(\Gamma^*)^j_{{i^*}-1}$ are identical.
				We now argue, why a player-2 vertex in $S_j$ exists: 
				Assume for contradiction that no player-2 vertex in $S_j$ exists. Hence, $S_j$ is a
				separating cut only consisting of player-1 vertices. As $S_j$ is a separating cut in
				$(\Gamma^*)^j_{{i^*}}$ we have $S_j \cap B = \emptyset$. Thus, $S_j$ is also a separating
				cut in $(\Gamma^*)^j_{{i^*}-1}$. But then, by Lemma~\ref{lem:games:sepcut_in_sj}, the
				algorithm would have terminated in iteration ${i^*}-1$ which is a contradiction because
				it terminated in iteration ${i^*}$.
				
				Note that repeat-until loop at
				Lines~\ref{alg:games:fastbuchi:repeat}--\ref{alg:games:fastbuchi:until}  would have stopped in 
				iteration ${i^*}-1$ in $(\Gamma^*)^j_{{i^*}-1}$ as all player-1 vertices in
				$S_j$ are white. 

				Consider a player-2 vertex $u$ in $S_j$.
				Note that $u$ must have an edge $(u,v) \in (E^*)^j_i$
				with $v \in S_j$ because $S_j$ is a separating cut in
				$(\Gamma^*)^j_{i^*}$ (Lemma~\ref{lem:games:sj_sepcutGji}).
				Again, there are two possibilities:
				\begin{itemize}
					\item For all player-2 vertices $u \in S_j$ there exists a vertex $v \in S_j$ with
						$(u,v) \in (E^*)^j_{i^*-1}$. 
						But then $S_j$ would be a separating cut in $(\Gamma^*)^j_{{i^*}-1}$ as the
						outedges of player~1 are identical in $(\Gamma^*)^j_{i^*}$ and $(\Gamma^*)^j_{{i^*}-1}$. By
						Lemma~\ref{lem:games:sepcut_in_sj}, the separating cut would have been found in
						iteration ${i^*}-1$ of the repeat-until loop at
						Line~\ref{alg:games:fastbuchi:repeat}, which is a contradiction.
					\item Therefore, 
						there exists a player-2 vertex $u \in S_j$ that has an edge $(u,v) \in (E^*)^j_{i^*}$ to
						a vertex $v \in S_j$ but this edge is not contained in $(E^*)^j_{{i^*}-1}$. This can only
						happen if $v$ has at least $2^{{i^*}-1}$ other inedges in $(E^*)^j_{{i^*}-1}$.
						Note that $u$ is a player-2 vertex not in $(B^*)^j$ (because all vertices of
						$(B^*)^j$ belong to $Y^j$), and hence the edge $(u,v)$ has
						priority~1 and recall that by the fixed inorder of edges priority-1 edges come
						before all priority-0 edges. Thus, it follows that since the edge $(u,v)$ is not in
						$(\Gamma^*)^j_{{i^*}-1}$, all inedges of $v$ that are in $(\Gamma^*)^j_{{i^*}-1}$ must have
						priority~1 by the fixed order of inedges, that is, all the inedges of $v$ in
						$(\Gamma^*)^j_{{i^*}-1}$ are from non-B\"uchi player-2 vertices. 
						Note that $v \in S_j$ and since $S_j$ is a separating cut and, thus, a closed
						set, all player-2 vertices
						which are not in $B^*$ with an edge to $v$ are also in $S_j$.
						Since $v$ has at least $2^{{i^*}-1}$ inedges from player-2 vertices which are not in $B^*$,
						the	set $S_j$ must contain at least $2^{{i^*}-1}$ vertices.

						Furthermore, all incoming edges are from distinct vertices: Note that
						$v$	cannot be a layer~$0$ vertex of $\Gamma^*$, 
						because by Observation~\ref{obs:gg:layer0nop2incoming} all
						vertices in $B$ of the given bounded B\"uchi objective have no incoming edges from a player-2 vertex.
						Also, layer-$d$ vertices cannot be in $S_j$ as they are in $B^*$ and would be in the 
						player-1 attractor $Y^j_{i^*}$ computed at Line~\ref{alg:games:fastbuchi:p1attractor}.
						All other vertices in $\Gamma^*$ have incoming edges only from distinct
						vertices. Thus, all $2^{{i^*}-1}$ such vertices are distinct.\qedhere
				\end{itemize}
		\end{itemize}

	\end{proof}

Due to Claim~\ref{claim:games:sizeofsj}, $S_j$ contains at least $2^{{i^*}-1}$ distinct vertices, and since $S_j\subseteq D_j$, the set $D_j$  
also contains all copies of all vertices in $S_j$ due to Claim~\ref{claim:games:copiesdeleted}. 
All of $D_j$ is deleted. 
We resume from the detour.  The time spent in all graphs $(\Gamma^*)^j_1, \dots, (\Gamma^*)^j_{{i^*}}$, i.e., the time spent in the
repeat-until loop at Line~\ref{alg:games:fastbuchi:repeat} for the graph construction and the attractor computations, 
sums up to $O(2^{{i^*}} \cdot nd)$. 
We charge $O(nd)$ work to each distinct vertex. 
This accounts for all the running time except for the last iteration of the outer loop. 
Since we always remove all copies of a vertex $v \in S_j$, 
the algorithm deletes at most $n$ distinct vertices throughout a run of the algorithm.
Thus, the total time spent over the whole algorithm other than the last iteration is $O(n^2d)$.

\para{The last iteration of the outer loop.}
In the last iteration $j^*$ of the outer loop, when no vertex is deleted, the algorithm works on all
$\log n$ game graphs, spending time $O(n \cdot 2^i)$ on game graph $(\Gamma^*)^{j^*}_i$.
Since each graph $(\Gamma^*)^{j^*}_i$ has at most $nd \cdot 2^{i+1}$ edges and there are $\log n$
graphs, the total number of edges worked in the last iteration is

$$ \sum^{\log n}_{i = 1} nd \cdot 2^{i+1} = 4 nd \sum_{i=1}^{\log n}2^{i-1} = 4nd(2^{\log n} - 1) =
4nd (n-1) = O(n^2d).$$

\end{proof}

\begin{theorem}
	The set of winning vertices for the bounded B\"uchi objective and bounded coB\"uchi objectives in game graphs can be computed in time
	$O(n^2d) = O(n^3)$.
\end{theorem}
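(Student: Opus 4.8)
The plan is to treat the two objectives separately, since the bounded B\"uchi case is essentially already in hand and the bounded coB\"uchi case reduces to it by complementation and determinacy.

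For bounded B\"uchi, I would simply invoke the machinery just developed: Algorithm~\ref{alg:games:pbuchi} computes $\W{1}{\pbuchi{B,d,\Gamma}}$ correctly (by Lemma~\ref{lem:auxgg} together with the B\"uchi/coB\"uchi duality on the auxiliary game graph $\Gamma^*$), and the preceding proposition bounds its running time by $O(n^2d)$; the construction of $\Gamma^*$ and the final projection contribute only $O(md) = O(n^2d)$. So this half needs no new argument.

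For bounded coB\"uchi, the first step I would carry out is to record the complementation identity
\[
\Omega \setminus \pbuchi{B,d,\Gamma} = \pcobuchi{V\setminus B,\, d,\, \Gamma},
\]
obtained by negating the defining quantifier prefix: $\neg\bigl(\exists i\,\forall j\ge i\colon \{v_j,\dots,v_{j+d-1}\}\cap B\ne\emptyset\bigr)$ is $\forall i\,\exists j\ge i\colon \{v_j,\dots,v_{j+d-1}\}\subseteq V\setminus B$, which is exactly the bounded coB\"uchi condition for the set $V\setminus B$. Given a target bounded coB\"uchi objective $\pcobuchi{C,d,\Gamma}$, I would set $B = V\setminus C$, so that $\pcobuchi{C,d,\Gamma} = \Omega\setminus\pbuchi{B,d,\Gamma}$, and then appeal to Martin's determinacy theorem~\cite{Martin75}: the set $V$ splits into $\W{1}{\pcobuchi{C,d,\Gamma}}$ and $\W{2}{\pcobuchi{C,d,\Gamma}}$, and the latter is precisely the set from which player~2 can force the play into $\Omega\setminus\pcobuchi{C,d,\Gamma} = \pbuchi{B,d,\Gamma}$. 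Because a bounded B\"uchi condition is a purely path-based predicate that does not depend on who owns which vertex, swapping the two players (forming $\Gamma' = (V,E,\langle V_2,V_1\rangle)$) turns ``player~2 forces $\pbuchi{B,d}$ in $\Gamma$'' into ``player~1 forces $\pbuchi{B,d}$ in $\Gamma'$,'' yielding
\[
\W{1}{\pcobuchi{C,d,\Gamma}} = V \setminus \W{1}{\pbuchi{V\setminus C,\, d,\, \Gamma'}}.
\]
The right-hand side is then computed by a single call to Algorithm~\ref{alg:games:pbuchi} on $\Gamma'$, in $O(n^2d)$ time, with the player swap and the set complement contributing only $O(n)$ overhead.

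The step I expect to be the main obstacle is not the running-time accounting but making the duality watertight. I would need to verify carefully that the player swap $\Gamma\mapsto\Gamma'$ preserves the set of plays (it does: only vertex ownership changes, while $E$ and the winning condition are untouched), and that determinacy legitimately applies here---which it does, since $\pbuchi{B,d}$ is $\omega$-regular (indeed it is realized as a coB\"uchi objective on $\Gamma^*$ by Lemma~\ref{lem:auxgg}), so Martin's theorem yields both determinacy and the clean partition of $V$ used above. Once these two points are checked, both $O(n^2d)$ bounds follow and the theorem is established.
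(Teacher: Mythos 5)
Your proposal is correct and follows essentially the same (largely implicit) proof as the paper: the bounded B\"uchi half is exactly the correctness of Algorithm~\ref{alg:games:pbuchi} together with the preceding $O(n^2d)$ running-time proposition, and your complementation--determinacy--player-swap argument for bounded coB\"uchi is precisely the duality the paper itself uses inside Algorithm~\ref{alg:games:pbuchi} (there phrased as $\W{1}{\cobuchi{C^*}} = V^* \setminus \W{2}{\buchi{V^* \setminus C^*}}$, computed by swapping the players) and leaves unstated for this theorem. No gaps.
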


	\chapter[Algorithms and Conditional Lower Bounds for Planning Problems][Algs.\@ and
	CLBs for Planning Problems]{Algorithms and Conditional Lower Bounds for Planning Problems}\label{cha:icaps}
	In this chapter, we consider queries of reachability objectives and sequential reachability objectives in graphs, MDPs, and game graphs.
\section{Introduction}
One of the basic and fundamental algorithmic problems in 
artificial intelligence is the \emph{planning problem}~\cite{LaValle,AIBook}. The most
basic planning problem is the \emph{discrete feasible planning problem}~\cite{LaValle}.
The problem has a finite \emph{state space} and a finite amount of \emph{actions} for each state. 
Starting from an \emph{initial state}, the planner repeatedly chooses an available action at the current state
which, as a result, produces a new current state as described by a \emph{state transition function}. 
The question is if the planner can produce a state which is in a certain subset of the state space called
\emph{goal} or \emph{target}\footnote{This chapter includes results originally intended for the planning community and thus we
	use planning-specific language and motivate the problems from a ``planning perspective''.}.

\smallskip\noindent{\textbf{Planning models.}} 
We study this problem in the following classical models: 
\begin{itemize}
	\item \emph{Graphs.} Discrete Feasible Planning can be directly translated into a graph
		search problem: The vertices in the graph describe the state space and for every action
		in a state, there is an edge to the vertex which corresponds to the new state given by the state transition
		function~\cite{LaValle,AIBook}. 

	\item \emph{MDPs.} In the presence of interaction with nature, the graph model is 
		extended with probabilities or stochastic transitions, which gives rise to 
		Markov Decision Processes (MDPs)~\cite{Howard,Puterman,FV97,PT87, GuestrinKPV03}.

	\item \emph{Games on graphs.} In the presence of 
		interaction with an adversarial 
		environment, the graph model is extended to game graphs 
		(or AND-OR graphs)~\cite{JACM85,hansen98andor}.
\end{itemize}

\smallskip\noindent{\textbf{Planning problems.}} 
The planner tries to solve a planning problem given one of the above described planning models. 
The starting position is not restricted to the vertices controlled by the planner but can be any kind of vertex in the considered model. 
We consider the following basic planning problems: 
\begin{itemize}
	\item \emph{Reachability.} Given a set $T$ of target vertices 
		the goal is to determine if some target vertex from the starting position is reachable.

	\item \emph{Coverage.} 
		In the coverage problem we are given $k$ different target sets, namely, 
		$T_1,\ldots,T_k$, and a starting vertex. 
		The coverage problem asks whether we can 
		achieve reachability for all target sets 
		$T_i$ where $1 \leq i \leq k$. 
		Coverage models the following scenarios: Consider 
		a robot stationed in an outpost and $k$ different locations of interest.
		If an event or an attack happens in one of the locations, 
		then that location must be reached.
		However, the location of the event or the attack is not 
		known in advance and the robot must be prepared that the 
		target set could be any of the $k$ target sets.
	\item \emph{AllCoverage}.
		In the AllCoverage problem there are again $k$ different 
		target sets $T_1, \dots, T_k$ but in contrast to
		Coverage we want to determine \emph{all starting positions}
		where Coverage with $T_1, \dots, T_k$ holds.
		This corresponds to finding a viable outpost for the 
		above described robot.

	\item \emph{Sequential reachability.} 
		In the \emph{sequential reachability} problem we are given $k$ different target sets, 
		namely, $T_1,T_2,\ldots,T_k$ and a starting position. 
		The goal is to output whether we can first reach $T_1$, 
		then $T_2$ and so on up to $T_k$ from the starting position. 
		This represents the scenario 
		that the tasks must be achieved in a sequence by the planner.

\end{itemize}
The above are natural planning problems and have been studied widely in 
the literature, e.g., in robot planning~\cite{KGFP09,kaelbling1998planning,choset2005principles}.

\smallskip\noindent{\textbf{Basic Planning Questions.}}
For the above problems the \emph{basic} planning questions are as follows:
(a)~for graphs, the question is whether there exists a plan (or a path) such 
that the planning problem is solved;
(b)~for MDPs, the basic question is whether there exists a strategy such that the 
planning problems is satisfied almost-surely (i.e., with probability~1);
and (c)~for games on graphs, the basic question is whether there exists a strategy 
that solves the planning problem irrespective of the choices of the adversary.
The almost-sure satisfaction for MDPs is also known as the strong cyclic planning 
in the planning literature~\cite{CPRT03}, and games on graphs question represent 
planning in the presence of a worst-case adversary~\cite{JACM85,hansen98andor} (aka 
adversarial planning, strong planning~\cite{MBKS14}, or conformant/contingent
planning~\cite{bonet2000planning,hoffmann2005contingent,palacios2007conformant}).

\smallskip\noindent{\textbf{Algorithmic study.}}
In this chapter, we study the planning problems 
for graphs, MDPs, and games on graphs algorithmically.
For all the above questions, polynomial-time algorithms exist.
When polynomial-time algorithms exist, proving an unconditional lower bound is 
extremely rare. 
A new approach in complexity theory aims to establish a conditional lower bound (CLB) 
based on a well-known conjecture.
Two standard conjectures for CLBs are as follows: The 
(a)~\emph{Boolean matrix multiplication (BMM)} conjecture states that there is 
no sub-cubic combinatorial algorithm for boolean matrix multiplication; and 
the (b)~\emph{Strong exponential-time hypothesis (SETH)} states that there is 
no sub-exponential time algorithm for the k-SAT problem when $k$ grows to infinity.
Many CLBs have been established based on the above conjectures, e.g., 
for dynamic graph algorithms and string matching~\cite{abboud2014popular,bringmann2015quadratic}.

{
	\renewcommand{\arraystretch}{1.3}
	\begin{table*}
		\resizebox{\textwidth}{!}{
			\begin{tabular}{@{}l l l l l l l l l@{}}\toprule
				& \multicolumn{2}{c}{\textbf{Graphs}} && \multicolumn{2}{c}{\textbf{MDPs}} && \multicolumn{2}{c}{\textbf{Games}}\\
				\cmidrule{2-3}\cmidrule{5-6}\cmidrule{8-9}
				\textbf{Objectives}	& Upper B.& Lower B. && Upper B. & Lower B.&& Upper B.& Lower B.\\ \midrule

				Reachability  & $O(m)$ &&& $\widetilde{O}(m)$ &&& $O(m)$ \\
				Coverage & $O(m + \sum_{i=1}^k |T_i|)$ & &&  $\O(k\cdot m)$ &$\tilde{\Omega}(k \cdot
					m)$ && $O(k\cdot m)$ &$\tilde{\Omega}(k\cdot
				m)$  \\
				&& &&& (Thm.~\ref{icaps:thm:cover:lb:mdps})&&& (Thm.~\ref{icaps:thm:cover:lb:gg})\\
				AllCoverage & $O(k \cdot m)$ &$\tilde{\Omega}(k \cdot m)$	&&  $O(k\cdot m)$ &$\tilde{\Omega}(k \cdot m)$ && $O(k\cdot m)$ & 
				$\tilde{\Omega}(k\cdot m)$ \\
				&& (Thm.~\ref{icaps:thm:allcover:lb:graphs}) &&& (Thm.~\ref{icaps:thm:allcover:lb:graphs}) &&& (Thm.~\ref{icaps:thm:allcover:lb:graphs})\\

				Sequential & $O(m + \sum_{i=1}^k
				|T_i|)$   &&& $\widetilde{O}(m + \sum_{i=1}^k |T_i|)$  &&& $O(k\cdot m)$ & $\tilde{\Omega}(k\cdot m)$
				\\
				& (Thm.~\ref{icaps:thm:seq:graphs:upper}) &&& (Thm.~\ref{icaps:thm:seq:mdp:upper}) &&&& (Thm.~\ref{icaps:thm:seq:lb:gg})\\
				\bottomrule
			\end{tabular}
		}
		\caption{Algorithmic bounds where $n$ and $m$ are the number of vertices and edges of the underlying model,	and $k$ denotes the number of different target sets.
			The $\tilde{\Omega}(\cdot)$ bounds are conditional lower bounds (CLBs) under the BMM conjecture	and SETH.\@ 
			They establish that polynomial improvements over the given bound are not possible, however,	polylogarithmic improvements are not excluded. 
	Note that CLBs are quadratic for $k=\Theta(n)$. 
	The new results have a corresponding theorem statement.}\label{icaps:tab:complexity}

	\end{table*}
}

\smallskip\noindent{\textbf{Previous results and our contributions.}}
We denote by $n$ and $m$ the number of vertices and edges of the underlying model, 
and $k$ denotes the number of different target sets.
The $\widetilde{O}$ notation hides poly-log factors, e.g. $O(m{(\log n)}^4) = \widetilde{O}(m)$. 
We call a running time \emph{near-linear} if it is linear in the input but has some additional polylogarithmic factor, e.g. $O(m{(\log n)}^4)$. 
For the reachability problem, while the graphs and games on graphs problem 
can be solved in linear time~\cite{B80,I81}, 
the current best-known bound for MDPs is 
$\widetilde{O}(m)$~\cite[Theorem 12]{CDHS19CONCUR}.
For the coverage and sequential reachability, an $O(k \cdot m)$ upper 
bound follows for graphs and games on graphs, and an $\O(k \cdot m)$ upper 
bound follows for MDPs.
Our contributions are as follows:
\begin{enumerate}
	\item \emph{Coverage problem:} 
		First, we present an $O(m + \sum_{i=1}^k |T_i|)$ time algorithm for graphs; 
		second, we present an $\Omega(k\cdot m)$ lower bound for MDPs and games on graphs, 
		both under the BMM conjecture and the SETH.\@
		Note that for graphs our upper bound is in linear time, however, if each $|T_i|$ is 
		constant and $k=\theta(n)$, for MDPs and games on graphs the CLB is quadratic.

	\item \emph{Sequential reachability problem:}
		First, we present an $O(m + \sum_{i=1}^k |T_i|)$ time algorithm for graphs;
		second, we present  an $\O( m  +  \sum_{i=1}^k |T_i|)$ time algorithm for MDPs;
		and third, we present an $\Omega(k\cdot m)$ lower bound for games on graphs, 
		both under the BMM conjecture and the SETH.\@

\end{enumerate}
The summary of the results is presented in Table~\ref{icaps:tab:complexity}. The  
most interesting results are the conditional lower bounds for MDPs and game graphs
for the coverage problem, the sub-quadratic algorithm for MDPs with sequential
reachability objectives, and the conditional lower bound for game graphs with sequential reachability objectives.

\smallskip\noindent{\textbf{Practical Significance.}}
The sequential reachability and coverage problems we consider are the tasks
defined in~\cite{KGFP09}, where the problems have been studied for games on graphs
and mentioned as future work for MDPs. The
applications of these problems have been demonstrated in robotics applications.
We present a complete algorithmic picture for games on graphs and MDPs, settling
open questions related to games and future work mentioned in~\cite{KGFP09}.

\smallskip\noindent{\textbf{Theoretical Significance.}}
Our results present a very interesting algorithmic picture for the natural planning 
questions in the fundamental models.
\begin{enumerate}
	\item 
		First, we establish results showing that some models are harder than others.
		More precisely,
		\begin{itemize}
			\item for the reachability problem, the MDP model seems harder than graphs and games on 
				graphs (linear-time algorithm for graphs and games on graphs, and only near-linear time algorithms are known for MDPs);
			\item for the coverage problem, MDPs, and games on graphs are harder than graphs
				(linear-time algorithm for graphs and quadratic CLBs for MDPs and games on graphs);
			\item for the sequential reachability problem, games on graphs are harder than MDPs and graphs 
				(linear-time upper bound for graphs and sub-quadratic upper bound for MDPs, whereas 
				quadratic CLB for games on graphs).
		\end{itemize}
		In summary, we establish model-separation results with CLBs: 
		For the coverage problem, MDPs and games on graphs are algorithmically harder than graphs; and 
		for the sequential reachability problem, games on graphs are algorithmically harder than MDPs and graphs.

	\item Second, we also establish problem-separation results. 
		For the model of MDPs consider the different problems: 
		Both for reachability and sequential reachability the upper bound is 
		sub-quadratic and in contrast to the coverage problem we establish a 
		quadratic CLB.\@ 

\end{enumerate}

\paragraph{Further Related Work}
In this chapter, our focus lies on the algorithmic complexity of fundamental
planning problems and we consider 
\emph{explicit state-space} graphs, MDPs, and game graphs, where
the complexities are polynomial. The explicit model and
algorithms for it are widely considered: For example, in LTL Synthesis~\cite{KGFP09,camacho2018finite,camacho2018ltl,CamachoBM18}, 
Probabilistic Planning~\cite{KolobovMWG11,Teichteil-Konigsbuch12,KellerE12, CamachoMM16},
Nondeterministic Planning~\cite{MattmullerOHB10,FuNBY11,MuiseMB12,AlfordKNG14,CamachoTMBM17}, Contingent Planning~\cite{MuiseBM14, BonetG11} and
Verification~\cite{CH14}. 
In factored models such as STRIPS and SAS+ the complexities are higher (PSPACE-complete and NP-complete~\cite{Bylander94,BackstromN95}), and then heuristics are the focus (e.g.,~\cite{hansen98andor}) rather than the exact algorithmic complexity. 
Notable exceptions are
\begin{enumerate}
	\item the work on parameterized complexity of planning problems (e.g.,~\cite{KroneggerPP13}),
	\item conditional lower bounds based on the ETH~\cite{impagliazzo1999complexity} showing that
		certain general propositional planning problems (e.g., propositional STRIPS with negative
		goals (PSN)) do not
		admit algorithms with {running times} of the form $2^{{|P|}^{c}}$ for instance size $|P|$
		and concrete constants $c>0$~\cite{Aghighi2016, backstrom2017time}, 
	\item conditional lower bounds based on the SETH of the form $2^{(1+\varepsilon)v}\cdot poly(|P|)$
		where $v$ is the number of variables and $\varepsilon>0$ for very large subclasses PSN~\cite{backstrom2017time},
	\item conditional lower bounds based on the graph colourability problem of the form
		$2^{v/2}\cdot poly(v)$,
	\item conditional lower bounds based on the ETH showing that the minimum constraint removal problem, a well-studied problem in both robotic motion planning, does not admit algorithms with {running times} of the form $2^{o(n)}$~\cite{EibenGKY18}. 

\end{enumerate}

\section{Coverage Problem}\label{icaps:sec:utarget}
In this section, we consider the coverage query problem in graphs, MDPs and game graphs. 
We are given a starting vertex $v$ and a coverage query. Our goal is to
check if a set of player-1 strategies exist such that the resulting plays achieve 
the given coverage query when starting at $v$.

First, we present a linear-time algorithm for graphs and quadratic algorithms for MDPs and game
graphs. Then we focus on the conditional lower bounds for MDPs and game graphs, which establish
that there is no subquadratic algorithm for the coverage problem when
one assumes the STC and OV conjectures.

\subsection{Algorithms}
The results below present the upper bound for graphs, MDPs and game graphs of the second row of 
Table~\ref{icaps:tab:complexity}.

\smallskip\noindent\emph{Coverage Problem in Graphs.}
For the coverage problem in graphs we are given a graph $G = (V,E)$, 
a coverage query $\Coverage{T_1, \dots, T_k}$ and a start vertex $s \in V$.
The algorithmic problem is to find out if starting from an initial 
vertex $v$ the reachability, i.e., $\reach{T_i}$, can be achieved for 
all $1 \leq i \leq k$. 
The algorithmic solution is as follows:
Initially, mark each $v \in T_i$ for $1\leq i \leq k$ with $i$. 
Compute the BFS tree starting from $s$ and check if all the targets 
are contained in the resulting BFS tree. This instantly gives an algorithm with running time
$O(m + \sum_{i=0}^k |T_i|)$. Note that the running time is linear and thus we cannot hope to 
find any quadratic lower bounds.

\smallskip\noindent\emph{Coverage Problem in MDPs and game graphs.}
We determine in MDPs and game graphs whether there exists a set of strategies for a given coverage query with $k$ reachability objectives and start vertex $v$, by applying the reachability algorithm 
of the respective model $k$ times, i.e., once for each of the target sets. 
This yields a solution in $\O(k m)$ time for MDPs and $O(km)$ time for game graphs respectively. 
Notice that for  $k \in \Theta(n)$ the running time is quadratic in the input size.

\subsection{Conditional Lower Bounds}
We present conditional lower bounds for the coverage problem in MDPs and game
graphs (i.e., the CLBs of the second row of Table~\ref{icaps:tab:complexity}).  
For MDPs and game graphs the conditional lower bounds complement the 
quadratic algorithms from the previous subsection. 
Notice that we cannot provide a quadratic lower bound for graphs as a linear-time algorithm exists.
The conditional lower bounds are due to reductions from \emph{OV} and \emph{triangle detection}.

\subsubsection{MDPs.}

We present the following conditional lower bounds for MDPs:

\begin{theorem}\label{icaps:thm:cover:lb:mdps}
	For all $\epsilon > 0$, checking if a vertex has a set of a.s.\ winning strategies for the coverage problem in MDPs does not admit: 
	\begin{enumerate}
		\item an $O(m^{2-\epsilon})$ algorithm under Conjecture~\ref{icaps:conj:ovc},\label{icaps:thm:cover:lb:mdps1}
		\item an $O({(k\cdot m)}^{1-\epsilon})$ algorithm under Conjecture~\ref{icaps:conj:ovc},\label{icaps:thm:cover:lb:mdps2}
		\item a combinatorial $O(n^{3-\epsilon})$ algorithm under Conjecture~\ref{icaps:conj:stc} and\label{icaps:thm:cover:lb:mdps3}
		\item a combinatorial $O({(k\cdot n^2)}^{1-\epsilon})$ algorithm under Conjecture~\ref{icaps:conj:stc}.\label{icaps:thm:cover:lb:mdps4} 
	\end{enumerate}
\end{theorem}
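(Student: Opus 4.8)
The plan is to prove all four items through two reductions into the coverage problem on MDPs: a reduction from the Orthogonal Vectors problem (OV) for the OVC-based bounds, items~(\ref{icaps:thm:cover:lb:mdps1}) and~(\ref{icaps:thm:cover:lb:mdps2}), and a reduction from triangle detection for the STC-based bounds, items~(\ref{icaps:thm:cover:lb:mdps3}) and~(\ref{icaps:thm:cover:lb:mdps4}). In both cases I build an MDP with a start vertex $s$ and $k$ target sets such that the query $\Coverage{T_1,\dots,T_k}$ \emph{fails} exactly when the source instance is a \textsc{yes}-instance. The guiding principle is that under the almost-sure semantics the random vertices act as a universal (adversarial) quantifier while the player-1 vertices act as an existential one, and the conjunction built into coverage contributes an outer universal over the target index $i$. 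Hence the statement ``$s\notin\ASW{\reach{T_i}}$ for some $i$'' unfolds as an $\exists_i\,\exists_{\mathrm{random}}\,\forall_{\mathrm{player}}$ formula, which I will line up with the existential witness of the source problem. A recurring technical device is to keep the transition structure of the MDP \emph{independent of the target index} and to push all target-specific information into the sets $T_i$; this is what keeps the vertex count linear and yields the required parameter trade-offs. Throughout, vertices that are not in the relevant target are routed into a losing two-cycle $L_1\leftrightarrow L_2$ so that the no-self-loop and out-degree-$\ge 1$ conventions are respected and only target membership matters for reachability.

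For the OV reduction, given $S_1=\{u^1,\dots,u^N\}$ and $S_2=\{v^1,\dots,v^N\}$ of $d$-bit vectors, I would make $s$ a random vertex that moves uniformly to a vertex $x_a$ for each $a\in[N]$ (the universal choice over $S_1$); from $x_a$ a player-1 choice moves to a coordinate vertex $c_i$ for any $i$ with $u^a_i=1$. I put one target per vector of $S_2$ via the membership rule $c_i\in T_b \iff v^b_i=1$. Then $s\in\ASW{\reach{T_b}}$ iff for every $a$ player-1 can select a coordinate $i$ with $u^a_i=v^b_i=1$, i.e.\ iff $u^a$ is not orthogonal to $v^b$ for all $a$; so coverage holds iff no orthogonal pair exists. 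Choosing $d=N^{o(1)}$, which is still $\omega(\log N)$, gives $n=N^{1+o(1)}$, $m=N^{1+o(1)}$ and $k=N$, so an $O(m^{2-\epsilon})$ or $O((k\cdot m)^{1-\epsilon})$ algorithm would decide OV in time $N^{2-\epsilon'}$, contradicting Conjecture~\ref{icaps:conj:ovc}.

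For the triangle reduction I would use the tripartite form with parts $A,B,C$ that are copies of $V_H$ (where $\sim$ denotes adjacency in $H$) and the observation that a triangle through $c\in C$ exists iff some $a\sim c$ has a neighbour $b$ with $a\sim b$ and $b\sim c$. The target-independent MDP is: $s$ random, uniform to $x_a$ over $A$; at the player-1 vertex $x_a$ one may take an ``$A$-escape'' edge to $e^A_a$ or proceed to a random vertex $g_a$; $g_a$ moves uniformly to $y_b$ for each $b$ with $a\sim b$ (this is the only place the target-free edge $a\sim b$ is tested, and it adds only $O(|E_H|)$ edges and no per-pair vertices, keeping $n=O(|V_H|)$); at the player-1 vertex $y_b$ one may take a ``$B$-escape'' to $e^B_b$. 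The sole target-dependent part is $T_c=\{e^A_a: a\not\sim c\}\cup\{e^B_b: b\not\sim c\}$. A short case analysis then shows $s\in\ASW{\reach{T_c}}$ iff $c$ lies in no triangle: whenever $a\not\sim c$ or $b\not\sim c$ player-1 escapes to a target, and the token is trapped in the losing sink by a random outcome precisely when $a\sim c$, $a\sim b$ and $b\sim c$. Thus coverage fails iff some $c$ lies in a triangle, i.e.\ iff $H$ has a triangle, and with $n=O(|V_H|)$, $m=O(|E_H|)$, $k=|V_H|$ a combinatorial $O(n^{3-\epsilon})$ or $O((k\cdot n^2)^{1-\epsilon})$ algorithm would yield a combinatorial subcubic triangle detector, contradicting Conjecture~\ref{icaps:conj:stc}.

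I expect the main obstacle to be getting the quantifier signs exactly right, that is, making ``not almost-surely reachable'' coincide with the existence of a witness: almost-sure-reachability failure is intrinsically an $\exists_{\mathrm{random}}\,\forall_{\mathrm{player}}\,\neg(\cdot)$ statement, whereas a triangle is a plain existential over a triple. The escape-vertex encoding is what reconciles this, since it lets player-1 \emph{refute} a candidate as soon as one of the target-incident edges $a\sim c$ or $b\sim c$ is missing, so the random player can trap the token only on a genuine triangle. Consequently the delicate step is the precise verification of the almost-sure semantics, namely that player-1's optimal strategy escapes in exactly the claimed cases and that the losing two-cycle is never almost-surely escapable; once this is established, the remainder is routine parameter bookkeeping translating the three size regimes into the stated OVC and STC contradictions.
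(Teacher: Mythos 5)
Your proposal is correct, and it splits naturally against the paper: the overall plan (an OV reduction for items (1)--(2), a triangle-detection reduction for items (3)--(4), then the same parameter bookkeeping) coincides with the paper's, and your OV gadget is essentially the paper's Reduction~\ref{icaps:red:query_mdp_ov} --- a random start vertex fanning out over $S_1$, player-1 moves into coordinate vertices, one target per $S_2$-vector --- except that you fold $S_2$ into the target sets ($c_i \in T_b$ iff $v^b_i = 1$) where the paper keeps explicit vertices $y_b$ with singleton targets $T_b = \{y_b\}$; this difference is immaterial. Your triangle gadget, by contrast, is genuinely different from Reduction~\ref{icaps:red:query_mdp_triangle}. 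The paper makes \emph{every} vertex random and uses four copies of $V_H$, with $T_i = V_1\setminus\{v_{1i}\}\cup V_4\setminus\{v_{4i}\}$, so player~1 has no choices at all and failure of almost-sure reachability degenerates to the existence of a single positive-probability path from $v_{1i}$ to $v_{4i}$, i.e.\ a triangle through $v_i$ --- the verification is immediate and needs no strategy analysis. You instead use two layers plus player-1 escape vertices, with targets defined by non-adjacency to $c$, and you must (and do, correctly) reason about optimal player-1 behaviour under the almost-sure semantics: reachability of $T_c$ from $s$ fails iff some $a\sim c$ has a neighbour $b$ with $b\sim c$; note that when $a\sim c$ the random vertex $g_a$ is guaranteed the outgoing edge to $y_c$, from which player~1 escapes since $c\not\sim c$, so no degenerate cases arise. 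Both constructions give $n=O(|V_H|)$, $m=O(|E_H|+|V_H|)$, $k=\Theta(|V_H|)$ and total target size $O(|V_H|^2)$, hence identical bounds. What the paper's version buys is an essentially trivial correctness proof (coverage failure is literally the existence of one bad path of a random walk); what yours buys is a smaller gadget (two layers instead of four), a cleaner treatment of the no-self-loop convention (your losing two-cycle versus the paper's self-loops), and an explicit quantifier-alternation reading that also explains why graphs, which lack the universal random quantifier, admit a linear-time algorithm and hence escape this lower bound.
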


\smallskip\noindent\emph{Using the OV-Conjecture.}
Below we prove the results~\ref{icaps:thm:cover:lb:mdps1}--\ref{icaps:thm:cover:lb:mdps2} of Theorem~\ref{icaps:thm:cover:lb:mdps}. We reduce the OV problem to Coverage in MDPs. By applying Conjecture~\ref{icaps:conj:ovc} we infer the result.

\begin{reduction}\label{icaps:red:query_mdp_ov}
	Given two sets $S_1, S_2$ of $d$-dimensional vectors, we build the MDP
	$P$ as follows.
	\begin{itemize}
		\item 
			The vertices $V$ of the MDP are given by a start vertex $s$, 
			sets of vertices $S_1$ and $S_2$ representing the sets of vectors 
			and vertices $C = \{c_i \mid 1 \leq i \leq d\}$ representing the coordinates of the vectors in the OVC instance. 

		\item 
			The edges $E$ of $P$ are defined as follows: The start vertex $s$ has an edge to every vertex of
			$S_1$. Furthermore for each $x_i \in S_1$ there is an edge to $c_j \in C$ iff 
			$x_i[j] = 1$ and for each $y_i \in S_2$ there is an edge from $c_j \in S_2$ to
			$y_i$ iff $y_i[j] = 1$. Also, the $y_i$ have self-loops so that every vertex has an outgoing edge.

		\item 
			The set of vertices is partitioned into player-1 vertices $V_1  = S_1
			\cup C \cup S_2$ and random vertices $V_R  = \{s\}$.
	\end{itemize}
\end{reduction}

\begin{example}[Example: Reduction from OV to Coverage] 
	Let the OV instance be $S_1 = \{(1,1,0), (1,0,1), (0,1,1)\}, S_2 = \{(1,0,1), (1,1,0),
	(0,1,0) \}$. Notice that the second vector in $S_1$ and the third vector in $S_2$
	are orthogonal. Due to the fact that $s$ is a random vertex, there is a nonzero probability
	that $x_2$ is the successor. There is no path from $x_2$ to $y_3$. As $T_3 = \{y_3\}$, there
	is no a.s.\ winning strategy from $s$ for the given instance of coverage.
	We illustrate the example of the reduction in Figure~\ref{icaps:fig:cov_mdps_ov}. 
	\begin{figure}[t]
		\centering
		\includegraphics{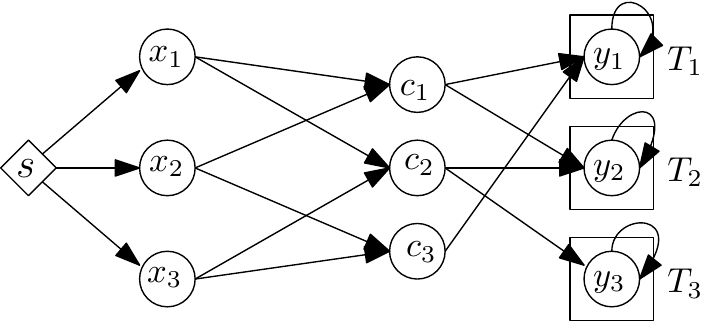}
	\caption{An example reduction from OV to Coverage in MDPs.}\label{icaps:fig:cov_mdps_ov}
\end{figure}
\end{example}

Notice that for orthogonal vectors $x_i$ and $y_j$ we have that
for each $c_\ell \in C$ either $x_i$ is not connected to $c_\ell$ or $y_j$ is not connected to $c_\ell$. 
Thus there is no path from $x_i$ to $y_i$. Starting from $s$ there is a non-zero probability to end in $x_i$
and thus also a non-zero probability to fail reaching the target set $T_j=\{y_j\}$ no matter of
player 1's strategy $\sigma_j$.

\begin{lemma}
	Let $P = (V, E, \langle V_1, V_R \rangle, \delta)$ 
	be the MDP given by Reduction~\ref{icaps:red:query_mdp_ov} and $T_i =\{y_i\}$ for $1\leq i\leq N$.
	There exist orthogonal vectors $x \in S_1$, $y \in S_2$ iff $s$ is not winning for $\Coverage{\{T_i \mid 1\leq i\leq N\}}$.
\end{lemma}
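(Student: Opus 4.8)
The plan is to prove this equivalence by reducing the winning condition to progressively simpler combinatorial statements: first reduce the coverage query to a statement about individual reachability objectives, then reduce almost-sure reachability to the existence of paths in the underlying graph, and finally characterize those paths by orthogonality. Throughout, the crucial structural feature is that $s$ is the \emph{only} random vertex and that every vertex reachable after $s$ belongs to player~1, so that all nondeterminism downstream of $s$ is actually controlled by player~1.

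First I would unfold the definition of the coverage query: $s$ is winning for $\Coverage{\{T_i \mid 1 \le i \le N\}}$ iff $s \in \ASW{\reach{T_i}}$ for \emph{every} $1 \le i \le N$. Hence it suffices to characterize, for a fixed $j$, when $s \in \ASW{\reach{T_j}}$ with $T_j = \{y_j\}$. Since $V_R = \{s\}$ and $\delta(s)$ is the uniform distribution over $\Out{s} = S_1$, each $x \in S_1$ is the successor of $s$ with positive probability $1/N$. Consequently, for any player-1 strategy $\sigma$ we have $\Pr^\sigma_s(\reach{T_j}) = 1$ iff, starting from every $x \in S_1$, the play under $\sigma$ reaches $y_j$. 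As all vertices in $S_1 \cup C \cup S_2$ are player-1 vertices, conditioned on the prefix $\langle s, x \rangle$ the continuation is a single deterministic path chosen by $\sigma$; therefore $s \in \ASW{\reach{T_j}}$ iff for every $x \in S_1$ there is a path from $x$ to $y_j$ in $(V,E)$.

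Second I would pin down exactly which such paths exist. By Reduction~\ref{icaps:red:query_mdp_ov} the graph decomposes into the layers $S_1 \to C \to S_2$, the only additional edges being the self-loops at each $y \in S_2$. Thus from an $x_i$ the only successors are coordinate vertices $c_\ell$ with $x_i[\ell]=1$, from a $c_\ell$ the only successors are vertices $y \in S_2$ with $y[\ell]=1$, and every $y$ is absorbing. In particular a play entering some $y_{j'} \neq y_j$ can never leave it, so the sole way to reach $y_j$ from $x_i$ is the two-step path $x_i \to c_\ell \to y_j$. Such a path exists iff there is a coordinate $\ell$ with $x_i[\ell] = 1$ and $y_j[\ell] = 1$, that is, iff $\sum_{\ell=1}^d x_i[\ell]\cdot y_j[\ell] \neq 0$, which is precisely the statement that $x_i$ and $y_j$ are \emph{not} orthogonal.

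Combining the two steps, $s$ is winning for the coverage query iff for all $j$ and all $x_i \in S_1$ the vectors $x_i$ and $y_j$ are non-orthogonal, i.e.\ iff no orthogonal pair $x \in S_1$, $y \in S_2$ exists. Taking the contrapositive yields exactly the claim: there exist orthogonal $x \in S_1$, $y \in S_2$ iff $s$ is \emph{not} winning for $\Coverage{\{T_i \mid 1 \le i \le N\}}$. I expect the only delicate point to be the first reduction: one must argue carefully that, because $s$ is random and uniform, almost-sure reachability forces success from \emph{every} $x \in S_1$ simultaneously (a single failing $x$ already contributes probability $1/N > 0$ of missing $T_j$), and that downstream determinism lets one replace ``reached almost surely'' by ``connected by a path,'' with a single strategy handling all $x$ at once since their continuations lie in disjoint subtrees. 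The orthogonality characterization of paths is then routine given the layered construction.
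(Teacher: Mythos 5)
Your proof is correct and takes essentially the same route as the paper's: both arguments use the fact that $s$ is a uniformly random vertex to reduce almost-sure reachability of each $T_j$ to the existence of a path from \emph{every} $x_i \in S_1$ to $y_j$, and both characterize such paths via the layered construction as non-orthogonality of $x_i$ and $y_j$. Your write-up is simply more explicit about the details the paper leaves implicit (the absorbing self-loops at $S_2$, and that one history-dependent strategy can serve all $x_i$ simultaneously).
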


\begin{proof}
	The MDP $P$ is constructed in such a way that there is no
	path between vertex $x_i$ and $y_j$ iff the corresponding vectors are
	orthogonal in the OV instance: If $x_i$ is orthogonal to $y_j$, the outgoing
	edges lead to no vertex which has an incoming edge to $y_j$ as either $x_i[k]
	= 0$ or $y_j[k] = 0$. On the other hand, if there is no path from $x_i$ to
	$y_j$ we again have by the construction of the underlying graph that for all
	$1 \leq k \leq d: x_i[k] = 0$ or $y_j[k] =0$. This is the definition of
	orthogonality for $x_i$ and $y_j$.
    When starting from $s$ the token is randomly moved to one of the vertices $x_i$ and thus 
	player~1 can reach each $y_j$ almost surely from $s$ iff it can reach each $y_j$ from each $x_i$.	
    Thus, we have that there is an a.s.\@ winning player~1 strategy for $\reach{T_i}$ 
	iff $y_i$ has.
	Hence, $S_2$ has no orthogonal vector in $S_1$ iff 
	each $\reach{T_i}$ has an a.s.\@ winning player~1 strategy.
\end{proof}

The MDP $P$ has only $O(N)$ many vertices and Reduction~\ref{icaps:red:query_mdp_ov} can
be performed in 
$O(N \cdot d)$ time (recall that $d = \omega(\log N)$). The number of edges $m$ is $O(N \cdot d)$ 
and the number of target sets $k \in
\theta(N)$. Thus the points~\ref{icaps:thm:cover:lb:mdps1}--\ref{icaps:thm:cover:lb:mdps2} of Theorem~\ref{icaps:thm:cover:lb:mdps} follow.

\smallskip\noindent\emph{Using the ST-conjecture.}
Towards the results~\ref{icaps:thm:cover:lb:mdps3}--\ref{icaps:thm:cover:lb:mdps4} in Theorem~\ref{icaps:thm:cover:lb:mdps} we reduce the triangle detection problem to Coverage problem in MDPs. By applying Conjecture~\ref{icaps:conj:stc} we infer the result.

\begin{reduction}\label{icaps:red:query_mdp_triangle}
	Given an instance of triangle detection, i.e., a graph $G = (V,E)$, we build the
	following MDP $P = (V',E',\ls V'_1, V'_R \rs, \delta)$.
	\begin{itemize}
		\item 
			The vertices $V'$ are given as four copies $V_1,V_2,V_3,V_4$ of $V$
			and a start vertex $s$. 
		\item 
			The edges $E'$ of $P$ are defined as follows: There
			is an edge from $s$ to every $v_{1i} \in V_1$ for $i = 1 \dots n$. In
			addition for $1 \leq j \leq 4$ there is an edge from $v_{ji}$ to $v_{(j+1)k}$ iff
			$(v_i,v_k) \in E$. Finally, $v_{4i}$ for $i = 1 \dots n$ has a self-loop.

		\item 
			The set of vertices $V'$ is partitioned into player-1 vertices $V'_1 =
			\emptyset$ and random vertices $V'_R = \{s\} \cup V_1 \cup V_2 \cup V_3 \cup
			V_4$.
	\end{itemize}
\end{reduction}
Notice that all the vertices of the constructed MDP are random vertices.

\begin{example}[Reducing triangle detection to Coverage.]
	Let $G$ be the graph given in Figure~\ref{icaps:fig:cov_triangle_mdps}. 
	We construct the MDP $P$ as in Reduction~\ref{icaps:red:query_mdp_triangle}.
	Notice that $G$ has the triangle $(v_1,v_2,v_3)$ and the constructed MDP $P$ 
	has a nonzero chance to take the path marked by the fat edges that correspond to this triangle,
	i.e., player-1 does not have a winning strategy from $s$ for the coverage objective given in the 
	reduction because he cannot satisfy $T_1$. The example is illustrated in Figure~\ref{icaps:fig:cov_triangle_mdps}.
\begin{figure}[ht]
	\centering
	\includegraphics{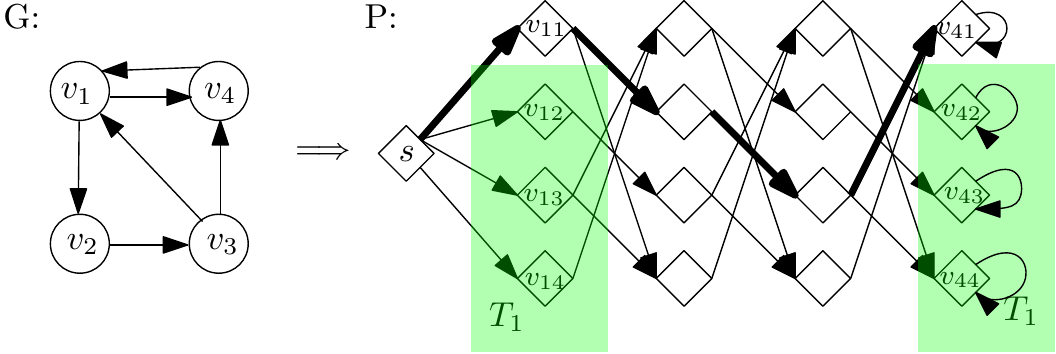}
	\caption{Reduction from Triangle to Coverage}\label{icaps:fig:cov_triangle_mdps}
\end{figure}
\end{example}

\begin{lemma}
	Let $P$ be the MDP given by Reduction~\ref{icaps:red:query_mdp_triangle} when applied to a graph $G$ and let $T_i = V_1 \setminus \{v_{1i}\} \cup V_4 \setminus \{v_{4i}\}$ for $i = 1 \dots n$ be target sets.
	The graph $G$ has a triangle iff $s$ is not winning for $\Coverage{\{T_i \mid 1\leq i\leq N\}}$ in $P$.
\end{lemma}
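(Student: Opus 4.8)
The plan is to first strip away everything probabilistic. Since Reduction~\ref{icaps:red:query_mdp_triangle} makes every vertex of $P$ a random vertex ($V'_1 = \emptyset$), player~1 has no choices at all, so $P$ is an ordinary Markov chain and a set of almost-sure winning strategies for the coverage query exists iff $s \in \ASW{\reach{T_i}}$ for every $i$. Hence $s$ is \emph{not} winning for $\Coverage{\{T_i \mid 1 \leq i \leq N\}}$ iff there is some index $i$ with $s \notin \ASW{\reach{T_i}}$, and the whole lemma reduces to analysing a single almost-sure reachability query at a time. For this I would use the standard characterisation of almost-sure reachability in finite Markov chains: $s \notin \ASW{\reach{T_i}}$ holds iff there is a vertex $w$ that (a)~is reachable from $s$ along a path avoiding $T_i$ and (b)~cannot reach $T_i$ at all. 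One direction is immediate (positive probability to reach such a $w$ and then stay away from $T_i$ forever); the converse uses that in a finite chain a run avoiding $T_i$ forever almost surely settles in a bottom SCC disjoint from $T_i$, and that bottom SCC supplies such a $w$.

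Next I would read off the structure of $P$. Since $s$ has an edge to every $v_{1a}$ and $v_{1a} \in T_i$ exactly when $a \neq i$, the only successor of $s$ lying outside $T_i$ is $v_{1i}$; therefore every $T_i$-avoiding path out of $s$ begins $s \to v_{1i}$. The copies $V_2$ and $V_3$ are disjoint from every $T_i$, while in $V_4$ the only vertex outside $T_i$ is $v_{4i}$, which carries a self-loop and no other out-edge, so it is an absorbing trap that can never reach $T_i$. Combining these, I would show that $s \notin \ASW{\reach{T_i}}$ iff $v_{4i}$ is reachable from $v_{1i}$ by a path staying inside $V \setminus T_i$: if such a path exists then $v_{4i}$ is a trap witnessing (a)+(b); conversely, if $s \notin \ASW{\reach{T_i}}$, then following any continuation from the witness $w$ ends in a bottom SCC disjoint from $T_i$, which in this layered graph can only be the singleton $\{v_{4i}\}$, and the concatenated path $s \to v_{1i} \leadsto w \leadsto v_{4i}$ avoids $T_i$ throughout.

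Finally I would translate this reachability condition back to $G$. By construction of the inter-layer edges, a $T_i$-avoiding path $v_{1i} \to v_{2b} \to v_{3c} \to v_{4i}$ exists iff $(v_i,v_b),(v_b,v_c),(v_c,v_i) \in E$; since the triangle-detection instance has its self-loops removed, the three endpoints are pairwise distinct and this is precisely a triangle through $v_i$. Thus $s \notin \ASW{\reach{T_i}}$ iff $G$ has a triangle through $v_i$, and quantifying over $i$ gives that $s$ is not winning for the coverage query iff $G$ contains a triangle, which is the claim.

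The step I expect to require the most care is the clean two-sided characterisation of almost-sure reachability together with one technical hypothesis of the model: the reduction silently needs every vertex of $P$ to have an outgoing edge, but a sink $v_b$ of $G$ would make the copies $v_{1b},v_{2b},v_{3b}$ dead ends trapped outside every $T_i$, breaking the coverage query independently of triangles. I would close this gap by a linear-time preprocessing of $G$ that iteratively deletes vertices of out-degree zero (such vertices can lie on no triangle, so triangle existence is preserved), after which every vertex of $P$ has a successor and the bottom-SCC argument above applies verbatim.
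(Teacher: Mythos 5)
Your proposal is correct and follows essentially the same route as the paper's proof: reduce the coverage query to per-index almost-sure reachability, observe that since every vertex of $P$ is random every possible path is taken with positive probability, and match $T_i$-avoiding paths from $v_{1i}$ to $v_{4i}$ with triangles through $v_i$. The extra machinery you supply---the bottom-SCC characterisation of almost-sure reachability and the preprocessing that deletes out-degree-zero vertices of $G$---merely fills in details the paper's terser argument glosses over, the latter patching the implicit requirement that every vertex of the constructed MDP have a successor.
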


\begin{proof}
    First, $s$ is not winning for $\Coverage{T_1, \dots, T_N}$
    iff there is a $T_i$ such that player-1 has no of a.s.\ winning strategies from $s$ for $\reach{T_i}$.
    Second, there is a triangle in the graph $G$ iff there is a path from some vertex
	$v_{1i}$ in the first copy of $G$ to the same vertex in the fourth copy of $G$,
	$v_{4i}$. 
	Finally, notice that player~1 does not control any vertex and thus the strategy of player~1 does not matter and each
	possible path is played with non-zero probability.	
	If $G$ has a triangle containing vertex $v_i$ then the corresponding play from
	$v_{1i}$ to $v_{4i}$ has non-zero probability and is not in $\reach{T_i}$. 
	That is, $s$ is not winning for  and thus not winning for $\Coverage{\{T_i \mid 1\leq i\leq N\}}$.
	Now assume that s is not winning for  $\Coverage{\{T_i \mid 1\leq i\leq N\}}$
	and thus not winning for $\reach{T_i}$. Then there is a path from $v_{1i}$ to $v_{4i}$ and thus a triangle in $G$.
\end{proof}

Moreover, the size and the construction time of the MDP $P$
are linear in the size of the
original graph $G$ and we have $k = \theta(n)$ target sets. Thus~\ref{icaps:thm:cover:lb:mdps3}--\ref{icaps:thm:cover:lb:mdps4} of Theorem~\ref{icaps:thm:cover:lb:mdps} follow.

\subsubsection{Game Graphs.}
Next, we describe how the results for MDPs can be extended to game graphs. We prove the following theorem which states multiple specific lower bounds for checking if a vertex has a set of winning strategies for a coverage query.

\begin{theorem}\label{icaps:thm:cover:lb:gg}
	For all $\epsilon > 0$, checking if a vertex has a set of winning strategies for a coverage query in game graphs does not admit: 
	\begin{enumerate}
		\item an $O(m^{2-\epsilon})$ algorithm under Conjecture~\ref{icaps:conj:ovc},\label{icaps:thm:cover:lb:gg1}
		\item an $O({(k\cdot m)}^{1-\epsilon})$ algorithm under Conjecture~\ref{icaps:conj:ovc},\label{icaps:thm:cover:lb:gg2}
		\item a combinatorial $O(n^{3-\epsilon})$ algorithm under Conjecture~\ref{icaps:conj:stc} and\label{icaps:thm:cover:lb:gg3}
		\item a combinatorial $O({(k\cdot n^2)}^{1-\epsilon})$ algorithm under Conjecture~\ref{icaps:conj:stc}.\label{icaps:thm:cover:lb:gg4} 
	\end{enumerate}

\end{theorem}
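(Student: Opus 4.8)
The plan is to transfer the two MDP reductions to game graphs by turning the randomizing vertices into adversarial player-2 vertices, and to observe that the resulting reduction graphs are acyclic except for terminal self-loops, so that almost-sure reachability against randomness and sure reachability against an adversary coincide. This lets me reuse Theorem~\ref{icaps:thm:cover:lb:mdps} almost verbatim, with the correctness lemmas re-argued directly in the game semantics.

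First, for points~1 and~2 (under Conjecture~\ref{icaps:conj:ovc}), I would reuse Reduction~\ref{icaps:red:query_mdp_ov} with the only change that the single randomizing start vertex $s$ becomes a player-2 vertex while all remaining vertices ($S_1$, $C$, $S_2$) become player-1 vertices; the targets stay $T_i = \{y_i\}$. The correctness claim to establish is that $s$ is winning for player~1 for $\Coverage{\{T_i \mid 1 \le i \le N\}}$ iff $S_1,S_2$ contain no orthogonal pair. This should be immediate from the structure: for a fixed target $T_i$, player~2's only choice is at $s$, where it selects some $x_j$, and from $x_j$ player~1 can reach $y_i$ iff there is a coordinate $\ell$ with $x_j[\ell] = y_i[\ell] = 1$, i.e.\ iff $x_j$ and $y_i$ are not orthogonal. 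Hence player~1 has a winning strategy for $\reach{T_i}$ iff $y_i$ is non-orthogonal to every $x_j$, and player~1 wins the whole coverage query iff no orthogonal pair exists, exactly the condition of the analogous MDP lemma.

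Second, for points~3 and~4 (under Conjecture~\ref{icaps:conj:stc}), I would reuse Reduction~\ref{icaps:red:query_mdp_triangle} and declare every vertex a player-2 vertex (player~1 controls nothing), keeping the four-copy layering and the targets $T_i = (V_1\setminus\{v_{1i}\}) \cup (V_4\setminus\{v_{4i}\})$. The claim is that $s$ is not winning for player~1 for the coverage query iff $G$ has a triangle. Since player~1 has no vertices, a play is chosen entirely by player~2: from $s$ it moves to some $v_{1j}$; if $j \ne i$ the target $T_i$ is hit immediately, and if $j = i$ player~2 can avoid $T_i$ precisely when it can steer the layered path to $v_{4i}$ (the middle-copy vertices are never in $T_i$), which is possible iff there is a triangle through $v_i$. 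Thus player~1 fails some query $T_i$ iff $G$ has a triangle through $v_i$, i.e.\ iff $G$ has a triangle.

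Finally, because the transformation changes neither the size of the instance nor the number of target sets, the parameter counts are identical to the MDP case: the orthogonal-vectors reduction yields $O(N)$ vertices, $m = O(N d)$ edges and $k = \Theta(N)$ targets, while the triangle reduction is linear in $|G|$ with $k = \Theta(n)$ targets. Plugging these into the respective conjectures gives the four stated bounds exactly as in Theorem~\ref{icaps:thm:cover:lb:mdps}. I expect the only subtle point to be the equivalence between almost-sure (MDP) and worst-case (game) reachability, which does not hold in general because of the ``retry infinitely often'' phenomenon; here I would stress that it holds only because the reduction graphs are acyclic up to terminal self-loops, so that reaching with probability one and reaching on every adversarial path are the same condition. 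The direct adversarial arguments sketched above sidestep this issue by reasoning in the game semantics from the start, so the equivalence is used only as intuition for why the same reductions transfer.
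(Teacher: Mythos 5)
Your proposal is correct and takes essentially the same route as the paper: the paper's Reductions~\ref{icaps:red:query_games_ov} and~\ref{icaps:red:query_games_triangle} are exactly your constructions (the MDP reductions with $s$, respectively all vertices, turned into player-2 vertices, targets and parameter counts unchanged), and the bounds then follow as in Theorem~\ref{icaps:thm:cover:lb:mdps}. Your explicit game-semantic correctness arguments supply what the paper only sketches, and your triangle claim (``$s$ is \emph{not} winning iff $G$ has a triangle'') states correctly what the paper's corresponding lemma intends, where a ``not'' appears to have been dropped.
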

\smallskip\noindent\emph{Using the OV-Conjecture.}
Below we prove the results~\ref{icaps:thm:cover:lb:gg1}--\ref{icaps:thm:cover:lb:gg2} of Theorem~\ref{icaps:thm:cover:lb:gg}. We reduce the OV problem to Coverage in game graphs.
By applying Conjecture~\ref{icaps:conj:ovc} we infer the result. In Reduction~\ref{icaps:red:query_games_ov} we change the random starting vertex of Reduction~\ref{icaps:red:query_mdp_ov} to a player-2 vertex. 
The rest of the reduction stays the same. 
The proof then proceeds as before with the adversary now overtaking the role of the random choices.

\begin{reduction}\label{icaps:red:query_games_ov}
	Given two sets $S_1, S_2$ of $d$-dimensional vectors, we build the following
	game graph $\Gamma = (V,E,\ls V_1, V_2 \rs)$.
	\begin{itemize}
		\item The vertices $V$ and edges $E$ are defined as before in Reduction~\ref{icaps:red:query_mdp_ov}



		\item The set of vertices is now partitioned into
			player-1 vertices $V_1  = S_1
			\cup C \cup S_2$ and player-2 vertices $V_2
			= \{s\}$.
	\end{itemize}
\end{reduction}

\begin{lemma}
	Let $\Gamma$ be the game graph given by Reduction~\ref{icaps:red:query_games_ov} 
	with a coverage query $\Coverage{\{T_i \mid 1 \leq i \leq n\}}$ where $T_i = \{y_i\}$ for $i = 1\dots N$.
	There exist orthogonal vectors $x \in S_1$, $y \in S_2$ iff there is no
	set of winning strategies from start vertex $s$ for the coverage query.
\end{lemma}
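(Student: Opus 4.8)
The plan is to mirror the argument for the MDP reduction (Reduction~\ref{icaps:red:query_mdp_ov}), replacing the random choice at $s$ by the adversarial choice of player~2. First I would establish exactly the same underlying graph property as in the MDP case, since the edge set is unchanged: in $\Gamma$ there is a path from $x_i \in S_1$ to $y_j \in S_2$ if and only if $x_i$ and $y_j$ are \emph{not} orthogonal. Indeed, every path leaving $x_i$ must pass through some coordinate vertex $c_\ell \in C$ before it can reach $S_2$, and by construction the edges $x_i \to c_\ell$ and $c_\ell \to y_j$ are present exactly when $x_i[\ell]=1$ and $y_j[\ell]=1$. Hence a path $x_i \to c_\ell \to y_j$ exists iff some coordinate $\ell$ satisfies $x_i[\ell]=y_j[\ell]=1$, which is precisely non-orthogonality. (The self-loops on the $y_j$ only serve the standing assumption that every vertex has an outgoing edge and do not affect reachability.)

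Next I would characterize when $s$ is winning for a single reachability objective $\reach{T_i}$ with $T_i=\{y_i\}$. The only player-2 vertex is $s$, and its out-neighbours are exactly the vertices of $S_1$; all vertices in $S_1 \cup C \cup S_2$ belong to player~1. Consequently a player-1 strategy is winning for $\reach{T_i}$ from $s$ iff, for \emph{every} successor $x_j$ that player~2 may choose at $s$, player~1 can force the play from $x_j$ to $y_i$. Since every vertex reachable from $x_j$ (apart from $s$ itself, which is never revisited) is player-1 controlled, player~1 can force reaching $y_i$ from $x_j$ iff there simply \emph{exists} a path from $x_j$ to $y_i$ in $\Gamma$. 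Combining this with the first step, $s \in \W{1}{\reach{T_i}}$ iff every $x_j \in S_1$ is non-orthogonal to $y_i$.

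Finally I would assemble the coverage statement: $s$ is winning for $\Coverage{\{T_i \mid 1 \leq i \leq N\}}$ iff $s \in \W{1}{\reach{T_i}}$ for all $i$, where a separate player-1 strategy is allowed for each target. By the previous step this holds iff for all $i$ and all $j$ the vectors $x_j$ and $y_i$ are non-orthogonal; contrapositively, there is no set of winning strategies from $s$ iff some pair $x_j \in S_1$, $y_i \in S_2$ is orthogonal, which is exactly the claim. The main obstacle, and the only point where the game-graph argument genuinely departs from the MDP one, is the equivalence between the adversary's choice at $s$ and the universal quantifier over $S_1$: because $s$ is a player-2 vertex having all of $S_1$ as successors, a winning player-1 strategy for $\reach{T_i}$ must cope with every $x_j$, precisely as the random vertex $s$ forced reachability from every $x_j$ with positive probability in the MDP proof. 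Verifying that no genuine interaction arises after $s$ (so that ``player~1 wins from $x_j$'' collapses to mere reachability in a player-1-controlled fragment) is the crux that makes the reduction correct.
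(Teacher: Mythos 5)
Your proposal is correct and follows essentially the same route as the paper: the paper states this lemma without a separate proof, remarking only that the argument for the MDP reduction carries over ``with the adversary now overtaking the role of the random choices,'' and your proof is precisely that argument spelled out --- the path-existence characterization of non-orthogonality, the observation that player~2's choice at $s$ quantifies universally over $S_1$ just as the random vertex did, and the collapse of player-1 winning to plain reachability in the player-1-controlled fragment after $s$.
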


The game graph $\Gamma$ has only $O(N)$ many vertices and Reduction~\ref{icaps:red:query_games_ov} can
be performed in 
$O(N \cdot d)$ time (recall that $d = \omega(\log N)$). The number of edges $m$ is $O(N \cdot d)$ 
and the number of target sets $k \in
\theta(N)$. Thus the points~\ref{icaps:thm:cover:lb:gg1}--\ref{icaps:thm:cover:lb:gg2} in Theorem~\ref{icaps:thm:cover:lb:gg} follow.

\smallskip\noindent\emph{Using the STC conjecture.}
Below we prove the results~\ref{icaps:thm:cover:lb:gg3}--\ref{icaps:thm:cover:lb:gg4} 
in Theorem~\ref{icaps:thm:cover:lb:gg}. 
We reduce the triangle detection problem to Coverage in game graphs. By applying Conjecture~\ref{icaps:conj:stc} we infer the result. In Reduction~\ref{icaps:red:query_games_triangle} we change the random vertices of Reduction~\ref{icaps:red:query_mdp_triangle} to player-2 vertices. Notice that the
resulting game graph consists of only player-2 vertices. 
Again, if there is a path starting from $s$ which violates a reachability objectives 
in the given coverage query then player~2 wins.
As the reachability objectives are defined such that they rule out the triangles of the reduction,
player~1 only wins iff there is no such path, i.e., there is no triangle in the original graph.

\begin{reduction}\label{icaps:red:query_games_triangle}
	Given an instance of triangle detection, i.e., a graph $G = (V,E)$, we build the
	following game graph $\Gamma = (V',E',\ls V'_1, V'_2 \rs)$.
	\begin{itemize}
		\item The vertices $V'$ and Edges $E'$ are the same as in Reduction~\ref{icaps:red:query_mdp_triangle}.

		\item The set of vertices $V'$ is partitioned into player-1 vertices 
			$V'_1 = \emptyset$ and player-2 vertices $V'_2 = \{s\} \cup
			V_1 \cup V_2 \cup V_3 \cup V_4$.
	\end{itemize}
\end{reduction}

\begin{lemma}
	Let $\Gamma$ be the game graph given by Reduction~\ref{icaps:red:query_games_triangle} when applied to a graph $G$
	and let $T_i = V_1 \setminus \{v_{1i}\} \cup
	V_4 \setminus \{v_{4i}\}$ for $i = 1 \dots n$.
	The graph $G$ has a triangle iff $s$ is winning for $\Coverage{\{T_i \mid 1 \leq i \leq n\}}$ in $\Gamma$.
\end{lemma}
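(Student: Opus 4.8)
The plan is to prove this lemma as the adversarial analogue of the MDP lemma for Reduction~\ref{icaps:red:query_mdp_triangle}, with player~2 (the adversary), who now owns \emph{every} vertex of $\Gamma$ including $s$, taking over exactly the role played by the single random vertex in the MDP construction. Because $V_1' = \emptyset$, player~1 makes no choices whatsoever, so ``$s$ is winning for $\reach{T_i}$'' collapses to the statement that \emph{every} play from $s$ reaches $T_i$, while player~2 wins $\reach{T_i}$ precisely when it can produce a single play from $s$ that forever avoids $T_i$. Thus the entire argument reduces to characterizing, for each fixed $i$, the plays from $s$ that avoid $T_i$, and then quantifying over all $i$.

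First I would analyze the reachability sub-instances one at a time. A play starts $s \to v_{1j}$; if $j \neq i$ then $v_{1j} \in V_1 \setminus \{v_{1i}\} \subseteq T_i$, so $T_i$ is reached at the first step, and hence to avoid $T_i$ player~2 is forced to move to $v_{1i}$. From $v_{1i}$ every play threads the four copies, passing through $V_2$, then $V_3$, then $V_4$, and reaches some $v_{4k}$; if $k \neq i$ then $v_{4k} \in V_4 \setminus \{v_{4i}\} \subseteq T_i$. Consequently the only way player~2 can keep a play out of $T_i$ is to realize a path $v_{1i} \to v_{2a} \to v_{3b} \to v_{4i}$ and then loop on the self-loop at $v_{4i}$. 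By the edge definition of the reduction this path exists iff $(v_i,v_a),(v_a,v_b),(v_b,v_i) \in E$, and since self-loops are removed in preprocessing the three vertices are distinct, so such a path exists iff $G$ contains a triangle through $v_i$. (To make ``every play reaches $V_4$'' literally hold I would, exactly as in the MDP reduction, assume without loss of generality that every vertex of $G$ has an outgoing edge, so each $v_{ji}$ has a successor.) Combining the sub-instances, player~1 is winning for $\reach{T_i}$ iff there is no directed path from $v_{1i}$ to $v_{4i}$, i.e.\ iff $G$ has no triangle through $v_i$; quantifying over $1 \le i \le n$ then ties the existence of a triangle in $G$ to the winning status of $s$ for $\Coverage{\{T_i \mid 1 \le i \le n\}}$, which is the equivalence asserted in the statement.

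The main obstacle is not the combinatorics — the layered four-copy gadget makes the triangle/path correspondence immediate, and the OV and triangle reductions for MDPs already supply the template — but pinning down the winning \emph{direction} with care. Because player~2 owns $s$ together with all copy vertices, a triangle hands player~2 an explicit play that avoids some $T_i$, so the presence of a triangle corresponds to player~1 \emph{failing} the coverage query rather than achieving it. I would therefore verify the final wording against its two sibling lemmas, namely the MDP triangle lemma for Reduction~\ref{icaps:red:query_mdp_triangle} (``$G$ has a triangle iff $s$ is \emph{not} winning'') and the OV game-graph lemma for Reduction~\ref{icaps:red:query_games_ov} (``orthogonal vectors iff there is \emph{no} set of winning strategies''), to make sure the negation is carried consistently through the game-graph triangle case, since an omitted ``not'' would flip the claimed equivalence. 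With the winning direction fixed this way, the counts are routine: $\Gamma$ has $O(n)$ vertices and $O(m)$ edges and $k = \Theta(n)$ target sets, so the reduction is linear, and points~\ref{icaps:thm:cover:lb:gg3}--\ref{icaps:thm:cover:lb:gg4} of Theorem~\ref{icaps:thm:cover:lb:gg} follow from Conjecture~\ref{icaps:conj:stc}.
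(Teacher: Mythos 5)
Your proposal is correct and follows exactly the intended argument: the paper gives no proof for this lemma (it is stated without proof, as the routine adaptation of the proof of the MDP triangle lemma for Reduction~\ref{icaps:red:query_mdp_triangle}), and your play-by-play analysis --- player~1 wins $\reach{T_i}$ iff every play from $s$ reaches $T_i$, which fails precisely when there is a path $v_{1i} \to v_{2a} \to v_{3b} \to v_{4i}$, i.e.\ a triangle through $v_i$ --- is that adaptation, including the correct use of the self-loop preprocessing to ensure the three triangle vertices are distinct. Your caution about the winning direction is well placed: the lemma as printed is indeed missing a ``not'' (it should read ``$G$ has a triangle iff $s$ is \emph{not} winning for $\Coverage{\{T_i \mid 1 \leq i \leq n\}}$''), which is confirmed both by the sibling MDP lemma and by the paper's own prose immediately preceding the reduction (``player~1 only wins iff there is no such path, i.e., there is no triangle in the original graph''); with that correction your proof establishes exactly the statement needed for points~\ref{icaps:thm:cover:lb:gg3}--\ref{icaps:thm:cover:lb:gg4} of Theorem~\ref{icaps:thm:cover:lb:gg}. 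The only minor caveat, which you also handle adequately, is the side condition that every vertex of $G$ can be assumed (after linear-time trimming of vertices of out-degree zero, which destroys no triangle) to have an outgoing edge, so that $\Gamma$ satisfies the standing assumption that every vertex has a successor --- a point the paper glosses over in the MDP case as well.
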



Moreover, the size and the construction time of game graph $\Gamma$
are linear in the size of the
original graph $G$ and we have $k = \theta(n)$ target sets.
Thus~\ref{icaps:thm:cover:lb:gg3}--\ref{icaps:thm:cover:lb:gg4} in Theorem~\ref{icaps:thm:cover:lb:gg} follow.


\section{AllCoverage Problem}
In this section, we consider the AllCoverage problem. First, we present simple algorithms for all models
based on the standard reachability problems of the models. 
Notice that the respective algorithms can also be used to
solve the corresponding Coverage Problem. Then we present a conditional lower bound for
graphs which establishes that the existing algorithm cannot be polynomially
improved under the STC and OV conjectures.

\subsection{Algorithms}
We present quadratic algorithms for MDPs, games and graphs. The results present the upper bounds for graphs, MDPs and Games third row of
Table~\ref{icaps:tab:complexity}. 

Given the query $\Coverage{\{T_i \mid 1 \leq i \leq k\}}$
for graphs, MDPs and game graphs, we propose an algorithm 
which first solves the $k$ reachability objectives using the basic results detailed in
Section~\ref{sec:basic:algorithms}.
Notice that the result of the algorithms for solving the reachability objective
is a set of vertices that have a strategy to achieve the objective.
Then we take the intersection of the resulting sets.
(1)~For graphs using BFS which is in $O(m)$ time we obtain an $O(k \cdot m)$ time algorithm.
(2)~For game graphs, using the $O(m)$-time attractor computation, 
we have an $O(k\cdot m)$ time algorithm.
(3)~For MDPs, the MEC-decomposition followed by $k$ many $O(m)$-time almost-sure 
reachability computation, 
gives an  $O(k \cdot m + \mectime)$ time algorithm.

\subsection{Conditional Lower Bounds}
In this section, we present conditional lower bounds for the AllCoverage
problem in graphs (i.e., the CLBs of the third row of Table~\ref{icaps:tab:complexity}).  
For MDPs and game graphs the conditional lower bounds follow from
Section~\ref{icaps:sec:utarget} because the Coverage problem can be trivially reduced to the
AllCoverage problem, i.e., once we have computed all the vertices that can reach a target it is easy to check 
whether a specific vertex can reach that target.
The conditional lower bounds are due to reductions from OV and the triangle detection problem.

\begin{theorem}\label{icaps:thm:allcover:lb:graphs}
For all $\epsilon > 0$, computing the solution of the AllCoverage problem in graphs does not admit
	\begin{enumerate}
		\item an $O(m^{2-\epsilon})$ algorithm under Conjecture~\ref{icaps:conj:ovc},\label{icaps:thm:allcover:lb:graphs1}
		\item an $O({(k\cdot m)}^{1-\epsilon})$ algorithm under Conjecture~\ref{icaps:conj:ovc},\label{icaps:thm:allcover:lb:graphs2}
		\item a combinatorial $O(n^{3-\epsilon})$ algorithm under Conjecture~\ref{icaps:conj:stc} and\label{icaps:thm:allcover:lb:graphs3}
		\item a combinatorial $O({(k\cdot n^2)}^{1-\epsilon})$ algorithm under Conjecture~\ref{icaps:conj:stc}.\label{icaps:thm:allcover:lb:graphs4}
	\end{enumerate}
\end{theorem}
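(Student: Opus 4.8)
The plan is to prove all four conditional lower bounds by exhibiting two reductions into the graph \textsc{AllCoverage} problem: one from Orthogonal Vectors (yielding points~\ref{icaps:thm:allcover:lb:graphs1}--\ref{icaps:thm:allcover:lb:graphs2} under Conjecture~\ref{icaps:conj:ovc}) and one from triangle detection (yielding points~\ref{icaps:thm:allcover:lb:graphs3}--\ref{icaps:thm:allcover:lb:graphs4} under Conjecture~\ref{icaps:conj:stc}). The common leverage point is that \textsc{AllCoverage} must output the full set $W=\{u\mid u\text{ reaches }T_i\text{ for all }1\le i\le k\}=\bigcap_{i}R_i$, where $R_i$ is the set of sources that can reach $T_i$. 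Since each $R_i$ alone is computable by one linear-time search, the hardness can only come from having to intersect $k$ of them, and both reductions are designed so that reading off a single coordinate of $W$ decides the underlying conjecture-hard problem.

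For the OV reduction I would reuse the underlying graph of Reduction~\ref{icaps:red:query_mdp_ov}: vertices $S_1\cup C\cup S_2$, an edge $x_i\to c_j$ whenever $x_i[j]=1$ and $c_j\to y_\ell$ whenever $y_\ell[j]=1$, with self-loops added wherever needed so that every vertex has an out-edge (this does not change reachability to any $y_\ell$). I drop the start vertex $s$ entirely, since \textsc{AllCoverage} already ranges over every source. Setting $T_\ell=\{y_\ell\}$ for $1\le\ell\le N$, a path from $x_i$ to $y_\ell$ exists iff they share a $1$-coordinate, i.e.\ iff they are \emph{not} orthogonal; hence $x_i\in W$ iff $x_i$ is non-orthogonal to every $y_\ell$. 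Consequently an orthogonal pair exists iff $S_1\not\subseteq W$, which I test after computing $W$. The graph has $n_H=O(N+d)$ vertices, $m=O(Nd)$ edges and $k=N$ targets; taking $d$ polylogarithmic, an $O(m^{2-\epsilon})$ algorithm would solve OV in $\widetilde{O}(N^{2-\epsilon})$ and an $O((k\cdot m)^{1-\epsilon})$ algorithm in $\widetilde{O}(N^{2-2\epsilon})$ time, both contradicting Conjecture~\ref{icaps:conj:ovc}.

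For triangle detection I would view $G=(V,E)$ as tripartite layers $I,J,K$ (copies of $V$) with $a_i\to b_j$ iff $(v_i,v_j)\in E$ and $b_j\to c_k$ iff $(v_j,v_k)\in E$, so that from $a_i$ the reachable $K$-vertices are exactly those $c_k$ admitting a walk $i\!\to\! j\!\to\! k$. To localize a triangle through $v_i$ I set $T_i=\{c_k\mid (v_k,v_i)\in E\}$; then $a_i$ reaches $T_i$ iff there are $j,k$ with $(v_i,v_j),(v_j,v_k),(v_k,v_i)\in E$, i.e.\ iff a triangle passes through $v_i$. The difficulty is that \textsc{AllCoverage} tests reachability to \emph{all} $T_{i'}$ at once, whereas I only want the ``diagonal'' test $a_i\in R_i$. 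I would resolve this with a gadget: add one self-looped sink $w_{i'}$ to each $T_{i'}$ together with direct edges $a_i\to w_{i'}$ for every $i'\neq i$. Because $I$-vertices have out-edges only into $J$, no $a_i$ reaches another $a_{i'}$, so $a_i$ reaches $w_{i'}$ exactly for $i'\neq i$; thus every off-diagonal target is trivially covered and $a_i\in W$ iff $a_i$ reaches $T_i$ iff a triangle runs through $v_i$. Hence $G$ has a triangle iff $W\cap I\neq\emptyset$.

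The main obstacle I anticipate is exactly this diagonal-to-intersection gadget: I must argue carefully that $a_i$ can reach neither its own sink $w_i$ nor an unintended $c_k$ through the graph, which rests on the layered structure forbidding $I\to I$ reachability; the OV reduction needs no such trick, since orthogonality maps directly onto the ``reach all targets'' semantics. With $n_H=O(n)$ vertices, $m=O(n^2)$ edges, $k=n$ target sets and $\sum_{i}|T_i|=O(n^2)$, a combinatorial $O(n^{3-\epsilon})$ algorithm---or an $O((k\cdot n^2)^{1-\epsilon})=O(n^{3-3\epsilon})$ one---for \textsc{AllCoverage} would yield a combinatorial sub-$n^3$ triangle detector, contradicting Conjecture~\ref{icaps:conj:stc} and establishing points~\ref{icaps:thm:allcover:lb:graphs3}--\ref{icaps:thm:allcover:lb:graphs4}.
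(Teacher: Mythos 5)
Your proposal is correct, and its overall architecture (one reduction from OV for points~\ref{icaps:thm:allcover:lb:graphs1}--\ref{icaps:thm:allcover:lb:graphs2}, one from triangle detection for points~\ref{icaps:thm:allcover:lb:graphs3}--\ref{icaps:thm:allcover:lb:graphs4}) is the same as the paper's. Your OV reduction is essentially identical to Reduction~\ref{icaps:red:allcoverage_ov}: drop the start vertex, set $T_\ell=\{y_\ell\}$, and read off from the winning set which vectors of $S_1$ are orthogonal to some vector of $S_2$; the complexity accounting also matches. Where you genuinely diverge is the triangle reduction. The paper (Reduction~\ref{icaps:red:allcoverage_bmm}) keeps the four-layer graph of Reduction~\ref{icaps:red:query_mdp_triangle} and solves the ``diagonal localization'' problem purely through the target sets: it puts the start vertex $v_{1i}$ into every off-diagonal target, i.e.\ $T_j = (V_1\setminus\{v_{1j}\})\cup\{v_{4j}\}$, so that a path starting at $v_{1i}$ satisfies $\text{Reach}(T_j)$ for all $j\neq i$ immediately (a play beginning in a target counts as reaching it), and the only nontrivial requirement is reaching $v_{4i}$, which encodes a triangle through $v_i$. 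Your construction instead uses three layers, encodes the closing edge $(v_k,v_i)$ inside the target set $T_i=\{c_k \mid (v_k,v_i)\in E\}$, and handles off-diagonal targets with an explicit gadget: self-looped sinks $w_{i'}$ and $O(n^2)$ direct edges $a_i\to w_{i'}$ for $i'\neq i$. Both are valid; your layered-structure argument that $a_i$ can reach neither another $a_{i'}$ nor $w_i$ is sound (and the absence of self-loops in the triangle instance rules out degenerate ``triangles''), and the parameters $n_H=O(n)$, $k=\Theta(n)$ give exactly the claimed bounds. What the paper's trick buys is economy---no extra vertices or edges, and the same graph is reused across the Coverage and AllCoverage lower bounds; what your gadget buys is independence from the convention that a path starting inside a target counts as covering it, so your reduction would survive even under a strict-future reachability semantics.
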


\smallskip\noindent\emph{Using the OV-Conjecture.}
In this section we prove the results~\ref{icaps:thm:allcover:lb:graphs1}--\ref{icaps:thm:allcover:lb:graphs2} 
in Theorem~\ref{icaps:thm:allcover:lb:graphs}.
We reduce the OV problem to the AllCoverage problem in graphs. 
By applying Conjecture~\ref{icaps:conj:ovc} we infer the result.

\begin{reduction}\label{icaps:red:allcoverage_ov}
	Given two sets $S_1, S_2$ of $d$-dimensional vectors, we build the graph 
	$G$ as follows.
	\begin{itemize}
		\item 
			The construction of the graph is the same as in Reduction~\ref{icaps:red:query_mdp_ov} except the we do not have
			a vertex $s$.
	\end{itemize}
\end{reduction}

\begin{example}[Reducing AllCoverage to OV]
	Let the instance of OV be given by $S_1 = \{(1,1,0), (1,0,1), (0,1,1)\}, S_2 = \{(1,0,1), (1,1,0),
	(0,1,0) \} $. Notice that the second vector in $S_1$ and the third vector in $S_2$
	are orthogonal. We construct $G$ with Reduction~\ref{icaps:red:allcoverage_ov}. There is no path from $x_2$ to $y_3$. As $T_3 = \{y_3\}$,
	$x_2$ is not in the winning set of $\Coverage{\{T_1,T_2,T_3\}}$
	The reduction is illustrated in Figure~\ref{icaps:fig:allcoverage_ov}.	

\begin{figure}[ht]
	\centering
	\includegraphics{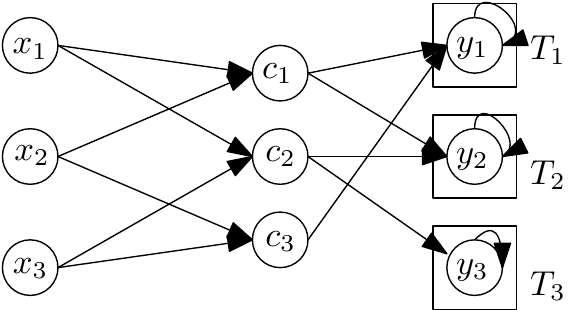}
	\caption{Reduction from OV to AllCoverage}\label{icaps:fig:allcoverage_ov}
\end{figure}
\end{example}

\begin{lemma}
	Let $G = (V, E)$
	be the graph given by Reduction~\ref{icaps:red:allcoverage_ov} with target
	sets $\Targets = \{T_i \mid T_i = \{y_i\}$ for $i = 1\dots N$ \}.
	A vector $x_i \in S_1$ is orthogonal to some vector in $S_2$ iff 
	the vertex $x_i$ is not in the winning set of $\Coverage{\Targets}$.
\end{lemma}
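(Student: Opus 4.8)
The plan is to collapse the membership question for $\Coverage{\Targets}$ into a pure graph-reachability question and then read off orthogonality directly from the layered shape of $G$. First I would recall how the coverage query is won in the graph setting: a vertex $v$ belongs to the winning set of $\Coverage{\Targets}$ exactly when $v$ satisfies every $\reach{T_j}$, and since each $T_j=\{y_j\}$ is a singleton, this is equivalent to saying that $G$ contains a path from $v$ to $y_j$ for every $1 \le j \le N$. Specialising to $v=x_i$, it therefore suffices to characterise, for a fixed pair $(x_i,y_j)$, precisely when $G$ has a path from $x_i$ to $y_j$, and then quantify over $j$.

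The core step is to show that $x_i$ reaches $y_j$ in $G$ if and only if $x_i$ and $y_j$ are \emph{not} orthogonal. This is where the three-layer structure of Reduction~\ref{icaps:red:allcoverage_ov} does the work: by the edge definition inherited from Reduction~\ref{icaps:red:query_mdp_ov}, the only out-edges of an $S_1$-vertex $x_i$ lead to coordinate vertices $c_\ell$ (one for each $\ell$ with $x_i[\ell]=1$), the only out-edges of a coordinate vertex $c_\ell$ lead to $S_2$-vertices $y_j$ (one for each $j$ with $y_j[\ell]=1$), and each $y_j$ carries only a self-loop. Consequently any path from $x_i$ to $y_j$ must have the form $x_i \to c_\ell \to y_j$ for some coordinate $\ell$, and such a two-step path exists iff there is an $\ell$ with $x_i[\ell]=1$ and $y_j[\ell]=1$, i.e. iff $\sum_{\ell=1}^{d} x_i[\ell]\cdot y_j[\ell] \ge 1$. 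This is exactly non-orthogonality of $x_i$ and $y_j$.

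Combining the two steps finishes the argument. If $x_i$ is orthogonal to some $y_j \in S_2$, then by the reachability characterisation $x_i$ cannot reach that $y_j$, so $\reach{T_j}$ fails from $x_i$ and hence $x_i$ is not in the winning set of $\Coverage{\Targets}$; conversely, if $x_i$ is non-orthogonal to every vector of $S_2$, then $x_i$ reaches every $y_j$, every $\reach{T_j}$ holds, and $x_i$ lies in the winning set. Taking the contrapositive of the second direction yields the claimed equivalence. I do not anticipate a real obstacle here: the only point that needs explicit care is verifying that the layers $S_1$, $C$, $S_2$ introduce no spurious paths — in particular that coordinate vertices cannot reach one another and that the $y_j$ self-loops create no detours — which is immediate from the edge definition, so the whole proof is essentially a bookkeeping verification rather than a genuinely hard step.
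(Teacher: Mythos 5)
Your proposal is correct and follows essentially the same route as the paper's proof: both reduce the coverage query to the claim that $x_i$ reaches $y_j$ in $G$ iff $x_i$ and $y_j$ are non-orthogonal, and then quantify over all targets. Your version merely spells out the two-step path structure $x_i \to c_\ell \to y_j$ more explicitly than the paper does, which is a harmless (indeed clarifying) elaboration rather than a different argument.
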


\begin{proof}
	The graph $P$ is constructed in such a way that there is no
	path between vertex $x_i$ and $y_j$ iff the corresponding vectors are
	orthogonal in the OV instance: If $x_i$ is orthogonal to $y_j$, the outgoing
	edges lead to no vertex which has an incoming edge to $y_j$ as either $x_i[k]
	= 0$ or $y_j[k] = 0$. On the other hand, if there is no path from $x_i$ to
	$y_j$ we again have by the construction of the underlying graph that for all
	$1 \leq k \leq d: x_i[k] = 0$ or $y_j[k] =0$. This is the definition of
	orthogonality for $x_i$ and $y_j$.
	Thus, $x_i$ is in the winning set of $\Coverage{\Targets}$ iff $x_i$ is orthogonal to some vector in $S_2$.
\end{proof}

Notice that we solve the given instance of \emph{OV} with our reduction as we compute all vectors in $S_1$ which are orthogonal to some vector in $S_2$.
The Graph $G$ has only $O(N)$ many vertices and Reduction~\ref{icaps:red:query_mdp_ov} can
be performed in 
$O(N \cdot d)$ time (recall that $d = \omega(\log N)$). The number of edges $m$ is $O(N \cdot d)$ 
and the number of target sets $k \in
\theta(N)$. Thus the points~\ref{icaps:thm:allcover:lb:graphs1}--\ref{icaps:thm:allcover:lb:graphs2} 
in Theorem~\ref{icaps:thm:allcover:lb:graphs} follow.

\smallskip\noindent\emph{Using the ST-Conjecture.}
Below we prove~\ref{icaps:thm:allcover:lb:graphs3}--\ref{icaps:thm:allcover:lb:graphs4} in Theorem~\ref{icaps:thm:allcover:lb:graphs}. We reduce the triangle detection problem to the AllCoverage problem in graphs. By applying Conjecture~\ref{icaps:conj:stc} we infer the result.

\begin{reduction}\label{icaps:red:allcoverage_bmm}
	Given an instance of triangle detection, i.e., a graph $G = (V,E)$, we build the
	following graph $G = (V',E')$.
	The vertices and edges are the same as in Reduction~\ref{icaps:red:query_mdp_triangle} except that
			we have player-1 vertices instead of random vertices and there is no start vertex $s$.
\end{reduction}

\begin{example}[Reducing Triangle to AllCoverage]
	Consider $G$ in Figure~\ref{icaps:fig:allcoverage_triangle}.
	Notice that $G$ has the triangle $(v_1 ,v_2,v_3)$ and there is a path from $v_{11}$ to
	$v_{41}$ in the constructed MDP $P$ illustrated by the strong edges corresponding to this triangle.
	The winning set of $\Coverage{T_1,T_2,T_3,T_4}$\\ contains 
	$v_{11}$
	by using the strategy 
	which uses the path from $v_{11}$ to $v_{41}$:
	First we achieve trivially, $\reach{T_2},\reach{T_3},\reach{T_4}$ with this strategy 
	by starting from $v_{11}$. Then $\reach{T_1}$ is achieved by arriving at $v_{41}$.
	\begin{figure}[ht]
		\centering
		\includegraphics{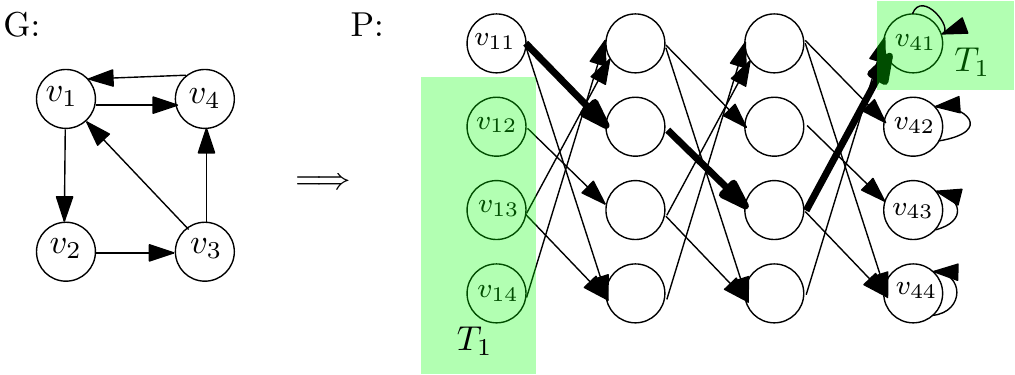}
		\caption{Reduction from Triangle to AllCoverage}\label{icaps:fig:allcoverage_triangle}
	\end{figure}
\end{example}

\begin{lemma}
	Let $G$ be the graph given by Reduction~\ref{icaps:red:allcoverage_bmm}
	with $n$ target set $T_i = V_1 \setminus \{v_{1i}\} \cup
	\{v_{4i}\}$ for $i = 1 \dots n$.  
	A graph $G$ has a triangle with vertex $v_i$ iff the vertex $v_{1i}$ is in the winning set
	of the query $\Coverage{T_1,\dots,T_n}$.
\end{lemma}

\begin{proof}
	Notice that there is a triangle in the graph $G$ iff there is a path from some vertex
	$v_{1i}$ in the first copy of $G$ to the same vertex in the fourth copy of $G$,
	$v_{4i}$. Also, a path $\sigma$ starting in $v_{1i}$ for $1 \leq i \leq n$ is a viable strategy in the 
	Coverage query for all $\reach{T_i}$ $1 \leq i \leq n$ objectives iff it is able to visit $v_{4i}$: 
	By definition, $T_j$ includes $v_{1i}$ for $1 \leq j \leq n$ and $j \neq i$. 
	Thus $\sigma$ achieves all $\reach{T_j}$ for $j \neq i$.
	To achieve $\reach{T_i}$, there must be a path from $v_{1i}$ to $v_{4i}$.
	Thus, there is a triangle with vertex $v_i$ iff
	$v_{1i}$ is in the winning set of the query $\Coverage{\Targets}$.
\end{proof}
Note that we solve the given instance of the \emph{triangle detection} problem if we know all vertices which are in triangles.
Moreover, the size and the construction time of the MDP $P$
are linear in the size of the
original graph $G$ and we have $k = \theta(n)$ target sets.
Thus~\ref{icaps:thm:allcover:lb:graphs3}--\ref{icaps:thm:allcover:lb:graphs4} 
in Theorem~\ref{icaps:thm:allcover:lb:graphs} follow.

\section{Sequential Reachability Problem}
We consider the sequential reachability problem in all models.
In contrast to the quadratic CLB for the coverage problem, 
quite surprisingly there is a subquadratic algorithm for MDPs.
We first present an algorithm for graphs and then build upon that to present the algorithm for MDPs.
For games, we present a quadratic algorithm and a quadratic CLB.\@

\subsection{Algorithms}
The results below present the upper bounds of the fourth row of
Table~\ref{icaps:tab:complexity}.

\subsubsection{Algorithm for Graphs.}
\noindent Given a graph $G=(V,E)$ and the sequential reachability objective
$\Seq{T_1, \dots, T_k}$, we compute the strongly connected components, 
contract each strongly connected component to a single vertex and 
remove multi edges. 
This results in a \emph{directed acyclic graph (DAG)}.
Additionally, the vertex $v'$ which represents an SCC $C$
in the resulting DAG $D$, is in all target sets of its members, i.e.,
$v' \in T_i$ if there exists a vertex $u \in C$ such that $u \in T_i$ for all 
$1 \leq i \leq k$.
Notice that this step does not change the reachability conditions of 
the resulting acyclic graph: 
Every vertex in an SCC can be reached starting from every other vertex in the same SCC.\@ 
Thus, it suffices to give an algorithm for DAGs.
Given a DAG $D = (V,E)$, we maintain 
(a) a set of unprocessed vertices $S$, which is initialized with $V$ and
(b) a queue $Q$ containing the vertices which are not processed but where all successors are processed, initialized with all vertices with no outgoing edges. 
Notice that the queue is initially non-empty because the bottom SCCs of $G$ are now vertices without outgoing edges in $D$.
Additionally, for each vertex $v$, we maintain the values $\mcount_v$, $\ell_v$ and $\best_v$. 
The variable $\mcount_v$ counts the number of vertices in $\Out{v}$ which are not processed yet. 
The label $\ell_v$ is such that vertex $v$ has a winning strategy for the objective
$\Seq{\Targets_{\ell_v}}$ where $\Targets_{\ell_v} = (T_{\ell_v}, \dots, T_k)$. 
In other words, there is a strategy to win from $v$ if we already visited the targets sets
$\Targets_{1}, \dots \Targets_{\ell_v-1}$. 
The variable $\best_v$ is used to store the minimum label of the already processed successors of $v$.
The algorithm proceeds as follows. While the queue $Q$ is not empty, 
we take a vertex $v$ from the queue and call \ProcessVertex{\cdot}.
The function computes the label $\ell_v$ of the vertex $v$ using $\best_v$ and the target sets where $v$ is in.
Then, it removes $v$ from $S$ and updates the variables $\best_w$ and $\mcount_w$ of its predecessors $w$.
In particular, we set $\best_w = \min(\best_w, \ell_v)$ and decrement $\mcount_w$ by one. When the queue is empty,
all vertices are processed and the algorithm terminates. 
We show that the described algorithm for DAGs 
has a linear running time, i.e., $O(m + \sum^n_{i=1} |T_i|)$ and the details are presented in Algorithm~\ref{icaps:alg:seqt_lprop}. 

\begin{algorithm}[b!]
	\KwIn{DAG $D = (V,E)$, targets $\Targets= (T_1,\dots, T_k)$}
    $S \gets V$\;
	$L_v \gets \{ i \mid v \in T_i : i = 1 \dots k $\} \quad  for $v \in V$\;
    $\mcount_v \gets |\Out{v}|$ \quad for $v \in V$\;
	$\best_v \gets \begin{cases} k+1 & \text{if $\Out{v} = \emptyset$}\\ \nill & \text{otherwise}\end{cases}$ \quad for $v \in V$\;\label{icaps:alg:seqt_lprop:bottomvs}
	$\ell_v \gets \nill$ \quad for $v \in V$\;
    $Q \gets \{v \mid \Out{v} = \emptyset\}$\;\smallskip
	\While{$S \neq \emptyset$}{\label{icaps:alg:seqt_lprop:while}
		$v = Q.\mathsf{pop}()$\;
		\textsc{\ProcessVertex{v}}\;\label{icaps:alg:seqt_callprocessv}
	}
	\Return{$\{v \in V \mid \ell_v = 1\}$}\;
	\BlankLine{}
	\Function{\ProcessVertex{$\text{Vertex } v$}}{
		$\ell_v \gets \best_v$\label{icaps:alg:seqt_lprop:l_v}\;
		\While{$\ell_v-1 \in L_v$}{\label{icaps:alg:seqt_lprop:while2}
			$\ell_v \gets \ell_v-1$\;\label{icaps:alg:seqt_lprop:l_v_end}
		}
		$S \gets S \setminus \{v \}$\;\label{icaps:alg:seqt_lprop:updateS}
		\For{$w \in \In{v}$}{\label{icaps:alg:seqt_lprop:for}
			$\best_w \gets
			\min(\best_w,\ell_v)$\;\label{icaps:alg:seqt_lprop:updatebest}
			$\mcount_w \gets
			\mcount_w-1$\;\label{icaps:alg:seqt_lprop:decrcounter}
			\If{$\mcount_w = 0$}{
				$Q.\mathsf{push}(w)$\;\label{icaps:alg:seqt_lprop:pushr}
			}
		}
	}
	\caption{Sequential Reachability in Graphs}\label{icaps:alg:seqt_lprop}
\end{algorithm}

\begin{proposition}[Correctness]~\label{icaps:lem:lprop_correct} Given a DAG $D = (V,E)$ and
	a sequential reachability objective $\Seq{\Targets}$ with target sets $\Targets = \{T_1, \dots,
	T_k\}$, Algorithm~\ref{icaps:alg:seqt_lprop} returns the set of all start
	vertices with a path for the objective $\Seq{\Targets}$.
\end{proposition}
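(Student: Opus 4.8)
The plan is to show that the integer $\ell_v$ the algorithm assigns to each vertex coincides with a natural ``suffix value''. For $v \in V$ and $1 \le \ell \le k+1$, say that $v$ \emph{realises} the suffix $\ell$ if there is a finite walk in $D$ starting at $v$ that visits a vertex of $T_\ell$, then (weakly later) a vertex of $T_{\ell+1}$, and so on up to $T_k$; the value $\ell = k+1$ imposes no requirement and is realised by every vertex. Let $\ell^*_v$ be the smallest $\ell$ that $v$ realises. By definition $v$ has a path for $\Seq{\Targets}$ iff $\ell^*_v = 1$, so it suffices to prove $\ell_v = \ell^*_v$ for every $v$; then the returned set $\{v \mid \ell_v = 1\}$ is exactly correct.

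First I would settle the bookkeeping. Since $D$ is acyclic, the counter $\mcount_v$ (the number of not-yet-processed successors of $v$) eventually reaches $0$ for every vertex, so each vertex is pushed onto $Q$ and processed by \ProcessVertex{v} exactly once, and at that moment every $w \in \Out{v}$ has already been processed. Consequently, because each processed vertex $v$ executes $\best_w \gets \min(\best_w,\ell_v)$ for all $w \in \In{v}$, the value of $\best_v$ when \ProcessVertex{v} is invoked equals $\min_{w \in \Out{v}} \ell_w$, and equals $k+1$ for a sink by the initialisation at Line~\ref{icaps:alg:seqt_lprop:bottomvs}.

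The heart of the argument is a recursive characterisation of $\ell^*_v$. Writing $b_v = \min_{w \in \Out{v}} \ell^*_w$ (with $b_v = k+1$ for a sink), I claim $\ell^*_v$ equals the smallest $\ell \le b_v$ with $v \in T_\ell \cap T_{\ell+1} \cap \cdots \cap T_{b_v-1}$, which is precisely the quantity produced by Line~\ref{icaps:alg:seqt_lprop:l_v} followed by the decrement loop at Line~\ref{icaps:alg:seqt_lprop:while2}. I would prove this by induction on the processing (reverse-topological) order; combined with the $\best$-invariant above this yields $\ell_v = \ell^*_v$. For the achievability direction ($\ell^*_v$ at most the computed value), let $w^*$ attain $b_v$ and prepend the edge $v \to w^*$ to a walk from $w^*$ realising $b_v$: the targets $T_\ell,\dots,T_{b_v-1}$ are all met at position $0$ (the visit-time inequalities $j_i \le j_{i+1}$ are non-strict, so a block of consecutive requirements may be discharged at a single vertex), and the tail handles $T_{b_v},\dots,T_k$. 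For the minimality direction, take any walk from $v$ realising some $\ell$, peel off the maximal block $T_\ell,\dots,T_r$ satisfied at $v$ itself; the first successor $w$ then realises $r+1$, so by the induction hypothesis $\best_v \le \ell_w = \ell^*_w \le r+1$, and since $v \in T_j$ for all $\ell \le j \le r \ge b_v-1$ the decrement loop drives $\ell_v$ down to at most $\ell$. Hence the computed value is both achievable and a lower bound for $\ell^*_v$, giving equality.

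I expect the minimality direction to be the main obstacle: one must argue that the decrement loop never stops ``too early'', which hinges on the non-strict ordering of the visit times (a consecutive block of targets collected at one vertex) and on the exact interplay between $\best_v$ and the split index $r$; the achievability direction and the invariants are routine. Finally, for the running-time bound accompanying the statement, I would note that each vertex is processed once with $O(|\In{v}|)$ work plus decrement steps, and that the total number of decrements is $\sum_v |\{i : v \in T_i\}| = \sum_{i} |T_i|$ since each decrement consumes a distinct membership $i \in L_v$; together with the $O(m)$ edge work this gives the claimed $O(m + \sum_{i=1}^k |T_i|)$.
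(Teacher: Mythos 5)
Your proposal is correct and follows essentially the same route as the paper: an induction over the (reverse-topological) processing order, supported by the same bookkeeping invariants about $\mcount_v$, $Q$, and $\best_v$, establishing that the computed label is both achievable (prepend the edge to the best successor's walk, discharging a block of consecutive targets at $v$ itself) and minimal (no smaller suffix index is realizable, argued via the successors' already-certified labels). The paper packages the two directions as the invariants ``there exists a path in $\Seq{\Targets_{\ell_v}}$'' and ``there is no path in $\Seq{\Targets_{\ell_v-1}}$,'' which is equivalent to your formulation $\ell_v=\ell^*_v$ since realizability of a suffix index is upward-closed.
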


\begin{observation}\label{icaps:obs:graph_seqt}
	The input graph has one or more vertices $v$ with $\Out{v} = \emptyset$ and thus $Q$ is
	non-empty after the initialization. 
\end{observation}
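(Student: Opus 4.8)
<br>

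The plan is to establish the elementary graph-theoretic fact that every finite, nonempty directed acyclic graph contains at least one \emph{sink}, i.e., a vertex with no outgoing edges. Recall from the setup of Algorithm~\ref{icaps:alg:seqt_lprop} that the input $D = (V,E)$ is obtained from the original graph $G$ by contracting each strongly connected component to a single vertex and discarding multi-edges. Since contracting all SCCs of a graph yields its condensation, which is acyclic by construction, $D$ is indeed a DAG. Moreover, because $G$ is a nonempty model in which (by our standing assumptions) every vertex has an outgoing edge, the vertex set $V$ of $D$ is nonempty as well.

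First I would argue by contradiction. Suppose every vertex of $D$ had at least one outgoing edge, i.e., $\Out{v} \neq \emptyset$ for all $v \in V$. Then, starting from an arbitrary vertex $v_0$, I can always extend the current walk by following some outgoing edge, producing an infinite walk $v_0, v_1, v_2, \dots$ with $(v_i, v_{i+1}) \in E$ for all $i \geq 0$. Since $V$ is finite, the pigeonhole principle guarantees indices $i < j$ with $v_i = v_j$, so the segment $v_i, v_{i+1}, \dots, v_j = v_i$ forms a directed cycle in $D$. This contradicts the acyclicity of $D$.

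Hence there exists at least one vertex $v$ with $\Out{v} = \emptyset$. Since the queue is initialized as $Q \gets \{ v \mid \Out{v} = \emptyset \}$, it immediately follows that $Q$ is nonempty right after the initialization, which is exactly the claim. I do not expect any genuine obstacle in this argument; the only point that warrants being stated explicitly is that the condensation of a graph is acyclic, as this is precisely what licenses applying the sink-existence argument to $D$ in the first place.
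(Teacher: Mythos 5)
Your proof is correct and follows the same route as the paper, which simply notes that a vertex with $\Out{v} = \emptyset$ must exist because $D$ is assumed to be a DAG. You merely make explicit the standard pigeonhole argument that every finite nonempty DAG has a sink (and that the condensation is acyclic), details the paper leaves implicit.
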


\begin{proof}
	Note that there is always a vertex $v \in V$ where $\Out{v} = \emptyset$
	because we assumed that $D$ is a DAG.\@ 
\end{proof}

The invariants below state that (a) the variables ($\best_v,\mcount_v,\ell_v$) have the intended meaning, (b)
$Q$ contains all the unprocessed vertices whose successors are already
processed and (c) that the queue contains vertices as long as $S$ is not empty.

\begin{lemma}\label{icaps:lem:seqt_graph_inv1}
	The following statements are invariants of the while loop at
	Line~\ref{icaps:alg:seqt_lprop:while}.

	\begin{enumerate}
		\item $\mcount_v = |\Out{v} \cap S|$
		\item $v \in Q$ iff $v \in S$ and all $\Out{v} \cap S = \emptyset$.
		\item If $S$ is not empty then the queue $Q$ is not empty.
		\item $\best_v = k+1$ for all $v \in V$ with $\Out{v} = \emptyset$.
		\item If $v \in Q$ then $\best_v \neq \nill$.
		\item If $v \in V \setminus S$ then $\ell_v \neq \nill$.
		\item $\best_v = \min_{w \in \Out{v}\setminus S} \ell_w$, for all $v \in V$
			with $\Out{v}\setminus S \neq \emptyset$. 
	\end{enumerate}
\end{lemma}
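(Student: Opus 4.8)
The plan is to prove all seven statements simultaneously as invariants that hold at the head of the while loop at Line~\ref{icaps:alg:seqt_lprop:while}, by induction on the number of completed iterations. Throughout I adopt the convention that the sentinel $\nill$ behaves like $+\infty$, so that $\min(\nill, x) = x$; this matches the update $\best_w \gets \min(\best_w, \ell_v)$ in \textsc{ProcessVertex}. Two structural facts underpin the whole argument, and I would record them first. Since $D$ is a DAG it has no self-loops, so a vertex is never its own predecessor; and $\mcount_v$ is only ever decremented, with a vertex pushed to $Q$ exactly once, either at initialization (iff $\Out{v}=\emptyset$) or at the unique moment $\mcount_v$ drops to $0$. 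Hence every vertex is popped at most once, and — using statement (2) — a popped vertex still lies in $S$ when \textsc{ProcessVertex} is invoked on it. For the base case, after initialization $S = V$, so $\Out{v}\cap S = \Out{v}$ and $\Out{v}\setminus S = \emptyset$ for every $v$: statement (1) holds as $\mcount_v = |\Out{v}|$; (2) and (4) hold by the initialization of $Q$ and of $\best$ at Line~\ref{icaps:alg:seqt_lprop:bottomvs}; (5) holds since the only vertices in $Q$ are sinks with $\best_v=k+1$; and (6), (7) are vacuous because $V\setminus S=\emptyset$. Statement (3) follows from Observation~\ref{icaps:obs:graph_seqt}.

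For the inductive step I would let $v_0 = Q.\mathsf{pop}()$ and use statement (2) of the hypothesis to conclude $v_0\in S$ with $\Out{v_0}\cap S=\emptyset$, i.e.\ all out-neighbours of $v_0$ are already processed, and statement (5) to conclude $\best_{v_0}\neq\nill$, so that $\ell_{v_0}\gets\best_{v_0}$ followed by the decrement loop produces a genuine number (giving (6) for $v_0$). The only state changes are that $v_0$ leaves $S$ and that, for each predecessor $w\in\In{v_0}$, the values $\mcount_w,\best_w$ are updated and $w$ is possibly pushed. For (1), removing $v_0\in S$ decreases $|\Out{w}\cap S|$ by exactly one for each $w$ with $v_0\in\Out{w}$, i.e.\ precisely the $w\in\In{v_0}$ whose counters are decremented, while $\Out{u}\cap S$ is unchanged for all other $u$. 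For (2), $v_0$ leaves both $Q$ and $S$ and is not re-pushed (no self-loop); a predecessor $w$ remains in $S$, since it could not have been processed before its successor $v_0$ (again by (2), as $v_0\in\Out{w}\cap S$ at every earlier loop head), and $w$ is pushed exactly when its refreshed counter reaches $0$, which by the re-established (1) means $\Out{w}\cap S=\emptyset$. Statement (4) is preserved because a sink is never a predecessor, so its $\best$ is never touched, and (5) holds for any newly pushed $w$ because $\best_w$ was just set to $\min(\best_w,\ell_{v_0})$ with $\ell_{v_0}$ numeric.

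The crux is statement (7), which I would strengthen to also assert that $\best_v=\nill$ whenever $\Out{v}\neq\emptyset$ and $\Out{v}\setminus S=\emptyset$. For a predecessor $w$, deleting $v_0$ adds $v_0$ to $\Out{w}\setminus S$, and the update sets $\best_w \gets \min(\best_w, \ell_{v_0})$. If $w$ already had a processed successor, the inductive hypothesis gives $\best_w = \min_{u\in\Out{w}\setminus S}\ell_u$ over the old set, so the min with $\ell_{v_0}$ extends it to the new set; if $w$ had none, then $\best_w=\nill$ and the convention yields $\best_w\gets\ell_{v_0}=\min_{u\in\{v_0\}}\ell_u$. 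Either way the invariant and the $\nill$-characterisation are restored. Finally, statement (3) is re-established at the next loop head directly from (2): when $S\neq\emptyset$ the induced DAG $D[S]$ has a sink $v\in S$ with $\Out{v}\cap S=\emptyset$, which by (2) lies in $Q$.

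I expect the main obstacle to be the bookkeeping around statement (7) and the $\nill$ sentinel: one must track exactly when $\best_v$ is still $\nill$ as opposed to a genuine minimum over processed successors, and simultaneously argue the ordering property that a vertex is processed only after all of its out-neighbours. That ordering property couples statements (1) and (2) with the acyclicity of $D$, and it is the place where the no-self-loop and the ``pushed exactly once'' facts are essential. Once the $\nill$-characterisation is folded into the inductive hypothesis, the remaining cases reduce to the routine verifications sketched above.
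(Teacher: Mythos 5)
Your proof is correct and takes essentially the same route as the paper's: induction over iterations of the while loop with the same inter-dependencies among the seven statements (using (2) to place the popped vertex in $S$, (5)/(6) to ensure $\ell_{v}$ is numeric, acyclicity of the DAG for (3), and the $\min$-update argument for (7)). The only difference is presentational: you run one simultaneous induction and make explicit the convention $\min(\nill,x)=x$ together with the strengthening that $\best_v=\nill$ exactly when $v$ has no processed successor, both of which the paper's statement-by-statement proof leaves implicit.
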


\begin{proof}
	\begin{enumerate}
		\item The counters $\mcount_v$ are initialized as  $|Out(v)|$ and
			$S$ is initialized as $V$. Thus the claim holds when first entering the while loop.

			Assume the claim holds at the beginning of the iteration where vertex $u$ is processed. 
			The set $S$ is only changed in Line~\ref{icaps:alg:seqt_lprop:updateS}.
			There $u$ is removed from the set. 
			The counters are only changed in Line~\ref{icaps:alg:seqt_lprop:decrcounter}:
			All counters of vertices $w$ with $u \in \Out{w}$ are decreased by one.
			Consequently $\mcount_v= |\Out{v} \cap S|$ holds for all $v \in V$ 
			also after this iteration of the loop and the claim follows.

		\item In the initial phase $S$ is set to $V$ and $Q$ is set to $\{v \in V \mid \Out{v}=\emptyset\}$. 
			Thus the claim holds when first entering the while loop.

			Assume the claim holds at the beginning of the iteration where vertex $v$ is processed. 
			The set $S$ is only changed in Line~\ref{icaps:alg:seqt_lprop:updateS}
			where $v$ is removed.

			First consider a vertex $w \in Q\setminus \{v\}$. 
			As $w$ is not removed from the set $S$ and no vertex is added to $S$
			the claim is still true for $w$.
			Now consider a vertex $w$ that might be added during the iteration of the loop.
			This can only happen in Line~\ref{icaps:alg:seqt_lprop:pushr} and the if conditions 
			ensure that $w \in S$ and $\Out{v} \cap S = \emptyset$ (by the previous invariant) and thus 
			the claim also holds for the newly added vertices.

		\item Due to Observation~\ref{icaps:obs:graph_seqt} the claim holds when first
			entering the while loop.

			Assume the claim holds at the beginning of the iteration, where vertex $v$
			is processed. The vertex $v$ is removed from $S$ in
			Line~\ref{icaps:alg:seqt_lprop:updateS} and if the set $S$ is empty now, the claim
			follows trivially. On the other hand, if $S$ is non-empty and $Q$ is also
			non-empty the claim follows again. In the third case $S$ is non-empty and $Q$
			is empty. Assume for contradiction that no vertex is added at
			line~\ref{icaps:alg:seqt_lprop:pushr}. By invariant (2), every vertex $v \in S$ has a
			successor in $S$ as otherwise, $v$ would be in $Q$. That implies that there
			exists a cycle which is a contradiction with $D$ being a DAG.\@

		\item For $v \in V$ with $\Out{v}= \emptyset$ the variables $best_v$ are initialized with $k+1$ (Line~\ref{icaps:alg:seqt_lprop:bottomvs}) and
			$best_v$ is only changed in Line~\ref{icaps:alg:seqt_lprop:updatebest}
			when a successor of the vertex is processed.
			As $v$ has no successor, $best_v$ is not changed during the algorithm.  

		\item 
			If $v \in Q$ initially, it must be due to the initialization and 
			we have $\best_v = k+1$. 
			The claim holds when first entering the while loop.
			Assume the claim holds at the beginning of the iteration where $v$
			is processed. The only time we add a vertex $w$ to $Q$ is at Line~\ref{icaps:alg:seqt_lprop:pushr}. Notice that we set $\best_w$ before at Line~\ref{icaps:alg:seqt_lprop:updatebest}.
		\item 
			Initially, every vertex is in $S$, thus the claim holds before the
			first iteration of the while loop.
			Assume the claim holds at the beginning of the iteration where
			$v$ is processed. The only time we remove a vertex from $S$ is at
			Line~\ref{icaps:alg:seqt_lprop:updateS}, i.e., 
			in $\ProcessVertex{v}$. 
			Notice that we set $\ell_v$ in Line~\ref{icaps:alg:seqt_lprop:l_v} 
			to $\best_v$ which cannot be $\nill$ 
			due to Lemma~\ref{icaps:lem:seqt_graph_inv1}~(5).

		\item 
            Initially, $V$ is $S$, and 
			for all $v \in V$ we set $\ell_v, \best_v$ to $\nill$. Notice 
			that $best_v$ with $\Out{v} \neq \emptyset$ are not changed at Line~\ref{icaps:alg:seqt_lprop:bottomvs}. 
			Thus the claim holds when the algorithm enters the loop.

			Now consider the iteration of vertex $v$ and assume the claim is true at the beginning.
			The set $S$ is only changed in Line~\ref{icaps:alg:seqt_lprop:updateS} where $v$ is removed.
			Let $S_{old}$ be the set at the beginning of the iteration and 
			$S_{new} = S_{old} \setminus \{v\}$ the updated set.
			Due to Lemma~\ref{icaps:lem:seqt_graph_inv1}~(6) $\ell_v \not= \nill$.
			For a vertex $w \in \In{v}$, the value $best_w$ is updated to $\min(\best_w,\ell_v)$
			(Line~\ref{icaps:alg:seqt_lprop:updatebest})
			which by assumption is equal to $\min_{x \in (\Out{w} \setminus S_{old}) \cup \{v\}} \ell_x = \min_{x \in (\Out{w} \setminus S_{new})} \ell_x$, i.e., the equation holds.
			For vertices $w \notin \In{v}$ 
			both $best_w$ as well as the right hand side of the equation are unchanged.
			Hence, the claim holds also after the iteration.
	\end{enumerate}
\end{proof}

From the following invariants, we obtain the correctness of our algorithm.
\begin{lemma}
	The following statements are invariants of the while loop at
	Line~\ref{icaps:alg:seqt_lprop:while} for all $v \in V \setminus S$:

	\begin{enumerate}
		\item There exists a path $p_v \in \Seq{\Targets_{\ell_v}}$.
		\item There exists no path $p_v \in \Seq{\Targets_{\ell_v-1}}$.
	\end{enumerate}
	where $\Targets_{\ell_v} = \{T_{\ell_v},\dots, T_k\}$ or $\ell_v > k$.

\end{lemma}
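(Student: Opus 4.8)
The plan is to prove both invariants simultaneously by induction on the iterations of the while loop at Line~\ref{icaps:alg:seqt_lprop:while}, exploiting that vertices are processed in reverse topological order. The base case is immediate: before the first iteration $V \setminus S = \emptyset$, so both statements hold vacuously. For the inductive step, suppose the invariants hold at the start of the iteration in which vertex $v$ is processed. First I would record that by Lemma~\ref{icaps:lem:seqt_graph_inv1}(2), popping $v$ from $Q$ means $\Out{v} \cap S = \emptyset$, i.e.\ every successor of $v$ is already in $V \setminus S$; since $D$ is a DAG and we process in reverse topological order, in fact every vertex reachable from $v$ is already processed, so any path leaving $v$ stays within $V \setminus S$ and the induction hypothesis applies to it. The invariants for all previously processed vertices are unaffected, as their labels $\ell_w$ never change and the path-existence claims refer to the fixed graph $D$; hence it suffices to establish both statements for $v$. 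I would also extract from the \ProcessVertex{v} code the three facts used throughout: $\ell_v \le \best_v$, the set $\{\ell_v, \ell_v{+}1, \dots, \best_v{-}1\} \subseteq L_v$ (one containment per decrement), and the termination condition $\ell_v - 1 \notin L_v$. Finally, $\best_v = \min_{w \in \Out{v}} \ell_w$ by Lemma~\ref{icaps:lem:seqt_graph_inv1}(7) when $\Out{v} \neq \emptyset$, and $\best_v = k+1$ by Lemma~\ref{icaps:lem:seqt_graph_inv1}(4) when $v$ is a sink.

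For invariant~(1) I would argue as follows. If $v$ is a sink, the only path from $v$ is $v$ itself, and the decrements from $\best_v = k+1$ guarantee $v \in T_{\ell_v} \cap \dots \cap T_k$, so the single-vertex path witnesses $\Seq{\Targets_{\ell_v}}$ (and if $\ell_v > k$ the objective is empty and the claim is trivial). If $\Out{v} \neq \emptyset$, pick a successor $w^*$ attaining the minimum, so $\ell_{w^*} = \best_v$; by the induction hypothesis there is a path $p_{w^*}$ from $w^*$ in $\Seq{\Targets_{\best_v}}$. Prepending $v$ yields a path from $v$; since $v \in T_{\ell_v} \cap \dots \cap T_{\best_v - 1}$, I can assign the position of $v$ to each of the targets $T_{\ell_v}, \dots, T_{\best_v - 1}$ and the (shifted) positions along $p_{w^*}$ to $T_{\best_v}, \dots, T_k$. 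Because the definition of $\Seq{\cdot}$ only requires $j_{\ell_v} \le j_{\ell_v+1} \le \cdots$, reusing $v$'s position for consecutive targets is legitimate, so the combined path lies in $\Seq{\Targets_{\ell_v}}$.

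For invariant~(2), the harder direction, I would argue by contradiction: assume a path $p$ from $v$ satisfies $\Seq{\Targets_{\ell_v - 1}}$. Loop termination gives $\ell_v - 1 \notin L_v$, hence $v \notin T_{\ell_v - 1}$ (for $\ell_v \ge 2$), so $p$ cannot hit $T_{\ell_v - 1}$ at its first vertex; the visit therefore occurs at or after the first successor $w \in \Out{v}$ on $p$. Consequently the suffix of $p$ starting at $w$ already witnesses $\Seq{\Targets_{\ell_v - 1}}$ from $w$, and by the induction hypothesis (minimality of $\ell_w$, which combines invariants (1) and (2) for $w$) this forces $\ell_w \le \ell_v - 1$. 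But $\best_v \le \ell_w$ since $w \in \Out{v}$, and $\ell_v \le \best_v$, yielding the impossible chain $\ell_v \le \best_v \le \ell_w \le \ell_v - 1$. The sink case is immediate, as the single-vertex path cannot meet $T_{\ell_v - 1}$ when $v \notin T_{\ell_v - 1}$. I expect the main obstacle to be precisely this non-existence argument: correctly routing the counterexample path through its first successor so that the induction hypothesis applies, and then closing the inequality loop; I would also treat the boundaries $\ell_v = 1$ (where $\ell_v - 1 = 0 \notin L_v$ always, so $\Targets_0$ is infeasible) and $\ell_v > k$ (empty objective) explicitly to confirm the stated convention is respected.
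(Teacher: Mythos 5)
Your proof is correct and takes essentially the same route as the paper's: induction over the loop iterations, using Lemma~\ref{icaps:lem:seqt_graph_inv1} to pin down $\best_v$ and the contents of $L_v$, prepending $v$ to the minimum-successor's path for invariant~(1), and routing any hypothetical witness of $\Seq{\Targets_{\ell_v-1}}$ through $v$'s first successor for invariant~(2), where your contradiction chain $\ell_v \le \best_v \le \ell_w \le \ell_v - 1$ is just the contrapositive of the paper's direct inequality $\ell_v - 1 < \ell_v \le \ell_w$. The only difference is presentational: you explicitly treat the sink case ($\best_v = k+1$) and the boundary values of $\ell_v$, which the paper leaves implicit.
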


\begin{proof}
	As $S$ is initialized with the set of vertices $V$ the two statements
	trivially hold after the initialization.

	\smallskip\noindent Now consider the iteration where vertex $v$ is processed and assume the invariants hold at the beginning of the iteration.
	Let $be(v)= \min_{w \in \Out{v}} \ell_w $.
	By Lemma~\ref{icaps:lem:seqt_graph_inv1}~(2) we have  $\Out{v} \cap S = \emptyset$
	and by Lemma~\ref{icaps:lem:seqt_graph_inv1}~(6) also $\ell_w \not= \nill$ for all $w \in \Out{v}$.
	Thus by Lemma~\ref{icaps:lem:seqt_graph_inv1}~(7) we have $be(v)= best_v$ 
	and $\ell_v$ can be computed.
	The while loop in Line~\ref{icaps:alg:seqt_lprop:while2} decrements $\ell_v$ which is initialized to 
	$\best_{v}-1$ as long has $\best_v-1 \in L_v$.
 	$L_v$ contains $\ell_v, \dots, \best_v-1$ but does not contain $\ell_{v}-1$.

	\begin{enumerate}
		\item We next show that there is a path $p_v$ in
			$\Seq{\Targets_{\ell_v}}$: Let $w = be(v)$. 
			A path for vertex $w$ where $p_w \in \Seq{\Targets_{\ell_w}}$, exists by induction hypothesis. 
			The targets $\{T_{\ell_v}, \dots, T_{\ell_{w-1}} \}$ are visited by
			starting from $v$. The path is obtained as follows: $p_v = v, p_w$,
			which proves the claim.

		\item We next show that there is no path $p_v$ in $\Seq{\Targets_{\ell_v -1}}$.
			The current vertex $v$ is not in the set $T_{\ell_{v}-1}$ and 
			no successor $w$ has a path $p_w$ with $p_w \in \Seq{\Targets_{\ell_v
			-1}}$ because $\ell_{v}-1 <\ell_{v} \le \ell_{w}$ (Lines~\ref{icaps:alg:seqt_lprop:l_v}--\ref{icaps:alg:seqt_lprop:l_v_end}).
			Thus there is also no path $p_v \in \Seq{\Targets_{\ell_v -1}}$ which
			concludes the proof.
	\end{enumerate}
\end{proof}

\begin{proposition}~\label{icaps:lem:lprop_rtime}
	Algorithm~\ref{icaps:alg:seqt_lprop} has running time  $O( m + \sum^k_{i=1} |T_i|)$.
\end{proposition}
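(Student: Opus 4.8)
The plan is to give an amortized running-time analysis that charges all of the work to three separate budgets: the edges (yielding $O(m)$), the target-set memberships $\sum_{i=1}^k |T_i|$, and the vertices (yielding $O(n)$, which I absorb into $O(m)$ at the very end). I would split the accounting into the cost of the initialization and the cost of executing the while loop at Line~\ref{icaps:alg:seqt_lprop:while}, and treat the two inner loops of $\ProcessVertex{\cdot}$ separately.

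For the initialization, the key observation is that the sets $L_v = \{i \mid v \in T_i\}$ can be built for all $v$ simultaneously in time $O(\sum_{i=1}^k |T_i|)$ by a single pass over the target sets, inserting $i$ into $L_v$ for each $v \in T_i$. I would store each $L_v$ in a hash table so that the later membership queries ``$\ell_v - 1 \in L_v$'' cost $O(1)$ expected time. Initializing the counters via $\mcount_v = |\Out{v}|$ costs $\sum_v |\Out{v}| = O(m)$, and initializing $\best_v$, $\ell_v$ and the queue $Q$ costs $O(n)$.

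For the main loop I would first record that every vertex is processed exactly once: by Lemma~\ref{icaps:lem:seqt_graph_inv1} a vertex is pushed into $Q$ only while it still lies in $S$ and all its successors have left $S$, and $\ProcessVertex{\cdot}$ removes it from $S$ immediately, so there are at most $n$ pops in total. Within a single call $\ProcessVertex{v}$ the for-loop at Line~\ref{icaps:alg:seqt_lprop:for} iterates over $\In{v}$ and performs $O(1)$ work per in-edge (a $\min$, a decrement, and possibly a push), so summed over all calls this contributes $\sum_v |\In{v}| = O(m)$. The step I expect to be the main obstacle is bounding the inner while loop at Line~\ref{icaps:alg:seqt_lprop:while2}, which repeatedly decrements $\ell_v$ while $\ell_v - 1 \in L_v$. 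The argument I would use is that the number of decrements in one call $\ProcessVertex{v}$ is at most $|L_v|$, because the decremented values $\ell_v - 1$ are strictly decreasing and each of them witnesses a \emph{distinct} index belonging to $L_v$; adding the single failing membership test that ends the loop, the loop costs $O(|L_v| + 1)$. Summing over all vertices and using $\sum_v |L_v| = \sum_v |\{i : v \in T_i\}| = \sum_{i=1}^k |T_i|$, the total cost of all inner while loops is $O\bigl(n + \sum_{i=1}^k |T_i|\bigr)$.

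Combining the three contributions, the algorithm runs in time $O\bigl(m + n + \sum_{i=1}^k |T_i|\bigr)$. Since we work with models in which every vertex of the input has an outgoing edge, we may assume $n = O(m)$, so the additive $n$ term is absorbed and the bound simplifies to $O\bigl(m + \sum_{i=1}^k |T_i|\bigr)$, as claimed.
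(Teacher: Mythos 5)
Your proof is correct and follows essentially the same approach as the paper's: charge the initialization to $O(m + \sum_{i=1}^k |T_i|)$, observe that each vertex is processed exactly once, and charge each call \ProcessVertex{$v$} to the in-edges of $v$ and to $|L_v|$, summing to $O(m + \sum_{i=1}^k |T_i|)$. Your write-up is somewhat more careful than the paper's (the hash-table remark, the distinct-index argument bounding the inner while loop by $|L_v|$, and the explicit absorption of the $O(n)$ term into $O(m)$), but the accounting scheme is the same.
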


\begin{proof}
    We first argue that the initialization takes $O(m + \sum^k_{i=1} |T_i|)$ time. 
    Initializing the sets $L_v$ can be done by first initializing the 
	set $L_v$ as $\emptyset$ (in $O(n)$) and then iterate over all set $T_i$ and 
	for each $v \in T_i$ add $i$ to $L_v$ (in $O(\sum^k_{i=1} |T_i|)$).
	The other variables can be initialized by iterating over all vertices and for each vertex 
	consider all outgoing edges. That is in $O(n+m)=O(m)$ time.
	Now consider the main part of the algorithm.
	In the while loop we process each vertex $v \in V$ once (recall that $D$ is a DAG)
	and call the function \ProcessVertex{v} at Line~\ref{icaps:alg:seqt_callprocessv}
	In the function call \ProcessVertex{v}, 
	we iterate over the set $L_v$ (Lines~\ref{icaps:alg:seqt_lprop:while2}--\ref{icaps:alg:seqt_lprop:l_v_end})
	and all incoming edges of $v$ (Lines~\ref{icaps:alg:seqt_lprop:for}--\ref{icaps:alg:seqt_lprop:pushr}).
	If we sum over all the vertices we obtain a running time of 
	$O(m + \sum_{v \in V} |L_v|) = O(m + \sum^n_{i=1} |T_i|)$.
\end{proof}

\begin{theorem}\label{icaps:thm:seq:graphs:upper}
	Given a graph $G = (V,E)$, a sequential reachability objective  and a vertex $s \in V$
	we can decide whether $s$ is winning for $\Seq{\Targets}$ in $O(m + \sum_{i=1}^k |T_i|)$ time.
\end{theorem}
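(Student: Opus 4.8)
The plan is to reduce the sequential reachability problem on the general graph $G$ to the same problem on a directed acyclic graph (DAG) and then invoke Algorithm~\ref{icaps:alg:seqt_lprop}, whose correctness and running time on DAGs are already guaranteed by Proposition~\ref{icaps:lem:lprop_correct} and Proposition~\ref{icaps:lem:lprop_rtime}. Concretely, I would first compute the SCC decomposition of $G$ and build its condensation $D$, contracting every SCC into a single node and removing multi-edges; each contracted node $v'$ representing an SCC $C$ is placed in target set $T_i$ whenever some $u \in C$ satisfies $u \in T_i$. This preprocessing can be done in $O(m + \sum_{i=1}^k |T_i|)$ time: Tarjan's algorithm computes the SCCs in $O(m)$ time~\cite{Tarjan72}, and propagating target membership costs $O(\sum_{i=1}^k |T_i|)$ by scanning each target set once.

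The heart of the argument is to show that the contraction preserves sequential reachability, i.e., that $s$ is winning for $\Seq{\Targets}$ in $G$ if and only if the node representing $s$ is winning for $\Seq{\Targets}$ in $D$. For the forward direction, any path in $G$ realizing $\Seq{\Targets}$ projects onto a (not necessarily simple) walk in $D$ that visits the target sets in the same order, and short-cutting cycles yields a valid witness in $D$. The more delicate converse requires expanding a witnessing walk in $D$ back into $G$: when the walk enters the node of an SCC $C$ that is responsible for several of the targets $T_i, T_{i+1}, \dots, T_j$, I would use the fact that $G[C]$ is strongly connected to construct, inside $C$, a finite walk that visits representatives of $T_i, \dots, T_j$ in exactly the required order before leaving $C$ toward the next node of the condensation. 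Splicing these intra-SCC walks together yields a path in $G$ that witnesses $\Seq{\Targets}$ from $s$.

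Finally, I would run Algorithm~\ref{icaps:alg:seqt_lprop} on $D$ with the propagated target sets; by Proposition~\ref{icaps:lem:lprop_correct} it returns exactly the set of nodes $v$ with $\ell_v = 1$, i.e., the nodes that win $\Seq{\Targets}$, so it suffices to test whether the node representing $s$ lies in this set, which is an $O(1)$ lookup. Summing the costs, the SCC decomposition and contraction take $O(m + \sum_{i=1}^k |T_i|)$ time, and by Proposition~\ref{icaps:lem:lprop_rtime} the algorithm on $D$ also runs in $O(m + \sum_{i=1}^k |T_i|)$ time, giving the claimed overall bound. The main obstacle I anticipate is the careful bookkeeping in the converse direction of the contraction argument: I must guarantee that the targets assigned to a single contracted node can always be realized \emph{in the globally correct order} within its SCC, and that this ordering is consistent with the order imposed by the remaining nodes of the condensation. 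Strong connectivity of each SCC is precisely what makes this reordering feasible, since it lets us freely route between any pair of target representatives inside $C$ without affecting reachability of the subsequent SCCs.
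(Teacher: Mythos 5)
Your proposal is correct and follows essentially the same route as the paper: compute the SCC decomposition, contract each SCC into a single node of the condensation DAG while propagating target-set membership, and then invoke Algorithm~\ref{icaps:alg:seqt_lprop} together with Propositions~\ref{icaps:lem:lprop_correct} and~\ref{icaps:lem:lprop_rtime}. The only difference is cosmetic: you spell out the witness-expansion argument (re-routing inside a strongly connected component to hit several targets in the required order) in more detail than the paper, which merely notes that every vertex of an SCC can reach every other vertex of that SCC.
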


\subsubsection{Algorithm for MDPs.}
\noindent The algorithm for MDPs builds on to of the algorithm for graphs.
The first difference is that instead of computing an SCC decomposition and contracting SCCs for MDPs ws compute a MEC-decomposition and contract MECs into player-1 vertices:
Given an MDP $P =  (V,E, \langle V_1, V_R \rangle, \delta)$ 
with the sequential reachability objective $\Seq{T_1,\dots, T_k}$,
we compute the MEC-decomposition of the MDP.\@ 
Then, each MEC $M$ is contracted into a player-1 vertex $v'$
without self-loops. The resulting MDP is $P'$. 
The target sets of $P'$ are as follows: The vertex $v'$ is in all the target sets of the corresponding 
vertices in $M$, 
i.e., $v' \in T_i$ if there exists
a vertex $u \in M$ such that $u \in T_i$ for all $1 \leq i \leq k$.
Notice that this step does not change the reachability conditions of the resulting MDP:\@ Every vertex in the MEC can be
reached almost surely starting from every
other vertex in the same MEC, regardless of their type (player-1, random).
Thus it suffices to give an algorithm for MEC-free MDPs.

\smallskip\noindent\emph{Key Challenge.}
When computing a MEC-decomposition and contracting the MECs 
we get an MDP that may still contain cycles. 
Thus our MDP algorithm has to deal with cycles which are in contrast to the graph setting where we only had to deal with DAGs, i.e., in Algorithm~\ref{icaps:alg:seqt_lprop} we maintained a Queue $Q$ which contained 
all unprocessed vertices where all successors are processed. In each iteration, 
we process one such vertex.
Notice that when the queue is the only mechanism to
process the vertices, we need the fact (which we also show in Lemma~\ref{icaps:lem:seqt_graph_inv1}~(3)) 
that there either exists a vertex where all successors have been processed or all
vertices have been processed and the algorithm can terminate. When running the 
algorithm on MEC-free MDPs there might be a situation where the Queue $Q$
is empty and some vertices have not been processed yet. We illustrate such 
a situation in Example~\ref{icaps:fig:seq_mdps_challenge}.
Thus, we need an additional mechanism to process vertices in this situation.

\begin{example}[Queue empty but Graph not processed]
	Consider the MDP $P$ given in Figure~\ref{icaps:fig:seq_mdps_challenge}:
	Contracting the MECs of P into player-1 vertices, we obtain $P'$. 
	Notice that $P'$ still contains a cycle.
	Using only the queue to obtain the next vertex to process we have the following problem: After $v'_3$ is processed, the queue $Q$ is empty because $v'_4$ has still has an unprocessed succesor, namely $v'_1$. Notice that $v'_2$ and $v'_1$ have not been processed yet.

	\begin{figure}
		\centering
		\includegraphics{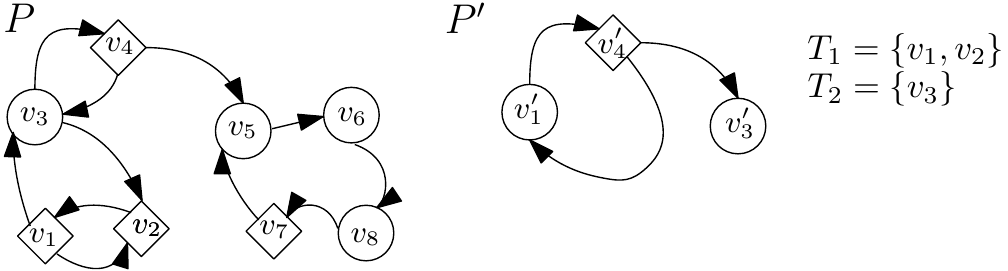}
		\caption{Key difficulty when computing Sequential Reachability in MDPs.}\label{icaps:fig:seq_mdps_challenge}
	\end{figure}
\end{example}

\smallskip\noindent\emph{Algorithm Description.}
The algorithm for MEC-free MDPs maintains the set of unprocessed vertices $S$ and a queue $Q$, the values $\mcount_v$, $\ell_v$, and $\best_v$ for each vertex $v$.
The value $\mcount_v$, as in Algorithm~\ref{icaps:alg:seqt_lprop}, stores the number of vertices in $\Out{v}$ which are not processed yet.
The label $\ell_v$ for $v$ is now such that $v$ has an \emph{almost-sure 
winning strategy} for the objective $\Seq{\Targets_{\ell_v}}$ 
where $\Targets_{\ell_v} = (T_{\ell_v}, \dots, T_k)$.
Random vertices might choose the worst possible successor with nonzero probability, i.e., the vertex with highest $\ell_v$, whereas player-1 vertices
always choose the vertex with the lowest $\ell_v$. We reflect this fact as follows in the variable $\best_v$:
The variable $\best_v$ stores the 
maximum (for $v \in V_R$) / minimum (for $v \in V_1$) 
label of the already processed successors of $v$. 
The set $S$ is initialized with $V$, and, initially, all vertices with no outgoing edges are added to the queue $Q$.
Notice that the bottom MECs of $P$ are now vertices without 
outgoing edges in $P'$ and thus $Q$ is initially non-empty.
If the queue $Q$ is non-empty, a vertex from the queue is processed as in Algorithm~\ref{icaps:alg:seqt_lprop}. 
When $Q$ is empty, the algorithm has to process a vertex where some successors
are not processed yet. In that case, we consider all the random vertices for which at least
one successor is processed and choose the random vertex with maximum $best_v$ to process next. 
We show that, as the graph has no MECs, whenever $Q$ is empty (and $S$ is not) there exists such a random vertex. 
Moreover, whenever $Q$ is empty, 
all vertices in the set of unprocessed vertices $S$ 
have a strategy that satisfies $\Seq{\Targets_{m}}$ for 
$m = \max_{v \in V_R \cap S} \mathit{best}_v$:
Intuitively, this is due to the fact that all vertices in $S$ can reach a vertex $v'$ (which is possibly different to  $v= \argmax_{v \in V_R \cap S} \mathit{best}_v$) in the set of already
processed vertices and in the worst case $\ell_{v'}=m$, i.e., they can satisfy $\Seq{\Targets_{m}}$.
For the vertex $v = \arg \max_{v \in V_R \cap S} \best_v$ all successors $w$ without a label are
in $S$ and are going to obtain a label $\ell_w$ 
of at most $m$. The current value of $\best_v$ is $m$, i.e., $v$ has a successor $w$ with $\ell_w=m$. 
As $v$ is in $V_R$ the final value of $best_v$ must be $m$. 
Hence, one can process $v$ without knowing the exact label of all the successors. 
We present the details in Algorithm~\ref{icaps:alg:seq_mdps} and prove a running time which is in $O(m \log n + \sum_{i=0}^k |T_i|)$. Notice that the running is near-linear time, linear up to a $\log n$ factor.

\begin{algorithm}[p]
	\KwIn{MEC-free MDP $P = (V,E, \ls V_1, V_R \rs, \delta)$, targets $\Targets = (T_1,\dots,T_k)$  }
	\KwOut{All vertices with a strategy for $\Seq{\Targets}$.} 
    $S \gets V$\;
	$L_v \gets \{ i \mid v \in T_i : i = 1 \dots k $\} \quad  for $v \in V$\;
    $\mcount_v \gets |\Out{v}|$ \quad for $v \in V$\;
	$\best_v \gets \begin{cases} k+1 & \text{if $\Out{v} = \emptyset$}\\ \nill & \text{otherwise}\end{cases}$ \quad for $v \in V$\label{icaps:alg:seq_mdps:bottomvs}\;
	$\ell_v \gets \nill$ \quad for $v \in V$\;
    $Q \gets \{v \mid \Out{v} = \emptyset\}$\;\smallskip
	\BlankLine{}
	\While{$S \not= \emptyset$\label{icaps:alg:seq_mdps:whileloop}}
	{
		\If{$Q \neq \emptyset$} {
			v = $Q.pop()$\;
			\ProcessVertex{v}\;
		} 
		\Else{
			$v \gets \argmax_{v \in V_R \cap S}\ best_v$\label{icaps:alg:seq_mdps:argmax}\;
			\ProcessVertex{v}\;
		}

	}
	\Return{$\{v \in V \mid \ell_v=1\}$};
	\BlankLine{}
	\BlankLine{}
	\Function{\ProcessVertex{\text{Vertex }v}}{
		$\ell_v \gets best_v $\label{icaps:alg:seq_mdps:l_v}\;
		\While{$\ell_v-1 \in L_v$\label{icaps:alg:seq_mdps:computeEllv}}
		{
			$\ell_v \gets \ell_v-1$\;
		}
		$S \gets S \setminus \{v\}$\label{icaps:alg:seq_mdps:updateS}\; 
		\For{$w \in \{w:(w,v) \in E\}$}
		{
			\uIf{$w \in V_1$\label{icaps:alg:seq_mdps:updatebest}}
			{
				$\best_w \gets \min(\best_w,\ell_v)$\label{icaps:alg:seq_mdps:updatebest1}
			}
			\Else{
				$\best_w \gets \max(\best_w,\ell_v)$\label{icaps:alg:seq_mdps:updatebest2}
			}
			$\mcount_w \gets \mcount_w -1$\label{icaps:alg:seq_mdps:decrcounter}\;
			\uIf{$\mcount_w =0 \land w \in S$}
			{
				$Q.push(w)$\label{icaps:alg:seq_mdps:pushr}
			}
		}
	}
	\caption{Sequential Reachability for MEC-free MDPs.}\label{icaps:alg:seq_mdps}
\end{algorithm}

\begin{proposition}[Correctness]
	Given an MDP $P$ and a sequential reachability objective $\Seq{\Targets}$ with targets $\Targets =
	(T_1, \dots, T_k)$, Algorithm~\ref{icaps:alg:seq_mdps} returns the set of
	all	start vertices with a player-1 strategy for the objective $\Seq{\Targets}$.
\end{proposition}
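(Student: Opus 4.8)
The plan is to mirror the correctness argument for the graph case (Algorithm~\ref{icaps:alg:seqt_lprop}), first re-establishing the bookkeeping invariants of Lemma~\ref{icaps:lem:seqt_graph_inv1} and then the semantic invariants, while isolating the genuinely new ingredient: correctly processing a vertex when the queue $Q$ runs empty. I would begin by recalling that contracting MECs into player-1 vertices preserves almost-sure reachability (from any vertex of a MEC player~1 almost-surely reaches every other vertex of the same MEC), so it suffices to prove correctness on the MEC-free MDP $P'$. The intended meaning of the variables is: $\ell_v$ is the smallest index such that $v$ is almost-sure winning for $\Seq{\Targets_{\ell_v}}$; $\mcount_v = |\Out{v}\cap S|$ counts unprocessed successors; and $\best_v$ is the $\min$ (for $v\in V_1$) or $\max$ (for $v\in V_R$) of the labels of the already-processed successors of $v$.

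Next I would port the seven bookkeeping invariants. The invariants on $\mcount_v$, on $\best_v$ for sinks, on ``$v\in Q$ iff $v\in S$ and $\Out{v}\cap S=\emptyset$'', on ``$v\in Q$ implies $\best_v\neq\nill$'', and on ``$v\in V\setminus S$ implies $\ell_v\neq\nill$'' go through essentially verbatim, since the queue branch of Algorithm~\ref{icaps:alg:seq_mdps} coincides with Algorithm~\ref{icaps:alg:seqt_lprop}. The only substantive change is the analogue of invariant~(7), which must now read $\best_v=\min_{w\in\Out{v}\setminus S}\ell_w$ when $v\in V_1$ and $\best_v=\max_{w\in\Out{v}\setminus S}\ell_w$ when $v\in V_R$; this follows from the split update at Lines~\ref{icaps:alg:seq_mdps:updatebest1}--\ref{icaps:alg:seq_mdps:updatebest2}, and the old invariant~(3) (``$S$ nonempty implies progress is possible'') is replaced by the claim below about the \emph{else}-branch.

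The crux, and the step I expect to be the main obstacle, is to justify the \emph{else}-branch at Line~\ref{icaps:alg:seq_mdps:argmax}. Here I need two claims. First, whenever $S\neq\emptyset$ and $Q=\emptyset$, some random vertex in $S$ has an already-processed successor, so the $\argmax$ ranges over a nonempty set: by the queue invariant, $Q=\emptyset$ forces every $v\in S$ to have a successor inside $S$, and if in addition every random vertex of $S$ kept all of its edges in $S$, then any bottom SCC of $P'[S]$ would be a nontrivial end-component — it cannot be a single vertex, as self-loops are excluded and each vertex has an $S$-successor — contradicting MEC-freeness. Second, setting $m=\max_{v\in V_R\cap S}\best_v$, I claim every vertex of $S$ is almost-sure winning for $\Seq{\Targets_m}$. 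Since $S$ contains no end-component, player~1 can almost-surely reach the processed set $V\setminus S$ from every vertex of $S$; the label encountered on first exit is at most $m$ (random exits are bounded by $\best_{\cdot}\le m$ via invariant~(7), and player-1 exits are chosen no larger), and because $\Targets_{\ell}\supseteq\Targets_m$ whenever $\ell\le m$, concatenating the escape strategy with the optimal strategy of the exit vertex yields an almost-sure strategy for $\Seq{\Targets_m}$ from all of $S$. In particular the vertex $v$ selected at Line~\ref{icaps:alg:seq_mdps:argmax} has all of its still-unlabeled successors in $S$, each destined to receive a label $\le m$, while it already owns a processed successor of label exactly $m$; since $v\in V_R$ takes the maximum, its final $\best_v$ must be $m$, so $\ell_v$ is computed correctly even though not all successors are processed.

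Finally, with these invariants in place, the two semantic invariants — existence of an almost-sure winning strategy for $\Seq{\Targets_{\ell_v}}$ and non-existence for $\Seq{\Targets_{\ell_v-1}}$ — are proved by induction on the processing order exactly as in the graph case, now reading ``almost-sure winning strategy'' for ``path'' and using the revised invariant~(7) to argue that at player-1 vertices the best successor ($\min$ label) is followed while at random vertices the adversary is dominated by the worst successor ($\max$ label). The while-loop at Line~\ref{icaps:alg:seq_mdps:whileloop} terminates because each iteration removes one vertex from $S$, and the returned set $\{v\mid\ell_v=1\}$ is then precisely the set of almost-sure winning start vertices for $\Seq{\Targets}$, which establishes the claim.
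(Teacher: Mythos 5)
Your proposal is correct and follows essentially the same route as the paper's proof: the same bookkeeping invariants (with $\best_v$ split into $\min$ for player-1 and $\max$ for random vertices), the same key claim that when $Q$ is empty MEC-freeness forces a bottom SCC of $P'[S]$ to be a non-trivial end-component unless some random vertex in $S$ already has a processed successor, and the same argument that every vertex of $S$ then almost-surely achieves $\Seq{\Targets_{m}}$ for $m=\max_{v\in V_R\cap S}\best_v$, so the $\argmax$ vertex can be processed early. The only cosmetic difference is your phrase that ``player-1 exits are chosen no larger,'' which the paper makes precise by having player~1 simply stay inside $S$ so that all exits occur through random vertices; your argument is otherwise the paper's.
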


We next state invariants of the while loop (see Line~\ref{icaps:alg:seq_mdps:whileloop}) that will
enable us to show the correctness of the algorithm.
The invariants state that (a) the variables $best_v$ and $\mcount_v$ have the meaning as described in the algorithm description for all $v \in V$,
(b) $Q$ contains all the unprocessed vertices whose successors are already processed, and 
(c) that the function $\argmax$ is well-defined whenever called, i.e., there is a random vertex where $best_v$ is not $\nill$. 

\begin{lemma}\label{icaps:lem:seq_mdps:invariants1}
	The following statements are invariants of the while loop in
	Line~\ref{icaps:alg:seq_mdps:whileloop}.
	\begin{enumerate}
		\item $\mcount_v= |Out(v) \cap S|$;
		\item $v \in Q$ iff $v \in S$ and $\Out{v} \cap S = \emptyset$;
		\item $best_v = k+1$, for all $v \in V$ with $\Out{v}= \emptyset$.
		\item If $v \in Q$ we have $\best_v\not=\nill$.
		\item If $v \in V \setminus S$ we have $\ell_v\not=\nill$.
		\item For all $v \in V$ with $\Out{v}\setminus S \not= \emptyset$: $best_v = 
			\begin{cases}
				\min_{w \in \Out{v} \setminus S} \ell_w & v \in V_1 \\
				\max_{w \in \Out{v} \setminus S} \ell_w & v \in V_R
			\end{cases}$ 

		\item If $S\not= \emptyset$ and $Q=\emptyset$ there is a $v\in S \cap V_R$ such that $best_v \not= \nill$.
	\end{enumerate}
\end{lemma}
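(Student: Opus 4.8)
The plan is to prove all seven statements simultaneously by induction on the number of iterations of the while loop at Line~\ref{icaps:alg:seq_mdps:whileloop}. Invariants~(1)--(6) are the direct analogues of the corresponding statements in the graph case (Lemma~\ref{icaps:lem:seqt_graph_inv1}), the only differences being that $\best_v$ is now a minimum for $v \in V_1$ and a maximum for $v \in V_R$, and that a vertex may be processed either by a queue pop or by the $\argmax$ rule at Line~\ref{icaps:alg:seq_mdps:argmax}. The genuinely new content is invariant~(7); it is precisely the statement that makes the $\argmax$ at Line~\ref{icaps:alg:seq_mdps:argmax} well-defined, and its proof is the only place where MEC-freeness of the input is used.

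For the base case, after initialization we have $S=V$, so $\mcount_v=|\Out{v}|=|\Out{v}\cap S|$, giving~(1); $Q=\{v\mid\Out{v}=\emptyset\}$ coincides with $\{v\in S\mid \Out{v}\cap S=\emptyset\}$, giving~(2); invariant~(3) holds by the initialization of $\best_v$ at Line~\ref{icaps:alg:seq_mdps:bottomvs}, and~(4) follows from~(3) since every $v\in Q$ has $\Out{v}=\emptyset$. Invariants~(5) and~(6) are vacuous because $V\setminus S=\emptyset$ and $\Out{v}\setminus S=\emptyset$ for all $v$. Finally, former bottom MECs of $P$ are sink vertices of the contracted MEC-free MDP, so $Q$ is non-empty initially, the hypothesis of~(7) fails, and~(7) holds vacuously.

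For the inductive step of~(1)--(6) I would reuse the graph-case reasoning almost verbatim, the key observation being that, regardless of whether $v$ is taken from $Q$ or chosen by $\argmax$, the only state changes happen inside \textsc{ProcessVertex}: it removes $v$ from $S$ (Line~\ref{icaps:alg:seq_mdps:updateS}), decrements the counters of its predecessors (Line~\ref{icaps:alg:seq_mdps:decrcounter}), updates their $\best$ value (Lines~\ref{icaps:alg:seq_mdps:updatebest1}--\ref{icaps:alg:seq_mdps:updatebest2}), and pushes a predecessor $w$ onto $Q$ exactly when $\mcount_w$ reaches $0$ and $w\in S$ (Line~\ref{icaps:alg:seq_mdps:pushr}). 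Thus the maintenance of~(1)--(5) is identical to before; for~(6) the only change is to track the two update branches, yielding $\best_w=\min_{x\in\Out{w}\setminus S}\ell_x$ for $w\in V_1$ and $\best_w=\max_{x\in\Out{w}\setminus S}\ell_x$ for $w\in V_R$. The one new subtlety is that a vertex processed via $\argmax$ may still have successors in $S$; this does not affect~(2), since $v$ is removed from $S$ and never re-added, and predecessors are pushed onto $Q$ only when their last unprocessed successor disappears.

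The hard part is invariant~(7), which I would establish by contradiction using MEC-freeness. Suppose $S\neq\emptyset$, $Q=\emptyset$, and yet $\best_v=\nill$ for every $v\in S\cap V_R$. Since, for a vertex with $\Out{v}\neq\emptyset$, the value $\best_v$ becomes non-null exactly when some successor is processed, the assumption gives $\Out{v}\setminus S=\emptyset$ for every random $v\in S$; that is, every random vertex of $S$ has all its outgoing edges inside $S$. Moreover, because $Q=\emptyset$, invariant~(2) forces every vertex of $S$ to have a successor in $S$, so every vertex of $P[S]$ has out-degree at least one. Now take a bottom strongly connected component $B$ of $P[S]$: since there are no self-loops and each vertex of $B$ has a successor in $P[S]$ that, by the bottom-SCC property, stays in $B$, the set $B$ is non-trivial, and every random vertex of $B$ keeps all of its edges inside $B$. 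Hence $B$ is a non-trivial end-component of $P$, contradicting that $P$ is MEC-free. Therefore some $v\in S\cap V_R$ has $\best_v\neq\nill$, which both proves~(7) and certifies that the $\argmax$ at Line~\ref{icaps:alg:seq_mdps:argmax} ranges over a non-empty set and is well-defined.
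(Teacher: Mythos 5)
Your proof is correct and follows essentially the same route as the paper's: induction over the loop iterations, reusing the graph-case arguments (Lemma~\ref{icaps:lem:seqt_graph_inv1}) for invariants (1)--(6) with the $\min$/$\max$ split for $\best_v$, and proving (7) by contradiction---assuming all random vertices in $S$ have $\best_v = \nill$ forces every vertex of $S$ to keep a successor inside $S$, so a bottom SCC of the subgraph induced by $S$ is a non-trivial end-component, contradicting MEC-freeness. The only cosmetic differences are your more explicit base case for (7) and your remark on why $\argmax$-processed vertices do not disturb invariant (2).
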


\begin{proof}
 The proofs of (1) --- (5) proceed as the proofs of the corresponding statements in the proof of Lemma~\ref{icaps:lem:seqt_graph_inv1}. 
	\begin{enumerate}	
		\item[6.] 
			
			Initially $S=V$ and for all $v \in V$ we set $\ell_v, \best_v$ to $\nill$. 
			Also, $\best_v$ with $\Out{v} \not= \emptyset$ are not changed at Line~\ref{icaps:alg:seq_mdps:bottomvs} and the claim holds when the algorithm enters the loop.

			\begin{sloppypar}
				Now consider the iteration of vertex $v$ and assume the claim is true at the beginning.
				The set $S$ is only changed in Line~\ref{icaps:alg:seq_mdps:updateS} where $v$ is removed.
				Let $S_{old}$ be the set at the beginning of the iteration and 
				$S_{new} = S_{old} \setminus \{v\}$ the updated set.
				First notice that $best_v \not= \nill$ as $v$ is either chosen by
				(a) as element of $Q$ or
				(b) by $\argmax$.
				In the former case we apply Lemma~\ref{icaps:lem:seq_mdps:invariants1}~(4)
				and in the latter case $best_v \not= \nill$ by the definition of $\argmax$.
				For a vertex $w \in \In{v} \cap V_1$ the value $best_w$ is updated to $\min(\best_w,\ell_v)$ (Line~\ref{icaps:alg:seq_mdps:updatebest1})
				which by assumption is equal to $\min_{x \in (\Out{w} \setminus S_{old}) \cup \{v\}} \ell_x = 
				\min_{x \in (\Out{w} \setminus S_{new})} \ell_x$, i.e., the equation holds.
				For a vertex $w \in \In{v} \cap V_R$  the value $best_w$ is 
				updated to $\max(\best_w,\ell_v)$ (Line~\ref{icaps:alg:seq_mdps:updatebest2})
				which by assumption is equal to 
				$\max_{x \in (\Out{w} \setminus S_{old}) \cup \{v\}} \ell_x = 
				\max_{x \in (\Out{w} \setminus S_{new})} \ell_x$, i.e., the equation holds.
				For vertices $w \notin \In{v}$ 
				$best_w$ remains unchanged. Hence, the claim holds for $w$ by the assumption
				that the invariant is true before the iteration.
			\end{sloppypar}
		\item[7.]
            Initially the statement is  true as each  MEC-free 
			MDP has a vertex $v$ with $\Out{v}= \emptyset$ and thus $Q$ is non-empty 
			(otherwise there would be an SCC with no outgoing edge which thus would be a MEC).

			Now consider the iteration processing vertex $v$ and assume the claim is true at the beginning
			and $Q=\emptyset$.
			Notice that $best_w$ is set for vertices as soon as one vertex in $\Out{w}$ was processed.
			Towards a contradiction assume that all vertices $w \in S \cap V_R$ have $best_w=\nill$, 
			i.e., no vertex $w \in S \cap V_R$ has a successor in $V \setminus S$.
			Note that $S$ contains only the vertices which are not processed yet.
			Each $w \in S$ has at least one successor in $S$ as otherwise, $w$ would be in $Q$. 
			Thus $S$ is either empty which would make the statement trivially true 
			or has again a bottom SCC (on the induced subgraph $G_P$) with more than one vertex that has no random outgoing edges. 
			Again such an SCC would be a MEC and we obtain our desired contradiction.\qedhere
	\end{enumerate}
\end{proof}

From the following invariant, we obtain the correctness of our algorithm.

\begin{lemma}
	The following statements are invariants of the while loop in Line~\ref{icaps:alg:seq_mdps:whileloop} for all $v \in V \setminus S$:
	\begin{enumerate}
		\item there exists a player~1 strategy $\sigma$ s.t. $\Pr_v^\sigma(\Seq{\Targets_{\ell_v}}) = 1$; and
		\item there is no player~1 strategy $\sigma$ s.t. $\Pr_v^\sigma(\Seq{\Targets_{\ell_v-1}}) = 1$. 
	\end{enumerate}
	where $\Targets_{\ell_v} = \{ T_{\ell_v}, \dots, T_k\}$ or $\ell_v > k$.
\end{lemma}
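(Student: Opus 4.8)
The plan is to prove this correctness lemma by induction on the number of iterations of the while loop, mirroring the structure of the graph case (Algorithm~\ref{icaps:alg:seqt_lprop}) but carefully handling the two ways a vertex can be removed from $S$: either it is popped from the nonempty queue $Q$, or it is selected by the $\argmax$ mechanism when $Q$ is empty. The base case is immediate: initially $S = V$, so $V \setminus S = \emptyset$ and both invariants hold vacuously. For the inductive step, I would fix the iteration in which some vertex $v$ is processed and removed from $S$, assume both invariants hold for all vertices in $V \setminus S_{old}$, and establish them for $v$ itself (the invariants for previously-processed vertices are preserved since their labels $\ell_w$ never change after processing).

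\textbf{First}, I would set up the computation of $\ell_v$. By Lemma~\ref{icaps:lem:seq_mdps:invariants1}~(6), once all relevant successors are accounted for, $\best_v$ equals $\min_{w \in \Out{v}} \ell_w$ if $v \in V_1$ and $\max_{w \in \Out{v}} \ell_w$ if $v \in V_R$. The crucial point is that when $v$ is chosen by $\argmax$ (the queue-empty case), not all successors have been processed, yet by the reasoning sketched in the algorithm description and invariant~(7), all unprocessed successors $w \in S$ will receive labels $\ell_w \le \best_v$; since $v$ is a random vertex taking a $\max$, its final $\best_v$ is already correct. This is the step I expect to be the main obstacle: justifying rigorously that processing $v$ via $\argmax$ with incomplete successor information still yields the correct $\ell_v$. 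The argument hinges on the MEC-freeness of $P'$ together with invariant~(7), which guarantees such a random vertex exists and that all vertices in $S$ can almost-surely reach an already-processed vertex whose label is at most $m = \max_{u \in V_R \cap S} \best_u$.

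\textbf{Next}, for invariant~(1), I would construct the almost-sure winning strategy. If $v$ was popped from $Q$, all successors are processed, and I let $w^* = \argmin_{w \in \Out{v}} \ell_w$ (for $v \in V_1$) witness a strategy: by induction hypothesis $w^*$ has a strategy for $\Seq{\Targets_{\ell_{w^*}}}$, and prepending $v$ together with the label-decrement loop at Lines~\ref{icaps:alg:seq_mdps:computeEllv} accounts for the targets $T_{\ell_v}, \dots, T_{\ell_{w^*}-1}$ in which $v$ participates. For $v \in V_R$, the adversarial random choice forces us to use the $\max$: \emph{every} successor $w$ has $\ell_w \ge \best_v = \ell_v$ (before the decrement loop), so from $v$ player~1 almost-surely satisfies $\Seq{\Targets_{\ell_v}}$ regardless of which successor nature picks, using the respective inductive strategies and the standard fact that finite prefixes do not affect almost-sure reachability. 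The $\argmax$ case uses the same argument with the observation that the still-unlabeled successors will only improve (lower) the requirement.

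\textbf{Finally}, for invariant~(2), I would argue that no strategy achieves $\Seq{\Targets_{\ell_v - 1}}$ almost surely: the vertex $v$ itself is not in $T_{\ell_v - 1}$ (this is exactly what the while loop at Line~\ref{icaps:alg:seq_mdps:computeEllv} guarantees, since $\ell_v - 1 \notin L_v$ upon termination), and for a player-1 vertex no successor $w$ admits a path/strategy for $\Seq{\Targets_{\ell_v - 1}}$ because $\ell_v - 1 < \ell_v \le \ell_w$; for a random vertex, there \emph{exists} a successor $w$ with $\ell_w = \ell_v$ attaining the $\max$, and nature can route to it with positive probability, so the threshold cannot be strengthened. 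Combining both invariants, the returned set $\{v \in V \mid \ell_v = 1\}$ is exactly the almost-sure winning set for $\Seq{\Targets}$, which together with the near-linear running time (Proposition analogous to~\ref{icaps:lem:lprop_rtime}, with the $\log n$ factor from the $\argmax$ priority queue) yields Theorem~\ref{icaps:thm:seq:mdp:upper}.
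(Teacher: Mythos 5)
Your plan follows the paper's proof: induction over the iterations of the while loop with base case $S = V$, preservation for already-processed vertices, and a case split on whether $v$ is popped from $Q$ or selected by $\argmax$. One slip first: for $v \in V_R$ you claim every successor $w$ has $\ell_w \ge \best_v$. Since for random vertices $\best_v$ is the \emph{maximum} of the processed successors' labels, the inequality goes the other way, $\ell_w \le \best_v$, and that is the direction the argument needs: whichever successor nature picks has, by induction, a strategy for $\Seq{\Targets_{\ell_w}}$ and hence for the weaker objective $\Seq{\Targets_{\best_v}}$, while the decrement loop lets the visit to $v$ itself account for $T_{\ell_v},\dots,T_{\best_v-1}$. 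Similarly, in your argument for invariant (2), the max-attaining successor has $\ell_w = \best_v$, not $\ell_v$.

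The genuine gap is in the $\argmax$ case, which you rightly call the main obstacle but then justify with a forward-looking argument: ``all unprocessed successors $w \in S$ will receive labels $\ell_w \le \best_v$'' and ``the still-unlabeled successors will only improve (lower) the requirement.'' This cannot establish a loop invariant: the correctness of labels assigned in \emph{future} iterations is exactly what the induction is proving, so appealing to them now is circular. The paper's argument is present-tense and has to be spelled out: at the moment $v = \argmax_{u \in V_R \cap S} \best_u$ is selected, \emph{every} vertex of $S$ already has a strategy achieving $\Seq{\Targets_m}$ for $m = \best_v$. To see this, fix any strategy in which each player-1 vertex of $S$ moves to a successor inside $S$ (one exists, else that vertex would be in $Q$ by invariant (2) of Lemma~\ref{icaps:lem:seq_mdps:invariants1}); since $S$ contains no end-component, the play almost surely leaves $S$, and it can only do so through a random edge; by invariant (6), the landing vertex $w \notin S$ satisfies $\ell_w \le \best_{v'} \le m$ for the exiting random vertex $v'$, so its inductively-given strategy completes $\Seq{\Targets_m}$. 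Note this is not a consequence of invariant (7), as you suggest --- that invariant only guarantees that some random vertex in $S$ has $\best \ne \nill$, i.e., that $\argmax$ is well defined. Once the claim is in place, invariant (1) for $v$ follows since all of its successors, processed or not, achieve $\Seq{\Targets_m}$; invariant (2) for this case, which you leave untreated, follows because the processed successor attaining $\best_v = m$ has, by induction, no strategy for $\Seq{\Targets_{m-1}}$ and is chosen by nature with positive probability.
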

\begin{proof}
	As $S$ is initialized as set $V$ the two statements hold after the initialization.

	Now consider the iteration where vertex $v$ is processed and assume the invariants hold at the beginning of the iteration. 
	Notice that we do not change $\ell_v'$ for any other vertex 
	$v' \neq v$ and the invariant holds trivially for $v'$.
	We first introduce the following notation 
	\[
		be(v)=
		\begin{cases}
			\min_{w \in \Out{v}} \ell_w & v \in V_1 \\
			\max_{w \in \Out{v}} \ell_w & v \in V_R
		\end{cases}
	\]
	We distinguish the case where $Q$ is non-empty and the case where $Q$ is empty.
	\begin{itemize}
		\item \emph{Case $Q  \not= \emptyset$}:
			By Lemma~\ref{icaps:lem:seq_mdps:invariants1}~(2) we have $\Out{v} \cap S = \emptyset$. Because we only remove vertices from $S$ if we process them, 
			all $w \in \Out{v}$ are processed and thus $\ell_w \not= \nill$. 
			Thus by Lemma~\ref{icaps:lem:seq_mdps:invariants1}~(6) 
			we have $be(v)= best_v$.
			By the while-loop in Line~\ref{icaps:alg:seq_mdps:computeEllv} 
			we have $L_v \supseteq \{\ell_v, \dots, best_v-1\}$ but does not contain $\ell_{v}-1$, i.e., $\ell_{v}-1 \notin L_v$.

			(1) Thus we can easily obtain a strategy $\sigma$ 
			with $\Pr_v^\sigma(\Seq{\Targets_{\ell_v}}) = 1$ as follows.

			If $v \in V_1$ pick the vertex $w$ that corresponds to $be(v)$ and
			then player~1 can follow the existing strategy $\sigma'$ 
			for vertex $w$. Because the invariant holds for
			$w$, there exists a strategy $\sigma'$ such that $\Pr_w^{\sigma'}(\Seq{\Targets_{be(v)}}) = 1$. 

			\begin{sloppypar}
				If $v \in V_R$ 
				let vertex $w \in \Out{v}$ be the randomly chosen vertex. 
				By the invariant which holds during the iteration, 
				$w$ has a strategy $\sigma$ such that
				$\Pr_w^\sigma(\Seq{\Targets_{be(v)}}) = 1$. 
				Combined with $L_v$, this is the desired strategy, 
				i.e., $\Pr_v^\sigma(\Seq{\Targets_{\ell_v}}) = 1$

				(2) We next show that there is no strategy for $\Seq{\Targets_{\ell_v -1}}$.
				By Line~\ref{icaps:alg:seq_mdps:computeEllv} we have 
				$v \notin T_{\ell_{v}-1}$.
				If $v \in V_1$ no successor $w \in \Out{v}$ 
				has a strategy $\sigma$ with
				$\Pr_w^\sigma(\Seq{\Targets_{\ell_v -1}}) = 1$ as the invariant 
				holds also for $w$. 
				Thus there is also no strategy $\sigma$ for $v$ such that
				$\Pr_v^\sigma(\Seq{\Targets_{\ell_v -1}}) = 1$.
				If $v \in V_R$ there is at least one successor $w$ (because the invariant holds also for $w$) which has no strategy $\sigma$ such that
				$\Pr_w^\sigma(\Seq{\Targets_{\ell_v -1}}) = 1$.
				Consequently there is no strategy $\sigma$ for $v$ with
				$\Pr_v^\sigma(\Seq{\Targets_{\ell_v -1}}) = 1$ as there is a non-zero
				chance that a vertex $w$ is picked that, by the fact that the invariant holds at the current iteration, cannot reach a node in $T_{l_v-1}$.
			\end{sloppypar}

		\item \emph{Case $Q  = \emptyset$}: 
			Due to Lemma~\ref{icaps:lem:seq_mdps:invariants1}~(7) 
			there is at least one vertex in $V_R \cap S$ such that $\best_v \neq \nill$.
			Let $\best_{\max} = \max_{v \in V_R \cap S}\ best_v$.

			(1) As we have no MEC (in $S$), 
			there is a strategy $\sigma$, so that the play almost surely leaves 
			$S$ by using one of the outgoing edges of a random node: 
			Note that for all 
			random nodes between $S$ and $V\setminus S$ we have a strategy which achieves at least
			$\Seq{\Targets_{\best_{\max}}}$.
			The strategy $\sigma$ can be arbitrary, except that for a player-1
			vertex $x \in S$ with an edge $(x,y)$ where $y \in V\setminus S$ we choose
			$\sigma(x) \in S$ (which must exist as $x$ would be in $Q$
			otherwise). As there are no MECs (in $S$) the
			strategy $\sigma_1$ will eventually lead to a vertex in $V \setminus S$ using a
			random node. This implies that from 
			each vertex in $S$ player~1 has a strategy to reach a vertex in
			$V \setminus S$ coming from a random vertex. 
			Because the invariant holds at the current iteration 
			each successor of a random vertex $v'$ where $\best_{v'} \neq \nill$  has a strategy 
			to satisfy $\Seq{\Targets_{\best_{\max}}}$. Thus it follows that
			from each vertex in $S$ player~1 has a strategy to satisfy
			$\Seq{\Targets_{\best_{\max}}}$.
			Now consider the random vertex $v$ that was chosen by the 
			algorithm as $\argmax_{v \in V_R \cap S}\ best_v$. 
			Because $v'$ is a random vertex, 
			all successors have a strategy to satisfy $\Seq{\Targets_{\best_{\max}}}$ almost-surely. 
			As $L_v$ contains $\ell_v, \dots, best_v-1$ but does not contain $\ell_{v}-1$
			we obtain a strategy $\sigma$ with $\Pr_v^\sigma(\Seq{\Targets_{\ell_v}}) = 1$. 

			(2) By the choice of $v$ there is also a successor (that is chosen with non-zero probability)
			that, by assumption, has no strategy for $\Seq{\Targets_{\best_{\max}-1}}$ and,
			moreover, $L_v$ does not contain $\ell_{v}-1$.
			Thus, when starting in $v$ each strategy will fail to satisfy
			$\Seq{\Targets_{\best_{\max}-1}}$ with non-zero probability, i.e., there is no strategy
			$\sigma$ for $\Pr_v^\sigma(\Seq{\Targets_{\ell_v -1}}) = 1$.
	\end{itemize}		
\end{proof}

\begin{proposition}[Running Time]
	Algorithm~\ref{icaps:alg:seq_mdps} runs in $O(m \log n + \sum_{i=0}^k |T_i|)$ time.
\end{proposition}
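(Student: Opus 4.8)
The plan is to separate the cost into four parts: the initialization, the per-vertex computation of $\ell_v$, the work charged to edges in \textsc{ProcessVertex}, and the cost of the selection step at Line~\ref{icaps:alg:seq_mdps:argmax}. For the first three parts the analysis is essentially that of the graph algorithm (Proposition~\ref{icaps:lem:lprop_rtime}): initializing $S$, $\mcount_v$, $\best_v$, $\ell_v$ and $Q$ takes $O(m)$ time, while building all the sets $L_v$ takes $O(\sum_{i=1}^k |T_i|)$ time by iterating over the target sets once. Since the loop at Line~\ref{icaps:alg:seq_mdps:whileloop} removes one vertex from $S$ in every iteration and no vertex is ever re-added, each vertex is processed exactly once; the while loop at Line~\ref{icaps:alg:seq_mdps:computeEllv} computing $\ell_v$ runs $O(|L_v|+1)$ times, summing to $O(n + \sum_{i=1}^k |T_i|)$, and the for-loop over $\In{v}$ examines each edge once over the whole run, contributing $O(m)$.

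First I would isolate the genuinely new cost, namely the evaluation of $\argmax_{v \in V_R \cap S} \best_v$ whenever $Q$ is empty. A naive scan would cost $\Theta(n)$ per call and $\Theta(n^2)$ in total; to obtain the claimed bound the plan is to maintain a max-heap $H$ storing exactly the random vertices currently in $S$ whose $\best$-value is not $\nill$, keyed by $\best_v$. The selection step then becomes a single extract-max on $H$ in $O(\log n)$ time. The heap must be kept consistent with the rest of the algorithm: a random vertex is inserted into $H$ the first time its $\best$-value becomes non-$\nill$ (an update at Line~\ref{icaps:alg:seq_mdps:updatebest2}), it is deleted from $H$ whenever it is removed from $S$ inside \textsc{ProcessVertex} (whether it was taken from $Q$ or selected by $\argmax$), and its key is updated whenever Line~\ref{icaps:alg:seq_mdps:updatebest2} changes $\best_v$.

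The remaining step is to bound the number of heap operations. Each random vertex is inserted at most once and deleted at most once, contributing $O(n)$ operations; each key change happens inside the for-loop over $\In{v}$, so it can be charged to a distinct examination of an edge $(w,v)$, of which there are $O(m)$ over the whole execution; and there is at most one extract-max per iteration of the outer loop, hence $O(n)$ of them. Since every heap operation costs $O(\log n)$, the total heap cost is $O(m \log n)$. Because every vertex has an outgoing edge we have $m \ge n$, so the $O(n \log n)$ terms are subsumed, and adding the linear contributions yields the claimed bound $O(m \log n + \sum_{i=1}^k |T_i|)$.

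The main obstacle I anticipate is bookkeeping correctness rather than the counting: I must argue that $H$ at all times contains precisely $\{v \in V_R \cap S : \best_v \neq \nill\}$, so that the extract-max returns the same vertex that the specification $\argmax_{v \in V_R \cap S} \best_v$ demands. This requires checking that every code path changing $S$ or changing $\best_v$ for a random vertex performs the matching heap update---in particular the deletion when a random vertex is dequeued from $Q$ (and was already in $H$), and the insertion-versus-update distinction at Line~\ref{icaps:alg:seq_mdps:updatebest2}. Once this invariant on $H$ is established, the selection step is well defined and the running time follows as above.
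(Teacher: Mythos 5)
Your proposal is correct and follows essentially the same route as the paper: the initialization and \textsc{ProcessVertex} costs are bounded exactly as in the graph case (Proposition~\ref{icaps:lem:lprop_rtime}), and the $\argmax$ at Line~\ref{icaps:alg:seq_mdps:argmax} is implemented with a priority queue over the unprocessed random vertices, whose $O(m)$ updates (charged to edge examinations in the for-loop over $\In{v}$) cost $O(m\log n)$ in total. The paper's proof is just a terser version of this; your additional bookkeeping of the heap invariant and the charging of key updates to edges fills in details the paper leaves implicit.
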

\begin{proof}
	Initializing the algorithm takes $O(m + \sum_{i=0}^k |T_i|)$ time and 
	calling the function $\ProcessVertex{v}$ takes time $O(|\In{v}| + |L_v|)$ 
	(cf.\ proof of Proposition~\ref{icaps:lem:lprop_rtime}).
	Consider the main while-loop at Line~\ref{icaps:alg:seq_mdps:whileloop} where 
	every vertex is processed once 
	(recall that either $Q$ is nonempty or there exists a $v\in S \cap V_R$ such that $\best_v \neq \nill$ due to Lemma~\ref{icaps:lem:seq_mdps:invariants1}).
	The costly operations are the calls to the \ProcessVertex{\cdot} function and the evaluation of 
	the $\argmax$ function.
	Summing up over all vertices we obtain a $O(m + \sum_{i=0}^k |T_i|)$ bound for the calls to $\ProcessVertex{\cdot}$.
	To compute $\argmax$ efficiently we have to maintain a priority queue containing all not yet
	processed random vertices.
	As we have $O(m)$ updates this costs only $O(m \log n)$ for one of the standard
	implementations of priority queues. 
	Summing up this yields a $O(m \log n + \sum_{i=0}^k |T_i|)$ running time for Algorithm~\ref{icaps:alg:seq_mdps}.
\end{proof}

Recall that the running time for computing the MEC decomposition, i.e., $\mectime$ is $\O(m)$ and
thus we obtain the desired bound and the following theorem.

\begin{theorem}\label{icaps:thm:seq:mdp:upper}
	Given an MDP $P$, a start vertex $s$ and a sequential reachability objective $\Seq{\Targets}$, 
	we can compute whether there is a player-1 strategy $\sigma_1$ at $s$ for
	$\Seq{\Targets}$ in $O(\mectime + m \log n + \sum_{i=0}^k |T_i|)$ time.
\end{theorem}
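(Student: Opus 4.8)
The plan is to establish the theorem by assembling the maximal end-component (MEC) preprocessing with Algorithm~\ref{icaps:alg:seq_mdps}, which solves sequential reachability on MEC-free MDPs. First I would compute the MEC decomposition of $P$ via Theorem~\ref{concur:thm:mec}, which runs in $\O(m) = \mectime$ expected time, and then contract every MEC $M$ into a single player-1 vertex $v'$ (deleting self-loops), declaring $v' \in T_i$ whenever some $u \in M$ satisfies $u \in T_i$. Call the resulting MDP $P'$. By construction $P'$ contains no MEC, so Algorithm~\ref{icaps:alg:seq_mdps} applies to it; its correctness proposition guarantees that it returns exactly the set of vertices with an almost-sure winning strategy for $\Seq{\Targets}$, and its running-time proposition bounds the cost by $O(m \log n + \sum_{i=0}^k |T_i|)$. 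The final step is to lift this answer back to $P$ (the vertex representing the MEC of $s$ is winning in $P'$ iff $s$ is winning in $P$) and to sum the costs.

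The key lemma, and the main obstacle, is to show that the contraction preserves almost-sure sequential reachability. The crucial fact is that inside any MEC $M$ player~1 can almost-surely reach every vertex from every other vertex irrespective of the random choices; hence, once a play enters $M$, player~1 can almost-surely visit, in any desired order, all target sets that intersect $M$. This is precisely what is encoded by placing $v'$ in the union of the target indices of its members, together with the inner while-loop of \ProcessVertex{\cdot} at Line~\ref{icaps:alg:seq_mdps:computeEllv}, which peels off a maximal consecutive block $\ell_v, \ell_v + 1, \dots, \best_v - 1$ of pending indices all present in $L_v$: a contracted MEC can satisfy exactly such a consecutive block of the remaining sequential targets before being forced to leave. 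I would prove both directions by translating strategies. A strategy witnessing $\Seq{\Targets_{\ell}}$ from $v'$ in $P'$ is turned into a strategy in $P$ that, on entering $M$, first runs the almost-sure intra-MEC reachability strategy to tour the relevant targets in the required order and then exits along the edge selected by the $P'$-strategy; conversely, any almost-sure winning play in $P$ projects to one in $P'$, since a single pass through $M$ can collect at most the target indices that cover $M$.

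Granting this equivalence, correctness of the whole procedure is immediate from the correctness proposition for Algorithm~\ref{icaps:alg:seq_mdps} applied to $P'$. For the running time I would observe three contributions: the MEC decomposition costs $\mectime = \O(m)$ by Theorem~\ref{concur:thm:mec}; the contraction, the relabeling of the target sets, and the removal of self-loops are each carried out by a single linear scan over the vertices, edges, and the sets $T_i$, hence in $O(m + \sum_{i=0}^k |T_i|)$ time; and Algorithm~\ref{icaps:alg:seq_mdps} contributes $O(m \log n + \sum_{i=0}^k |T_i|)$. Summing these gives the claimed bound $O(\mectime + m \log n + \sum_{i=0}^k |T_i|)$.

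I expect the strategy-lifting argument for the MEC contraction to be the delicate part. One must be careful that the globally required order of the sequential targets is respected even though a MEC lets player~1 reorder targets freely inside it, and that the ``worst-case random successor'' behaviour captured by $\best_v$ for $v \in V_R$ interacts correctly with the almost-sure guarantees supplied by the intra-MEC reachability strategy; in particular the ``only if'' direction must rule out that leaving and re-entering the contracted region could collect more target indices than a single member-wise tour, which is exactly where the MEC-maximality (and the absence of MECs in $P'$) is used.
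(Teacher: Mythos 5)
Your proposal follows exactly the paper's route: compute the MEC decomposition (Theorem~\ref{concur:thm:mec}), contract each MEC into a player-1 vertex that inherits the union of its members' target indices, run Algorithm~\ref{icaps:alg:seq_mdps} on the resulting MEC-free MDP using its correctness and running-time propositions, and sum the three costs to get $O(\mectime + m \log n + \sum_{i=0}^k |T_i|)$. The contraction-preservation step you flag as delicate is precisely the observation the paper makes in one sentence (inside a MEC every vertex is almost-surely reachable from every other, regardless of vertex type), so your strategy-translation argument is just a more detailed rendering of the same approach.
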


\subsubsection{Algorithm for Games.}
Given a game graph $\Gamma$ with sequential reachability objectives $\Seq{\Targets}$ where $\Targets = (T_1,\dots,T_k)$, the basic algorithm (stated as Algorithm~\ref{icaps:alg:seqt_games})
performs $k$ player-1 attractor computations. 
It starts with computing the attractor $S_{k}=\attr{1}{T_k}{\Gamma}$ of $T_k$, and then iteratively computes
the sets $S_{\ell}=\attr{1}{S_{\ell+1} \cap T_{\ell}}{\Gamma}$ for $1 \leq \ell < k$,	
and finally returns the set $S_1$ as the start vertices from which player~1 can reach all the target sets in the given order.
This gives an $O(k\cdot m)$-time algorithm. 
Note that for $k = \Theta(n)$ the running time is quadratic in the input size. 

\begin{algorithm}[H]
	\KwIn{Game graph $\GG = ((V,E), \ls V_1, V_2 \rs)$ and\\ \hspace{32pt} target sets $T= (T_1,\dots, T_k)$}
	$S_{k+1} \gets V$\;
	$\ell \gets k$\;
	\While{$\ell > 0$}{
		$S_{\ell} \gets \attr{1}{T_{\ell} \cap S_{\ell+1}}{\Gamma}$\;
		$\ell \gets \ell-1$\;
	}
	\Return{$S_1$}\;
	\caption{Sequential Reachability for Games.}\label{icaps:alg:seqt_games}
\end{algorithm}

\subsection{Conditional Lower Bounds}
We present CLBs for game graphs based on the conjectures STC and OVC
which establish the CLBs for the fourth row of Table~\ref{icaps:tab:complexity}.
Notice that we cannot provide conditional lower bounds for graphs and MDPs 
as linear time algorithms for these two models exist.

\begin{theorem}\label{icaps:thm:seq:lb:gg}
	For all $\epsilon > 0$, checking if a vertex has a winning strategy for the sequential
	reachability problem in game graphs does not admit
	\begin{enumerate}
		\item an $O(m^{2-\epsilon})$ algorithm under Conjecture~\ref{icaps:conj:ovc},\label{icaps:thm:seq:lb:gg1}
		\item an $O({(k\cdot m)}^{1-\epsilon})$ algorithm under Conjecture~\ref{icaps:conj:ovc},\label{icaps:thm:seq:lb:gg2}
		\item a combinatorial $O(n^{3-\epsilon})$ algorithm under Conjecture~\ref{icaps:conj:stc} and\label{icaps:thm:seq:lb:gg3}
		\item a combinatorial $O({(k\cdot n^2)}^{1-\epsilon})$ algorithm under Conjecture~\ref{icaps:conj:stc}.\label{icaps:thm:seq:lb:gg4}
	\end{enumerate}
\end{theorem}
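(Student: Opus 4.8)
The plan is to establish all four bounds through two reductions that mirror the coverage reductions of the previous section: a reduction from the Orthogonal Vectors problem (OV) for the OVC-based bounds~\ref{icaps:thm:seq:lb:gg1}--\ref{icaps:thm:seq:lb:gg2}, and a reduction from triangle detection for the STC-based bounds~\ref{icaps:thm:seq:lb:gg3}--\ref{icaps:thm:seq:lb:gg4}. In both cases I would build a game graph of near-linear size together with a sequential reachability objective whose length $k$ is $\Theta(N)$ (respectively $\Theta(n)$), so that the generic $O(k\cdot m)$ attractor-based algorithm (Algorithm~\ref{icaps:alg:seqt_games}) is the bound to beat, and so that a subquadratic algorithm for the constructed instance would refute the corresponding conjecture. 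The guiding idea is that the coverage query $\Coverage{T_1,\dots,T_k}$ asks for $k$ independent reachability witnesses, whereas the objective $\Seq{T_1,\dots,T_k}$ asks for a single play realizing all $k$ targets \emph{in order}; the point of the reduction is to force these two notions to coincide on the constructed instance, so that the $\Theta(N)$ separate orthogonality (respectively triangle) tests that already live in Reductions~\ref{icaps:red:query_games_ov} and~\ref{icaps:red:query_games_triangle} are threaded into one play.

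For the OV reduction I would start from the vertex set of Reduction~\ref{icaps:red:query_games_ov} --- a player-2 selector $s$, the vectors $S_1,S_2$, and the coordinate vertices $C$ --- and take the sequence $T_1,\dots,T_N$ with $T_\ell$ derived from $y_\ell\in S_2$. Player~2 commits to a vector $x_i$ at $s$, and the sequential objective then forces player~1 to exhibit, phase by phase, a common coordinate of $x_i$ with each $y_\ell$; such a coordinate exists for every $\ell$ exactly when $x_i$ is orthogonal to no vector of $S_2$. Hence $s$ is winning for $\Seq{T_1,\dots,T_N}$ iff no vector $x_i$ that player~2 may choose is orthogonal to any $y_\ell$, i.e.\ iff the OV instance has no orthogonal pair, which is precisely the reachability characterisation already used for coverage. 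The triangle reduction is analogous, starting from the four layered copies of Reduction~\ref{icaps:red:query_games_triangle} and using an $n$-length sequence indexed by the vertices of $G$. Since both constructions keep $O(N\cdot d)=\widetilde{O}(N)$ (respectively $O(m)$) vertices and edges, are computable in linear time, and have $k=\Theta(N)$ (respectively $\Theta(n)$) targets, substituting into Conjecture~\ref{icaps:conj:ovc} and Conjecture~\ref{icaps:conj:stc} yields the $O(m^{2-\epsilon})$, $O((k\cdot m)^{1-\epsilon})$, combinatorial $O(n^{3-\epsilon})$, and combinatorial $O((k\cdot n^2)^{1-\epsilon})$ lower bounds exactly as in Theorem~\ref{icaps:thm:cover:lb:mdps} and Theorem~\ref{icaps:thm:cover:lb:gg}.

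The hard part will be the gadget that threads the $\Theta(N)$ tests into a single play without either (a) letting player~2's committed vector drift between phases or (b) incurring a quadratic blowup. A naive design that returns the play to the chosen $x_i$ after each target is reached needs the vector index to persist through the shared coordinate and target vertices, which forces an all-pairs (product) structure and destroys the near-linear size that the conjectures require. I would therefore insert, between consecutive targets of the sequence, a small anchoring gadget --- an edge structure that re-enters the coordinate block associated with the already-fixed vector while preventing a switch of vector --- and argue, in the style of the closed-set and attractor arguments of Proposition~\ref{prop:closedsets} and Observation~\ref{obs:attractorclosedset}, that this keeps player~2's single choice binding across all $N$ phases. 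Getting the anchoring to be simultaneously faithful (player~1 must not be able to cheat by changing vectors between phases) and cheap ($\widetilde{O}(N)$ total edges and $\widetilde{O}(N)$ total target-set size) is the crux; once it is in place, the two correctness directions and the size accounting are routine adaptations of the coverage proofs.
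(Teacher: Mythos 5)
Your overall skeleton matches the paper's: extend the coverage reductions (Reductions~\ref{icaps:red:query_games_ov} and~\ref{icaps:red:query_games_triangle}) by routing plays back to the player-2 start vertex $s$, use $k=\Theta(N)$ (resp.\ $\Theta(n)$) targets, and read off the four bounds from the two conjectures exactly as in Theorem~\ref{icaps:thm:cover:lb:gg}. Your final characterisation ($s$ is winning for $\Seq{\Targets}$ iff the OV instance has no orthogonal pair) is also the one the paper proves. But the mechanism you propose to get there is misconceived, and it makes the crux of your proposal --- the ``anchoring gadget'' --- both unconstructed and unnecessary.

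The gap is this: you insist that player~2 \emph{commits} to a single vector $x_i$ at $s$ and that this commitment must be kept ``binding across all $N$ phases''; you correctly observe that enforcing such persistence naively forces a product construction and quadratic size, and your fix is an anchoring gadget whose construction and faithfulness you explicitly leave open. No such gadget is needed, because persistence is not needed. In the paper's Reduction~\ref{icaps:red:seq_games_ov}, every $y \in S_2$ simply gets an edge back to $s$, and player~2 is \emph{free to re-choose a different} $x_i$ \emph{at every return to} $s$. This drift is not a defect to be engineered away; it is exactly what encodes OV\@: player~1 can complete phase $\ell$ iff \emph{for every} $x_i$ that player~2 may pick at this visit to $s$ there is a common coordinate leading to $y_\ell$, so $s$ is winning iff no pair $(x_i,y_\ell)$ is orthogonal --- and if some pair $(x_i,y_j)$ is orthogonal, player~2's winning strategy is simply to pick $x_i$ at every visit to $s$, so that $T_j$ is never reached. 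The per-phase universal quantification over player~2's fresh choices already yields the ``for all pairs'' structure; committing player~2 once would prove the same equivalence but only via a gadget you cannot build within near-linear size (and do not need). The same remark applies to the triangle case, where the paper's Reduction~\ref{icaps:red:seq_games_triangle} adds edges from the fourth copy $V_4$ back to $s$ and lets player~2 re-select a vertex of $V_1$ in every round. With that correction --- drop the commitment framing and the anchoring gadget, and keep only the back edges --- the rest of your proposal (both correctness directions and the size and parameter accounting) goes through as you describe.
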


\smallskip\noindent\emph{Using the OV-Conjecture.}
Below we prove (\ref{icaps:thm:seq:lb:gg1}--\ref{icaps:thm:seq:lb:gg2}) of
Theorem~\ref{icaps:thm:seq:lb:gg} by reducing the OV problem to the sequential reachability problem in game graphs.
The reduction is an extension of Reduction~\ref{icaps:red:query_mdp_ov}, where we 
(a) produce a player-2 vertex instead of a random vertex and 
(b) also every vertex of $S_2$ has an edge back to $s$.

\begin{example}
	Let the OV instange be given by $S_1 = \{(1,1,0), (1,0,1), (0,1,1)\},\\ S_2 = \{(1,0,1), (1,1,0),
	(0,1,0) \} $. Notice that the second vector in $S_1$ and the third vector in $S_2$
	are orthogonal. Due to the fact that $s$ is a player-2 vertex, it can choose 
	$x_2$ as the successor. There is no path from $x_2$ to $y_3$. As $T_3 = \{y_3\}$, there
is no winning strategy for player 1 from $s$ for the given sequential reachability objective.
	We illustrate the reduction in Figure~\ref{icaps:fig:seq_games_ov}.
\begin{figure}[b]
	\centering
	\includegraphics{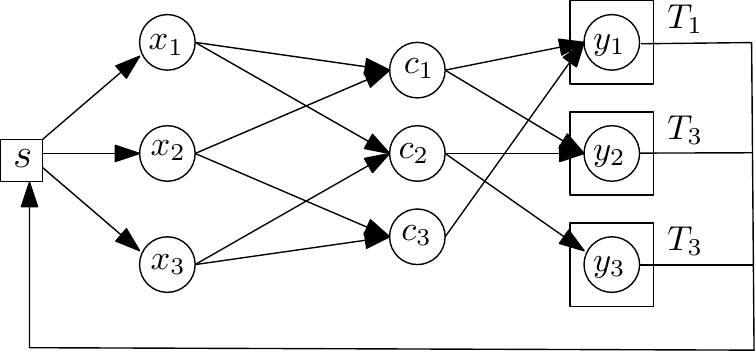}
	\caption{Reduction from OV to Sequential Reachability.}\label{icaps:fig:seq_games_ov}
\end{figure}
\end{example}

\begin{reduction}\label{icaps:red:seq_games_ov}
	Given two sets $S_1, S_2$ of $d$-dimensional vectors, we build the following
	game graph $\Gamma$.

	\begin{itemize}
		\item The vertices $V$ of the game graph are given by a start vertex $s$, 
			sets of vertices $S_1$ and $S_2$ representing the sets of vectors 
			and vertices $C = \{c_i \mid 1 \leq i \leq d\}$ representing the coordinates of the vectors in the OVC instance. 
		\item The edges $E$ of $\Gamma$ are defined as follows: the start vertex $s$ has an edge to every vertex of
			$S_1$ and every vertex of $S_2$ has an edge back to $s$; 
			furthermore for each $x_i \in S_1$ there is an edge to $c_j \in C$ iff 
			$x_i[j] = 1$ and for each $y_i \in S_2$ there is an edge from $c_j \in S_2$ to
			$y$ iff $y_i[j] = 1$.

		\item The set of vertices is partitioned into player-1 vertices $V_1  = S_1
			\cup C \cup S_2$ and player-2 vertices $V_2  = \{s\}$.
	\end{itemize}
\end{reduction}

\begin{lemma}
	Let $\Gamma$ be the game graph given by Reduction~\ref{icaps:red:seq_games_ov} with a
	sequential objective $\Seq{\Targets}$ where $\Targets = (T_1,\dots, T_k)$ and $T_i = \{y_i\}$ for $i = 1
	\dots N$.
	There exist orthogonal vectors 
	$x_i \in S_1$, $y_j \in S_2$ iff $s$ has no player-1 strategy $\sigma_1$ to ensure
	winning for the objective $\Seq{\Targets}$. 
\end{lemma}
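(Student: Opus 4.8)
The plan is to exploit the rigid cyclic structure of $\Gamma$: since $s$ is the only player-2 vertex while all of $S_1 \cup C \cup S_2$ belongs to player~1, every play starting at $s$ decomposes into rounds of the shape $s \to x \to c \to y \to s$, where player~2 chooses the vertex $x \in S_1$ at each visit to $s$ and player~1 controls the remainder of the round (first a coordinate $c \in C$, then a target $y \in S_2$, whose only outgoing edge returns to $s$). First I would record this play structure and observe that within a single round exactly one vertex of $S_2$ is visited, namely the $y$ chosen by player~1.

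The central combinatorial step is the following claim: from a vertex $x_i \in S_1$ player~1 can reach $y_\ell \in S_2$ within the current round if and only if $x_i$ and $y_\ell$ are \emph{not} orthogonal. For the ``if'' direction, non-orthogonality yields a coordinate $j$ with $x_i[j] = y_\ell[j] = 1$, so both edges $x_i \to c_j$ and $c_j \to y_\ell$ exist and player~1 takes the path $x_i \to c_j \to y_\ell$. For the ``only if'' direction, orthogonality means that for every coordinate $j$ either the edge $x_i \to c_j$ or the edge $c_j \to y_\ell$ is absent; since every path from $x_i$ to $y_\ell$ inside a round passes through some $c_j$, no such path exists.

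Using this claim I would prove the two directions of the biconditional. For the contrapositive of ``$\Leftarrow$'' (no orthogonal pair implies player~1 wins), I would equip player~1 with the strategy that maintains the index $\ell$ of the next target and, on each round, routes from whatever $x_i$ player~2 selects to $y_\ell$ via a shared coordinate, then increments $\ell$; by the claim this always succeeds because every $x_i$ is non-orthogonal to every $y_\ell$, so the resulting play visits $y_1, y_2, \dots, y_N$ in order and lies in $\Seq{\Targets}$. For ``$\Rightarrow$'' (an orthogonal pair $x_i, y_j$ implies player~1 has no winning strategy), I would let player~2 play the memoryless strategy that always moves $s \to x_i$; by the claim player~1 can never reach $y_j$ from $x_i$ in any round, so $T_j = \{y_j\}$ is never visited and $\Seq{\Targets}$ fails against every player~1 strategy.

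The main obstacle I anticipate is handling the interaction between player~2's \emph{repeated} adversarial choice at $s$ and the \emph{ordering} demanded by sequential reachability: I must argue that routing one target per round is without loss of generality (the cycle through $s$ lets player~1 defer and reach the targets in the prescribed order) and that player~2 fixing a single orthogonal $x_i$ genuinely blocks $y_j$ forever rather than merely delaying it. A minor technical point to dispatch is degenerate vectors: an all-zero $x_i$ would have no outgoing edge, violating the standing assumption that every vertex has a successor, so I would note that such vectors are trivially orthogonal to everything and can be removed by linear-time preprocessing, after which the reduction is well defined.
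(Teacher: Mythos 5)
Your proposal is correct and follows essentially the same route as the paper's proof: the key equivalence (a path from $x_i$ to $y_\ell$ exists iff the vectors are non-orthogonal), the player-2 strategy of always steering to the orthogonal $x_i$, and the player-1 strategy of reaching one target per four-step round through $s$ are exactly the paper's argument, which you merely spell out in more detail. Your extra observation about all-zero vectors (which would violate the standing assumption that every vertex has an outgoing edge) is a legitimate technical refinement the paper glosses over, handled correctly by dispatching such instances in preprocessing.
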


\begin{proof}
	Notice that the game graph $\Gamma$ is constructed in
	such a way that there is no path
	between $x_i$ and $y_j$ iff they are orthogonal in the OV instance.
	Notice that each play starting at $s$ revisits $s$ every four steps and 
	if there is no path between $x_i$ and $y_j$ then player~2 can disrupt player~1 from visiting a target $T_j$
	by moving the token to $x_i$ whenever the token is in $s$.
	However, if there is no such $x_i$ and $y_j$, player~2 cannot disrupt player~1
	from $s$ because no matter which vertex $x_i$ player~2 chooses,
	player~1 has a strategy to reach the next target set. 
	If $s$ has no player-1 strategy $\sigma_1$ to ensure winning for the objective
	$\Seq{\Targets}$ there must be a target player~1 cannot reach. This must be due to the
	fact that there is no path between some $x_i$ and $y_j$ and player~2 always
	chooses $x_i$.
\end{proof}

The number of vertices in $\Gamma$, constructed by Reduction~\ref{icaps:red:query_mdp_ov}
is $O(N)$ and the construction can be performed in $O(N \log N)$ time (recall
that $(d = \omega(\log N))$). The number of edges $m$ is $O(N \log N)$ 
and the number of target sets $k \in \theta(N) = \theta(m/\log N)$. 
Thus (\ref{icaps:thm:seq:lb:gg1}--\ref{icaps:thm:seq:lb:gg2}) in Theorem~\ref{icaps:thm:seq:lb:gg} follow.

\smallskip\noindent\emph{Using the ST-Conjecture.}
In this section we prove the results~\ref{icaps:thm:seq:lb:gg3}--\ref{icaps:thm:seq:lb:gg4} in Theorem~\ref{icaps:thm:seq:lb:gg}.
We reduce the triangle detection problem to the sequential reachability problem in game graphs.
The reduction extends Reduction~\ref{icaps:red:query_mdp_triangle}, where we 
(a) produce player-2 vertices instead of random vertices and 
(b) every vertex in the fourth copy has an edge back to $s$.

	\begin{reduction}\label{icaps:red:seq_games_triangle}
		Given an instance of triangle detection, i.e., a graph $G = (V,E)$, we build the
		following game graph $\Gamma = (V',E',\ls V'_1, V'_2 \rs)$.
		\begin{itemize}
			\item The vertices $V'$ are given as four copies $V_1,V_2,V_3,V_4$ of $V$
				and a start vertex $s$. 
			\item The edges $E'$ are defined as follows: There
				is an edge from $s$ to every $v_{1i} \in V_1$ where $i=1\dots n$. In
				addition for $1 \leq j \leq 3$ there is an edge from $v_{ji}$ to $v_{(j+1)k}$ iff
				$(v_i,v_k) \in E$. Furthermore there are edges from every $v_{4i} \in V_4$ to
				the start vertex $s$.

			\item The set of vertices $V'$ is partitioned into player-1 vertices $V'_1 =
				\emptyset$ and player-2 vertices $V'_2 = \{s\} \cup V_1 \cup V_2 \cup V_3 \cup
				V_4$.
		\end{itemize}
	\end{reduction}

	\begin{example}[Reduction triangle detection to sequential reach.\ in games]
		Consider the graph $G$ given in Figure~\ref{icaps:fig:seq_triangle_games}.
		The vertices of $G$ are player-2 vertices in $\Gamma$ and the
		graph is copied four times. The edges of $\Gamma$ go to the same target but to next copy of
		the graph. Notice that $G$ has the triangle $(v_1,v_2,v_3)$ and the constructed game
		graph $\Gamma$ enables player-2 to take the path marked by the fat edges, i.e., player-1 does not
		have a winning strategy from $s$ for the sequential reachability objective given in the reduction because
		he cannot satisfy $T_1$.
		We illustrate the example of the reduction in Figure~\ref{icaps:fig:seq_triangle_games}. 
		\begin{figure}[ht]
			\centering
			\includegraphics{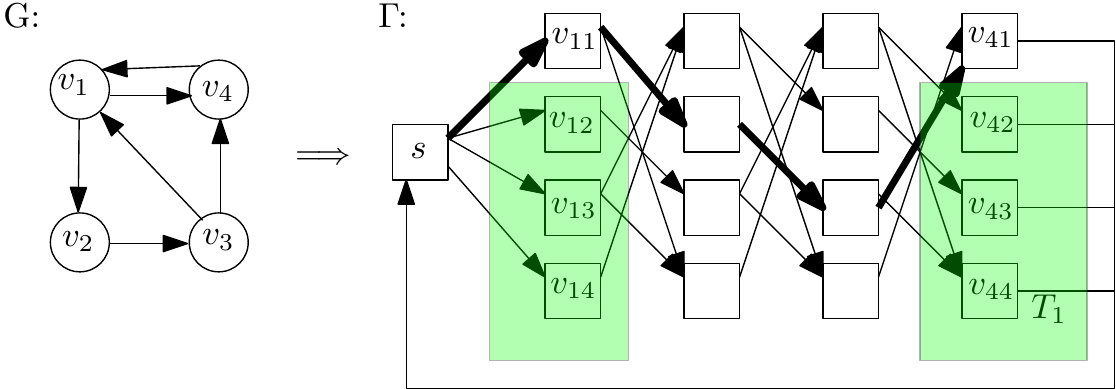}
			\caption{Reduction from Triangle to Sequential Reachability}\label{icaps:fig:seq_triangle_games}
		\end{figure}
	\end{example}

\begin{lemma}
	Let $\Gamma'$ be the game graphs given by
	Reduction~\ref{icaps:red:seq_games_triangle} with
	$\Seq{\Targets}$ as follows: $\Targets = (T_1, T_2, \dots, T_k)$ where $T_i = V_1 \setminus\{v_{1i}\} \cup
	V_4\setminus \{v_{4i}\}$ for $i = 1 \dots k$.
	The graph $G$ has a triangle iff there is no strategy $\sigma_1$ to ensure winning for the objective 
	$\Seq{\Targets}$ from start vertex s.
\end{lemma}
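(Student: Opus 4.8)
The plan is to exploit the fact that the reduction makes \emph{every} vertex a player-2 vertex, i.e.\ $V'_1 = \emptyset$. Consequently player~1 has only the trivial (empty) strategy, and $s \in \W{1}{\Seq{\Targets}}$ holds if and only if \emph{every} play starting at $s$ satisfies $\Seq{\Targets}$; equivalently, $s \notin \W{1}{\Seq{\Targets}}$ iff player~2 can steer some play that fails the objective. So the statement to prove becomes: $G$ has a triangle iff some play from $s$ does not satisfy $\Seq{\Targets}$. First I would record the structural shape of plays: the edge set is strictly layered ($s \to V_1 \to V_2 \to V_3 \to V_4 \to s$ with no intra-layer or skip edges), so, assuming every vertex of $G$ has out-degree at least one (which is without loss of generality for triangle detection and is needed so that no play gets stuck), every infinite play from $s$ decomposes into \emph{rounds} $s, v_{1i}, v_{2j}, v_{3l}, v_{4m}, s, \dots$, and such a round is realizable exactly when $(v_i,v_j),(v_j,v_l),(v_l,v_m) \in E$. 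Note also that the only target-relevant vertices a round can visit are its unique $V_1$-vertex $v_{1i}$ and its unique $V_4$-vertex $v_{4m}$, since the target sets contain only copies lying in $V_1$ and $V_4$.

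For the ``triangle $\Rightarrow$ not winning'' direction, given a triangle $(v_a,v_b,v_c)$ with $(v_a,v_b),(v_b,v_c),(v_c,v_a)\in E$, I would have player~2 repeat the round $s, v_{1a}, v_{2b}, v_{3c}, v_{4a}$ forever. This play visits, among $V_1 \cup V_4$, only $v_{1a}$ and $v_{4a}$, and by definition $T_a = (V_1\setminus\{v_{1a}\}) \cup (V_4\setminus\{v_{4a}\})$ excludes both; hence $T_a$ is never reached, the play is not in $\Seq{\Targets}$, and so $s$ is not winning for player~1.

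For the converse I would argue the contrapositive: if $G$ is triangle-free then every play satisfies $\Seq{\Targets}$. The key observation is that in any realizable round the indices $i$ and $m$ of its $V_1$- and $V_4$-vertices must differ: $i = m$ would give edges $(v_i,v_j),(v_j,v_l),(v_l,v_i)\in E$, and since self-loops were removed in preprocessing these force $i,j,l$ pairwise distinct, i.e.\ a genuine triangle. Given $i \neq m$, for \emph{every} index $p$ at most one of $i,m$ equals $p$, so at least one of $v_{1i}, v_{4m}$ lies in $T_p$; thus \emph{every} round of the play visits \emph{every} target. Since a play is infinite and therefore contains at least $k$ rounds, choosing the hit of $T_p$ inside the $p$-th round yields visit times $j_1 \le \dots \le j_k$ with $v_{j_p}\in T_p$, certifying $\Seq{\Targets}$.

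The main obstacle is reconciling the \emph{in-order} nature of sequential reachability with the fact that inside a single round $T_p$ may be hit only by $v_{4m}$ while some later $T_{p'}$ is hit only by $v_{1i}$, i.e.\ in the ``wrong'' order within the round. Spreading the $k$ targets across $k$ distinct consecutive rounds, which the infinite play always supplies, resolves this cleanly. The only other care points are the degenerate handling of out-degree-zero vertices of $G$ (so that plays are genuinely infinite and forced to cycle) and the explicit appeal to self-loop removal to guarantee that the $i=m$ case really is a triangle on three distinct vertices; both are routine once isolated.
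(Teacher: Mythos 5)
Your proof is correct and follows essentially the same route as the paper's: the same key fact (a triangle in $G$ corresponds exactly to a round leading from some $v_{1i}$ back to $v_{4i}$) and the same player-2 strategy of cycling through the triangle forever so that $T_i$ is never visited. The paper's own proof is only a terse sketch, mainly of the forward direction; your round-decomposition argument for the converse (every round hits every target once the $V_1$- and $V_4$-indices differ, and spreading the $k$ hits over $k$ successive rounds resolves the within-round ordering issue), together with your explicit handling of self-loop removal and out-degree-zero vertices, fills in details the paper leaves implicit.
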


\begin{proof}
	For the correctness of the reduction notice that there is a triangle in the graph $G$ iff 
	there is a path from some vertex $v_{1i}$ in the first copy of $G$ to the same
	vertex in the fourth copy of $G$, $v_{4i}$ in $P$. 
	Player~2 then has a strategy to always visit only 
	$v_{1i}$ from the first copy and only $v_{4i}$ from the fourth copy 
	which prevents player~1 from visiting target $T_i$.
\end{proof}

The size and the construction time of graph $\Gamma$, given by
Reduction~\ref{icaps:red:query_mdp_triangle}, are linear in the size of the
original graph $G$ and we have $k = \Theta(n)$ target sets. 
Thus (\ref{icaps:thm:seq:lb:gg3}--\ref{icaps:thm:seq:lb:gg4}) in Theorem~\ref{icaps:thm:seq:lb:gg} follow.

\section{Discussion and Conclusion}

In this chapter, we presented lower bound results for planning objectives in an explicit state space.
We next discuss implications from these results for the same planning objectives in factored models and then
end this chapter with concluding remarks.

\subsection{Implications for factored models}
\begin{sloppypar}
Here we relate our results for the explicit state space to factored models, like STRIPS~\cite{AIBook}. 
For the reachability problem different conditional lower bounds were established
in~\cite{Aghighi2016,backstrom2017time}. In the following, we discuss to which extent our conditional lower
bounds provide lower bounds for the corresponding problems in the factored models.
We use the AllCoverage problem on graphs as an example but 
similar arguments apply to the other planning problems as well.
A planning instance of the AllCoverage problem for graphs in a factored model is given by 
variables $V$,
the domain of the variables $D$, 
the actions $A$ and
a set of conditions defining the target sets $s_{G_1}, \dots, s_{G_k}$.
A state is a function that specifies a value in $D$ to every variable in $V$. 
The planner can go from one state to another by applying the actions defined in the 
function $A$. 
$A$ is a mapping from an input state to an output state, i.e., 
the possible transitions between states are defined by the actions.
The goal of the planner is to output all states which can reach all sets $s_{G_i}$ $(1 \leq i \leq
k)$ using the actions described in $A$. 
We next investigate how our graph based lower bounds can be interpreted in the factored model.
To obtain lower bounds similar to Theorem~\ref{icaps:thm:allcover:lb:graphs} we aim to encode
the graphs computed by our reductions (see e.g. Reduction~\ref{icaps:red:allcoverage_ov} and Reduction~\ref{icaps:red:allcoverage_bmm}) in the factored model.  
A naive encoding that simply numbers the vertices and then uses a binary encoding of these numbers 
can represent $n$ states with $v = \log_2 n$ variables. 
This encoding would give lower bounds w.r.t.\ the number of variables $v$ (and thus the state space)
that exclude  $O(k \cdot 2^{(1-\epsilon)v} \cdot poly(v) )$ algorithms for the AllCoverage problem
in the factored planning model. However, these lower bounds
are only w.r.t.\ the number of variables and not w.r.t.\ the total size of the problem instance which, in particular, also includes the number of actions. 
The naive encoding requires as many actions as there are edges in the graph, i.e., the size of the
problem instance is dominated by the number of actions. Thus, using this naive encoding we do not get interesting lower bounds w.r.t.\ the instance size.
To obtain lower bounds w.r.t.\ the instance size we have to encode the vertices of the graph in a way 
that also allows us to encode the edges of the graphs compactly.
For our reductions from triangle detection, i.e., Reduction~\ref{icaps:red:allcoverage_bmm}, the edge relation can be 
rather arbitrary and, thus, the reduction is unlikely to allow for a compact representation in the factored model 
(without making additional assumptions).
In contrast, our reductions from the OV-problem, i.e., Reduction~\ref{icaps:red:allcoverage_ov}, are
well-suited for such a compact encoding as the resulting graphs are sparse
and the edge relation is defined systematically based on the content of the vertices. 
Thus, if one uses the binary vectors as a basis for the 
encoding of the vertices in the factored model then the transitions can be represented with a relatively low number of actions.
However, the details of such an encoding and the corresponding lower bounds depend on the actual factored model. 
Investigating such encodings for concrete factored models is beyond the scope of this chapter and we thus leave it open as an interesting direction for future research.
\end{sloppypar}

\subsection{Concluding Remarks}
In this chapter, we study several natural planning problems in graphs, MDPs, and game graphs, 
which are basic algorithmic problems in artificial intelligence. 
Our main contributions are a sub-quadratic algorithm for sequential reachability in
MDPs, and quadratic conditional lower bounds.
Note that graphs are a special case of both MDPs and game graphs, and the 
algorithmic problems are simplest for graphs, and in all cases except for AllCoverage, we have
linear-time upper bounds. 
The key highlight of our results is an interesting separation of MDPs and game graphs:
for reachability, MDPs are harder than game graphs;
for the coverage problem, both MDPs and game graphs are hard 
(quadratic CLBs);
for sequential reachability, game graphs are harder than MDPs.


In this chapter, we clarified the algorithmic landscape of basic planning problems with CLBs and better algorithms.
An interesting direction of future work would be to consider CLBs for 
other polynomial-time problems in planning and AI in general. For MDPs with
sequential reachability objectives, we establish sub-quadratic upper bounds, and hence the
techniques of the chapter that establish quadratic CLBs are not applicable.
Other CLB techniques for this problem are an interesting topic to
investigate as future work.

	\chapter[Quasipolynomial Set-Based Symbolic Algorithms for Parity Games][QP.\@ Set-Based Symbolic
	Algs.\@ f.\@ Parity Games]{Quasipolynomial Set-Based Symbolic Algorithms for Parity Games}\label{cha:lpar}
	In this chapter, we present the first quasi-polynomial symbolic algorithm for parity games.

\section{Introduction}\label{lpar:sec:intro}

We present new contributions related to algorithms for parity
games in the set-based symbolic model of computation.

\para{Parity games.} Games on graphs are central in many
applications in computer science, especially, in the formal analysis of reactive
systems. The vertices of the graph represent states of the system, the edges
represent transitions of the system, the infinite paths of the graph represent
traces of the system and the players represent the interacting agents.  The
reactive synthesis problem (Church's problem~\cite{Church62}) is equivalent to
constructing a \emph{winning strategy} in a graph
game~\cite{BuchiL69,RamadgeW87,PnueliR89}. Besides reactive synthesis, the
game graph problem has been used in many other applications, such as
(1)~verification of branching-time properties~\cite{EmersonJ91},
(2)~verification of open systems~\cite{AlurHK02}, 
(3)~simulation and refinement between reactive systems~\cite{Milner71,HenzingerKR02,AHKV98};
(4)~compatibility checking~\cite{InterfaceTheories}, 
(5)~program repair~\cite{JobstmannGB05},
(6)~synthesis of programs~\cite{CCHRS11}; to
name a few. Game graphs with \emph{parity winning} conditions are particularly
important since all $\omega$-regular winning conditions (such as safety,
reachability, liveness, fairness) as well as all Linear-time Temporal Logic
(LTL) winning conditions can be translated into parity
conditions~\cite{Safra88,Safra89}.  In a parity game, every vertex of the game
graph is assigned a non-negative integer priority from
$\set{0,1,\ldots,\numprio-1}$, and a play is winning if the minimum priority
visited infinitely often is even. Game graphs with parity conditions can model
all the applications mentioned above and are also equivalent to the modal
$\mu$-calculus~\cite{Kozen83} model-checking problem~\cite{EmersonJ91}.  Thus
the parity games problem is a core algorithmic problem in formal methods, and
has received wide attention over the
decades~\cite{EmersonJ91,BrowneCJLM97,Seidl96,Jurdzinski00,VogeJ00,JurdzinskiPZ08,Schewe17,calude2017stoc,lazic2017qp}.

\para{Models of computation: Explicit and symbolic
algorithms.} For the algorithmic analysis of parity games, two models of
computation are relevant.  First, the standard model of \emph{explicit}
algorithms, where the algorithms operate on the explicit representation of the
game graph.  Second, the model of \emph{implicit or symbolic} algorithms, where
the algorithms do not explicitly access the game graph but operate with a set
of predefined operations.  For parity games, the most relevant class of symbolic
algorithms are called \emph{set-based symbolic algorithms}, where the allowed
symbolic operations are: (a)~basic set operations such as union, intersection,
complement, and inclusion; and (b)~one step predecessor (Pre) operations
(see~\cite{ClarkeBook,deAlfaroHM01,HMR05}).

\para{Significance of set-based symbolic algorithms.} We
describe the two most significant aspects of set-based symbolic algorithms.
\begin{compactenum}
\item Consider large-scale finite-state systems, e.g., hardware circuits,
or programs with many Boolean variables or bounded-domain integer variables.
While the underlying game graph is described implicitly (such as program code),
the explicit game graph representation is huge (e.g., exponential in the number
of variables).  The implicit representation and symbolic algorithms often do not
incur the exponential blow-up, which is inevitable for algorithms that require
the explicit representation of the game graph.  Data-structures such as Binary
Decision Diagrams (BDDs)~\cite{bryant1986graph} (with well-established tools
e.g. CuDD~\cite{CuDD}) support symbolic algorithms that are used in verification
tools such as NuSMV~\cite{Cimatti2000}.

\item In several domains of formal analysis of infinite-state systems,
such as games of hybrid automata or timed automata, the underlying state space
is infinite, but there is a finite quotient. 
Symbolic algorithms provide a practical and scalable approach for the analysis of such systems:
For many applications, the winning set is characterized by $\mu$-calculus formulas with one-step predecessor 
operations which immediately give the desired set-based symbolic algorithms~\cite{deAlfaroHM01,deAlfaroFHMS03}. 
Thus, the set-based symbolic model of computation is an equally important theoretical model of computation to be
studied as the explicit model.
\end{compactenum}

\para{Symbolic resources.} In the explicit model of
computation, the two important resources are time and space. Similarly, in the
symbolic model of computation, the two important resources are the number of
symbolic operations and the symbolic space. 
\begin{compactitem}
\item \emph{Symbolic operations:} Since a symbolic algorithm uses a set of
predefined operations, instead of time complexity, the first efficiency measure
for a symbolic algorithm is the number of symbolic operations required.  Note
that basic set operations (that only involve variables of the current state) are less resource
intensive compared to the predecessor operations (that involve both variables of
the current and of the next state).  Thus, in our analysis, we will distinguish
between the number of basic set operations and the number of predecessor
operations. 

\item \emph{Symbolic space:} We refer to the number of sets stored by a set-based symbolic algorithm as the
symbolic space for the following reason: A set that contains all vertices, or a set that contains all vertices
where the first variable is true, represent each $\Theta(n)$ vertices, but can be
represented as BDD of constant size.  While the size of a set and the size of its symbolic
representation is notoriously hard to characterize (e.g., for BDDs it can
depend on the variable reordering), the symbolic model of computation counts every
set as unit symbolic space, and the symbolic space requirement is
thus the maximum number of sets required by a symbolic algorithm. 
\end{compactitem}
\begin{sloppypar}
The goal is to find algorithms that minimize the symbolic space (ideally
poly-logarithmic) and the symbolic operations.
\end{sloppypar}

\para{Previous results.} 
We summarize the main previous results for parity games on graphs 
with $n$ vertices, $m$ edges, and $\numprio$ priorities. 
To be concise in the following discussion, 
we ignore denominators in~$\numprio$ in the bounds.
\begin{compactitem}

\item \begin{sloppypar}\emph{Explicit algorithms.} The classical algorithm for parity games
requires $O(n^{\numprio-1}\cdot m)$ time and linear
space~\cite{Zielonka98,McNaughton93}, which was then improved by the
small progress measure algorithm that requires $O(n^{\numprio/2}\cdot m)$ time
and $O(\numprio \cdot n)$ space~\cite{Jurdzinski00}.  Many improvements
have been achieved since then, such as the big-step algorithm~\cite{Schewe17},
the sub-exponential time algorithm~\cite{JurdzinskiPZ08}, an improved algorithm
for dense graphs~\cite{ChatterjeeHL15}, and the strategy-improvement
algorithm~\cite{VogeJ00}, but the most important breakthrough was achieved in 2017 
where a quasi-polynomial time $O(n^{\lceil \log\numprio \rceil+6})$ algorithm
was obtained~\cite{calude2017stoc}.  While the original algorithm
of~\cite{calude2017stoc} required quasi-polynomial time and space, a succinct
small progress measure based algorithm~\cite{lazic2017qp} and value-iteration
based approach~\cite{FearnleyJS0W17} achieve the quasi-polynomial time bound
with quasi-linear space. However, all of the above algorithms are inherently
explicit algorithms.  
\end{sloppypar}

\item \emph{Set-based symbolic algorithms.} The basic set-based symbolic
algorithm (based on the direct evaluation of the nested fixed point of the
$\mu$-calculus formula) requires $O(n^\numprio)$ symbolic
operations and $O(\numprio)$ space~\cite{EmersonL86}.  In a breakthrough
result~\cite{BrowneCJLM97} presented a set-based symbolic algorithm that
requires $O(n^{\numprio/2+1})$ symbolic operations and $O(n^{\numprio/2+1})$
symbolic space (for a simplified exposition see~\cite{Seidl96}). In recent
work~\cite{chatterjee2017symbolic}, a new set-based symbolic algorithm was
presented that requires $O(n^{\numprio/3+1})$ symbolic operations and $O(n)$
symbolic space, where the symbolic space requirement is $O(n)$ even with a constant
number of priorities. 

\end{compactitem}

\para{Open questions.}
Despite the wealth of results for parity games, many fundamental algorithmic
questions are still open. Besides the major and long-standing open question of
the existence of a polynomial-time algorithm for parity games, two important open
questions about set-based symbolic algorithms are as follows:
\begin{compactitem} 
\item \emph{Question~1.} Does there exist a set-based symbolic algorithm that
requires only quasi-polynomially many symbolic operations?

\item \emph{Question~2.} Given the $O(\numprio)$ symbolic space requirement of
the basic algorithm, whereas all other algorithms require at least $O(n)$
space (even for a constant number of priorities) an important question is:
Does there exist a set-based symbolic algorithm that requires
$\widetilde{O}(\numprio)$ symbolic space (note that $\widetilde{O}$ hides poly-logarithmic factors), but beats the number of symbolic
operations of the basic algorithm?
This question is especially relevant since in
many applications the number of priorities is small, e.g., in
determinization of $\omega$-automata, the number of priorities is
logarithmic in the size of the automata~\cite{Safra88}.
\end{compactitem}

\begin{table}[!t]
	\centering
	\renewcommand{\arraystretch}{1.3}
	\caption{Set-Based Symbolic Algorithms for Parity
		Games.}\label{lpar:tab:comparison}
	\begin{tabular}{@{}lcc@{}}
		\toprule
		reference & symbolic operations & symbolic space\\
		\midrule
		\cite{EmersonL86,Zielonka98} & $O(n^\numprio)$ & $O(\numprio)$ \\
		\cite{BrowneCJLM97,Seidl96} & $O(n^{\numprio/2+1})$ &
		$O(n^{\numprio/2+1})$\\
		\cite{chatterjee2017symbolic} & $O(n^{\numprio/3+1})$ & $O(n)$ \\
		Thm.~\ref{lpar:thm:ordered_main},\!~\ref{lpar:thm:reducedspace_main} & $n^{O(\log
		\numprio)}$ & $O(\numprio \log n)$\\
		\bottomrule
	\end{tabular}
\end{table}

\para{Our contributions.} 
In this work, we not only answer the above open questions (Question~1 and
Question~2) in the affirmative but also show that both can be achieved by the same
algorithm:
\begin{compactitem}
\item First, we present a black-box set-based symbolic algorithm based on
explicit progress measure algorithm for parity games that uses $O(n)$
symbolic space and $n^{O(\log \numprio)}$ symbolic operations.  
There are two important consequences of our algorithm: (a)~First, given the
ordered progress measure algorithm (which is an explicit algorithm), as a consequence of our
black-box algorithm, we obtain a set-based symbolic algorithm for parity
games that requires quasi-polynomially many symbolic operations and $O(n)$
symbolic space.  (b)~Second, any future improvement in progress measure
based explicit algorithm (such as polynomial-time progress measure
algorithm) would immediately imply the same improvement for set-based
symbolic algorithms.  Thus we answer Question~1 in the affirmative and also
show that improvements in explicit progress measure algorithms carry over to
symbolic algorithms.

\item \begin{sloppypar} Second, we present a set-based symbolic algorithm that requires quasi-polynomially
many symbolic operations and $O(\numprio \cdot \log
n)=\widetilde{O}(\numprio)$ symbolic space.  Thus we not only answer
Question~2 in affirmative, we also match the number of
symbolic operations with the current best-known bounds for explicit
algorithms.  Moreover, for the important case of $\numprio \leq \log
n$, our algorithm requires polynomially many symbolic operations and
poly-logarithmic symbolic space.\end{sloppypar}
\end{compactitem}
We compare our main results with previous set-based symbolic algorithms in Table~\ref{lpar:tab:comparison}.

\para{Symbolic Implementations.}
Recently, symbolic algorithms for parity games received attention from a practical
perspective: First, three explicit algorithms (Zielonka's recursive algorithm,
Priority Promotion~\cite{BenerecettiDM16} and Fixpoint-Iteration~\cite{BFL14}) were converted to symbolic implementations~\cite{SWW18}. 
The symbolic solvers had a huge performance gain compared to the corresponding
explicit solvers on several of practical instances. 
Second, four symbolic algorithms to solve parity games were compared to their
explicit versions (Zielonka's recursive algorithm, small progress measure, and an automata-based algorithm~\cite{KV98,SMPV16})
\cite{SMV18}. For the symbolic versions of the small progress measure, two implementations were considered: (i) Symbolic
Small Progress Measure using Algebraic Decision Diagrams~\cite{BKV04} and (ii) the Set-Based Symbolic
Small Progress Measure~\cite{chatterjee2017symbolic}. The symbolic
algorithms were shown to perform better in several structured instances.

\para{Other related works.} 
Besides the discussed theoretical results on parity games,
there are several practical approaches for parity games, such as, 
(a)~accelerated progress measure~\cite{deAlfaroF07}, 
(b)~quasi-dominion~\cite{BenerecettiDM16}, 
(c)~identifying winning cores~\cite{Vester16},
(d)~BDD-based approaches~\cite{KP12,KP14}, and 
(e)~an extensive comparison of various solvers~\cite{dijk2018oink}.
A straightforward symbolic implementation (not set-based) of small progress measure was
done in~\cite{BKV04} using \emph{Algebraic Decision Diagrams (ADDs)} and BDDs. 
Unfortunately, the running time is not comparable with our results
as using ADDs breaks the boundaries of the Set-Based Symbolic Model: ADDs
can be seen as BDDs which allow the use of a finite domain at the leaves~\cite{BFGHMPS1997}.
Recently, a novel approach for solving parity games in quasi-polynomial time
which uses the register-index was introduced~\cite{Lehtinen18}.
Moreover,~\cite{Lehtinen18} presents a $\mu$-calculus formula describing the
winning regions of the parity game with alternation depth based on the
register-index. The existence of such a $\mu$-calculus formula does not immediately imply
a quasi-polynomial set-based symbolic algorithm due to constructing the formula
using the register-index. 

\smallskip\noindent
Our work considers the theoretical model of symbolic computation 
and presents a black-box algorithm as well as a quasi-polynomial algorithm,
matching the best-known bounds of explicit algorithms. 
Thus our work makes a significant contribution towards the theoretical understanding of symbolic computation for parity games.

\section{The Progress Measure Algorithm}\label{lpar:ss:pm}
\emph{High-level intuition.} Let $\PG = (V,E, \langle V_1,
V_2 \rangle, p)$ be a parity game and let $(\WOrder, \prec)$ be a finite total order
with a maximal element $\top$ and a minimal element $\min$.
Let $C = \{1,\dots,d-1\}$ be the set of all priorities.
A \emph{ranking function} is a function
$f$ which maps every vertex in $V$ to a value in $\WOrder$. The value of a
vertex $v$ concerning the ranking function $f$ is called \emph{rank} of
$v$. The rank $f(v)$ of a vertex $v$ determines how ``close'' the vertex is to
being in $W_z$, the winning set of a fixed player $z$.
Initially, the rank of every vertex is the minimal value of $\WOrder$. The
\emph{progress measure algorithm} iteratively
increases the rank of a vertex $v$ with an operator called $\Lift$ with
respect to the successors of $v$ and another function called ``$\slift$''. The algorithm
terminates when the rank of no vertex can be increased any further, i.e., the
least fixed point of $\Lift$ is reached. We call the least simultaneous fixed
point of all $\Lift$-operators \emph{progress measure}. 
%
When the rank of a vertex is the maximal element of the total order it is
declared winning for player $z$. The rest of the vertices are declared winning for the adversarial player
$\z$. 

\smallskip\noindent\emph{Ranking Function.}
Let $\WOrder$ be a total order with a minimal element $\mina$ and maximal element
$\top$. A ranking function is a function ${f: V \mapsto \WOrder}$.

\smallskip\noindent\emph{The $\best$ function.} The \emph{$\best$} function
represents the ability of player $z$,  
to choose the vertex in $\Out{v}$ with the
minimal ranking function. Analogously, it constitutes the ability of
player $\z$ to choose the vertex in $\Out{v}$ with the
maximal ranking function. Formally, the function $\best$ is defined for a vertex $v$ and a
ranking function $f$ as follows:
\begin{equation*}
\best(f,v)  = \begin{cases}
	\max \{ f(w) \mid w \in \Out {v} \} & \text{if } v \in V_{\z}\\
	\min \{ f(w) \mid w \in \Out{v} \} & \text{if } v \in V_z
\end{cases}
\end{equation*}

\para{The $\slift$-function.}
The function $\slift: \WOrder \times C \mapsto \WOrder$ defines how the rank of
a vertex $v$ is increased according to the rank $r$ of a successor vertex,
and the priority $p(v)$ of $v$. The $\slift$ function 
needs to be monotonic in the first argument. Notice that we do not need
information about the graph to compute the $\slift$ function. In all known
progress measures, the $\slift$ function is computable in constant time.

\smallskip\noindent\emph{The $\Lift$-operation.}
The $\Lift$-operation potentially increases the rank of a vertex $v$ according to
its priority $p(v)$ and the rank of all its successors in the graph.\footnote{Notice that in the original definition~\cite{Jurdzinski00} the $\slift$ function is applied to all successors and the best of them is chosen subsequently.
As the $\slift$ is monotone in the first argument the two definitions are equivalent.}
\begin{equation*}
\Lift(f,v)(u) = \begin{cases}
	\slift(\best(f,v), p(v)) & \text{if $u = v$}\\
	f(u)					  & \text{otherwise}
\end{cases}
\end{equation*}
A ranking function is
a \emph{progress measure} if it is the least simultaneous fixed point of all
$\Lift(\cdot,v)$-operators.

\smallskip\noindent\emph{The Progress Measure Algorithm.}
The progress measure algorithm initializes the ranking function $f$ with the
minimum element of $\WOrder$. Then, the $\Lift(\cdot,v)$-operator is computed in an
arbitrary order regarding the vertices. The winning set of player z can be obtained from a progress
measure by selecting those vertices whose rank is $\top$.
Notice that we need to define the total order $(\WOrder,\prec)$ and a function
$\slift$ to initialize the algorithm. 

For example, the following instantiations of the progress measure algorithm
determine the winning set of a parity game:
(i) Small Progress Measure~\cite{Jurdzinski00}, (ii) Succinct Progress
Measure~\cite{lazic2017qp} and the (iii) The Ordered
Approach~\cite{FearnleyJS0W17}. 
The running time is dominated by the size of $\WOrder$.
For a discussion on the state-of-the-art size of $\WOrder$ we refer
the reader to Remark~\ref{lpar:rem:bound}.

\section{Set-Based Symbolic Black Box Progress Measure Algorithm}\label{lpar:sec:bb3}
In the symbolic setting, a parity game $(\Gamma,p)$ is given by a game graph $\Gamma$ and $V_i =
\{v \in V \mid p(v) = i\}$ for
$i \in C = \{1,\dots,d-1\}$.
In this section, we briefly present a basic version of a set-based symbolic progress measure
algorithm. The key idea is to compute the Lift-operation with a
set-based symbolic algorithm. Then, we improve the basic 
version to obtain our black box set-based symbolic progress measure algorithm.
Finally, we prove its correctness and analyze the symbolic resources.

\smallskip\noindent\emph{Basic black box Algorithm.}
Throughout the algorithm, we maintain the family $\mathbf{S}$ of sets of vertices, which contains a set 
for every element in $\WOrder$, i.e.,  $\S = \{ S_r \mid r \in \WOrder\}$. 
Intuitively, a vertex $v \in S_r$ has rank $f(v) = r$.
Initially, we put each vertex into the set with the minimal value of $\WOrder$,
i.e., $S_\mina$. In each iteration, we consider all non-empty sets $S_r \in \S$: 
The algorithm checks if the ranking function of the predecessors of the vertices
in $S_r$ must be increased, i.e. $\Lift(f,v)(v) \succ f(v)$ where $v \in \Pre{}{S_r}$, 
and if so, performs the $\Lift$-operator for the predecessors.
We repeat this step until the algorithm arrives at a fixed point.

\smallskip\noindent\emph{Performing a Lift operation.} 
To compute the $\Lift$-operation in the set-based symbolic setting we need to
compute two functions for the predecessors of $S_r$: (1)~the $\slift$-function and (2)~the $\best$-function.
By definition, the $\slift$-function does not access the game graph or vertices thereof.
Thus, we can compute the $\slift$-function without the use of symbolic operations. 
To compute the $\best$-function we need access to the game graph.
It turns out it is simpler to compute the vertices with $\best(f,v) \succeq r$
rather than the vertices with $\best(f,v) = r$. 
%
%
%
Thus, we lift all vertices $v$ with $\best(f,v) \succeq r$ to the rank $\slift(r,p(v))$. 
To this end, we first compute the set $S_{\succeq r} = \bigcup_{l \succeq r} S_l$ of vertices with rank $\succeq r$.
Then, we compute $P = \CP_z(S_{\succeq r})$ and, hence, the set $P$ comprises the vertices $v \in P$ with
$\best(f,v) \succeq r$. Finally, to compute $\Lift(f,v)$, for each $c \in
C$, we consider the vertices of $P$ with priority $c$, i.e., the set  $(P \cap V_c)$, and add  them to the set $S_{\slift(r,c)}$. 
Notice that we lift each vertex $v$ to $\slift(r,p(v))$
where $r = \best(f,v)$ as we consider all non-empty sets $S_r \in \S$. No vertex
$v$ will be lifted to a set higher than $\slift(r,p(v))$ where  $r = \best(f,v)$ due to the monotonicity
of the lift function in the first argument.
If after an iteration of the algorithm a vertex appears in several sets of $\mathbf{S}$ 
we only keep it in the set corresponding to the largest rank and remove it from all the other sets.

\subsection{Improving the Basic Algorithm}\label{lpar:ss:algorithm}


In this section, we improve the basic Algorithm by (a) reducing the symbolic
space from $O(|\mathcal{W}|)$ to $O(n)$ 
and (b) by reducing the number of symbolic operations required to compute the fixed point.

\smallskip\noindent\emph{Key Idea.}
The naive algorithm considers each non-empty set $S_r$ in iteration
$i\+1$ again no matter if $S_r$ has been changed in iteration $i$ or not.
Notice that we only need to consider the predecessors of the set $S_r$ again when
the set $S_{\succeq r}$ in iteration $i\+1$ contains 
additional vertices compared to the set $S_{\succeq r}$ in iteration $i$.
To overcome this weakness, we propose Algorithm~\ref{lpar:alg:blackbox3}. In this
algorithm, we introduce a data structure called $D$. In the data structure $D$ we keep track
of the sets $S_{\succeq r}$ instead of the sets $S_{r}$. The set $S_{\succeq r}$
contains all vertices with a rank greater than or equal to $r$. Furthermore, we separately keep track of the elements $r \in \WOrder$ where the set
$S_{\succeq r}$ changed since the last time $S_{\succeq r}$ was selected to be processed. 
These elements of $\WOrder$ are called \emph{active}.
Moreover, if we have two sets $S_{\succeq r} = S_{\succeq r'}$ with $r \prec r'$
there is no need to process the set $S_{\succeq r}$
because $\slift(r',c) \succeq \slift(r,c)$ holds due to the monotonicity of the
$\slift$ function. To summarize, we precisely store a set $S_{\succeq r}$ if there is no $r'$
with $r \prec r'$ and $S_{\succeq r} = S_{\succeq r'}$.
Notice, that this instantly gives us a bound on the symbolic space of
$O(n)$. 

\smallskip\noindent\emph{Algorithm Description.} 
In Algorithm~\ref{lpar:alg:blackbox3} we use the data structure $D$ to manage the active $r \in \WOrder$ and in each iteration of 
the outer while-loop we process the corresponding set $S_{\succeq r}$ of such an
$r \in \WOrder$. 
We first compute $P = \CP_z(S_{\succeq r})$, then,
for each $c \in C$  we compute $r' = \slift(r,c)$
and update the set $S_{\succeq r'}$ by adding $P \cap V_c$. 
The inner while-loop ensures that $P \cap V_c$ is also added to all the sets $S_{\succeq r'}$ with $r' \prec r$ and 
the properties of the data structure are maintained, i.e.,
(a) all active elements are in the active list, and
(b) exactly those $r \in \WOrder$ with $S_{\succeq r} \supset S_{\succeq r'}$, for $r
\prec r'$ are stored in $D$.

\begin{algorithm}[t]
	\SetKwInOut{Input}{input}
	\SetKwInOut{Output}{output}
	\SetAlgoVlined
	\Input{Parity Game $\PG$}
	\caption{Black Box Set-Based Symbolic Progress Measure}\label{lpar:alg:blackbox3}
	Initialize data structure $D$\; \label{lpar:alg:bb3:dinit}
	D.activate$(min)$\;\label{lpar:alg:bb3:activeinit}
	\While{$r \gets D.popActiveSet()$\label{lpar:alg:bb3:while1}}{\label{lpar:alg:bb3:deactivate}
		$S_{\succeq r}  \gets D.getSet(r)$\;
		$P \gets \CP_z(S_{\succeq r})$\;\label{lpar:alg:bb3:cp}
		\For{$c \in C$} {\label{lpar:alg:bb3:for1}
			$r' \gets \slift(r,c)$\;\label{lpar:alg:bb3:lift}
			$S_{\succeq r'} \gets D.getSet(r')$\;
			\While{$P \cap V_c \not\subseteq S_{\succeq r'}\label{lpar:alg:bb3:while2}$}
			{
				$S_{\succeq r'} \gets S_{\succeq r'} \cup (P \cap V_c)$\label{lpar:alg:bb3:liftvertices}\;
				$S_{\succeq next(r')} \gets D.getSet(D.getNext(r'))$\;
				\If{$r' = \top$ or $S_{\succeq r'} \supset S_{\succeq next(r')}$\label{lpar:alg:bb3:if1}}
				{
					\tcp{$S_{\succeq r'}$ is a superset of $S_{\succeq
							next(r')}$ and we save it}
					$D.update(r', S_{\succeq r'})$;\label{lpar:alg:bb3:store}
					$D.activate(r')$\;\label{lpar:alg:bb3:activate}
				}				
				$S_{\succeq prev(r')} \gets D.getSet(D.getPrevious(r'))$\;
				\If{$S_{\succeq r'} = S_{\succeq prev(r')} \cup (P \cap V_c)$}{\label{lpar:alg:bb3:else1}
					\tcp{We only keep sets which are different}
					$D.removeSet(D.getPrevious(r'))$\;\label{lpar:alg:bb3:removed}
				}
				$r'\gets D.getPrevious(r')$\;\label{lpar:alg:bb3:while2end}
				$S_{\succeq r'}  \gets
				D.getSet(r')\;$\;\label{lpar:alg:bb3:while2end2}
			}    
		}
	}
	\Return $S_\top$
\end{algorithm}

\smallskip\noindent\emph{Active Elements.} Intuitively, an element $r \in \WOrder$ is \emph{active}
if $S_{\succeq r} \supset S_{\succeq r'}$, for all $r'$ where $r \prec r'$ 
and $S_{\succeq r}$ has been changed since the last time 
$S_{\succeq r}$ was selected at Line~\ref{lpar:alg:bb3:while1} of Algorithm~\ref{lpar:alg:blackbox3}.
We define active elements more formally later.

\begin{datastructure}\label{lpar:alg:bb3:ds}
Our algorithm relies on a data structure $D$, which supports the following operations:
\begin{compactitem}
\item $D.popActiveSet()$ returns an element $r \in \WOrder$ marked as active and makes it
	inactive. If all elements are inactive, returns false.

      \item $D.getSet(r)$ returns the set $S_{\succeq r}$. 		  
      
      \item $D.getNext(r)$ returns the smallest $r'$ with $S_{\succeq r} \supset  S_{\succeq r'}$.
      
      \item $D.getPrevious(r)$ returns the largest $r'$ where $S_{\succeq r} \subset S_{\succeq r'}$.
    
      \item $D.removeSet(r)$ 
			     marks $r$ as inactive. 

      \item $D.activate(r)$ marks $r$ as active.
      
      \item $D.update(r,S)$  updates the set $S_{\succeq r}$ to $S$, i.e., $D.getSet(r)$ returns $S$.
							 Moreover, all sets $S_{\succeq r'}$ with $r' \prec r$ and $S_{\succeq r'}=S_{\succeq r}$ beforehand
							 are updated to $S$ as well.
  \end{compactitem}
 We initialize $D$ with $D.update(\mina,V)$ and $D.update(\top, \emptyset)$.
\end{datastructure}

We can define \emph{active} elements formally now as the definition depends on
$D$.
\begin{definition}
	Let $S^0_{\succeq r}=D.getSet(r)$ be the set stored in $D$ for  $S_{\succeq r}$ 
	after the initialization of $D$, 
	and let $S^i_{\succeq r}=D.getSet(r)$ be the set
	stored in $D$ for $S_{\succeq r}$ after the $i$-th iteration of the
	while-loop at Line~\ref{lpar:alg:bb3:while1}. 
%
         An element $r \in \WOrder$ is \emph{active} after the $i$-th iteration
	of the while-loop 
	if 
	(i) for all $r' \in \WOrder$ where $r \prec r'$ we have 
	 $S^i_{\succeq r} \supset S^i_{\succeq r'}$ and
	(ii) there is a $j < i$ such that $S^i_{\succeq r} \supset S^j_{\succeq r}$ 
	and for all $j < j' \leq i$ the set $S_{\succeq r}$ 
	is not selected in Line~\ref{lpar:alg:bb3:while1} in the $j'$-th iteration.
	Additionally, we consider $r=\min$ as active before the first iteration. An
	element $r \in \WOrder$ is \emph{inactive} if it is not active. 
\end{definition}

Notice that, in Algorithm~\ref{lpar:alg:blackbox3} an $r \in \WOrder$ is active iff $r$ is marked as active in $D$. 
The algorithm ensures this in a very direct way.  
At the beginning, only $\min \in W$ is active, which is also set active in $D$
in the initial phase of the algorithm.
Whenever some vertices are added to a set $S_{\succeq r}$, it
is tested whether $S_{\succeq r}$ is larger than its successor and if so $r$
is activated (Lines~\ref{lpar:alg:bb3:if1}-\ref{lpar:alg:bb3:activate}).  
On the other hand, if something is added to the successor of
$S_{\succeq r}$ in the data structure $D$ then the algorithm tests whether the two sets are equal and
if so $r$ is rendered inactive (Lines~\ref{lpar:alg:bb3:else1}-\ref{lpar:alg:bb3:removed}). \smallskip

\noindent\emph{Implementation of the data structure $D$.}
The data structure uses an AVL-tree and 
a doubly linked list called ``active list'' that keeps track of the active elements.
The nodes of the tree contain a pointer to the corresponding set $S_{\succeq r}$
and to the corresponding element in the active list.

\begin{compactitem}
	\item Initialization of the data structure $D$: Create the AVL tree with the elements $min$ and $\top$.
	      The former points to the set of all vertices and the latter to the empty set.
	      Create the doubly linked list called ``active list'' as an empty list.
	      
	\item $D.popActiveSet()$: Return the first element from the active list and
		remove it from the active list. If the list is empty, return false.
		
	\item \begin{sloppypar} $D.getSet(r)$: Searches the AVL tree for $r$ or for the next greater
		element (w.r.t. $\succeq$). Then we return the set by using the pointer
		we stored at the node. \end{sloppypar}
	\item $D.getNext(r)$: 
	        First performs $D.getSet(r)$ and then computes the inorder successor in the 
	        AVL-tree. This corresponds to the next greater node w.r.t. $\succeq$.
	        	
	\item $D.getPrevious(r)$: 
	        First performs $D.getSet(r)$ and then computes the inorder predecessor in the 
	        AVL-tree. This corresponds to the next smaller node w.r.t. $\succeq$.
		
	\item $D.removeSet(r)$: This operation needs the element $r$ to be stored in the
		AVL tree. Search the AVL tree for $r$. Remove the corresponding element from
		the active list and the AVL Tree. 
		
	\item $D.activate(r)$: This operation needs the element $r$ to be stored in the
		AVL tree.
		Add $r$ to the active list and add pointers to the AVL-tree.
		The element in the active list contains a pointer to the tree element and vice versa.
	
	\item $D.update(r,S)$ :        
	        Perform $S_{\succeq r} \gets D.getSet(r)$: 
	        If $r$ is contained in the AVL tree then update $S_{\succeq r}$ to $S$.
	        Otherwise, insert $r$ as a new element and let the element point to $S$.		
\end{compactitem}

We initialize the data structure $D$ with $min \in \WOrder$ and
$\top \in \WOrder$. Thus, whenever we query $D$ for a value $r \in \WOrder$ we find
it or there exists an $r' \succ r$ which is in $D$.

\smallskip\noindent\emph{Analysis of the data structure $D$.}
\begin{sloppypar}
The data structure can be implemented with an AVL-tree and 
a doubly linked list called ``active list'' that keeps track of the active elements such that
all of the operations can be performed in $O(\log n)$:
when the algorithm computes $D.update(r,S)$ we store $r$ and a pointer
to the set $S$ as a node in the AVL tree.
By construction, the algorithm only stores pointers to different sets and when we additionally preserve anti-monotonicity among the sets 
we only store $\leq n$ sets. Therefore, the AVL tree has only $\leq n$
nodes with pointers to the corresponding sets and searching for a set with
the operation $D.getSet(r)$ only adds a factor of $\log n$ to the non-symbolic operations
when we store $r$ as key with a pointer to $S_{\succeq r}$ in the AVL-tree.
Moreover, we maintain pointers between the elements of the active list and the corresponding vertices in the AVL tree.
\end{sloppypar}
\begin{remark}\label{lpar:remark1}
	The described algorithm is based on a data structure $D$ 
	which keeps track of the sets that will be processed at some point later in time. 
	Note that this data structure
	does not access the game graph but only stores pointers to sets that the
	Algorithm~\ref{lpar:alg:blackbox3} maintains. 
	The size of the AVL tree implementing $D$ is proportional to the symbolic space of the algorithm.
\end{remark}

\subsection{Correctness}\label{lpar:ss:correctness}


To prove the correctness of Algorithm~\ref{lpar:alg:blackbox3} we tacitly assume that the algorithm terminates.
An upper bound on the running time is then shown in Proposition~\ref{lpar:prop:numberactivations}.

\begin{proposition}[Correctness.]\label{lpar:lem:bb3:correctness}
	Let $\PG$ be a parity game.
	Given a finite total order $(\WOrder,\prec)$ with minimum element $\mina$, a	maximum element $\top$ 
	and a monotonic function $\slift:\WOrder \times C \mapsto \WOrder$ 
	Algorithm~\ref{lpar:alg:blackbox3} computes the least simultaneous fixed point of
	all $\Lift(\cdot,v)$-operators.	
\end{proposition}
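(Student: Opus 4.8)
The proposition asserts that Algorithm~\ref{lpar:alg:blackbox3} computes the least simultaneous fixed point of all $\Lift(\cdot,v)$-operators — i.e., the progress measure. Let me think about what this algorithm is actually doing.

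The algorithm maintains a data structure $D$ that stores sets $S_{\succeq r}$ (vertices with rank $\succeq r$) rather than the sets $S_r$ themselves. A vertex $v$ has rank $f(v) = r$ if $r$ is the largest element with $v \in S_{\succeq r}$. The key operations: pop an active $r$, compute $P = \CP_z(S_{\succeq r})$ (exactly the vertices $v$ with $\best(f,v) \succeq r$), and for each priority $c$ lift those $v \in P \cap V_c$ to rank $\slift(r,c)$ by adding them to $S_{\succeq \slift(r,c)}$ and all lower sets.

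So the proof needs two directions: (soundness) the output ranking $f^*$ is a fixed point of every $\Lift(\cdot,v)$, and (minimality/completeness) it is the *least* such fixed point.

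**Approach.** I would organize the proof around two main lemmas mirroring the two properties. Let me think carefully.

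For the fixed-point property: I need to show that upon termination, $f^*(v) \succeq \slift(\best(f^*,v), p(v))$ for every $v$, which means no $\Lift$-operator can increase any rank. I would argue this by contradiction: suppose some $v$ has $f^*(v) \prec \slift(\best(f^*,v),p(v))$. Let $r = \best(f^*,v)$. Then $v \in \CP_z(S_{\succeq r})$, so when $r$ was last popped and processed, $v \in P$ and $v \in V_{p(v)}$ would have been added to $S_{\succeq \slift(r,p(v))}$. The subtlety is that $\best(f^*,v)$ equals $r$ only at the *final* ranking — I need to argue that the set $S_{\succeq r}$ stabilized and $r$ became inactive only after this lift was applied. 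This is exactly where the active/inactive bookkeeping matters: $r$ is inactive at termination, which (by the definition of active) means $S_{\succeq r}$ was processed after its last change, so all predecessors in $\CP_z(S_{\succeq r})$ were correctly lifted. The hard part will be handling the interaction between $S_{\succeq r}$ changing over time and the $\best$ function being defined with respect to the final $f^*$; I expect to need an invariant stating that whenever $S_{\succeq r}$ becomes inactive, every $v \in \CP_z(S_{\succeq r})$ already satisfies $v \in S_{\succeq \slift(r,p(v))}$, and that reactivation correctly occurs whenever $S_{\succeq r}$ grows.

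For minimality: I would prove by induction on the number of while-loop iterations that the current ranking $f^i$ is pointwise $\preceq$ the progress measure $\mu$ (the true least fixed point). The base case is $f^0 = \mina$ everywhere, trivially below $\mu$. For the inductive step, the only way ranks increase is via a lift of $v \in P \cap V_c$ to $\slift(r,c)$ where $r = \best(f^i,v) \preceq \best(\mu,v)$ by induction and monotonicity of $\best$; then $\slift(r,c) \preceq \slift(\best(\mu,v),p(v)) \preceq \mu(v)$ since $\mu$ is a fixed point and $\slift$ is monotone in its first argument. This direction is cleaner, the main thing to verify being that $\CP_z$ really captures exactly the $\best(f,v) \succeq r$ condition (stated in the algorithm description, and following from the definition of $\CP_z$ and $\best$).

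**Plan and main obstacle.** Concretely, I would (i) establish a data-structure invariant that $D$ at all times stores precisely the sets $S_{\succeq r}$ for which $r$ is a ``breakpoint'' (no $r' \succ r$ with equal set), so that \emph{getSet}, \emph{getNext}, \emph{getPrevious} return the claimed objects; (ii) prove the correctness of the inner while-loop, namely that after it finishes for a given $(r,c)$, every set $S_{\succeq r'}$ with $r' \preceq \slift(r,c)$ contains $P \cap V_c$ and the breakpoint/active bookkeeping is restored; (iii) prove the minimality induction above; and (iv) prove the fixed-point property at termination via the activity invariant. I expect the genuine difficulty to lie in step (iv): reconciling the fact that $\best(f^*,v)$ is evaluated at the final ranking with the fact that the lift for $v$ was triggered at an earlier moment when $S_{\succeq r}$ may have been smaller. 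The resolution is that whenever $S_{\succeq r}$ grows, $r$ is re-activated (Lines~\ref{lpar:alg:bb3:if1}--\ref{lpar:alg:bb3:activate}), so at termination, when all elements are inactive, the final $S_{\succeq r}$ has already been processed; I would make this precise with an invariant asserting that for every inactive $r$, all $v \in \CP_z(S_{\succeq r})$ satisfy $v \in S_{\succeq \slift(r,p(v))}$, and show this invariant is preserved by each iteration and by every set-modification.
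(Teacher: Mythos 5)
Your proposal is correct and follows essentially the same route as the paper's proof: the paper also proceeds via a three-part loop invariant (anti-monotonicity of the stored sets $S_{\succeq r}$, the bound $\rho(v) \preceq \trho(v)$ throughout execution, and the activity condition that every inactive $r$ has all its $\CP_z$-predecessors already lifted), each proved by induction over the iterations of the outer while-loop, and then combines inactivity at termination with the below-least-fixed-point bound exactly as you do. Your phrasing of the activity invariant (membership in $S_{\succeq \slift(r,p(v))}$ for all $v \in \CP_z(S_{\succeq r})$) is a harmless variant of the paper's (which restricts to $v$ with $\best(\rho,v)=r$ and asserts $\rho(v)=\Lift(\rho,v)(v)$), since together with minimality and monotonicity of $\slift$ both yield $\rho = \trho$ at termination.
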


To prove the correctness of Algorithm~\ref{lpar:alg:blackbox3}, we prove that when
Algorithm~\ref{lpar:alg:blackbox3} terminates, the function $\rho(v) =
\max\{r \in \WOrder \mid v \in S_{\succeq r}\}$ is equal to the least simultaneous fixed point of
all $\Lift(\cdot,v)$-operators. We show that when the properties described in
Invariant~\ref{lpar:inv:correctness} hold, the function $\rho$ is equal to the 
least fixed point at the termination of the algorithm. Then, we prove that we maintain the properties of Invariant~\ref{lpar:inv:correctness}.

\begin{invariant}\label{lpar:inv:correctness}
	Let $\trho$ be the least simultaneous fixed point of
	$\Lift(\cdot,v)$ and $\rho(v) = \max\{r \in \WOrder \mid v \in S_{\succeq r}\}$
	be the ranking function w.r.t.\ the sets $S_{\succeq r}$ that are
	maintained by the algorithm.
	\begin{compactenum}
		\item Before each iteration of the while-loop at
			Line~\ref{lpar:alg:bb3:while1} we have $S_{\succeq r_2} \subseteq
			S_{\succeq r_1}$ for all $r_1 \preceq r_2$ (anti-monotonicity).
		\item Throughout Algorithm~\ref{lpar:alg:blackbox3} we have $\trho(v) \succeq
			\rho(v)$ for all $v \in V$.
		\item For all $r \in \WOrder$: (a) $r$ is active or (b) for all $v \in \CP_z(S_{\succeq r})\!:\\ 
					\text{ if } \best(\rho,v) = r$, then	$\rho(v) = \Lift(\rho,v)(v)$.
	\end{compactenum}
\end{invariant}

In the following paragraph, we describe the intuition of
Invariant~\ref{lpar:inv:correctness}. Then,
we show that the properties of Invariant~\ref{lpar:inv:correctness} are sufficient to obtain the correctness of
Algorithm~\ref{lpar:alg:blackbox3}. Finally, we prove that each property holds
during the while-loop at Line~\ref{lpar:alg:bb3:while1}.

\smallskip\noindent\emph{Intuitive Description.} The intuitive description is as follows:
\begin{compactenum}
	\item Ensures that the sets $S_{\succeq r}$ contain the correct elements. Having the
		sets $S_{\succeq r}$ allows computing $\best(f,v) \succeq r$ as discussed at the
		beginning of the section.
	\item Guarantees that $\rho$ is a lower
		bound on $\tilde{\rho}$ throughout the algorithm.
	\item When an $r \in \WOrder$ is not active, 
		the rank of no vertex can be increased by applying $\slift$ to the
		vertices which have $\best(\rho,v) = r$.
\end{compactenum}
When the algorithm terminates, all $r \in \WOrder$ are inactive and $\rho$ is 
a fixed point of all $\Lift(\rho,v)$ by condition (3b). The next lemma proves
that Algorithm~\ref{lpar:alg:blackbox3} computes the least simultaneous fixed point of
all $\Lift(\cdot, v)$ operators for a parity game.

\begin{lemma}[The Invariant is sufficient]
	Let the $\slift$ function be monotonic in the first argument and $(\WOrder, \prec)$ be a total order.
	The ranking function $\rho$ at termination of Algorithm~\ref{lpar:alg:blackbox3} is
	equal to the least simultaneous fixed point of all $\Lift(\cdot,v)$-operators
	for the given parity game $\PG$.
\end{lemma}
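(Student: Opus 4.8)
The plan is to assume that Invariant~\ref{lpar:inv:correctness} holds throughout the execution and to combine its three parts at the moment of termination. First I would record that the while-loop at Line~\ref{lpar:alg:bb3:while1} exits precisely when $D.popActiveSet()$ returns \textbf{false}, which by the maintenance of the data structure $D$ happens exactly when no $r \in \WOrder$ is marked active; thus at termination every $r \in \WOrder$ is inactive. Termination itself may be assumed (the bound is established in Proposition~\ref{lpar:prop:numberactivations}), and the least simultaneous fixed point $\trho$ exists because $(\WOrder,\prec)$ is a finite total order---hence a complete lattice---and each $\Lift(\cdot,v)$ is monotone, which follows from $\slift$ being monotone in its first argument together with the monotonicity of $\best$.

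The central step, and the one I expect to be the main obstacle, is to show that the ranking function $\rho(v)=\max\{r\in\WOrder \mid v\in S_{\succeq r}\}$ is a simultaneous fixed point of all $\Lift(\cdot,v)$-operators once every $r$ is inactive. Fix an arbitrary $v\in V$ and set $r=\best(\rho,v)$. Using Invariant~\ref{lpar:inv:correctness}(1) (anti-monotonicity of the family $\{S_{\succeq r}\}$), the membership $u\in S_{\succeq r}$ is equivalent to $\rho(u)\succeq r$, so the sets $S_{\succeq r}$ faithfully encode the abstract ranks; this is exactly what is needed to transfer the characterization $\CP_z(S_{\succeq r})=\{u \mid \best(\rho,u)\succeq r\}$ from the algorithm description to $\rho$. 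Since $\best(\rho,v)=r\succeq r$, this yields $v\in\CP_z(S_{\succeq r})$. Now, because $r$ is inactive, part (a) of Invariant~\ref{lpar:inv:correctness}(3) fails, so part (b) must hold for $r$: for every $u\in\CP_z(S_{\succeq r})$ with $\best(\rho,u)=r$ we have $\rho(u)=\Lift(\rho,u)(u)$. Applying this with $u=v$ gives $\rho(v)=\Lift(\rho,v)(v)=\slift(\best(\rho,v),p(v))$, and since $v$ was arbitrary, $\rho$ is a simultaneous fixed point.

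Finally I would close the sandwich. Invariant~\ref{lpar:inv:correctness}(2) gives $\rho(v)\preceq\trho(v)$ for all $v$, i.e.\ $\rho\preceq\trho$ pointwise. Conversely, $\trho$ is by definition the \emph{least} simultaneous fixed point, and we have just exhibited $\rho$ as a simultaneous fixed point, so $\trho\preceq\rho$. By antisymmetry of the (pointwise) total order, $\rho=\trho$, which is the claim. The only delicate points I anticipate are (i) verifying that $r=\best(\rho,v)$ is a legitimate index of $\WOrder$ for which the $\CP_z$-characterization applies even though $D$ physically stores only the distinct, anti-monotone representatives, and (ii) making the equivalence $u\in S_{\succeq r}\iff\rho(u)\succeq r$ fully rigorous from part (1) of the invariant, since everything downstream relies on it.
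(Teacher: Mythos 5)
Your proof is correct and follows essentially the same route as the paper's: Invariant (2) gives $\rho \preceq \trho$, termination means no $r \in \WOrder$ is active so Invariant (3b) applied at $r = \best(\rho,v)$ makes $\rho$ a simultaneous fixed point, and leastness of $\trho$ closes the sandwich. The only difference is that you spell out, via Invariant (1), why $v \in \CP_z(S_{\succeq r})$ and why $u \in S_{\succeq r} \iff \rho(u) \succeq r$ — details the paper's proof leaves implicit — which is a welcome tightening rather than a divergence.
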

\begin{proof}
	Consider the ranking function $\rho(v) = \max\{r \in \WOrder \mid v \in S_{\succeq r}\}$
	computed by Algorithm~\ref{lpar:alg:blackbox3}.
	By Invariant~\ref{lpar:inv:correctness}(2) we have $\trho(v) \succeq \rho(v)$ for all $v \in V$. 
	We next show that $\rho(v)$ is a fixed point of $\Lift(\rho,v)$ for all $v \in V$.
	When the algorithm terminates, no $r \in \WOrder$ is active. 
	Consider an arbitrary $v$ and let $r=\best(\rho,v)$.
	Now, as the set $r$ is not active, by Invariant~\ref{lpar:inv:correctness}(3b),
	we have $\rho(v) = \Lift(\rho,v)(v)$.
	Thus $\rho(v)$ is a fixed point of $\Lift(\rho,v)$ for all vertices in
	$V$.
	Therefore, as $\rho$ is a simultaneous fixed point of  all $\Lift(\cdot,v)$-operators
	and $\trho$ is the least such fixed point, 
	we obtain $\rho(v) \succeq \trho(v)$ for all $v \in V$. 
	Hence we have $\rho(v) = \trho(v)$ for all $v \in V$.
\end{proof}

The following lemmas prove each part of the invariant separately. The first part of the 
invariant describes the anti-monotonicity property which is needed to compute the $\best$
function with the $\CP_z$ operator. 

\begin{lemma}\label{lpar:lem:antimonoton}
	Invariant~\ref{lpar:inv:correctness}(1) holds:
	Let $r_1,r_2 \in \WOrder$ and $r_1 \preceq r_2$.
	Before each iteration of the while-loop at Line~\ref{lpar:alg:bb3:while1} we have that if a vertex
	$v$ is in a set $S_{\succeq r_2}$ then it is also in $S_{\succeq r_1}$  (anti-monotonicity).
\end{lemma}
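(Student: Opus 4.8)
The plan is to prove Invariant~\ref{lpar:inv:correctness}(1) by induction on the number of completed iterations of the outer while-loop at Line~\ref{lpar:alg:bb3:while1}. Throughout, I interpret $S_{\succeq r}$ for an arbitrary $r \in \WOrder$ as the \emph{logical} set $D.getSet(r)$, i.e., the set attached to $r$ if $r$ is stored in the AVL-tree and otherwise the set attached to the smallest stored element $\succeq r$. The claim to maintain is that $v \in S_{\succeq r_2}$ implies $v \in S_{\succeq r_1}$ whenever $r_1 \preceq r_2$.

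For the base case, after the initialization at Lines~\ref{lpar:alg:bb3:dinit}--\ref{lpar:alg:bb3:activeinit} the tree contains only $\mina$ with $S_{\succeq \mina}=V$ and $\top$ with $S_{\succeq \top}=\emptyset$. Hence for every $r$ with $\mina \prec r$ the query $D.getSet(r)$ returns the set of the next stored element $\succeq r$, which is $\emptyset$ unless $r=\mina$; I would check the finitely many cases to conclude $S_{\succeq r_2}\subseteq S_{\succeq r_1}$ for all $r_1 \preceq r_2$ before the first iteration.

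For the inductive step, I would assume anti-monotonicity holds before an iteration that pops the active rank $r$, and track the only place where the logical sets change: the inner while-loop at Lines~\ref{lpar:alg:bb3:while2}--\ref{lpar:alg:bb3:while2end2}, executed once per priority $c \in C$. This loop starts at $r' = \slift(r,c)$ (Line~\ref{lpar:alg:bb3:lift}), adds $P \cap V_c$ to $S_{\succeq r'}$ (Line~\ref{lpar:alg:bb3:liftvertices}), then moves to the inorder predecessor via $D.getPrevious$ (Line~\ref{lpar:alg:bb3:while2end}) and repeats exactly as long as $P \cap V_c \not\subseteq S_{\succeq r'}$. The heart of the argument is that this adds the vertex set $P \cap V_c$ to precisely the down-closed family $\{\,S_{\succeq r''} \mid r'' \preceq \slift(r,c)\,\}$: by the induction hypothesis, as soon as the loop reaches some $r'$ with $P \cap V_c \subseteq S_{\succeq r'}$, anti-monotonicity guarantees that every smaller rank already contains $P \cap V_c$, so stopping at that point is sound. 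Since we add the same vertices to an entire initial (down-closed) segment of the order, the superset relation $S_{\succeq r_2}\subseteq S_{\succeq r_1}$ is preserved after each single addition, and hence after the whole inner loop and after the iteration.

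The step I expect to be the main obstacle is reconciling the logical sets with the bookkeeping operations $D.update$ (Line~\ref{lpar:alg:bb3:store}) and $D.removeSet$ (Line~\ref{lpar:alg:bb3:removed}), which only keep \emph{distinct} sets in the tree: I must verify that inserting a new distinct set, or deleting one that has become equal to its predecessor, leaves the logical value $D.getSet(\cdot)$ of every rank unchanged, so that the down-closed-addition picture above accurately describes the net effect on all of $\WOrder$ and not merely on the explicitly stored ranks. Once this correspondence is established, anti-monotonicity follows directly from the fact that each modification adds vertices to a down-closed set of ranks.
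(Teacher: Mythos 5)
Your proposal is correct and follows essentially the same route as the paper's proof: induction over iterations of the outer while-loop, with the base case resting on $S_{\succeq \mina}=V$ and all other sets empty, and the inductive step showing that the inner while-loop adds $P \cap V_c$ to a down-closed segment of ranks, where the soundness of stopping early at a set already containing $P \cap V_c$ follows from the anti-monotonicity hypothesis. The bookkeeping issue you flag (that \texttt{update}/\texttt{removeSet} leave the logical value of $D.getSet(\cdot)$ unchanged for unstored ranks) is the same point the paper dispatches in one sentence, noting that after activating $r'$ all ranks strictly between $D.getPrevious(r')$ and $r'$ refer to the same set as $r'$, so the addition applies to them implicitly.
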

\begin{proof}
	We prove the claim by induction over the iterations of the while-loop.
	Initially, the claim is satisfied as the only non-empty set is $S_{min}$.
	It remains to show that when the claim is valid at the beginning of an
	iteration, then the claim also holds in the next iteration. 
	By the induction hypothesis, the claim holds for the sets at the beginning of the while-loop. 
	In the trivial case, the algorithm terminates and the claim holds by the induction hypothesis. 
	Otherwise, the sets are only modified at Line~\ref{lpar:alg:bb3:liftvertices} and
	stored at Line~\ref{lpar:alg:bb3:store}.
	First, the vertices $P \cap V_c$ are added into the set $S_{\succeq r'}$. 
	Let $r''=D.getPrevious(r')$.
	Notice that after activating $r'$ all $r$ with $r'' \prec r \prec r'$ refer to the same set as $r'$
	and thus we add $P \cap V_c$ implicitly to all $r$.
	In the next iteration the while-loop then adds $P \cap V_c$ also to the set $S_{\succeq r''}$.
	As this done iteratively until a set $S_{\succeq r^*}$ with $P \cap V_c \subseteq S_{\succeq r^*}$ is reached (Lines~\ref{lpar:alg:bb3:while2}-\ref{lpar:alg:bb3:while2end}),
	the algorithm ensures that $P \cap V_c$ is contained in all set $S_{\succeq r''}$ with $r \prec r'$.
	By induction hypothesis we know that the invariant holds for all $r_2 \succ r'$ ($S_{\succeq r_2}$ is unchanged),
	and as the algorithm added $P \cap V_c$ to all set $S_{\succeq r''}$ with $r'' \preceq r'$
	the claim holds for all $r_1, r_2 \preceq r'$.  
\end{proof}

The second part of the invariant shows that the fixed point 
Algorithm~\ref{lpar:alg:blackbox3} computes is always smaller or equal to the least
fixed point. In particular, the fixed point computed by the algorithm is
defined as $\rho(v) = \max\{r \in \WOrder \mid v \in S_{\succeq r}\}$ and we denote
the least fixed point with $\trho$. The proof is by induction: In the
beginning, every vertex is initialized with the minimum element which 
suffices for the claim. When we apply the $\slift$ function to vertices, we observe
that by the induction hypothesis the current value of a vertex is below or equal to
the fixed point. Additionally, we obtain a rank that is also smaller or equal to the lifted value of $\trho$ for every vertex 
as $\slift$ is a monotonic function.

\begin{lemma}\label{lpar:lem:underlfp}
	Invariant~\ref{lpar:inv:correctness}(2) holds: Throughout
	Algorithm~\ref{lpar:alg:blackbox3} we have $\trho(v) \geq \rho(v)$ for all $v \in V$.
\end{lemma}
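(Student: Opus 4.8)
The plan is to prove the claim by induction over the individual rank increases performed by the algorithm, that is, over the executions of Line~\ref{lpar:alg:bb3:liftvertices}, maintaining throughout that $\rho(v) \preceq \trho(v)$ holds for every $v \in V$. For the base case, after the initialization of $D$ the only non-empty stored set is $S_{\succeq \mina} = V$, so $\rho(v) = \mina$ for all $v$, and $\mina \preceq \trho(v)$ holds trivially because $\mina$ is the minimum of $(\WOrder,\prec)$.

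For the inductive step I would observe that $\rho$ can change only when vertices are inserted into some set $S_{\succeq r'}$ at Line~\ref{lpar:alg:bb3:liftvertices}, and that such an insertion can only raise $\rho$. Hence it suffices to consider a single insertion: at the start of the current iteration of the while-loop at Line~\ref{lpar:alg:bb3:while1} an active $r$ is popped, the set $P = \CP_z(S_{\succeq r})$ is computed at Line~\ref{lpar:alg:bb3:cp}, and for the current priority $c \in C$ the vertices $P \cap V_c$ are added to $S_{\succeq r'}$ with $r' = \slift(r,c)$ (Line~\ref{lpar:alg:bb3:lift}). Fix one such $v \in P \cap V_c$; since $p(v) = c$ we have $r' = \slift(r,p(v))$. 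After the insertion $v$ lies in $S_{\succeq r'}$, so the new value of $\rho(v)$ is the $\prec$-larger of its previous value and $r'$; the previous value is bounded by $\trho(v)$ by the induction hypothesis, so it remains to show $r' \preceq \trho(v)$.

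The key chain of inequalities establishing $r' \preceq \trho(v)$ is the heart of the argument. First, membership $v \in P = \CP_z(S_{\succeq r})$ means, as explained in the algorithm description, that $\best(\rho,v) \succeq r$; here I rely on Invariant~\ref{lpar:inv:correctness}(1) (anti-monotonicity, Lemma~\ref{lpar:lem:antimonoton}), which guarantees $S_{\succeq r} = \{w \mid \rho(w) \succeq r\}$ so that the controllable-predecessor operation indeed computes the best-set correctly. Second, since $\best(\cdot,v)$ is a pointwise minimum or maximum over the successors of $v$, it is monotone in the ranking function, and combined with the induction hypothesis $\rho \preceq \trho$ this yields $r \preceq \best(\rho,v) \preceq \best(\trho,v)$. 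Third, monotonicity of $\slift$ in its first argument gives $r' = \slift(r,p(v)) \preceq \slift(\best(\trho,v),p(v))$. Finally, since $\trho$ is the least simultaneous fixed point of the $\Lift$-operators, $\slift(\best(\trho,v),p(v)) = \Lift(\trho,v)(v) = \trho(v)$, whence $r' \preceq \trho(v)$ as required.

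The main obstacle I anticipate is bookkeeping rather than conceptual: making the induction precise across the nested loops. The fact $\best(\rho,v) \succeq r$ is read off from $P$, which is computed once at Line~\ref{lpar:alg:bb3:cp} with respect to the ranking $\rho$ at the start of the outer iteration, whereas the additions at Line~\ref{lpar:alg:bb3:liftvertices} occur later, when $\rho$ has already grown. I would resolve this by noting that $\rho$ only increases, so by monotonicity of $\best$ the inequality $\best(\rho,v) \succeq r$ is preserved for the updated ranking as well; consequently only the induction hypothesis $\rho \preceq \trho$ needs to hold immediately before each individual addition, and each addition preserves it by the argument above.
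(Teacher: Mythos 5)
Your proof is correct and takes essentially the same route as the paper's: induction over the algorithm's execution, the observation that $\rho$ changes only at Line~\ref{lpar:alg:bb3:liftvertices}, the chain $r \preceq \best(\rho,v) \preceq \best(\trho,v)$ from the induction hypothesis and monotonicity, then monotonicity of $\slift$ and the fixed-point identity $\slift(\best(\trho,v),p(v)) = \trho(v)$. If anything, your per-insertion induction and explicit handling of the timing issue (that $P$ is computed before $\rho$ grows within the same outer iteration) is more careful than the paper's argument, which covers the remaining inner-loop insertions into sets of smaller rank with a single remark that they cannot change $\rho$.
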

\begin{proof}
	Before the while-loop at Line~\ref{lpar:alg:bb3:while1} the claim is obviously
	satisfied as $\trho(v) \succeq \min$ for all $v \in V$.
	We prove the claim by induction over the iterations of the while-loop:
	Assume we have $\rho(v) \preceq \trho(v)$ for all $v \in V$ before an iteration of
	the while-loop. 
	The function $\rho(\cdot)$ is only changed at
	Line~\ref{lpar:alg:bb3:liftvertices} and stored at Line~\ref{lpar:alg:bb3:store} where the set 
	$(P \cap V_c)$ is added to $S_{\succeq r'}$.
	For $v \in P \cap V_c$ we have that $v$ is a priority $c$ vertex and
	either $v$ is a player-$z$ vertex with a successor in $S_{\succeq r}$ or
	a player-$\z$ vertex with all successors in $S_{\succeq r}$.
	Thus, $r \preceq \best(\rho,v)$ for $v \in P$.
	At Line~\ref{lpar:alg:bb3:lift} we compute the $\slift$-operation
	for ranking $r$ with priority $c$ which results in the ranking $r'$ for the first iteration of the while-loop.
	By the monotonicity of the $\slift$ operation and the induction hypothesis we have that
	$r'=\slift(r,c)(v) \preceq \slift(\best(\rho,v),c)(v) \preceq \slift(\best(\trho,v),c)=\trho(v)$ for $v \in P \cap V_c$
	and thus adding $v$ to $S_{\geq r'}$ maintains the invariant
	(if $\rho(v) \succ r'$ beforehand it is not changed and otherwise it is lifted to $r' \prec \trho(v)$).
	In the later iterations of the while-loop $P \cap V_c$ is added to sets with smaller $r'$, which does not 
	affect $\rho$, as these vertices already appear in sets with larger rank.
\end{proof}

The following lemma proves the third part of Invariant~\ref{lpar:inv:correctness}:
Either there is an active $r \in \WOrder$, i.e., the set $S_{\geq r}$ needs to be
processed, or $\rho(v)$ is a fixed point. 
We prove the property again by induction:
Initially, the set $\mina \in \WOrder$ is active and every other set is empty
which trivially fulfills the property. Then, in every iteration
when we change a set with value $r$ we either activate it, or
there is a set with a value $r' \succeq r$ where $S_{\succeq r'}$ subsumes
$S_{\succeq r}$. In the former case, the condition is instantly
fulfilled. In the latter case, there is no vertex $v$ where $best(\rho,v)= r$
which renders $S_{\succeq r}$ irrelevant by definition of $\rho$.  

\begin{lemma}
	Invariant~\ref{lpar:inv:correctness}(3) holds: For all $r \in \WOrder$:
	\begin{compactenum}
	\item $S_{\succeq r}$ is active or,
	\item $\forall v \in \CP_z(S_{\succeq r})\!:
		\text{ if } \best(\rho,v) = r$, then	$\rho(v) = \Lift(\rho,v)(v)$
	\end{compactenum} 
\end{lemma}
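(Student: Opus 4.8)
The plan is to prove Invariant~\ref{lpar:inv:correctness}(3) by induction on the iterations of the outer while-loop at Line~\ref{lpar:alg:bb3:while1}, mirroring the structure of the proofs of Lemmas~\ref{lpar:lem:antimonoton} and~\ref{lpar:lem:underlfp}. For the base case, after the initialization $D.update(\mina,V)$ and $D.update(\top,\emptyset)$, only $\mina$ is marked active, so condition (a) holds for $\mina$; every other $r \in \WOrder$ has $S_{\succeq r}=\emptyset$ (for $r \succ \mina$), which makes $\CP_z(S_{\succeq r})=\emptyset$ and condition (b) vacuously true. Thus the invariant holds before the first iteration.

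For the inductive step, I would fix an iteration that pops an active element $r$ (deactivating it at Line~\ref{lpar:alg:bb3:while1}) and processes $S_{\succeq r}$, and argue the invariant is re-established for every $s \in \WOrder$ afterwards. The only elements whose status can change are those touched in the inner loop: the sets $S_{\succeq r'}$ for $r'=\slift(r,c)$ and the elements $r''\prec r'$ visited while propagating $P \cap V_c$ downward (Lines~\ref{lpar:alg:bb3:while2}--\ref{lpar:alg:bb3:while2end2}), together with the activation/deactivation performed at Lines~\ref{lpar:alg:bb3:activate} and~\ref{lpar:alg:bb3:removed}. I would split into cases. For the popped $r$ itself, after $P=\CP_z(S_{\succeq r})$ is computed and each vertex $v \in P$ with priority $c$ is lifted to $S_{\succeq \slift(r,c)}$, every $v \in \CP_z(S_{\succeq r})$ with $\best(\rho,v)=r$ now satisfies $\rho(v) \succeq \slift(r,c)=\Lift(\rho,v)(v)$; combined with Invariant~\ref{lpar:inv:correctness}(2) and the monotonicity of $\slift$, this yields $\rho(v)=\Lift(\rho,v)(v)$, so condition (b) holds for $r$ even though $r$ is now inactive. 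For an element $s$ whose set $S_{\succeq s}$ strictly grows during the propagation, the test at Line~\ref{lpar:alg:bb3:if1} activates $s$ whenever $S_{\succeq s} \supset S_{\succeq next(s)}$, so condition (a) is restored; whereas if the set becomes equal to its predecessor's the element is removed at Line~\ref{lpar:alg:bb3:removed}, in which case by anti-monotonicity (Lemma~\ref{lpar:lem:antimonoton}) there is no vertex $v$ with $\best(\rho,v)=s$, rendering condition (b) vacuously true. For all elements $s$ untouched in the iteration, neither $S_{\succeq s}$ nor the relevant $\CP_z(S_{\succeq s})$ changes, so the invariant is inherited from the induction hypothesis, except where a successor set grew and could newly subsume $S_{\succeq s}$, which is exactly the removal case already handled.

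The main obstacle I anticipate is the bookkeeping of the \emph{active} bit versus the semantic condition (b): one must show that the data-structure operations at Lines~\ref{lpar:alg:bb3:if1}--\ref{lpar:alg:bb3:activate} and~\ref{lpar:alg:bb3:else1}--\ref{lpar:alg:bb3:removed} exactly maintain the equivalence ``$r$ is marked active in $D$ iff $r$ is active per the formal definition'', since the invariant's condition (a) refers to activity while the algorithm only manipulates the marker. The subtle point is that activating $r'$ implicitly updates all $r$ with $D.getPrevious(r') \prec r \prec r'$ to reference the same set (via $D.update$), so I must verify that after such an update the only element among this block that remains \emph{active} in the formal sense is $r'$ itself, and that condition (b) holds vacuously for the interior elements because anti-monotonicity forces $\best(\rho,v)\neq r$ for them. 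Carefully tracking which $r \in \WOrder$ point to which stored set through the inner while-loop, and confirming that the $\slift$-monotonicity argument (as used in Lemma~\ref{lpar:lem:underlfp}) transfers to the equality $\rho(v)=\Lift(\rho,v)(v)$ rather than just the inequality, is where the real work lies.
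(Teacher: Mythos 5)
Your proposal is correct and follows essentially the same route as the paper's proof: induction over the iterations of the outer while-loop, with the identical case split --- the popped $r$ gets condition (b) from the lifts performed in that iteration, elements whose sets grow are either activated (condition (a)) or become equal to a higher-ranked set so that no vertex has $\best(\rho,\cdot)$ equal to them (condition (b) vacuously), and untouched elements inherit the invariant. One small caveat: the equality $\rho(v)=\Lift(\rho,v)(v)$ does not actually follow from Invariant~\ref{lpar:inv:correctness}(2) as you claim --- the reverse inequality $\rho(v)\preceq\Lift(\rho,v)(v)$ comes from the fact that every lift ever applied to $v$ has the form $\slift(r',p(v))$ for some $r'\preceq\best(\rho,v)$ together with monotonicity of $\slift$ --- but the paper glosses this point in essentially the same way, so this is a matter of shared imprecision rather than a divergence in approach.
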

\begin{proof}

	We prove this invariant by induction over the iterations of the
	while-loop: Before the while-loop at Line~\ref{lpar:alg:bb3:while1} the claim is
	obviously satisfied as we activate $\mina$ which contains all
	vertices; for all other $r \in \WOrder$ the set $S_{\succeq r}$ is empty and thus
	condition (2) is trivially satisfied.

	Assume the condition holds at the beginning of the loop.
	We can, therefore, assume by the induction hypothesis that the condition holds for all the sets.
	If there is no active $r \in \WOrder$, the algorithm terminates and the condition
	holds by the induction hypothesis.
	The condition for a set $S_{\succeq r}$ can be violated only if either the set $S_{\succeq r}$ is changed or
	the set $S_{\succeq r}$ is deactivated.
	That is either at Line~\ref{lpar:alg:bb3:deactivate}, Line~\ref{lpar:alg:bb3:liftvertices} 
	or Line~\ref{lpar:alg:bb3:removed} of the algorithm.

	Let us first consider the changes made in the while-loop.
	If a set $S_{\succeq r}$ is changed in Line~\ref{lpar:alg:bb3:liftvertices}, then the algorithm
	either activates $r$ (Line~\ref{lpar:alg:bb3:activate}) and thus satisfies (1)
	or $S_{\succeq next(r)} = S_{\succeq r}$ which implies that $\best(\rho,v)
	\neq r$ and thus (2) is fulfilled trivially.
	At Line~\ref{lpar:alg:bb3:removed} there is no vertex $v$ with $\best(\rho,v) = r$ 
	(as there is no vertex $w$ with $\rho(w)=r$) 
	and thus (2) is satisfied (and it
	is safe to remove/deactivate the set in  Line~\ref{lpar:alg:bb3:removed}).

	Now consider the case where we remove the set $S_{\succeq r}$ 
	and make $r$ inactive at Line~\ref{lpar:alg:bb3:deactivate}.
	If the set $S_{\succeq r}$ is unchanged during the iteration of the outer while-loop then
	$S_{\succeq r}$ satisfies condition (2) after the iteration. This is because 
	for all $v$ with $\best(\rho,v) = r$ and $p(v)=c$ we have that
	if $v$ is not already contained in $S_{\succeq \slift(r,c)}$ the algorithm adds it to the set $S_{\succeq \slift(r,c)}$
	in Line~\ref{lpar:alg:bb3:liftvertices} in the first iteration of the while-loop
	when processing $c$. This is equivalent to applying $\Lift(\rho,v)(v) =
	\slift(r,c)$. 
	If the set $S_{\succeq r}$ is changed during the iteration then this happens in the inner while-loop.
	As argued above, then either $r$ is activated and thus
	satisfies (1) or  $S_{\succeq next(r)} = S_{\succeq r}$ holds. Thus,
	there is no vertex $v$ with $\best(\rho,v) = r$, i.e., (2) is satisfied.
\end{proof}

\subsubsection{Symbolic Resources}

In the following, we discuss the symbolic resources
Algorithm~\ref{lpar:alg:blackbox3} needs. We determine the number of \os
operations, the number of basic set operations, and the
symbolic space consumption.

\begin{proposition}\label{lpar:prop:numberactivations}
	The number of \os operations in Algorithm~\ref{lpar:alg:blackbox3} is in $O(n \cdot |\WOrder|)$.
\end{proposition}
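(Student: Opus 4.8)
The plan is to observe that the only \os operation in Algorithm~\ref{lpar:alg:blackbox3} is the controllable-predecessor computation $P \gets \CP_z(S_{\succeq r})$ at Line~\ref{lpar:alg:bb3:cp}; every other access to the model is hidden inside basic set operations ($\cup$, $\cap$, $\subseteq$, $\supset$, $=$) or inside the data-structure bookkeeping, none of which touches the edge relation. Since Line~\ref{lpar:alg:bb3:cp} is executed exactly once at the top of each iteration of the outer while-loop at Line~\ref{lpar:alg:bb3:while1}, it suffices to bound the number of iterations of that loop, which equals the number of successful $D.popActiveSet()$ calls. Each successful pop returns an $r \in \WOrder$ currently marked active and renders it inactive, so the number of pops is at most the total number of activations performed over the whole run (including the initial activation of $\mina$). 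Hence the whole task reduces to bounding how often a single $r \in \WOrder$ can be activated.

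The key observation I would formalize is that every activation of an element $r'$ at Line~\ref{lpar:alg:bb3:activate} is accompanied, in the same pass of the inner while-loop, by a \emph{strict} enlargement of the set $S_{\succeq r'}$. Indeed, the inner-loop body at Lines~\ref{lpar:alg:bb3:liftvertices}--\ref{lpar:alg:bb3:if1} is entered only when the guard $P \cap V_c \not\subseteq S_{\succeq r'}$ holds, so the union at Line~\ref{lpar:alg:bb3:liftvertices} adds at least one new vertex to $S_{\succeq r'}$ before the activation test at Line~\ref{lpar:alg:bb3:if1} is reached (this holds even in the case $r' = \top$, which still lies inside the guarded body). Because every $S_{\succeq r} \subseteq V$ and vertices are only ever added to these sets, never removed, each $S_{\succeq r}$ can strictly grow at most $n$ times throughout the algorithm.

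Combining these facts, each $r \in \WOrder$ is activated at most $n$ times, the total number of activations is at most $n \cdot |\WOrder| + 1$, and therefore the number of outer-loop iterations — and with it the number of \os operations — is $O(n \cdot |\WOrder|)$. I would present the per-element bound as an amortization: assign one token to $r$ at each strict growth of $S_{\succeq r}$ and charge each activation (and hence each pop) of $r$ to a distinct such token.

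The step I expect to require the most care is the middle claim that an element is re-popped only after a genuine strict growth of its set since the previous selection, i.e.\ that no spurious re-pops occur. This is precisely what condition~(ii) of the formal definition of \emph{active} encodes ($S^i_{\succeq r} \supset S^j_{\succeq r}$, where $j$ is the last iteration in which $S_{\succeq r}$ was selected at Line~\ref{lpar:alg:bb3:while1}). To use the bound rigorously I would verify that Algorithm~\ref{lpar:alg:blackbox3} maintains the correspondence ``$r$ is marked active in $D$ if and only if $r$ is active in the formal sense,'' so that the activations tracked by the data structure are exactly the strict-growth events counted above. Establishing this correspondence, together with the monotone-addition property that these sets never lose vertices, is the crux; the multiplication of the per-element bound $n$ by $|\WOrder|$ is then immediate.
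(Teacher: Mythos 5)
Your proposal is correct and follows essentially the same route as the paper's own proof: both bound the number of \os operations by the number of outer-loop iterations (one $\CP_z$ call each), bound those by the number of activations, and observe that an element $r$ is activated only when a new vertex is added to $S_{\succeq r}$, which—since these sets grow monotonically and have size at most $n$—yields at most $n \cdot |\WOrder|$ activations in total. Your version is somewhat more careful (making the strict-growth guard of the inner loop and the correspondence between the data-structure marks and the formal notion of \emph{active} explicit), but the decomposition and the key counting argument are identical.
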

\begin{proof}
	Each iteration of the while-loop at Line~\ref{lpar:alg:bb3:while1} processes an
	active $r$. That means, that the set $S_{\succeq r}$ was changed in a prior
	iteration. We use a symbolic one-step operation at Line~\ref{lpar:alg:bb3:cp} for
	each active $S_{\succeq r}$. It, therefore, suffices to count the number of possibly
	active sets throughout the execution of the algorithm. 
%
	Initially only $S_{\succeq \min}$ is active. After extracting an active
	set out of the data structure $D$, it is deactivated at Line~\ref{lpar:alg:bb3:deactivate}.
	We only activate a set $S_{\succeq x}$ when a new vertex is added to it at
	Line~\ref{lpar:alg:bb3:activate}. Because there can only be $n$ vertices with
	ranking $\succeq x$ for all $x \in \WOrder$ the size of each set  $|S_{\succeq x}|$ is smaller or equal to $n$. 
	In the worst case, we eventually put every vertex into every set $S_{\succeq x}$ where $x \in
	\WOrder$. Thus we activate $n \cdot |\WOrder|$ sets which is equal to the number of 
	symbolic one-step operations.
\end{proof}

A similar argument works for analysing the number of basic set operations. 
\begin{proposition}\label{lpar:prop:bb3:setoperations}
	The number of basic set operations in Algorithm~\ref{lpar:alg:blackbox3} is in
	$O(\numprio \cdot n \cdot |\WOrder| )$.
\end{proposition}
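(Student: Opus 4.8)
The plan is to mirror the counting argument of Proposition~\ref{lpar:prop:numberactivations} and then account for the two extra sources of basic set operations that the one-step analysis did not need to consider: the inner \textbf{for}-loop over priorities at Line~\ref{lpar:alg:bb3:for1}, and the inner \textbf{while}-loop at Line~\ref{lpar:alg:bb3:while2}. As established in the proof of Proposition~\ref{lpar:prop:numberactivations}, the number of iterations of the outer while-loop at Line~\ref{lpar:alg:bb3:while1} is bounded by the number of activated sets, which is $O(n \cdot |\WOrder|)$. First I would fix this count of outer iterations and observe that, within each such iteration, the for-loop runs $|C| = \numprio - 1 = O(\numprio)$ times, and that everything in the body of the for-loop \emph{outside} the inner while-loop (computing the intersection $P \cap V_c$ and testing the loop guard) amounts to a constant number of basic set operations. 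This already contributes $O(\numprio \cdot n \cdot |\WOrder|)$ operations.

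The remaining and principal task is to bound the total number of executions of the body of the inner while-loop (Lines~\ref{lpar:alg:bb3:liftvertices}--\ref{lpar:alg:bb3:while2end2}) over the entire run, since each such execution performs only $O(1)$ basic set operations (a union, a superset test, an equality test). I would split these executions into \emph{productive} and \emph{unproductive} ones. An execution is productive if the union at Line~\ref{lpar:alg:bb3:liftvertices} actually adds at least one new vertex to $S_{\succeq r'}$. Here I would invoke the key structural fact that, for every fixed $r \in \WOrder$, the set $S_{\succeq r}$ only grows over the course of the algorithm: vertices are only ever added, and $D.removeSet$ at Line~\ref{lpar:alg:bb3:removed} merely drops a redundant key whose set already coincides with that of its predecessor, so no vertex is ever lost from any $S_{\succeq r}$. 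Consequently each pair $(v, r)$ can account for at most one productive addition to $S_{\succeq r}$, bounding the number of productive executions by $n \cdot |\WOrder|$.

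For the unproductive executions, i.e.\ those in which the loop guard $P \cap V_c \not\subseteq S_{\succeq r'}$ first fails so the inner loop terminates, there is at most one per (outer iteration, priority $c$) pair, giving $O(\numprio \cdot n \cdot |\WOrder|)$ in total. Summing the productive and unproductive executions yields $O(\numprio \cdot n \cdot |\WOrder|)$ inner-loop bodies, each costing $O(1)$ basic set operations; together with the for-loop bookkeeping bounded above, this establishes the claimed bound.

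I expect the main obstacle to be making the amortization of the inner while-loop watertight: the naive product bound (outer iterations $\times$ for-loop length $\times$ inner-loop length) overcounts badly, and the monotone-growth charging argument must be reconciled with the data-structure operations of Data structure~\ref{lpar:alg:bb3:ds}. In particular I would need to verify, using the anti-monotonicity property (Invariant~\ref{lpar:inv:correctness}(1), Lemma~\ref{lpar:lem:antimonoton}) and the precise semantics of $D.removeSet$ and $D.update$, that the ``distinct stored sets'' maintained by $D$ and the mathematical sets $S_{\succeq r}$ may be used interchangeably in the charging, so that counting vertex-additions per $r \in \WOrder$ genuinely bounds the productive inner-loop executions.
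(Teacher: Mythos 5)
Your proposal is correct and takes essentially the same route as the paper's proof: bound the outer iterations by Proposition~\ref{lpar:prop:numberactivations}, charge $O(1)$ basic set operations per for-loop iteration and per failed guard test (giving $O(\numprio \cdot n \cdot |\WOrder|)$), and amortize the inner while-loop bodies using the fact that each body execution strictly enlarges some $S_{\succeq r'}$. The only cosmetic difference is the charging scheme — the paper charges each rank $r' \in \WOrder$ at most $n$ times, while you charge each (vertex, rank) pair at most once — which is the same $O(n \cdot |\WOrder|)$ count.
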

\begin{proof}[Proof of Proposition~\ref{lpar:prop:bb3:setoperations}.]
	As proven in Proposition~\ref{lpar:prop:numberactivations}, there are $O(n \cdot|\WOrder|)$
	iterations of the outer while-loop and thus $O(n \cdot |\WOrder|)$ iterations of the for-loop. 
	Thus the inner while-loop is started $O(\numprio n \cdot |\WOrder|)$ times.
	The test whether the while-loop is started only requires two basic set operations and the overall costs 
	are bound by $O(\numprio n \cdot |\WOrder|)$.
	We bound the overall costs for the iterations of the inner while-loop by an amortized analysis.
	First, notice that each iteration just requires 8 basic set operations 
	(including testing the while condition afterward).	
	In each iteration for a value $r' \in \WOrder$ we charge the $r'$ for the
	involved basic set operations.
	Notice, that in each such iteration new vertices are added to the set $S_{\succeq r'}$
	and thus $r'$ is processed at most $n$ times.
	Thus each $r' \in \WOrder$ is charged for at most $8n$ basic set operations 
	Therefore, the number of basic set operations is $O(\numprio \cdot n \cdot |\WOrder|) + O(n \cdot
	|\WOrder|) = O(\numprio \cdot n \cdot |\WOrder|)$. 
\end{proof}

Due to Proposition~\ref{lpar:lem:bb3:correctness},
Proposition~\ref{lpar:prop:numberactivations},
Proposition~\ref{lpar:prop:bb3:setoperations} and the fact that we use $\leq n$ sets
in the data structure $D$, we obtain Theorem~\ref{lpar:thm:bb3}.

\begin{theorem}\label{lpar:thm:bb3}
	Given a parity game, a finite total order $(\WOrder,\succ)$ and a monotonic function $\slift$
	we can compute the least fixed point of all
	$\Lift(\cdot,v)$ operators with $O(n \cdot |\WOrder|)$ \os operations, $O(\numprio \cdot n \cdot |\WOrder|)$ basic set operations, 
	and $O(n)$ symbolic space. 
\end{theorem}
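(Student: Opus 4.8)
The plan is to assemble the theorem from the four independent guarantees already established for Algorithm~\ref{lpar:alg:blackbox3}: correctness of the computed ranking function, the bound on symbolic one-step operations, the bound on basic set operations, and the bound on symbolic space. These are all statements about one and the same run of the algorithm, so once the algorithm is known to terminate the theorem follows by invoking them in sequence. First I would dispatch correctness via Proposition~\ref{lpar:lem:bb3:correctness}. Its proof rests on maintaining Invariant~\ref{lpar:inv:correctness} across every iteration of the outer while-loop at Line~\ref{lpar:alg:bb3:while1}: the anti-monotonicity of the sets $S_{\succeq r}$ (Lemma~\ref{lpar:lem:antimonoton}), the lower-bound property $\trho(v) \succeq \rho(v)$ (Lemma~\ref{lpar:lem:underlfp}), and the dichotomy that each $r \in \WOrder$ is either active or already a local fixed point. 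Together these force the ranking function $\rho(v) = \max\{r \in \WOrder \mid v \in S_{\succeq r}\}$ to coincide with the least simultaneous fixed point $\trho$ of all $\Lift(\cdot,v)$ once no element is active, which is precisely the termination state.

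Next I would handle the resource counts. For the \os operations I invoke Proposition~\ref{lpar:prop:numberactivations}: each outer-loop iteration consumes exactly one $\CP_z$ operation at Line~\ref{lpar:alg:bb3:cp}, so it suffices to bound how often a set can become active. Since a set $S_{\succeq r}$ is (re)activated only when a fresh vertex is inserted into it, and each of the $n$ vertices enters each of the $|\WOrder|$ sets at most once, the total is $O(n \cdot |\WOrder|)$; this count also certifies termination, closing the small gap left open above. For the basic set operations I invoke Proposition~\ref{lpar:prop:bb3:setoperations}, whose amortized analysis charges the constant number of set operations in each inner-loop iteration to the element $r'$ being updated, and observes that each $r'$ can be charged at most $n$ times, yielding $O(\numprio \cdot n \cdot |\WOrder|)$.

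The hard part, and the real source of the improvement over the basic symbolic algorithm, will be the symbolic space bound of $O(n)$ rather than $O(|\WOrder|)$. The key observation is that the data structure $D$ never stores one set per element of $\WOrder$, but only the \emph{distinct} sets among the $S_{\succeq r}$. By anti-monotonicity (Invariant~\ref{lpar:inv:correctness}(1)) these form a chain under inclusion, and the explicit pruning of duplicated sets at Lines~\ref{lpar:alg:bb3:else1}--\ref{lpar:alg:bb3:removed} guarantees that any two sets kept in $D$ are strictly different. A strictly decreasing chain of subsets of $V$ has length at most $n+1$, so at any point $D$ holds at most $n$ sets, giving the claimed space bound. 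I expect the delicate step to be verifying that this distinctness is preserved exactly at the moments where a set is simultaneously enlarged at Line~\ref{lpar:alg:bb3:liftvertices}, stored at Line~\ref{lpar:alg:bb3:store}, and a stale equal copy is removed, so that the inclusion chain remains strict and no hidden per-element storage over $\WOrder$ sneaks in through the AVL-tree-plus-active-list implementation (cf.\ Remark~\ref{lpar:remark1}). Combining the correctness statement with the three bounds then yields Theorem~\ref{lpar:thm:bb3}.
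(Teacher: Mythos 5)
Your proposal is correct and follows essentially the same route as the paper: the theorem is obtained by combining Proposition~\ref{lpar:lem:bb3:correctness} (correctness via Invariant~\ref{lpar:inv:correctness}), Proposition~\ref{lpar:prop:numberactivations} ($O(n \cdot |\WOrder|)$ one-step operations, which also gives termination), Proposition~\ref{lpar:prop:bb3:setoperations} ($O(\numprio \cdot n \cdot |\WOrder|)$ basic set operations), and the fact that $D$ stores only pairwise-distinct sets forming an inclusion chain, hence at most $n$ sets. Your explicit chain-length argument for the space bound is exactly the justification the paper gives implicitly in its analysis of the data structure $D$.
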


\section{Implementing the Ordered Progress Measure}

In this section, we plug the ordered approach to progress measure (OPM) described
by Fearnley et al.~\cite{FearnleyJS0W17} into Algorithm~\ref{lpar:alg:blackbox3}. To do this, 
we recall the witnesses they use in their algorithm and encode it with a specially-tailored
technique to obtain an algorithm with a sublinear amount of symbolic space. 
Finally, we argue that the function $\slift:\WOrder \times C \mapsto \WOrder$ and the total
order $(\WOrder,\preceq)$ described in~\cite{FearnleyJS0W17} can be used to
fully implement Algorithm~\ref{lpar:alg:blackbox3}.

\smallskip\noindent\emph{The Ordered Progress Measure.}
To implement the ordered progress measure algorithm we argue that the
$\slift$-operation is monotonic in the first argument and the order
$(\WOrder,\preceq)$ is a total finite order to fulfill the conditions of
Algorithm~\ref{lpar:alg:blackbox3}.
Let $\PG$ be a parity game and $C = \{0,\dots,\numprio-1\}$ be the set of priorities
in $\PG$. 
The set $\WOrder$ in the ordered progress measure consists of tuples of priorities of length $k$,
where $k \in O(\log n)$. 
Each element in the tuple is an element of $C\_ = C \cup \{\_\}$, i.e., it is either a priority or "$\_$". 
The set $C\_$ has a total order $(C\_, \preceq)$ such that 
$\_$ is the smallest element, 
odd priorities are ordered descending and are considered smaller than 
even priorities which are ordered ascending.
The order $(\WOrder,\preceq)$ is then obtained by extending the order $(C\_,\preceq)$
lexicographically to the tuples $r \in \WOrder$.

For the details of the $\slift$ function, we refer the reader to the work of
Fearnley et al.~\cite{FearnleyJS0W17}. An implementation of the lift
operation can be found at the GitHub repository
of the Oink system~\cite{dijk2018oink}.

By the results in \cite{FearnleyJS0W17} the order $(\WOrder,\preceq)$ and $\slift$ meet the requirements
of our algorithms.

\begin{lemma}\label{lpar:lem:lift_monotonic}
   The following holds:
	(1) The function $\slift:\WOrder \times C \mapsto \WOrder$ is monotonic in the first
	parameter~\cite[p.6]{FearnleyJS0W17}.
	(2) The order $(\WOrder,\preceq)$ is a total finite order~\cite[p.3]{FearnleyJS0W17}.
	(3) Let $\rho$ be the least simultaneous fixed point of all $\Lift(\cdot,v )$ operators. Then 
	$\rho(v) = \top$ iff player $\E$ has a strategy to win the parity game $\PG$
	when starting from $v$~\cite[Lemma 7.3, Lemma 7.4]{FearnleyJS0W17}.
\end{lemma}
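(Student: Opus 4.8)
The plan is to treat the three assertions separately: claims~(1) and~(2) are elementary structural facts about the chosen order and lifting operator, whereas claim~(3) is the substantive correctness statement and is where the real work lies. For all three I would first observe that the total order $(\WOrder,\preceq)$ and the operator $\slift$ used here are, by construction, exactly the ``ordered'' encoding of Fearnley et al.~\cite{FearnleyJS0W17}: tuples of length $k \in O(\log n)$ over the extended priority alphabet, compared lexicographically after fixing the base order on priorities. Establishing this identification is what lets the cited lemmas transfer verbatim, so the bulk of the proof is confirming that the hypotheses of those results are met.

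For claim~(2) I would argue directly from the definition. The base order places the blank symbol below everything, orders odd priorities in descending order and even priorities in ascending order; since the priority set is finite and any two symbols are comparable, this base order is a finite total order. The lexicographic extension of a finite total order to tuples of a fixed length $k$ is again a finite total order, which yields $(\WOrder,\preceq)$. For claim~(1) I would recall the definition of $\slift$ and verify monotonicity in the first argument by a case analysis on the priority $p(v)$: increasing the rank argument with respect to $\preceq$ can only leave the lexicographically computed output unchanged or raise it, because the update rule truncates and rounds up at the coordinate determined by $p(v)$ in an order-preserving manner. Both checks are routine and I expect no obstacles here.

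The genuine content is claim~(3), and this is the step I expect to be the main obstacle. I would follow the standard two-direction progress-measure argument, which in~\cite{FearnleyJS0W17} is split into Lemma~7.3 and Lemma~7.4. For the direction ``$\rho(v)\neq\top$ implies that the opponent (player Odd) wins from $v$'', I would use the finite rank of $v$ as a witness: since $\rho$ is a simultaneous fixed point of all $\Lift(\cdot,v)$ operators, the condition $\rho(v)=\slift(\best(\rho,v),p(v))$ constrains the ranks of the successors realising $\best$, and following those edges produces plays along which the tuple-valued ranks evolve so that every infinite play has an odd minimal priority occurring infinitely often; the bound $k\in O(\log n)$ on the tuple length is precisely what forces this parity conclusion and prevents the witness from ``wrapping around.'' For the converse, ``every $\E$-winning vertex has rank $\top$'', I would invoke minimality of $\rho$: if a vertex winning for $\E$ received a rank below $\top$, its finite-rank witness would yield a winning strategy for the opponent from that vertex, contradicting that $\E$ wins there; hence no $\E$-winning vertex can carry a sub-$\top$ rank.

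Finally I would close by noting that the least simultaneous fixed point of the $\Lift$ operators referred to in claim~(3) is exactly the object computed by Algorithm~\ref{lpar:alg:blackbox3} (by Proposition~\ref{lpar:lem:bb3:correctness}), so once the identification with the encoding of~\cite{FearnleyJS0W17} in the first paragraph is in place, all three claims follow by direct appeal to the cited results, the verifications above serving only to certify that the required hypotheses hold.
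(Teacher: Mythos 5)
Your overall strategy coincides with the paper's: the paper offers no proof of this lemma at all, but simply imports all three facts from Fearnley et al.\ (the citations to p.~3, p.~6 and Lemmas~7.3--7.4 are embedded in the statement itself), after observing that the order $(\WOrder,\preceq)$ and the $\slift$ function used here are exactly those of the ordered progress measure. Your treatment of (1) and (2) is consistent with that (though note the paper deliberately does not reproduce the definition of $\slift$, so your proposed ``case analysis'' for (1) can only be carried out inside the cited work), and your closing appeal to the cited lemmas is precisely what the paper does.

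However, your elaboration of claim (3) has a genuine directional gap. The two ``directions'' you sketch are in fact the same implication: ``$\rho(v)\neq\top$ implies Odd wins from $v$'' is, by determinacy, the contrapositive of ``every $\E$-winning vertex has rank $\top$'', and indeed your second paragraph proves the latter by invoking the former. What is never addressed is the remaining implication, $\rho(v)=\top \Rightarrow \E$ wins from $v$, and this is exactly the direction where minimality of the least fixed point is indispensable: the constant-$\top$ function is itself a simultaneous fixed point of all $\Lift(\cdot,v)$ operators (since $\top$ is the maximal element and $\slift$ is monotone), so no argument that uses only the fixed-point property can conclude that a $\top$-ranked vertex is actually winning for $\E$. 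You do invoke minimality, but you deploy it for the direction that, in your own sketch, already follows from the fixed-point property alone. Since the lemma is ultimately justified by citation, this does not invalidate the result you rely on, but as a self-contained argument for (3) your sketch establishes only one of the two implications required for the ``iff''.
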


Theorem~\ref{lpar:thm:bb3} together with
Lemma~\ref{lpar:lem:lift_monotonic} imply the following theorem.

\begin{theorem}\label{lpar:thm:ordered_main}
	Algorithm~\ref{lpar:alg:blackbox3} implemented with the OPM computes the winning set
	of a parity game with $O(n\cdot|\WOrder|)$ \os operations, $O(\numprio \cdot n \cdot |\WOrder|)$
	basic set operations, and $O(n)$ symbolic space.
\end{theorem}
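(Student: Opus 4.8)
The plan is to obtain the theorem as a direct instantiation of the black-box result of Theorem~\ref{lpar:thm:bb3}, taking the ordered progress measure (OPM) of Fearnley et al.\ as the concrete total order and lift function. First I would check that the OPM satisfies the two structural hypotheses required to invoke Theorem~\ref{lpar:thm:bb3}: that $\slift:\WOrder \times C \mapsto \WOrder$ is monotonic in its first argument, and that $(\WOrder,\preceq)$ is a finite total order. Both are supplied verbatim by Lemma~\ref{lpar:lem:lift_monotonic}(1) and~(2). With these in hand the preconditions of Theorem~\ref{lpar:thm:bb3} are met, so Algorithm~\ref{lpar:alg:blackbox3}, run with this order and this lift function, correctly computes the least simultaneous fixed point $\rho$ of all $\Lift(\cdot,v)$-operators, and does so within $O(n\cdot|\WOrder|)$ \os operations, $O(\numprio \cdot n \cdot |\WOrder|)$ basic set operations, and $O(n)$ symbolic space.

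It then remains to translate this fixed point into the winning set. Here I would unwind the definition $\rho(v) = \max\{r \in \WOrder \mid v \in S_{\succeq r}\}$ used throughout the correctness analysis (Invariant~\ref{lpar:inv:correctness}): since $\top$ is the maximal element of $\WOrder$, we have $\rho(v) = \top$ exactly when $v \in S_{\succeq \top} = S_\top$, which is precisely the set the algorithm returns on its final line. By Lemma~\ref{lpar:lem:lift_monotonic}(3), $\rho(v) = \top$ holds if and only if player $\E$ wins the parity game $\PG$ from $v$; hence the returned set $S_\top$ is exactly the winning set of player $\E$, which establishes correctness. The resource bounds carry over unchanged from Theorem~\ref{lpar:thm:bb3}, as that theorem already states them parametrically in $|\WOrder|$ and the OPM alters neither the number of \os operations, nor the number of basic set operations, nor the number of stored sets.

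Because the substantive work is already encapsulated in the two cited results, there is no genuine obstacle at this level of abstraction; the statement is essentially a composition. The one point deserving explicit care is the identification of the returned set $S_\top$ with $\{v \mid \rho(v) = \top\}$: this is immediate once one notes that $\top$ is the maximal element, so that $v \in S_{\succeq \top}$ forces $\rho(v) = \top$ and conversely, but it is worth spelling out since the algorithm manipulates the threshold sets $S_{\succeq r}$ (whose anti-monotonic structure is guaranteed by Lemma~\ref{lpar:lem:antimonoton}) rather than the rank function $\rho$ directly. Finally, if one wishes to expose the quasi-polynomial symbolic-operation count highlighted in the introduction, the remaining step would be to substitute the OPM bound $|\WOrder| = n^{O(\log \numprio)}$ into these expressions; that substitution lies outside the present statement, which deliberately keeps the bounds in terms of $|\WOrder|$.
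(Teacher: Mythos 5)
Your proposal is correct and follows exactly the paper's own argument: the paper proves this theorem by combining Theorem~\ref{lpar:thm:bb3} with Lemma~\ref{lpar:lem:lift_monotonic}, checking monotonicity of $\slift$ and totality of $(\WOrder,\preceq)$ via parts (1)--(2) and converting the least fixed point into the winning set via part (3). Your additional care in identifying the returned set $S_\top$ with $\{v \mid \rho(v) = \top\}$ only makes explicit what the paper leaves implicit.
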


\begin{remark}{(Bounds for $|\WOrder|$).}\label{lpar:rem:bound}
	We now discuss the bounds on $|\WOrder|$. The breakthrough result of~\cite{calude2017stoc} 
	shows that $|\WOrder|$ is quasi-polynomial ($n^{O(\log \numprio)}$) in general and polynomial
	when $\numprio \leq \log n$. Using the refined analysis of~\cite{FearnleyJS0W17}, 
	we obtain the following bound on $|\WOrder|$: 
	in general, $\min(n \cdot \log(n)^{\numprio-1}, h \cdot
	n^{c_{1.45}+\log_2(h)})$, where $c_{1.45} = \log_2(e) < 1.45$ and $h=\lceil 1+\numprio /\log(n) \rceil$;
	and if $\numprio \leq \log n$, then $|\WOrder|$ is polynomial due
	to~\cite[Theorem 2.8]{calude2017stoc} and \cite[Corollary 8.8]{FearnleyJS0W17}.
	Note that  $O\left(n^{2.45+\log_2(\numprio )}\right)$ gives a naive upper bound on $|\WOrder|$ in general. 
	Plugging the bounds in Theorem~\ref{lpar:thm:ordered_main} we obtain a set-based symbolic 
	algorithm that requires quasi-polynomially many \os and basic set operations and $O(n)$ symbolic space.
	The algorithm requires only polynomially many \os and basic set operations when $\numprio  \leq \log n$.
\end{remark}

\section{Reducing the Number of Sets for the OPM}\label{lpar:sec:reducesets}

In this section, we tailor a data structure for the OPM to only use
$O(\numprio \cdot \log n)$ sets. While each progress measure can be encoded by $\log(|\WOrder|)$
many sets, the challenge is to provide a representation that also allows to efficiently compute the sets $S_{\succeq r}$.
Such a representation has been provided for the small progress measure~\cite{chatterjee2017symbolic} 
and in the following, we adapt their techniques for the OPM.

\smallskip\noindent\emph{Key Idea.}
The key idea of the symbolic space reduction is that we \emph{encode} the
value of each coordinate of the rank $r$ separately. A set no longer just
stores the vertices with specific rank $r=b_1 \dots b_k$ but instead stores
all vertices where, say, the first coordinate $b_1$ is equal to a specific value in
$C\_$. This encoding enables us to use only a polylogarithmic amount of symbolic space
under the assumption that the number of priorities in the game graph is
polylogarithmic in the number of vertices.

\smallskip\noindent\emph{Symbolic Space Reduction.}
Let the rank of $v$ be $r=b_{1}\dots b_k$.
Vertex $v$ is in the set $C_x^i$ iff the $i$th coordinate of the rank of $v$ is $x$
and a vertex $v$ is in the set $C_\top$ iff the rank of $v$ is $\top$. 
Thus $O(\log(n) \cdot \numprio)$ sets suffice to encode all $r \in
\WOrder$. We demonstrate this encoding of the sets in
Example~\ref{lpar:ex:setex}.

\begin{example}\label{lpar:ex:setex}
	Let $\PG$ be a parity game containing the vertices $v_1,v_2,v_3$. Assume the
	following ranking function:
	$
		f(v_1)  = 65433, f(v_2) = 75422, f(v_3) =
		\mathunderscore\mathunderscore\mathunderscore 32.
	$
	Using the definition of our encoding, we have that:
	$
		\{v_3\} \subseteq C_{\_}^1, \{v_1\} \subseteq C_6^1,\{v_2\} \subseteq C_7^1, 
		\{v_3\} \subseteq C_{\_}^2,	\{v_1,v_2\} \subseteq C_5^2,\{v_3\} \subseteq C_{\_}^2, 
		\{v_1,v_2\} \subseteq C_4^3,\{v_2\} \subseteq C_2^4, \{v_1,v_3\} \subseteq C_3^4, \{v_1\} \subseteq C_3^5,\{v_2, v_3\} \subseteq C_2^5
	$.
\end{example}

\smallskip\noindent\emph{Computing the set $S_{\succeq r}$ from $C_x^i$.}
We obtain the set $S_r$ for rank  $r=b_{1}\dots b_k$ with an intersection of the sets 
$\bigcap_{i=1}^k C_{b_i}^i = S_r$.
To acquire the set $S_{\succeq r}$ we first consider
sets where the first $i$ elements are the equal to $b_1, \dots b_i$
but the $i\mathtt{+}1$th element $x$ is $\succ b_{i+1}$.	
\begin{equation}\label{lpar:eq:constructsgreaterr}
  S^i_{\succeq r} =\bigcap_{1 \leq j \leq i} C_{b_j}^j \cap \bigcup_{x\succ b_{i+1}} C_{x}^{i+1}
\end{equation}

To \emph{construct} the set $S_{\succeq r}$ we apply the following union
operations:
\begin{equation}\label{lpar:eq:constructsets}
	S_{\succeq r} = \bigcup_{i = 1}^{k-1} S^i_{\succeq r} \cup S_r \cup C_{\top}
\end{equation}

That is, we can compute the set  $S_{\succeq r}$ with $O(\numprio \cdot \log n)$ set operations and four additional sets.
Notice that there is no need to store all sets $S^i_{\succeq r}$ as we can immediately add them to the final set when we have computed them.
The number of $\cup$-operations is immediately bounded by $O(\numprio \cdot k)=O(\numprio \cdot \log n)$ by the above definitions.
In order to bound the number of $\cap$-operations by $O(\log n)$, we do the following.
To compute the sets $S^i_{\succeq r}$  we introduce an additional set 
$T^{i} = \bigcap_{1 \leq j \leq i} C_{b_j}^j$.
We have that $S^i_{\succeq r} = T^i \cap \bigcup_{x\succ b_{i+1}} C_{x}^{i+1}$
and $T^{i+1}= T^{i} \cap C_{b_{i+1}}^{i+1}$, i.e., we just need two $\cap$ operation to compute the next set $S^{i+1}_{\succeq r}$.
Moreover, we have that $S_r=T^k$ and thus can be computed with just one $\cap$ operation.
In total, this amounts to $2k-2 = O(\log n)$ many  $\cap$-operations.

\smallskip\noindent\emph{Updating the set $S_{\succeq r}$ to $S'_{\succeq
		r}$.}\label{lpar:par:updatesets}
Assume that the set $S_{\succeq r}$ is the old set that is saved within the sets
$C^i_x$. The new set, $S'_{\succeq r}$ is an updated set, which is also a
superset.
First, compute the difference $S_\Delta = S'_{\succeq r} \setminus S_{\succeq r}$.
Intuitively, the algorithm increased the rank of the vertices in $S_\Delta$.  We delete their old values by updating 
$C^i_{x} = C^i_{x} \setminus S_\Delta$ \ for all $i = 0\dots k$ and each $x \in
\{0,\dots \numprio-1\}$. Then we add the vertices to the set  $C^i_{r_i} = C^i_{r_i}
\cup S_\Delta$ for all $i = 0\dots k$.
In total there are $O(\numprio \log n)$ many $\setminus$-operations and $O(k) =
O(\log n)$ many $\cup$-operations.

\subsection{Algorithmic Details for the Reduced Symbolic Space Algorithm 1}
In this section we present the algorithmic details for the reduced symbolic space algorithm.
\begin{algorithm}[ht]
	\SetKwInOut{Input}{input}
	\SetKwInOut{Output}{output}
	\SetAlgoVlined
	\Input{Parity Game $\PG$}
	\caption{Parity Algorithm with reduced sets}\label{lpar:alg:bb4}
	Initialize $C^i_{\_} \gets  V$ for $0 \leq i \leq k$\; \label{lpar:alg:bb4Linit1}
	Initialize $C^i_{c} \gets \emptyset$ for $0 \leq i \leq k$, $c \in C$\; \label{lpar:alg:bb4Linit2}
	D.activate$(\min)$\;
	\While{$r \gets D.popActiveSet()$}{\label{lpar:alg:bb4:while1}
		$S_{\succeq r}  \gets D.getSet(r)$\;
		D.deactivate($r$)\;
		$P \gets \CP_z(S_{\succeq r})$\;\label{lpar:alg:bb4:getCP}
		\For{$c \in C$} {\label{lpar:alg:bb4:for1}
			$r' \gets \slift(r,c)$\;
			$S_{\succeq r'}  \gets D.getSet(r')\;$\;\label{lpar:alg:bb4:getSet1}
			$rold \gets r'$\;\label{lpar:alg:bb4:roldv}
			$S_{rold} \gets S_{\succeq r'} \cup (P \cap
			V_c)$\label{lpar:alg:bb4:rolds}\;
			\While{$P \cap V_c \not\subseteq S_{\succeq
					r'}$}{\label{lpar:alg:bb4:while2}
				$S_{\succeq r'} \gets S_{\succeq r'} \cup (P \cap V_c)$\;
				$S_{\succeq next(r')} \gets D.getSet(D.next(r'))\;$\;
				\If{$r' = \top $ or $S_{\succeq r'} \supset S_{\succeq next(r')}$}
				{
					$D.activate(r', S_{\succeq r'})$\;
				}
				$S_{\succeq prev(r')} \gets D.getSet(D.getPrevious(r'))$\;
				\If{$S_{\succeq r'} = S_{\succeq prev(r')} \cup (P \cap V_c)$}{
					$D.removeSet(D.getPrevious(r'))$\;
				}				
				$r'\gets D.getPrevious(r')$\;
				$S_{\succeq r'}  \gets D.getSet(r')\;$\;
			}   
			$D.update(rold, S_{rold})\;$\;
			\label{lpar:alg:bb4:update}
		}
	}
	\Return $S_\top$
\end{algorithm}

\smallskip\noindent\emph{Data Structure.}\label{lpar:alg:bb4:ds}
The new data structure $D$ supports all functions of Data
Structure~\ref{lpar:alg:bb3:ds} but the nodes of the AVL tree no longer store a
pointer to a set (but only the value $r$).
The data structure $D$ stores $\numprio \log n + 1 $ sets as described in Section~\ref{lpar:sec:reducesets}, and 
overrides the $D.update(r,S)$-operation.

\begin{itemize}
	\item D.update(r,S): Updates the set $S_{\succeq r}$ to be $S$.
		Every set $S_{\succeq r'}$ with $r' \prec r$  is updated to
		$S_{\succeq r'}\cup S$.
\end{itemize}

\smallskip\noindent\emph{Implementation of the data structure $D$.}
\begin{itemize}
	\item D.getSet(r): This function computes the set $S_{\succeq r}$ as
		described in Section~\ref{lpar:sec:reducesets} and returns it.
	\item D.update(r,S): Computes the update of the set $S_{\succeq r}$ with $S$ as
		described in Section~\ref{lpar:sec:reducesets}.
		Precondition: $S_{\succeq r} \subseteq S$.

\end{itemize}

Notice that Algorithm~\ref{lpar:alg:bb4} only differs from Algorithm~\ref{lpar:alg:blackbox3} in 
(a) the way the  \texttt{D.getSet(r)} method is implemented and 
(b) and how we update the sets, i.e., the overridden \texttt{D.update(r,S)} method.
Hence, to establish the correctness of Algorithm~\ref{lpar:alg:bb4}, it suffices to show that these two methods
do the same as the corresponding operations in Algorithm~\ref{lpar:alg:blackbox3}.
The correctness then directly follows from Proposition~\ref{lpar:lem:bb3:correctness}.

\begin{proposition}[Correctness]\label{lpar:prop:bb4:correctness}
	Given a parity game $\PG$ Algorithm~\ref{lpar:alg:bb4}
	computes the corresponding winning set.
\end{proposition}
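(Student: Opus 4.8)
The plan is to follow the reduction already indicated immediately before the statement: since Algorithm~\ref{lpar:alg:bb4} is line-for-line identical to Algorithm~\ref{lpar:alg:blackbox3} except for the implementations of the \texttt{D.getSet} and \texttt{D.update} methods, it suffices to prove that these two overridden methods return, respectively, exactly the set $S_{\succeq r}$ and effect exactly the same update as the original data structure of Data Structure~\ref{lpar:alg:bb3:ds}. Once this is shown, every execution of Algorithm~\ref{lpar:alg:bb4} induces the same sequence of ranking functions as the corresponding execution of Algorithm~\ref{lpar:alg:blackbox3}, and hence computes the least simultaneous fixed point of all $\Lift(\cdot,v)$-operators by Proposition~\ref{lpar:lem:bb3:correctness}. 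Combined with Lemma~\ref{lpar:lem:lift_monotonic}(3), which identifies the vertices of rank $\top$ with the winning set, returning $S_\top$ then yields the winning set exactly as in Theorem~\ref{lpar:thm:ordered_main}.

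First I would fix the representation invariant maintained by the $C$-sets. Writing $\rho(v)=\max\{r\in\WOrder \mid v\in S_{\succeq r}\}$ for the ranking function implicitly maintained by the algorithm, I claim that at every point the family $\{C^i_x\}_{i,x}\cup\{C_\top\}$ satisfies: $v\in C_\top$ iff $\rho(v)=\top$, and otherwise, writing $\rho(v)=b_1\cdots b_k$, we have $v\in C^i_{b_i}$ for every coordinate $i$ and $v\notin C^i_x$ for $x\neq b_i$. This holds after the initialization on Lines~\ref{lpar:alg:bb4Linit1}--\ref{lpar:alg:bb4Linit2}, where every vertex carries the minimal rank $\mina$. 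Under this invariant I would verify that \texttt{D.getSet}$(r)$ returns $\{v\mid \rho(v)\succeq r\}$: unfolding the lexicographic order on $\WOrder$, a rank $\rho(v)$ is $\succeq r=b_1\cdots b_k$ iff $\rho(v)=\top$, or $\rho(v)=r$, or there is a least coordinate $i$ at which $\rho(v)$ first exceeds $r$ while agreeing on $b_1,\dots,b_i$; these three cases are precisely the sets $C_\top$, $S_r=\bigcap_{j} C^j_{b_j}$, and the $S^i_{\succeq r}$ of~\eqref{lpar:eq:constructsgreaterr}, whose union is~\eqref{lpar:eq:constructsets}. Thus the reduced representation computes the same $S_{\succeq r}$ that Algorithm~\ref{lpar:alg:blackbox3} stores explicitly.

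Next I would show \texttt{D.update}$(r,S)$ re-establishes the invariant and produces the same effect. By the precondition $S_{\succeq r}\subseteq S$, the difference $S_\Delta=S\setminus S_{\succeq r}$ consists exactly of the vertices whose rank is being raised to $r$. Deleting $S_\Delta$ from every $C^i_x$ and reinserting it into $C^i_{r_i}$ for each coordinate $i$ rewrites $\rho(v)=r$ for precisely these vertices and leaves all other ranks untouched, so the invariant persists. It remains to match the two update semantics: the original \texttt{D.update} sets $S_{\succeq r}\gets S$ and additionally lifts every $S_{\succeq r'}$ with $r'\prec r$ that equalled $S_{\succeq r}$ beforehand, whereas in the encoded world the effect on the smaller indices is automatic---since the vertices of $S_\Delta$ now carry rank $r$, a subsequent \texttt{D.getSet}$(r')$ for any $r'\preceq r$ includes them, which is exactly $S_{\succeq r'}\cup S$. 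Anti-monotonicity (Invariant~\ref{lpar:inv:correctness}(1)) guarantees $S_{\succeq r'}\supseteq S_{\succeq r}$ for $r'\prec r$, so this matches the overridden rule of the reduced structure and is consistent with the original one.

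The main obstacle I anticipate is the careful bookkeeping in the two places above: establishing that the lexicographic decomposition in~\eqref{lpar:eq:constructsgreaterr}--\eqref{lpar:eq:constructsets} is exhaustive and that no vertex is miscounted when $\rho(v)$ agrees with $r$ on a prefix, and reconciling the ``propagate only to equal smaller sets'' rule of Data Structure~\ref{lpar:alg:bb3:ds} with the ``union into all smaller sets'' rule of the reduced structure. Both reconciliations rest on anti-monotonicity; once the representation invariant is in place, the remaining argument is a synchronized induction over the iterations of the while-loop at Line~\ref{lpar:alg:bb4:while1}, establishing at each step that the two algorithms are in the same state, so that correctness transfers directly from Proposition~\ref{lpar:lem:bb3:correctness}.
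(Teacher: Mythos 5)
Your proposal is correct and takes essentially the same route as the paper's proof: both reduce the claim to showing that the overridden \texttt{D.getSet} and \texttt{D.update} methods of Algorithm~\ref{lpar:alg:bb4} coincide (via the lexicographic decomposition of Equations~\eqref{lpar:eq:constructsgreaterr}--\eqref{lpar:eq:constructsets} and anti-monotonicity) with those of Algorithm~\ref{lpar:alg:blackbox3}, arguing by induction over the iterations of the while-loop, and then transfer correctness from Proposition~\ref{lpar:lem:bb3:correctness}. Your explicit coordinate-wise representation invariant is merely a more structured phrasing of what the paper's induction establishes implicitly.
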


\begin{proof}
	\begin{sloppypar}
		We show that the \texttt{D.getSet(r)} method  (in the interplay with the \texttt{D.update(r,S)} method)
		in Algorithm~\ref{lpar:alg:bb4} returns the sames set as the \texttt{D.getSet(r)} method in 
		Algorithm~\ref{lpar:alg:blackbox3}. The correctness then directly follows from Proposition~\ref{lpar:lem:bb3:correctness}.
	\end{sloppypar}

	The proof is by induction. Consider the base case after the initialization of the data structure $D$
	and its sets $C_c^i$.
	In Algorithm~\ref{lpar:alg:blackbox3} we have that $D.getSet(\min)$ would return $V$
	and $D.getSet(r)$ would return the empty set for $\min \prec r$.
	In Algorithm~\ref{lpar:alg:bb4}, by the initialization in Line~\ref{lpar:alg:bb4Linit1},
	we have that $D.getSet(\min)$ would return $V$ and 
	by the initialization in Line~\ref{lpar:alg:bb4Linit2} we have $D.getSet(r)$ would return the empty set for $\min \prec r$.
	Therefore, the base case is satisfied.

	Notice that the data structures both in Algorithm~\ref{lpar:alg:blackbox3} and
	Algorithm~\ref{lpar:alg:bb4} are only changed in the for-loop. 
	Assume the claim holds before an iteration of the for-loop. Let $\bar{r}$ be
	the element of $\WOrder$ currently processed by the outer while-loop,
	and let $\bar{r}'$ the $r'$ currently processed by the for-loop.
	Consider some $r \in \WOrder$. 
	By induction hypothesis, $D.getSet(r)$ coincides in both algorithms beforehand.
	If $r \succ \bar{r}'$ then $D.getSet(r)$ is not affected by the changes in both algorithms and 
	thus $D.getSet(r)$ coincides in both algorithms after the iteration of the for loop.
	If $r \preceq \bar{r}'$ then Algorithm~\ref{lpar:alg:blackbox3} updates the data structure such
	that $P \cup V_c$ is added to the set $S_{\succeq r}$ (the set returned by $D.getSet(r)$).
	Now consider Algorithm~\ref{lpar:alg:bb4}. Here the algorithm adds the set $P \cup
	V_c$ to the set $C_x^i$
	that correspond to $\bar{r}'$.
	As $r \preceq \bar{r}'$ there is an $i\geq 0$ such that $r$ and $\bar{r}'$ coincide on the first $i$ elements 
	and the set $P \cup V_c$ is then contained in the set $S^i_{\succeq r}$.
	That means that the set returned by $D.getSet(r)$ contains the set $P \cup V_c$. 
	Moreover, as only vertices in $P \cup V_c$ are affected by the update, all the vertices that were previously contained in
	$D.getSet(r)$ are still contained in the set. In other words, Algorithm~\ref{lpar:alg:bb4} adds $P \cup V_c$  to the set $S_{\succeq r}$.
	That is, the two $D.getSet(r)$ methods coincide also after the iteration of the for-loop for all $r \in \WOrder$.
	Thus, we have that $D.getSet(r)$ coincides in the two algorithms and the correctness of Algorithm~\ref{lpar:alg:blackbox3}
	extends to Algorithm~\ref{lpar:alg:bb4}.
\end{proof}

\begin{proposition}[Symbolic operations]\label{lpar:prop:bb4:runningtime}
	Algorithm~\ref{lpar:alg:bb4} uses
	$O(\numprio \log n)$ sets with $O(n |\WOrder|)$ symbolic one-step operations and
	$O( \numprio^2 n |\WOrder| \log n)$ basic
	set operations.
\end{proposition}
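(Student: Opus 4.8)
The plan is to piggyback entirely on the analysis of Algorithm~\ref{lpar:alg:blackbox3}, since Algorithm~\ref{lpar:alg:bb4} differs from it only in the implementation of \texttt{D.getSet} and \texttt{D.update} on the coordinate-wise encoding of Section~\ref{lpar:sec:reducesets}. By Proposition~\ref{lpar:prop:bb4:correctness} these operations return exactly the same sets $S_{\succeq r}$ as in Algorithm~\ref{lpar:alg:blackbox3}, so every branch taken, every element of $\WOrder$ activated or deactivated, and hence every iteration count is identical across the two algorithms. This lets me reuse the control-flow counts of Proposition~\ref{lpar:prop:numberactivations} and Proposition~\ref{lpar:prop:bb3:setoperations} verbatim, reducing the whole task to (i) counting the stored sets and (ii) re-weighting each logical \texttt{getSet}/\texttt{update} by its new per-call cost.

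For the symbolic space I would just invoke the encoding: the sets $C^i_x$ for $0\le i\le k$ with $k\in O(\log n)$, together with $C_\top$, number $O(\numprio\log n)$. Computing $S_{\succeq r}$ through the accumulators $T^i$ and the running union, and performing the update via $S_\Delta$, each keep only a constant number of auxiliary sets live simultaneously, and the transient working sets ($P$, $S_{rold}$, the temporary $S_{\succeq r'}$) add another $O(1)$; as noted in Remark~\ref{lpar:remark1}, the AVL tree and the active list store only keys and pointers, not symbolic sets. The one-step count is also immediate: the only symbolic one-step call is $\CP_z(S_{\succeq r})$, expressible through $\Pre{E}{\cdot}$ and basic set operations and executed once per iteration of the outer while-loop. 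Because the outer loop activates the identical sequence of elements as in Algorithm~\ref{lpar:alg:blackbox3}, its iteration count is unchanged, so Proposition~\ref{lpar:prop:numberactivations} gives $O(n\,|\WOrder|)$ one-step operations.

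For the basic set operations I would reuse the iteration bounds of Proposition~\ref{lpar:prop:bb3:setoperations}: the for-loop over priorities is entered $O(\numprio\,n\,|\WOrder|)$ times in total, and the inner while-loop body runs $O(n\,|\WOrder|)$ times in total (amortized, each rank $r'$ being charged at most $n$ times across the $|\WOrder|$ ranks). The re-weighting is that each \texttt{getSet} and each \texttt{update} now costs $O(\numprio\log n)$ basic set operations by Section~\ref{lpar:sec:reducesets}, whereas the direct unions, intersections and inclusion tests on local variables remain $O(1)$. The dominant term is then the single \texttt{getSet} and single \texttt{update} executed in each of the $O(\numprio\,n\,|\WOrder|)$ for-loop iterations, contributing $O(\numprio\,n\,|\WOrder|)\cdot O(\numprio\log n)=O(\numprio^2\,n\,|\WOrder|\log n)$; the three \texttt{getSet} calls inside the inner while-loop contribute only $O(n\,|\WOrder|)\cdot O(\numprio\log n)=O(\numprio\,n\,|\WOrder|\log n)$, and the per-outer-iteration \texttt{getSet} a likewise lower-order $O(\numprio\,n\,|\WOrder|\log n)$. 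Summing yields the claimed $O(\numprio^2\,n\,|\WOrder|\log n)$.

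I expect the main obstacle to lie in the first paragraph rather than in the arithmetic: making the control-flow equivalence airtight, that is, arguing carefully (via Proposition~\ref{lpar:prop:bb4:correctness}) that Algorithm~\ref{lpar:alg:bb4} visits the same active-set sequence as Algorithm~\ref{lpar:alg:blackbox3} so that the iteration counts genuinely transfer. The secondary subtlety is the amortized bookkeeping that confirms the for-loop-level \texttt{getSet}/\texttt{update} calls dominate, and not the more numerous but individually cheaper inner-while-loop \texttt{getSet} calls.
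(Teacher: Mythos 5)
Your proposal is correct and follows essentially the same route as the paper's own proof: bound the outer while-loop iterations by Proposition~\ref{lpar:prop:numberactivations} (giving the $O(n\,|\WOrder|)$ one-step bound), charge the for-loop-level \texttt{getSet}/\texttt{update} calls at $O(\numprio\log n)$ each over $O(\numprio\, n\,|\WOrder|)$ occurrences for the dominant $O(\numprio^2 n\,|\WOrder|\log n)$ term, and dispose of the inner while-loop by the same amortized argument, yielding a lower-order $O(\numprio\, n\,|\WOrder|\log n)$ contribution. Your explicit first-paragraph justification of the control-flow equivalence via Proposition~\ref{lpar:prop:bb4:correctness} is simply making precise what the paper leaves implicit when it transfers the iteration counts from Algorithm~\ref{lpar:alg:blackbox3}.
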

\begin{proof}
	There are $O(n \cdot |\WOrder|)$ iterations if the while-loop at Line~\ref{lpar:alg:bb4:while1} (cf.~Proposition~\ref{lpar:prop:numberactivations}). 
	Therefore, the number of \os operations is $O(n \cdot |\WOrder|)$.
	In each iteration of the while-loop, the
	for-loop at Line~\ref{lpar:alg:bb4:for1} has $\numprio$ iterations. 
	The basic set operations at Line~\ref{lpar:alg:bb4:getSet1},
	Line~\ref{lpar:alg:bb4:while2}, Line~\ref{lpar:alg:bb4:rolds} and at
	Line~\ref{lpar:alg:bb4:update} occur $O(\numprio \cdot n \cdot
	|\WOrder|)$ times. This sums up to a total of $O(\numprio^2 \cdot n \cdot |\WOrder| \log n)$
	basic set operations (as each \texttt{getSet(r)} requires $O(\numprio \log n)$
	basic set operations). 
	By the same amortized argument as in the proof of
	Proposition~\ref{lpar:prop:numberactivations},
	we obtain that the total number of basic set-operations in executions of the inner while-loop is in 
	$O(n |\WOrder| \numprio \log n)$.
	The number of basic set operations is, thus, in $O(\numprio^2 n |\WOrder| \log
	n)+ O(\numprio n |\WOrder| \log n) = O(\numprio^2  n \cdot |\WOrder|  \log n)$. 
\end{proof}

Due to Proposition~\ref{lpar:prop:bb4:correctness} and
Proposition~\ref{lpar:prop:bb4:runningtime} and the fact that we use only $O(\numprio \log
n)$ symbolic space in the modified data structure $D$, we obtain Theorem~\ref{lpar:thm:bb4}.

\begin{theorem}\label{lpar:thm:bb4}\label{lpar:thm:reducedspace_main}
	The winning set of a parity game can be computed in  $O(n \cdot |\WOrder|)$ \os operations,
	$O(\numprio^2 n \cdot |\WOrder| \cdot \log n)$ basic set operations, and $O(\numprio \cdot \log n)$ symbolic space. 
\end{theorem}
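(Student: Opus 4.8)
The plan is to obtain the theorem by assembling the correctness and resource statements for Algorithm~\ref{lpar:alg:bb4} that have already been established, and then separately pinning down the symbolic space bound by counting the sets the data structure keeps alive.

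First I would settle correctness. By Proposition~\ref{lpar:prop:bb4:correctness}, Algorithm~\ref{lpar:alg:bb4} returns $S_\top$, the set of vertices whose rank reaches $\top$ in the least simultaneous fixed point of the $\Lift(\cdot,v)$-operators. That proposition itself reduces to showing that the reduced-space implementations of \texttt{D.getSet(r)} and \texttt{D.update(r,S)} return exactly the sets the unreduced Algorithm~\ref{lpar:alg:blackbox3} would return, after which the correctness of Algorithm~\ref{lpar:alg:blackbox3} (Proposition~\ref{lpar:lem:bb3:correctness}) carries over unchanged. It then remains only to identify $S_\top$ with the winning region, which is exactly Lemma~\ref{lpar:lem:lift_monotonic}(3) for the ordered progress measure: $\rho(v)=\top$ iff player~$\E$ wins $\PG$ from $v$. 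Combined with Lemma~\ref{lpar:lem:lift_monotonic}(1,2), certifying that the OPM order $(\WOrder,\preceq)$ is a finite total order and that $\slift$ is monotone in its first argument, all hypotheses needed to run Algorithm~\ref{lpar:alg:bb4} are met.

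Second I would read off the operation counts directly from Proposition~\ref{lpar:prop:bb4:runningtime}, which already gives $O(n\cdot|\WOrder|)$ \os operations and $O(\numprio^2 n\cdot|\WOrder|\cdot\log n)$ basic set operations, matching the two operation bounds in the statement verbatim. No additional work is needed here beyond citing the amortized argument inside that proposition, which transfers the $O(n\cdot|\WOrder|)$ activation bound of Proposition~\ref{lpar:prop:numberactivations} and multiplies it by the $O(\numprio\log n)$ basic operations per \texttt{getSet}/\texttt{update} call.

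Third---and this is the step carrying the real content---I would confirm the $O(\numprio\cdot\log n)$ symbolic space bound. Here the key is the encoding of Section~\ref{lpar:sec:reducesets}: instead of one set per element of $\WOrder$, the algorithm stores only the $\numprio\cdot\log n + 1$ sets $C_x^i$ recording, coordinatewise, which vertices have value $x\in C\_$ in their $i$th coordinate (together with $C_\top$). The subtlety is that the sets $S_{\succeq r}$ needed for the $\CP_z$ call at Line~\ref{lpar:alg:bb4:getCP} must be reconstructed on the fly via Equations~\eqref{lpar:eq:constructsgreaterr} and~\eqref{lpar:eq:constructsets}, and updates must be pushed back into the $C_x^i$. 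I expect the main obstacle to be verifying that both reconstruction and update run \emph{in place}, using only a constant number of auxiliary sets---the running sets $T^i$, the partial set $S^i_{\succeq r}$, and the difference $S_\Delta$ are reused rather than stored---so that the $O(\log n)$ intersections in computing $S_{\succeq r}$ and the $O(\numprio\log n)$ set-difference and union operations in an update never cause the live-set count to grow with $|\WOrder|$. Once this bookkeeping is in place the symbolic space stays $O(\numprio\log n)$, and the theorem follows by combining it with the correctness and operation bounds above.
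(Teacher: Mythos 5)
Your proposal is correct and follows essentially the same route as the paper: the theorem is obtained by combining Proposition~\ref{lpar:prop:bb4:correctness} (whose proof, as you say, reduces to showing the reduced-space \texttt{getSet}/\texttt{update} methods coincide with those of Algorithm~\ref{lpar:alg:blackbox3}, so Proposition~\ref{lpar:lem:bb3:correctness} and the OPM properties of Lemma~\ref{lpar:lem:lift_monotonic} carry over) with the operation counts of Proposition~\ref{lpar:prop:bb4:runningtime} and the observation that the coordinatewise encoding $C_x^i$ of Section~\ref{lpar:sec:reducesets} keeps only $\numprio\cdot\log n + 1$ sets plus a constant number of auxiliary sets alive. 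Your ``in place'' concern in the third step is exactly the point the paper addresses by noting that the partial sets $S^i_{\succeq r}$ and $T^i$ are folded into the result immediately rather than stored, so nothing further is needed.
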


\begin{remark}\label{lpar:remark:polyspace}
	Note that Theorem~\ref{lpar:thm:bb4} achieves bounds similar to Theorem~\ref{lpar:thm:bb3}
	with a factor $\numprio \cdot \log n$ increase in basic set operations,
	however, the symbolic space requirement decreases from $O(n)$ to $O(\numprio \cdot \log n)$. 
	In particular, using the bounds as mentioned in Remark~\ref{lpar:rem:bound}, we obtain 
	a set-based symbolic algorithm that requires quasi-polynomially many \os and
	basic set operations, and $O(\numprio \cdot \log n)$ symbolic space,
	and moreover, when $\numprio \leq \log n$, then the algorithm requires polynomially
	many \os and basic set operations and 
	only poly-logarithmic $O(\log^2 n)$ symbolic space. 
\end{remark}

\begin{remark}\label{lpar:remark2}
	Recall that our AVL-tree data structure potentially requires $O(n)$ non-symbolic space 
	(cf.~Remark~\ref{lpar:remark1}) in the worst case. 
	The algorithm above reduces the symbolic space requirement to $O(\numprio \cdot \log n)$.
	We now argue how to reduce the non-symbolic space to the same bound.
	The main purpose of our AVL-tree data structure is to (a) avoid storing all sets explicitly and
	(b) efficiently maintain pointers to the active sets. 
	As we now have a succinct representation of all sets we are only left with (b).
	For the price of a higher number of basic set operations, we can rid our
	black box algorithm of the AVL-tree in the implementation of the data
	structure:
	We can explicitly compute the active sets in each iteration of the while
	loop. This increases the number of basic set operations to $O(\numprio \cdot n \cdot |\WOrder|^2 \log n)$ while
	the other symbolic resource consumption stays the same.
\end{remark}

We give the algorithmic details of Remark~\ref{lpar:remark2} in the next section.
\section[Algorithmic Details for the Reduced Symbolic Space Algorithm 2][Details for the
Reduced Symbolic Space Alg. 2]{Algorithmic Details for the Reduced Symbolic Space Algorithm 2}
\label{lpar:app:pureSymbolic}
In this section, we present a black box parity algorithm that only uses
$O(\numprio
\cdot \log n)$ symbolic
space. In return, the algorithm needs a factor $|\WOrder|$ more basic set
operations.

\begin{definition}
	An element $r \in \WOrder$ is \emph{active} if 
	(i) $S_{\succeq r} \supset S_{\succeq r'}$, for $r \prec r'$ and 
	(ii) 
	if the set $S_{\succeq r}$ has changed
	since the last time $r$ was selected at Line~\ref{lpar:alg:bb5:while1} of
	Algorithm~\ref{lpar:alg:bb5}.
\end{definition}

\smallskip\noindent\emph{Key Intuition.}
Algorithm~\ref{lpar:alg:bb4} uses the AVL tree to keep track of the $r \in \WOrder$ that need to
be potentially processed in the future, i.e., the \emph{active} r in $\WOrder$.
Algorithm~\ref{lpar:alg:bb5} initializes $\numprio \log n +1$ additional sets, to determine if the
set $S_{\succeq r}$ was changed since the last time we chose $r$ at Line~\ref{lpar:alg:bb5:while1}. 
We call the additional $\numprio \log n + 1$ sets \emph{the data structure $O$}. 
In each iteration of Line~\ref{lpar:alg:bb5:while1} we update $O$ by copying the
content the current set $S_{\succeq r}$ to the corresponding set $T_{\succeq r}$
in $O$. Then we do our usual lifting like in Algorithm~\ref{lpar:alg:bb4} with the set $S_{\succeq r}$.
To determine the next active $r$, we iterate from the minimum element in $\WOrder$ to the
maximum element. 
To check condition (ii) in the iteration $r$, we determine
the set $S_{\succeq r}$ and $T_{\succeq r}$ using the data structures $C$ for
$S_{\succeq r}$ and $O$ for $T_{\succeq r}$ respectively. 
Notice that when $S_{\succeq r} \neq T_{\succeq r}$ we changed the set
$S_{\succeq r}$ since the
last time $r$ was picked in the while-loop at
Line~\ref{lpar:alg:bb5:while1}. To check condition (i) we need to check whether $S_r
\neq \emptyset$ which would mean that there is no set with higher rank $r' > r$ which already
contains all elements of $S_{\succeq r}$, i.e., $S_{\succeq r} = S_{\succeq r'}$. 
If both conditions are satisfied we return the active set.

\begin{algorithm}
	\SetKwInOut{Input}{input}
	\SetKwInOut{Output}{output}
	\SetAlgoVlined
	\SetKwFunction{algo}{algo}
	\SetKwFunction{getactive}{getMinActiveSet}
	\SetKwProg{myalg}{Algorithm}{}{}
	\Input{Parity Game $\PG$, finite total order $(\WOrder,\preceq)$, monotone function
		$\slift(\cdot)$, a player $z \in \{ \E, \O\}$}
	\caption{Parity Algorithm with reduced sets}\label{lpar:alg:bb5}
	Initialize $C$ with $C^i_{\_} \gets V$ for $0 \leq i \leq k$\; 
	Initialize $C$ with $C^i_{c} \gets \emptyset$ for $0 \leq i \leq k$, $c \in C$\;
	Initialize $O$ with $O^i_{c} \gets \emptyset$ for $0 \leq i \leq k$, $c \in C \cup \{\_\}$\; 

	\While{$r \gets \getactive{}$}{\label{lpar:alg:bb5:while1}
		$S_{\succeq r} \gets D.getSet(r)\;$\;
		$O.update(r,S_{\succeq r})$\;\label{lpar:alg:bb5:copy}
		$P \gets \CP_z(S_{\succeq r})$\;\label{lpar:alg:bb5:getCP}
		\For{$c \in C$} {\label{lpar:alg:bb5:for1}
			$r' \gets \slift(r,c)$\;\label{lpar:alg:bb5:liftcomp}
			$S_{\succeq r'}  \gets C.getSet(r')\;$\;\label{lpar:alg:bb5:getSet1}
			$C.update(r', S_{\succeq r'} \cup (P \cap V_c))\;$\;
			\label{lpar:alg:bb5:update}
		}
	}
	\KwRet {$S_\top$}\;\,\,

	\SetKwProg{myfunc}{function}{}{}
	\myfunc{\getactive{}}{
		$S_{\succeq r} \gets V$\;
		$T_{\succeq r} \gets O.getSet(\min)$\;
		\For{$r = \min,\min+1, \dots, \top$}{\label{lpar:alg:bb5:for2}
			$S_{r}  \gets C.getSet2(r)\;$\;
			\If{ $S_{r} \not= \emptyset$ and $S_{\succeq r} \supset T_{\succeq r}$}{\label{lpar:alg:bb5:if1}
				\Return $r$;
			}		  
			$T_{r} \gets O.getSet2(r)\;$\;
			$S_{\succeq r} \gets S_{\succeq r} \setminus S_{r}$\;\label{lpar:alg:bb5:merge1}
			$T_{\succeq r} \gets T_{\succeq r} \setminus T_{r}$\;\label{lpar:alg:bb5:merge2}
		}
		\Return false\;
	}

\end{algorithm}

In Subsection~\ref{lpar:appb:ds} we detail the data structure which is used for $C$
and $O$ in Algorithm~\ref{lpar:alg:bb5}. The proofs of Subsection~\ref{lpar:appb:correctness} and
Subsection~\ref{lpar:appb:resources} yield the theorem detailed below.

\begin{theorem}
	The winning set of a parity game can be computed without the usage of
	nonsymbolic space with $O(\numprio \log n)$ symbolic
	space, $O(\numprio n|\WOrder|^2 \log n)$ basic set operations and $O(n |\WOrder|)$ \os
	operations.
\end{theorem}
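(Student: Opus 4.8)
The plan is to obtain the theorem by reducing it to the correctness and resource bounds already established for Algorithm~\ref{lpar:alg:bb4} (and, through it, Algorithm~\ref{lpar:alg:blackbox3}). Algorithm~\ref{lpar:alg:bb5} differs from Algorithm~\ref{lpar:alg:bb4} in only two respects: it drops the AVL-tree of the data structure $D$, and it replaces the call $D.popActiveSet()$ by an explicit symbolic computation of the $\preceq$-minimum active element in the function \textsc{getMinActiveSet}, using a second coordinate-encoded family of sets $O$ that stores, for each $r$, a snapshot $T_{\succeq r}$ of $S_{\succeq r}$ taken at the moment $r$ was last selected (the update at Line~\ref{lpar:alg:bb5:copy}). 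Since the correctness of Algorithm~\ref{lpar:alg:blackbox3} rests on Invariant~\ref{lpar:inv:correctness}, which is preserved no matter which active element is popped, returning the minimum active element is a legal instantiation of $popActiveSet()$. Hence, once \textsc{getMinActiveSet} is shown to return a minimum active element (and false exactly when none is active), correctness follows from Proposition~\ref{lpar:lem:bb3:correctness} and Proposition~\ref{lpar:prop:bb4:correctness}, because the sets $S_{\succeq r}$ are manipulated through the same coordinate encoding of Section~\ref{lpar:sec:reducesets}.

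The core of the correctness argument is a loop invariant for the for-loop at Line~\ref{lpar:alg:bb5:for2}: when the loop reaches index $r$, the local variable $S_{\succeq r}$ holds the current set $\{v \mid \rho(v) \succeq r\}$ and $T_{\succeq r}$ holds its snapshot. I would prove this by induction along $\WOrder$: the variables are initialized to $V = S_{\succeq \mina}$ and to $O.getSet(\mina)$, and at each step the exactly-rank-$r$ slices $S_r$ and $T_r$ are removed at Lines~\ref{lpar:alg:bb5:merge1}--\ref{lpar:alg:bb5:merge2}, which by anti-monotonicity (Lemma~\ref{lpar:lem:antimonoton}) turns $S_{\succeq r}$ into $S_{\succeq next(r)}$, and likewise for $T$. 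With this invariant the test at Line~\ref{lpar:alg:bb5:if1} captures exactly the two conditions of activeness: $S_r \neq \emptyset$ is equivalent to $S_{\succeq r} \supset S_{\succeq next(r)}$, i.e.\ condition~(i), while $S_{\succeq r} \supset T_{\succeq r}$ is condition~(ii), since vertices are only ever added to the sets $S_{\succeq r}$, so the containment is strict precisely when $S_{\succeq r}$ has gained a vertex since $r$ was last selected. Iterating from $\mina$ upward and returning the first such $r$ therefore yields the minimum active element, and exhausting $\WOrder$ means no set is active, which by Invariant~\ref{lpar:inv:correctness}(3) is exactly the fixed-point condition that guarantees termination with the correct winning set $S_\top$.

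For the resource bounds I would argue as follows. Symbolic space: the families $C$ and $O$ each consist of $\numprio \log n + 1$ sets, and \textsc{getMinActiveSet} together with the while-loop body uses only a constant number of working sets (the $O(\log n)$ temporaries inside a single $getSet$ computation are reused), so $O(\numprio \log n)$ symbolic space suffices; removing the AVL-tree eliminates the $O(n)$ nonsymbolic space of Remark~\ref{lpar:remark2}, leaving only the loop index $r$ and counters of size $O(\numprio \log n)$. One-step operations: each while-loop iteration performs a single $\CP_z$ at Line~\ref{lpar:alg:bb5:getCP}, and the number of iterations equals the number of active elements returned, which is $O(n \cdot |\WOrder|)$ by the counting argument of Proposition~\ref{lpar:prop:numberactivations} (each of the $|\WOrder|$ sets can be activated at most $n$ times, once per newly added vertex). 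Basic set operations: one call to \textsc{getMinActiveSet} runs the for-loop over at most $|\WOrder|$ ranks, each iteration doing a constant number of set operations plus two $getSet2$ computations of cost $O(\numprio \log n)$, hence $O(\numprio |\WOrder| \log n)$ per call; multiplied by the $O(n|\WOrder|)$ calls this gives $O(\numprio n |\WOrder|^2 \log n)$, which dominates the $O(\numprio^2 n |\WOrder| \log n)$ cost of the inner for-loop at Line~\ref{lpar:alg:bb5:for1}. Combining these bounds yields the theorem.

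The step I expect to be the main obstacle is the correctness of \textsc{getMinActiveSet}: pinning down the loop invariant that relates the incrementally peeled local variables to the global sets $S_{\succeq r}$ and their snapshots, and verifying that the two-part test at Line~\ref{lpar:alg:bb5:if1} is faithful to the redefined notion of activeness. In particular, the subtle point is that the snapshot family $O$ must be updated at exactly the selection time (Line~\ref{lpar:alg:bb5:copy}) so that condition~(ii) is detected as a strict containment, and one must confirm that the minimum-first selection order is compatible with the order-independent Invariant~\ref{lpar:inv:correctness}; the remaining algebra on the coordinate encoding carries over unchanged from Section~\ref{lpar:sec:reducesets}.
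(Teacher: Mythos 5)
Your proposal matches the paper's own proof in both structure and substance: correctness is reduced to Algorithm~\ref{lpar:alg:bb4}/Proposition~\ref{lpar:lem:bb3:correctness} by showing that \textsc{getMinActiveSet} returns an active element exactly when one exists (via the characterization that $r$ is active iff $S_r \neq \emptyset$ and $S_{\succeq r} \supset T_{\succeq r}$, with the snapshot family $O$ refreshed at selection time and minimality used only to keep $O$ consistent), and the resource bounds are obtained by the same counting arguments (one $\CP_z$ per while-iteration, $O(n|\WOrder|)$ iterations by the activation count, and per-call scan costs multiplied out). Your per-iteration estimate of $O(\numprio \log n)$ for a \texttt{getSet2} call is an overestimate (it is $O(\log n)$, an intersection of $O(\log n)$ coordinate sets), but this only slackens an upper bound and still yields the stated $O(\numprio\, n\, |\WOrder|^2 \log n)$ total.
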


\subsection{Data Structure.}\label{lpar:appb:ds}
The new data structure which is used by $D$ and $O$ supports the following functions
\begin{itemize}
	\item getSet(r): returns the set $S_{\succeq r}$. 
	\item getSet2(r): returns the set $S_{r}$.
	\item update(r,S): Updates the set $S_{\succeq r}$ to be $S$.
		Every set $S_{\succeq r'}$ with $r' \preceq r$  is updated to
		$S_{\succeq r'}\cup S$.
\end{itemize}

\smallskip\noindent\emph{Implementation of the data structure.}

\begin{itemize}
	\item getSet(r): This function computes the set $S_{\succeq r}$ as
		described in Section~\ref{lpar:sec:reducesets} and returns it.
	\item getSet2(r,S): This function computes the set $S_{r}$ as described in
		Section~\ref{lpar:sec:reducesets} and returns it.
	\item update(r,S): Computes the update of the set $S_{\succeq r}$ with $S \subseteq V$ as
		described in Section~\ref{lpar:sec:reducesets}.
		Precondition: $S_{\succeq r}$ is a subset of $S$.

\end{itemize}

\subsection{Correctness}\label{lpar:appb:correctness}
In this section, we prove the correctness of Algorithm~\ref{lpar:alg:bb5}. 

\begin{proposition}
	Given a parity game $\PG$, Algorithm~\ref{lpar:alg:bb5} computes the winning
	set.
\end{proposition}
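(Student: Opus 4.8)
The plan is to reduce the correctness of Algorithm~\ref{lpar:alg:bb5} to that of the already-verified black-box scheme, exactly as was done for Algorithm~\ref{lpar:alg:bb4} in Proposition~\ref{lpar:prop:bb4:correctness}. Recall that by Proposition~\ref{lpar:lem:bb3:correctness} together with Lemma~\ref{lpar:lem:lift_monotonic}, any algorithm that maintains the family $\{S_{\succeq r}\}_{r \in \WOrder}$ under the anti-monotonicity Invariant~\ref{lpar:inv:correctness}(1), lifts vertices via the $\slift$/$\CP_z$ mechanism of Line~\ref{lpar:alg:bb5:update}, and keeps processing active elements until none remain, computes the least simultaneous fixed point of all $\Lift(\cdot,v)$-operators and hence (by Lemma~\ref{lpar:lem:lift_monotonic}(3)) the winning set. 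Crucially, the argument establishing Invariant~\ref{lpar:inv:correctness} does not depend on the order in which active elements are selected, so it suffices to show (a) that the succinct data structures $C$ and $O$ realise the intended set operations, and (b) that the \textsc{getMinActiveSet} routine returns precisely an active element of $\WOrder$, and returns false exactly when none are active. As in Section~\ref{lpar:ss:correctness}, I tacitly assume termination, which is justified by the resource analysis.

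For (a), the behaviour of \texttt{C.getSet}, \texttt{C.getSet2} and \texttt{C.update} on the encoded sets $C^i_x$ is exactly the one analysed in Section~\ref{lpar:sec:reducesets}, and the argument of Proposition~\ref{lpar:prop:bb4:correctness} already shows that these operations compute $S_{\succeq r}$, the exact-rank slice $S_r$, and the monotone update of all $S_{\succeq r'}$ with $r' \preceq r$, respectively. The data structure $O$ uses the identical encoding, so its operations return the corresponding snapshot sets $T_{\succeq r}$ and $T_r$. The only genuinely new ingredient is the lazy detection of active elements in \textsc{getMinActiveSet}, so this is where the work lies.

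For (b), I would first prove a loop invariant for the scan at Line~\ref{lpar:alg:bb5:for2}: at the start of the iteration for a given $r$, the local variable $S_{\succeq r}$ holds the current set $\{v \mid \rho(v) \succeq r\}$ and $T_{\succeq r}$ holds the snapshot that $O$ currently stores for rank $\succeq r$. This follows by induction, since each iteration peels off the exact-rank slices via $S_{\succeq r} \gets S_{\succeq r}\setminus S_r$ (Line~\ref{lpar:alg:bb5:merge1}) and $T_{\succeq r} \gets T_{\succeq r}\setminus T_r$ (Line~\ref{lpar:alg:bb5:merge2}), turning $S_{\succeq r}$ into $S_{\succeq \mathrm{next}(r)}$. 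Under this invariant, the two tests at Line~\ref{lpar:alg:bb5:if1} capture exactly the two defining conditions of an active element: since $S_r = S_{\succeq r}\setminus S_{\succeq \mathrm{next}(r)}$, the test $S_r \neq \emptyset$ is equivalent to $S_{\succeq r}\supsetneq S_{\succeq r'}$ for all $r'\succ r$ (using anti-monotonicity, Invariant~\ref{lpar:inv:correctness}(1)), which is condition~(i); and, since updates only ever add vertices, $S_{\succeq r}\supseteq T_{\succeq r}$ holds throughout, so $S_{\succeq r}\supset T_{\succeq r}$ holds precisely when $S_{\succeq r}$ has grown since the snapshot was taken. Because Line~\ref{lpar:alg:bb5:copy} refreshes $T_{\succeq r}$ to the current $S_{\succeq r}$ every time $r$ is selected, this strict containment is equivalent to ``$S_{\succeq r}$ changed since $r$ was last selected,'' i.e. condition~(ii). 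Hence scanning from $\min$ upward returns the minimal active $r$, and returns false iff no $r$ passes both tests, i.e. iff no element is active.

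Combining (a) and (b), \textsc{getMinActiveSet} produces a valid minimal-first schedule of the active elements while every set manipulation agrees with that of Algorithm~\ref{lpar:alg:bb4}; therefore Invariant~\ref{lpar:inv:correctness} is maintained, the algorithm halts with no active element, and $\rho(v)=\max\{r \mid v\in S_{\succeq r}\}$ equals the least simultaneous fixed point of all $\Lift(\cdot,v)$-operators, so $S_\top$ is the winning set. I expect the main obstacle to be the bookkeeping around the snapshot structure $O$: one must verify that comparing the live set stored in $C$ against the stale snapshot in $O$ (rather than maintaining an explicit active list as in Algorithm~\ref{lpar:alg:bb4}) never misses a changed set nor spuriously re-activates an unchanged one. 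In particular, this includes the initial state, where for $r\succ\min$ both $S_{\succeq r}$ and $T_{\succeq r}$ are empty, so that condition~(ii) correctly fails until the first update reaches rank $r$, and the boundary case $r=\top$, where $\mathrm{next}(r)$ is undefined and the test must fall back on $S_r\neq\emptyset$ alone.
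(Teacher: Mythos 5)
Your high-level structure matches the paper's: since Algorithm~\ref{lpar:alg:bb5} differs from Algorithm~\ref{lpar:alg:bb4} only in how active elements are found, correctness reduces to showing that \textsc{getMinActiveSet} returns an active $r \in \WOrder$ whenever one exists and \textbf{false} otherwise; your handling of test~(i) (via $S_r \neq \emptyset$ and anti-monotonicity) and of the scan's peeling invariant is fine. The genuine gap is your treatment of test~(ii). You claim that $T_{\succeq r} \subset S_{\succeq r}$ holds \emph{exactly when} $S_{\succeq r}$ changed since $r$ was last selected, ``because Line~\ref{lpar:alg:bb5:copy} refreshes $T_{\succeq r}$ to the current $S_{\succeq r}$ every time $r$ is selected.'' But Line~\ref{lpar:alg:bb5:copy} executes $O.update(\bar r, S_{\succeq \bar r})$, and by the data structure's semantics this also replaces $T_{\succeq r'}$ by $T_{\succeq r'} \cup S_{\succeq \bar r}$ for every $r' \prec \bar r$ --- the succinct encoding of Section~\ref{lpar:sec:reducesets} cannot maintain independent per-rank snapshots. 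So snapshots of ranks that were never selected do get modified, and your equivalence is false as a standalone claim: if at the moment $\bar r$ is selected we have $S_{\succeq r'} \setminus T_{\succeq r'} \subseteq S_{\succeq \bar r}$ for some $r' \prec \bar r$ (which can happen, e.g., when $S_{r'}$ is empty), the propagation yields $T_{\succeq r'} = S_{\succeq r'}$ even though $S_{\succeq r'}$ changed since $r'$ was last selected. You explicitly flag this bookkeeping as ``the main obstacle,'' but flagging it is not resolving it, and resolving it is precisely the content of the paper's proof.

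The paper's fix is that the biconditional is proved only for the \emph{conjunction} of (i) and (ii) --- the lemma ``$r$ is active iff (i) $S_{\succeq r} \supset S_{\succeq r'}$ for $r \prec r'$ and (ii) $T_{\succeq r} \subset S_{\succeq r}$'' --- by induction over the iterations of the outer while-loop, and the induction step leans on the fact that \textsc{getMinActiveSet} returns the \emph{minimal} element passing the tests: when $\bar r$ is selected, every $r' \prec \bar r$ fails the tests, hence by the induction hypothesis is inactive, and for such $r'$ the propagation at Line~\ref{lpar:alg:bb5:copy} either leaves $T_{\succeq r'}$ unchanged or produces $T_{\succeq r'} = S_{\succeq r'}$; both outcomes are consistent with $r'$ remaining inactive (a separate case analysis covers the updates at Line~\ref{lpar:alg:bb5:update}). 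Indeed, the paper remarks that minimality is used exactly ``to consistently maintain the data structure $O$.'' To repair your proof you must abandon the per-condition equivalence for (ii), state the invariant jointly for (i) and (ii), and carry it through the loop iterations as the paper does; the per-rank snapshot semantics you assume simply does not hold for this data structure.
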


The only difference in Algorithm~\ref{lpar:alg:bb4} and Algorithm~\ref{lpar:alg:bb5} is how active $r \in \WOrder$ are obtained.
That is, in order to prove the correctness of Algorithm~\ref{lpar:alg:bb5} it suffices to show that the 
function \getactive{$\cdot$} returns the set of an active $r \in \WOrder$ whenever there exist one, and false otherwise.
The function $\getactive{$\cdot$}$ returns the minimal\footnote{Notice
	that the fact that $\getactive{$\cdot$}$ returns the minimal active $r$ is only used to consistently maintain the data structure $O$.}
$r \in \WOrder$ that satisfies 
(i) $S_{\succeq r} \supset S_{\succeq r'}$, for $r \prec r'$ (by the anti-monotonicity of the set it suffices to test this for the direct successor) and (ii) $T_{r} \subset S_{r}$. The next lemma shows that these sets are actually the active $r \in \WOrder$.

\begin{lemma}
	An element $r \in \WOrder$ is \emph{active} iff
	(i) $S_{\succeq r} \supset S_{\succeq r'}$, for $r \prec r'$ and 
	(ii) $T_{\succeq r} \subset S_{\succeq r}$
\end{lemma}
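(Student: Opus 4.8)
The plan is to view the lemma as the statement that the snapshot sets $T_{\succeq r}$ stored in the data structure $O$ faithfully detect a change of $S_{\succeq r}$. Both the preceding Definition of \emph{active} and the right\-hand side of the lemma contain the identical condition~(i), and by anti\-monotonicity of the maintained family $\{S_{\succeq r}\}_{r}$ (cf.\ Invariant~\ref{lpar:inv:correctness}(1)) condition~(i) holds for $r$ iff some vertex has rank exactly $r$, i.e.\ iff $S_r=S_{\succeq r}\setminus S_{\succeq \mathrm{succ}(r)}\neq\emptyset$; this is exactly the first conjunct tested in \textsc{getMinActiveSet}. Hence it suffices to prove that, whenever (i) holds for $r$, the strict inclusion $T_{\succeq r}\subsetneq S_{\succeq r}$ (condition~(ii)) is equivalent to ``$S_{\succeq r}$ has changed since the last selection of $r$'' (condition~(ii$'$)). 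Throughout I will write $\tau(r)$ for the iteration of the while\-loop at Line~\ref{lpar:alg:bb5:while1} in which $r$ was last selected ($\tau(r)=0$ if never), use $S^{(t)}_{\succeq r}$ for the value of $S_{\succeq r}$ at time $t$, and use that every $C$\-update only enlarges $S_{\succeq r}$, so $S_{\succeq r}$ is monotone in time.

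Two inclusions are routine. The copy $O.update(r,S_{\succeq r})$ performed at Line~\ref{lpar:alg:bb5:copy} at time $\tau(r)$ gives $T_{\succeq r}\supseteq S^{(\tau(r))}_{\succeq r}$, and since $T_{\succeq r}$ is afterwards only enlarged this lower bound persists; conversely every vertex ever written into $T_{\succeq r}$ arrives through some $O.update(\bar r,S_{\succeq\bar r})$ with $\bar r\succeq r$, and at that instant the added set is $S_{\succeq\bar r}\subseteq S_{\succeq r}$ by anti\-monotonicity, so together with time\-monotonicity we get $T_{\succeq r}\subseteq S_{\succeq r}$ at all times. These two bounds already yield the easy direction: if $S_{\succeq r}$ is unchanged since $\tau(r)$, then $S_{\succeq r}=S^{(\tau(r))}_{\succeq r}\subseteq T_{\succeq r}\subseteq S_{\succeq r}$ forces $T_{\succeq r}=S_{\succeq r}$, so (ii) fails.

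The hard direction — that an actual change of $S_{\succeq r}$ forces $T_{\succeq r}\subsetneq S_{\succeq r}$ — is the main obstacle, because the downward propagation of $O.update$ writes into $T_{\succeq r}$ whenever some \emph{larger} $\bar r\succ r$ is selected, and could a priori let $T_{\succeq r}$ ``catch up'' to a grown $S_{\succeq r}$. This is exactly the point controlled by the minimality of \textsc{getMinActiveSet}, as the footnote anticipates, and I would prove it as an invariant maintained jointly with the lemma by induction over the iterations of the while\-loop. The engine has two parts. First, \emph{onset}: a growth of $S_{\succeq r}$ occurs only through a $C.update$ inside the $c$\-loop (Line~\ref{lpar:alg:bb5:update}), which inserts a genuinely new vertex $u$; since the only writes to $T_{\succeq r}$ happen at Line~\ref{lpar:alg:bb5:copy} at the \emph{start} of an iteration and only ever add subsets of $S_{\succeq r}$ present at that start, $u\notin T_{\succeq r}$ right after its insertion, so (ii) switches on precisely when $S_{\succeq r}$ genuinely changes. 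Second, \emph{persistence}: by the inductive hypothesis the element $\bar r$ returned by \textsc{getMinActiveSet} is the minimal element satisfying (i)$\wedge$(ii), hence the minimal \emph{active} element, so while $r$ is active no $\bar r\succ r$ is ever selected and $T_{\succeq r}$ receives no further vertices until $r$ itself is selected; processing any $\bar r'\prec r$ leaves $T_{\succeq r}$ untouched because $O.update$ propagates only to indices $\preceq\bar r'\prec r$. The delicate case is when (i) momentarily fails for $r$ (no vertex sits at level $r$) so that a larger element may be selected and write into $T_{\succeq r}$: here I would argue along the levels, using that a vertex of current rank exactly $r$ can enter $T_{\succeq r}$ only through a selection of $r$, so the witness to (i) that reappears is itself never masked, while subset\-additions from larger selections into an otherwise\-inactive $T_{\succeq r}$ are harmless. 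Carrying this through re\-establishes, after the iteration, both the lemma and the auxiliary invariant used in its proof.

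Assembling the pieces, for every $r$ with (i) we obtain (ii)$\Leftrightarrow$(ii$'$), and since (i) is shared we conclude that $r$ is active iff (i)$\wedge$(ii). As an immediate corollary the increasing scan in \textsc{getMinActiveSet} returns the minimal active element, or \false{} when none exists, which is precisely what is needed for the correctness of Algorithm~\ref{lpar:alg:bb5} to follow from that of Algorithm~\ref{lpar:alg:blackbox3} (Proposition~\ref{lpar:lem:bb3:correctness}).
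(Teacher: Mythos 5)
Your proposal is correct, and it rests on the same engine as the paper's proof: an induction over the iterations of the while-loop at Line~\ref{lpar:alg:bb5:while1} in which the inductive hypothesis supplies that \textsc{getMinActiveSet} returns the \emph{minimal} active element, so that the only two mutation points (the copy at Line~\ref{lpar:alg:bb5:copy} and the $C$-updates at Line~\ref{lpar:alg:bb5:update}) can be controlled. The organization differs, though. The paper maintains the full biconditional as a loop invariant and does a case analysis directly at the two mutation sites; you instead factor out the sandwich $S^{(\tau(r))}_{\succeq r} \subseteq T_{\succeq r} \subseteq S_{\succeq r}$, reduce the lemma to ``given (i), condition (ii) is equivalent to `$S_{\succeq r}$ changed since $r$ was last selected','' and then argue onset/persistence. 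Your route buys something concrete: the paper's proof repeatedly infers ``$r'$ inactive $\Rightarrow T_{\succeq r'} = S_{\succeq r'}$'' from the induction hypothesis, which is only valid when condition (i) holds for $r'$ (inactive merely means $\lnot$(i) $\lor$ $\lnot$(ii)); the case where (i) fails for $r'$ while $T_{\succeq r'} \subset S_{\succeq r'}$ is silently skipped. That is exactly your ``delicate case,'' and your resolution is the right one — selections of larger elements may pour vertices into an inactive $T_{\succeq r}$, but this is harmless because both sides of the equivalence already declare $r$ inactive via $\lnot$(i), and when (i) later reappears it does so through a vertex newly lifted to rank exactly $r$, which can enter $T_{\succeq r}$ only through a selection of $r$ itself and hence is an unmaskable witness for (ii). The one caveat is that this part of your write-up is still a plan (``here I would argue along the levels\dots'') rather than a worked-out argument; the idea you state does close the case, but to splice your proof into the paper you would need to carry out that induction explicitly, including the sub-case where the witness to (i) has persisted since $\tau(r)$ and the fresh vertex entering $S_{\succeq r}$ is a different one.
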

\begin{proof}
	We show the claim by induction.
	$C$ is initialized  such that only $\min$ is active 
	and all the other $r \in \WOrder$ correspond to the empty set and thus all, but $\min$ and $\top$
	violate (i).
	Now as all the sets of $O$ are empty (ii) holds for $\min$ but does not hold for $\top$.
	Hence, the statement is true when first entering the loop.

	Now consider an iteration of the loop where $\bar{r}$ is processed and assume the claim is true beforehand.
	By induction hypothesis $\bar{r}$ is the minimal active $r \in \WOrder$.
	Notice that the claim is only affected by update operations on $O$ and $C$.	
	First, we argue that the claim is preserved in Line~\ref{lpar:alg:bb5:copy}.
	This operation only affects $r' \preceq \bar{r}$.
	Notice that $\bar{r}$ is no longer active:
	We process $\bar{r}$ and violate condition
	(ii) as we set $T_{\succeq \bar{r}} = S_{\succeq \bar{r}}$ at
	Line~\ref{lpar:alg:bb5:copy}. 
	Let $r' \prec \bar{r}$. As $\bar{r}$ is the minimal active $r \in \WOrder$ we have that $r'$ is inactive.
	Then, by the induction hypothesis, we have $T_{\succeq r'} = S_{\succeq r'}$ beforehand
	and by the anti-monotonicity of the sets in $C$ also after the execution of Line~\ref{lpar:alg:bb5:copy}.
	That is, $r'$ is inactive and (ii) is still violated.
	Second, we show that also the update in Line~\ref{lpar:alg:bb5:update} preserves the claim.
	Let $\bar{r} \in \WOrder$ be the value computed by the $\slift$ function at
	Line~\ref{lpar:alg:bb5:liftcomp}.
	Again, this operation only affects $r' \preceq \bar{r}$.
	If the update violates condition (i) for a $r'$, i.e. $S_{\succeq r'} =
	S_{\succeq \hat{r}}$ for $r' \prec \hat{r}$ then this $r'$ also becomes inactive by the definition of active.
	Let us assume (i) is not violated and $r'$ was active beforehand.
	Then, by the induction hypothesis, we have $T_{\succeq r'} \subset S_{\succeq r'}$ beforehand, and as only 
	vertices are added to $S_{\succeq r'}$ the condition (ii) still holds.
	Let us assume (i) is not violated and $r'$ was inactive beforehand.
	Then by the induction hypothesis $T_{\succeq r'} = S_{\succeq r'}$ beforehand, and as 
	(ii) is satisfied afterwards iff $(P \cap V_c) \not\subset S_{\succeq r'}$.
	Notice that if $(P \cap V_c) \not\subset S_{\succeq r'}$ then $r'$ is
	changed and $r'$ is active because condition (i) is satisfied by assumption. 
	If $(P \cap V_c) \subset S_{\succeq r'}$ then we do not change the set and
	it stays inactive. Hence, we have that (2) is satisfied iff $r'$ becomes active.
\end{proof}
Thus, we have that \getactive{$\cdot$} returns an active set whenever there exist one, and false otherwise
and consequently Algorithm~\ref{lpar:alg:bb5} computes the winning set of $\PG$.

\subsection{Symbolic Resource consumption}\label{lpar:appb:resources}
In this subsection, we present the symbolic resource consumption of
Algorithm~\ref{lpar:alg:bb5}. First, we present the bounds for basic set operations,
\os operations, and finally, symbolic space.
\begin{proposition}
	Algorithm~\ref{lpar:alg:bb5} needs $O(\numprio n|\WOrder|^2\log n)$ basic set operations,
	$O(n|\WOrder|)$ \os operations and $O(\numprio \log n)$ symbolic space.
\end{proposition}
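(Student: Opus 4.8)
The plan is to bound each of the three symbolic resources separately, in each case reducing the analysis of Algorithm~\ref{lpar:alg:bb5} to the accounting already established for its predecessors. First I would argue that the outer while-loop at Line~\ref{lpar:alg:bb5:while1} still executes only $O(n \cdot |\WOrder|)$ times. This is exactly where the correctness lemma of Subsection~\ref{lpar:appb:correctness} is essential: since \texttt{getMinActiveSet} returns (the set of) an active $r \in \WOrder$ whenever one exists and \false{} otherwise, each while-loop iteration processes precisely one currently active element, just as in Algorithm~\ref{lpar:alg:blackbox3}. The amortized vertex-counting argument of Proposition~\ref{lpar:prop:numberactivations} then carries over verbatim: an element $r$ becomes active only when a new vertex enters $S_{\succeq r}$, each of the $|\WOrder|$ sets can receive at most $n$ vertices, so there are at most $n \cdot |\WOrder|$ activations and hence $O(n \cdot |\WOrder|)$ iterations of the loop.

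Given this, the \os bound is immediate: the only one-step operation is the single $\CP_z(S_{\succeq r})$ at Line~\ref{lpar:alg:bb5:getCP}, executed once per while-loop iteration, which gives $O(n \cdot |\WOrder|)$ \os operations. The symbolic space bound is also direct from the encoding of Section~\ref{lpar:sec:reducesets}: the data structure $C$ stores the $\numprio \log n + 1$ coordinate sets $C^i_x$, the data structure $O$ stores an identical number of sets $O^i_x$ holding the copies $T_{\succeq r}$ used only to detect change, and the loops use a constant number of auxiliary sets such as $S_{\succeq r}$, $T_{\succeq r}$ and $P$. Summing yields $O(\numprio \log n)$ symbolic space, and crucially no AVL-tree of $\Theta(n)$ set-pointers is maintained, which is the entire purpose of this variant.

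The main work is the basic-set-operation count, which I would split into the cost of the while-loop body and the cost of the \texttt{getMinActiveSet} calls. For the body, each iteration performs one \texttt{getSet} and one \texttt{update} on $O$ (each $O(\numprio \log n)$ basic set operations by Section~\ref{lpar:sec:reducesets}), followed by the for-loop at Line~\ref{lpar:alg:bb5:for1} whose $\numprio$ iterations each call \texttt{getSet} and \texttt{update}; this contributes $O(\numprio^2 \log n)$ per iteration, hence $O(\numprio^2 n |\WOrder| \log n)$ in total, matching the body bound derived for Algorithm~\ref{lpar:alg:bb4} in Proposition~\ref{lpar:prop:bb4:runningtime}. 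For \texttt{getMinActiveSet}, the key observation is that the sets $S_{\succeq r}$ and $T_{\succeq r}$ are maintained \emph{incrementally} along the scan $r = \min, \dots, \top$ of the for-loop at Line~\ref{lpar:alg:bb5:for2} via the differences at Lines~\ref{lpar:alg:bb5:merge1}--\ref{lpar:alg:bb5:merge2}, so each of the $|\WOrder|$ iterations costs only the two \texttt{getSet2} calls (each $O(\log n)$ intersections, since $S_r = \bigcap_{i=1}^k C^i_{b_i}$ with $k = O(\log n)$) plus a constant number of set operations for the subset test and the two updates. Thus one call uses $O(|\WOrder| \log n)$ basic set operations, and over all $O(n |\WOrder|)$ calls this totals $O(n |\WOrder|^2 \log n)$. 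Adding the two contributions and bounding crudely, using $\numprio \le |\WOrder|$ for the first term and $1 \le \numprio$ for the second, gives the claimed $O(\numprio \cdot n \cdot |\WOrder|^2 \log n)$.

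The step I expect to be the main obstacle is the clean reuse of the activation count: one must verify that the reformulated test in \texttt{getMinActiveSet} --- condition (i) $S_{\succeq r} \supset S_{\succeq r'}$ for the direct successor $r'$ together with condition (ii) $T_{\succeq r} \subset S_{\succeq r}$ --- identifies exactly the elements that were genuinely (re)activated since they were last selected, so that no activation is miscounted and the $O(n|\WOrder|)$ amortization remains valid. Once this correspondence is pinned down through the lemma of Subsection~\ref{lpar:appb:correctness} and the correctness of the incremental maintenance of $S_{\succeq r}$ and $T_{\succeq r}$ during the scan is confirmed, the remaining cost accounting is routine bookkeeping over the data-structure operations and their $O(\numprio \log n)$ and $O(\log n)$ costs.
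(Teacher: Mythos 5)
Your proposal is correct and follows essentially the same route as the paper's proof: the same three-part decomposition (one $\CP_z$ per while-loop iteration with the $O(n\cdot|\WOrder|)$ amortized activation count, the per-iteration update cost of $O(\numprio^2\log n)$ plus the $O(|\WOrder|\log n)$ cost per \texttt{getMinActiveSet} call, and the $O(\numprio\log n)$ sets held by $C$ and $O$), combined with the same final arithmetic using $\numprio \le |\WOrder|$. Your additional care in justifying the scan cost via the incremental maintenance of $S_{\succeq r}$ and $T_{\succeq r}$ and in tying the iteration count to the correctness of the activity test is a slightly more explicit rendering of what the paper asserts, not a different argument.
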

\begin{proof}

	\smallskip\noindent
	For the $n |\WOrder|$ \os operations bound, notice that we increase the rank of
	at least one vertex in each iteration of Algorithm~\ref{lpar:alg:bb5} by adding the vertex to
	a set with a higher rank. We can only do this until every vertex has
	a rank $|\WOrder|$. Therefore the while-loop of Algorithm~\ref{lpar:alg:bb5} has
	$O(n\cdot |\WOrder|)$ iterations. In each iteration, we perform exactly one $\CP_z$
	operation.

	\smallskip\noindent 
	For the $O(\numprio n|\WOrder|^2\log n)$ basic set operations bound, recall that the while
	loop at Line~\ref{lpar:alg:bb5:while1} is called $O(n\WOrder)$ times. 
	First, the call to the \texttt{getActiveSet()} function requires  $O(|\WOrder|
	\log n)$ basic set operations.
	There is one \texttt{update} operation in Line~\ref{lpar:alg:bb5:copy}  
	and $\numprio$ \texttt{update} operations in the for-loop at Line~\ref{lpar:alg:bb5:update},
	i.e., each iteration requires $O(\numprio^2 \log n)$ basic set operations for updates.
	Therefore, we get the bound $O(n |\WOrder| \cdot ( |\WOrder| \log n + \numprio^2 \log n) =
	O(\numprio n|\WOrder|^2\log n)$ (using $|\WOrder| > \numprio$).

	\smallskip\noindent
	For the $O(\numprio \log n)$ symbolic space bound, notice that we use the data
	structures $O$ and $C$, both of them use $\numprio \log n +1$ symbolic space,
	and a constant number of additional sets.
\end{proof}

\section{Conclusion}

In this work, we present improved set-based symbolic algorithms for parity games.
There are several interesting directions for future work. 
On the practical side, practical implementation and experiments with case studies,
especially for the algorithm presented in Section~\ref{lpar:sec:bb3} instantiated with either the ordered approach or the succinct progress measure,
is an interesting direction.
On the theoretical side, recent work~\cite{ChatterjeeDHL18} has established lower bounds for 
symbolic algorithms for graphs, and whether lower bounds can be established for 
symbolic algorithms for parity games is another interesting direction for future work.

	\chapter[Symbolic Time and Space Tradeoffs for Probabilistic Verification][Symbolic Time \& 
	Space Tradeoffs f. Prob. Verification]{Symbolic Time and Space Tradeoffs for Probabilistic Verification}\label{cha:lics}
	In this chapter, we present improved symbolic algorithms for the MEC decomposition of an MDP and parity objectives in MDPs.

\section{Introduction}
The verification of probabilistic systems, e.g., randomized protocols, 
or agents in uncertain environments like robot planning is a fundamental
problem in formal methods. 
We study a classical graph algorithmic problem that arises in the verification of
probabilistic systems and present a faster symbolic algorithm for it.
We start with the description of the graph problem and its applications, then 
describe the symbolic model of computation, then previous results, and finally our 
contributions.


\para{Applications.} 
In verification of probabilistic systems, the classical model is called 
\emph{Markov decision processes (MDPs)}~\cite{Howard},
where there are two types of vertices.
The vertices in $V_1$ are the regular vertices in a graph algorithmic 
setting and the vertices in $V_R$ represent random vertices.
MDPs are used to model and solve control problems in systems such as stochastic systems~\cite{FV97},
concurrent probabilistic systems~\cite{CY95}, planning problems in artificial intelligence~\cite{Puterman},
and many problems in the verification of probabilistic systems~\cite{baierbook}. 
The MEC decomposition problem is a central algorithmic problem in the
verification of probabilistic systems~\cite{CY95,baierbook} and it is a core component
in all leading tools of probabilistic verification~\cite{prism,STORM}. 
Some key applications are as follows:
(a)~the almost-sure reachability problem can be solved in linear time given
the MEC decomposition~\cite{CDHL16};
(b)~verification of MDPs wrt $\omega$-regular properties requires 
MEC decomposition~\cite{CY95,CY90,baierbook,CH11};
(c)~algorithmic analysis of MDPs with quantitative objectives as well
as the combination of $\omega$-regular and quantitative objectives 
requires MEC decomposition~\cite{CH-ILC,BBCFK11,CHJS15};
and (d)~applying learning algorithms to verification requires MEC decomposition 
computation~\cite{KPR18,DHKP17}.

\para{Symbolic model and algorithms.}
In verification, a system consists of variables, and a state of the system corresponds to a set of valuations, one for each variable. 
This naturally induces a directed graph: vertices represent states and the directed edges represent
state transitions. 
However, as the transition systems are huge they are usually not explicitly represented during their analysis. 
Instead they are \emph{implicitly represented} using e.g., binary-decision diagrams (BDDs)~\cite{bryant1986graph,bry92}. 
An elegant theoretical model for algorithms that works on this implicit representation, without considering the specifics of the representation and implementation, 
has been developed, called \emph{symbolic algorithms} (see
e.g.~\cite{BurchCMDH90,ClarkeMCH96,Somenzi99,ClarkeBook,ClarkeGJLV03,GentiliniPP08,ChatterjeeHJS13}).
A symbolic algorithm is allowed to use the same mathematical, logical, and memory access operations as a regular RAM algorithm, except for the
access to the input graph: 
It is not given  access to the input graph through an adjacency list or adjacency matrix representation but instead 
\emph{only} through two types of \emph{symbolic operations}:
\begin{compactenum}
\item {\emph{One-step operations Pre and Post:}} Each \emph{predecessor \textsc{Pre}} 
	(resp., \emph{successor \textsc{Post}}) operation is given a set $X$ of vertices and
returns the set of vertices~$Y$ with an edge to (resp., edge from) 
some vertex of~$X$.
\item \emph{Basic set operations:} Each basic set operation is given one or two sets of vertices or edges 
and performs a union, intersection, or complement on these sets.
\end{compactenum}
Symbolic operations are more expensive than the non-symbolic operations and 
thus \emph{symbolic time} is defined as the \emph{number of symbolic operations} of a symbolic algorithm. 
One unit of space is defined as one set (not the size of the set) due to the implicit representation as a BDD\@.
We define \emph{symbolic space} of a symbolic algorithm as the maximal \emph{number of sets} stored simultaneously.
Moreover, as the symbolic model is motivated by the compact representation of huge graphs,
we aim for symbolic algorithms that require sub-linear space.  

\para{Previous results and main open question.} 
We summarize the previous results and the main open question.
We denote by $|V| = n$ and $|E| = m$ the number of vertices and edges, respectively.
\begin{compactitem}
\item\para{Standard RAM model algorithms.} 
The computation of the MEC (aka controllable recurrent set in early works) 
decomposition problem has been a central problem since the work of~\cite{CY90,CY95,deAlfaroThesis}.
The classical algorithm for this problem requires $O(n)$ SCC decomposition calls,
and the running time is $O(nm)$.
The above bound was improved to (a)~$O(m\sqrt{m})$ in~\cite{CH11} and (b)~$O(n^2)$ in~\cite{CH14}.
While the above algorithms are deterministic, a randomized algorithm with expected almost-linear
$\widetilde{O}(m)$ running time  has been presented in~\cite{CDHS19CONCUR}.

\item \emph{Symbolic algorithms.}
The symbolic version of the classical algorithm for MEC decomposition requires $O(n)$ 
symbolic SCC computation. 
Given the $O(n)$ symbolic operations SCC computation algorithm 
from~\cite{GPP03}, we obtain an $O(n^2)$ symbolic operations MEC decomposition 
algorithm, which requires $O(\log n)$ symbolic space.
A symbolic algorithm, based on the algorithm of~\cite{CH11}, was presented in~\cite{ChatterjeeHJS13},
which requires  $O(n \sqrt{m})$ symbolic operations and  $O(\sqrt{m})$ symbolic space.
\end{compactitem}
The classical algorithm from the 1990s with the linear symbolic-operations SCC 
decomposition algorithm from 2003 gives the $O(n^2)$ symbolic operations bound, 
and since then the main open question for the MEC decomposition problem has been 
whether the worst-case $O(n^2)$ symbolic operations bound can be beaten.

\para{Our contributions.}
\begin{enumerate}
\item In this work we answer the open question in the affirmative. 
Our main result presents a symbolic operation and symbolic space trade-off algorithm that
for any $0<\epsilon \leq 1/2$ requires $\widetilde{O}(n^{2-\epsilon})$ symbolic operations and 
$\widetilde{O}(n^{\epsilon})$ symbolic space. 
In particular, our algorithm for $\epsilon=1/2$ requires $\widetilde{O}(n^{1.5})$ symbolic operations 
and $\widetilde{O}(\sqrt{n})$ symbolic space, which improves both the symbolic operations and 
symbolic space of~\cite{ChatterjeeHJS13}.

\item We also show that our techniques extend beyond MEC computation and is also applicable to 
almost-sure winning (probability-1 winning) of $\omega$-regular objectives for MDPs. 
We consider parity objectives which are cannonical form to express $\omega$-regular objectives.
For parity objectives with $d$ priorities the previous symbolic algorithms require 
$O(d)$ calls to MEC decomposition; thus leading to bounds such as 
(a)~$O(n^2\cdot d)$ symbolic operations and $O(\log n)$ symbolic space;
or (b)~$O(n \sqrt{m} \cdot d)$ symbolic operations and $O(\sqrt{m})$ symbolic space.
In contrast we present an approach that requires $O(\log d)$ calls to MEC decomposition,
and thus our algorithm requires
$\widetilde{O}(n^{2-\epsilon})$ symbolic operations and
$\widetilde{O}(n^{\epsilon})$ symbolic space, for all $0 < \epsilon \leq 1/2$.
Thus we improve the time-space product from $\O(n^2d)$ to $\widetilde{O}(n^2)$.
\end{enumerate}

\para{Technical contribution.} 
Our main technical contributions are as follows:
(1)~We use a separator technique for the decremental SCC algorithm from~\cite{CHILP16}. 
However, while previous MEC decomposition algorithms for the standard RAM model 
(e.g.~\cite{CDHS19CONCUR}) use ideas from decremental SCC algorithms, data-structures used in 
decremental SCC algorithms of~\cite{CHILP16,BPW19} 
such as Even-Shiloach trees~\cite{EvenS81} have no symbolic representation.
A key novelty of our algorithm is that instead of basing our algorithm on a decremental algorithm we use 
the incremental MEC decomposition algorithm of~\cite{CH11} 
along with the separator technique. 
Moreover, the algorithms for decremental SCC of~\cite{CHILP16,BPW19} are randomized algorithms, 
in contrast, our symbolic algorithm is deterministic.
(2)~Since our algorithm is based on an incremental algorithm approach, we need to support an 
operation of collapsing ECs even though we do not have access to the graph directly (e.g.\ through an adjacency list representation), but only have access to the graph through symbolic operations.
(3)~All MEC algorithms in the classic model first decompose the graph into its SCCs and then run on each SCC\@. However,
to achieve sub-linear space we cannot store all SCCs, and we show that our algorithm
has a tail-recursive property that can be utilized to achieve sub-linear space.
With the combination of the above ideas, we beat the long-standing $O(n^2)$ symbolic operations barrier
for the MEC decomposition problem, along with sub-linear symbolic space.

\para{Implications.}
Given that MEC decomposition is a central algorithmic problem for MDPs, our result has 
several implications.
The two most notable examples in probabilistic verification are:
(a)~almost-sure reachability objectives in MDPs can be solved with 
$\widetilde{O}(n^{1.5})$ symbolic operations, improving the previous known
$O(n \sqrt{m})$ symbolic operations bound; and
(b)~almost-sure winning sets for canonical $\omega$-regular objectives such as
parity and Rabin objectives with $d$-colors can be solved with 
$O(d)$ calls to the MEC decomposition followed by a call to almost-sure 
reachability~\cite{deAlfaroThesis,ChaThesis}, and hence our result gives  
an $\widetilde{O}(d n^{1.5})$ symbolic operations bound 
improving the previous known $O(d n \sqrt{m})$ 
bound.

\section{Algorithmic Tools}
In this section, we present various algorithmic tools that we use in our algorithms. 

\subsection{Symbolic SCCs Algorithm}\label{lics:ss:symbolicscc}
In~\cite{GPP03} and~\cite{ChatterjeeDHL18} symbolic algorithms and lower bounds for computing the
SCC-decomposition are presented.
Let $D$ be the diameter of $G$ and let $D_C$ be the diameter of 
the SCC $C$. The set $\SCCs{G}$ is a family of sets, where each set contains one SCC of $G$.  
The upper bounds are summarized in Theorem~\ref{lics:thm:SCCs}.
\begin{theorem}[\cite{GPP03,ChatterjeeDHL18}]\label{lics:thm:SCCs}
	The SCCs can be computed in
	$\Theta(\min(n, D|\SCCs{G}|,\\ \sum_{C \in \SCCs{G}} (D_C + 1)))$ symbolic operations and $\O(1)$ symbolic space. 
\end{theorem}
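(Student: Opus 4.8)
The statement is Theorem~\ref{lics:thm:SCCs}, which claims a symbolic SCC decomposition algorithm running in $\Theta(\min(n, D|\SCCs{G}|, \sum_{C \in \SCCs{G}} (D_C + 1)))$ symbolic operations with $\O(1)$ symbolic space. Since this is cited from \cite{GPP03,ChatterjeeDHL18}, the plan is to reconstruct the core algorithm from \cite{GPP03} and then argue the three-way running time bound. The plan is to build on the classical symbolic SCC algorithm that repeatedly picks a pivot vertex $v$, computes its forward and backward reachable sets via iterated $\Post{}{\cdot}$ and $\Pre{}{\cdot}$ operations, and identifies the SCC containing $v$ as the intersection $\GraphReach_{\text{fwd}}(v) \cap \GraphReach_{\text{bwd}}(v)$.

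\textbf{Main algorithm and the $n$ bound.} First I would describe the forward/backward sweep: starting from a picked pivot $v$ (using the $\Pick{\cdot}$ operation), iteratively apply $\Post{}{\cdot}$ to build the forward set $F$ and $\Pre{}{\cdot}$ to build the backward set $B$, each time closing under one-step operations until a fixpoint. The SCC containing $v$ is $F \cap B$. A key structural observation from \cite{GPP03} is that after identifying one SCC, the vertices in $F \setminus (F \cap B)$ and those outside $F$ can be recursively processed in a way that respects the topological order, so that no SCC is examined twice. The $O(n)$ bound follows because each symbolic step either adds at least one new vertex to some reachable set or finalizes at least one SCC; since there are at most $n$ vertices, the total number of one-step operations is $O(n)$. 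For symbolic space, the crucial point is to show the recursion can be made to use only $\O(1)$ sets simultaneously — this requires careful sequencing (tail-recursion or an explicit stack of bounded depth encoded compactly), and I would invoke the detailed bookkeeping from \cite{GPP03,ChatterjeeDHL18} here rather than reconstruct it.

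\textbf{The diameter-sensitive bounds.} The two remaining terms in the $\min$ are the diameter-sensitive refinements. For the $D|\SCCs{G}|$ bound I would argue that computing each SCC's forward and backward closures from a pivot terminates within $O(D)$ iterations of one-step operations, because any reachable vertex is reached within $D$ steps where $D$ is the graph diameter; since there are $|\SCCs{G}|$ components and each is discovered with $O(D)$ symbolic operations, the product bound follows. For the sharper $\sum_{C \in \SCCs{G}} (D_C + 1)$ bound, I would refine the analysis: the fixpoint computation restricted to identifying a single SCC $C$ stabilizes in $O(D_C + 1)$ one-step operations once the pivot lies in $C$, since the internal closure only needs to traverse paths within $C$ (plus a constant for the boundary detection). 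Summing over all components gives the bound. The matching lower bound ($\Theta$, not just $O$) comes from \cite{ChatterjeeDHL18}, which I would cite directly.

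\textbf{Anticipated main obstacle.} The hard part will be the $\O(1)$ symbolic space claim in combination with the diameter-sensitive operation counts. Achieving a constant number of simultaneously stored sets while recursing through the SCC structure is delicate: a naive recursion stores one set per recursion level, which could be $\Omega(n)$ sets deep. The resolution in \cite{GPP03} relies on the \emph{skeleton} or \emph{spine-set} technique, which maintains just enough boundary information in a constant number of sets to resume the topological-order traversal without re-storing intermediate reachable sets. Verifying that this technique simultaneously achieves the per-component $O(D_C + 1)$ operation count \emph{and} constant space — rather than trading one off against the other — is the technically subtle step, and I would lean on the correctness and resource analysis already established in \cite{GPP03,ChatterjeeDHL18} for this synthesis.
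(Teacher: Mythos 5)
The paper itself contains no proof of this statement: it is imported wholesale from \cite{GPP03,ChatterjeeDHL18}, and the surrounding text only fixes notation ($D$, $D_C$, $\SCCs{G}$). So the only question is whether your reconstruction would stand on its own, and it has two genuine gaps, both stemming from the same mischaracterization: you treat the skeleton/spine-set machinery as a device for the $\O(1)$ space bound, when in fact it is what makes the operation counts true at all.

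First, the $O(n)$ bound. Your amortization claim --- ``each symbolic step either adds at least one new vertex to some reachable set or finalizes at least one SCC'' --- does not give $O(n)$, because a vertex surviving into $F \setminus (F \cap B)$ is re-added, from scratch, to the forward set of the next pivot, so the same vertex can be charged once per recursion level. With pivots chosen by an unconstrained $\Pick{\cdot}$, a directed path $v_1 \to v_2 \to \cdots \to v_n$ (all SCCs trivial) costs $\Theta(n)$ many $\Post{}{\cdot}$-operations in each of $\Theta(n)$ recursive calls, i.e.\ $\Theta(n^2)$ in total. The linear bound of \cite{GPP03} hinges on the spine-set rule: the recursive call on $F \setminus (F\cap B)$ inherits the skeleton (one vertex per BFS layer) of the forward search that created it and must take its pivot from the deep end of that spine; this pivot discipline is what makes the forward searches cheap in an amortized sense, independently of any space considerations.

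Second, the $\sum_{C \in \SCCs{G}}(D_C+1)$ bound. Your justification conflates the two closures. Once the pivot $v$ lies in $C$, the backward closure restricted to the already-computed forward set does stabilize after $O(D_C+1)$ many $\Pre{}{\cdot}$-operations, but the forward closure is not ``internal to $C$'': its depth is the eccentricity of $v$ in the whole remaining subgraph and can exceed $D_C$ by a factor of $n$ (a trivial SCC at the head of a long path already defeats the claim). Hence the per-SCC bound does not follow from any local argument; it is precisely the upper-bound contribution of \cite{ChatterjeeDHL18}, whose proof charges the layers of every forward search, via the skeleton, to pairwise distinct SCCs so that each $C$ absorbs only $O(D_C+1)$ charges over the entire run. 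By contrast, your middle term $D\,|\SCCs{G}|$ is fine --- distances inside the recursion's subgraphs coincide with distances in $G$, since vertices of $F\setminus(F\cap B)$ cannot reach $C$ and paths between vertices outside $F$ cannot pass through $F$, so each of the $|\SCCs{G}|$ rounds costs $O(D)$ --- and delegating the matching lower bound (the $\Theta$) to \cite{ChatterjeeDHL18} is legitimate. If you repair the first two points by invoking the references for the pivot discipline and the charging argument, rather than for bookkeeping only, the proposal becomes an accurate account of where the theorem actually comes from.
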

Note that the SCC algorithm of~\cite{GPP03} can be easily adapted
to accept a starting vertex which specifies the 
SCC computed first.
Also, SCCs are output when they are detected by the algorithm
and can be processed before the remaining SCCs of the graph are computed. 

\SetKwFunction{SCCFind}{SCCFind}
We write $\SCCFind{$V',s,P$}$ when we refer to the above described algorithm for computing 
the SCCs of the subgraph with vertices $V' \subseteq V$ and using $s \in V$ as the starting vertex for the algorithm 
in the MDP $P = (V,E,\langle V_1, V_R \rangle, \delta)$.

\subsection{Symbolic Random Attractors}
Given a set of vertices $T$ in an MDP $P$, the random attractor $\Attr{R}{P}{T}$ is a set of vertices
consisting of (1) $T$, (2) random vertices with an edge to some vertex in $\Attr{R}{P}{T}$, (3) player-1 vertices
with all outgoing edges in $\Attr{R}{P}{T}$.  
Formally, given an MDP $P = (V,E, \langle V_1, V_R \rangle, \delta)$, let $T$ be a set of vertices.
The random attractor $\Attr{R}{P}{T}$ of $T$ is defined as: $\Attr{R}{P}{T} = \bigcup_{i\geq 0} A_i$ where $A_0 = T$ and $A_{i+1} = A_i \cup
(\Pre{E}{A_i} \setminus (V_1 \cap \Pre{E}{V \setminus A_i}))$ for all $i>0$. We sometimes refer to $A_i$ as the $i$-th level of the attractor.

\begin{lemma}[\cite{CHLOT18}]\label{lics:lem:attrRunning}
	The random attractor $\Attr{R}{P}{T}$ can be computed with at most $O(|\Attr{R}{P}{T}\setminus{T}|+1)$ many symbolic
	operations.
\end{lemma}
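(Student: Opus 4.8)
*The random attractor $\Attr{R}{P}{T}$ can be computed with at most $O(|\Attr{R}{P}{T}\setminus T|+1)$ many symbolic operations.*

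The plan is to track the iterative level-by-level construction $A_0 = T$, $A_{i+1} = A_i \cup (\Pre{E}{A_i} \setminus (V_1 \cap \Pre{E}{V \setminus A_i}))$ and argue that the total number of symbolic operations is proportional to the number of \emph{newly added} vertices, rather than to the number of iterations or the size of $T$. First I would make the standard observation that the fixpoint iteration terminates: since $A_0 \subseteq A_1 \subseteq \cdots \subseteq V$ is a monotonically increasing chain of subsets of a finite vertex set, it stabilizes after finitely many steps, at which point $A_i = \Attr{R}{P}{T}$. The key efficiency idea is that one should not recompute $\Pre{E}{A_i}$ and $\Pre{E}{V\setminus A_i}$ from scratch at each level; instead, each level's work should be charged against the set $A_{i+1}\setminus A_i$ of vertices discovered at that level.

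The main technical step is the amortization argument. I would show that level $i+1$ can be computed using only a constant number of symbolic operations per vertex in the \emph{frontier} $A_{i+1}\setminus A_i$, plus a constant amount of bookkeeping. Concretely, to extend $A_i$ one computes $\Pre{E}{A_i}$, intersects appropriately, and subtracts the player-1 vertices that still have an escaping edge (i.e.\ those in $V_1 \cap \Pre{E}{V\setminus A_i}$); each such one-step and basic set operation touches only the incremental frontier if the implementation maintains the partial sets $A_i$ and their complements incrementally. Summing a constant number of symbolic operations over each vertex that is ever added to the attractor yields $O(|\Attr{R}{P}{T}\setminus T|)$ operations, and the additive $+1$ accounts for the base case (the initial setup and the final termination check when $T = \Attr{R}{P}{T}$ and no vertex is added). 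I would appeal to the analogous explicit-model attractor bound $O(\sum_{v\in A}\In{v})$ stated in the preliminaries (the section on random attractors), observing that in the symbolic model each vertex contributes a bounded number of symbolic one-step operations rather than a per-edge cost.

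The hardest part will be justifying that the per-level symbolic work is genuinely proportional to the frontier and not to the full current set $A_i$, since a naive implementation of $\Pre{E}{A_i}$ appears to reprocess all of $A_i$ at every level. The resolution is to maintain incrementally: keep the running sets $A_i$, the complement $V\setminus A_i$, and the set of player-1 vertices whose outgoing edges are not yet all absorbed, updating them only by the frontier at each step so that the total cost telescopes. Since this is exactly the standard symbolic-attractor charging scheme (each vertex enters the attractor at most once and is processed a constant number of times), I expect the accounting to go through cleanly once the incremental invariant is stated, and the cited result~\cite{CHLOT18} supplies the precise implementation guaranteeing the claimed bound.
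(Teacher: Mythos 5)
First, a point of comparison: the paper itself offers no proof of this lemma --- it is imported directly from~\cite{CHLOT18} --- so the relevant question is whether your argument would stand on its own. Your final accounting does land on the correct bound, and your charging scheme (constant work per newly added vertex, plus one for the base case and termination check) has the right shape. The problem is that the step you single out as ``the hardest part'' --- showing that per-level work is proportional to the frontier $A_{i+1}\setminus A_i$ rather than to $|A_i|$, to be resolved by an incremental implementation whose operations ``touch only the frontier'' --- is a phantom difficulty, and the resolution you propose is not meaningful in this model of computation.

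In the set-based symbolic model, cost is the \emph{number} of symbolic operations, not the sizes of the sets they act on: $\Pre{E}{A_i}$ costs exactly one operation whether $A_i$ contains one vertex or all of $V$, and the same holds for every union, intersection, difference, and equality test. There is no notion of an operation ``reprocessing'' the elements of $A_i$, and no incremental data structure is needed or even expressible; importing the explicit-model bound $O(\sum_{v\in A}\In{v})$ is a category error, since that analysis charges per-edge work in the RAM model and has no analogue here. The argument you actually need is far simpler than your proposal suggests: computing $A_{i+1} = A_i \cup (\Pre{E}{A_i} \setminus (V_1 \cap \Pre{E}{V\setminus A_i}))$ from $A_i$ takes $O(1)$ symbolic operations (two $\Pre{E}{\cdot}$ calls plus a constant number of basic set operations, including the fixed-point test $A_{i+1}=A_i$), and every iteration except the final one adds at least one new vertex of $\Attr{R}{P}{T}\setminus T$. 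Hence there are at most $|\Attr{R}{P}{T}\setminus T|+1$ iterations of constant cost, which is exactly the claimed bound. Once the cost model is stated correctly, your telescoping disappears because there is nothing left to telescope.
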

The lemma below establishes that the random attractor of random vertices with edges out of a strongly connected set
is not included in any end-component and that it can be removed without affecting the ECs of the remaining graph. Hence, we use the lemma to identify vertices that do not belong to any EC\@.
The proof is analogous to the proof in Lemma~2.1 of~\cite{CH11}.

\begin{lemma}[\cite{CH11}]\label{lics:lem:attr_remove}
	Let $P = (V,E,\ls V_1, V_R \rs, \delta)$ be an MDP\@.
	Let $C$ be a strongly connected subset of $V$.
	Let $U = \Set{v \in C \cap V_R \mid \Out{v} \cap (V \setminus C) \neq \emptyset}$ 
	be the random vertices in $C$ with edges out of $C$. 
	Let $Z = \Attr{R}{P}{U} \cap C$. 
	Then, for all non-trivial EC's, $X \subseteq C$ in $P$ we have $Z \cap X = \emptyset$.
\end{lemma}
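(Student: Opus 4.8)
The plan is to prove by induction on the levels $A_0, A_1, \dots$ of the random attractor $\Attr{R}{P}{U}$ that no level intersects a given non-trivial end-component $X \subseteq C$; since $Z = \Attr{R}{P}{U} \cap C$ and $X \subseteq C$, this immediately yields $Z \cap X = \emptyset$. Throughout I will use the two defining properties of an end-component $X$: (i) the induced subgraph $P[X]$ is strongly connected, and (ii) every random vertex $v \in X \cap V_R$ satisfies $\Out{v} \subseteq X$. Recall the attractor is built via $A_0 = U$ and $A_{i+1} = A_i \cup (\Pre{E}{A_i} \setminus (V_1 \cap \Pre{E}{V \setminus A_i}))$, so a random vertex is absorbed once it has any edge into the current set, whereas a player-1 vertex is absorbed only once \emph{all} of its outgoing edges lie in the set.

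For the base case $A_0 = U$, suppose towards a contradiction that some $v \in U \cap X$. By definition of $U$ the vertex $v$ is a random vertex of $C$ with an edge leaving $C$, i.e.\ $\Out{v} \cap (V \setminus C) \neq \emptyset$. But $v \in X \cap V_R$, so property (ii) forces $\Out{v} \subseteq X \subseteq C$, contradicting the existence of an edge out of $C$. Hence $A_0 \cap X = \emptyset$.

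For the inductive step I assume $A_i \cap X = \emptyset$ and consider a vertex $v$ freshly added at level $i+1$, i.e.\ $v \in (A_{i+1} \setminus A_i) \cap X$. There are two cases. If $v \in V_R$, then $v$ has an edge into $A_i$, so $\Out{v} \cap A_i \neq \emptyset$; combined with $\Out{v} \subseteq X$ from property (ii) this gives $\emptyset \neq \Out{v} \cap A_i \subseteq X \cap A_i$, contradicting the induction hypothesis. If $v \in V_1$, then (being in $\Pre{E}{A_i}$ but excluded from $V_1 \cap \Pre{E}{V \setminus A_i}$) all of its outgoing edges lie in $A_i$, i.e.\ $\Out{v} \subseteq A_i$; since $X$ is non-trivial and strongly connected by property (i), $v$ has at least one successor $u \in X$, whence $u \in A_i \cap X$, again contradicting the induction hypothesis. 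Thus $A_{i+1} \cap X = \emptyset$, completing the induction and hence the proof.

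The argument is essentially the one behind Lemma~2.1 of~\cite{CH11}, recast in the symbolic attractor notation of Lemma~\ref{lics:lem:attrRunning}. The one subtle point — and the step I would double-check — is the player-1 case of the inductive step: it relies crucially on $X$ being a \emph{non-trivial} end-component, so that $v \in V_1 \cap X$ is guaranteed a successor inside $X$. For a trivial (singleton) component this guarantee fails, which is precisely why the statement is restricted to non-trivial ECs.
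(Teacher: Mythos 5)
Your proof is correct: the level-by-level induction on the random attractor, using that random vertices of an EC keep all successors inside it and that player-1 vertices of a \emph{non-trivial} EC always have a successor inside it, is exactly the standard argument behind Lemma~2.1 of~\cite{CH11}, which is all the paper itself does (it states the lemma and defers to that reference rather than reproving it). Your closing remark about where non-triviality is needed is also the right observation, since a trivial player-1 EC can indeed lie in the attractor.
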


\subsection{Separators}~\label{lics:ss:separator}
Given a strongly connected set of vertices $X$ with $|X| = n$, a separator is a non-empty set $T \subseteq X$ such that  
the size of the SCC in the graph induced by $X \setminus T$ is small. 
More formally, we call $T \subseteq X$ a $q$-separator if each SCC in the subgraph induced by $X \setminus T$ has at most $n
-q \cdot |T|$ vertices. For example, if we compute a $q = \sqrt{n}$-separator, the SCCs in the
subgraph induced by $X \setminus T$ have size $\leq \sqrt{n}$ as $|T|$ is non-empty. In
\cite{CHILP16}, the authors
present an algorithm that computes a $q$-separator when the diameter of $X$ is large. We briefly sketch the symbolic 
version of this algorithm.

The procedure $\Separator{$X,\gamma$}$ computes a $q = \lfloor \gamma/(2 \log n) \rfloor$-separator $T$
when $X$ has diameter at least $\gamma$:

\begin{enumerate}
	\item Let $q \gets \lfloor \gamma/(2 \log n) \rfloor$.
	\item Try to compute a BFS tree $K$ of either $X$ or the reversed graph of $X$ of depth at least $\gamma$ 
with an arbitrary vertex $v \in X$ as root. In the symbolic algorithm, we build the BFS trees
with $\Pre{}{\cdot}$ and $\Post{}{\cdot}$ operations.
	\item If the BFS trees of both $X$ and the reversed graph $X$ with root $v$ have less than $\gamma$ levels then return $\emptyset$. Note that the
		diameter of $X$ is then $\leq 2\gamma$.
	\item Let layer $L_i$ of $K$ be the set of vertices $L_i \subseteq X$ with distance $i$ from $v$.
		Due to~\cite[Lemma 6]{CHILP16}, there is a certain layer $L_i$ of the BFS tree $K$ which is a $q$-separator. 
		Intuitively, they argue that removing the layer $L_i$ of $K$ separates $X$ into two parts: (a)
		$\bigcup_{j<i} L_j$ and (b) $\bigcup_{j>i} L_j$
		which cannot be strongly connected anymore due to the fact that $K$ is a BFS tree.
		They show that one can always efficiently find a layer $L_i$ such that both part (a) and part (b) are
		small.
	\item We efficiently find the layer $L_i$ while building $K$.
\end{enumerate}

The detailed symbolic implementation of $\Separator{$X,\gamma$}$ is illustrated in Algorithm~\ref{lics:alg:separator}.
\begin{algorithm}\label{lics:alg:separator}
	\caption{Computing the separator}
	\Procedure{\Separator{$X,\gamma$}}{
		$q \gets \lfloor \gamma/(2 \log n) \rfloor,\ v \gets \Pick{X}, i\gets 0,\ K \gets \Set{v},\ c \gets 0, L \gets \emptyset, R \gets \emptyset $\;
		\While{$(\Post{E}{K} \cap X) \not\subseteq K$}{\label{lics:alg:separator:while1}
			\If{$q \leq i \leq \gamma/2$}{ 
				$Z \gets \Post{E}{K}\setminus K \cap X$\;
				\lIf{$L = \emptyset$ and $|Z| \leq 2^{i/q-1} $}{
					$L \gets Z$
				}
			}
			\If{$\gamma/2 \leq i \leq \gamma - q$ }{ 
				$Z \gets \Post{E}{K}\setminus K \cap X$\;
				\lIf{$|Z| \leq 2^{(\gamma-i)/q-1}$}{
					$R \gets Z$
				}
			}
			\lIf{$i \leq \gamma/2$}{
				$c\gets c+ |\Post{E}{K}\setminus K \cap X|$
			}
			$K \gets K \cup (\Post{E}{K} \cap X)$\;
			$i \gets i +1$\;
		}
		\If{$i < \gamma$}{
			$i\gets 0,\ K \gets \Set{v},\ c\gets 0, L \gets \emptyset, R \gets \emptyset$\;
			\While{$i \leq \gamma$ and $(\Pre{E}{K} \cap X) \not\subseteq K$}{\label{lics:alg:separator:while2}
				\If{$q \leq i \leq \gamma/2$}{ 
					$Z \gets \Pre{E}{K}\setminus K \cap X$\;
					\lIf{$L = \emptyset$ and $|Z| \leq 2^{i/q-1} $}{
						$L \gets Z$
					}
				}
				\If{$\gamma/2 \leq i \leq \gamma - q$ }{ 
					$Z \gets \Pre{E}{K}\setminus K \cap X$\;
					\lIf{$|Z| \leq 2^{(\gamma-i)/q-1} $}{
						$R \gets Z$
					}
				}
				\lIf{$i \leq \gamma/2$}{
					$c\gets c+ |\Pre{E}{K}\setminus K \cap X|$
				}
				$K \gets K \cup (\Pre{E}{K} \cap X)$\;
				$i \gets i +1$\;
			}

			\If{$i < \gamma$}
			{
				\Return{$\emptyset$}; \tcp{$\Diam{X} < 2\gamma$}
			}
		}
		\lIf{$c< |X|/2$}{
			\Return{$L$}
			}\lElse{
			\Return{$R$}
		}
	}
\end{algorithm}
The following lemmas summarize useful properties of $\Separator{$X,\gamma$}$.

\begin{observation}[\cite{CHILP16}]\label{lics:obs:SepSize}
	A $q$-separator $S$ of a graph $G$ with $|V|=n$ vertices contains at most $\frac{n}{q}$ vertices, 
	i.e., $|S| <  \frac{n}{q}$.
\end{observation}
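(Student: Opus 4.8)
The plan is to derive the size bound directly from the defining property of a $q$-separator together with the elementary fact that every strongly connected component is non-empty. Recall from the definition preceding the observation that $S \subseteq X$ is a $q$-separator of a strongly connected set $X$ with $|X| = n$ exactly when every SCC of the induced subgraph on $X \setminus S$ contains at most $n - q \cdot |S|$ vertices. First I would establish that the remainder $X \setminus S$ is non-empty, so that the induced subgraph on $X \setminus S$ contains at least one SCC. Concretely, the separators returned by $\Separator{$X,\gamma$}$ (Algorithm~\ref{lics:alg:separator}) are individual BFS layers $L$ or $R$ chosen strictly between levels $q$ and $\gamma - q$ of a BFS tree with at least $\gamma$ levels; such a layer is a proper, non-empty subset of $X$, and hence leaves a non-empty remainder.

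Next I would run the counting argument. Since any SCC $C$ of the subgraph induced by $X \setminus S$ is a non-empty vertex set, it contains at least one vertex, so $1 \leq |C|$. Applying the $q$-separator property to this same $C$ gives $|C| \leq n - q \cdot |S|$, and chaining the two inequalities yields $1 \leq n - q \cdot |S|$. Rearranging produces $q \cdot |S| \leq n - 1 < n$, and dividing by $q > 0$ gives the claimed bound $|S| < n/q$.

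The main obstacle is precisely the case hidden in the first step: if one permitted $S = X$, then $X \setminus S$ would be empty, the induced subgraph would contain no SCC, the separator condition would hold vacuously, and the conclusion $|S| < n/q$ would fail for every $q \geq 1$. Thus the entire weight of the argument rests on justifying that a $q$-separator leaves a non-empty remainder, which I would handle through the BFS-layer construction as above (any earlier layer and any later layer of the tree are non-empty, so the selected layer is proper). Once non-emptiness of $X \setminus S$ is secured, the remaining inequality manipulation is routine and closes the proof.
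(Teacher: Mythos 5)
Your proposal is correct, and it supplies an argument where the paper gives none: the observation is imported from~\cite{CHILP16} without proof. Your counting step is the standard one --- a non-empty remainder $X \setminus S$ contains at least one SCC, that SCC has at least one vertex, so $1 \leq n - q\cdot|S|$ and hence $|S| \leq (n-1)/q < n/q$ --- and you are right that the entire weight of the statement rests on ruling out $S = X$: under the paper's literal definition (a non-empty $T \subseteq X$ whose removal leaves only small SCCs), $T = X$ satisfies the SCC condition vacuously and violates the bound. Your resolution --- restricting attention to the separators the paper actually uses, namely the BFS layers returned by Algorithm~\ref{lics:alg:separator}, which lie strictly between the root level and level $\gamma$ of a BFS tree with at least $\gamma$ levels and therefore leave both earlier and later layers non-empty --- is sound and matches how the observation is applied in Lemma~\ref{lics:lem:running_rec2} and Lemma~\ref{lics:lem:recursiondepth}. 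The only caveat is one of scope: as stated, the observation quantifies over all $q$-separators, so a fully self-contained fix would be to read the definition as requiring that $X \setminus T$ contain at least one SCC (equivalently, $T \subsetneq X$), which is the convention in~\cite{CHILP16}; with that reading, your inequality chain proves the statement in full generality rather than only for the separators produced by the algorithm.
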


\begin{lemma}[\cite{CHILP16}]\label{lics:lem:sepqual}
	Let $X$ be a strongly connected set of vertices with $|X| = k$, let $r \in
	X$, and	let $\gamma$ be an integer such that $q=\lfloor \gamma/(2\log k) \rfloor \geq 1$. Then $\Separator{$X,\gamma$}$ 
	computes a $\lfloor \gamma / (2\log k) \rfloor$-separator
	if there exists a vertex $v \in X$ where the distance
	between $r$ and $v$ is at least $\gamma$. If no
	such vertex $v$ exists, then $\Separator{$X,\gamma$}$ returns the empty set.
\end{lemma}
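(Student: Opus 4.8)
The plan is to establish the two halves of the statement separately, matching the two exits of \Separator{$X,\gamma$}: the early return of $\emptyset$, and the terminal choice between the layers $L$ and $R$. Throughout I take $r=\Pick{X}$ to be the root from which the procedure grows its BFS tree, first forward via $\Post{E}{\cdot}$ (the loop at Line~\ref{lics:alg:separator:while1}) and, only if that tree saturates below depth $\gamma$, backward via $\Pre{E}{\cdot}$ (the loop at Line~\ref{lics:alg:separator:while2}). The observation linking depth to the hypothesis is that, since $X$ is strongly connected, every vertex of $X$ appears in both trees, so the depth reached by the forward (resp.\ backward) BFS equals $\max_{u\in X}\dist(r,u)$ (resp.\ $\max_{u\in X}\dist(u,r)$).

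For the empty-set direction I would argue the contrapositive. If no vertex lies at distance at least $\gamma$ from $r$ in either orientation, then both eccentricities above are $<\gamma$; hence the forward loop halts with counter $i<\gamma$, the guard \texttt{if $i<\gamma$} fires, the backward loop likewise halts below $\gamma$, and the procedure returns $\emptyset$. Conversely, if some $v$ has $\dist(r,v)\ge\gamma$ (the case $\dist(v,r)\ge\gamma$ is symmetric and handled by the $\Pre{E}{\cdot}$ loop on the reversed graph), the forward BFS necessarily reaches level $\ge\gamma$, so $i\ge\gamma$, the early exit is skipped, and control falls through to the final return of $L$ or $R$. Here one also recovers the diameter remark: the empty return certifies $\Diam{X}<2\gamma$.

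The substance of the lemma is that the returned layer is a genuine $q$-separator. The structural engine is the \emph{one-layer} property of a BFS tree: for any edge $(a,b)$ of $X$ with $a$ in layer $L_j$ one has $b\in L_{j'}$ with $j'\le j+1$. Consequently, deleting a whole layer $L_i$ leaves no edge from the prefix $\bigcup_{j<i}L_j$ to the suffix $\bigcup_{j>i}L_j$, so every SCC of $X\setminus L_i$ is contained entirely in the prefix or entirely in the suffix, and the largest surviving SCC has at most $\max(|\!\bigcup_{j<i}L_j|,\,|\!\bigcup_{j>i}L_j|)$ vertices. It then suffices to exhibit a layer $L_i$ whose size and position make this maximum at most $k-q|L_i|$. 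For existence I would use a short counting argument on the lighter of the two halves, which is exactly what the counter $c$ (compared against $|X|/2$) identifies: assuming, say, $c<|X|/2$ and that no index $i\in[q,\gamma/2]$ satisfies $|L_i|\le 2^{i/q-1}$, evaluation at $i=\lfloor\gamma/2\rfloor$ would give $|L_i|>2^{\gamma/(2q)-1}\ge 2^{\log k-1}=k/2$ (using $q=\lfloor\gamma/(2\log k)\rfloor$), contradicting that this layer sits inside the lighter half of total size $<k/2$. Hence a qualifying small layer exists and is the one stored in $L$ (symmetrically $R$ via the suffix thresholds $2^{(\gamma-i)/q-1}$).

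The step I expect to be the main obstacle is the final size bookkeeping: verifying that the \emph{first} qualifying layer in the lighter half simultaneously gives both $|\!\bigcup_{j<i}L_j|\le k-q|L_i|$ and $|\!\bigcup_{j>i}L_j|\le k-q|L_i|$. This is where the precise shape of the geometric thresholds matters — one must show that, because every earlier index failed the test, the prefix already accumulates at least $(q-1)|L_i|$ vertices (summing the super-threshold earlier layers together with the nonempty top layers), while the suffix is large simply because the opposite half is the heavier one. Aligning the ranges $[q,\gamma/2]$ and $[\gamma/2,\gamma-q]$, the two threshold families, and the $c$-based case split so that these inequalities hold exactly is the delicate, constant-sensitive core; this is precisely the content of Lemma~6 of~\cite{CHILP16}, which I would invoke and adapt to the symbolic layer-by-layer construction, using Observation~\ref{lics:obs:SepSize} for the complementary bound $|L_i|\le k/q$.
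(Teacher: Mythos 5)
Your proposal cannot be compared against a proof in the paper, because the paper gives none: the lemma is stated as imported from~\cite{CHILP16}, and the surrounding text only offers the informal remark that ``due to [Lemma~6 of~\cite{CHILP16}] there is a certain layer $L_i$ of the BFS tree which is a $q$-separator.'' Measured against that, your reconstruction is sound and in fact more complete than the paper's treatment. The parts you make explicit are correct: (i) the empty-set case — since $X$ is strongly connected, the depth of the forward (resp.\ backward) BFS from the picked root equals the corresponding eccentricity, so both loops terminate below $\gamma$ exactly when no vertex is at distance $\geq \gamma$ from $r$ in either orientation, and then $\emptyset$ is returned (this also certifies $\Diam{X} < 2\gamma$); (ii) the structural separation fact — along any path the BFS level can increase by at most one per edge, so a path from $\bigcup_{j<i}L_j$ to $\bigcup_{j>i}L_j$ must pass through $L_i$, hence every SCC of $X \setminus L_i$ lies entirely in the prefix or entirely in the suffix; and (iii) the existence of a light layer — if every $i \in [q,\gamma/2]$ violated $|L_i| \leq 2^{i/q-1}$, then the layer at $\lfloor\gamma/2\rfloor$ alone would exceed $2^{\gamma/(2q)-1} \geq k/2$, contradicting that it lies in the half of total size $< k/2$ identified by the counter $c$. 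What remains — showing that the \emph{particular} layer stored in $L$ (or $R$) satisfies both $|\bigcup_{j<i}L_j| \geq (q-1)|L_i|$ and $|\bigcup_{j>i}L_j| \geq (q-1)|L_i|$, which is equivalent to the $q$-separator condition $\max$-SCC $\leq k - q|L_i|$ — you defer to~\cite[Lemma~6]{CHILP16}, and you are right that this is where the precise thresholds, the ``first qualifying layer'' choice, and the $c$-based case split interact delicately (a naive summation of the super-threshold earlier layers does not immediately yield the $(q-1)|L_i|$ bound). Since the paper makes exactly the same deferral for the entire lemma, your proposal is acceptable as it stands; a fully self-contained proof would require carrying out that one counting argument.
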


The following lemma bounds the symbolic resources of the algorithm. 
\begin{lemma}\label{lics:lem:runspacesep}
	Algorithm~\ref{lics:alg:separator} runs in $O(|X|)$ symbolic operations and uses $O(1)$ symbolic space.
\end{lemma}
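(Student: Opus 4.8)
The plan is to bound both resources by analyzing the two while-loops of Algorithm~\ref{lics:alg:separator} separately, observing that each performs only a constant amount of symbolic work per iteration while running for at most $O(|X|)$ iterations.

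First I would establish the symbolic space bound, which is the easier of the two claims. At any point in the execution the algorithm keeps only a constant number of vertex sets simultaneously: the input set $X$, the growing frontier $K$, the separator candidates $L$ and $R$, the auxiliary set $Z$, and the $O(1)$ temporary sets produced by a single $\Post{E}{K}$, $\Pre{E}{K}$, intersection, or set-difference. The remaining state, namely $q$, $i$, and $c$, consists of integer counters, and $v = \Pick{X}$ is a single vertex; since symbolic space counts only the number of stored \emph{sets} (each represented as one unit regardless of its cardinality), these contribute nothing. Hence the algorithm uses $O(1)$ symbolic space, and this bound is maintained across the reset between the forward and backward phases, which merely reinitializes the same constantly-many variables.

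Next I would bound the number of symbolic operations via a standard BFS progress argument. In the forward loop at Line~\ref{lics:alg:separator:while1}, the guard $(\Post{E}{K} \cap X) \not\subseteq K$ holds exactly when some vertex of $X$ reachable in one step from $K$ is not yet in $K$; consequently the update $K \gets K \cup (\Post{E}{K} \cap X)$ strictly enlarges $K$ in every executed iteration. Since $K \subseteq X$ throughout and $K$ grows monotonically, the forward loop runs for at most $|X|$ iterations. The backward loop at Line~\ref{lics:alg:separator:while2} carries the extra guard $i \le \gamma$, which can only shorten its execution, so by the same monotonicity it runs for at most $\min(\gamma+1,|X|) \le |X|+1$ iterations. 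Each iteration of either loop issues only a bounded number of symbolic operations --- one $\Post{E}{\cdot}$ or $\Pre{E}{\cdot}$, a constant number of unions, intersections, and differences, a constant number of cardinality queries such as $|Z|$ and $|X|$, and a constant number of comparisons --- hence $O(1)$ per iteration. Summing over both phases, together with the $O(1)$ initialization, reset, and final comparison $c < |X|/2$, yields the claimed $O(|X|)$ total.

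The main obstacle is making the progress argument airtight: I must confirm that $K$ never shrinks and that the guard genuinely forces a strict increase of $K$ in each executed iteration, so that the iteration count is controlled purely by $|X|$ rather than by the free parameter $\gamma$ (which could a priori exceed $|X|$). For the forward loop this follows from the containment guard alone, with no reference to $\gamma$; for the backward loop I would observe that although $\gamma$ appears in its guard, the containment condition still supplies the $|X|$ upper bound, so no dependence on $\gamma$ leaks into the operation count. The remaining steps --- verifying that each listed operation is a single symbolic operation in the symbolic model of computation and that no step stores more than $O(1)$ sets at once --- are routine case checks over the loop body.
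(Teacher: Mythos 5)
Your proposal is correct and follows essentially the same argument as the paper: each loop iteration performs a constant number of symbolic operations, the containment guard forces $K$ to strictly grow within $X$ so each loop runs for at most $|X|$ iterations, and only a constant number of sets are stored at any time. Your version is somewhat more explicit (e.g., noting that the integer counters do not count toward symbolic space and that the $i \le \gamma$ guard in the backward loop cannot hurt the bound), but the substance is identical.
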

\begin{proof}
	The bound on the symbolic operations of the while loop at Line~\ref{lics:alg:separator:while1} is clearly
	in $O(|X|)$ because we perform $\Post{E}{K} \cap X$ operations until the set $K$ fully contains
	$X$ or $\Post{E}{K} \cap X$ is fully contained in $K$. Note that we only perform a constant amount
	of symbolic work in the body of the while-loop.
	As each $\Post{E}{K} \cap X$ operation adds at least one vertex to $X$ until termination, we perform
	$O(|X|)$ many operations in total. 
	A similar argument holds for the while-loop at Line~\ref{lics:alg:separator:while2}.
	Note that we use a constant amount of sets in Algorithm~\ref{lics:alg:separator} which implies 
	$O(1)$ symbolic space.
\end{proof}

\section{Symbolic MEC decomposition}
In this section, we first define how we collapse end-components.
Then we present the algorithm for the symbolic MEC decomposition.

\subsection{Collapsing End-components}
A key concept in our algorithm is to collapse a detected EC $X \subseteq V$ of an MDP $P = (V,E,\langle V_1,V_R \rangle, \delta)$ to a single player-1 vertex $v \in X$ in order to speed up the computation of end-components that contain $X$.
Notice that we do not have access to the graph directly, but only have access to the graph through symbolic operations.

\begin{algorithm}[b]
	\Procedure{\CollapseEC{$X,P$}}{	
	\If{$X \cap V_1 \neq \emptyset$}{$v \gets \Pick{X \cap V_1}$\;
		
		$E \gets E \cup \left(((\Pre{E}{X}\setminus X) \times \Set{v}) \cup (\Set{v} \times (\Post{E}{X}\setminus X))\right)$; \tcp{$v$ gets edges of $X$}

		$E \gets E \setminus \left((V  \times (X\setminus v)) \cup ((X\setminus v) \times V)\right); $\tcp{Remove all other edges of $X$}
	\Return{$\{v\}$}
	}
	\lElse(\tcp*[h]{Remove all edges of $X$}){$E \gets E \setminus \left( X  \times X \right)$}
}
	\caption{Collapse an EC $X$ into single vertex}\label{lics:alg:CollapseEC}
\end{algorithm}

We define the collapsing (see Algorithm~\ref{lics:alg:CollapseEC}) of an EC $X$ as picking a player-1
vertex $v \in X$, directing all the incoming edges $X$ to $v$, directing the outgoing edges of $X$ from $v$ and removing all edges to and from vertices in $X\setminus v$. Also, we do not include edges going from $X$ to $v$.
For vertices not in $X$, we have that they are in a non-trivial MEC in the modified MDP 
iff they are in a non-trivial MEC in the original MDP\@.
We denote with $\EC{P}$ the set of sets with all end-components in $P$.
The following lemma summarizes the property as observed in~\cite{CH11}.

\begin{lemma}\label{lics:lem:collapse}
    For MDP $P = (V,E,\ls V_1, V_R \rs, \delta)$ with $X \in \EC{P}$
    and the MDP $P'$ that results from collapsing $X$ to $v\in X$ with $\CollapseEC{$X,P$}$ we have: 
    (a) for $D \subseteq V \setminus X$ we have $D \in \EC{P}$ iff $D \in \EC{P'}$; 
    (b) for $D \in \EC{P}$ with $D \cap X \not= \emptyset$ we have $(D \setminus X) \cup \{v\} \in \EC{P'}$; 
	and (c) for $D \in \EC{P'}$ with $v \in D$ we have $D \cup X \in \EC{P}$.
\end{lemma}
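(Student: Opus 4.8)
Let me parse what Lemma \ref{lics:lem:collapse} asserts. We have an MDP $P$ with an end-component $X$, and we collapse $X$ to a single player-1 vertex $v \in X$ via the procedure \CollapseEC{$X,P$}. The collapse operation redirects all incoming edges of $X$ to $v$, all outgoing edges of $X$ from $v$, and deletes all internal edges of $X$ and all edges touching $X \setminus \{v\}$. The lemma has three parts: (a) ECs disjoint from $X$ are preserved in both directions; (b) an EC $D$ in $P$ meeting $X$ maps to $(D\setminus X)\cup\{v\}$ in $P'$; (c) an EC $D'$ in $P'$ containing $v$ lifts back to $D'\cup X$ in $P$.

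**Recalling the definition of end-component.** An end-component is a set $Y$ such that (1) $P[Y]$ is strongly connected, and (2) every random vertex in $Y$ has all its outgoing edges inside $Y$. So to prove each part I need to verify strong connectivity and the random-closure condition for the candidate set in the target MDP. The key structural facts I will lean on are: $X$ itself is an EC (so $X$ is strongly connected and random-closed in $P$), and $v$ is picked to be a *player-1* vertex, which matters because the collapse makes $v$ inherit all boundary edges of $X$ — and a player-1 vertex imposes no closure requirement, so no spurious random-closure violations are introduced at $v$.

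**Plan of attack.** The strategy is to prove each of the three parts by checking the two defining properties of an EC directly, exploiting a path-correspondence between $P$ and $P'$. The crucial observation, which I would establish first as a small lemma, is that the collapse induces a bijection-like correspondence on paths that avoid $X\setminus\{v\}$: any path in $P$ can be rerouted through $X$ (using the fact that $X$ is internally strongly connected) to a path in $P'$ passing through $v$, and conversely any path through $v$ in $P'$ lifts to a path in $P$ that may traverse $X$. For part (a), since $D\subseteq V\setminus X$, no vertex of $D$ is touched by the collapse; the only subtlety is whether the redirection of edges at the boundary of $X$ could add or remove edges internal to $D$ or change the successors of a random vertex of $D$. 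I would argue that edges among vertices of $V\setminus X$ are untouched by \CollapseEC{$\cdot$} (it only modifies edges incident to $X$), and an edge from a random vertex $u\in D$ into $X$ would contradict that $D$ is random-closed unless that edge already pointed outside $D$ — so strong connectivity and random-closure of $D$ transfer verbatim in both directions. For part (b), given EC $D$ meeting $X$, I would first note $D\cup X$ is itself an EC of $P$ (union of two ECs sharing a vertex is an EC, a standard fact), then show $(D\setminus X)\cup\{v\}$ is strongly connected in $P'$ via the path correspondence through $v$, and check random-closure: a random vertex $w\in D\setminus X$ had all successors in $D\subseteq D\cup X$; after collapse these successors are either in $D\setminus X$ or got redirected to $v$, so closure holds; and $v$ is player-1 so imposes no constraint. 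Part (c) is the reverse direction: given $D'\ni v$ strongly connected and random-closed in $P'$, lift each edge through $v$ into a path through $X$ to show $D'\cup X$ is strongly connected in $P$, and verify random-closure using that $X$ is already random-closed and that any random vertex of $D'\setminus\{v\}$ keeps its successors inside $D'\cup X$.

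**The main obstacle.** The delicate step will be the random-closure verification in parts (b) and (c), specifically tracking how the edge-redirection at $v$ interacts with random vertices. In part (c), I must ensure that lifting $D'$ to $D'\cup X$ does not introduce a random vertex in $X$ with an edge leaving $D'\cup X$ — but this is exactly guaranteed because $X$ is an EC in $P$, hence every random vertex of $X$ already has all successors in $X\subseteq D'\cup X$. The genuinely careful bookkeeping is that the edge set $E$ is *mutated in place* by \CollapseEC{$\cdot$}, so I must be precise that outgoing edges of $X$ are attached to $v$ and that no outgoing edge of a random vertex of $X$ survives pointing outside $X$ (there were none, since $X$ is random-closed). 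I expect the path-correspondence lemma to do most of the work, reducing all three parts to routine checks; the proof in \cite{CH11} for the explicit model carries over, and the only new content here is confirming that the purely symbolic/edge-level reformulation of the collapse preserves the same edge structure — which it does by construction.
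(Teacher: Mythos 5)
The paper does not actually prove this lemma: it states it as a property ``observed in~\cite{CH11}'' and defers entirely to that reference, so there is no in-paper argument to compare yours against. Your proposal is correct, and it is the natural argument one would reconstruct: verify the two defining properties of an end-component (strong connectivity and random-closure) for each of (a)--(c), using the contraction/lifting correspondence in which maximal $X$-segments of a path in $P$ collapse to the single vertex $v$ in $P'$ and, conversely, every edge of $P'$ incident to $v$ lifts to an edge entering or leaving $X$ that can be extended through $X$ by its internal strong connectivity. You also isolate exactly the two structural facts that make the bookkeeping work: $v$ is a player-1 vertex, so the new boundary edges attached to $v$ create no closure obligation, and $X$ is itself an EC, so no random vertex of $X$ ever had an edge leaving $X$. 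One phrasing in part (a) is muddled --- ``unless that edge already pointed outside $D$'' is vacuous, since any edge from $D$ into $X$ automatically leaves $D$; for the $P'\!\to\!P$ direction you should instead say explicitly that an edge of a random $u\in D$ into $X$ in $P$ would become the edge $(u,v)$ with $v\notin D$ in $P'$, contradicting closure in $P'$. With that cleaned up, the deferred checks are routine and the plan goes through.
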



\SetKwFunction{SymMec}{SymMEC}
\SetKwFunction{SymbolicMec}{SymbolicMEC}
\begin{algorithm}
	\Procedure{\SymbolicMec{$(P' = (V',E',\langle V'_1, V'_R \rangle, \delta'),\gamma)$}}{
	$M \gets \emptyset, \textsc{MECs} \gets \emptyset, P \gets P'$\;
	
	\While(\tcp*[h]{compute vertices in non-trivial MECs of $C$}){$C \gets \SCCFind{$V, \emptyset,P'$}$}
          {\label{lics:alg:metasymmec:while1}
			  $M \gets M \cup \SymMec{$C, \gamma,P$}$\tcp{uses P instead of P'}\label{lics:alg:metasymmec:rec1}
	}
	
	\While(\tcp*[h]{compute non-trivial MECs}){$C \gets \SCCFind{$M, \emptyset,P'$}$}{\label{lics:alg:metasymmec:while2}
		$\textsc{MECs} \gets \textsc{MECs} \cup \{C\}$\label{lics:alg:metasymmec:addSCC}
	}
	\Return{$\textsc{MECs}$}\label{lics:alg:metasymmec:returnMECS}
}
	\caption{Computing the MEC decomposition of an MDP}\label{lics:alg:metasymmec}
\end{algorithm}

\begin{remark}
    Note that we modify the set of edges $E$ of $P$ using basic set operations only and that such operations 
    are supported by all standard symbolic tools like BDDs.
    After the update, we then use the new set $E$ for all subsequent $\Pre{}{.}$ and $\Post{}{.}$ operations.
\end{remark}

\subsection{Algorithm Description}
The input to our algorithm is an MDP $P' = (V',E', \langle V_1', V_R' \rangle, \delta')$.
In the first stage, we compute the set $M\subseteq V$ of all vertices that are in a non-trivial MEC,
and then, in the second stage, we use this set $M$
to compute the MECs with an SCC algorithm (see Algorithm~\ref{lics:alg:metasymmec}).

In the first stage, we iteratively compute the SCCs 
$C_1, \dots, C_\ell$ of $P'$ and immediately apply $\SymMec{$\cdot$}$ (cf. Algorithm~\ref{lics:alg:symmec}) to $C_i$ 
to compute the vertices $M_i$ of $C_i$ that are in a non-trivial MEC\@. 
The set $M$ is the union over these sets, i.e., $M = \bigcup_{1 \leq i \leq \ell} M_i$.
$\SymMec{$\cdot$}$ applies collapsing operations and thus modifies the edge set of the MDP\@.
We thus hand a copy $P$ of the original MDP $P'$ to $\SymMec{$\cdot$}$.
Finally, to obtain all non-trivial MECs in $P'$ we restrict the graph to the vertices in $M$ and compute the SCCs which correspond to the MECs. Note that trivial MECs are player-1 vertices that are not contained in any non-trivial MEC and thus can be simply computed
by iterating over the vertices of $V_1 \setminus M$.

In the following we focus on the function $\SymMec{$\cdot$}$ which is the core of our algorithm.
To this end, we introduce the operation $\ROut{S} = \Pre{E}{V \setminus S} \cap (S \cap V_{R})$, 
that computes the set of random vertices in $S$ with edges to $V \setminus S$. 

\para{The function $\SymMec{$\cdot$}$.}
$\SymMec{$\cdot$}$ works recursively: The input is a set $S$ of strongly connected
vertices and a parameter $\gamma$ for separator computations that is fixed over all recursive calls.
The main idea is to compute a separator to divide the original graph into smaller SCCs, recursively compute the vertices
which are in MECs of the smaller SCCs, and 
then compute the MECs of the original graph by incrementally adding the vertices of the separator
back into the recursively computed MEC decomposition.
In each recursive call, we first check whether the given set $S$ is larger than one 
(if not it cannot be a non-trivial MEC) and then 
if $S$ is a non-trivial EC by checking if $\ROut{S} = \emptyset$.
If $S$ is a non-trivial EC we collapse $S$ by calling Algorithm~\ref{lics:alg:CollapseEC}
and the algorithm then returns $M=S$. 
If $\ROut{S} \neq \emptyset$, $S$ is not a non-trivial EC\@ but it may contain nontrivial ECs of $C$\@. 
We then try to compute a balanced separator $T$ of $S$ which is nonempty
only if the diameter of $S$ is large enough ($\geq 2\gamma$). 
We further distinguish between the two cases: 
In the first case, we succeed to compute the balanced separator and
we recurse on the strongly connected components in $\SCCs{S\setminus\Attr{R}{P}{T}}$.\vspace{0.05cm}
After computing and collapsing the ECs in $\SCCs{S\setminus\Attr{R}{P}{T}}$
we incrementally add vertices of $T$ to $S$  and compute the ECs of $S \setminus T$ until $T$ is empty.
In each incremental step, we add one vertex $v$ to $S\setminus T$ and find the SCC of $v$ in $S\setminus T$. 
If the random attractor of $\ROut{S'}$ (note that this computation now considers all vertices in $P$) 
does not contain the whole SCC $S'$ we are able to prove that we can identify a new EC of $C$. 

Otherwise, the vertex $v$ does not create a new EC in $S \setminus T$.
In the second case where we fail to compute the balanced separator we know
that the diameter of $S$ is small ($<2\gamma$). We remove the
random attractor $X$ of $\ROut{S}$ (this set cannot contain ECs due to Lemma~\ref{lics:lem:attr_remove}), 
recompute the strongly connected components in the set $S \setminus X$ and recurse on 
each of them one after the other.

\begin{figure*}
	\begin{center}
		\includegraphics[width=0.8\linewidth]{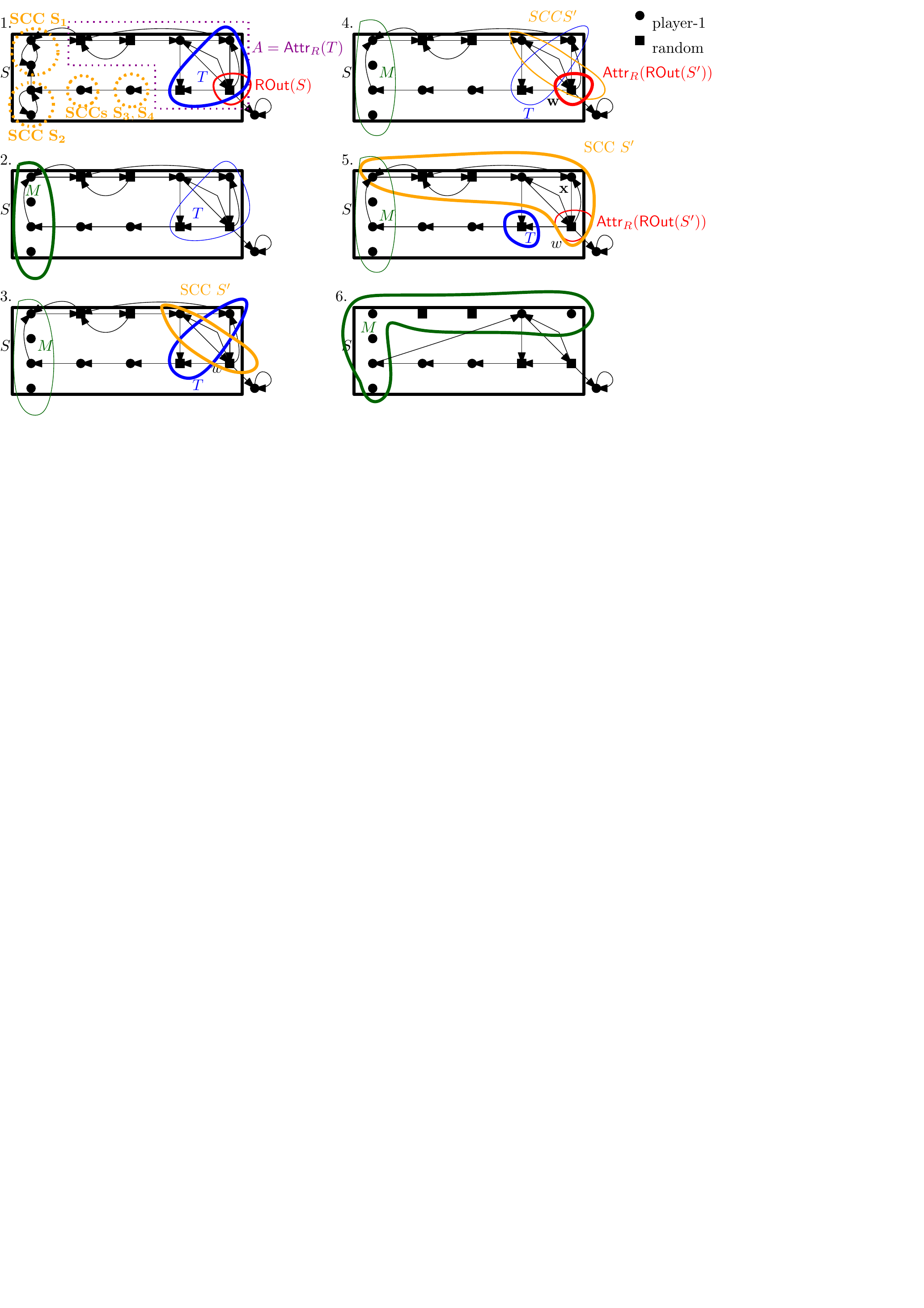}\label{lics:fig:symmec}
	\end{center}
\end{figure*}

Figure~\ref{lics:fig:symmec} illustrates one recursive call of the $\SymMec{$S,\gamma,P$}$ function:
In the first step, we compute that $\ROut{S}$ is nonempty, i.e., $S$ cannot be a MEC\@.
We then successfully compute a separator $T$ with parameter $\gamma$ of the strongly
connected graph $S$ and recurse on the SCCs of $S \setminus \Attr{R}{P}{T}$ (orange dottet circles).
In the second step, we recursively identify two nontrivial MECs and add them to $M$. Next, 
we perform the incremental iterations for all vertices in $T$
to identify the MECs containing the vertices in $\Attr{R}{P}{T}$, (i.e., also in $T$).
In step three, we pick $w \in T$, remove it
from $T$ and compute its SCC $S' = S\setminus T$. In step four, we compute the random attractor of
$\ROut{S'}$, i.e.,
$\Attr{R}{P}{\ROut{S'}} = \{w\}$. Because it contains $w$, there is no new MEC containing $w$. 
In step five, we remove $x$ from $T$ and compute its SCC $S'$. This time, the random attractor
of $\ROut{S'}$, i.e., $\Attr{R}{P}{\ROut{S'}}$, does not contain $x$ and the SCC $S'$ of
$x$ without $\Attr{R}{P}{\ROut{S'}}$ has size greater one. We add the vertices in $S' \setminus
\Attr{R}{P}{\ROut{S'}}$ to $M$. In step six, we do the last incremental
iteration, identify no new nontrivial MECs and return $M$.

Algorithm~\ref{lics:alg:symmec} illustrates the pseudocode of $\SymMec{$S,\gamma,P$}$.
\begin{remark}
Note that our symbolic algorithm does not require randomization which is in contrast to the best
known MEC decomposition algorithms for the standard RAM model~\cite{BPW19}.
The latter algorithms rely on decremental SCCs algorithms that are randomized as they maintain
ES-trees~\cite{EvenS81} from randomly chosen centers. Instead, our symbolic algorithm relies on a
deterministic incremental approach. 
\end{remark}


\begin{algorithm}[ht]
	\Procedure{\SymMec{$S,\gamma,P$}}{
	$M \gets \emptyset$\;\label{lics:alg:symmec:initM}
	\lIf{$|S| \leq 1$}{\Return{$M$}}\label{lics:alg:symmec:ret0}
	\If(\tcp*[h]{check if $S$ is an EC}){$\ROut{S} = \emptyset$}{\label{lics:alg:symmec:if1}
		$\CollapseEC{$S,P$}$\;\label{lics:alg:symmec:collapse1}
		\Return{$M \cup S$}\;\label{lics:alg:symmec:ret1}
	}
	$(T = \Set{v_1, \dots, v_t}) \gets \Separator{$S,\gamma$}$\;\label{lics:alg:symmec:sep}
	\If(\tcp*[h]{$T \neq \emptyset$ if $\Diam{S} \ge 2\gamma$}) {$T \neq \emptyset$}{
		$A \gets \Attr{R}{P}{T} \cap S$\;\label{lics:alg:symmec:attrSep}
		\While{$S_j \gets \SCCFind{$S \setminus A, \emptyset, P$}$}{\label{lics:alg:symmec:while1}
			$M \gets M \cup \SymMec{$S_j, \gamma,P$}$\label{lics:alg:symmec:rec1}
		}
		\While(\tcp*[h]{incremental EC detection for $v \in T$}){$T \neq \emptyset$}{\label{lics:alg:symmec:while2}
			$v \gets \Pick{T}$\;\label{lics:alg:symmec:pick}
			$T \gets T \setminus \{v\}$\;\label{lics:alg:symmec:removevfromT}
			$S' \gets \SCCFind{$S \setminus T,\Set{v},P$}$\;\label{lics:alg:symmec:computescc2}
			\lIf(\tcp*[h]{EC is trivial}){$|S'| = 1$}{\Continue}\label{lics:alg:symmec:vtrivial}
			$Z \gets \Attr{R}{P}{\ROut{S'}}$; \tcp{consider vertices in $P$}\label{lics:alg:symmec:attrIncr}
			$S' \gets S' \setminus Z$\;\label{lics:alg:symmec:snew}
			\If(\tcp*[h]{$S'$ contains a non-trivial EC }){$S' \neq \emptyset$}{
				$U \gets \SCCFind{$S',\Set{v},P$}$\label{lics:alg:symmec:bottomSCC}\;
				$\CollapseEC{$U,P$}$\;\label{lics:alg:symmec:collapse2}
				$M \gets M \cup U$\;\label{lics:alg:symmec:addMEC}
			}
		}
		\Return{$M$}\;\label{lics:alg:symmec:ret2}
	}
	\Else(\tcp*[h]{$T = \emptyset$ and thus  $\Diam{S} < 2\gamma$}){
		$X \gets \Attr{R}{P}{\ROut{S}}$\;\label{lics:alg:symmec:attrB}
		\While{$S_j \gets \SCCFind{$S\setminus X, \emptyset,P$}$}{\label{lics:alg:symmec:while3}
			$M \gets M \cup \SymMec{$S_j, \gamma,P$}$\label{lics:alg:symmec:rec2}
		}
		\Return{$M$}\;\label{lics:alg:symmec:ret3}
	}
}
	\caption{Compute the ECs of an SCS recursively}\label{lics:alg:symmec}
\end{algorithm}

\subsection{Correctness}

We first consider the correctness of Algorithm~\ref{lics:alg:symmec} which will then imply the correctness
of Algorithm~\ref{lics:alg:metasymmec}. 
To this end, consider an MDP $P = (V,E,\ls V_1, V_R \rs, \delta)$ and an SCC $C$ of $P$.
The non-trivial end-components of a subset $S$ of $V$ are given by $\EC{S}$ and 
the set of vertices that appear in non-trivial (maximal) ECs of $C$ are denoted by 
$M_C=\bigcup_{Q \in \EC{C}} Q$.

We first observe that in all calls to $\SymMec{$S,\gamma,P$}$ (see Algorithm~\ref{lics:alg:symmec}.) the set 
$S$ is strongly connected.
\begin{lemma}\label{lics:lem:corr:0}
	For a strongly connected set $S \subseteq C$ we have 
	that in the computation of $\SymMec{$S,\gamma,P$}$ 
	for all the calls $\SymMec{$S_j,\gamma,P$}$ the set 
	$S_j$ is strongly connected and $S_j \subseteq S$.
\end{lemma}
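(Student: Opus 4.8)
The plan is to observe that, within the body of $\SymMec{$S,\gamma,P$}$, recursive invocations of \SymMec occur at exactly two places: Line~\ref{lics:alg:symmec:rec1}, inside the branch where the separator $T$ is non-empty, and Line~\ref{lics:alg:symmec:rec2}, inside the branch where $T=\emptyset$. In both places the argument handed to the recursion is a set $S_j$ that was just produced by a call to \SCCFind on an induced subgraph of $S$. Hence it suffices to argue that every such $S_j$ is (i) strongly connected and (ii) contained in $S$, and then to lift this from the direct children of the call to the entire recursion subtree. Note that the incremental detection loop at Line~\ref{lics:alg:symmec:while2} also uses \SCCFind but never calls \SymMec, so it plays no role here.

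First I would invoke the semantics of \SCCFind from Section~\ref{lics:ss:symbolicscc}: a call to \SCCFind on a vertex set $V'$ returns, one at a time, the strongly connected components of the subgraph induced by $V'$ in the current MDP. For Line~\ref{lics:alg:symmec:rec1} the set $S_j$ is produced by the enclosing loop at Line~\ref{lics:alg:symmec:while1} as an SCC of $S\setminus A$, where $A=\Attr{R}{P}{T}\cap S$ is computed at Line~\ref{lics:alg:symmec:attrSep}; for Line~\ref{lics:alg:symmec:rec2} it is produced by the loop at Line~\ref{lics:alg:symmec:while3} as an SCC of $S\setminus X$, where $X=\Attr{R}{P}{\ROut{S}}$ is computed at Line~\ref{lics:alg:symmec:attrB}. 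In either case the input vertex set is a subset of $S$; for the second case observe that $S\setminus X\subseteq S$ irrespective of whether the attractor $X$ reaches outside $S$. Consequently each returned $S_j$ is an SCC of that induced subgraph, so $S_j$ is strongly connected, and $S_j\subseteq S\setminus A\subseteq S$ (resp.\ $S_j\subseteq S\setminus X\subseteq S$), establishing both claims for the direct recursive calls.

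To obtain the statement for \emph{all} calls in the computation — that is, the entire recursion tree rooted at $\SymMec{$S,\gamma,P$}$ — I would use a short induction on recursion depth: if a call $\SymMec{$S_j,\gamma,P$}$ already satisfies $S_j\subseteq S$, then by the argument above each of its own direct recursive arguments $S_k$ is strongly connected and satisfies $S_k\subseteq S_j$, whence $S_k\subseteq S$ by transitivity of inclusion. The base case is the top-level call, which is trivial.

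The step I expect to require the most care is the interaction with the dynamically changing graph. Throughout the recursion the edge set $E$ of the maintained MDP $P$ is modified by the collapse operations (Algorithm~\ref{lics:alg:CollapseEC}) performed in sibling recursive calls and in the incremental loop. I would make explicit that ``strongly connected'' is always read with respect to the graph current at the moment \SCCFind is invoked, which is precisely what \SCCFind certifies for its output; and that \CollapseEC never introduces a vertex outside $V$ — it only redirects or deletes edges and keeps a single representative inside the collapsed set — so the vertex-level inclusions $S_j\subseteq S$ are unaffected by collapsing. With these two observations the remaining reasoning is purely structural and follows the control flow of Algorithm~\ref{lics:alg:symmec}.
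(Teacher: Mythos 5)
Your proposal is correct and follows essentially the same route as the paper's proof: identify the two recursive call sites (Lines~\ref{lics:alg:symmec:rec1} and~\ref{lics:alg:symmec:rec2}) and observe that in each case the argument is an SCC of an induced subgraph of $S$ (of $S\setminus A$, respectively $S\setminus X$), hence strongly connected and contained in $S$. The paper's proof is terser — it only treats the direct recursive calls and leaves the lifting to the whole recursion tree and the robustness under the edge modifications of \CollapseEC implicit — so your explicit induction on recursion depth and your remark about the dynamically changing edge set are harmless refinements of the same argument, not a different approach.
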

 \begin{proof}
     If $|S| =1$ or $\ROut{S} = \emptyset$ there are no recursive calls the statement is true.
     Now consider the case where $T \not= \emptyset$.
     At Line~\ref{lics:alg:symmec:rec1}, we consider $S_j$ which is an SCC in 
 	the graph $S \setminus (T \cup A)$, obviously a subset of $S$ and strongly connected.
 	Now consider the case where $T = \emptyset$.
 	At Line~\ref{lics:alg:symmec:rec2} we consider $S_j$ which is an SCC of 	$S \setminus X$, obviously a subset of $S$ and strongly connected. 
 \end{proof}

Let $M$ be the set returned by $\SymMec{$S,\gamma,P$}$.
Note that the ultimate goal of Algorithm~\ref{lics:alg:symmec} is to compute $M_C$ for a given SCC $C$ of $P'$.
The next lemma shows that every call of Algorithm~\ref{lics:alg:symmec} with a strongly connected set $S$ returns the 
set $M_S=\bigcup_{Q \in \EC{S}} Q$ and collapses all ECs $Q \in \EC{S}$ in $P$.

\begin{lemma}\label{lics:lem:allECInM}  
    Let  $S$ be a strongly connected set of vertices and $M$ be the set returned by $\SymMec{$S,\gamma,P$}$, then for all non-trivial ECs $Q \in \EC{S}$ in $P'$ 
	we have $Q \subseteq M$ and $Q$ is collapsed in $P$.
\end{lemma}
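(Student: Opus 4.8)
The plan is to prove Lemma~\ref{lics:lem:allECInM} by structural induction on the recursion of $\SymMec{$S,\gamma,P$}$, following the three-way case split of Algorithm~\ref{lics:alg:symmec}. The induction hypothesis states exactly the claim of the lemma: for every recursive call on a strongly connected set $S'$ (with $S'\subseteq S$ by Lemma~\ref{lics:lem:corr:0}), the returned set contains every non-trivial EC of $S'$ and all such ECs get collapsed in the (mutating) copy $P$. The base cases are the early returns: if $|S|\le 1$ the set admits no non-trivial EC, so the empty return is vacuously correct (Line~\ref{lics:alg:symmec:ret0}); and if $\ROut{S}=\emptyset$, then $S$ is itself a non-trivial end-component (every random vertex keeps its edges inside $S$ and $S$ is strongly connected), so $S$ is the unique maximal EC, we collapse it via $\CollapseEC{$S,P$}$ and return $S$ (Lines~\ref{lics:alg:symmec:if1}--\ref{lics:alg:symmec:ret1}).

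\textbf{The two inductive cases.} For the case $T=\emptyset$ (small diameter, Lines~\ref{lics:alg:symmec:attrB}--\ref{lics:alg:symmec:ret3}), I would invoke Lemma~\ref{lics:lem:attr_remove}: the set $X=\Attr{R}{P}{\ROut{S}}$ intersects no non-trivial EC of $S$, so every $Q\in\EC{S}$ survives in $S\setminus X$ and lands inside one of the SCCs $S_j$ of $S\setminus X$; since $Q$ is still strongly connected and a non-trivial EC there, the induction hypothesis applied to each $\SymMec{$S_j,\gamma,P$}$ yields $Q\subseteq M$ and collapses $Q$. The harder inductive case is $T\neq\emptyset$ (the separator/incremental branch, Lines~\ref{lics:alg:symmec:attrSep}--\ref{lics:alg:symmec:ret2}). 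Here I would split each $Q\in\EC{S}$ according to whether $Q\cap A=\emptyset$, where $A=\Attr{R}{P}{T}\cap S$. If $Q\cap A=\emptyset$, then $Q$ lies in some SCC $S_j$ of $S\setminus A$ and is handled by the recursive calls at Line~\ref{lics:alg:symmec:rec1}, using the induction hypothesis just as in the $T=\emptyset$ case. The genuinely new argument is for ECs that meet $A$: these are recovered by the incremental loop at Lines~\ref{lics:alg:symmec:while2}--\ref{lics:alg:symmec:addMEC}, which reintroduces the separator vertices one at a time.

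\textbf{The incremental loop} is the main obstacle and deserves the most care. I would argue that when a vertex $v$ is picked and $S'$ is the SCC of $v$ in $S\setminus T$ (Line~\ref{lics:alg:symmec:computescc2}), the set $Z=\Attr{R}{P}{\ROut{S'}}$ computed with respect to the \emph{mutating} MDP $P$ correctly excludes non-EC vertices (again by Lemma~\ref{lics:lem:attr_remove}, now applied to the collapsed graph), so that $S'\setminus Z$ captures a genuine non-trivial EC through $v$. The subtlety is that the ECs already detected in earlier iterations and by the recursion have been \emph{collapsed} in $P$, so I must use Lemma~\ref{lics:lem:collapse}(a)--(c) to transfer EC-membership between $P'$ and $P$: collapsing an EC $X$ preserves the ECs disjoint from $X$ and identifies any EC meeting $X$ with its collapsed image. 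Thus the EC $U$ extracted at Line~\ref{lics:alg:symmec:bottomSCC} and collapsed at Line~\ref{lics:alg:symmec:collapse2}, when un-collapsed via repeated application of Lemma~\ref{lics:lem:collapse}(c), corresponds to a maximal EC of the original $S$ through $v$. I would establish a loop invariant asserting that after processing each separator vertex, all ECs of $S$ whose attractor-relevant portion has been reintroduced are present (collapsed) in $P$ and contained in $M$; maximality follows because $\SCCFind$ computes the full SCC of $v$ and the random attractor removal only strips vertices that provably lie in no EC. Chaining this invariant through all $|T|$ iterations, together with the disjoint-from-$A$ case, shows every $Q\in\EC{S}$ is eventually in $M$ and collapsed, completing the induction.
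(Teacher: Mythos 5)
Your overall skeleton (induction over the recursion, the two base cases, the $T=\emptyset$ case via Lemma~\ref{lics:lem:attr_remove}, and the split of $Q$ by whether it meets $A$) matches the paper's proof. But the $T\neq\emptyset$ branch has a genuine gap: you claim that every EC meeting $A$ "is recovered by the incremental loop," yet the loop at Lines~\ref{lics:alg:symmec:while2}--\ref{lics:alg:symmec:addMEC} only ever inspects the SCC of the \emph{currently picked separator vertex} $v\in T$ and collapses the EC through $v$. So the loop can only detect ECs that contain a vertex of $T$ itself. An EC $Q$ with $Q\cap(A\setminus T)\neq\emptyset$ but $Q\cap T=\emptyset$ would be handled by neither the recursion (it is not inside an SCC of $S\setminus A$) nor the loop (it contains no vertex the loop ever picks). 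The paper closes exactly this hole with a dedicated claim: any non-trivial EC not fully contained in an SCC of $S\setminus A$ must satisfy $Q\cap T\neq\emptyset$. The proof is a lowest-attractor-level argument you never make: take $v\in Q\cap(A\setminus T)$ at the minimal attractor level; if $v\in V_1$ then, by the attractor definition, all its out-edges go to strictly lower levels and hence leave $Q$, contradicting strong connectivity of $Q$; if $v\in V_R$ it has a random edge to a lower level outside $Q$, contradicting that $Q$ is an EC. Without this claim your case analysis is not exhaustive.

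Relatedly, your loop invariant ("all ECs whose attractor-relevant portion has been reintroduced are present in $M$") is not the one that makes the incremental step sound. What the correctness of each iteration actually needs is the paper's invariant: \emph{at the start of every iteration of the while-loop there is no non-trivial EC contained in $S\setminus T$ in the current (collapsed) MDP $P$}. This is what guarantees that when $v$ is reintroduced, at most one new non-trivial EC can exist and it must contain $v$ — which is why computing only the SCC of $v$, removing $\Attr{R}{P}{\ROut{S'}}$, and collapsing the single resulting component does not miss anything. Establishing the base case ($j=0$) of that invariant again requires the missing claim above (ECs avoiding $T$ were already collapsed by the recursion; ECs meeting $T$ are not subsets of $S\setminus T$). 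Your appeal to Lemma~\ref{lics:lem:collapse} for transferring ECs between $P'$ and $P$ is the right ingredient for the final "first iteration where $Q\subseteq S\setminus T$" step, but as written the maximality/uniqueness of the detected component rests on an invariant you have not stated or proved.
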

\begin{proof}
	Let  $Q \in \EC{C}$ with $Q \subseteq S$.
	We prove the statement by induction on the size of $S$.
%
	For the base case, i.e., if $S$ is empty or contains only one vertex, 
	Algorithm~\ref{lics:alg:symmec} returns the empty-set at Line~\ref{lics:alg:symmec:ret0} and thus the condition holds.

	For the inductive step, let $|S| > 1$.
	If $Q = S$, we detect it at Line~\ref{lics:alg:symmec:if1} 
	and return $S$ at Line~\ref{lics:alg:symmec:ret1}. 
	Then we collapse the set of vertices at Line~\ref{lics:alg:symmec:collapse1}.
	Thus, $M$ contains $Q$ and the claim holds.
%
	We distinguish two cases concerning $T$ at Line~\ref{lics:alg:symmec:sep}:
	First if we have non-empty $T$ also the random attractor $A$ of $T$ at Line~\ref{lics:alg:symmec:attrSep}
	is non-empty.
	Thus the SCCs of $S \setminus A$ are strictly smaller than $S$. 
	Thus, we can use the induction hypothesis to stipulate that if $Q$ is contained 
	in one of the SCCs computed at Line~\ref{lics:alg:symmec:while1} 
	we have $Q \subseteq M$ after the while-loop at Lines~\ref{lics:alg:symmec:while1}-\ref{lics:alg:symmec:rec1}. 
	Additionally, by the induction hypothesis, they are collapsed into a player-1 vertex.
	Also, if $Q$ contains a subset $Q' \subset Q$ that is a non-trivial EC
	and contained in one of the SCCs of $S \setminus A$,
	we have that $Q' \subseteq M$ after the while-loop at Lines~\ref{lics:alg:symmec:while1}-\ref{lics:alg:symmec:rec1}.
	Again, by induction hypothesis, $Q'$ is collapsed into a player-1 vertex.
	We proceed by proving that the separator $T$ must contain one of the vertices of $Q$.
	\begin{claim}\label{lics:claim:TcontainsQ}
		For any non-trivial EC $Q \in \EC{S}$ which is not fully contained
		in one of the SCCs computed at Line~\ref{lics:alg:symmec:while1} we have $T \cap Q \not= \emptyset$.
	\end{claim}
	\begin{proof}
		Let $Q$ be an arbitrary EC in $\EC{S}$.
		Observe that if $Q$ is not fully contained in an SCC of $S \setminus A$ 
		(computed at Line~\ref{lics:alg:symmec:while1}), we have $Q \cap A \neq \emptyset$.
		We now show that $Q \cap T \neq \emptyset$:
		Assume the contrary, i.e., $Q \cap T = \emptyset$. 
		Consequently, $Q$ is a strongly connected set in $S \setminus T$ and not 
		fully contained in an SCC of $S \setminus A$. 
		Also, $Q$ contains a vertex of $A \setminus T$. 
		Let $v \in Q \cap (A \setminus T)$ such that there is no $v' \in Q \cap (A \setminus T)$ which 
		is on a lower level of the attractor $A$.
		If $v \in V_1$, by the definition of the attractor and the above assumption, all its outing edges leave the set $Q$, which is in contradiction to $Q$ being strongly connected.
		Thus we have $v \in V_R$ and, by the definition of the attractor,
		it must have an edge to a smaller level and by the above assumption the target vertex is not in $Q$.
		That is $Q$ has an outgoing random edge which contradicts the assumption that $Q$ is an EC\@.		
	\end{proof}

	The following claim shows an invariant for the while-loop at Line~\ref{lics:alg:symmec:while2}.
	\begin{claim}\label{lics:claim:non-trivialEC}
		For iteration $j \geq 0$ of the while-loop at Line~\ref{lics:alg:symmec:while2} holds: There are no non-trivial ECs $E \subseteq S \setminus T$ in $P$.
	\end{claim}
	\begin{proof}
		The induction base $j=0$ holds 
		for the following reasons: $A \setminus T$ cannot include $Q$ due to Claim~\ref{lics:claim:TcontainsQ}.
		$S \setminus A$ cannot contain $Q$ 
		due to the fact that $Q$ must contain a vertex 
		in $T$ and that we collapsed the ECs contained in SCCs computed at 
		Line~\ref{lics:alg:symmec:while1}.
		For the induction step, assume that there are no non-trivial ECs in $S \setminus T$
		before iteration $j = \ell$.
		When we include one vertex $v$ of $T$ into $S$ 
		at Line~\ref{lics:alg:symmec:pick} there can only be one new non-trivial EC $Q_v$ in $S$, 
		i.e., the one including $v$, as the remaining graph is unchanged. 
		We argue that if such an EC $Q_v$ exists it is removed before iteration $\ell+1$ of the while-loop 
		at Line~\ref{lics:alg:symmec:while2}:
		The EC $Q_v$ must be strongly connected, and thus it is contained in the SCC $S'$ of $v$ 
		at Line~\ref{lics:alg:symmec:computescc2}. Note, that if $|S'| = 1$, the new EC is trivial, which 
		concludes the proof. Otherwise, we compute $Z$ at Line~\ref{lics:alg:symmec:attrIncr}, which does not
		contain a vertex of $Q_v$ by Lemma~\ref{lics:lem:attr_remove} and remove it from $S'$. 
		If $S'$ is non-empty, it contains exactly one non-trivial SCC as we argued above. 
		That is the SCC $Q_v$ and we thus add $Q_v$ to $M$ and collapse it. 
		As a consequence there is no non-trivial EC in $S \setminus T$ left.
		Thus the claim holds. 
	\end{proof}	
	
	It remains to show that Algorithm~\ref{lics:alg:symmec} finds the EC $Q$ 
	which contains vertices in $S \setminus M$.
	Consider the first iteration where $Q \subseteq S \setminus T$.
	We show that $Q \setminus M = \emptyset$ after this iteration:
	Because $Q$ might contain non-trivial ECs $Q'_i$ that were already collapsed due to 
	the recursive call, or due to prior iteration of the while-loop at Line~\ref{lics:alg:symmec:while2},
	it follows from Lemma~\ref{lics:lem:collapse} that there exists 	
	an EC $Q' \subseteq Q$  in $P$ such that $Q' \supset (Q\setminus M)$
	and $Q'$ contains one or more vertices corresponding to the collapsed sub-ECs of $Q$.
	Due to Claim~\ref{lics:claim:non-trivialEC} we have that $Q'$ is collapsed in the current iteration 
	and thus $Q'$ is added to $M$, i.e., we have that $Q \subseteq M$ and that $Q$ is collapsed.
%

	Now consider the case $T= \emptyset$ (i.e., $\Diam{S} < 2 \gamma$).
	As $\ROut{S} \not= \emptyset$ we have that the set $X$ computed at Line~\ref{lics:alg:symmec:attrB} is non-empty. 
	Note that $Q$ cannot contain a vertex in $X$ by Lemma~\ref{lics:lem:attr_remove} 
	and the fact that $S$ is strongly connected by Lemma~\ref{lics:lem:corr:0}. 
	Consequently, $Q$ is contained in one of the SCCs of $S \setminus X$. 
	Note that each such SCC has size strictly smaller than $S$ due to the fact that $X$ is non-empty.
	The claim then holds by the induction hypothesis.
	This completes the proof of Lemma~\ref{lics:lem:allECInM}.
\end{proof}

By the above lemma, we have that the algorithm finds all ECs\@. 
We next show that all vertices added to $M$ are contained
in some EC\@.

\begin{lemma}\label{lics:lem:MValid}
	The set $M$ is a subset of $M_S$, i.e., $M \subseteq \bigcup_{Q \in \EC{S}} Q$.
\end{lemma}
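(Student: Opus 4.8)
I would prove the statement by induction on $|S|$, using the fact (Lemma~\ref{lics:lem:corr:0}) that every set $S$ handed to $\SymMec{$\cdot$}$ is strongly connected, together with the collapsing invariant supplied by Lemma~\ref{lics:lem:collapse}. The base case $|S|\le 1$ is immediate, since $\SymMec{$\cdot$}$ returns $\emptyset$ at Line~\ref{lics:alg:symmec:ret0}. For the inductive step I would enumerate the four points at which $M$ grows and show that each contributes only vertices lying in a non-trivial EC of $S$. The two recursive sites (Line~\ref{lics:alg:symmec:rec1} on the SCCs of $S\setminus A$, and Line~\ref{lics:alg:symmec:rec2} on the SCCs of $S\setminus X$) are the easy part: the removed attractors $A=\Attr{R}{P}{T}\cap S$ and $X=\Attr{R}{P}{\ROut{S}}$ are non-empty in these branches, so each $S_j$ is a strictly smaller strongly connected subset of $S$, and the induction hypothesis gives $\SymMec{$S_j,\gamma,P$}\subseteq M_{S_j}$. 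I would then observe that a non-trivial EC $Q$ of $S_j$ is also a non-trivial EC of $S$: $Q$ is strongly connected and, being an EC, every random vertex of $Q$ keeps all its edges inside $Q\subseteq S_j\subseteq S$, so $Q\in\EC{S}$ and hence $M_{S_j}\subseteq M_S$.

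The substantive part concerns the two \emph{detection} sites, where a set is collapsed and added to $M$: the direct case at Line~\ref{lics:alg:symmec:ret1} (reached when $\ROut{S}=\emptyset$ and $|S|>1$) and the incremental case at Line~\ref{lics:alg:symmec:addMEC}. In the direct case, $\ROut{S}=\emptyset$ means no random vertex of $S$ has an edge out of $S$ in the current graph $P$, so together with strong connectivity $S$ is a non-trivial EC of $P$. In the incremental case, I would argue that the set $U=\SCCFind{$S',\{v\},P$}$ collapsed at Line~\ref{lics:alg:symmec:collapse2} is a non-trivial EC of the current $P$: after subtracting $Z=\Attr{R}{P}{\ROut{S'}}$, no random vertex of $U$ can have an edge leaving $S'$ (else it would lie in $\ROut{S'}\subseteq Z$) nor an edge into $Z$ (else it would be attracted into $Z$), and by the single-new-EC argument already established in the proof of Claim~\ref{lics:claim:non-trivialEC} the SCC of $v$ in $S'\setminus Z$ is exactly the new EC. Thus in both detection sites the set added to $M$ is a non-trivial EC of the \emph{current} graph $P$.

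It then remains to lift validity from the current graph $P$ to the original MDP $P'$, and this is where I expect the main obstacle to lie, because the collapsing operations continually modify $P$ and a collapsed set may contain representative vertices standing for previously collapsed sub-ECs. I would carry a side invariant along the induction: $P$ is always obtained from $P'$ by collapsing a family of non-trivial ECs of $P'$, each representative vertex ``expands'' to a non-trivial EC of $P'$ contained in $S$, and every vertex currently in $M$ lies in such an EC. When a set $U$ (an EC of $P$) is collapsed, I would apply Lemma~\ref{lics:lem:collapse}(c) repeatedly: un-collapsing each representative $w\in U$ whose expansion $W$ is a non-trivial EC of $P'$ yields, step by step, a non-trivial EC $\widehat U\supseteq U$ of $P'$ that also contains every expanded sub-EC. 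Since all the sub-EC vertices were placed into $M$ in earlier (recursive or incremental) steps and are contained in $\widehat U$, while $U$ itself is added now, all these vertices lie in the single non-trivial EC $\widehat U$ of $P'$. Finally, because $\widehat U\subseteq S$ and an EC of $P'$ contained in $S$ is an EC of the induced sub-MDP $G[S]$, we get $\widehat U\in\EC{S}$, so all of them belong to $M_S$. Combining the detection and recursion cases closes the induction and yields $M\subseteq M_S$ for the top-level call, hence the lemma.
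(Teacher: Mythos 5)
Your proposal is correct and follows essentially the same route as the paper's own proof: induction on $|S|$, with the two recursive sites (Lines~\ref{lics:alg:symmec:rec1} and~\ref{lics:alg:symmec:rec2}) discharged by the induction hypothesis on strictly smaller strongly connected sets, and the two detection sites (Lines~\ref{lics:alg:symmec:ret1} and~\ref{lics:alg:symmec:addMEC}) discharged by verifying directly, via the attractor computations, that the collapsed set is a non-trivial EC. The only difference is that you spell out two steps the paper leaves implicit --- that a non-trivial EC of a strongly connected subset $S_j \subseteq S$ is also a non-trivial EC of $S$, and that a set detected as an EC in the current collapsed MDP $P$ lifts back, by repeated application of Lemma~\ref{lics:lem:collapse}(c), to a non-trivial EC of the original MDP contained in $S$ --- which adds rigor without changing the argument.
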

\begin{proof}
	We show the claim by induction over the size of $S$.
	For the base case, i.e., $|S| \leq 1$ the claim holds trivially because we return
	the empty set at Line~\ref{lics:alg:symmec:ret0} and any set of size less than two only contains
	trivial ECs.
	For the inductive step, let $|S| > 1$. 
%
	For the return statement at Line~\ref{lics:alg:symmec:ret1} we argue as follows:
	Because $S$ has no random vertices with edges out of $S$ 
	(considering $P'$, Line~\ref{lics:alg:symmec:if1}), $S$ is a non-trivial EC\@. 
	Thus, in Line~\ref{lics:alg:symmec:ret1} we correctly return $S$.

	For the case $T \not=\emptyset$ we argue as follows:
	Let $T$, $A$ be the separator and its attractor as computed in Line~\ref{lics:alg:symmec:sep} and
	Line~\ref{lics:alg:symmec:attrSep}. Note that for all SCCs $S_j$ in the graph $S \setminus A$
	we have $|S_j| < |S|$ because $|A| > 0$. Thus, by induction hypothesis, all vertices added 
	in the recursive call at Line~\ref{lics:alg:symmec:rec1} are in $M_S$.
	We claim that for each iteration of the while loop in Line~\ref{lics:alg:symmec:while2} 
	we add only non-trivial ECs to $M$:
	Note that by Claim~\ref{lics:claim:non-trivialEC} there is no non-trivial EC at the start of the
	while loop in Line~\ref{lics:alg:symmec:while2}.
	Let $v$ be the vertex we choose to remove from $T$ at Line~\ref{lics:alg:symmec:pick}.
	Let $S_v$ be the SCC of $v$ computed at Line~\ref{lics:alg:symmec:computescc2}.
	If $|S_v| = 1$ we continue to the next iteration without adding anything to $M$ and 
	the claim holds.
	Otherwise, there are two cases based on the 
	computation of the attractor $Z$ of random vertices 
	with edges out of $S_v$ at Line~\ref{lics:alg:symmec:attrIncr}:
	If $Z$ contains $S_v$, we do not add vertices to $M$ and the claim holds.
	If $S_v \setminus Z \neq \emptyset$ we add the SCC $S'_v$ of $v$ in $S_v \setminus Z$ to $M$.
	It remains to show that the non-trivial SCC $S'_v$ as computed at Line~\ref{lics:alg:symmec:snew} is a non-trivial EC\@:
	Note that for all $v \in S'_v \cap V_1$ we have $\Out{v} \cap S'_v \neq \emptyset$, otherwise
	$v \in Z$. Also, for all $u \in S'_v \cap V_R$ we must have $\Out{u} \subseteq S'_v$, otherwise
	$u \in Z$. 	Thus, $S_v$ is an EC\@.

    For the case $T =\emptyset$ note that each SCC found at Line~\ref{lics:alg:symmec:while3} must be of size strictly
	less than $|S|$ because $|X| \geq 1$. 
	Thus, by induction hypothesis, the vertices added at Line~\ref{lics:alg:symmec:rec2}
	are in $M_S$.
\end{proof}

Lemma~\ref{lics:lem:allECInM} and Lemma~\ref{lics:lem:MValid} imply the correctness of 
Algorithm~\ref{lics:alg:symmec} and Algorithm~\ref{lics:alg:metasymmec}.

\begin{proposition}[Correctness]\label{lics:prop:symmec:corr}
  Given an SCC $S$ of an MDP, Algorithm~\ref{lics:alg:symmec} returns the set $M_C=\bigcup_{Q \in \EC{C}} Q$, i.e., the set of vertices that are contained in a non-trivial MEC of $S$.
\end{proposition}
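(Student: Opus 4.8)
The plan is to obtain the claimed equality $M = M_S$, where $M$ denotes the set returned by Algorithm~\ref{lics:alg:symmec} on the strongly connected input $S$ and $M_S = \bigcup_{Q \in \EC{S}} Q$ is the set of vertices lying in some non-trivial EC of $S$, by proving the two set inclusions separately and then invoking the two supporting lemmas already established above. Since $S$ is strongly connected (Lemma~\ref{lics:lem:corr:0}), the recursion is well founded and both lemmas apply directly.

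First I would argue the completeness direction $M_S \subseteq M$. By Lemma~\ref{lics:lem:allECInM}, for every non-trivial EC $Q \in \EC{S}$ we have $Q \subseteq M$ (and moreover $Q$ is collapsed in the working copy $P$). Taking the union over all such $Q$ immediately yields $M_S = \bigcup_{Q \in \EC{S}} Q \subseteq M$. This direction rests on the recursive structure together with the two claims inside that lemma: Claim~\ref{lics:claim:TcontainsQ}, which guarantees that any EC not fully contained in one of the recursively processed SCCs of $S \setminus \Attr{R}{P}{T}$ must meet the separator $T$, and Claim~\ref{lics:claim:non-trivialEC}, the incremental invariant ensuring that when the separator vertices are added back one at a time, at most one new non-trivial EC surfaces and is detected and collapsed via the random-attractor removal at Line~\ref{lics:alg:symmec:attrIncr}.

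Second I would argue the soundness direction $M \subseteq M_S$, which is exactly the statement of Lemma~\ref{lics:lem:MValid}: by induction on $|S|$ one checks that every vertex the algorithm ever adds to $M$ --- either in the base case, in the recursive calls on the strictly smaller SCCs of $S \setminus \Attr{R}{P}{T}$ (respectively $S \setminus X$ when the separator is empty), or in the incremental loop where a genuine EC $S'$ is certified after removing the attractor $Z$ --- belongs to some non-trivial EC of $S$. Combining the two inclusions gives $M = M_S$, which proves the proposition; the correctness of the outer routine Algorithm~\ref{lics:alg:metasymmec} then follows since it merely restricts to $M$ and recovers the individual MECs as its SCCs.

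The routine part is merely stitching the two inclusions together; the real work lives inside the two lemmas. The step I expect to be the main obstacle is the completeness argument of Lemma~\ref{lics:lem:allECInM}, specifically maintaining Claim~\ref{lics:claim:non-trivialEC} across the incremental loop while accounting for the fact that an EC $Q$ may already have some of its sub-ECs collapsed into single player-1 vertices during the recursion. Here one must appeal to Lemma~\ref{lics:lem:collapse} to recover a collapsed EC $Q' \subseteq Q$ in $P$ with $Q' \supseteq Q \setminus M$ and verify that $Q'$ is the unique non-trivial SCC detected and collapsed in the current iteration, so that $Q \subseteq M$ holds once $Q \subseteq S \setminus T$.
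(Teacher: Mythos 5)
Your proposal is correct and takes essentially the same route as the paper: the proposition is obtained there exactly as the conjunction of Lemma~\ref{lics:lem:allECInM} (giving $M_S \subseteq M$) and Lemma~\ref{lics:lem:MValid} (giving $M \subseteq M_S$), with the hard work living inside those lemmas just as you describe. Your identification of the delicate point---maintaining Claim~\ref{lics:claim:non-trivialEC} across the incremental loop while appealing to Lemma~\ref{lics:lem:collapse} for already-collapsed sub-ECs---matches the paper's argument as well.
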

\begin{proposition}[Correctness]
  Given an MDP $P'$, Algorithm~\ref{lics:alg:metasymmec} returns the set of non-trivial MECs of $P'$.
\end{proposition}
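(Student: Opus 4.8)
The plan is to verify the two phases of Algorithm~\ref{lics:alg:metasymmec} separately, using Proposition~\ref{lics:prop:symmec:corr} as a black box. Writing $\EC{P'}$ for the set of non-trivial end-components of $P'$, I would first show that after the first while-loop (Lines~\ref{lics:alg:metasymmec:while1}--\ref{lics:alg:metasymmec:rec1}) the set $M$ equals $\bigcup_{Q \in \EC{P'}} Q$, the set of all vertices lying in a non-trivial MEC of $P'$; then I would show that the SCCs extracted from $P'[M]$ in the second while-loop (Lines~\ref{lics:alg:metasymmec:while2}--\ref{lics:alg:metasymmec:addSCC}) are exactly the non-trivial MECs of $P'$. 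These two facts, together with the observation that the second loop enumerates its SCCs over the \emph{original} edge set $E'$, immediately give the claim.

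For the first phase, note that the SCCs $C_1, \dots, C_\ell$ returned by \textsc{SCCFind} on $P'$ partition $V$. Since every non-trivial MEC is strongly connected, it is contained in a unique $C_i$; and since a MEC keeps all random edges inside itself, a set $Q \subseteq C_i$ is a non-trivial EC of $P'$ iff it is one of the components counted by $\EC{C_i}$ (the $\ROut$ test in \textsc{SymMEC} checks edges against all of $V$, so the notion of EC used is exactly the one inherited from $P'$). Hence $\bigcup_i \bigcup_{Q \in \EC{C_i}} Q = \bigcup_{Q \in \EC{P'}} Q$, and by Proposition~\ref{lics:prop:symmec:corr} each call \textsc{SymMEC}$(C_i,\gamma,P)$ contributes precisely $\bigcup_{Q \in \EC{C_i}} Q$ to $M$. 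Here I would also observe that the collapse operations performed inside \textsc{SymMEC}$(C_j,\cdot)$ only rewrite edges incident to vertices of $C_j$ (Algorithm~\ref{lics:alg:CollapseEC}), so they leave the internal EC structure of the vertex-disjoint set $C_i$ untouched; thus passing the shared, progressively modified copy $P$ does not corrupt later calls, and the first loop indeed yields $M = \bigcup_{Q \in \EC{P'}} Q$.

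For the second phase, the heart of the argument is the claim that the SCCs of $P'[M]$ are exactly the non-trivial MECs of $P'$. The crucial step is that every SCC $R$ of $P'[M]$ is itself an end-component of $P'$. Indeed, take any $v \in R \cap V_R$; since $v \in M$ it lies in some MEC $Q_v$, which (being strongly connected and contained in $M$) is a strongly connected subset of $M$ containing $v$, so $Q_v \subseteq R$ by maximality of $R$ as an SCC of $P'[M]$. As all outgoing edges of $v$ stay inside $Q_v \subseteq R$, the set $R$ is strongly connected and has no random edge leaving it, i.e.\ $R$ is a non-trivial EC\@. Maximality now gives both inclusions: any non-trivial MEC $Q$ is strongly connected and lies in $M$, so $Q$ sits inside some SCC $R$ of $P'[M]$, and since $R$ is an EC while $Q$ is a \emph{maximal} EC with $Q \subseteq R$, we get $Q = R$. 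Conversely every SCC $R$ of $P'[M]$ is an EC, hence contained in some MEC $Q'$; but $Q' \subseteq M$ is strongly connected and contains $R$, so maximality of $R$ forces $Q' = R$, making $R$ a MEC\@. Therefore the second loop returns precisely the non-trivial MECs.

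I expect the main obstacle to be the separation argument hidden in the last paragraph, namely that two distinct MECs never merge into a single SCC of $P'[M]$. The subtle point is exactly the phenomenon of Example~\ref{icaps:ex:diff_SCCS_MECS}, where contracting MECs can create cycles; the reason this does not collapse two MECs together inside $P'[M]$ is that the ``glue'' vertices closing such a cycle (the analogue of $v_4$) belong to no MEC and are therefore absent from $M$. My argument captures this by proving that any strongly connected subset of $M$ is automatically an EC, so a single SCC of $P'[M]$ can never strictly contain a MEC. I would take care to state this maximality chain cleanly, since it is where the MDP-specific reasoning — the random-edge condition together with the fact that $M$ excludes every non-MEC vertex — does the real work.
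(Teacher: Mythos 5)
Your proof is correct and takes essentially the same route as the paper: it first invokes Proposition~\ref{lics:prop:symmec:corr} to get $M=\bigcup_{Q\in \EC{P'}}Q$, and then uses the key observation that any SCC of $P'$ restricted to $M$ is itself a non-trivial EC (each random vertex's outgoing edges stay inside its own MEC, which lies inside that SCC), so maximality on both sides identifies the SCCs of $P'[M]$ with the non-trivial MECs. The paper phrases this second step as a proof by contradiction and is terser on the first phase (it does not spell out that the shared, progressively collapsed copy $P$ cannot corrupt later calls on vertex-disjoint SCCs), but the underlying argument is identical.
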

\begin{proof}
	Due to Proposition~\ref{lics:prop:symmec:corr} $M$ contains all vertices in nontrivial ECs.
	It remains to show that the nontrivial MECs are the SCCs of $V \cap M$. 
	By definition, each nontrivial EC is strongly connected. Towards a contradiction assume
	that a nontrivial MEC $Q$ is not an SCC but part of some larger SCC $C$ of $V \cap M$ (there is
	an SCC which is not a MEC). 
	But then, by the definition of $M$, we have that $C$ is the union of several MECs, i.e., it is strongly connected and has no random outgoing edges and thus $C$ is an $EC$. This is in contradiction to $Q$ being a MEC\@.
	Thus, each nontrivial MEC is an SCC and, by the definition of $M$, the union of the MECs covers
	$M$. Thus there are no further SCCs.
%
\end{proof}

\subsection{Symbolic Operations Analysis}
We first bound the total number of symbolic operations for computing the separator $T$ at 
Line~\ref{lics:alg:symmec:sep}, recursing upon the SCCs $S \setminus T$ at Line~\ref{lics:alg:symmec:rec1} and 
adding the vertices in $T$ back to compute the rest of the ECs of $S$ at 
Lines~\ref{lics:alg:symmec:while2}--\ref{lics:alg:symmec:ret2} during all calls to Algorithm~\ref{lics:alg:symmec}.
\begin{lemma}\label{lics:lem:running_rec2}
	The total number of symbolic operations of Lines~\ref{lics:alg:symmec:sep}--\ref{lics:alg:symmec:ret2}
	in all calls to Algorithm~\ref{lics:alg:symmec} is in $O\left(\frac{n^2}{\lfloor \gamma /(2\log n)\rfloor}\right)$. 
\end{lemma}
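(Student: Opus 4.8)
The plan is to charge the work to individual invocations of $\SymMec{$\cdot$}$ and then sum over the recursion tree. Fix an invocation on a strongly connected set $S$ that enters the branch $T \neq \emptyset$, and write $q = \lfloor \gamma/(2\log n)\rfloor$. I would first bound the cost of Lines~\ref{lics:alg:symmec:sep}--\ref{lics:alg:symmec:ret2} for this single invocation, \emph{excluding} the recursive calls at Line~\ref{lics:alg:symmec:rec1} (whose cost is attributed to the corresponding sub-invocations). Three of the four ingredients are immediate: $\Separator{$S,\gamma$}$ costs $O(|S|)$ symbolic operations by Lemma~\ref{lics:lem:runspacesep}; the attractor $A=\Attr{R}{P}{T}\cap S$ costs $O(|A\setminus T|+1)=O(|S|)$ by Lemma~\ref{lics:lem:attrRunning}; and the SCC decomposition of $S\setminus A$ that drives the while-loop at Line~\ref{lics:alg:symmec:while1} costs $O(|S\setminus A|)=O(|S|)$ via the first term of the bound in Theorem~\ref{lics:thm:SCCs}. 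So, modulo the incremental loop, the per-invocation cost is $O(|S|)$.

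The delicate ingredient, and the step I expect to be the main obstacle, is the incremental EC-detection loop at Lines~\ref{lics:alg:symmec:while2}--\ref{lics:alg:symmec:ret2}. Here each of the $|T|$ separator vertices is reinserted once; for each I would use the starting-vertex variant of the SCC algorithm (the remark after Theorem~\ref{lics:thm:SCCs}) so that $\SCCFind{$S\setminus T,\{v\},P$}$ emits the SCC $S'$ of $v$ first, in $O(D_{S'}+1)$ operations, after which the attractor $Z$ at Line~\ref{lics:alg:symmec:attrIncr}, the second search at Line~\ref{lics:alg:symmec:bottomSCC}, and the collapse at Line~\ref{lics:alg:symmec:collapse2} are all $O(|S'|)$. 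The $+1$ contributions sum to $O(|T|)$, which by Observation~\ref{lics:obs:SepSize} is $O(|S|/q)=O(|S|)$. The crux is to bound the remaining $\sum_{\text{iterations}}|S'|$ by $O(|S|)$: I would use the invariant of Claim~\ref{lics:claim:non-trivialEC} (between iterations no non-trivial EC survives in $S\setminus T$, so each reinsertion creates at most one new EC, which is immediately collapsed) together with Lemma~\ref{lics:lem:attr_remove} (the vertices dropped via $Z$ lie in no EC of $S'$ and are thereby separated off), arguing that the number of uncollapsed vertices decreases monotonically and each vertex participates in only a bounded number of searches before being collapsed. This amortization is where the argument is most fragile and will need the most care; it yields an $O(|S|)$ bound for the loop and hence $O(|S|)$ for the whole branch per invocation.

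Granting the per-invocation bound $O(|S|)$, the conclusion follows by a clean summation over the recursion tree. By Lemma~\ref{lics:lem:sepqual} the set $T$ is a $q$-separator, so each SCC $S_j$ recursed upon at Line~\ref{lics:alg:symmec:rec1} satisfies $|S_j|\le |S|-q\,|T|\le |S|-q$; thus, along any root-to-leaf path, the sizes of successive $T\neq\emptyset$ invocations drop by at least $q$, giving at most $O(n/q)$ of them on any path. Grouping the $T\neq\emptyset$ invocations by their number of $T\neq\emptyset$ ancestors, each group is an antichain of the recursion tree, and since sibling recursive calls act on disjoint SCCs while a child's set is contained in its parent's, the working sets within one group are pairwise disjoint and total at most $n$. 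With $O(n/q)$ groups this gives $\sum_{I:\,T(I)\neq\emptyset}|S_I|=O(n^2/q)$, and multiplying by the $O(1)$-per-vertex per-invocation cost yields $O(n^2/q)=O\!\left(n^2/\lfloor \gamma/(2\log n)\rfloor\right)$.

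Finally, I would note one bookkeeping point: an invocation that falls through to the else-branch still executes the separator at Line~\ref{lics:alg:symmec:sep}, but only until the forward and backward searches fail to reach depth $\gamma$, costing $O(\gamma)$ symbolic operations (each $\Pre{}{\cdot}$/$\Post{}{\cdot}$ is a single symbolic operation, so the cost is the number of layers, not the set sizes). These contributions belong to the small-diameter analysis of the else-branch, so here I restrict attention to invocations actually entering the $T\neq\emptyset$ branch, for which Lines~\ref{lics:alg:symmec:sep}--\ref{lics:alg:symmec:ret2} are executed in full; this is exactly the quantity the lemma bounds.
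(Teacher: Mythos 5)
There is a genuine gap, and it sits exactly where you flagged it: the claimed amortization for the incremental loop at Lines~\ref{lics:alg:symmec:while2}--\ref{lics:alg:symmec:ret2}. Your argument needs $\sum_{\text{iterations}}|S'| = O(|S|)$, justified by saying that vertices dropped via $Z=\Attr{R}{P}{\ROut{S'}}$ are ``separated off'' and that each vertex participates in a bounded number of searches before being collapsed. But the vertices of $Z$ are neither collapsed nor removed from the MDP; only $S'\setminus Z$ (when non-empty) is collapsed, and $Z$ itself stays intact for all later iterations. Consider a set $S$ in which, each time a separator vertex $v$ is re-inserted, the SCC $S'$ of $v$ in $S\setminus T$ has size $\Omega(|S|)$ but contains random vertices with edges to the not-yet-reinserted part of $T$, so that $\Attr{R}{P}{\ROut{S'}}$ swallows all of $S'$: then no collapse ever happens, Claim~\ref{lics:claim:non-trivialEC} is vacuously maintained, and essentially the same $\Omega(|S|)$ vertices are re-scanned in every one of the $|T|$ iterations. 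Hence the loop genuinely costs $\Theta(|T|\cdot|S|)$, which can be $\Theta\bigl(|S|^2/\lfloor\gamma/(2\log n)\rfloor\bigr)$, not $O(|S|)$; this is also why the paper's proof simply charges $O(|S|)$ per iteration rather than attempting any amortization.

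Once the per-invocation cost is corrected to $O(|T|\cdot|S|)$, your level-wise summation over the recursion tree no longer closes: within one antichain level the sets are disjoint and sum to at most $n$, but $\sum_I |T_I||S_I| \le \sum_I |S_I|^2/q$ can be as large as $n^2/q$ \emph{per level}, and with $O(n/q)$ levels this only yields $O(n^3/q^2)$. The paper avoids this by not summing level by level at all: it sets up the recurrence $F(S)\le c\,|T|\,|S| + \max_T \sum_{S_i\in\SCCs{S\setminus T}} F(S_i)$ and exploits the $q$-separator guarantee of Lemma~\ref{lics:lem:sepqual} that every child satisfies $|S_i|\le |S|-q|T|$; in the inductive step (Claim~\ref{lics:claim:fs}) the term $c'\,|S||T|$ saved by this size decrease exactly cancels the $c\,|S||T|$ cost of the incremental loop. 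So the separator property must enter the \emph{summation}, not merely the bound on $|T|$ and on the recursion depth; without that coupling the stated $O\bigl(n^2/\lfloor\gamma/(2\log n)\rfloor\bigr)$ bound does not follow.
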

\begin{proof}
	In Lemma~\ref{lics:lem:runspacesep} we proved that computing the separator at 
	Line~\ref{lics:alg:symmec:sep} takes $O(|S|)$ symbolic operations. 
	The same holds for computing the attractor at Line~\ref{lics:alg:symmec:attrSep} due to 
	Lemma~\ref{lics:lem:attrRunning} and computing the SCCs at Line~\ref{lics:alg:symmec:while1}. 
	Each iteration of the while-loop at Line~\ref{lics:alg:symmec:while2}
	takes $O(|S|)$ symbolic operations: Computing the SCC of $v$ twice can be done in $O(|S|)$ symbolic operations 
	due to Theorem~\ref{lics:thm:SCCs}. Similarly, computing the random attractor at 
	Line~\ref{lics:alg:symmec:attrIncr} is in $O(|S|)$ symbolic operations due to Lemma~\ref{lics:lem:attrRunning}.
	The remaining lines can be done in a constant amount of symbolic operations.
	It remains to bound the symbolic operations of the while-loop at Line~\ref{lics:alg:symmec:while1} where
	we call Algorithm~\ref{lics:alg:symmec} recursively at Line~\ref{lics:alg:symmec:rec1} for each 
	SCC in $S \setminus A$. Note that the size of $T$ determines the number of iterations the while-loop 
	at Line~\ref{lics:alg:symmec:while2} has, and how big the SCCs in $S \setminus A$ are. 
	We obtain that $|T|$ is of size at most
	$\frac{|S|}{\lfloor \gamma /(2 \log |S|)\rfloor}$ combining 
	Observation~\ref{lics:obs:SepSize} and Lemma~\ref{lics:lem:sepqual}.
	Due to the argument above the following equation bounds the 
	running time of Lines~\ref{lics:alg:symmec:sep}--\ref{lics:alg:symmec:ret2} for some constant $c$ which is greater than the number of constant symbolic operations in $\SymMec{$C,\gamma,P$}$ if $|S| \geq \gamma$.
	\begin{equation*}
		F(S)  \leq |T| \cdot |S| \cdot c + 
		\max_{|T| = 1 \dots \frac{|S|}{\lfloor(\gamma/2 \log{n}) \rfloor}} 
			\sum_{S_i \in \SCCs{S \setminus T}} F(S_i)
\end{equation*}
If $|S| < \gamma$ we only have the costs for computing the separator and thus $F(S) \leq c \cdot |S|$. 
 Next, we prove the bound in Claim~\ref{lics:claim:fs}.
 	\begin{claim}\label{lics:claim:fs}
 		$F(S) \in O\left(\frac{|S|^2}{\lfloor\gamma/(2 \log{|S|}) \rfloor} + |S|\right)$.
 	\end{claim}
 	\begin{proof}	
 		We prove the inequality by induction on the size of $S$.
 		That is we show $F(S) \leq \frac{|S|^2}{Z} c'$ where $Z = \lfloor\gamma/(2 \log |S|)\rfloor$ for some $c' > c$.
 		Obviously the inequality is true for $|S| < \gamma$, i.e., the base case is true.
 		For the inductive step, consider 
 		$F(S) \leq \frac{|S|^2}{Z}c'$ for $|S| \geq \gamma$.
  		Note that
 		\begin{align*}
 			\sum_{S_i \in \SCCs{S \setminus T}} F(S_i) 
 			\leq \sum_{S_i \in \SCCs{S \setminus T}} c' (|S_i|^2/(\lfloor\gamma/(2 \log |S_i|)\rfloor) + |S_i|)\leq&\\
 			c'|S| + \hspace{-5pt}\sum_{S_i \in \SCCs{S \setminus T}}\hspace{-5pt} c' |S_i|^2/Z \leq
 			c'|S| + \hspace{-5pt}\sum_{S_i \in \SCCs{S \setminus T}}\hspace{-5pt} c' |S_i| \cdot (|S|-|T|Z) /Z \leq&\\
 			c'|S| + c'(|S|-|T|)(|S|-|T|Z) /Z
 		\end{align*}
 		The first inequality is due to the induction hypothesis and the third inequality is 
 		due to the fact that we have a $Z$-separator and Lemma~\ref{lics:lem:sepqual}.
 		It remains to add $|S||T|c$:
 		\begin{align*}
 			F(S) \leq& c'|S| +  c'(|S|-|T|)(|S|-|T|Z) /Z + |S||T|c \leq\\
 				   &c'|S| + c'/Z (|S|^2 -|S||T|-|T||S|Z+|T||S|) + |S||T|c = \\
 				   &c'|S| + |S|^2/Zc' -|S||T|c' + |S||T|c \leq c'|S| + \frac{|S|^2}{Z}c'.
 		\end{align*}
 		The first inequality is due to the fact that $|T| \leq |S|/Z$ (Observation~\ref{lics:obs:SepSize}).
 		This concludes our proof by induction of $F(S) \leq c' \cdot (\frac{|S|^2}{Z} + |S|)$.
 	\end{proof}
	Now using that claim and $|S| \leq n$ we obtain a $O\left( n^2 / \lfloor\gamma/(2 \log{n}) \rfloor + n\right)$ bound which can further be simplified to $O\left(n^2/\lfloor\gamma/(2 \log{n}) \rfloor\right)$ by using $\gamma \leq n$.
\end{proof}

The second part of our analysis bounds the symbolic operations of the case when 
$\Diam{S} < 2\gamma$ at Lines~\ref{lics:alg:symmec:attrB}--\ref{lics:alg:symmec:rec2} and the work done from Line~\ref{lics:alg:symmec:initM} to Line~\ref{lics:alg:symmec:ret1}.

\begin{lemma}\label{lics:lem:running_rec3}
	The total number of symbolic operations of Lines~\ref{lics:alg:symmec:attrB}--\ref{lics:alg:symmec:rec2}
	in all calls to Algorithm~\ref{lics:alg:symmec} is in $O(n \cdot \gamma + \frac{n^2}{\lfloor \gamma /(2 \log n)\rfloor})$. 
\end{lemma}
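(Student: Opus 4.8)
The plan is to split the cost charged to Lines~\ref{lics:alg:symmec:attrB}--\ref{lics:alg:symmec:rec2} at a single else-branch call (where $T=\emptyset$, and hence $\Diam{S}<2\gamma$ by the two-directional BFS guarantee underlying Lemma~\ref{lics:lem:sepqual}) into the random-attractor computation $X=\Attr{R}{P}{\ROut{S}}$ and the decomposition $\SCCFind{$S\setminus X,\emptyset,P$}$; the recursive call at Line~\ref{lics:alg:symmec:rec2} is charged to the respective sub-call and counted there. First I would dispose of the attractor cost. By Lemma~\ref{lics:lem:attrRunning} computing $X$ takes $O(|X\setminus\ROut{S}|+1)$ symbolic operations and the auxiliary $\ROut{\cdot}$ is $O(1)$. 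Since the children of a call are disjoint strongly connected sets (Lemma~\ref{lics:lem:corr:0}), a vertex follows a unique downward path in the recursion, and once it is placed in some $X$ it is deleted and never reappears in a descendant (consistent with Lemma~\ref{lics:lem:attr_remove}, which guarantees such vertices lie in no EC). Hence each vertex enters an attractor $X$ for at most one call, so $\sum_{\text{else}}|X|\le n$ and all attractor computations together cost $O(n)$.

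The crux is the total SCC cost $\sum_{\text{else }S}\SCCFind{$S\setminus X$}$, which by Theorem~\ref{lics:thm:SCCs} I bound by $\sum_{\text{else }S}\sum_{C\in\SCCs{S\setminus X}}(D_C+1)$. I would first argue that the whole recursion tree has only $O(n)$ nodes: leaves (base cases and collapsed ECs) have pairwise disjoint nonempty vertex sets and hence number at most $n$; nodes with at least two children number at most the number of leaves; and each remaining single-child node deletes a nonempty attractor $A$ or $X$ whose vertices are globally distinct, so there are at most $n$ such nodes. Consequently the $+1$ terms sum to $O(n)$ (each non-trivial $C$ is a distinct node, and each vertex is a singleton SCC at most once). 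It then remains to bound $\sum_{\text{else }S}\sum_{C}D_C$, which I charge injectively to the child $C$ and thus bound by $\sum_{\text{nodes}}D_C$.

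For the diameter sum I split on the threshold $2\gamma$. Using the forward/backward BFS in \Separator{$\cdot$}, a node $C$ with $D_C\ge 2\gamma$ forces one of the two searches from its root to reach depth $\ge\gamma$, so Separator returns a nonempty set (Lemma~\ref{lics:lem:sepqual}) and $C$ is a \emph{separator} node; conversely, nodes with $D_C<2\gamma$ contribute at most $2\gamma$ each, and since there are $O(n)$ nodes in total this accounts for $O(n\gamma)$. For the separator nodes I would use $D_C\le|C|$ together with a layering of the recursion by separator-depth: the topmost separator nodes are pairwise non-nested, hence disjoint, and inductively the separator nodes of each fixed separator-depth are disjoint, so each depth contributes $\sum|C|\le n$; because every separator node shrinks its set by at least $q=\lfloor\gamma/(2\log n)\rfloor$ (the separator property with Observation~\ref{lics:obs:SepSize}), any path meets at most $n/q$ separator-depths, giving $\sum_{\text{sep}}D_C=O(n^2/q)$. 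Summing the three contributions yields $O(n\gamma+n^2/q)$, matching the claim and mirroring the recurrence used for Lemma~\ref{lics:lem:running_rec2}.

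The main obstacle is making the two amortizations airtight: (i) the $O(n)$ bound on the number of recursion nodes, which is exactly what converts the $O(\gamma)$ cost of each small-diameter SCC into the global $O(n\gamma)$ term; and (ii) the disjointness-per-separator-depth claim, which prevents the separator contribution from degrading to $O(n^2)$ and keeps it at $O(n^2/q)$. A naive per-node bound of $O(|S|)$ on the SCC cost is genuinely insufficient, since an else node may shrink $S$ by only one vertex; the saving must come from exploiting $\Diam{S}<2\gamma$ jointly with the global node count rather than from per-call progress, and care is needed that the diameter used to classify $C$ (at the start of its own call) agrees with the $D_C$ appearing in the parent's $\SCCFind{$\cdot$}$ cost, which holds because intervening \CollapseEC{} operations act only inside ECs of $C$ and can only decrease its diameter.
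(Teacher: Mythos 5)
Your proposal is correct and follows essentially the same approach as the paper's proof: charge the attractor computations to permanently deleted vertices for an $O(n)$ bound, bound each \SCCFind{} call via the diameter sum of Theorem~\ref{lics:thm:SCCs}, and split the SCCs at the $2\gamma$ threshold, with small-diameter SCCs costing $O(\gamma)$ each over $O(n)$ recursion nodes and large-diameter ones amortized through the $\lfloor \gamma/(2\log n)\rfloor$-separator shrinkage. Your per-layer ``separator-depth'' counting is the same double-counting as the paper's per-vertex charging (each vertex is charged $O(1)$ per large-diameter SCC and at most $n/\lfloor \gamma/(2\log n)\rfloor$ times), just summed in the other order.
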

\begin{proof}
	If a vertex is in the set $X$ at Line~\ref{lics:alg:symmec:attrB} it is not recursed upon or ever looked at again,
	thus we charge the symbolic operations of all attractor computations to the vertices in the attractor. 
	This adds up to a total of $O(n)$ symbolic operations by Lemma~\ref{lics:lem:attrRunning}.
	Additionally, note that $|X|$ is non-empty, as otherwise $|S|$ is declared as EC in the 
	if-condition at Line~\ref{lics:alg:symmec:if1}. Thus, Lines~\ref{lics:alg:symmec:attrB}--\ref{lics:alg:symmec:rec2}
	occur at most $n$ times.
	The number of symbolic operations for computing SCCs is in time $O(\sum_{C \in \SCCs{G}} (D_C +1))$ due to 
	Theorem~\ref{lics:thm:SCCs}. We can distribute the costs the SCCs according to the diameters of the SCCs.	
	We provide separate arguments for counting the costs for SCCs $S_j$ with  $\Diam{S_j} < 2 \gamma$ 
	and SCCs $S_j$ with  $\Diam{S_j} \geq 2 \gamma$.
    First, we consider the costs for SCCs $S_j$ with  $\Diam{S_j} < 2 \gamma$. 
    Computing the SCC $S_j$ costs $O(\gamma)$ operations and  
    the algorithm either terminates in the next step or at least one vertex is removed from the SCC via a separator or attractor.
    That is we have at most $n$ such SCCs computations and thus an overall cost of $O(n \gamma)$.
	Now we consider the costs for SCCs $S_j$ with  $\Diam{S_j} \geq 2 \gamma$. 
	Whenever computing such an SCC, we simply charge all its vertices for the costs of computing the SCC, 
	i.e., $O(1)$ for each vertex.
	For such an SCC the algorithm either terminates in the next step or a separator is computed and removed.
	We thus have that each vertex is charged again after at least $\lfloor \gamma /(2 \log |S|)\rfloor$
	many nodes are removed from its SCC\@. In total, each vertex is charged at most $\frac{|S|}{\lfloor \gamma /(2 \log |S|)\rfloor}$ many times.
	We get an $O(|S|\frac{|S|}{\lfloor \gamma /(2 \log |S|)\rfloor})$ upper bound for the SCCs with large diameter,
\end{proof}

Putting Lemma~\ref{lics:lem:running_rec2} and Lemma~\ref{lics:lem:running_rec3} together,
we obtain the $O(n \cdot \gamma + \frac{n^2}{\lfloor \gamma/(2\log n) \rfloor})$ bound for Algorithm~\ref{lics:alg:symmec}
which  also applies to Algorithm~\ref{lics:alg:metasymmec} as the SCC-computations only require $O(n)$ operations.
\begin{proposition}\label{lics:prop:symmec:time}\label{lics:prop:metasymmec:time}
	Algorithm~\ref{lics:alg:symmec} and Algorithm~\ref{lics:alg:metasymmec} both have $O(n \cdot \gamma + \frac{n^2}{\lfloor \gamma/(2\log n) \rfloor})$ symbolic operations.
\end{proposition}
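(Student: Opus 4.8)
The plan is to assemble the bound from the two running-time lemmas already in hand and then account for the few lines of $\SymMec{$\cdot$}$ that neither lemma covers, finally observing that the outer procedure contributes only an additive $O(n)$. First I would recall that Lemma~\ref{lics:lem:running_rec2} bounds the total number of symbolic operations spent in Lines~\ref{lics:alg:symmec:sep}--\ref{lics:alg:symmec:ret2} over \emph{all} recursive calls by $O\!\left(n^2/\lfloor \gamma/(2\log n)\rfloor\right)$, and Lemma~\ref{lics:lem:running_rec3} bounds the total for Lines~\ref{lics:alg:symmec:attrB}--\ref{lics:alg:symmec:rec2} by $O\!\left(n\gamma + n^2/\lfloor\gamma/(2\log n)\rfloor\right)$. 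Summing these two already yields the claimed bound for the union of those line ranges, so the whole argument reduces to showing that the remaining lines cost no more.

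The only lines of $\SymMec{$\cdot$}$ outside those ranges are the initialization in Line~\ref{lics:alg:symmec:initM}, the size test in Line~\ref{lics:alg:symmec:ret0}, and the end-component test, collapse, and return in Lines~\ref{lics:alg:symmec:if1}--\ref{lics:alg:symmec:ret1}. Each invocation performs only a constant number of symbolic operations here: the test $\ROut{S}=\emptyset$ is one $\Pre{}{\cdot}$ plus a constant number of basic set operations, and each $\CollapseEC{$\cdot$}$ (Algorithm~\ref{lics:alg:CollapseEC}) uses $O(1)$ basic set operations. To bound the accumulated per-call work I would use a charging argument: every recursive invocation $\SymMec{$S_j,\gamma,P$}$ arises from exactly one iteration of a $\SCCFind{$\cdot$}$ loop (Line~\ref{lics:alg:symmec:while1} or Line~\ref{lics:alg:symmec:while3}, or the top-level loop in Line~\ref{lics:alg:metasymmec:while1}), so calls are in one-to-one correspondence with discovered SCCs. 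By Theorem~\ref{lics:thm:SCCs} the cost of computing the SCCs of a set is at least their number, so the constant per-call work can be charged to $\SCCFind{$\cdot$}$ operations that are themselves already counted inside the ranges covered by the two lemmas (together with the $O(n)$ top-level SCC computation). Independently, since every collapsed end-component is nontrivial, each $\CollapseEC{$\cdot$}$ reduces the number of active vertices by at least one, whence there are at most $n$ collapses overall, contributing $O(n)$. Thus the leftover work stays within the stated bound and $\SymMec{$\cdot$}$ runs in $O\!\left(n\gamma + n^2/\lfloor\gamma/(2\log n)\rfloor\right)$ symbolic operations.

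For Algorithm~\ref{lics:alg:metasymmec} I would then note that its two $\SCCFind{$\cdot$}$ loops (Lines~\ref{lics:alg:metasymmec:while1} and~\ref{lics:alg:metasymmec:while2}) each compute the SCC decomposition of a subset of $V$ exactly once, costing $O(n)$ symbolic operations by Theorem~\ref{lics:thm:SCCs}, while all other work lies inside the $\SymMec{$\cdot$}$ calls bounded above; summing gives the same asymptotic bound. I expect the main obstacle to be precisely the bookkeeping of the uncounted lines — in particular arguing that the repeated per-call $\ROut{S}$ tests do not accumulate beyond the target bound, and that collapsing is invoked only $O(n)$ times — and the charging-to-$\SCCFind{$\cdot$}$ observation, combined with the nontriviality of collapsed components, is what makes this rigorous; the remainder is a direct addition of the two lemma bounds.
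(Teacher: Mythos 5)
Your proposal is correct and follows essentially the same route as the paper: the paper's proof simply combines Lemma~\ref{lics:lem:running_rec2} and Lemma~\ref{lics:lem:running_rec3} and notes that Algorithm~\ref{lics:alg:metasymmec} adds only $O(n)$ symbolic operations for its SCC computations. Your extra charging argument for the per-call constant work (the $\ROut{S}$ test, collapses, and initialization) is more explicit than the paper, which folds that work into Lemma~\ref{lics:lem:running_rec3} without separate accounting, but it is the same decomposition and the same conclusion.
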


\subsection{Symbolic Space} 
Symbolic space usage counted as the maximum number of sets (and not their size) at any point in time
is a crucial metric and limiting factor of symbolic computation in practice~\cite{ClarkeBook}.
In this section, we consider the symbolic space usage of Algorithm~\ref{lics:alg:symmec}, highlight a key issue, and present a solution to the issue. 

\para{Key Issue.}
Even though Algorithm~\ref{lics:alg:metasymmec} beats the current best \emph{symbolic} Algorithm for computing
the MEC decomposition in the number of symbolic operations (current best: $O(n \sqrt{m})$, space: $O(\sqrt{n})$~\cite{CHLOT18}),
without further improvements, Algorithm~\ref{lics:alg:symmec} requires $O(n)$ symbolic space as we discuss in the following.
First, note that each call of $\SymMec{$S,\gamma,P$}$, when excluding the sets stored by recursive calls, only stores a constant number of sets and requires a logarithmic number of sets to execute $\SCCFind{$\cdot$}$.
That is, the recursion depth is the crucial factor here. 
As we show below, by the $\lfloor \gamma/(2\log n)\rfloor $-separator property, the recursion depth due to the case $T\not=\emptyset$ is  $O\left( n / \lfloor \gamma /(2\log n)\rfloor\right)$.
However, the case $T=\emptyset$ might lead to a recursion depth of $O(n)$ when 
in each iteration only a constant number of vertices is removed
and the diameter of the resulting SCC is still smaller than $2\gamma$.
As Algorithm~\ref{lics:alg:symmec} uses a constant amount of sets for each recursive call, 
it uses $\O(n)$ space in total.

\para{Reducing the symbolic space.}
We resolve the above space issue by modifying Algorithm~\ref{lics:alg:symmec} for the case $T=\emptyset$ (see Algorithm~\ref{lics:alg:symmec_reduced_space}.):
At the while loop at Line~\ref{lics:alg:symmec:while3} we first consider the SCCs $S_j$ 
with less than $|S|/2$ vertices and recurse on them.
If there is an SCC $S'$ with more than $|S|/2$ vertices we process it at the end, i.e., we use one additional set to store that SCC until the SCC algorithm terminates.
As now all the computations of the current call to $\SymMec{$S,\gamma,P$}$ are done we can 
simply reuse the sets of the current calls to start the computation for $S'$. 
We do so by setting $S$ to $S'$ and continuing in the Line~1 of Algorithm~\ref{lics:alg:symmec}.
Using that we only recurse on sets which are of size $\leq |S|/2$
and thus get a recursion depth of $O(\log n)$ for this case.
Moreover, the modified algorithm has the same computation operations as the original one and thus 
the bounds for the number of symbolic operations apply as well. 

\begin{algorithm}[t]
\small
\footnotesize
	\setcounter{AlgoLine}{22}
	\Else(\tcp*[h]{$T = \emptyset$ and thus  $\Diam{S} < 2\gamma$}){
		$X \gets \Attr{R}{P}{\ROut{S}}$; $S_{tmp} = \emptyset$\;\label{lics:alg:symmec_reduced_space:attrB}
		\While{$S_j \gets \SCCFind{$S\setminus X, \emptyset,P$}$}{\label{lics:alg:symmec_reduced_space:while3}
			\lIf{$S_j \geq |S|/2$ and $|S_j|>1$}{$S_{tmp} \gets S_j$; \Continue} 
			$M \gets M \cup \SymMec{$S_j,M, \gamma,P$}$\label{lics:alg:symmec_reduced_space:rec2}
		}
		\lIf{$S_{tmp} \neq \emptyset$}{$S\gets S_{tmp}$; goto Line~3}
		\Return{$M$}\;\label{lics:alg:symmec_reduced_space:ret3}
	}
	\caption{Reduced Space Version of Algorithm~\ref{lics:alg:symmec}}\label{lics:alg:symmec_reduced_space}
\end{algorithm}

\begin{lemma}\label{lics:lem:recursiondepth}
	\begin{sloppypar}
		The maximum recursion depth of the modified algorithm is in  
		$O(\frac{n}{\lfloor \gamma /(2\log n)\rfloor} + \log n)$.
	\end{sloppypar}
\end{lemma}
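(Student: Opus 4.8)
The plan is to analyze the tree of \emph{pushed} stack frames generated by the modified $\SymMec{$\cdot$}$ procedure (Algorithm~\ref{lics:alg:symmec_reduced_space}) and to bound the length of the longest root-to-leaf path in it, since the tail-recursive \texttt{goto} in the $T=\emptyset$ branch reuses the current stack frame and therefore contributes nothing to the recursion depth. First I would set $q = \lfloor \gamma/(2\log n)\rfloor$ and work in the relevant regime $q \ge 1$. To each frame I associate its \emph{starting size} $|S_0|$, the cardinality of the argument $S$ when the frame is first entered. Within a single frame the \texttt{goto} produces a chain $S_0, S_1, \dots, S_\ell$, where each $S_{i+1}$ is the unique large SCC (of size $\ge |S_i|/2$) of $S_i \setminus X$ arising in a $T=\emptyset$ step; since $X = \Attr{R}{P}{\ROut{S_i}}$ is nonempty in that case (otherwise $S_i$ would already be reported as an EC at Line~\ref{lics:alg:symmec:if1}), this chain is strictly size-decreasing, hence finite, and in particular every $S_i$ satisfies $|S_i| \le |S_0|$.

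Next I would classify every parent--child edge of the frame tree into two kinds, according to how the child's starting set is produced. A child frame is created either by the separator recursion (Line~\ref{lics:alg:symmec:rec1}, the $T\neq\emptyset$ branch applied to some $S_i$) or by the small-SCC recursion (Line~\ref{lics:alg:symmec_reduced_space:rec2}, the $T=\emptyset$ branch). In the first case $T$ is a $\lfloor\gamma/(2\log|S_i|)\rfloor$-separator by Lemma~\ref{lics:lem:sepqual}, and since $A \supseteq T$ with $|T|\ge 1$, every SCC of $S_i\setminus A$ has at most $|S_i| - \lfloor\gamma/(2\log|S_i|)\rfloor \le |S_i| - q$ vertices; using $|S_i|\le|S_0|$ for the parent, the child's starting size is at most the parent's starting size minus $q$, and I call this a \emph{subtraction edge}. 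In the second case the recursion is issued only on SCCs of size strictly less than $|S_i|/2 \le |S_0|/2$, so the child's starting size is below half the parent's, and I call this a \emph{halving edge}.

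Finally I would bound the two edge-counts along an arbitrary root-to-leaf path with starting sizes $n = N_0 \ge N_1 \ge \dots \ge N_k \ge 1$. If the path contains $h$ halving edges then, telescoping the size bounds (subtraction edges only decrease sizes further, by a factor $\le 1$), we obtain $N_k \le n/2^{h}$, and $N_k \ge 1$ forces $h \le \log_2 n$. If it contains $s$ subtraction edges then, since each such edge contributes a drop of at least $q$ to the total telescoped decrease $N_0 - N_k \le n$, we get $s\,q \le n$, i.e. $s \le n/q = n/\lfloor\gamma/(2\log n)\rfloor$. Hence the depth is $k = h + s = O\!\left(\log n + n/\lfloor\gamma/(2\log n)\rfloor\right)$, as claimed. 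The main obstacle, and the step I would be most careful about, is the bookkeeping around the \texttt{goto}: I must simultaneously argue that the entire chain $S_0,\dots,S_\ell$ resides in one frame (so it adds nothing to the depth) and verify that every recursive call issued anywhere along that chain still obeys the halving or subtraction inequality relative to the frame's \emph{starting} size $N_i = |S_0|$, which is exactly what the uniform bound $|S_j|\le|S_0|$ guarantees, keeping the per-path counting argument valid.
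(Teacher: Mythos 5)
Your proof is correct and takes essentially the same route as the paper's own proof: separator recursions each shrink the vertex set by at least $\lfloor \gamma/(2\log n)\rfloor$, so at most $n/\lfloor \gamma/(2\log n)\rfloor$ of them can occur on any path, while the small-SCC recursions of the modified algorithm at least halve the size, giving $O(\log n)$, and the tail-recursive \emph{goto} contributes nothing to the depth. Your frame-tree and telescoping bookkeeping simply makes rigorous the paper's implicit step that the two counts add along a single recursion path.
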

 \begin{proof}
     Consider the recursion occurring due to Line~\ref{lics:alg:symmec:rec1}.
 	Because $T$ is a $\lfloor \gamma/(2\log n)\rfloor$-separator, an SCC in $S \setminus T$
 	contains at most $n-\lfloor \gamma/(2\log n)\rfloor \cdot |T| \leq n-\lfloor \gamma/(2\log n)\rfloor$
 	 vertices as $|T| \geq 1$. We determine how often we can remove
 	$\lfloor \gamma/(2\log n)\rfloor$ from $n$ until there are no vertices 
 	left which gives the recursion depth $k$.  It follows that $k=
 	\frac{n}{\lfloor \gamma /(2\log n)\rfloor}$.
 	Now consider the recursion occurring due to Line~\ref{lics:alg:symmec:rec2}. In the modified version we have that $|S_j| < |S|/2$ 
 	and thus this kind of recursion is bounded by $O(\log n$).
 \end{proof}

Algorithm~\ref{lics:alg:symmec} has $O(n\gamma + n^2/\lfloor \gamma/(2\log n) \rfloor)$ many symbolic
operations due to Lemma~\ref{lics:lem:running_rec2} and Lemma~\ref{lics:lem:running_rec3} and the number of sets is
in $O(n/\lfloor \gamma /(2\log n)\rfloor + \log n)$ due to Lemma~\ref{lics:lem:recursiondepth}.
In symbolic algorithms it is of particular interest to optimize symbolic space resources. Note that we obtain
a space-time trade-off when setting the parameter $\gamma$ such that $(2\sqrt{n}+ 2) \log n \leq \gamma \leq (2n +1) \log n$.

For the symbolic space of Algorithm~\ref{lics:alg:metasymmec} notice that the SCC algorithms are in logarithmic symbolic space and the algorithm itself only needs to stores the set $M$ and the current SCC\@. 
Thus, when we immediately output the computed MECs it only requires $O(\log n)$ additional space.

\begin{theorem}\label{lics:thm:symbolicmecTST}
	The MEC decomposition of an MDP can be computed in $O\left(n^{2-\epsilon} \log n\right)$
	symbolic operations and with symbolic space $O\left(n^{\epsilon} \log n \right)$ for $0 < \epsilon \leq 0.5$.
\end{theorem}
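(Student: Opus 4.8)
The plan is to derive the claimed trade-off as a direct consequence of the correctness and resource analyses already in place, by choosing the separator parameter $\gamma$ appropriately. Correctness needs no new argument: Proposition~\ref{lics:prop:symmec:corr} shows that Algorithm~\ref{lics:alg:symmec} returns exactly the vertices lying in non-trivial MECs of a strongly connected input, and hence Algorithm~\ref{lics:alg:metasymmec} returns the non-trivial MEC decomposition. The space-reduced variant (Algorithm~\ref{lics:alg:symmec_reduced_space}) only reorders the recursive calls of the $T=\emptyset$ branch and performs the identical set of symbolic computations, so it inherits both correctness and the symbolic-operation count.

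Next I would fix the notation $q := \lfloor \gamma/(2\log n)\rfloor$ and recall the two quantitative facts. By Proposition~\ref{lics:prop:metasymmec:time} the number of symbolic operations is $O(n\gamma + n^2/q)$, and since $\gamma = \Theta(q\log n)$ this is $O(n q\log n + n^2/q)$. For the space, Lemma~\ref{lics:lem:recursiondepth} bounds the recursion depth of the space-reduced algorithm by $O(n/q + \log n)$; as each recursion frame stores only a constant number of its own sets together with the $O(\log n)$ sets needed by the SCC subroutine of Theorem~\ref{lics:thm:SCCs}, and as Algorithm~\ref{lics:alg:metasymmec} itself retains only $M$ and the current SCC while streaming out finished MECs, the total symbolic space is $O\big((n/q + \log n)\log n\big)$.

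I would then set $\gamma := 2\,n^{1-\epsilon}\log n$, so that $q = \Theta(n^{1-\epsilon})$; this keeps $\gamma$ inside the admissible window $(2\sqrt{n}+2)\log n \le \gamma \le (2n+1)\log n$ precisely for $0 < \epsilon \le 1/2$, and in particular guarantees $q \ge 1$ so that Lemma~\ref{lics:lem:sepqual} and Observation~\ref{lics:obs:SepSize} apply. Substituting gives symbolic operations $O(n^{2-\epsilon}\log n + n^{1+\epsilon})$ and symbolic space $O(n^{\epsilon}\log n + \log^2 n)$. The final step is the arithmetic check that these collapse to the stated bounds when $\epsilon \le 1/2$: there $n^{1+\epsilon} \le n^{3/2} \le n^{2-\epsilon}\log n$, so the operation bound is $O(n^{2-\epsilon}\log n)$, and $\log^2 n = O(n^{\epsilon}\log n)$, so the space bound is $O(n^{\epsilon}\log n)$.

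The genuinely delicate ingredient is not this final tuning but the recursion-depth bound of Lemma~\ref{lics:lem:recursiondepth} that underlies it: the $T=\emptyset$ branch could naively recurse to depth $\Theta(n)$ (removing only a constant number of vertices per level while the diameter stays below $2\gamma$), and it is exactly the tail-recursive restructuring of Algorithm~\ref{lics:alg:symmec_reduced_space} --- recursing only on SCCs of size at most $|S|/2$ and reusing the current frame for the single large SCC --- that caps this branch at depth $O(\log n)$. Verifying that this restructuring leaves the symbolic-operation count unchanged (it performs the same computations, merely in a different order) is the one place where I would be careful, since it is what makes the $O(n)$-space barrier fall while preserving the operation bound.
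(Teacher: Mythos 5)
Your proposal is correct and follows essentially the same route as the paper: correctness via Proposition~\ref{lics:prop:symmec:corr}, the operation bound $O(n\gamma + n^2/\lfloor\gamma/(2\log n)\rfloor)$ from Proposition~\ref{lics:prop:metasymmec:time}, the recursion-depth bound of Lemma~\ref{lics:lem:recursiondepth} for the tail-recursive variant, and then the choice $\gamma = \Theta(n^{1-\epsilon}\log n)$ within the admissible window. Your write-up in fact makes explicit the final substitution and the observation that $\epsilon \le 1/2$ is exactly what keeps the $n\gamma$-term (i.e.\ $n^{1+\epsilon}$) below $n^{2-\epsilon}\log n$, which the paper leaves implicit.
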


By setting $\epsilon = 0.5$ we obtain 
that the MEC decomposition of an MDP can be computed in $\O(n\sqrt{n})$ symbolic operations and with symbolic space $\O(\sqrt{n})$.


\section{Symbolic Qualitative Analysis of Parity Objectives}
In this section, we present symbolic algorithms for the qualitative analysis of parity objectives. 

\begin{theorem}[\cite{CY95,deAlfaroThesis}]
	\begin{sloppypar}
		For all MDPs $P$, and all parity objectives $\parity{p}$, there exists a memoryless strategy 
		$\sigma$ such that for all $v \in \ASW{\parity{p}}$ we have $\Pr_v^\sigma(\parity{p}) = 1$.
	\end{sloppypar}
\end{theorem}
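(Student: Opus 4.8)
The plan is to reduce the existence of a single memoryless almost-sure winning strategy to two ingredients that are each known to admit memoryless strategies: (a) a memoryless strategy \emph{inside} each ``winning'' end-component that keeps the play there forever and satisfies the parity condition almost surely, and (b) a memoryless almost-sure reachability strategy that drives every vertex of $\ASW{\parity{p}}$ to the union of these winning end-components. First I would recall the standard characterization $\ASW{\parity{p}} = \ASW{\reach{W}}$, where $W$ is the union of all maximal winning end-components; the theorem then follows by gluing the two memoryless strategies along the partition of the decision vertices into ``inside $W$'' and ``outside $W$''.

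The structural backbone is the fundamental end-component lemma for MDPs (Courcoubetis--Yannakakis, de~Alfaro): for every strategy $\sigma$ and every starting vertex, with probability $1$ the set $\Inf{\omega}$ of vertices visited infinitely often is an end-component, and conversely, within any end-component $C$ there is a memoryless strategy under which $\Inf{\omega} = C$ almost surely (cycle through a fixed spanning sub-structure of $C$; this is legitimate because random vertices in an EC have all out-edges inside $C$). Using this I would define an end-component $C$ to be \emph{winning} if player~1 has a memoryless strategy that never leaves $C$ and makes the least priority visited infinitely often even with probability~$1$, and then characterize winning ECs recursively on the number of priorities. Let $i$ be the least priority occurring in $C$: if $i$ is even, the ``visit everything'' strategy above already wins, since then $\min\{p(v)\mid v\in\Inf{\omega}\}=i$ almost surely. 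If $i$ is odd, I would remove the random attractor $\Attr{R}{P}{Z}$ of the priority-$i$ set $Z$ from $C$ (these vertices lie in no winning sub-EC, by the attractor-removal argument analogous to Lemma~\ref{lics:lem:attr_remove}), recurse on the sub-MDP induced by the remaining strongly connected pieces, which have strictly fewer priorities, and combine the recursively obtained memoryless strategies.

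For the reachability ingredient I would invoke the fact that almost-sure reachability in MDPs is achievable by a memoryless strategy, applied to target set $W$. The single global strategy $\sigma$ is then defined vertex-wise: on a player-1 vertex lying in some winning EC, $\sigma$ follows that EC's in-EC strategy; on every other vertex of $\ASW{\parity{p}}$ it follows the memoryless reachability strategy to $W$. Because the in-EC strategies are \emph{trapping} (they never leave their EC) and the reachability strategy reaches $W$ almost surely, every play from $v\in\ASW{\reach{W}}$ almost surely enters some winning EC and stays there, hence satisfies $\parity{p}$ with probability~$1$. Soundness of the characterization (the $\subseteq$ direction) comes from the converse of the end-component lemma: from any vertex outside $\ASW{\reach{W}}$, every strategy reaches, with positive probability, an end-component that is not winning, where by definition the least infinitely-visited priority is odd with positive probability and the objective fails.

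The part I expect to be the main obstacle is keeping the whole construction \emph{memoryless}. Two points need care. First, the recursive in-EC argument must yield one memoryless strategy on all of $C$, not a family of mode-switching strategies; this is where the attractor structure is essential, since the attractor region is handled by a memoryless attractor strategy toward $Z$ and the residual pieces by the recursion, and their vertex-domains are disjoint. Second, when gluing (a) and (b) I must verify that no vertex is assigned two conflicting moves and that entering a winning EC under $\sigma$ is irrevocable, so that the asymptotic behaviour is governed entirely by the in-EC strategy; the trapping property together with disjointness of domains delivers exactly this. Establishing these disjointness and trapping invariants, rather than any probabilistic estimate, is the crux of the proof.
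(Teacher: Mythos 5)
The paper never actually proves this theorem: it is imported as a known result from \cite{CY95,deAlfaroThesis}, and the only related ingredient the paper itself uses is the characterization $\ASW{\parity{p}}=\ASW{\reach{\WE}}$ (Lemma~\ref{lics:lem:paritywinningmec}, cited from \cite{CH11}). So your proposal has to be judged against the standard literature argument, and its architecture --- characterize winning end-components, build a trapping memoryless strategy inside each, and glue these with a memoryless almost-sure reachability strategy toward their union --- is exactly that argument. Your recursion on the least priority with random-attractor removal is also sound, and it in fact resolves the disjointness worry you raise yourself: the recursion stops inside ECs whose least priority is even and recurses only on pairwise disjoint residual pieces, so the winning ECs it outputs are pairwise disjoint and the gluing conflict cannot arise.

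There is, however, one genuinely false step under the paper's definitions. You claim that inside any end-component $C$ there is a \emph{memoryless} strategy under which $\Inf{\omega}=C$ almost surely (``cycle through a fixed spanning sub-structure''). The paper's strategies are deterministic functions $\sigma\colon V^*\cdot V_1\mapsto V$, and for pure memoryless strategies this claim fails: take $C=\{u,v,w\}$ with all three vertices player-1 vertices and edges $(u,v),(u,w),(v,u),(w,u)$. Any memoryless $\sigma$ fixes a single successor of $u$, so $\Inf{\omega}$ is $\{u,v\}$ or $\{u,w\}$, never all of $C$. Cycling through a spanning closed walk needs memory, and achieving $\Inf{\omega}=C$ without memory needs randomization --- but then you would only produce a \emph{randomized} memoryless strategy, not the deterministic memoryless strategy the theorem asserts. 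The repair is standard and local: in a winning EC whose least priority $i$ is even you do not need to visit everything. Take instead the pure memoryless almost-sure-reachability strategy toward the set of priority-$i$ vertices, restricted to edges inside $C$ (random vertices cannot leave $C$ anyway). In the induced Markov chain the priority-$i$ set is visited infinitely often with probability $1$ and the play never leaves $C$, so the least priority seen infinitely often is $i$, which is even. With that substitution, the remainder of your proof (the gluing, the non-increasing assignment of plays to ECs, and the converse direction via the forward end-component lemma) goes through.
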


\subsection{Almost-sure Reachability.}
In this section, we present a symbolic algorithm that computes reachability 
objectives $\reach{T}$ in an MDP\@. 
The algorithm is a symbolic version of~\cite[Theorem 4.1]{CDHL16}.

\smallskip\noindent\emph{Symbolic Graph Reachability.}
For a graph $G = (V,E)$ and a set of vertices $S$, the set 
$\GraphReach{S,G}$ is the set of vertices $V$ that can reach a vertex of $S$ within $G$. We compute it 
by repeatedly calling $S \gets \Pre{E}{S}$ until we reach a fixed point.
In the worst case, we add one vertex in each such call and need $|\GraphReach{S,G} \setminus S| +1
= O(n)$ many \Pre{E}{\cdot} operations to reach a fixed point.

\smallskip\noindent\emph{Algorithm Description.}
Given an MDP $P$ we compute the set $\ASW{\reach{T,P}}$ as follows:
First, if player~1 can reach one vertex of a MEC he can reach all the vertices of a MEC and thus 
we can collapse each MEC $M$ to a player-1 vertex.
If $M$ contains a vertex of $T$, we include the collapsed vertex into $T$. 
$P' = (V',E',\langle V'_1,V'_2 \rangle, \delta')$ is the MDP where the MECs of $P$ are collapsed as described above.
Next, we compute the set of vertices $S$ which can reach $T$ in the graph induced by $P'$. 
A vertex in $V' \setminus S$ cannot reach $T$ almost-surely because there is no path to $T$.
Note that a play starting from a vertex in the random attractor $A$ of $V' \setminus S$ might also end up in $V' \setminus S$. We
thus remove $A$ from $V'$ to obtain the set $R$, where vertices can almost-surely
reach $T$ in $P'$. Finally, to transfer the result back to $P$ we include all MECs with a vertex in $R$.

We implement Algorithm~\ref{lics:alg:metasymmec} to compute the MEC decomposition of $P$ but note that we could use any symbolic MEC algorithm. 
To minimize the extra space usage we also assume that the algorithm that computes the
MEC decomposition outputs one MECs after the
other instead of all MECs at once. 
Note that we can easily modify Algorithm~\ref{lics:alg:metasymmec} to output one MEC after the
other by iteratively returning each SCC found at Line~\ref{lics:alg:metasymmec:while2}. 
Moreover, we only require logarithmic space to maintain the state of the SCC
algorithm~\cite{ChatterjeeDHL18}.
\SetKwFunction{SymASReach}{SymASReach}
\begin{algorithm}
\small
\footnotesize
	\KwIn{An MDP $P = (V,E, \langle V_1, V_R \rangle, \delta)$ and a target set $T \subseteq V$}
	\KwOut{$\ASW{\reach{T,P}}$}
	\Procedure{\SymASReach{$T,\gamma,P$}}{
		$V' \gets V, E' \gets E, V'_1 \gets V_1, V'_R \gets V_R,  \delta' \gets \delta$\;
		$P'=(V',E', \langle V'_1, V'_R \rangle, \delta')$\;
		\For{$M \gets \mathsf{ComputeMECs}(P)$}{ 
			$C \gets \CollapseEC{$M,P'$}$\; 
			\lIf{$M \cap T \neq \emptyset$}{$T \gets T \cup C$}
		}
		$S \gets \GraphReach{T,P'}$; $A \gets \Attr{R}{P'}{V'\setminus S}$; $R \gets V' \setminus A$\;

		\For{$M \gets \mathsf{ComputeMECs}(P)$}{
			\lIf(\tcp*[h]{If $v \in M$ can reach $T$, $M$ can reach $T$.}){$M \cap R \neq \emptyset$}{
				$R \gets R \cup M$	
			}
		}

		\Return{$R$}\;
	}
	
	\caption{Symbolic Almost-Sure Reachability}\label{lics:alg:symasreach}
\end{algorithm}

We prove the following two propositions for Algorithm~\ref{lics:alg:symasreach}.
Let $P$ be an MDP, $T$ a set of vertices and $\mathbf{MEC}$ 
be the number of symbolic operations we need to compute the MEC decomposition.
Let $\aspace(\mathbf{MEC})$ denote the space of computing the MEC decomposition.
\begin{sloppypar}
\begin{proposition}[Correctness~\cite{ChatterjeeDHL18}]\label{lics:prop:asreachcorr}
	Algorithm~\ref{lics:alg:symasreach} computes the set $\ASW{\reach{T}}$.
\end{proposition}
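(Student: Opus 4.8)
The cleanest route is to establish that Algorithm~\ref{lics:alg:symasreach} is a faithful symbolic implementation of the known-correct procedure of~\cite[Theorem 4.1]{CDHL16}: each symbolic primitive it invokes (MEC computation, $\CollapseEC{}{}$, $\GraphReach{\cdot}$, random attractor) computes exactly the abstract object used in that result, so correctness reduces to verifying the three conceptual steps---MEC collapsing, the characterization of almost-sure reachability in the MEC-free quotient $P'$, and the transfer back to $P$. The plan is to prove each step as a separate claim and then chain them. The one genuinely new ingredient over~\cite{CDHL16} is that the MEC computation and collapsing are done symbolically, but by Proposition~\ref{lics:prop:symmec:corr} the MECs are computed correctly and by Lemma~\ref{lics:lem:collapse} the collapse operation preserves exactly the end-component structure outside the collapsed set, so the abstract correspondence is intact.

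\textbf{MEC collapsing.} First I would show that for every MEC $M$ of $P$ and its representative $C = \CollapseEC{M}{P'}$, a vertex $v \in M$ lies in $\ASW{\reach{T,P}}$ iff $C \in \ASW{\reach{T,P'}}$. The key fact (stated for MECs in the preliminaries) is that inside a MEC player~1 can almost-surely reach every vertex from every vertex; hence if $M \cap T \neq \emptyset$ then all of $M$ is almost-sure winning in $P$, which justifies adding $C$ to $T$. Combining this with Lemma~\ref{lics:lem:collapse}(a)--(c), which guarantees the end-component decomposition outside $M$ is unchanged and that ECs meeting $M$ correspond to ECs of $P'$ containing $C$, gives the required bijective correspondence between almost-sure winning in $P$ and in the quotient $P'$. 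Since $P'$ has only trivial MECs after all collapses, the remaining analysis takes place in a MEC-free MDP.

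\textbf{Characterization in the MEC-free quotient (main obstacle).} This is the hard part. I would partition $V'$ into three regions and characterize each. (i)~Vertices in $V' \setminus S$, where $S = \GraphReach{T,P'}$, have no path to $T$ at all, so they are trivially not almost-sure winning. (ii)~Vertices in $A = \Attr{R}{P'}{V' \setminus S}$ but not already in $V'\setminus S$: by definition of the random attractor, from such a vertex the transition structure forces the play into $V' \setminus S$ with positive probability regardless of player~1's choices, so $T$ cannot be reached with probability~$1$. (iii)~Vertices in $R = V' \setminus A$: here I must show player~1 \emph{can} win almost surely. The crucial structural observation is that $R$ is closed---every random vertex in $R$ has all successors in $R$ (else it would be in $A$) and every player~1 vertex in $R$ retains at least one successor in $R$---and moreover $R \subseteq S$, so from every vertex of $R$ there is a path to $T$ of length at most $n$ that stays in $R$. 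The argument then runs: let player~1 follow a distance-decreasing strategy toward $T$ inside $R$; since the play stays in $R$ and each such path realizes with probability at least $p_{\min}^{n}$ (with $p_{\min}$ the least positive transition probability), the probability of avoiding $T$ for $kn$ steps is at most $(1 - p_{\min}^{n})^{k} \to 0$. Absence of non-trivial end-components in $P'$ is exactly what guarantees that no closed ``trap'' can keep the play away from $T$ forever, so $R = \ASW{\reach{T,P'}}$. I expect verifying this bounded-progress argument---and in particular that the chosen path genuinely stays within the closed set $R$---to be the delicate technical point.

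\textbf{Transfer back.} Finally, the second loop reintroduces every MEC $M$ of $P$ with $M \cap R \neq \emptyset$. Using once more that player~1 almost-surely reaches any vertex of a MEC from any other, a vertex $v$ of such an $M$ reaches the almost-sure-winning representative, hence reaches $T$ almost surely in $P$; conversely any vertex of $P$ almost-surely winning maps under collapse into $R$ by the correspondence of the first step. Combining the three characterizations with the MEC correspondence yields $R = \ASW{\reach{T,P}}$, which is the claim. The whole argument mirrors~\cite[Theorem 4.1]{CDHL16} and~\cite{ChatterjeeDHL18}, so I would cite those for the parts that are purely a restatement of the established non-symbolic correctness.
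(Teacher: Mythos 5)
The paper gives no proof of this proposition at all: it is stated with a citation to~\cite{ChatterjeeDHL18}, and the surrounding text merely describes the algorithm as a symbolic rendering of~\cite[Theorem 4.1]{CDHL16}. Your attempt at a self-contained argument therefore necessarily goes beyond the paper, and its skeleton (collapse MECs via Lemma~\ref{lics:lem:collapse}, three-way case analysis on $V'\setminus S$, $A$, and $R$, transfer back to $P$) is the right one --- but it contains a genuine gap in case~(ii). From $v \in A = \Attr{R}{P'}{V'\setminus S}$ you conclude that, since every strategy enters $V'\setminus S$ with positive probability, $T$ cannot be reached almost surely. That inference is invalid: the play may visit $T$ \emph{before} it falls into $V'\setminus S$, and a play that has already visited $T$ satisfies $\reach{T}$ regardless of where it goes afterwards. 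The failure is concrete: a target vertex can itself lie in $A$. Take $T=\{t\}$ with $t$ random, one edge from $t$ into a collapsed MEC that has no path back to $T$ (hence lies in $V'\setminus S$), and one edge $t \to w$ with $w \in V_1$ and $w \to t$. Then $t \in \Attr{R}{P'}{V'\setminus S}$, and subsequently $w$ is absorbed as well, so your argument declares both losing --- yet both are trivially in $\ASW{\reach{T}}$, since plays from $t$ start in $T$ and plays from $w$ visit $t$ in one step. The standard repair, and the convention under which the cited non-symbolic result is actually proved, is to first make $T$ absorbing (equivalently, to terminate plays upon reaching $T$); then $T \cap A = \emptyset$, escape paths inside the attractor avoid $T$, and case~(ii) becomes sound. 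Your proof never introduces this step, so as written it establishes a characterization that is false.

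There is a second, smaller gap in case~(iii). From $R \subseteq S$ you assert that every vertex of $R$ has a path to $T$ that \emph{stays in} $R$; but $R \subseteq S$ only yields a path in $P'$, which may pass through $A$. What is needed --- and what you only gesture at --- is the following argument: let $Z \subseteq R$ be the vertices with no path to $T$ inside $R$; then $Z$ is closed under all successors of its random vertices and every player-1 vertex of $Z$ retains a successor in $Z$, so a bottom SCC of the restricted graph on $Z$ is a non-trivial end-component of $P'$, contradicting the fact that after collapsing (Proposition~\ref{lics:prop:symmec:corr} and Lemma~\ref{lics:lem:collapse}) $P'$ has no non-trivial end-components; hence $Z = \emptyset$. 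You correctly name MEC-freeness as the reason no trap can exist, but you never run this argument, and the justification you do give ($R \subseteq S$ plus closedness of $R$) does not imply it. Once that is supplied, your bounded-progress probability estimate does finish case~(iii), and the transfer back to $P$ is fine modulo the absorbing-target issue above.
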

\end{sloppypar}

\begin{proposition}[Running time and Space]\label{lics:prop:asreachrunning}
	The total number of symbolic operations of Algorithm~\ref{lics:alg:symasreach} is $O(\mathbf{MEC} +
	n)$. Algorithm~\ref{lics:alg:symasreach} uses $\O(\aspace(\mathbf{MEC}))$ symbolic space.
\end{proposition}

Proposition~\ref{lics:prop:asreachrunning} and Proposition~\ref{lics:prop:asreachcorr} together with Theorem~\ref{lics:thm:symbolicmecTST} yield the following theorem.

\begin{theorem}\label{lics:thm:asreach}
	The set $\ASW{\reach{T}}$ of an MDP can be computed with $\O(n^{2-\epsilon})$ 
	many symbolic operations and $\O(n^\epsilon)$ 
	symbolic space for $0 < \epsilon \leq 0.5$.
\end{theorem}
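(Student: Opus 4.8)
The plan is to obtain the statement as a direct corollary by plugging the symbolic MEC-decomposition bounds of Theorem~\ref{lics:thm:symbolicmecTST} into the resource analysis of the almost-sure reachability routine, Algorithm~\ref{lics:alg:symasreach}. Concretely, Proposition~\ref{lics:prop:asreachcorr} already guarantees that Algorithm~\ref{lics:alg:symasreach} computes $\ASW{\reach{T}}$ correctly, and Proposition~\ref{lics:prop:asreachrunning} reduces its cost to that of the underlying MEC computation: it uses $O(\mathbf{MEC} + n)$ symbolic operations and $\O(\aspace(\mathbf{MEC}))$ symbolic space, where $\mathbf{MEC}$ and $\aspace(\mathbf{MEC})$ denote the symbolic-operation and symbolic-space cost of the chosen MEC-decomposition subroutine. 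Thus the only remaining task is to instantiate these two quantities.

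First I would invoke Theorem~\ref{lics:thm:symbolicmecTST}, which for any $0 < \epsilon \leq 1/2$ computes the MEC decomposition of an MDP with $\mathbf{MEC} = O(n^{2-\epsilon}\log n)$ symbolic operations and $\aspace(\mathbf{MEC}) = O(n^{\epsilon}\log n)$ symbolic space. Substituting into Proposition~\ref{lics:prop:asreachrunning} gives a symbolic-operation count of $O(n^{2-\epsilon}\log n + n)$ and a symbolic-space bound of $\O(n^{\epsilon}\log n)$. Since $2-\epsilon \geq 3/2 > 1$ for every $\epsilon \leq 1/2$, the additive $O(n)$ term is dominated by $n^{2-\epsilon}\log n$, so the operation count simplifies to $O(n^{2-\epsilon}\log n)$. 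Absorbing the poly-logarithmic factors into the $\O(\cdot)$ notation then yields $\O(n^{2-\epsilon})$ symbolic operations and $\O(n^{\epsilon})$ symbolic space, exactly as claimed.

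There is essentially no combinatorial obstacle here, as the argument is a substitution; the points that require care are bookkeeping rather than conceptual. The mildly delicate step is confirming that the two separate calls to the MEC-decomposition procedure inside Algorithm~\ref{lics:alg:symasreach} (one to collapse each MEC and one to lift the reachability result back to the uncollapsed MDP) do not change the asymptotics: each call costs $O(\mathbf{MEC})$, so their sum is still $O(\mathbf{MEC})$, and the interleaved graph-reachability and random-attractor computations each contribute only $O(n)$ symbolic operations and $O(1)$ sets (by the $\GraphReach{\cdot}$ discussion and Lemma~\ref{lics:lem:attrRunning}). The space bound further relies on the fact, noted before Proposition~\ref{lics:prop:asreachcorr}, that Algorithm~\ref{lics:alg:metasymmec} can be made to emit MECs one at a time using only $O(\log n)$ auxiliary symbolic space, so that the reachability routine never stores all MECs simultaneously and its space is genuinely dominated by $\aspace(\mathbf{MEC})$. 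With these observations in place, the combination of Proposition~\ref{lics:prop:asreachcorr}, Proposition~\ref{lics:prop:asreachrunning}, and Theorem~\ref{lics:thm:symbolicmecTST} establishes the theorem.
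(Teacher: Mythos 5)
Your proposal is correct and follows exactly the paper's own route: the paper derives the theorem by combining Proposition~\ref{lics:prop:asreachcorr} (correctness), Proposition~\ref{lics:prop:asreachrunning} (cost $O(\mathbf{MEC}+n)$ operations and $\O(\aspace(\mathbf{MEC}))$ space), and Theorem~\ref{lics:thm:symbolicmecTST}, absorbing the logarithmic factors into the $\O(\cdot)$ notation. Your additional bookkeeping remarks (the two MEC calls, the one-MEC-at-a-time emission with $O(\log n)$ auxiliary space) match the paper's discussion preceding those propositions.
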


\subsection{Parity Objectives.}
In this section, we consider the qualitative analysis of MDPs with parity objectives.
We present an algorithm for computing the winning region which is based on the algorithms we present
in the previous sections and the
algorithm presented in~\cite[Section 5]{CH11}.
The algorithm presented in~\cite[Section 5]{CH11} draws ideas from a hierarchical clustering technique~\cite{Tarjan82, KingKV01}.
Without loss of generality, we consider the parity objectives $\parity{p}$ where 
$p: V \rightarrow \{0,1,\dots,2d\}$.
In the symbolic setting, instead of $p$, we get the sets $\P_{\geq i} = \{ v \in V \mid p(v) \geq i \}$ where $(1 \leq i \leq 2d)$ as part of the input.  
We abbreviate the family $\{ \P_{\geq i} \mid 1 \leq i \leq 2d\}$ as $(\P_{\geq k})_{1\leq k\leq 2d}$. 
Let $\P_{\leq m} = V \setminus \P_{\geq m+1}$ and  $\P_{m} = \P_{\geq m} \setminus \P_{\geq m+1}$. 
Given an MDP $P$, let $P_i$ denote the MDP obtained by removing $\Attr{R}{P}{\P_{\leq i-1}}$, the set
of vertices with priority less than $i$ and its random attractor.
A MEC $M$ is a \emph{winning MEC} in $P_i$ if there exists a vertex $u \in M$ such that $p(u) = i$ and $i$ is even, i.e., the smallest priority in the MEC is even. Let $\WE_i$ be the union of
vertices of winning maximal end-components in $P_i$, and let $\WE = \bigcup_{0\leq i \leq 2d} \WE_i$.
Lemma~\ref{lics:lem:paritywinningmec} says that computing $\ASW{\parity{p}}$ is equivalent to computing almost-sure
reachability of $\WE$. Intuitively, player 1 can infinitely often satisfy the parity condition after
reaching an end-component which satisfies the parity condition. 

\begin{lemma}[\cite{CH11}]\label{lics:lem:paritywinningmec}
	We have $\ASW{\parity{p}} = \ASW{\reach{\WE}}$.
\end{lemma}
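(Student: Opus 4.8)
The statement to prove is Lemma~\ref{lics:lem:paritywinningmec}, namely that $\ASW{\parity{p}} = \ASW{\reach{\WE}}$, where $\WE$ is the union of the vertices of winning maximal end-components across the MDPs $P_i$. The plan is to prove the two set inclusions separately, using the characterization of almost-sure winning in MDPs through end-components and the memoryless-strategy existence theorem cited just above (the theorem of~\cite{CY95,deAlfaroThesis}). The key structural fact I would rely on is the standard theory of MDPs: under \emph{any} player-1 strategy, the set of vertices visited infinitely often is almost-surely an end-component, and conversely player~1 can stay in and visit every vertex of an end-component infinitely often with probability~1.

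\textbf{The easy inclusion.} First I would establish $\ASW{\reach{\WE}} \subseteq \ASW{\parity{p}}$. The idea is that if player~1 can reach $\WE$ almost-surely, she can win the parity objective almost-surely. Fix a vertex $u \in \WE_i$ lying in a winning MEC $M$ of $P_i$, so the minimum priority appearing in $M$ is some even $i$. By definition of $P_i$ all vertices of priority less than $i$ (and their random attractor) have been removed, so inside $M$ every vertex has priority at least $i$, and $M$ contains at least one vertex of priority exactly $i$. Since $M$ is an end-component, player~1 has an almost-sure strategy to remain in $M$ forever and to visit the priority-$i$ vertex infinitely often; because $i$ is the minimum priority in $M$ and is even, the parity condition is almost-surely satisfied from $u$. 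Composing an almost-sure reachability strategy to $\WE$ with this ``stay-and-satisfy'' strategy inside the reached winning MEC gives an almost-sure winning strategy for $\parity{p}$, which yields the inclusion.

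\textbf{The harder inclusion.} The reverse inclusion $\ASW{\parity{p}} \subseteq \ASW{\reach{\WE}}$ is where the main work lies. Here I would argue that from any almost-sure winning vertex for the parity objective, player~1 almost-surely reaches $\WE$. The crux is to show that the limit behavior of an optimal (memoryless) parity strategy, given by the theorem above, lands inside some winning MEC. Fix the memoryless almost-sure winning strategy $\sigma$ and consider the Markov chain it induces. The recurrent classes of this chain are end-components of $P$; in each such end-component reached with positive probability, the minimum priority visited infinitely often must be even (otherwise $\sigma$ would fail the parity objective with positive probability, contradicting almost-sure winning). I would then need to match such an end-component to a \emph{winning} MEC of the appropriate $P_i$: letting $i$ be the minimum priority occurring in the end-component, the end-component survives the removal of lower-priority vertices and their random attractor (this requires the attractor-removal property, analogous to Lemma~\ref{lics:lem:attr_remove}, that removing the random attractor of a set does not destroy end-components disjoint from it), so it is contained in a MEC of $P_i$ whose minimum priority is the even value $i$, i.e.\ a winning MEC. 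Thus the recurrent behavior lies in $\WE$, and since the chain reaches its recurrent classes almost-surely, $\WE$ is reached almost-surely.

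\textbf{Main obstacle.} I expect the delicate step to be the second inclusion, specifically verifying that an end-component witnessing the parity condition under $\sigma$ is genuinely a winning MEC in the \emph{right} sub-MDP $P_i$, rather than merely an end-component with even minimum priority in $P$. The subtlety is that $P_i$ is defined by iteratively stripping random attractors of low-priority vertices, and one must confirm that an end-component whose minimum priority is $i$ is left intact by the removal of $\Attr{R}{P}{\P_{\leq i-1}}$ and does not acquire escaping random edges. This is exactly the kind of attractor-preservation argument captured by Lemma~\ref{lics:lem:attr_remove} and the end-component theory, so I would invoke that lemma (applied to the sets $\P_{\leq i-1}$) together with the fact that end-components are preserved when one removes a random attractor disjoint from them, closing the gap. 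Both inclusions then combine to give the claimed equality.
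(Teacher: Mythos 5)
Your proof is correct. Note that the paper does not prove this lemma at all — it imports it from~\cite{CH11} with only a one-sentence intuition — and your two-inclusion argument (reach a winning MEC and then stay in it visiting its minimal even priority infinitely often; conversely, the recurrent classes of the Markov chain induced by a memoryless almost-sure winning strategy are end-components with even minimal priority, which survive the removal of $\Attr{R}{P}{\P_{\leq i-1}}$ by the attractor-disjointness property and are hence contained in winning MECs of $P_i$) is precisely the standard proof that the citation refers to, matching the paper's stated intuition.
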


Thus, we describe in Algorithm~\ref{lics:alg:winpec} how to compute $\WE$ symbolically.

\subsubsection{Algorithm Description.}
The algorithm uses a key idea which we describe first. 
Recall that $P_i$ denotes the MDP obtained by removing $\Attr{R}{P}{\P_{\leq i-1}}$, i.e., the set
of vertices with priority less than $i$ and its random attractor.

\smallskip\noindent\emph{Key Idea.}
If $u,v$ are in a MEC in $P_i$, then they are in the same MEC in $P_{i-1}$. 
The key idea implies that if a vertex is in a winning MEC of $P_i$, it is also in a winning MEC of
$P_{i-1}$.
Intuitively, this holds due to the following two facts: 
(1)~Because $P_{i-1}$ contains all edges and
vertices of $P_i$ the MECs $P_i$ are still strongly connected in $P_{i-1}$. 
(2)~Because $\Attr{R}{P}{\P_{\leq i-1}}$ makes sure that no MEC $M$ in $P_i$ 
has a random vertex with an edge leaving $P_i$ in $M$, the same is true for the set $M$ in $P_{i-1}$. 

\SetKwFunction{WinPEC}{WinParityEC}
\begin{sloppypar}
We next present the recursive algorithm $\WinPEC{$P,(\P_{\geq k})_{1\leq k \leq 2d},i,j$}$ which,
for a MDP $P$, computes the set $\bigcup_{i \leq \ell \leq j} \WE_i$ of winning MECs for priorities between $i$ and $j$.
\end{sloppypar}
\begin{enumerate}
	\item \emph{Base Case:} If $j < i$, return $\emptyset$.
	\item Compute $m \gets \lceil (i+j)/2 \rceil$.
	\item \begin{sloppypar}Compute the MECs of $P_m = V \setminus \Attr{R}{P}{\P_{\leq m-1}}$ and for each MEC $M \in P_m$
			compute the minimal priority $min$ among all vertices in that MEC.\end{sloppypar}
	\item For each MEC $M$:
	  \begin{itemize}
	   \item If $min$ is even then add $M$ to the set $W$ of vertices in winning MECs.
	   \item If $min$ is odd we recursively call $\WinPEC{$P^u,(\P_{\geq k})_{1\leq k \leq 2d},min+1,j$}$ where $P^u$ is the sub-MDP containing only vertices and edges inside $M$.
	   This call applies the key idea and refines the MECs of $P_m$ 
	   and  computes the set $\bigcup_{min+1\leq \ell \leq j} \WE_\ell$.	   
	  \end{itemize}
	\item Call $\WinPEC{$P^\ell,(\P_{\geq k})_{1\leq k \leq 2d},i,m-1$}$ where $P^\ell$ is the MDP where all MECs in $P_m$ are
		collapsed into a single vertex and thus only the edges outside the MECs of $P_m$ are considered.
	This call computes the set $\bigcup_{i\leq k\leq m-1} \WE_k$.
\end{enumerate}
\begin{sloppypar}
We initialize $\WinPEC{$P, (\P_{\geq k})_{1\leq k \leq 2d},i,j$}$ with $\WinPEC{$P,(\P_{\geq k})_{1\leq k \leq 2d},0,2d$}$. 
\end{sloppypar}

Algorithm~\ref{lics:alg:winpec} is the formal version of the sketched algorithm.
\begin{algorithm}
	\Procedure{\WinPEC{$P,(\P_{\geq k})_{1\leq k \leq 2d},i,j$}}{
		\KwIn{$P=(V,E,\langle V_1,V_R \rangle,\delta),(\P_{\geq k})_{1\leq k \leq 2d},i,j$}
		$W \gets \emptyset$\;
		\lIf{$j<i$}{\Return $W$}
		$m \gets \lceil (i+j)/2 \rceil$\;
		$X_m \gets \Attr{R}{P}{\P_{\leq m -1}}$\;\label{lics:alg:winpec:attrxm}
		$Z_m \gets V \setminus X_m$; $E_m \gets E \cap (Z_m \times Z_m)$\; 
		$P' \gets (Z_m,E_m,\langle V_1 \cap Z_m, V_R \cap Z_m\rangle, \delta)$\;
		\For{$M \gets \mathsf{ComputeMECs}(P')$}{\label{lics:alg:winpec:computemecs1}

			$min \gets minPriority(M)$\;\label{lics:alg:winpec:minpriority}
			\If{$min$ is even}{
				$W \gets W \cup M$\;
				}\Else{
				$V^u \gets M \setminus \Attr{R}{P}{\P_{min}}$\;\label{lics:alg:winpec:rmPmin}
				$P^u \gets (V^u,(V^u \times V^u) \cap E,\langle V_1 \cap V^u, V_R \cap V^u\rangle, \delta)$\;
				$W \gets W \cup \WinPEC{$P^u,(\P_{\geq k})_{1\leq k \leq
				2d},min+1,j$}$\label{lics:alg:winpec:rec1}
			}
		}
		\tcc{MDP with MECs collapsed is $P^\ell$ in the text}
		\lFor{$M \gets \mathsf{ComputeMECs}(P')$}{\label{lics:alg:winpec:computemecs2}
			$\CollapseEC{$M,P$}$\label{lics:alg:winpec:collapse}	
		}
		$W \gets W \cup \WinPEC{$P,(\P_{\geq k})_{1\leq k \leq 2d},i, m-1$}$\;\label{lics:alg:winpec:rec2}
		\Return $W$\;
	}
	\caption{Compute WE of an MDP $P$}\label{lics:alg:winpec}
\end{algorithm}
\subsubsection{Correctness and Number of Symbolic Steps.}
In this section, we argue that Algorithm~\ref{lics:alg:winpec} is correct and bound the number of symbolic
steps and the symbolic space usage. 
A key difference in the analysis of Algorithm~\ref{lics:alg:winpec} and~\cite{CH11} 
is that we aim for a symbolic step bound that is independent of the number of edges in $P$
and, thus, we cannot use the argument from~\cite{CH11} which charges the cost of each recursive call
to the edges of $P$.
The key argument in~\cite{CH11} is that the sets of edges in the different branches of the recursions do not overlap.
For vertices, it is not that simple, as we do not entirely remove vertices that appear in a MEC but merge the MEC and represent it by a single vertex.
That is, a vertex can appear in both $P^\ell$ and in $P^u$ corresponding to the MEC\@. 
To accomplish our symbolic step bound we adjusted the algorithm.
At Line~\ref{lics:alg:winpec:rmPmin} we \emph{always} remove the minimum priority vertices
instead of removing the vertices with priority $m$ to ensure that we remove at least one vertex.
Intuitively, by always removing at least one vertex from a MEC we ensure that the total number of vertices processed
at each recursion level does not grow.
Note that these changes of the algorithm do not affect the correctness argument of~\cite{CH11} as we always compute the same sets $\WE_m$ 
but avoid calls to $\WinPEC{$\cdot$}$ with no progress on some MECs.

\begin{proposition}[Correctness]\label{lics:prop:parity:corr}
	\begin{sloppypar}
	Algorithm~\ref{lics:alg:winpec} returns the set of winning end-components $\WE$.
\end{sloppypar}
\end{proposition}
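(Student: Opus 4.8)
The goal is to prove Proposition~\ref{lics:prop:parity:corr}, namely that $\WinPEC{$P,(\P_{\geq k})_{1\leq k \leq 2d},0,2d$}$ returns exactly $\WE = \bigcup_{0 \leq i \leq 2d} \WE_i$. The plan is to prove, by induction on the range $j - i$, the more precise statement that a call $\WinPEC{$P,(\P_{\geq k})_{1\leq k \leq 2d},i,j$}$ returns $\bigcup_{i \leq \ell \leq j} \WE_\ell$, restricted to the current sub-MDP $P$. The base case $j < i$ returns $\emptyset$ and is immediate, since the union over an empty index range is empty.

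For the inductive step I would fix $m = \lceil (i+j)/2 \rceil$ and partition the target set $\bigcup_{i \leq \ell \leq j} \WE_\ell$ into the contributions from priorities $\{m, \dots, j\}$ and priorities $\{i, \dots, m-1\}$. First I would analyze the MECs computed at Line~\ref{lics:alg:winpec:computemecs1} in $P_m = V \setminus \Attr{R}{P}{\P_{\leq m-1}}$. For each such MEC $M$ with minimum priority $\mathit{min}$, I would argue by the definition of $\WE_\ell$ that if $\mathit{min}$ is even then $M$ is a winning MEC contributing to $\WE_{\mathit{min}}$ (correctly added to $W$), and if $\mathit{min}$ is odd then the winning MECs with priorities in $\{\mathit{min}+1,\dots,j\}$ inside $M$ coincide with the winning MECs of the sub-MDP $P^u$ obtained after removing $\Attr{R}{P}{\P_{\mathit{min}}}$. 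Here the key idea stated before the algorithm is essential: because $M$ is strongly connected in $P_m$ and contains no random vertex with an edge leaving $M$, the same holds in any $P_\ell$ with $\ell \leq m$, so end-components refine monotonically and the recursive call on $P^u$ computes precisely $\bigcup_{\mathit{min}+1 \leq \ell \leq j} \WE_\ell$ within $M$ by the induction hypothesis (the range has strictly decreased). I would then handle priorities $\{i,\dots,m-1\}$ via the recursive call at Line~\ref{lics:alg:winpec:rec2} on the MDP where all MECs of $P_m$ are collapsed; invoking Lemma~\ref{lics:lem:collapse} I would show that collapsing ECs preserves exactly the end-component structure outside those MECs, so the recursion returns $\bigcup_{i \leq \ell \leq m-1} \WE_\ell$, again using the induction hypothesis on a strictly smaller range.

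To glue the two parts together I would verify that the decomposition is both complete and non-overlapping: every winning MEC with priority $\ell \geq m$ lives inside some MEC of $P_m$ (because its minimum priority is $\geq m$ it survives the removal of $\Attr{R}{P}{\P_{\leq m-1}}$), while every winning MEC with priority $\ell < m$ must use vertices collapsed away or is computed in the $P^\ell$ recursion. I would appeal to Lemma~\ref{lics:lem:paritywinningmec} only at the top level to connect $\WE$ with $\ASW{\parity{p}}$, but the proposition itself is about computing $\WE$, so the core work is the inductive matching of index ranges. I would note that the algorithm's modification at Line~\ref{lics:alg:winpec:rmPmin} — always removing the minimum-priority vertices rather than exactly the priority-$m$ vertices — does not change which sets $\WE_\ell$ are computed, since for an odd-$\mathit{min}$ MEC the priorities strictly below $\mathit{min}+1$ contribute nothing winning inside $M$; this preserves the correctness argument of~\cite{CH11} while guaranteeing progress.

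The main obstacle I expect is the careful bookkeeping around collapsing: I must ensure that a vertex appearing both in a collapsed representative (in the $P^\ell$ branch) and inside its MEC (in the $P^u$ branch) does not cause double counting or a missed winning component. The precise statements of Lemma~\ref{lics:lem:collapse}(a)--(c) — that end-components outside $X$ are unaffected, that an EC meeting $X$ maps to an EC through the collapsed vertex, and conversely — are exactly what I would use to show the collapsed-graph recursion faithfully represents the winning MECs at lower priorities. Establishing that the recursive ranges partition $\{i,\dots,j\}$ without gaps or overlaps, together with the monotone refinement of ECs from the key idea, is the crux that makes the induction close cleanly.
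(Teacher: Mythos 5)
Your induction hypothesis is too strong, and that is exactly where the proof breaks. You claim that the call with range $[i,j]$ returns precisely $\bigcup_{i \le \ell \le j} \WE_\ell$ (restricted to the current sub-MDP). But the algorithm adds a MEC $M$ of $P_m$ to $W$ whenever its minimum priority $\mathit{min}$ (computed at Line~\ref{lics:alg:winpec:minpriority}) is even, with no test that $\mathit{min} \le j$; and in recursive calls the upper bound $j$ is strictly smaller than the largest priority present in the sub-MDP (after the collapsing branch the range shrinks to $[i,m-1]$ while vertices of priority $\ge m$ remain, and the $P^u$ branch at Line~\ref{lics:alg:winpec:rec1} inherits that smaller $j$). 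So a MEC added to $W$, or a vertex returned by a nested call, could a priori lie in $\WE_{\mathit{min}}$ with $\mathit{min} > j$, i.e., outside your claimed range. Ruling this out is not free: it requires an extra structural invariant (roughly, that every non-trivial EC of the current sub-MDP contains a vertex of priority $\le j$, which in turn rests on showing that after collapsing the MECs of $P_m$ at Line~\ref{lics:alg:winpec:collapse} the resulting MDP has no non-trivial EC consisting solely of priority-$\ge m$ vertices). You never state or prove anything of this kind, so your "complete and non-overlapping" gluing step is unjustified and the exact-equality induction does not close.

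The paper's proof sidesteps this entirely by weakening the induction hypothesis to a two-sided sandwich: $\bigcup_{i \le \ell \le j} \WE_\ell \subseteq \mathsf{WinParityEC}(P,\cdot,i,j) \subseteq \bigcup_{\ell} \WE_\ell$. The lower bound is completeness and is argued exactly as you do (winning MECs of priority $\ge m$ live inside MECs of $P_m$ and are handled by the even/odd split and the $P^u$ recursion; winning MECs of priority $< m$ are in one-to-one correspondence with those of the collapsed MDP via Lemma~\ref{lics:lem:collapse}). The upper bound only demands that everything ever added is a winning EC of \emph{some} priority, which is immediate in the even-$\mathit{min}$ case regardless of whether $\mathit{min} \le j$ — the paper even flags this with the remark that possibly $\mathit{min} > j$. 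At the top call $i=0$, $j=2d$ the two bounds coincide, yielding the proposition. If you replace your exact-equality hypothesis with this sandwich, the rest of your argument goes through essentially unchanged; if you insist on exact equality, you owe the additional invariant described above.
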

\begin{proof}
	The correctness of the algorithm is by induction on $j-i$
	for the induction hypothesis 
	$\bigcup_{i \leq \ell \leq j} \WE_\ell \subseteq \WinPEC{$P,(\P_{\geq k})_{1\leq k \leq 2d},i,j$} \subseteq \bigcup_{1 \leq \ell \leq 2d} \WE_\ell$. 
	
	First consider the induction base cases:
	If $j>i$, the algorithm correctly returns the empty set.
	Next, consider the induction step. 
	Assume that the results hold for $j-i \leq k$, and we consider $j-i = k+1$. 
	If $m$ is even, then
	\begin{equation*}
		\bigcup_{i \leq k \leq j} \WE_k = \WE_{m} \cup \bigcup_{i \leq k \leq m-1} \WE_k \cup
		\bigcup_{m+1 \leq k \leq j} \WE_k,
	\end{equation*}
	otherwise ($m$ is odd), then
	\begin{equation*}
		\bigcup_{i \leq k \leq j} \WE_k = \bigcup_{i \leq k \leq m-1} \WE_k \cup \bigcup_{m+1 \leq k \leq j} \WE_k,
	\end{equation*}	
	Consider an arbitrary winning MEC $M_k$ in $P_k$, i.e., the lowest even priority is $k$. 
	We consider the following cases.
	\begin{enumerate}
		\item For all $k \geq m$ we have that $M_k$ is contained in a MEC $M_m$ of $P_m$. 
			Additionally, no random vertex in $M_m$ can have an edge leaving $M_m$ and thus no
			random vertex in $M_k$ can have a random edge leaving $M_m$. 
			Moreover, for the minimum priority $min$ of $M_m$ we have $k \geq min \geq m$.
			If $min$ is even then $M_m$ is itself winning and thus $M_k \subseteq
			\WinPEC{$P,(\P_{\geq k})_{1\leq k \leq 2d},i,j$}$ and $M_m \subseteq \bigcup_{1 \leq \ell
				\leq 2d} \WE_\ell$ (note that it might be that $min > j$).
			
			If $min$ is odd 
			$M_k$ is a winning MEC of $P$ iff 
			it is a winning MEC of $P^u$ and thus by the induction
			hypothesis $M_k \subseteq \WinPEC{$P^u,(\P_{\geq k})_{1\leq k \leq
					2d},min+1,j$}$.
            It follows that also $M_k \subseteq \WinPEC{$P,(\P_{\geq k})_{1\leq k \leq 2d},i,j$}$.
		\item For $k < m$ consider a MEC $M_k$ in $P_k$. If $M_k$ contains a vertex $v$ that
			belongs to a MEC $M_m$ of $P_m$, then $M_m \subset M_k$ (i.e., all vertices of the
			MEC in $P_m$ of $v$ also belong to $M_k$ and $M_k$ has at least one additional vertex
			with priority $<m$).
			We thus have that for $k \leq m$  the winning MECs $M_k$ in $P_k$ are in one-to-one correspondence
			with the winning MECs $M'_k$ of the modified MDP where all MECs of $P_m$ are collapsed.
			From the induction hypothesis 
			it follows that $\bigcup_{i\leq k \leq m-1} \WE_k = \WinPEC{$P,(\P_{\geq k})_{1\leq k \leq 2d},i, m-1$}$
	\end{enumerate}
	
	Hence, $\bigcup_{i \leq k \leq j} \WE_k  \subseteq \WinPEC{$P,(\P_{\geq k})_{1\leq k \leq 2d},i,j$} \subseteq  
	\bigcup_{1 \leq \ell \leq 2d} \WE_\ell$. The statements follows from setting $i=0$ and $j=2d$.
\end{proof}

	\begin{proposition}[Symbolic Steps]\label{lics:prop:parity:symbsteps}
	The total number of symbolic operations for Algorithm~\ref{lics:alg:winpec} is $O(\mathbf{MEC} \cdot \log d)$ for
	$0 < \epsilon \leq 0.5$.
\end{proposition}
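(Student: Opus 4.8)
The plan is to bound the total work as recursion \emph{depth} times the work \emph{per depth}, using two facts: the priority range is halved at every recursive call, and the total number of vertices handled at any fixed recursion depth never exceeds $n$. The top-level invocation starts with range $[0,2d]$; a call on range $[i,j]$ sets $m=\lceil(i+j)/2\rceil$ and only recurses on $[i,m-1]$ (Line~\ref{lics:alg:winpec:rec2}) and on ranges $[min+1,j]$ with $min\ge m$ (Line~\ref{lics:alg:winpec:rec1}), so both descendant ranges have length at most $\lceil(j-i)/2\rceil$. Hence the recursion depth is $O(\log d)$.

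First I would isolate the cost of a single invocation on an MDP with $N$ vertices, excluding its recursive calls. The attractor computations at Line~\ref{lics:alg:winpec:attrxm} and Line~\ref{lics:alg:winpec:rmPmin} cost $O(N)$ by Lemma~\ref{lics:lem:attrRunning}; the two passes over the MECs and the collapsing at Line~\ref{lics:alg:winpec:collapse} cost $O(N)$ by Lemma~\ref{lics:lem:collapse} together with the linear attractor bound; and the $minPriority$ computation (Line~\ref{lics:alg:winpec:minpriority}) can be folded into the MEC computation. Thus the per-call cost is dominated by the single MEC decomposition on $Z_m$, i.e.\ $\mathbf{MEC}(N)$.

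The heart of the argument is the invariant that, summing over all invocations at a fixed recursion depth, the total number of vertices is at most $n$. I would prove this by showing that for a single call on $N$ vertices the vertex counts of all its children sum to at most $N$. The child $P^\ell$ collapses each MEC $M$ of $P_m$ to one representative, contributing $N-\sum_M(|M|-1)=N-\sum_M|M|+\#\mathrm{MECs}$ vertices; each odd-priority child $P^u$ contributes $|M\setminus\Attr{R}{P}{\P_{min}}|\le |M|-1$ vertices, since $M$ contains a vertex of priority $min$ that is removed at Line~\ref{lics:alg:winpec:rmPmin}. Summing and cancelling yields at most $N-\sum_{M\text{ even}}|M|+\#(\text{even MECs})\le N$, because every even MEC has at least one vertex. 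This is exactly where the algorithmic modification of always stripping the minimum-priority vertices pays off: it guarantees one removed vertex per odd MEC to offset the representative that survives in $P^\ell$, which is the obstacle flagged in the text for porting the edge-based charging argument of~\cite{CH11} to a vertex/symbolic setting. Establishing this invariant is the main obstacle.

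Finally I would combine the pieces. Fix a recursion depth and let $N_1,\dots,N_r$ be the sizes of the invocations at that depth; by the invariant $\sum_t N_t\le n$. Using $\mathbf{MEC}(N)=O(N^{2-\epsilon}\log N)$ from Theorem~\ref{lics:thm:symbolicmecTST} and the superadditivity of $x\mapsto x^{2-\epsilon}$ for $\epsilon\le 1$, I get $\sum_t \mathbf{MEC}(N_t)=O\bigl(\log n\sum_t N_t^{2-\epsilon}\bigr)=O\bigl(\log n\,(\sum_t N_t)^{2-\epsilon}\bigr)=O(n^{2-\epsilon}\log n)=O(\mathbf{MEC})$. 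Multiplying by the $O(\log d)$ depths gives the claimed $O(\mathbf{MEC}\cdot\log d)$ bound; the superadditivity step and the depth bound are routine once the vertex-counting invariant is in place.
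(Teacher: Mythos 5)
Your argument is essentially the paper's own: the $O(\log d)$ bound on the recursion depth, the per-call cost dominated by a single MEC decomposition, and the vertex-count preservation $\sum_i n_i + n_\ell \le N$ across the children of a call (which is exactly what the modification of always stripping the minimum-priority vertices at Line~\ref{lics:alg:winpec:rmPmin} is for) are precisely the ingredients of the paper's recurrence $T(N,x) \le c\,\mathbf{MEC}(N) + \sum_i T(n_i,x-1) + T(n_\ell,x-1)$, and your explicit per-depth summation using superadditivity of $x \mapsto x^{2-\epsilon}$ is the unrolling that the paper declares ``straightforward''.

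The one step that does not go through is your claim that the minimum-priority computation at Line~\ref{lics:alg:winpec:minpriority} ``can be folded into the MEC computation''. A call whose MEC decomposition yields $t$ MECs spends $\Theta(t\log d)$ symbolic operations on these binary searches, and since $d$ is not bounded by the number of vertices (the paper explicitly flags the regime $\log d > n$ as the obstruction), this is not in general $O(\mathbf{MEC}(N))$. Nor does your per-depth accounting rescue it: it gives only $O(n\log d)$ per depth, i.e.\ $O(n\log^2 d)$ overall, which exceeds $O(\mathbf{MEC}\cdot\log d)$ once $\log d \gg n^{1-\epsilon}\log n$. The paper's fix (Claim~\ref{lics:claim:minpriority}) is a charging argument that is global rather than per call or per depth: every invocation of the search is charged to a distinct vertex --- either the whole MEC is resolved and its vertices are never searched again, or at least one minimum-priority vertex is removed at Line~\ref{lics:alg:winpec:rmPmin} and no MEC containing it is ever searched again --- so there are at most $n$ invocations over the entire run, for a total of $O(n\log d) = O(\mathbf{MEC}\cdot\log d)$ symbolic operations. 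Substituting this global charging step for your folding step makes your proof complete.
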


\begin{proof}
	Given an MDP $P$ with $n$ vertices and $d$ priorities, let us denote by
	$T(n,x)$ the number of symbolic steps of Algorithm~\ref{lics:alg:winpec} at recursion depth $x$ 
	and with $T_M(n)$ the number of symbolic steps incurred by the symbolic MEC Algorithm. 
	As shown in~\cite{CH11}, note that the recursion depth of Algorithm~\ref{lics:alg:winpec} 
	is in $O(\log d)$ because we recursively consider either $(min + 1,j)$ or $(i, m-1)$ where $min
	\geq m = \lceil(i+j/2)\rceil$ until $j>i$, where $j=2d$ initially.
	First, we argue why there exists $c > 0$ such that
	\begin{align*}
	T(n,x) \leq c \cdot T_M(n) + \left(\sum_{i=1,\dots,t} T\left(n_i, x-1\right)\right) \\ +T\left(n- \left(\sum_{i=1,\dots,t}
	n_i\right) + t , x-1\right)  \text{ if } x > 1\\ 
	T(n,0) \leq c. 
	\end{align*}
	The attractors computed at Line~\ref{lics:alg:winpec:attrxm} and Line~\ref{lics:alg:winpec:rmPmin} 
	can be done in $O(n)$ symbolic steps as the set of vertices in the attractors are all disjunct.
	Clearly, this is cheaper than computing the MEC decomposition. 
	To extract the minimum priority of a set of nodes $X
	\subseteq V$ we apply a binary search procedure which takes $O(\log d)$ symbolic steps at Line~\ref{lics:alg:winpec:minpriority}.
	Note that when $\log d > n$ we cannot charge the cost to computing the MEC decomposition.
	Thus, we argue in Claim~\ref{lics:claim:minpriority} that the total number of symbolic steps for Line~\ref{lics:alg:winpec:minpriority} in Algorithm~\ref{lics:alg:winpec} is less than $O(n \log d)$.
	The rest of the symbolic steps in Algorithm~\ref{lics:alg:winpec}, (except the recursive calls and
	computing the MEC decomposition) in Algorithm~\ref{lics:alg:winpec} can be done in a constant amount of symbolic steps.

	Note that when $x = 0$, i.e., in the case $j < i$, we only need a constant amount of symbolic steps.	

	Let $t$ be the number of MECS in $P'$. 
	When $x > 0$, consider the following argumentation for the
	number of symbolic steps of the recursive calls:
	\begin{itemize}
		\item $\WinPEC{$P^u,(\P_{\geq k})_{1\leq k \leq 2d},min,j$} $:
			We perform the recursive call for each MEC $M_i \in P'$ $(1 \leq i \leq t)$ 
			where the vertex with minimum priority is odd.
			The total cost incurred by all such recursive calls is $\sum_{i=1,\dots,t} T(n_i,
			x-1)$ where $n_i \leq |M_i|-1$ because we always remove the
			vertices with minimum priority at Line~\ref{lics:alg:winpec:rmPmin}.
		\item $\WinPEC{$P^\ell,(\P_{\geq k})_{1\leq k \leq 2d},i, m-1$}$:
			$P^\ell$ consists of the vertices representing the collapsed MECs, the vertices not
			in $P'$ and the vertices which are not in a MEC of $P'$. 
			The number of vertices in $P^\ell$ is thus $n_\ell = n - \sum_{i=1,\dots,t} |M_i| + t$
			and we obtain $T(n_\ell,x-1)$. 
	\end{itemize}
	Note that $\sum_{i=1, \dots, t} n_i + n_\ell 
	\leq n$.
	We choose $c$ such that $c T_M(n)$ is greater than the number of symbolic steps for computing
	the MECs twice and the rest of the work in the current iteration of Algorithm~\ref{lics:alg:winpec}.
	It is straightforward to show that
	$T(n,d) = O(T_M(n) \log d)$.
	
	The following claim shows that the total number of symbolic steps incurred by
	Line~\ref{lics:alg:winpec:minpriority} for all calls to $\WinPEC{$\cdot$}$ is only $O(n \log d)$.
	\begin{claim}\label{lics:claim:minpriority}
		The total amount of symbolic steps used by Line~\ref{lics:alg:winpec:minpriority} is in
		$O(n \log d)$.
	\end{claim}
	\begin{proof}
		To obtain the set of vertices with minimum priority from a set of vertices $X \subseteq V$ the function
		$\minPriority{X}$ performs a binary search using the sets $(\P_{\geq k})_{1\leq k \leq 2d}$.
		This can be done in $O(\log d)$ many symbolic steps.
		To prove that the number of symbolic steps used by Line~\ref{lics:alg:winpec:minpriority} in
		total is in
		$O(n \log d)$ note that each time the function is performed we either:
		(i) Remove all
		vertices in $M$, and we never perform the function on the vertices in $M$ again. We charge the cost to an arbitrary vertex in $M$.
		(ii) Remove at least one vertex at Line~\ref{lics:alg:winpec:rmPmin} and we never perform
				the function on a MEC containing this vertex again. We charge the cost to this vertex.
		As there are only $n$ vertices we obtain that the total amount of symbolic steps used by
		Line~\ref{lics:alg:winpec:minpriority} is in $O(n \log d)$.
	\end{proof}
	Using Claim~\ref{lics:claim:minpriority} we conclude the proof for the symbolic step bound of Algorithm~\ref{lics:alg:winpec}.
\end{proof}
	
	\begin{proposition}\label{lics:prop:parity:symbspace}
		Algorithm~\ref{lics:alg:winpec} uses $O(\aspace(\mathbf{MEC}) + \log n \log d)$ space.
	\end{proposition}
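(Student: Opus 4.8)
The plan is to bound the symbolic space by accounting, at any single instant of the execution, for the sum of the sets kept alive in each frame of the recursion stack of $\WinPEC{$\cdot$}$. I would first recall, exactly as in the proof of Proposition~\ref{lics:prop:parity:symbsteps} (following~\cite{CH11}), that the recursion depth of Algorithm~\ref{lics:alg:winpec} is $O(\log d)$: every recursive call either replaces the interval $[i,j]$ by $[min+1,j]$ with $min \geq m = \lceil (i+j)/2\rceil$ (the call on $P^u$ at Line~\ref{lics:alg:winpec:rec1}) or by $[i,m-1]$ (the call on $P^\ell$ at Line~\ref{lics:alg:winpec:rec2}), so each step halves $j-i \leq 2d$. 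Hence at most $O(\log d)$ frames are simultaneously on the stack, and it suffices to bound the space contributed by a single frame.

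Next I would classify the frames on the stack into the one \emph{active} frame (the deepest one currently executing) and its \emph{suspended} ancestors. For the active frame: the only place requiring more than a constant number of symbolic sets is a call to the MEC decomposition (via $\mathsf{ComputeMECs}$, i.e.\ Algorithm~\ref{lics:alg:metasymmec}) at Line~\ref{lics:alg:winpec:computemecs1} or Line~\ref{lics:alg:winpec:computemecs2}; the attractor computations at Line~\ref{lics:alg:winpec:attrxm} and Line~\ref{lics:alg:winpec:rmPmin}, the graph restrictions, and the local sets $W$, $X_m$, $Z_m$, $E_m$, $M$, $V^u$ are all $O(1)$ symbolic sets. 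Thus the active frame contributes at most $\aspace(\mathbf{MEC})$ space, and crucially at most one such full MEC computation is running at any instant. For a suspended ancestor, the frame is paused inside the first for-loop at Line~\ref{lics:alg:winpec:computemecs1} while its recursive call on $P^u$ runs; here I would invoke the ``one MEC at a time'' implementation of Algorithm~\ref{lics:alg:metasymmec} described earlier: once the vertex set $M$ of nontrivial MECs has been computed, the $\aspace(\mathbf{MEC})$ working memory is released, $M$ is kept as a single set, and the individual MECs are produced on demand as the SCCs of $G[M]$ using only $O(\log n)$ state to maintain the symbolic SCC enumeration~\cite{ChatterjeeDHL18}. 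Therefore each suspended frame keeps only its $O(1)$ local sets together with the $O(\log n)$ suspended enumeration state, i.e.\ $O(\log n)$ space. I would also note that the second for-loop (Line~\ref{lics:alg:winpec:computemecs2}) is completed before the tail call at Line~\ref{lics:alg:winpec:rec2}, so the final recursion keeps no MEC enumeration of its own frame alive.

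Summing, at the peak the single active frame uses $\aspace(\mathbf{MEC})$, each of the $O(\log d)$ suspended ancestors uses $O(\log n)$, and the constant per-frame local sets are absorbed, giving a total of $\aspace(\mathbf{MEC}) + O(\log d)\cdot O(\log n) = O(\aspace(\mathbf{MEC}) + \log n \log d)$, as claimed. The step I expect to be the main obstacle is the suspended-frame accounting: one must argue convincingly that the large $\aspace(\mathbf{MEC})$ working space of an ancestor's MEC computation is genuinely reclaimed while that frame waits for its $P^u$-recursion to return, so that only the stored set $M$ and the $O(\log n)$ SCC-enumeration state persist, and that the edge modifications performed by $\CollapseEC{$\cdot$}$ inside the deeper recursive calls (which operate on disjoint subgraphs $M$) do not force a suspended ancestor to re-store additional sets. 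Establishing this reclamation-and-resume property of the enumeration is what pins down the additive (rather than multiplicative) dependence on $\aspace(\mathbf{MEC})$ and yields the stated bound; the remaining counting is routine.
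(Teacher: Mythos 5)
Your proof is correct and follows essentially the same route as the paper's own: recursion depth $O(\log d)$, the one-MEC-at-a-time enumeration keeping only $O(\log n)$ symbolic SCC-state per suspended frame, and a single additive $\aspace(\mathbf{MEC})$ term because only the deepest (active) frame runs a full MEC computation at any instant. Your active/suspended frame accounting --- in particular the explicit reclamation of the MEC working space and the disjointness argument for the collapse operations inside deeper calls --- is actually spelled out more carefully than in the paper, which asserts these points without elaboration.
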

	\begin{proof}
		Let $P$ be an MDP $n$ vertices and a parity objective with $d$ priorities.
		We denote with $\aspace(\mathbf{MEC})$ the symbolic space used by the algorithm that computes the MEC decomposition.
		Observe that all computation steps in Algorithm~\ref{lics:alg:winpec} need constant space except
		for the recursions and computing the MEC decomposition. 
		Both at Line~\ref{lics:alg:winpec:computemecs1} and Line~\ref{lics:alg:winpec:computemecs2}
		we first compute the MEC decomposition and then, to minimize extra space, 
		we output one MEC after the other by returning each SCC found given the set of vertices in nontrivial MECs.
		Note that we only require logarithmic space to maintain the state of the SCC algorithm~\cite{ChatterjeeDHL18}.  
		As argued in~\cite{CH11} the recursion depth of Algorithm~\ref{lics:alg:winpec} is $O(\log d)$.
		Thus, we need $O(\log n \log d)$ space for maintaining the state of the SCC algorithm 	
		at Line~\ref{lics:alg:winpec:attrxm} until we reach a leaf of the recursion tree. 
		At each
		recursive call, we need additive $O(\aspace(\mathbf{MEC}))$ space to compute the MEC
		decomposition of $P$. This yields the claimed space bound.
	\end{proof}

	Given an MDP, we first compute the set $\WE$ with Algorithm~\ref{lics:alg:winpec} which is correct
	due to Proposition~\ref{lics:prop:parity:corr}.
	We instantiate $\mathbf{MEC}$ and $\aspace(\mathbf{MEC})$ in Proposition~\ref{lics:prop:parity:symbsteps} and
	Proposition~\ref{lics:prop:parity:symbspace} respectively with Theorem~\ref{lics:thm:symbolicmecTST} and thus need 
	$O(n^{2-\epsilon} \log n \log d)$ symbolic steps and $O(n^\epsilon \log n + \log n \log d)$
	(where $0 < \epsilon \leq 0.5$)	symbolic space for computing $\WE$.
	Then, we compute almost-sure reachability of $\WE$ with Theorem~\ref{lics:thm:asreach}. 
	Finally, using Lemma~\ref{lics:lem:paritywinningmec} we obtain the following theorem.
	\begin{theorem}
		The set $\ASW{\parity{p}}$ of an MDP $P$ can be computed with $\O(n^{2-\epsilon})$ 
		many symbolic operations and $\O(n^{\epsilon})$ symbolic space for $0 < \epsilon \leq 0.5$.
	\end{theorem}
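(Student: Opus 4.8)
The plan is to assemble the theorem by composing three results already established in this chapter. The key structural fact is Lemma~\ref{lics:lem:paritywinningmec}, which reduces the qualitative parity analysis to an almost-sure reachability query, namely $\ASW{\parity{p}} = \ASW{\reach{\WE}}$, where $\WE$ is the union of the vertices lying in winning maximal end-components. Thus the whole computation splits into two phases: first compute $\WE$, then compute almost-sure reachability of $\WE$. First I would fix an $\epsilon$ with $0 < \epsilon \le 1/2$ and run Algorithm~\ref{lics:alg:winpec} on $P$ with the priority sets $(\P_{\geq k})_{1 \le k \le 2d}$ and the bounds $i = 0$, $j = 2d$ to obtain $\WE$; its correctness is immediate from Proposition~\ref{lics:prop:parity:corr}.

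For the resource bounds of this first phase I would instantiate the abstract quantities $\mathbf{MEC}$ and $\aspace(\mathbf{MEC})$ that appear in Proposition~\ref{lics:prop:parity:symbsteps} and Proposition~\ref{lics:prop:parity:symbspace} with the concrete sub-quadratic MEC bounds of Theorem~\ref{lics:thm:symbolicmecTST}, i.e.\ $\mathbf{MEC} = O(n^{2-\epsilon}\log n)$ symbolic operations and $\aspace(\mathbf{MEC}) = O(n^{\epsilon}\log n)$ symbolic space. Plugging these in gives $O(n^{2-\epsilon}\log n \log d)$ symbolic operations and $O(n^{\epsilon}\log n + \log n \log d)$ symbolic space for computing $\WE$. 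Since each vertex carries a single priority we have $d = O(n)$, so the recursion-depth factor $\log d$ is $O(\log n)$ and is absorbed by the $\O(\cdot)$ notation; the first phase therefore costs $\O(n^{2-\epsilon})$ symbolic operations and $\O(n^{\epsilon})$ symbolic space.

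The second phase invokes Theorem~\ref{lics:thm:asreach} with target set $T = \WE$ to compute $\ASW{\reach{\WE}}$, which by that theorem runs in $\O(n^{2-\epsilon})$ symbolic operations and $\O(n^{\epsilon})$ symbolic space. Summing the two phases keeps the totals within $\O(n^{2-\epsilon})$ operations and $\O(n^{\epsilon})$ space, and Lemma~\ref{lics:lem:paritywinningmec} identifies the set $\ASW{\reach{\WE}}$ returned by the second phase with $\ASW{\parity{p}}$, which completes the argument.

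The only point I would check carefully is the bookkeeping of the logarithmic factors: the $\log d$ factor contributed by the $O(\log d)$ recursion depth of Algorithm~\ref{lics:alg:winpec} multiplies with the polylog factors already hidden inside the MEC bound of Theorem~\ref{lics:thm:symbolicmecTST}, and one must confirm that $d \le n$ lets all of these collapse under $\O(\cdot)$. No genuine obstacle remains at this level, however: the substantive work — reducing the parity computation to only $O(\log d)$ MEC decompositions rather than the naive $O(d)$, and the sub-quadratic, sub-linear-space symbolic MEC decomposition itself — is already discharged by the earlier propositions and theorems, so this final statement is essentially a composition of black boxes.
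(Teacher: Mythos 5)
Your proposal is correct and follows essentially the same route as the paper: compute $\WE$ with Algorithm~\ref{lics:alg:winpec} (correct by Proposition~\ref{lics:prop:parity:corr}), instantiate $\mathbf{MEC}$ and $\aspace(\mathbf{MEC})$ in Propositions~\ref{lics:prop:parity:symbsteps} and~\ref{lics:prop:parity:symbspace} via Theorem~\ref{lics:thm:symbolicmecTST}, then apply Theorem~\ref{lics:thm:asreach} to $\WE$ and conclude with Lemma~\ref{lics:lem:paritywinningmec}. Your explicit remark that $d = O(n)$ lets the $\log d$ factor be absorbed into the $\O(\cdot)$ notation is a detail the paper leaves implicit, but it is the same argument.
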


\section{Conclusion}
We present a faster symbolic algorithm for the MEC decomposition. Furthermore, we 
improve the fastest symbolic algorithm for verifying MDPs with $\omega$-regular properties. There are
several interesting directions for future work. On the practical side, implementations and
experiments with case studies is an interesting direction.
On the theoretical side, improving upon the $\O(n^{1.5})$ bound for MECs is an interesting open
question which would also, using our work, improve the presented algorithm for verifying
$\omega$-regular properties of MDPs.

	\chapter{Conclusion}\label{cha:conclusion}
	In the thesis, we examine instances of central problems in model-checking and reactive synthesis.
	Chapters~\ref{cha:mfcs}--\ref{cha:icaps} provide \emph{explicit algorithms} for problems with
	widely-considered objectives like mean-payoff parity objectives, Streett objectives, 
	bounded liveness objectives and variants of reachability objectives. 
	Chapters~\ref{cha:lpar}--\ref{cha:lics} provide \emph{symbolic algorithms} for 
	problems with parity objectives in game graphs and MDPs. 
	The careful transfer of sophisticated modern graph algorithmic techniques to instances of these central problems
	provides the new improved algorithms.

	We conclude with concrete ideas for follow-up work.

	\para{Implementation and Experiments.}
	Even though the discovery of improved theoretical algorithms is an important problem-solving challenge we must also implement
them: 
	The implementation of a theoretic algorithm removes
	semantic gaps between code and pseudocode and meaningful experiments offer valuable insights
	into how an algorithm performs in the real world~\cite{Sanders09}. 
	\begin{compactitem}
		\item For Streett objectives in graphs and MDPs and parity objectives in games implementations 
			of explicit algorithms~\cite{dijk2018oink, BruyerePRT19} and symbolic
			algorithms~\cite{SanchezWW18,abs-2009-10876,CHLOT18} exist. An interesting direction for
			future work is to implement the algorithms for parity objectives in MDPs 
			introduced in Chapter~\ref{cha:lics}.
		\item For problems with mean-payoff parity objectives in games our theoretical result in
			Chapter~\ref{cha:mfcs} is improved to a
			pseudo-quasi-polynomial running time~\cite{DaviaudJL18} 
			and it is important future work to determines which algorithms perform best in practice
			in both the explicit model and the symbolic model of computation.
		\item For bounded liveness objectives, implementations of the algorithms is future work. 
		\item For the MEC-decomposition in MDPs, implementations for explicit
			algorithms exist~\cite{WijsKB16} but for symbolic algorithms it is important future work.
	\end{compactitem}

	\para{Theoretical follow-up question.}
	Recently, a breakthrough for deterministic dynamic algorithms~\cite{SaranurakW19,ChuzhoyGLNPS20,BernsteinGS20} 
	yielded faster deterministic dynamic algorithms for many central graph-theoretic problems and it is
	an important open question if the techniques are useful for improved \emph{deterministic} algorithms 
	for MEC decomposition in MDPs or Streett objectives in graphs. 
	For symbolic algorithms improving upon the $\O(n^{1.5})$ bound for MEC decomposition is an
	interesting open question. 
	For explicit algorithms any improvement upon the $\O(n^{2.5})$ time algorithm in graphs and the
	$O(n^2d)$ time algorithms in games for bounded B\"uchi objectives is interesting.
	Sub-cubic time \emph{deterministic} algorithm for bounded B\"uchi
	objectives in graphs discussed in Chapter~\ref{cha:icalp} are interesting future work. 
	Additionally, any conditional lower bounds for bounded B\"uchi objectives in
	games would separate
	the objectives from B\"uchi objectives. Any improvement upon the long-standing
	\emph{deterministic} $O(mn^{2/3})$ time algorithm for MEC decomposition~\cite{CH14} is interesting.
	Finally, providing a \emph{polynomial time} algorithm for computing the winning set of parity or
	mean-payoff objectives in games is a major theoretical open problem and a formidable task for
	future work.

	\printbibliography

	\backmatter

	\thispagestyle{empty}

\end{document}